\DeclareMathAlphabet\mathbb{U}{msb}{m}{n}
\DeclareMathAlphabet\mathbfcal{OMS}{cmsy}{b}{n}
\DeclareMathAlphabet\mathbfit{OML}{cmm}{b}{it}
\newtheorem{fact}{Fact}
\DeclareMathAlphabet{\mathrm}    {OT1}{cmr}{m}{n}
\DeclareMathAlphabet{\mathrmbf}  {OT1}{cmr}{bx}{n}
\DeclareMathAlphabet{\mathrmit}  {OT1}{cmr}{m}{it}
\DeclareMathAlphabet{\mathrmbfit}{OT1}{cmr}{bx}{it}
\DeclareMathAlphabet{\mathsf}    {OT1}{cmss}{m}{n}
\DeclareMathAlphabet{\mathsfbf}  {OT1}{cmss}{bx}{n}
\DeclareMathAlphabet{\mathsfit}  {OT1}{cmss}{m}{sl}
\DeclareMathAlphabet{\mathtt}    {OT1}{cmtt}{m}{n}
\DeclareMathAlphabet{\mathttbf}  {OT1}{cmtt}{bx}{n}
\DeclareMathAlphabet{\mathttit}  {OT1}{cmtt}{m}{it}
\DeclareMathAlphabet{\mathpzc}   {OT1}{pzc}{m}{it}
\newcommand{\keywords}[1]{\par\addvspace\baselineskip\noindent\enspace\ignorespaces{\bfseries Keywords:\,}#1}
\newcommand{\comment}[1]{}
\newenvironment{aside}
{\begin{flushleft}\begin{minipage}{340pt}\textbf{Aside:}}
{\end{minipage}\end{flushleft}}
\def\dual#1{\expandafter\dual@aux#1\@nil}
\def\dual@aux#1/#2\@nil{\begin{tabular}{@{}c@{}}#1\\#2\end{tabular}}
\begin{document}

\pagestyle{headings}
\title{Relational Operations in \texttt{FOLE}} 
\titlerunning{Relational Operations in \texttt{FOLE}}  
\author{Robert E. Kent}
\institute{Ontologos}
\maketitle

\begin{abstract}
This paper discusses relational operations 
in the first-order logical environment {\ttfamily FOLE}
(Kent~\cite{kent:iccs2013}).
Here we demonstrate
how \texttt{FOLE} 
expresses the relational operations of database theory 
in a clear and implementable representation.
An analysis of 
the representation of database tables/relations in \texttt{FOLE}
(Kent~\cite{kent:fole:era:tbl})
reveals a principled way to express the relational operations.
This representation is expressed in terms of a distinction between 
basic components versus composite relational operations.
The 9 basic components fall into three categories: 
reflection (2), Booleans or basic operations (3), 
and adjoint flow (4). 
Adjoint flow is given for signatures (2) and for type domains (2), 
which are then combined into full adjoint flow.
The basic components  
are used to express various composite operations,
where we illustrate each of these with a flowchart.
Implementation of the composite operations
is then expressed in an input/output table
containing four parts:
constraint,
construction,
input, and output.
We explain how limits and colimits are constructed from diagrams of tables,
and then classify composite relational operations into three categories:
limit-like, colimit-like and unorthodox.
%
\keywords{tables, adjoint flow, relational operations.}
\end{abstract}

\tableofcontents


%
\newpage
\section{Introduction}
\label{sec:intro}

\paragraph{The Relational Model.}



Many-sorted (multi-sorted) first-order predicate logic 
represents a communities ``universe of discourse'' as 
a heterogeneous collection of objects
by conceptually scaling
the universe according to types.
The relational model for database management 
uses a structure and language 
consistent with this logic.
The relational model was initially discussed in two papers:
``A Relational Model of Data for Large Shared Data Banks''
by Codd \cite{codd:70} 
and
``The Entity-Relationship Model -- Toward a Unified View of Data'' 
by Chen \cite{chen:76}. 

The relational model follows many-sorted logic
by representing data in terms of many-sorted relations, 
subsets of the Cartesian product of multiple data-types. 
All data is represented horizontally in terms of tuples, 
which are grouped vertically into relations. 
A database organized in terms of the relational model 
is a called relational database.
The relational model provides a method 
for modeling the data stored in a relational database 
and for defining queries upon it. 
In the relational model there are two approaches for database management:
the relational algebra, which defines an imperative language,
and
the relational calculus, which defines a declarative language.
%

\paragraph{\texttt{FOLE}.}

The first order logical environment {\texttt{FOLE}} is a category-theoretic approach to many-sorted first order predicate logic.
The {\texttt{FOLE}} approach to logic, 
and hence to databases, 
relies upon two mathematical concepts:
(1) lists and (2) classifications.
Lists represent database signatures and tuples;
classifications represent data-types and logical predicates.
{\texttt{FOLE}} 
represents the header of a database table as a list of sorts, and
represents the body of a database table as a set of tuples 
classified by the header.
The notion of a list is common in category theory
(Mac Lane
\cite{maclane:71}).
The notion of a classification is described in two books:
``Information Flow: The Logic of Distributed Systems''
by Barwise and Seligman \cite{barwise:seligman:97} and
``Formal Concept Analysis: Mathematical Foundations''
by Ganter and Wille \cite{ganter:wille:99}. 

In this paper
we explain how 
{\texttt{FOLE}} 
provides a categorically well-founded semantics for relational algebra.
We present that semantics in a detail
suitable for implementation.
The {\texttt{FOLE}} semantics for the relational calculus 
will appear elsewhere.
%
{\texttt{FOLE}} is described in multiple papers by the author:
``Database Semantics''
\cite{kent:db:sem},
``The First-order Logical Environment''
\cite{kent:iccs2013},
``The {\ttfamily ERA} of {\ttfamily FOLE}: Foundation''
\cite{kent:fole:era:found},
``The {\ttfamily ERA} of {\ttfamily FOLE}: Superstructure''
\cite{kent:fole:era:supstruc},
and
``The {\ttfamily FOLE} Table''
\cite{kent:fole:era:tbl}.
\newline
%

%

\paragraph{Comparisons.}

To a large extent this paper follows 
the relational operators that Codd introduced 
for relational algebra.
In his original paper \cite{codd:70} 
Codd introduced
the eight relational operators:
union, intersection, difference, Cartesian product,
selection (restriction), projection, join, and division.
In his book \;\cite{codd:90}, 
the basic operators
(chap.4) 
are:
selection, 
projection, 
natural join, 
select-join,
union, 
difference,
intersection, and
division.
Various advanced operators (chap.5)
include:
semi-theta-join,
outer-equi-join, and
outer-natural-join.
%

This contrasts somewhat to the approach in \texttt{FOLE} (this paper),
which consists of the concise collection of basic components of \S\,\ref{sub:sec:base:ops}
and the large collection of composite operators of 
\S\,\ref{sub:sec:comp:ops:type:dom},\,\ref{sub:sec:comp:ops:sign},\,\ref{sub:sec:non-trad:ops}.
The \emph{basic components} are of three kinds:
reflection (\S\,\ref{sub:sub:sec:reflect})
%
\footnote{Tables and relations are informationally equivalent.
The inclusion of relations into tabular operations can be implicit.
The image relation of a tabular result should be made explicit.} 
%
(image, include),
Booleans or basic operators (\S\,\ref{sub:sub:sec:boole}) 
(union, intersection, difference), and 
adjoint flow (\S\,\ref{sub:sub:sec:adj:flow})
(projection/inflation and expansion/restriction).
The \emph{composite operators} 
(Tbl.\,\ref{fig:fole:comp:ops}) 
are defined in terms of the basic components.
These consist of three kinds:
limit operators
(\S\,\ref{sub:sec:comp:ops:type:dom}),
colimit operators
(\S\,\ref{sub:sec:comp:ops:sign})
and
unorthodox operators
(\S\,\ref{sub:sec:non-trad:ops}).
%
Many more could be defined.
The operations in the \texttt{FOLE} approach
are based upon principles implicit 
in the structure of the mathematical context of tables  $\mathrmbf{Tbl}$,
as presented in detail in the paper \cite{kent:fole:era:tbl}
and summarized in the appendix \S\,\ref{sec:append} below.

\section{Overview}
\label{sec:overview}
%
Previous papers about \texttt{FOLE} have been largely theoretical.
This paper is more applied.
Here we discuss one representation of 
the relational model called relational algebra.
%
\footnote{Following the original discussion 
of {\ttfamily FOLE} (Kent~\cite{kent:iccs2013})
in the knowledge-representation community, 
we use 
``mathematical context'' for the term ``category'',
``passage'' for the term ``functor'', and
``bridge'' for the term ``natural transformation''.}
%
Relational algebra involves operations
that are used to formulate database queries.
To a large extent this paper follows 
the eight relational operators that Codd introduced. 
Here we define relational operations 
in terms of 
a \texttt{FOLE} relational database.
Each \texttt{FOLE} relational database
is built from the basic concept of a \texttt{FOLE} table,
which is discussed in detail in the paper 
\cite{kent:fole:era:tbl}.
The mathematical context of \texttt{FOLE} tables
is divided into fiber contexts based at each of the tabular components:
signatures,
type domains, and
signed domains. 
Tabular flow is used to move tables between fiber contexts.
In the appendix \S\,\ref{sec:append}
we review some basic concepts of \texttt{FOLE}:
the tabular components in
\S\,\ref{sub:sec:tbl:comp}
of
type domains,
signatures,
and
signed domains;
and
tables, tabular flow and relations in
\S\,\ref{sub:sec:tbl:rel}.
%
The paper 
\cite{kent:fole:era:tbl}
has a more complete presentation of these concepts.

This paper
offers a framework in which to define the operators of relational algebra.
To a certain extent
we follow category-theoretic guidelines and spirit.
The operators of relational algebra are split into two groups: 
basic 
and composite.
Basic operations (\S\,\ref{sub:sub:sec:boole}) 
are defined at the small scope of a fixed signed domain.
Most 
composite operators 
are defined at an intermediate scope, 
either for a fixed type domain or for a fixed signature.
%
\footnote{The generic composite operators are defined at the large scope of all tables.}
%
At a fixed type domain,
composite operators are defined in terms of 
the basic operators (Booleans) and 
adjoint flow 
along signature morphisms
(\S\,\ref{sub:sub:sec:adj:flow:A}).
At a fixed signature,
composite operators are defined in terms of 
the basic operators (Booleans) and 
adjoint flow 
along type domain morphisms
(\S\,\ref{sub:sub:sec:adj:flow:S}).
%
Composite operators are of two kinds,
either orthodox or unorthodox.
Orthodox composite operators 
follow category-theoretic guidelines,
corresponding to 
either the computation of 
limits (\S\,\ref{sub:sec:comp:ops:type:dom}) 
or the computation of 
colimits (\S\,\ref{sub:sec:comp:ops:sign}).
Unorthodox operations (\S\,\ref{sub:sec:non-trad:ops}), 
for various reasons, do not.
A flowchart is used to visualize the definition of each composite operator.
The links between flowchart symbols are typed by 
either 
a signature (when operating in the context of a fixed type domain),
a type domain (when operating in the context of a fixed signature),
or
a signed domain (when operating in the full table context).
%
\begin{figure}
\begin{center}
{{\begin{tabular}{c}
\begin{picture}(160,70)(-25,5)
\setlength{\unitlength}{0.35pt}
\put(-70,80){\begin{picture}(0,0)(0,0)
\thicklines
\put(70,40){\makebox(0,0){\scriptsize{{\textit{{input}}}}}}
\put(20,70){\line(1,0){100}}
\put(20,10){\line(1,0){100}}
\put(20,70){\line(0,-1){60}}
\put(120,70){\line(0,-1){60}}
\put(120,40){\vector(1,0){40}}
\end{picture}}
\put(90,80){\begin{picture}(0,0)(0,0)
\thicklines
\put(70,62){\makebox(0,0){\scriptsize{{\textit{{adjoint flow}}}}}}
\put(70,40){\makebox(0,0){\scriptsize{{\textit{{Booleans}}}}}}
\put(70,17){\makebox(0,0){\scriptsize{{\textit{{reflection}}}}}}
\put(0,80){\line(1,0){140}}
\put(0,0){\line(1,0){140}}
\put(0,80){\line(0,-1){80}}
\put(140,80){\line(0,-1){80}}
\put(70,110){\makebox(0,0){\normalsize{${
\overset{\textstyle{{\footnotesize{{\textit{{construction}}}}}}}
{\overbrace{\hspace{52pt}}}}$}}}
\put(70,-30){\makebox(0,0){\normalsize{${
\underset{\textstyle{{\footnotesize{{\textit{{constraint}}}}}}}
{\underbrace{\hspace{52pt}}}}$}}}
\end{picture}}
\put(250,80){\begin{picture}(0,0)(0,0)
\thicklines
\put(70,40){\makebox(0,0){\scriptsize{{\textit{{output}}}}}}
\put(20,70){\line(1,0){100}}
\put(20,10){\line(1,0){100}}
\put(20,70){\line(0,-1){60}}
\put(120,70){\line(0,-1){60}}
\put(-20,40){\vector(1,0){40}}
\end{picture}}
\end{picture}
\\
{\scriptsize{\begin{tabular}{p{160pt}}
The input obeys the constraint.
The constraint specifies the construction using 
limits or colimits.
The construction provides a framework for building the output
using adjoint flow, Boolean operators and reflection.
\end{tabular}}}
\end{tabular}}}
\end{center}
\caption{\texttt{FOLE} Composite Relational Operations (Specification/Evaluation)}
\label{tbl:fole:rel:ops:spec:eval}
\end{figure}

Fig.\;\ref{tbl:fole:rel:ops:spec:eval}
illustrates the 
specification and evaluation
of the composite operations defined in this paper.
Many of these
need to use only a sufficient or adequate collection of tables
(Def.\;\ref{def:suff:adequ:lim}) and
(Def.\;\ref{def:suff:adequ:colim})
for their input;
in particular,
quotient and co-quotient,
natural join and data-type join,
and
generic meet and generic join.
%
The notation and description of the composite operators is 
given
in Tbl.\,\ref{fig:fole:comp:ops}.
These consist of limit, colimit and unorthodox varieties.
The limit operations 
are applied versions of 
the theoretical limit operation in various scopes.
The dual holds for the colimit operations.
%
\footnote{The limit operation symbols have squarish shapes,
the colimit operation symbols have roundish shapes.}
%
To a certain extent 
the limit operations follow the spirit of classical relational database operations
by controlling the database through the header.
The colimit operations allow you to get control of your data at the query level
by manipulating the data-types.
%
Unorthodox operations do not follow these prescriptions for various reasons;
either the definition is deviant 
%
\footnote{See comments in 
\S\,\ref{sub:sub:sec:filtered:join} and \S\,\ref{sub:sub:sec:boolean:meet}
on the deviancy of filtered join and data-type meet.}
or more complex.
%
Each limit operation 
has a dual,
which is a colimit operation. 
For example,
natural join is a limit operation,
whose dual is the data-type join,
which is a colimit operation.
In addition,
in the unorthodox category,
the filtered join is dual to the data-type meet.


%
\newpage
\paragraph{Limit Operations.}
In
\S\,\ref{sub:sec:comp:ops:type:dom},
we define (Tbl.\,\ref{tbl:fole:comp:rel:ops:lim}) the composite operations for limits.
Quotient (\S\,\ref{sub:sub:sec:quotient}) represents equalizer.
As the flowchart in Fig.\;\ref{fig:fole:quotient:flo:chrt} indicates,
quotient is a unary operator taking one argument,
one table whose signature is the target of a parallel pair of 
$X$-signature morphisms; 
it is defined by the single operation of inflation.
\S\,\ref{sub:sub:sec:core} covers core.
As the flowchart in Fig.\;\ref{fig:fole:core:flo:chrt} indicates,
core is a binary operator taking two arguments,
two tables whose type domains are linked to a third type domain
through an opspan of $X$-type domain morphisms;
it is the composition of restriction (twice) followed by meet.
Natural join (\S\,\ref{sub:sub:sec:nat:join}) represents pullback.
As the flowchart in Fig.\;\ref{fig:fole:nat:join:flo:chrt} indicates,
natural join is a binary operator taking two arguments,
two tables whose signatures are linked to a third signature
through a span of $X$-signature morphisms;
it is the composition of inflation (twice) followed by meet.
Cartesian product is a special case.
\S\,\ref{sub:sub:sec:semi:join} covers semi-join.
There are two semi-joins,
left or right.
As the flowchart in Fig.\;\ref{fole:semi:join:flo:chrt} indicates,
semi-join takes two arguments,
two tables constrained as in the natural join;
it is the composition of natural join followed by projection.
\S\,\ref{sub:sub:sec:anti:join} covers anti-join.
There are two anti-joins,
left or right.
As the flowchart in Fig.\;\ref{fole:anti:join:flo:chrt} indicates,
anti-join takes two arguments,
two tables constrained as in the natural join;
it is the composition of semi-join followed by the difference
with one of the two arguments.
Generic meet (\S\,\ref{sub:sub:sec:generic:meet}) represents limit.
As the flowchart in Fig.\;\ref{fig:fole:boolean:meet:flo:chrt} indicates,
generic meet takes 
a sufficient indexed collection of 
$n$ tables 
(Def.\;\ref{def:suff:adequ:lim});
it is the composition of 
restriction$\,\circ\,$inflation ($n$ times) followed by meet.
%

\paragraph{Colimit Operations.}

In
\S\,\ref{sub:sec:comp:ops:sign},
we define (Tbl.\,\ref{tbl:fole:comp:rel:ops:colim}) 
the composite operations for colimits.
Co-quotient (\S\,\ref{sub:sub:sec:co-quotient}) represents co-equalizer.
As the flowchart in Fig.\;\ref{fig:fole:co-quotient:flo:chrt} indicates,
co-quotient is a unary operator taking one argument,
one table whose type domain is the source of a parallel pair of 
$X$-type domain morphisms; 
it is defined by the single operation of expansion.
\S\,\ref{sub:sub:sec:co-core} covers co-core.
As the flowchart in Fig.\;\ref{fig:fole:co-core:flo:chrt} indicates,
co-core is a binary operator taking two arguments,
two tables 
whose signatures are linked to a third signature
through a span of $X$-signature morphisms;
it is the composition of projection (twice) followed by join.
Data-type join (\S\,\ref{sub:sub:sec:boole:join}) represents pushout.
As the flowchart in Fig.\;\ref{fig:fole:boole:join:flo:chrt} indicates,
data-type join is a binary operator taking two arguments,
two tables whose type domains are linked to a third type domain
through an op-span of $X$-type domain morphisms:
it is the composition of expansion (twice) followed by join.
Disjoint sum is a special case.
\S\,\ref{sub:sub:sec:boole:semi:join}
covers data-type semi-join.
There are two semi-joins,
left or right.
As the flowchart in Fig.\;\ref{fole:boole:semi:join:flo:chrt} indicates,
data-type semi-join takes two arguments,
two tables constrained as in the data-type join;
it is the composition of data-type join followed by restriction.
\S\,\ref{sub:sub:sec:boole:anti:join}
covers data-type anti-join.
There are two anti-joins,
left or right.
As the flowchart in Fig.\;\ref{fole:boole:anti:join:flo:chrt} indicates,
anti-join takes two arguments,
two tables constrained as in the data-type join;
it is the composition of data-type semi-join followed by the difference
with one of the two arguments.
Generic join (\S\,\ref{sub:sub:sec:generic:join}) represents colimit.
As the flowchart in Fig.\;\ref{fig:fole:boolean:join:flo:chrt} indicates,
generic join takes 
a sufficient indexed collection of 
$n$ tables 
(Def.\;\ref{def:suff:adequ:colim});
it is the composition of 
projection$\,\circ\,$expansion ($n$ times) followed by join.

\paragraph{Unorthodox Operations.}
In
\S\,\ref{sub:sec:non-trad:ops},
we define (Tbl.\;\ref{tbl:fole:comp:rel:ops:unortho})
the unorthodox composite operations.
\S\,\ref{sub:sub:sec:sel}
covers selection.
As the flowchart in Fig.\;\ref{fig:fole:select:aux:rel:flo:chrt} indicates,
selection takes two arguments,
a principal table and an auxiliary relation 
selected against;
%
\footnote{In \S\,\ref{sub:sec:tbl:rel}
we give several examples of common auxiliary relations.}
%
it is the composition of inflation (twice) followed by meet.
Selection is a special case of natural join.
\S\,\ref{sub:sub:sec:sel:join} 
covers select-join.
As the flowchart in Fig.\;\ref{fig:fole:theta:join:flo:chrt} indicates,
select-join takes three arguments,
two principal tables that are joined,
and an auxiliary relation selected against;
it is the composition of natural join followed by selection.
Select-join is a natural multi-join.
\S\,\ref{sub:sub:sec:filtered:join}
covers filtered join.
As the flowchart in Fig.\;\ref{fig:fole:filter:join:flo:chrt} indicates,
filtered join takes two arguments,
two tables whose type domains are linked to a third type domain
through an op-span of $X$-type domain morphisms;
it is the composition of restriction (twice) followed by join.
\S\,\ref{sub:sub:sec:boolean:meet}
covers data-type meet.
As the flowchart in Fig.\;\ref{fig:fole:boole:meet:flo:chrt} indicates,
data-type meet takes two arguments,
two tables constrained as in the data-type join;
it is the composition of expansion (twice) followed by meet.
\S\,\ref{sub:sub:sec:subtrac}
covers subtraction.
As the flowchart in Fig.\;\ref{fole:subtraction:flo:chrt} indicates,
subtraction takes two arguments,
two tables with one subtracted from the other.
There are two type domains,
with one table having the product type domain,
and the other table having one of the component type domains.
It is the composition of expansion (once) followed by difference.
\S\,\ref{sub:sub:sec:div} covers division.
As the flowchart in Fig.\;\ref{fole:division:flo:chrt} indicates,
division takes two arguments,
two tables with one divided by the other.
There are two signatures,
with one table having the coproduct signature,
and the other table having one of the component signatures.
Division uses projection (twice), Cartesian product and difference (twice).
\S\,\ref{sub:sub:sec:out:join} covers outer-join.
There are two outer-joins,
left or right.
As the flowchart 
in Fig.\;\ref{fole:outer:join:in:square} 
indicates,
left outer-join takes two arguments,
two tables with one outer-joined with the other.
For outer-join,
we expand the type domain by adding a null value.
We then 
add a third table 
that consists of a single tuple of null values.
Outer-join uses natural join, anti-join, expansion, Cartesian product and union.
Since outer-join uses two signatures and two type domains with a common sort set,
outer-join is an interesting and non-trivial case of adjoint flow 
in the square (\S\,\ref{sub:sub:sec:adj:flo:square}).
%

%
\begin{table}
\begin{center}
{{\begin{tabular}
{|
l@{\hspace{5pt}}
r@{\hspace{3pt}{$\circ$}\hspace{3pt}}
l@{\hspace{8pt}{$\doteq$}\hspace{6pt}}
c@{\hspace{6pt}}
c@{\hspace{6pt}}
c@{\hspace{6pt}}
l|}
\multicolumn{1}{l}{}
& \multicolumn{2}{c}{\textsf{definition}}
& \multicolumn{1}{c}{\textsf{name}} 
& \multicolumn{1}{c}{\textsf{symbol}}
& \multicolumn{1}{c}{\textsf{arity}} 
& \multicolumn{1}{c}{\textsf{scope}}
\\
\multicolumn{7}{l}{\S\,\ref{sub:sec:comp:ops:type:dom} \textbf{Limit}} 
\\\hline
\S\,\ref{sub:sub:sec:quotient}
& \multicolumn{1}{r}{\textit{inflate}} & \multicolumn{1}{r}{$\doteq\hspace{7pt}$} 
&	\textit{quotient}
& $\Yright$
& unary
& $\mathrmbf{Tbl}(\mathcal{A})$ 
\\
\S\,\ref{sub:sub:sec:core}
& \textit{restrict}$\times${2} & \textit{meet} 
&	\textit{core} 
& $\sqcap$
& binary
& $\mathrmbf{Tbl}(\mathcal{S})$ 
\\
\S\,\ref{sub:sub:sec:nat:join}
& \textit{inflate}$\times${2} & \textit{meet} 
&	\textit{natural join} 
& $\boxtimes$
& binary
& $\mathrmbf{Tbl}(\mathcal{A})$ 
\\\hline\hline
\S\,\ref{sub:sub:sec:semi:join}
& \textit{natural join} & \textit{project} 
& \textit{semi-join}
& $\boxleft$ $\boxright$
& binary
& $\mathrmbf{Tbl}(\mathcal{A})$ 
\\
\S\,\ref{sub:sub:sec:anti:join}
& (\textit{id},\textit{semi-join}) & \textit{diff} 
& \textit{anti-join}
& $\boxslash$ $\boxbackslash$
& binary
& $\mathrmbf{Tbl}(\mathcal{A})$ 
\\\hline\hline
\S\,\ref{sub:sub:sec:generic:meet}
& (\textit{restrict}$\,\circ\,$\textit{inflate})$\times${$n$} & \textit{meet} 
&	\textit{generic meet}
& \scriptsize{$\prod$}
& $n$-ary
& $\mathrmbf{Tbl}$ 
\\\hline
\multicolumn{7}{l}{} 
\\
\multicolumn{7}{l}{\textbf{Colimit}} 
\\\hline
\S\,\ref{sub:sub:sec:co-quotient}
& \multicolumn{1}{r}{\textit{expand}} & \multicolumn{1}{r}{$\doteq\hspace{7pt}$} 
&	\textit{co-quotient}
& $\Yleft$
& unary
& $\mathrmbf{Tbl}(\mathcal{S})$ 
\\
\S\,\ref{sub:sub:sec:co-core}
& \textit{project}$\times${2} & \textit{join}		
&	\textit{co-core}
& $\cup$
& binary
& $\mathrmbf{Tbl}(\mathcal{A})$ 
\\ 
\S\,\ref{sub:sub:sec:boole:join}
& \textit{expand}$\times$2 & \textit{join}	
&	\textit{data-type} \textit{join}
& $\oplus$
& binary
& $\mathrmbf{Tbl}(\mathcal{S})$ 
\\\hline\hline
\S\,\ref{sub:sub:sec:boole:semi:join}
& \textit{data-type} \textit{join} & \textit{restrict} 
& \textit{data-type} \textit{semi-join}
& $\oleft$ $\oright$
& binary
& $\mathrmbf{Tbl}(\mathcal{S})$ 
\\
\S\,\ref{sub:sub:sec:boole:anti:join}
& (\textit{id},\textit{semi-join}) & \textit{diff} 
& \textit{data-type} \textit{anti-join}
& $\oslash$ $\obackslash$
& binary
& $\mathrmbf{Tbl}(\mathcal{S})$ 
\\\hline\hline
\S\,\ref{sub:sub:sec:generic:join}
& 
(\textit{project}$\,\circ\,$\textit{expand})$\times${$n$}
& \textit{join} 
&	\textit{generic join}
& \scriptsize{$\coprod$}
& $n$-ary
& $\mathrmbf{Tbl}$ 
\\\hline
\multicolumn{7}{l}{}
\\
\multicolumn{7}{l}{\textbf{Unorthodox}} 
\\\hline
\S\,\ref{sub:sub:sec:sel}
& 
(\textit{inflate},\textit{id}) & \textit{meet} 
& \textit{selection}
& $\sigma$
& binary
& $\mathrmbf{Tbl}(\mathcal{A})$ 
\\
\S\,\ref{sub:sub:sec:sel:join}
& 
(\textit{id},\;\textit{natural join}) & \textit{select} 
& \textit{select-join}
& $\hat{\boxtimes}$
& ternary
& $\mathrmbf{Tbl}(\mathcal{A})$ 
\\\hline\hline
\S\,\ref{sub:sub:sec:filtered:join}
& \textit{restrict}$\times$2 & \textit{join}	
& \textit{filtered join}
& 
$\varominus$
& binary
& $\mathrmbf{Tbl}(\mathcal{S})$ 
\\
\S\,\ref{sub:sub:sec:boolean:meet}
& \textit{expand}$\times$2 & \textit{meet} 
&	\textit{data-type meet}
& $\boxbar$
& binary
& $\mathrmbf{Tbl}(\mathcal{S})$ 
\\\hline\hline
\S\,\ref{sub:sub:sec:subtrac}
& (\textit{id},\textit{expand}) & \textit{diff} 
& \textit{subtraction}
& $\thicksim$ 
& binary
& $\mathrmbf{Tbl}(\mathcal{S})$ 
\\\hline\hline
\S\,\ref{sub:sub:sec:div}
& \multicolumn{2}{c}{\textit{multiple}}
& \textit{division}
& $\div$
& binary
& $\mathrmbf{Tbl}(\mathcal{A})$ 
\\
\S\,\ref{sub:sub:sec:out:join}
& \multicolumn{2}{c}{\textit{multiple}}
& \textit{outer-join}
& $\rgroup\!\boxtimes$ $\boxtimes\!\lgroup$ 
& binary
& $\mathrmbf{Tbl}$
\\\hline
\end{tabular}}}
\end{center}
\caption{\texttt{FOLE} Composite Operations}
\label{fig:fole:comp:ops}
\end{table}
%

\comment{ 
\begin{figure}
\begin{center}
{\scriptsize{\begin{tabular}
{|@{\hspace{7pt}}c@{\hspace{15pt}}c@{\hspace{15pt}}c@{\hspace{2pt}}|}
\hline
\multicolumn{1}{|c}{\textsf{name}} 
& \multicolumn{1}{c}{\textsf{definition}}
& \multicolumn{1}{r|}{\textsf{scope}}
\\\hline
\textit{core} 
&
$\mathcal{T}_{1}\sqcap\mathcal{T}_{2}
=
(\grave{\wr}_{g}(\mathcal{T}_{1})\wedge\grave{\wr}_{g}(\mathcal{T}_{2}))$
&
$\mathrmbf{Tbl}(\mathcal{S})$
\\
\textit{natural join}
&
$\mathcal{T}_{1})\boxtimes\mathcal{T}_{2}
=
(\grave{\wr}_{h}(\mathcal{T}_{1})\wedge\grave{\wr}_{h}(\mathcal{T}_{2}))$
&
$\mathrmbf{Tbl}(\mathcal{A})$
\\
\textit{semi-join}
&
$\mathcal{T}_{1}\boxright\mathcal{T}_{2}
=
\grave{\wr}_{h}(\mathcal{T}_{1}\boxtimes\mathcal{T}_{2})$
& 
$\mathrmbf{Tbl}(\mathcal{A})$ 
\\
\textit{anti-join}
&
$\mathcal{T}_{1}\boxslash\mathcal{T}_{2}
=
\mathcal{T}_{1}{-}(\mathcal{T}_{1}\boxright\mathcal{T}_{2})$
& 
$\mathrmbf{Tbl}(\mathcal{A})$ 
\\
\textit{generic meet}
&
$\prod\mathrmbfit{T}
=
\bigwedge_{i \in I}
\grave{\wr}_{({\hat{h}_{i},\hat{f}_{i},\hat{g}_{i}})}(\mathcal{T}_{i})
$
& 
$\mathrmbf{Tbl}$ 
\\\hline
\end{tabular}}}
$
\prod\mathrmbfit{T} =
$
\end{center}
\caption{\texttt{FOLE} Operations Functional Notation}
\label{fig:fole:ops:fn:notation}
\end{figure}} 
%








%
\comment{
\mbox{}\newline
\S\,\ref{sub:sub:sec:reflect}
\S\,\ref{sub:sub:sec:boole}
\S\,\ref{sub:sub:sec:adj:flow:A}
\S\,\ref{sub:sub:sec:adj:flow:S}
\newline
\S\,\ref{sub:sub:sec:nat:join}
\S\,\ref{sub:sub:sec:semi:join}
\S\,\ref{sub:sub:sec:anti:join}
\S\,\ref{sub:sub:sec:out:join}
\S\,\ref{sub:sub:sec:div}
\S\,\ref{sub:sub:sec:co-core}
\newline
\S\,\ref{sub:sub:sec:boole:join}
\S\,\ref{sub:sub:sec:boole:semi:join}
\S\,\ref{sub:sub:sec:subtrac}
\S\,\ref{sub:sub:sec:filtered:join}
\newline
\S\,\ref{sub:sub:sec:flow:sign:dom:mor}
\S\,\ref{sub:sub:sec:adj:flo:square}
\newline
\S\,\ref{sub:sub:sec:transform:A}
\S\,\ref{sub:sub:sec:transform:S}
}





\begin{aside}
For logical interpretation in \texttt{FOLE}, 
the \emph{domain of discourse}
is the context of tables,
with tuples representing individuals
and tables representing propositions.
Interpretation is defined in terms of propositional and predicate logic.
Propositional logic uses 
conjunction, 
disjunction 
and negation.
Conjunction (\emph{and})
is represented by the meet at various scopes:
the small scope of a signed domain 
(\S\,\ref{sub:sub:sec:boole} intersection $\mathrmbfit{T}_{1}{\wedge}\mathrmbfit{T}_{2}$),
the intermediate scope of a type domain 
(\S\,\ref{sub:sub:sec:nat:join} natural join $\mathrmbfit{T}_{1}{\,\boxtimes\,}\mathrmbfit{T}_{2}$), and
the large scope of all tables 
(\S\,\ref{sub:sub:sec:generic:meet} generic meet $\prod\mathrmbfit{T}$).
Disjunction (\emph{or})
is represented by the join at various scopes:
the small scope of a signed domain 
(\S\,\ref{sub:sub:sec:boole} union $\mathrmbfit{T}_{1}{\vee}\mathrmbfit{T}_{2}$),
the intermediate scope of a signature 
(\S\,\ref{sub:sub:sec:boole:join} data-type join $\mathrmbfit{T}_{1}{\,\oplus\,}\mathrmbfit{T}_{2}$), and
the large scope of all tables 
(\S\,\ref{sub:sub:sec:generic:join} generic join $\coprod\mathrmbfit{T}$).
Negation (\emph{not})
is represented by the difference at various scopes:
the small scope of a signed domain 
(\S\,\ref{sub:sub:sec:boole} difference $\mathrmbfit{T}_{1}{\mathbf{-}}\mathrmbfit{T}_{2}$), and
the intermediate scope of a signature 
(\S\,\ref{sub:sub:sec:subtrac} subtraction $\mathrmbfit{T}_{1}{\,\thicksim\,}\mathrmbfit{T}_{2}$).
Predicate logic adds the flow of tables to logical interpretation:
(\S\,\ref{sub:sub:sec:adj:flow:A}
projection/inflation
{\footnotesize{${{\bigl\langle{\acute{\mathrmbfit{tbl}}_{\mathcal{A}}(h)
{\;\dashv\;}
\grave{\mathrmbfit{tbl}}_{\mathcal{A}}(h)}\bigr\rangle}}$}})
between the intermediate scope of signatures,
(\S\,\ref{sub:sub:sec:adj:flow:S}
expansion/restriction
{\footnotesize{${{\bigl\langle{\acute{\mathrmbfit{tbl}}_{\mathcal{S}}(g)
{\;\dashv\;}
\grave{\mathrmbfit{tbl}}_{\mathcal{S}}(g)}\bigr\rangle}}$}})
between the intermediate scope of type domains,
and 
(\S\;\ref{sub:sub:sec:flow:sign:dom:mor}
projection$\,\circ\,$expansion/restriction$\,\circ\,$inflation
{\footnotesize{$
{{\bigl\langle{\acute{\mathrmbfit{tbl}}(h,f,g)
{\;\dashv\;}
\grave{\mathrmbfit{tbl}}(h,f,g)}\bigr\rangle}}
$}})
between the small scope of signed domains.
%
\footnote{For more on this,
see \textit{Formula Interpretation} 
\S\;2.2.1  of the paper
``The {\ttfamily ERA} of {\ttfamily FOLE}: Superstructure''
\cite{kent:fole:era:supstruc}.}
%
%
\footnote{To allow algebraic computations on the data domains, 
see the paper``The First-order Logical Environment'' 
\cite{kent:iccs2013},
which defines syntactic flow along term vectors.}
\end{aside}
%

\newpage
\section{Basic Components}
\label{sub:sec:base:ops}


Basic components are elements to be used in flowcharts.
A case in point is the quotient composite operation 
of \S\,\ref{sub:sub:sec:quotient},
whose flowchart has only one component --- inflation.
There are three kinds of basic components:
two reflectors, three Booleans (basic operations), and 
four components of adjoint flow
(two each for type domain and signature).

\subsection{Reflection.}\label{sub:sub:sec:reflect}

%
\begin{figure}
\begin{center}
{{\begin{tabular}{c}
\begin{picture}(180,14)(-3,0)
\comment{\put(120,100){\begin{picture}(0,0)(0,0)
\setlength{\unitlength}{0.45pt}
\put(95,90){\makebox(0,0){\Large{{\textit{{flow chart}}}}}}
\put(95,60){\makebox(0,0){\Large{{\textbf{{Set}}}}}}
\put(95,25){\makebox(0,0){\Large{{\textbf{{Table}}}}}}
\put(0,0){\line(1,0){200}}
\put(0,120){\line(1,0){200}}
\put(0,0){\line(0,1){120}}
\put(200,0){\line(0,1){120}}
\end{picture}}}
%
\put(20,-9){\begin{picture}(0,0)(0,0)
\setlength{\unitlength}{0.45pt}
\put(40,10){\line(1,0){60}}
\put(40,70){\line(1,0){60}}
\put(100,70){\line(0,-1){60}}
\put(40,40){\oval(60,60)[bl]}
\put(40,40){\oval(60,60)[tl]}
\put(57,52){\makebox(0,0){\footnotesize{{\textit{{include}}}}}}
\put(55,30){\makebox(0,0){\LARGE{${\Leftarrow}$}}}
\put(10,40){\vector(-1,0){30}}
\put(130,40){\vector(-1,0){30}}
\end{picture}}
\put(105,-9){\begin{picture}(0,0)(0,0)
\setlength{\unitlength}{0.45pt}
\put(10,10){\line(1,0){60}}
\put(10,70){\line(1,0){60}}
\put(11,70){\line(0,-1){60}}
\put(70,40){\oval(60,60)[br]}
\put(70,40){\oval(60,60)[tr]}
\put(55,52){\makebox(0,0){\footnotesize{{\textit{{image}}}}}}
\put(56,30){\makebox(0,0){\LARGE{${\Rightarrow}$}}}
\put(-20,40){\vector(1,0){30}}
\put(100,40){\vector(1,0){30}}
\end{picture}}
\end{picture}
\end{tabular}}}
\end{center}
\caption{\texttt{FOLE} Reflection Operators}
\label{fig:fole:reflec:proc}
\end{figure}
Let $\mathcal{D}={\langle{\mathcal{S},\mathcal{A}}\rangle}$ be a fixed signed domain.
Here,
we define reflection between the smallest fiber contexts
$\mathrmbf{Tbl}(\mathcal{D}){\;\rightleftarrows\;}\mathrmbf{Rel}(\mathcal{D})$.
The context of relations forms a sub-context of tables:
there is an inclusion passage 
$\mathrmbf{Rel}(\mathcal{D})
\xhookrightarrow{\;\mathrmbfit{inc}_{\mathcal{D}}}
\mathrmbf{Tbl}(\mathcal{D})$.
Conversely,
there is an image passage 
$\mathrmbf{Tbl}(\mathcal{D})
\xrightarrow{\;\mathrmbfit{im}_{\mathcal{D}}}
\mathrmbf{Rel}(\mathcal{D})$
defined as follows.
A table
${\langle{K,t}\rangle}{\;\in\;}\mathrmbf{Tbl}(\mathcal{D})$
with tuple function
$K\xrightarrow{\,t\;}\mathrmbfit{tup}_{\mathcal{A}}(\mathcal{S})$
is mapped to the relation 
${\langle{{\wp{t}}(K),i}\rangle}{\;\in\;}\mathrmbf{Rel}(\mathcal{D})$
with inclusion tuple function
${\wp{t}}(K)\xhookrightarrow{\,i\;}\mathrmbfit{tup}_{\mathcal{A}}(\mathcal{S})$,
which is essentially its tuple subset
${\wp{t}}(K) \subseteq \mathrmbfit{tup}_{\mathcal{A}}(\mathcal{S})$.
A table morphism
{\footnotesize{$
\mathcal{T}' = {\langle{K',t'}\rangle}
\xleftarrow{\;k\;} 
{\langle{K,t}\rangle} = \mathcal{T}
$}\normalsize}
with table morphism condition 
$k{\;\cdot\;}t' = t$
is mapped to the relation morphism 
{\footnotesize{$
\mathcal{R}' = {\langle{{\wp{t'}}(K'),i'}\rangle}
\xhookleftarrow{r} 
{\langle{{\wp{t}}(K),i}\rangle} = \mathcal{R}
$}\normalsize}
with relation morphism condition
${\wp{t'}}(K'){\;\supseteq\;}{\wp{t}}(K)$.
%
\begin{center}
{{\begin{tabular}{c}
{\setlength{\unitlength}{0.5pt}\begin{picture}(160,150)(0,-15)
\put(0,120){\makebox(0,0){\footnotesize{$K'$}}}
\put(0,60){\makebox(0,0){\footnotesize{${\wp{t'}}(K)'$}}}
\put(160,120){\makebox(0,0){\footnotesize{$K$}}}
\put(160,60){\makebox(0,0){\footnotesize{${\wp{t}}(K)$}}}
\put(0,0){\makebox(0,0){\footnotesize{$\mathrmbfit{tup}_{\mathcal{A}}(\mathcal{S})$}}}
\put(160,0){\makebox(0,0){\footnotesize{$\mathrmbfit{tup}_{\mathcal{A}}(\mathcal{S})$}}}
\put(80,130){\makebox(0,0){\scriptsize{$k$}}}
\put(80,70){\makebox(0,0){\scriptsize{$r$}}}
\put(80,0){\makebox(0,0){\scriptsize{$=$}}}
%
\put(145,120){\vector(-1,0){130}}
\put(130,60){\vector(-1,0){100}}\put(130,65){\oval(6,6)[r]}
\put(0,105){\vector(0,-1){30}}
\put(160,105){\vector(0,-1){30}}
\put(0,42){\vector(0,-1){27}}\put(4,43){\makebox(0,0){\scriptsize{$\cap$}}}
\put(160,42){\vector(0,-1){27}}\put(164,43){\makebox(0,0){\scriptsize{$\cap$}}}
\put(-40,30){\makebox(0,0)[r]{\footnotesize{$
\mathrmbfit{inc}_{\mathcal{D}}(\mathrmbfit{im}_{\mathcal{D}}(\mathcal{T}'))
\left\{\rule{0pt}{24pt}\right.$}}}
\put(200,30){\makebox(0,0)[l]{\footnotesize{$
\left.\rule{0pt}{24pt}\right\}
\mathrmbfit{inc}_{\mathcal{D}}(\mathrmbfit{im}_{\mathcal{D}}(\mathcal{T}))$}}}
%
%
\end{picture}}
\end{tabular}}}
\end{center}
The diagram above
factors the condition 
$k{\;\cdot\;}t' = t$
by diagonal fill-in.
This gives the $\mathcal{D}$-table morphism 
$\mathrmbfit{inc}_{\mathcal{D}}(\mathrmbfit{im}_{\mathcal{D}}(\mathcal{T}'))
\xhookleftarrow{\;r\;}
\mathrmbfit{inc}_{\mathcal{D}}(\mathrmbfit{im}_{\mathcal{D}}(\mathcal{T}))$,
which is the image-inclusion composite passage 
applied to 
the $\mathcal{D}$-table morphism 
$\mathcal{T}'
\xleftarrow{\;k\;}
\mathcal{T}$.
%

\begin{proposition}\label{tbl:rel:refl}
Image and inclusion form reflections on full and fiber contexts:
\begin{center}
{{{\begin{tabular}{c}
\setlength{\extrarowheight}{2pt}
{\footnotesize{$\begin{array}[c]{r@{\hspace{2pt}{\;:\;}\hspace{2pt}}
l@{\hspace{4pt}{\;\rightleftarrows\;}\hspace{4pt}}l}
{\langle{\mathrmbfit{im}{\;\dashv\;}\mathrmbfit{inc}}\rangle}
&
\mathrmbf{Tbl}
&
\mathrmbf{Rel}
\\
{\langle{\mathrmbfit{im}_{\mathcal{A}}{\;\dashv\;}\mathrmbfit{inc}_{\mathcal{A}}}\rangle}
&
\mathrmbf{Tbl}(\mathcal{A})
&
\mathrmbf{Rel}(\mathcal{A})
\\
{\langle{\mathrmbfit{im}_{\mathcal{D}}{\;\dashv\;}\mathrmbfit{inc}_{\mathcal{D}}}\rangle}
&
\mathrmbf{Tbl}(\mathcal{D})
&
\mathrmbf{Rel}(\mathcal{D})
={\langle{{\wp}\mathrmbfit{tup}_{\mathcal{A}}(\mathcal{S}),\subseteq}\rangle}
\end{array}$}}
\end{tabular}}}}
\end{center}
\end{proposition}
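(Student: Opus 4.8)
The plan is to read each row of the display as the assertion that $\mathrmbf{Rel}$ is a reflective sub-context of $\mathrmbf{Tbl}$, i.e. that $\mathrmbfit{im}\dashv\mathrmbfit{inc}$ with the right adjoint $\mathrmbfit{inc}$ fully faithful. Equivalently, I would establish the adjunction together with the fact that the counit $\mathrmbfit{im}\circ\mathrmbfit{inc}\Rightarrow\mathrm{id}$ is an identity. The unit at a table $\mathcal{T}={\langle{K,t}\rangle}$ is the canonical corestriction, namely $t$ rewritten as a surjection $K\to\wp t(K)$ followed by the inclusion $\wp t(K)\hookrightarrow\mathrmbfit{tup}_{\mathcal{A}}(\mathcal{S})$; its universal property is precisely the diagonal fill-in already displayed, since any table morphism from $\mathcal{T}$ into an included relation must route $t$ through that relation's underlying subset.

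First I would settle the smallest fiber, at a fixed signed domain $\mathcal{D}={\langle{\mathcal{S},\mathcal{A}}\rangle}$, where $\mathrmbf{Rel}(\mathcal{D})={\langle{\wp\,\mathrmbfit{tup}_{\mathcal{A}}(\mathcal{S}),\subseteq}\rangle}$ is a preorder and so carries at most one morphism between any two objects. Writing $\mathrmbfit{inc}_{\mathcal{D}}(\mathcal{R})={\langle{R,i}\rangle}$ with $i$ monic, a table morphism $\mathcal{T}\to\mathrmbfit{inc}_{\mathcal{D}}(\mathcal{R})$ is a key map $k\colon K\to R$ with $k\cdot i=t$; because $i$ is an inclusion this $k$ is unique when it exists, and it exists exactly when $t$ factors through $R$, that is, when $\wp t(K)\subseteq R$. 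This last inequality is precisely the (unique) morphism $\mathrmbfit{im}_{\mathcal{D}}(\mathcal{T})\to\mathcal{R}$ in the preorder, so the two hom-sets are in bijection and $\mathrmbfit{im}_{\mathcal{D}}\dashv\mathrmbfit{inc}_{\mathcal{D}}$, with naturality automatic in a preorder. The reflection then follows from the counit: since the image of an inclusion is its own domain, $\mathrmbfit{im}_{\mathcal{D}}(\mathrmbfit{inc}_{\mathcal{D}}(\mathcal{R}))=\mathcal{R}$, so $\mathrmbfit{inc}_{\mathcal{D}}$ is fully faithful.

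Next I would lift this to the fiber $\mathrmbf{Tbl}(\mathcal{A})$ over a fixed type domain, and then to the full context $\mathrmbf{Tbl}$. The object and morphism assignments of $\mathrmbfit{im}$ and $\mathrmbfit{inc}$ are unchanged, but now morphisms connect tables living over different signatures (respectively different signed domains), so the hom-set argument must be run with the tabular-flow morphism appearing along the bottom edge of the fill-in square rather than the identity $=$ of the displayed $\mathcal{D}$-fiber diagram. I would either repeat the factorization argument in this relative form, or, more structurally, invoke the fact that a fiberwise family of reflections that commutes with reindexing assembles into a single reflection on the total context.

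The hard part will be exactly this last compatibility: I must check that forming the image commutes with tabular flow, so that the fiberwise counits $\mathrmbfit{im}_{\mathcal{D}}\circ\mathrmbfit{inc}_{\mathcal{D}}=\mathrm{id}$ cohere over all $\mathcal{D}$. Concretely this reduces to a stability property of the tuple passage $\mathrmbfit{tup}$: that the epi–mono image factorization of a tuple function is preserved, up to the canonical comparison, under the substitutions induced by signature and type-domain morphisms. Granting this, the $\mathcal{A}$-fiber and full reflections are the fibered aggregates of the base $\mathcal{D}$-case, and all three rows follow uniformly.
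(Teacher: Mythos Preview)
Your proposal is correct and considerably more detailed than what the paper actually offers: the paper's proof consists entirely of a citation, pointing to appendix \S A.1 of \cite{kent:fole:era:tbl} for the type-domain fiber case and leaving the other two rows implicit. So there is nothing to compare at the level of argument; you have supplied one where the paper has not.

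That said, your ``hard part'' is slightly overstated. You do not need to assemble the full and $\mathcal{A}$-fiber reflections from the $\mathcal{D}$-fiber ones via a compatibility-with-reindexing argument. The passages $\mathrmbfit{im}$ and $\mathrmbfit{inc}$ are already defined directly on the total contexts, and the diagonal fill-in square displayed immediately before the proposition (with $\mathrmbfit{tup}_{\mathcal{A}}(\mathcal{S})$ on both bottom corners) generalizes verbatim when the bottom edge is replaced by a tuple function $\mathrmbfit{tup}(h,f,g)$ rather than an identity: epi--mono factorization in $\mathrmbf{Set}$ is functorial, so the induced map on images exists and is unique. This gives functoriality of $\mathrmbfit{im}$ on $\mathrmbf{Tbl}$ and $\mathrmbf{Tbl}(\mathcal{A})$ directly, and the hom-set bijection is then the same factoring-through-a-mono argument you gave for the $\mathcal{D}$-fiber. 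No separate stability check of image under substitution is required; the orthogonal factorization system on $\mathrmbf{Set}$ does all the work uniformly. Your fibered-assembly route would also succeed, but it is the longer way around.
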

\begin{proof}
The reflection at type domain $\mathcal{A}$
appears in appendix \S\;A.1 of 
\cite{kent:fole:era:tbl}.
\mbox{}\hfill\rule{5pt}{5pt}
\end{proof}
Each reflection embodies the notion of informational equivalence.
The inclusion operator 
$\mathrmbf{Tbl}(\mathcal{D})
\xhookleftarrow{\mathrmbfit{inc}_{\mathcal{D}}}
\mathrmbf{Rel}(\mathcal{D})$
can be used at the input of any composite operator on tables.
Dually,
the image operator 
$\mathrmbf{Tbl}(\mathcal{D})
\xrightarrow{\mathrmbfit{im}_{\mathcal{D}}}
\mathrmbf{Rel}(\mathcal{D})$
can be used at the output of any composite operator on tables.
%
\footnote{
For example,
the inclusion operator is used before the inflation operator
in \S\,\ref{sub:sub:sec:nat:join}
to define the selection operation,
and
the image operator can be used after the projection operator 
in \S\,\ref{sub:sub:sec:semi:join}
to define the semi-join operation.}
%

\comment{
\newline\mbox{}\hfill
${\langle{\mathrmbfit{im}{\;\dashv\;}\mathrmbfit{inc}}\rangle}
:\mathrmbf{Tbl}{\;\rightleftarrows\;}\mathrmbf{Rel}$;
\hfill
\mbox{}
\newline\mbox{}\hfill
${\langle{\mathrmbfit{im}_{\mathcal{A}}{\;\dashv\;}\mathrmbfit{inc}_{\mathcal{A}}}\rangle}
:\mathrmbf{Tbl}(\mathcal{A}){\;\rightleftarrows\;}\mathrmbf{Rel}(\mathcal{A})$;
\hfill
\mbox{}
\mbox{}
\newline\mbox{}\hfill
${\langle{\mathrmbfit{im}_{\mathcal{D}}{\;\dashv\;}\mathrmbfit{inc}_{\mathcal{D}}}\rangle}
:\mathrmbf{Tbl}(\mathcal{D}){\;\rightleftarrows\;}\mathrmbf{Rel}(\mathcal{D})
 ={\langle{{\wp}\mathrmbfit{tup}_{\mathcal{A}}(\mathcal{S}),\subseteq}\rangle}$.
\hfill
\mbox{}
}


%
\comment{
\begin{definition}
Given a $\mathcal{D}$-table morphism 
$\mathcal{T}' = {\langle{K',t'}\rangle}\xleftarrow{k}{\langle{K,t}\rangle} = \mathcal{T}$,
there is an image factorization operator
that maps
the source table $\mathcal{T}={\langle{K,t}\rangle}$
to the intermediate table $\widehat{\mathcal{T}}={\langle{\hat{K},\hat{t}}\rangle}$
in the factorization:
\newline\mbox{}\hfill\rule[-10pt]{0pt}{26pt}
$\mathcal{T}' = 
{\langle{K',t'}\rangle}
\xleftarrow{m}
\underset{\textstyle{\widehat{\mathcal{T}}}}
{\underbrace{\langle{\hat{K},\hat{t}}\rangle}}
\xleftarrow{e}
{\langle{K,t}\rangle} = \mathcal{T}$.
\hfill
\mbox{}
\newline
%
\end{definition}
%

\begin{center}
{{\begin{tabular}{c@{\hspace{60pt}}c}
{\fbox{\begin{tabular}{c}
\setlength{\unitlength}{0.65pt}
\begin{picture}(120,120)(0,-25)
\put(0,80){\makebox(0,0){\footnotesize{$K'$}}}
\put(0,25){\makebox(0,0){\footnotesize{$R'$}}}
\put(120,80){\makebox(0,0){\footnotesize{$K$}}}
\put(120,25){\makebox(0,0){\footnotesize{$R$}}}
\put(60,-20){\makebox(0,0){\footnotesize{$\mathrmbfit{tup}_{\mathcal{A}}(\mathcal{S})$}}}
\put(63,90){\makebox(0,0){\scriptsize{$k$}}}
\put(63,35){\makebox(0,0){\scriptsize{$r$}}}
\put(-6,55){\makebox(0,0)[r]{\scriptsize{$e'$}}}
\put(128,55){\makebox(0,0)[l]{\scriptsize{$e$}}}
\put(-2,-3){\makebox(0,0)[r]{\scriptsize{$i'$}}}
\put(124,-3){\makebox(0,0)[l]{\scriptsize{$i$}}}
\put(100,80){\vector(-1,0){80}}
\put(100,25){\vector(-1,0){80}}\put(100,29){\oval(8,8)[r]}
\put(0,65){\vector(0,-1){25}}
\put(120,65){\vector(0,-1){25}}
\put(5,12){\oval(8,8)[t]}
\qbezier(1,12)(2,-10)(20,-17)
\put(25,-19){\vector(2,-1){0}}
\put(115,12){\oval(8,8)[t]}
\qbezier(119,12)(118,-10)(100,-17)
\put(95,-19){\vector(-2,-1){0}}
\end{picture}
\end{tabular}}}
&
{{\begin{tabular}{c}
\setlength{\unitlength}{0.7pt}
\begin{picture}(120,120)(0,-15)
\put(0,80){\makebox(0,0){\footnotesize{$K'$}}}
\put(60,80){\makebox(0,0){\footnotesize{$\hat{K}$}}}
\put(120,80){\makebox(0,0){\footnotesize{$K$}}}
\put(60,-4){\makebox(0,0){\footnotesize{$\mathrmbfit{tup}_{\mathcal{A}}(\mathcal{S})$}}}
\put(93,90){\makebox(0,0){\scriptsize{$e$}}}
\put(33,90){\makebox(0,0){\scriptsize{$m$}}}
\put(15,37){\makebox(0,0)[r]{\scriptsize{$t'$}}}
\put(107,37){\makebox(0,0)[l]{\scriptsize{$t$}}}
\put(53,45){\makebox(0,0)[r]{\scriptsize{$\hat{t}$}}}
\put(110,80){\vector(-1,0){35}}
\put(42,80){\vector(-1,0){30}}\put(42,84){\oval(8,8)[r]}
\put(5,65){\vector(2,-3){36}}
\put(60,65){\vector(0,-1){50}}
\put(115,65){\vector(-2,-3){36}}
\put(78,55){\makebox(0,0){\huge{
$\overset{\textit{\scriptsize{factor}}}{\Leftarrow}$}}}
\end{picture}
\end{tabular}}}
\\
\text{\emph{reflection}}
&
\text{factorization}
\\&\\
\multicolumn{2}{c}{$
{\mathcal{T}' = 
{\langle{K',t'}\rangle}\xleftarrow{k}{\langle{K,t}\rangle} 
= \mathcal{T}}$}
\end{tabular}}}
\end{center}
%
}{

%
\newpage
\subsection{Booleans.}\label{sub:sub:sec:boole}

%
\begin{figure}
\begin{center}
{{\begin{tabular}[h]{c}
\begin{picture}(180,25)(-5,0)
\put(-5,-5){\begin{picture}(0,0)(0,3)
\setlength{\unitlength}{0.40pt}
\put(60,30){\makebox(0,0){\Large{$\wedge$}}}
\put(33,95){\vector(0,-1){25}}
\put(87,95){\vector(0,-1){25}}
\put(60,10){\vector(0,-1){25}}
\put(10,70){\line(1,0){100}}
\put(40,10){\line(1,0){40}}
\put(10,70){\line(0,-1){30}}
\put(110,70){\line(0,-1){30}}
\put(40,40){\oval(60,60)[bl]}
\put(80,40){\oval(60,60)[br]}
\put(60,55){\makebox(0,0){\footnotesize{{\textit{{meet}}}}}}
\end{picture}}
\put(60,-5){\begin{picture}(0,0)(0,3)
\setlength{\unitlength}{0.40pt}
\put(60,30){\makebox(0,0){\Large{$\vee$}}}
\put(33,95){\vector(0,-1){25}}
\put(87,95){\vector(0,-1){25}}
\put(60,10){\vector(0,-1){25}}
\put(10,70){\line(1,0){100}}
\put(40,10){\line(1,0){40}}
\put(10,70){\line(0,-1){30}}
\put(110,70){\line(0,-1){30}}
\put(40,40){\oval(60,60)[bl]}
\put(80,40){\oval(60,60)[br]}
\put(60,55){\makebox(0,0){\footnotesize{{\textit{{join}}}}}}
\end{picture}}
\put(125,-5){\begin{picture}(0,0)(0,3)
\setlength{\unitlength}{0.40pt}
\put(33,95){\vector(0,-1){25}}
\put(87,95){\vector(0,-1){25}}
\put(60,10){\vector(0,-1){25}}
\put(10,70){\line(1,0){100}}
\put(40,10){\line(1,0){40}}
\put(10,70){\line(0,-1){30}}
\put(110,70){\line(0,-1){30}}
\put(40,40){\oval(60,60)[bl]}
\put(80,40){\oval(60,60)[br]}
\thicklines
\put(50,30){\line(1,0){20}}
\put(60,50){\makebox(0,0){\footnotesize{{\textit{{diff}}}}}}
\end{picture}}
\end{picture}
\end{tabular}}}
\end{center}
\caption{\texttt{FOLE} Boolean Operators}
\label{fig:fole:boole:proc}
\end{figure}

\comment{
&
{{\begin{tabular}{c}
\setlength{\unitlength}{0.4pt}
\begin{picture}(120,80)(0,0)
\thicklines
\put(106,40){\makebox(0,0){\normalsize{$\boldsymbol{\circ}$}}}
\put(4.4,40){\makebox(0,0){\normalsize{$\boldsymbol{\circ}$}}}
\put(40,10){\line(1,0){61}}
\put(40,70){\line(1,0){61}}
\put(100,70){\line(0,-1){60}}
\put(40,40){\oval(60,60)[bl]}
\put(40,40){\oval(60,60)[tl]}
\put(57,35){\makebox(0,0){\huge{${\Leftarrow}$}}}
\put(61,-10){\makebox(0,0){\scriptsize{{\textit{{project}}}}}}
\end{picture}
\end{tabular}}}
&
{{\begin{tabular}{c}
\setlength{\unitlength}{0.4pt}
\begin{picture}(120,80)(0,0)
\thicklines
\put(106,40){\makebox(0,0){\normalsize{$\boldsymbol{\circ}$}}}
\put(5,40){\makebox(0,0){\normalsize{$\boldsymbol{\circ}$}}}
\put(10,10){\line(1,0){61}}
\put(10,70){\line(1,0){61}}
\put(11,70){\line(0,-1){60}}
%

\put(70,40){\oval(60,60)[br]}
\put(70,40){\oval(60,60)[tr]}
\put(57,35){\makebox(0,0){\huge{${\Rightarrow}$}}}
\put(61,-10){\makebox(0,0){\scriptsize{{\textit{{inflate}}}}}}
\end{picture}
\end{tabular}}}
}

The Boolean operators are binary operators on tables
based upon 
the traditional mathematical set operations.
These operators require that 
both tables have the same set of attributes; i.e. 
type domain $\mathcal{A} = {\langle{X,Y,\models_{\mathcal{A}}}\rangle}$.
In addition,
they require compatibility.
Two tables are said to be compatible 
when 
both tables have same number of attributes and 
corresponding attributes have the same data type 
(int, char, float, date, \dots).
Hence,
two tables are compatible when they have 
the same type domain $\mathcal{A} = {\langle{X,Y,\models_{\mathcal{A}}}\rangle}$ and 
the same signature $\mathcal{S} = {\langle{I,x,X}\rangle}$;
in short,
when they have the same signed domain
$\mathcal{D} = {\langle{\mathcal{S},\mathcal{A}}\rangle}$. 
Here,
we define the Boolean operators 
of meet, join and difference 
in the smallest table fiber context
$\mathrmbf{Tbl}(\mathcal{D})
=\mathrmbf{Tbl}_{\mathcal{A}}(\mathcal{S})$.
%
The corresponding Boolean operators in 
$\mathrmbf{Rel}(\mathcal{D})$
are called set intersection, set union and set difference.
%
\footnote{Codd also listed the Cartesian product as a Boolean operator.
However,
in this paper the Cartesian product is defined 
in the larger fiber $\mathrmbf{Tbl}(\mathcal{A})$
and is closely connected to the natural join operator
(see \S\,\ref{sub:sub:sec:nat:join}).}
%
Let
$\mathcal{T} = {\langle{K,t}\rangle}$
and
$\mathcal{T}' = {\langle{K',t'}\rangle}$
be two \texttt{FOLE} tables 
in $\mathrmbf{Tbl}(\mathcal{D})$.
These tables have tuple functions 
$K\xrightarrow{\,t\,}\mathrmbfit{tup}(\mathcal{D})
=\mathrmbfit{tup}_{\mathcal{A}}(\mathcal{S})$
and $K'\xrightarrow{\,t'\,}\mathrmbfit{tup}(\mathcal{D})=\mathrmbfit{tup}_{\mathcal{A}}(\mathcal{S})$
with two image \texttt{FOLE} relations
${\wp}t(K),{\wp}t'(K')\subseteq
\mathrmbfit{tup}(\mathcal{D})=\mathrmbfit{tup}_{\mathcal{A}}(\mathcal{S})$
in $\mathrmbf{Rel}(\mathcal{D})
=\mathrmbf{Rel}_{\mathcal{A}}(\mathcal{S})
={\langle{{\wp}\mathrmbfit{tup}_{\mathcal{A}}(\mathcal{S}),\subseteq}\rangle}$.
\newpage
%
\begin{itemize}
%
\item[$\bigwedge:$] 
The intersection (meet, conjunction) operator produces the set of tuples that two tables share in common. 
The intersection operation defines the \texttt{FOLE} table
$\mathcal{T}\wedge\mathcal{T}' = {\langle{\widehat{K},{(t,t')}}\rangle}$
whose key set 
$\widehat{K} \subseteq K{\times}K'$
is the pullback and whose tuple map is the mediating function 
$\widehat{K} \xrightarrow{{(t,t')}}\mathrmbfit{tup}_{\mathcal{A}}(\mathcal{S})$
of the opspan
$K\xrightarrow{t}
\mathrmbfit{tup}_{\mathcal{A}}(\mathcal{S})
\xleftarrow{t'}K'$,
which maps a pair of keys 
$(k,k') \in \widehat{K}$
to the common tuple
$t(k)=t'(k') \in \mathrmbfit{tup}_{\mathcal{A}}(\mathcal{S})$.
The image relation
is the set-theoretic intersection
$\mathrmbfit{im}_{\mathcal{D}}(\mathcal{T}\wedge\mathcal{T}')=
\mathrmbfit{im}_{\mathcal{D}}(\mathcal{T})\cap\mathrmbfit{im}_{\mathcal{D}}(\mathcal{T}')$.
Intersection is the product in $\mathrmbf{Tbl}(\mathcal{D})$
with projection morphisms
\newline\mbox{}\hfill
$\mathcal{T}\xleftarrow{\hat{\pi}}\mathcal{T}
\wedge\mathcal{T}'
\xrightarrow{\hat{\pi}'}\mathcal{T}'$.
\hfill\mbox{}
\newline
\item[$\bigvee:$] The union (join, disjunction) operator combines the tuples of two tables and removes all duplicate tuples from the result. 
The union operation defines the \texttt{FOLE} table
$\mathcal{T}\vee\mathcal{T}' = {\langle{K{+}K',{[t,t']}}\rangle}$
whose key set is the disjoint union $K{+}K'$
and whose tuple map is the comediating function
$K{+}K'\xrightarrow{[t,t']}\mathrmbfit{tup}_{\mathcal{A}}(\mathcal{S})$
of the opspan
$K\xrightarrow{t}
\mathrmbfit{tup}_{\mathcal{A}}(\mathcal{S})
\xleftarrow{t'}K'$,
which maps a key $k \in K$ 
to $t(k) \in \mathrmbfit{tup}_{\mathcal{A}}(\mathcal{S})$
and
maps a key $k' \in K'$ 
to $t'(k') \in \mathrmbfit{tup}_{\mathcal{A}}(\mathcal{S})$.
The image relation
is the set-theoretic union
$\mathrmbfit{im}_{\mathcal{D}}(\mathcal{T}\vee\mathcal{T}')=
\mathrmbfit{im}_{\mathcal{D}}(\mathcal{T})\cup\mathrmbfit{im}_{\mathcal{D}}(\mathcal{T}')$.
Union is the coproduct in $\mathrmbf{Tbl}(\mathcal{D})$
with injection morphisms
\newline\mbox{}\hfill
$\mathcal{T}\xrightarrow{\check{\iota}}\mathcal{T}
\vee\mathcal{T}'
\xleftarrow{\check{\iota}'}\mathcal{T}'$.
%
\footnote{Intersection $\bigwedge_{i \in I} \mathcal{T}_{i}$
and union $\bigvee_{i \in I} \mathcal{T}_{i}$
can be generalized to any number of $\mathcal{D}$-tables.}
%
\hfill\mbox{}
%
\newline
%
\item[$\mathbf{-}:$] The difference operator acts on two tables and produces the set of tuples from the first table that do not exist in the second table. 
The difference operation defines the \texttt{FOLE} table
$\mathcal{T}{-}\mathcal{T}' = {\langle{\bar{K},\bar{t}}\rangle}$
whose key set
$\bar{K}$ is the tuple inverse image of the difference tuple set
$\bar{K} = t^{\text{-}1}({\wp}t(K){\,{-}\,}{\wp}t'(K')){\,\subseteq\,}K$
and whose tuple map
$\bar{t} : \bar{K} \xhookrightarrow{\bar{k}}
K\xrightarrow{t}\mathrmbfit{tup}_{\mathcal{A}}(\mathcal{S})$
restricts to this subset.
The image relation
is the set-theoretic difference
$\mathrmbfit{im}_{\mathcal{D}}(\mathcal{T}{-}\mathcal{T}')=
\mathrmbfit{im}_{\mathcal{D}}(\mathcal{T}){-}\mathrmbfit{im}_{\mathcal{D}}(\mathcal{T}')$.
There is an inclusion morphism
\newline\mbox{}\hfill
$\mathcal{T}\xhookleftarrow{\bar{\omega}}(\mathcal{T}{-}\mathcal{T}')$.
\hfill\mbox{}
%
\end{itemize}
%

%
\begin{proposition}\label{boolean:laws}
There are many algebraic laws for the Boolean operations:
associativity, commutativity, idempotency for $\wedge$ and $\vee$;
Distributive laws for $\wedge$ over $\vee$ and $\vee$ over $\wedge$; 
Distributive laws for $\times$ w.r.t. $\wedge$ and $\vee$;
Complement and double negation laws for $-$;
DeMorgans laws for $-$ w.r.t. $\wedge$ and $\vee$.
\end{proposition}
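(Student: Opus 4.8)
The plan is to transport every identity across the image reflection of Proposition~\ref{tbl:rel:refl} into the powerset Boolean algebra on the tuple set, where all of these laws are classical. The governing observation is that $\mathrmbf{Rel}(\mathcal{D}) = {\langle{\wp\,\mathrmbfit{tup}_{\mathcal{A}}(\mathcal{S}),\subseteq}\rangle}$ is simply the powerset lattice of the set $\mathrmbfit{tup}_{\mathcal{A}}(\mathcal{S})$, and the three homomorphism equations established just above --- $\mathrmbfit{im}_{\mathcal{D}}(\mathcal{T}\wedge\mathcal{T}') = \mathrmbfit{im}_{\mathcal{D}}(\mathcal{T})\cap\mathrmbfit{im}_{\mathcal{D}}(\mathcal{T}')$, $\mathrmbfit{im}_{\mathcal{D}}(\mathcal{T}\vee\mathcal{T}') = \mathrmbfit{im}_{\mathcal{D}}(\mathcal{T})\cup\mathrmbfit{im}_{\mathcal{D}}(\mathcal{T}')$, and $\mathrmbfit{im}_{\mathcal{D}}(\mathcal{T}-\mathcal{T}') = \mathrmbfit{im}_{\mathcal{D}}(\mathcal{T})-\mathrmbfit{im}_{\mathcal{D}}(\mathcal{T}')$ --- say precisely that $\mathrmbfit{im}_{\mathcal{D}}$ carries the tabular operators $\wedge,\vee,-$ to the set operators $\cap,\cup,-$. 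Since $\wp\,\mathrmbfit{tup}_{\mathcal{A}}(\mathcal{S})$ under $\cap,\cup,-$ satisfies every law on the list, each tabular law will follow by applying $\mathrmbfit{im}_{\mathcal{D}}$ to both sides and comparing image relations.

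Concretely, I would argue uniformly rather than law-by-law: for an identity $L = R$ among tabular expressions built from $\wedge,\vee,-$, push $\mathrmbfit{im}_{\mathcal{D}}$ inward along the homomorphism equations to obtain the set-theoretic identity $\mathrmbfit{im}_{\mathcal{D}}(L) = \mathrmbfit{im}_{\mathcal{D}}(R)$, which is a standard powerset fact, so the two sides coincide up to informational equivalence under the reflection ${\langle{\mathrmbfit{im}_{\mathcal{D}}\dashv\mathrmbfit{inc}_{\mathcal{D}}}\rangle}$. I would also record the sharper statements available at the table level itself: because $\wedge$ is the categorical product and $\vee$ the coproduct in $\mathrmbf{Tbl}(\mathcal{D})$, associativity and commutativity hold up to canonical isomorphism by the universal properties, whereas idempotency (e.g. $\mathcal{T}\wedge\mathcal{T}$, whose key set is the kernel pair of $t$ rather than $K$ itself) genuinely holds only after taking images --- this is exactly where informational equivalence, not strict equality, is the correct notion. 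For the complement and double-negation laws I would fix the top element to be the full relation $\mathrmbfit{tup}_{\mathcal{A}}(\mathcal{S})$, so that the complement of a table is $\mathrmbfit{tup}_{\mathcal{A}}(\mathcal{S})-\mathcal{T}'$ and these laws reduce to the familiar powerset identities; the two De Morgan laws reduce likewise.

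The main obstacle is the distributive laws for the Cartesian product $\times$ with respect to $\wedge$ and $\vee$, because, unlike the other operators, $\times$ does not live in the fixed fiber $\mathrmbf{Tbl}(\mathcal{D})$: it operates in the larger fiber $\mathrmbf{Tbl}(\mathcal{A})$ and combines two tables over different signatures into a table over their combined signature. Thus $\mathcal{T}\times(\mathcal{U}\wedge\mathcal{V})$ and $(\mathcal{T}\times\mathcal{U})\wedge(\mathcal{T}\times\mathcal{V})$ are relations not in $\wp\,\mathrmbfit{tup}_{\mathcal{A}}(\mathcal{S})$ but in $\wp\,\mathrmbfit{tup}_{\mathcal{A}}(\mathcal{S}_{1}{+}\mathcal{S}_{2})$, and the clean homomorphism equations no longer apply verbatim. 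The plan here is to reduce to the set-level factorization of tuples over a combined signature, $\mathrmbfit{tup}_{\mathcal{A}}(\mathcal{S}_{1}{+}\mathcal{S}_{2}) \cong \mathrmbfit{tup}_{\mathcal{A}}(\mathcal{S}_{1})\times\mathrmbfit{tup}_{\mathcal{A}}(\mathcal{S}_{2})$, to verify the compatibility $\mathrmbfit{im}(\mathcal{T}\times\mathcal{U}) = \mathrmbfit{im}(\mathcal{T})\times\mathrmbfit{im}(\mathcal{U})$ (immediate from the key set $K\times K'$ and tuple map $(k,k')\mapsto(t(k),t'(k'))$), and then to invoke the elementary set identities $A\times(B\cap C) = (A\times B)\cap(A\times C)$ and $A\times(B\cup C) = (A\times B)\cup(A\times C)$. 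The delicate part is the signature bookkeeping --- tracking which signature each intermediate relation lives over and confirming that the image of the tabular product is the set product of the image relations --- after which the distributive laws fall out as before.
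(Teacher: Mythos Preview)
Your proposal is correct and considerably more thorough than the paper's own treatment: the paper's entire proof is the single word ``Well-known.'' followed by the end-of-proof symbol. So there is nothing to compare at the level of argument; the paper simply defers to the reader's background, whereas you actually supply a mechanism, namely transport along the reflection $\mathrmbfit{im}_{\mathcal{D}}$ into the powerset Boolean algebra $\wp\,\mathrmbfit{tup}_{\mathcal{A}}(\mathcal{S})$, together with the separate signature-bookkeeping argument for $\times$.

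Your approach buys something the paper's does not: it makes explicit the sense in which these laws hold. In particular, your observation that idempotency for $\wedge$ at the table level holds only up to informational equivalence (the key set of $\mathcal{T}\wedge\mathcal{T}$ is the kernel pair of $t$, not $K$) is a genuine clarification that the paper elides. Likewise, your handling of the $\times$ distributive laws correctly identifies that these live in $\mathrmbf{Tbl}(\mathcal{A})$ rather than $\mathrmbf{Tbl}(\mathcal{D})$ and require the factorization $\mathrmbfit{tup}_{\mathcal{A}}(\mathcal{S}_{1}{+}\mathcal{S}_{2})\cong\mathrmbfit{tup}_{\mathcal{A}}(\mathcal{S}_{1})\times\mathrmbfit{tup}_{\mathcal{A}}(\mathcal{S}_{2})$; the paper does not spell any of this out. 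What the paper's approach buys is brevity: for a proposition that is essentially cataloguing standard lattice-theoretic facts about a powerset, ``well-known'' is defensible, if uninformative.
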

\begin{proof}
Well-known.
\hfill\rule{5pt}{5pt}
\end{proof}
%

%
\newpage
\subsection{Adjoint Flow.}
\label{sub:sub:sec:adj:flow}
%

\subsubsection{Fixed Type Domain.}
\label{sub:sub:sec:adj:flow:A}
%
\comment{There is a 
discussion 
of signed domain indexing
in
\S\;3.2 of 
``The {\ttfamily FOLE} Table'' 
\cite{kent:fole:era:tbl}.}
%

%
\begin{figure}
\begin{center}
{{\begin{tabular}{c}
\begin{picture}(180,14)(-3,0)
\put(20,-9){\begin{picture}(0,0)(0,0)
\setlength{\unitlength}{0.45pt}
\put(40,10){\line(1,0){60}}
\put(40,70){\line(1,0){60}}
\put(100,70){\line(0,-1){60}}
\put(40,40){\oval(60,60)[bl]}
\put(40,40){\oval(60,60)[tl]}
\put(58,50){\makebox(0,0){\footnotesize{{\textit{{project}}}}}}
\put(56,30){\makebox(0,0){\LARGE{${\Leftarrow}$}}}
\put(10,40){\vector(-1,0){30}}
\put(130,40){\vector(-1,0){30}}
\end{picture}}
\put(105,-9){\begin{picture}(0,0)(0,0)
\setlength{\unitlength}{0.45pt}
\put(10,10){\line(1,0){60}}
\put(10,70){\line(1,0){60}}
\put(10,70){\line(0,-1){60}}
\put(70,40){\oval(60,60)[br]}
\put(70,40){\oval(60,60)[tr]}
\put(55,50){\makebox(0,0){\footnotesize{{\textit{{inflate}}}}}}
\put(56,30){\makebox(0,0){\LARGE{${\Rightarrow}$}}}
\put(-20,40){\vector(1,0){30}}
\put(100,40){\vector(1,0){30}}
\end{picture}}
\end{picture}
\end{tabular}}}
\end{center}
\caption{\texttt{FOLE} Adjoint Flow Operators: Base $\mathcal{A}$}
\label{fig:fole:adj:flow:proc}
\end{figure}
\comment{%
Prop.\;\ref{prop:lim:tbl} 
of \S\,\ref{sub:sec:lim:colim:tbl}
defines the limit of a diagram of tables
in the full context 
$\mathrmbf{Tbl}$.
There we used adjoint flow along
signed domain morphisms
(Eqn.\;\ref{fbr:adj:sign:dom:mor})
with 
left adjoint flow defining projection
and
right adjoint flow defining inflation.
However,
for simplicity of explanation of the various relational operations,}
In this section we use
a fixed type domain
%
$\mathcal{A} = {\langle{X,Y,\models_{\mathcal{A}}}\rangle}$.
Here, 
we define adjoint flow 
in the mid-sized table fiber context
$\mathrmbf{Tbl}(\mathcal{A})$.
%
\footnote{
There is a 
discussion 
of type domain indexing
in
\S\;3.4.1 of 
``The {\ttfamily FOLE} Table'' 
\cite{kent:fole:era:tbl}.}
%
Let
$\mathcal{S}'\xrightarrow{\;h\;}\mathcal{S}$
be an $X$-sorted signature morphism
with the arity (index) function $I'\xrightarrow{h}I$,
which satisfies the condition 
$h{\,\cdot\,}s = s'$.
Its tuple function
{\footnotesize{$
\mathrmbfit{tup}_{\mathcal{A}}(\mathcal{S}')
\xleftarrow[h{\,\cdot\,}{(\mbox{-})}]
{\mathrmbfit{tup}_{\mathcal{A}}(h)}
\mathrmbfit{tup}_{\mathcal{A}}(\mathcal{S})
$}\normalsize}
\footnote{Visually,
$({\cdots\,}t_{h(i')}{\,\cdots}{\,\mid\,}i'{\,\in\,}I')
\mapsfrom
({\cdots\,}t_{i}{\,\cdots}{\,\mid\,}i{\,\in\,}I)$.}
%
defines by composition/pullback
a fiber adjunction of tables
\begin{equation}\label{def:fbr:adj:sign:mor}
{{\begin{picture}(120,10)(0,-4)
\put(60,0){\makebox(0,0){\footnotesize{$
\mathrmbf{Tbl}_{\mathcal{A}}(\mathcal{S}')
{\;\xleftarrow
[{\bigl\langle{
{\scriptscriptstyle\sum}_{h}{\;\dashv\;}{h}^{\ast}
\bigr\rangle}}]
{{\bigl\langle{\acute{\mathrmbfit{tbl}}_{\mathcal{A}}(h)
{\;\dashv\;}
\grave{\mathrmbfit{tbl}}_{\mathcal{A}}(h)}\bigr\rangle}}
\;}
\mathrmbf{Tbl}_{\mathcal{A}}(\mathcal{S})
$}}}
\end{picture}}}
\end{equation}
\begin{itemize}
\item[\textbf{project:}]  
The left adjoint (existential quantifier)
$\mathrmbf{Tbl}_{\mathcal{A}}(\mathcal{S}')
\xleftarrow[{\scriptscriptstyle\sum}_{h}]
{\;\acute{\mathrmbfit{tbl}}_{\mathcal{A}}(h)\;}
\mathrmbf{Tbl}_{\mathcal{A}}(\mathcal{S})$
defines projection.
An $\mathcal{A}$-table
$\mathcal{T}={\langle{K,t}\rangle} \in \mathrmbf{Tbl}_{\mathcal{A}}(\mathcal{S})$
is mapped to the $\mathcal{A}$-table
${\scriptstyle\sum}_{h}(\mathcal{T})
= \mathcal{T}' = {\langle{K,t'}\rangle} 
\in \mathrmbf{Tbl}_{\mathcal{A}}(\mathcal{S}')$
with its tuple function
$K \xrightarrow{t'} \mathrmbfit{tup}_{\mathcal{A}}(\mathcal{S}')$
defined by composition,
$t' = t{\,\cdot\,}\mathrmbfit{tup}_{\mathcal{A}}(h)$.
Here we have 
``horizontally abridged'' 
(projected out sub-tuples from)
tuples in $\mathrmbf{List}(Y)$ 
by tuple composition with the index function 
$I'\xrightarrow{\;h\;}I$.
There is an $\mathcal{A}$-table morphism 
(LHS Fig.\;\ref{fig:proj:arity})
{\footnotesize{
{{$\mathcal{T}'={\langle{\mathcal{S}',K,t'}\rangle}\xleftarrow{\langle{h,1_{K}}\rangle}{\langle{\mathcal{S},K,t}\rangle}=\mathcal{T}$.}}
}\normalsize}
%
We say that
$\mathcal{A}$-table $\mathcal{T}'={\scriptstyle\sum}_{h}(\mathcal{T})$ 
is the 
\underline{projection} of $\mathcal{A}$-table $\mathcal{T}$
along 
$X$-sorted signature morphism
$\mathcal{S}'\xrightarrow{\;h\;}\mathcal{S}$.
Being left adjoint in flow 
projection preserves colimits.
\item[\textbf{inflate:}] 
The right adjoint (inverse image)
$\mathrmbf{Tbl}_{\mathcal{A}}(\mathcal{S}')
{\;\xrightarrow[{h}^{\ast}]
{\;\grave{\mathrmbfit{tbl}}_{\mathcal{A}}(h)\;}\;}
\mathrmbf{Tbl}_{\mathcal{A}}(\mathcal{S})$
defines inflation.
An $\mathcal{A}$-table
$\mathcal{T}'={\langle{K',t'}\rangle} \in 
\mathrmbf{Tbl}_{\mathcal{A}}(\mathcal{S}')$
is mapped to the $\mathcal{A}$-table
${h}^{\ast}(\mathcal{T}')
= \mathcal{T} = {\langle{K,t}\rangle} 
\in \mathrmbf{Tbl}_{\mathcal{A}}(\mathcal{S})$,
with its tuple function
$K \xrightarrow{t} \mathrmbfit{tup}_{\mathcal{A}}(\mathcal{S})$
defined by pullback,
$k{\,\cdot\,}t' = t{\,\cdot\,}\mathrmbfit{tup}_{\mathcal{A}}(h)$. 
Here we have ``horizontally inflated'' 
tuples in $\mathrmbf{List}(Y)$ 
by tuple pullback back along the index function 
$I'\xrightarrow{\;h\;}I$.
There is an $\mathcal{A}$-table morphism 
(RHS Fig.\;\ref{fig:proj:arity})
{\footnotesize{
{{$\mathcal{T}'={\langle{\mathcal{S}',K',t'}\rangle}\xleftarrow{\langle{h,
k}\rangle}{\langle{\mathcal{S},K,t}\rangle}=\mathcal{T}$.}}
}\normalsize}
%
We say that
$\mathcal{A}$-table 
$\mathcal{T}={h}^{\ast}(\mathcal{T}')$
is the 
\underline{inflation} of $\mathcal{A}$-table $\mathcal{T}'$
along 
$X$-sorted signature morphism
$\mathcal{S}'\xrightarrow{\;h\;}\mathcal{S}$.
Being right/left adjoint in flow,
%
\footnote{
Inverse image is left adjoint
{\footnotesize{$
\mathrmbf{Tbl}_{\mathcal{A}}(\mathcal{S}')
{\;\xrightarrow
{{\bigl\langle{
{h}^{\ast}{\;\dashv\;}{\scriptscriptstyle\prod}_{h}
\bigr\rangle}}}
\;}
\mathrmbf{Tbl}_{\mathcal{A}}(\mathcal{S})
$}}
to universal quantification.}
inflation preserves limits/colimits.
\end{itemize}
\begin{figure}
\begin{center}
{{\begin{tabular}{c@{\hspace{75pt}}c}
{{\begin{tabular}{c}
\setlength{\unitlength}{0.65pt}
\begin{picture}(120,90)(0,-5)
\put(0,80){\makebox(0,0){\footnotesize{$K$}}}
\put(120,80){\makebox(0,0){\footnotesize{$K$}}}
\put(-10,-10){\makebox(0,0){\footnotesize{$
\underset{\textstyle{\subseteq \mathrmbf{List}(Y)}}
{\mathrmbfit{tup}_{\mathcal{A}}(\mathcal{S}')}
$}}}
\put(130,-10){\makebox(0,0){\footnotesize{$
\underset{\textstyle{\subseteq \mathrmbf{List}(Y)}}
{\mathrmbfit{tup}_{\mathcal{A}}(\mathcal{S})}$}}}
\put(63,95){\makebox(0,0){\scriptsize{$1_{K}$}}}
\put(65,-12){\makebox(0,0){\scriptsize{$\mathrmbfit{tup}_{\mathcal{A}}(h)$}}}
\put(6,40){\makebox(0,0)[l]{\scriptsize{$t'$}}}
\put(128,40){\makebox(0,0)[l]{\scriptsize{$t$}}}
\put(0,65){\vector(0,-1){50}}
\put(120,65){\vector(0,-1){50}}
\put(100,80){\vector(-1,0){80}}
\put(90,0){\vector(-1,0){56}}
\put(-10,-44){\makebox(0,0){\normalsize{$\underset{\textstyle{
\mathrmbf{Tbl}_{\mathcal{A}}(\mathcal{S}')
}}{\underbrace{\rule{50pt}{0pt}}}$}}}
\put(60,45){\makebox(0,0){\huge{
$\overset{\textit{\scriptsize{project}}}{\Leftarrow}$}}}
\put(180,40){\makebox(0,0){\Large{$\rightleftarrows$}}}
\end{picture}
\end{tabular}}}
&
{{\begin{tabular}{c}
\setlength{\unitlength}{0.65pt}
\begin{picture}(120,90)(0,-5)
\put(0,80){\makebox(0,0){\footnotesize{$K'$}}}
\put(120,80){\makebox(0,0){\footnotesize{$K$}}}
\put(-10,-10){\makebox(0,0){\footnotesize{$
\underset{\textstyle{\subseteq \mathrmbf{List}(Y)}}
{\mathrmbfit{tup}_{\mathcal{A}}(\mathcal{S}')}
$}}}
\put(130,-10){\makebox(0,0){\footnotesize{$
\underset{\textstyle{\subseteq \mathrmbf{List}(Y)}}
{\mathrmbfit{tup}_{\mathcal{A}}(\mathcal{S})}$}}}
\put(63,90){\makebox(0,0){\scriptsize{$k$}}}
\put(65,-12){\makebox(0,0){\scriptsize{$
\mathrmbfit{tup}_{\mathcal{A}}(h)$}}}
\put(-6,40){\makebox(0,0)[r]{\scriptsize{$t'$}}}
\put(128,40){\makebox(0,0)[l]{\scriptsize{$t$}}}
\put(0,65){\vector(0,-1){50}}
\put(120,65){\vector(0,-1){50}}
\put(100,80){\vector(-1,0){80}}
\put(90,0){\vector(-1,0){56}}
%
\qbezier(40,30)(30,30)(20,30)
\qbezier(40,30)(40,20)(40,10)
\put(120,-44){\makebox(0,0){\normalsize{$\underset{\textstyle{
\mathrmbf{Tbl}_{\mathcal{A}}(\mathcal{S})}}{\underbrace{\rule{50pt}{0pt}}}$}}}
\put(60,45){\makebox(0,0){\huge{
$\overset{\textit{\scriptsize{inflate}}}{\Rightarrow}$}}}
\end{picture}
\end{tabular}}}
%
\\&\\
\\
\multicolumn{2}{c}{
{\footnotesize{$\mathcal{S}'\xrightarrow{\;h\;}\mathcal{S}$}}}
\\
${\scriptstyle\sum}_{h}(\mathcal{T})
\xleftarrow{\langle{h,1_{K}}\rangle}
\mathcal{T}$
&
{\footnotesize{
$\mathcal{T}'\xleftarrow{\langle{h,k}\rangle}
{h}^{\ast}(\mathcal{T}')$
}\normalsize}
\end{tabular}}}
\end{center}
\caption{
\texttt{FOLE} 
Adjoint Flow in $\mathrmbf{Tbl}(\mathcal{A})$
}
\label{fig:proj:arity}
\end{figure}
\begin{figure}
\begin{center}
{{\begin{tabular}{c}
\setlength{\unitlength}{0.5pt}
\begin{picture}(280,140)(-8,-80)
\put(10,0){\makebox(0,0){\footnotesize{${\mathrmbf{Tbl}_{\mathcal{A}}(\mathcal{S}')}$}}}
\put(230,0){\makebox(0,0){\footnotesize{${\mathrmbf{Tbl}_{\mathcal{A}}(\mathcal{S})}$}}}
\put(120,50){\makebox(0,0){\scriptsize{$\textit{project}$}}}
\put(120,30){\makebox(0,0){\scriptsize{$\acute{\mathrmbfit{tbl}}_{\mathcal{A}}(h)$}}}
\put(120,-30){\makebox(0,0){\scriptsize{$\grave{\mathrmbfit{tbl}}_{\mathcal{A}}(h)$}}}
\put(120,-50){\makebox(0,0){\scriptsize{$\textit{inflate}$}}}
\put(120,0){\makebox(0,0){\footnotesize{${\;\dashv\;}$}}}
\put(180,14){\vector(-1,0){120}}
\put(60,-14){\vector(1,0){120}}
%
\put(120,-85){\makebox(0,0){\normalsize{${
\underset{\textstyle{\mathrmbf{Tbl}(\mathcal{A})}}{\underbrace{\hspace{160pt}}}}$}}}
%
\end{picture}
\end{tabular}}}
\end{center}
\caption{Adjoint Flow Factor}
\label{fig:adj:flo:A}
\end{figure}
\comment
{
\begin{center}
\begin{tabular}{c}
{{${\scriptstyle\sum}_{h}(\mathcal{T})
\xleftarrow{\langle{h,1_{K}}\rangle}
\mathcal{T}\;\;\;$}}
\\
in $\mathrmbf{Tbl}(\mathcal{A})$
\end{tabular}
\end{center}

\begin{equation}\label{def:tbl:cxt}
{{\begin{picture}(120,10)(0,-4)
\put(60,0){\makebox(0,0){\footnotesize{$
\underset{\textstyle{\text{in}\;
\mathrmbf{Tbl}_{\mathcal{A}}({\scriptstyle\sum}_{f}(\mathcal{S}'))
}}
{\mathcal{T}'\xleftarrow{\;1\;}{\scriptstyle\sum}_{h}(\mathcal{T})}
{\;\;\;\;\;\;\;\;\rightleftarrows\;\;\;\;\;\;\;\;}
\underset{\textstyle{\text{in}\;\mathrmbf{Tbl}_{\mathcal{A}}(\mathcal{S})}}
{{h}^{\ast}(\mathcal{T}')\xleftarrow{\;k\,}\mathcal{T}}
$}}}
\end{picture}}}
\end{equation}}
%


%
\begin{flushleft}
{\fbox{\fbox{\footnotesize{\begin{minipage}{345pt}
{\underline{\textsf{How does projection work?}}}
When the index function is an inclusion 
$I'\xhookrightarrow{\,h\,}I$,
the projection ${\scriptstyle\sum}_{h}(\mathcal{T})$
of an $\mathcal{A}$-table
$\mathcal{T}={\langle{K,t}\rangle} 
\in \mathrmbf{Tbl}_{\mathcal{A}}(\mathcal{S})$
consists of the sub-tuples of $\mathcal{T}$ indexed by $I'$.
Hence,
projection abridges the horizontal aspect of tables, 
ending with a subset of columns.
%
In particular,
for any $\mathcal{A}$-table
$\mathcal{T}={\langle{\mathcal{S},K,t}\rangle}$,
an index $i \in I$
defines an arity (index) function $1\xrightarrow{i}I$,
thus forming an indexing $X$-sorted signature morphism 
${\langle{1,x}\rangle}\xrightarrow{\;i\;}\mathcal{S}$
from signature (sort) $1\xrightarrow{\,x\,}X$
satisfying the naturality condition 
$i{\,\cdot\,}s = s_{i} = x$.
Projection along $1\xrightarrow{i}I$ 
defines an $\mathcal{A}$-table morphism 
{\footnotesize{
{{$\mathcal{T}_{i}={\langle{1,x,K,t_{i}}\rangle}\xleftarrow{\langle{i,1_{K}}\rangle}{\langle{\mathcal{S},K,t}\rangle}=\mathcal{T}$}}
}\normalsize}
satisfying the naturality condition $t_{i}=t{\;\cdot\;}\mathrmbfit{tup}_{\mathcal{A}}(i)$,
which states that ``$t_{i}$ is the $i^{\text{th}}$ projection of $t$''.
%
\footnote{The $\mathcal{A}$-table
$\mathcal{T}_{i}={\langle{1,x,K,t_{i}}\rangle}$,
essentially the $i^{\text{th}}$-column of $\mathcal{T}$,
consists of 
signature ${\langle{1,x,X}\rangle}$
(sort $x = s_{i} \in X$),
the same set $K$ of keys, and
the tuple (data value) function 
$K\xrightarrow{t_{i}}\mathrmbfit{tup}_{\mathcal{A}}(1,s)=
\mathrmbfit{ext}_{\mathrmbf{List}(\mathcal{A})}(1,s)\cong
\mathrmbfit{ext}_{\mathcal{A}}(x)=\mathcal{A}_{x}\subseteq{Y}$.}
%
%
%
\newline\newline
{\underline{\textsf{How does inflation work?}}}
When the index function is an inclusion 
$I'\xhookrightarrow{\,h\,}I$,
an $\mathcal{A}$-relation
$\mathcal{R}'={\langle{R',i'}\rangle} 
\in 
\mathrmbf{Rel}_{\mathcal{A}}(\mathcal{S}')
{\;\subseteq\;}
\mathrmbf{Tbl}_{\mathcal{A}}(\mathcal{S}')$
with
tuple subset
$R'{\,\subseteq\,}
\mathrmbfit{tup}_{\mathcal{A}}(\mathcal{S}')$
is mapped (isomorphically) to the inflation relation
${h}^{\ast}(\mathcal{R}')
\in 
\mathrmbf{Rel}_{\mathcal{A}}(\mathcal{S})
{\;\subseteq\;}
\mathrmbf{Tbl}_{\mathcal{A}}(\mathcal{S})$
with tuple subset
$R'{\,\times\,}
\mathrmbfit{tup}_{\mathcal{A}}(I'',s'')
{\,\subseteq\,}
\mathrmbfit{tup}_{\mathcal{A}}(\mathcal{S})$,
where
$I''=I{{-}}I'$
is the index set complement
and
$s'' : I'' \rightarrow X$
is the restriction of signature function 
$s = [s',s''] : I=I'{+}I'' \rightarrow X$
to this complement.
The target tuple set factors as
$\mathrmbfit{tup}_{\mathcal{A}}(\mathcal{S})
=\mathrmbfit{tup}_{\mathcal{A}}(\mathcal{S}')
{\,\times\,}\mathrmbfit{tup}_{\mathcal{A}}(\mathcal{S}'')$.
The tuple set
$\mathrmbfit{tup}_{\mathcal{A}}(\mathcal{S}'')$
is the inflation referred to in the name.
Inflation enlarges the horizontal aspect of tables.
\end{minipage}}}}}
\end{flushleft}
\begin{aside}
Although we usually think of projection and inflation
along an injective index function,
here is an example along a surjective index function.
The copower $X$-signature $\mathcal{S}{+}\mathcal{S}$ 
in the opspan
{\footnotesize{$\mathcal{S}\xrightarrow{i_{1}\,}
{\mathcal{S}{+}\mathcal{S}}
\xleftarrow{\;i_{2}}\mathcal{S}$}\normalsize}
has inclusion index functions
{\footnotesize{$I\xhookrightarrow{i_{1}\,} 
I{+}I
\xhookleftarrow{\;i_{2}}I$.}\normalsize}
There is an $X$-signature morphism
$\mathcal{S}{+}\mathcal{S}
\xrightarrow
{\;\triangledown\;}\mathcal{S}$
with (surjective) index function
{\footnotesize{$I{+}I\xrightarrow{\;\triangledown\;}I$}\normalsize}
that erases the origin:
$i_{1}{\;\cdot\;}\triangledown = 1_{I} = i_{2}{\;\cdot\;}\triangledown$.
Hence,
the projection
{\footnotesize{$\mathrmbf{Tbl}_{\mathcal{A}}(\mathcal{S}{\times}\mathcal{S})
{\;\xleftarrow
{\acute{\mathrmbfit{tbl}}_{\mathcal{A}}(\triangledown)}
\;}
\mathrmbf{Tbl}_{\mathcal{A}}(\mathcal{S})$}}
is actually a ``creation'' or a ``duplication'',
and
the inflation
{\footnotesize{$\mathrmbf{Tbl}_{\mathcal{A}}(\mathcal{S}{\times}\mathcal{S})
{\;\xrightarrow
{\grave{\mathrmbfit{tbl}}_{\mathcal{A}}(\triangledown)}
\;}
\mathrmbf{Tbl}_{\mathcal{A}}(\mathcal{S})$}}
is actually an ``erasure''.
\end{aside}
\begin{proposition}\label{project:inflate:preserve}
Projection preserves union $\vee$.
Projection is decreasing on intersection $\wedge$.
Inflation preserves union $\vee$ and intersection $\wedge$.
\end{proposition}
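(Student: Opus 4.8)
The plan is to reduce all three assertions to elementary facts about the direct and inverse image operators of a single function, namely the tuple function $\tau := \mathrmbfit{tup}_{\mathcal{A}}(h) : \mathrmbfit{tup}_{\mathcal{A}}(\mathcal{S}) \to \mathrmbfit{tup}_{\mathcal{A}}(\mathcal{S}')$ induced by the $X$-sorted signature morphism $\mathcal{S}' \xrightarrow{h} \mathcal{S}$. The first step is to pin down how projection and inflation act after passing to image relations via the reflection $\langle \mathrmbfit{im}_{\mathcal{D}} \dashv \mathrmbfit{inc}_{\mathcal{D}}\rangle$ of Proposition~\ref{tbl:rel:refl}. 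Since $\mathrmbfit{im}_{\mathcal{D}}$ carries meet and join to set-theoretic intersection $\cap$ and union $\cup$ (\S\,\ref{sub:sub:sec:boole}), it suffices to compare image relations. For a table $\mathcal{T} = \langle K, t\rangle$, the definition of projection by composition, $\acute{\mathrmbfit{tbl}}_{\mathcal{A}}(h)(\mathcal{T}) = \langle K, t \cdot \tau\rangle$, gives $\mathrmbfit{im}({\textstyle\sum}_h \mathcal{T}) = (t \cdot \tau)(K) = \tau(\mathrmbfit{im}\,\mathcal{T})$, so projection becomes the direct image $\tau(\text{-})$. Dually, the definition of inflation by pullback yields $\mathrmbfit{im}(h^{\ast}\mathcal{T}') = \tau^{-1}(\mathrmbfit{im}\,\mathcal{T}')$, so inflation becomes the inverse image $\tau^{-1}(\text{-})$; this matches the product-with-$\mathrmbfit{tup}_{\mathcal{A}}(\mathcal{S}'')$ description recorded in the ``How does inflation work?'' box.

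Granting this reduction, the three claims become the standard set-theoretic identities for images. Projection preserves union because direct image is cocontinuous on subsets: $\tau(A \cup B) = \tau(A) \cup \tau(B)$; applied to $A = \mathrmbfit{im}\,\mathcal{T}$ and $B = \mathrmbfit{im}\,\mathcal{T}'$ this gives $\mathrmbfit{im}({\textstyle\sum}_h(\mathcal{T} \vee \mathcal{T}')) = \mathrmbfit{im}({\textstyle\sum}_h \mathcal{T}) \cup \mathrmbfit{im}({\textstyle\sum}_h \mathcal{T}')$. Projection is only decreasing on intersection because direct image merely sub-commutes: $\tau(A \cap B) \subseteq \tau(A) \cap \tau(B)$, with equality iff $\tau$ is injective on $A \cup B$; this yields the inclusion $\mathrmbfit{im}({\textstyle\sum}_h(\mathcal{T} \wedge \mathcal{T}')) \subseteq \mathrmbfit{im}({\textstyle\sum}_h \mathcal{T}) \cap \mathrmbfit{im}({\textstyle\sum}_h \mathcal{T}')$. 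Finally inflation preserves both union and intersection because inverse image is a Boolean homomorphism: $\tau^{-1}(A \cup B) = \tau^{-1}(A) \cup \tau^{-1}(B)$ and $\tau^{-1}(A \cap B) = \tau^{-1}(A) \cap \tau^{-1}(B)$.

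As an abstract cross-check, and to connect with the adjoint-flow viewpoint of \S\,\ref{sub:sub:sec:adj:flow:A}, I would note that each ``preserves'' clause is forced by an adjunction. Projection $\acute{\mathrmbfit{tbl}}_{\mathcal{A}}(h) = {\textstyle\sum}_h$ is a left adjoint, hence preserves colimits, in particular the coproduct union. Inflation $h^{\ast}$ is right adjoint to ${\textstyle\sum}_h$, hence preserves limits, in particular the product intersection; and by the footnoted adjunction $\langle h^{\ast} \dashv {\textstyle\prod}_h\rangle$ it is also a left adjoint, hence preserves colimits, in particular the coproduct union. The one assertion not forced by adjointness is that projection is merely decreasing on intersection: since ${\textstyle\sum}_h$ is in general not a right adjoint, it need not preserve the product, and the best one can say is the comparison inclusion above.

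I expect the only real obstacle to be the interface in the first step, i.e. verifying cleanly that $\mathrmbfit{im}_{\mathcal{D}}$ intertwines projection with direct image and inflation with inverse image; once these two identities are in hand, the remaining content is the trio of routine image lemmas. Care is needed to state precisely in what sense the operators ``preserve'' the Booleans: throughout I would read this as equality (resp. inclusion) of image relations under the order $\subseteq$ induced on $\mathrmbf{Tbl}(\mathcal{D})$ by $\mathrmbfit{im}_{\mathcal{D}}$, so that ``decreasing'' is meaningful and the asymmetry between union and intersection for projection is exactly the familiar failure of direct image to commute with $\cap$.
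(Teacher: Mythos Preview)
Your proposal is correct. The paper's own proof is the terse two-line categorical argument that you give as your ``abstract cross-check'': projection is a left adjoint hence co-continuous (so preserves $\vee$), inflation is both a left and a right adjoint hence continuous and co-continuous (so preserves both $\wedge$ and $\vee$), and monotonicity of projection gives the comparison inclusion for $\wedge$. Your first route---reducing via the reflection $\mathrmbfit{im}_{\mathcal{D}}$ to the elementary image/inverse-image identities for the single function $\tau = \mathrmbfit{tup}_{\mathcal{A}}(h)$---is a genuinely more concrete argument that the paper does not spell out. It buys you an explicit witness for the ``decreasing on intersection'' clause (the familiar $\tau(A\cap B)\subseteq \tau(A)\cap\tau(B)$) and a clear explanation of why equality fails in general, whereas the paper's ``preserves order'' remark leaves that step implicit. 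The trade-off is that your concrete route requires the interface lemma you flag (that $\mathrmbfit{im}$ intertwines projection with direct image and inflation with inverse image), which is true but not stated as such in the paper; the adjoint argument sidesteps this entirely.
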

\begin{proof}
Projection is co-continuous and preserves order.
Inflation is continuous, co-continuous and preserves order.
\hfill\rule{5pt}{5pt}
\end{proof}
%

\paragraph{\textbf{Application.}}
%
The fiber adjunction of tables
{\footnotesize{$\mathrmbf{Tbl}_{\mathcal{A}}(\mathcal{S}')
{\;\xleftarrow
{{\bigl\langle{\acute{\mathrmbfit{tbl}}_{\mathcal{A}}(h)
{\;\dashv\;}
\grave{\mathrmbfit{tbl}}_{\mathcal{A}}(h)}\bigr\rangle}}\;}
\mathrmbf{Tbl}_{\mathcal{A}}(\mathcal{S})$}}
for an $X$-sorted signature morphism
$\mathcal{S}'\xrightarrow{\;h\;}\mathcal{S}$
is used as follows.
\begin{itemize}
\item 
To define \underline{inflations} 
\newline\mbox{}\hfill
{\footnotesize{$
\mathrmbf{Tbl}_{\mathcal{A}}(\mathcal{S}_{1})
{\;\xrightarrow{\grave{\mathrmbfit{tbl}}_{\mathcal{A}}(h_{1})}\;}
\mathrmbf{Tbl}_{\mathcal{A}}(\mathcal{S})
{\;\xleftarrow{\grave{\mathrmbfit{tbl}}_{\mathcal{A}}(h_{2})}\;}
\mathrmbf{Tbl}_{\mathcal{A}}(\mathcal{S}_{2})$}}
\hfill\mbox{}\newline
from two peripheral signatures $\mathcal{S}_{1}$ and $\mathcal{S}_{2}$
to a central signature $\mathcal{S}$,
you need an opspan of
$\mathcal{S}_{1}
\rightarrow
\mathcal{S}
\leftarrow
\mathcal{S}_{2}$
of
$X$-sorted signature morphisms.
One way to get this is to assume
an $X$-sorted signature span 
$\mathcal{S}_{1}\xleftarrow{h_{1}}\mathcal{S}\xrightarrow{h_{2}}\mathcal{S}_{2}$
to define a
coproduct $X$-signature opspan
{\footnotesize{$\mathcal{S}_{1}\xrightarrow{\iota_{1}\,} 
{\mathcal{S}_{1}{+_{\mathcal{S}}}\mathcal{S}_{2}}
\xleftarrow{\;\iota_{2}}\mathcal{S}_{2}.$}\normalsize}
This is used by natural join 
in \S\,\ref{sub:sub:sec:nat:join};
hence, 
it is also used 
by Cartesian product, selection and select join 
there.
Furthermore,
it is used 
by semi-join
in \S\,\ref{sub:sub:sec:semi:join}, 
by anti-join in \S\,\ref{sub:sub:sec:anti:join}, 
by outer-join 
in \S\,\ref{sub:sub:sec:out:join},
and by division
in \S\,\ref{sub:sub:sec:div}.
\newline
\item 
To define \underline{projections} 
\newline\mbox{}\hfill
{\footnotesize{$
\mathrmbf{Tbl}_{\mathcal{A}}(\mathcal{S}_{1})
{\;\xrightarrow{\acute{\mathrmbfit{tbl}}_{\mathcal{A}}(h_{1})}\;}
\mathrmbf{Tbl}_{\mathcal{A}}(\mathcal{S})
{\;\xleftarrow{\acute{\mathrmbfit{tbl}}_{\mathcal{A}}(h_{2})}\;}
\mathrmbf{Tbl}_{\mathcal{A}}(\mathcal{S}_{2})$}}
\hfill\mbox{}\newline
from two peripheral signatures $\mathcal{S}_{1}$ and $\mathcal{S}_{2}$
to a central signature $\mathcal{S}$,
you need a span of
$\mathcal{S}_{1}
\xleftarrow{\;h_{1}\;}
\mathcal{S}
\xrightarrow{\;h_{2}\;}
\mathcal{S}_{2}$
of
$X$-sorted signature morphisms.
This is used by project-join
in 
\S\,\ref{sub:sub:sec:co-core}.
\end{itemize}
%

%
\newpage
\subsubsection{Fixed Signature.}
\label{sub:sub:sec:adj:flow:S}
%
Part of this paper, 
the adjoint flow in \S\,\ref{sub:sub:sec:adj:flow:A}
and the classic relational operations of \S\,\ref{sub:sec:comp:ops:type:dom},
deals with traditional relation algebra.
This assumes that we can manipulate the header part of table,
but that the data part is fixed.
In the \texttt{FOLE} representation 
the header part is represented by a signature, and
the data part is represented by a type domain.
Hence,
using \texttt{FOLE} to represent traditional relation algebra,
we fix the type domain, 
and allow the signature to vary.
However,
in the \texttt{FOLE} representation of relation databases,
we can manipulate both the header part \underline{and} the data part 
(the dual approach to relational algebra).
This section and \S\,\ref{sub:sec:comp:ops:sign}
explain this dual approach.
%
%
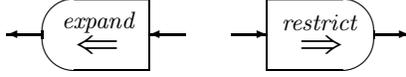
\begin{figure}
\begin{center}
{{\begin{tabular}{c}
\begin{picture}(180,14)(-3,0)
\put(20,-9){\begin{picture}(0,0)(0,0)
\setlength{\unitlength}{0.45pt}
\put(40,10){\line(1,0){60}}
\put(40,70){\line(1,0){60}}
\put(100,70){\line(0,-1){60}}
\put(40,40){\oval(60,60)[bl]}
\put(40,40){\oval(60,60)[tl]}
\put(58,50){\makebox(0,0){\footnotesize{{\textit{{expand}}}}}}
\put(56,30){\makebox(0,0){\LARGE{${\Leftarrow}$}}}
\put(10,40){\vector(-1,0){30}}
\put(130,40){\vector(-1,0){30}}
\end{picture}}
\put(105,-9){\begin{picture}(0,0)(0,0)
\setlength{\unitlength}{0.45pt}
\put(10,10){\line(1,0){60}}
\put(10,70){\line(1,0){60}}
\put(10,70){\line(0,-1){60}}
\put(70,40){\oval(60,60)[br]}
\put(70,40){\oval(60,60)[tr]}
\put(55,50){\makebox(0,0){\footnotesize{{\textit{{restrict}}}}}}
\put(56,30){\makebox(0,0){\LARGE{${\Rightarrow}$}}}
\put(-20,40){\vector(1,0){30}}
\put(100,40){\vector(1,0){30}}
\end{picture}}
\end{picture}
\end{tabular}}}
\end{center}
\caption{\texttt{FOLE} Adjoint Flow Operators: Base $\mathcal{S}$}
\label{fig:fole:adj:flow:proc:sign}
\end{figure}
%
\comment{
Prop.\;\ref{prop:lim:tbl} 
of \S\,\ref{sub:sec:lim:colim:tbl}
defines the limit of a diagram of tables
in the full context 
$\mathrmbf{Tbl}$.
There we used adjoint flow along
signed domain morphisms
(Disp.\;\ref{fbr:adj:sign:dom:mor})
with 
left adjoint flow defining projection
and
right adjoint flow defining inflation.
However,
for simplicity of explanation of the various relational operations,
}
%

In this section we use
a fixed signature
$\mathcal{S} = {\langle{I,x,X}\rangle}$.
Here, 
we define adjoint flow 
in the mid-sized table fiber context
$\mathrmbf{Tbl}(\mathcal{S})$.
%
\footnote{There is a 
discussion of signature indexing in \S\;3.3.1 of 
``The {\ttfamily FOLE} Table'' 
\cite{kent:fole:era:tbl}.}
%
Let
$\mathcal{A}'
\xrightarrow{\;g\;} 
\mathcal{A}$
be an $X$-sorted type domain morphism
%
\footnote{This is described as
an $X$-sorted type domain morphism
$\mathcal{A}'
\xrightleftharpoons{{\langle{\mathrmit{1}_{X},g}\rangle}} 
\mathcal{A}$
in \cite{kent:fole:era:tbl}.}
%
with data value function $Y'\xleftarrow{\,g\,}Y$
satisfying the condition 
$\mathrmbfit{ext}_{\mathcal{A}'}{\;\cdot\;}g^{-1}
= \mathrmbfit{ext}_{\mathcal{A}}$;
or that
{{$g^{-1}(A'_{x}) = A_{x}$ for all $x \in X$}}.
This implies that
$A'_{x} \supseteq {\wp}g(A_{x})$ for all $x \in X$.
%
\footnote{Hence,
for an injective data value function $Y'\xhookleftarrow{\,g\,}Y$,
we have the inclusion
$A'_{x} \supseteq A_{x}$ for all $x \in X$.}
%
\comment{
\begin{itemize}
\item 
$\mathcal{A} = \{ A_{x} \subseteq Y \mid x \in X \}$
\item 
$\mathcal{A}' = \{ A'_{x} \subseteq Y' \mid x \in X \}$
\item 
$\mathcal{A}'\xleftarrow{\;g\;}\mathcal{A}$
satisfies
$
g^{-1}(A'_{x}) = A_{x}
$
for all
$x \in X$.
%
\end{itemize}
}
%
\footnote{Hence,
\comment{
$\mathcal{A}' = {\langle{X,Y',\models_{\mathcal{A}'}}\rangle} 
\xrightleftharpoons{{\langle{1_{X},g}\rangle}} 
{\langle{X,Y,\models_{\mathcal{A}}}\rangle} = \mathcal{A}$
is an infomorphism
satisfying the condition
$g(y){\;\models_{\mathcal{A}'}\;}x$
\underline{iff}
$y{\;\models_{\mathcal{A}}\;}x$
for any sort $x{\,\in\,}X$ and 
data value $y{\,\in\,}Y$.
This means that}
$\mathcal{A} = g^{-1}(\mathcal{A}')$,
as discussed by the \textbf{Yin} definition in 
\S\,\ref{sub:sec:tbl:comp}.}
%
Its tuple function 
\newline\mbox{}\hfill
{\footnotesize{$
\mathrmbfit{tup}_{\mathcal{S}}(\mathcal{A}')
\xleftarrow[{(\mbox{-})}{\,\cdot\,}g]
{\;\mathrmbfit{tup}_{\mathcal{S}}(g)\;}
\mathrmbfit{tup}_{\mathcal{S}}(\mathcal{A})$}\normalsize}
\comment{Visually,
$({\cdots\,}g(t_{i}){\,\cdots}{\,\mid\,}i{\,\in\,}I)
\mapsfrom
({\cdots\,}t_{i}{\,\cdots}{\,\mid\,}i{\,\in\,}I)$.}
\hfill\mbox{}\newline
defines by composition/pullback 
\comment{a table passage
$\mathrmbf{Tbl}_{\mathcal{S}}(\mathcal{A}')
\xrightarrow[{g}^{\ast}]
{\;\grave{\mathrmbfit{tbl}}_{\mathcal{S}}(g)\;}
\mathrmbf{Tbl}_{\mathcal{S}}(\mathcal{A})$,
which is the right adjoint of}
a fiber adjunction of tables
\begin{equation}\label{def:fbr:adj:type:dom:mor}
{{\begin{picture}(120,10)(0,-4)
\put(60,0){\makebox(0,0){\footnotesize{$
\mathrmbf{Tbl}_{\mathcal{S}}(\mathcal{A}')
{\;\xleftarrow
[{\bigl\langle{
{\scriptscriptstyle\sum}_{g}{\;\dashv\;}{g}^{\ast}
\bigr\rangle}}]
{{\bigl\langle{\acute{\mathrmbfit{tbl}}_{\mathcal{S}}(g)
{\;\dashv\;}
\grave{\mathrmbfit{tbl}}_{\mathcal{S}}(g)}\bigr\rangle}}
\;}
\mathrmbf{Tbl}_{\mathcal{S}}(\mathcal{A})
$.}}}
\end{picture}}}
\end{equation}
\begin{itemize}
\item[\textbf{restrict:}]  
The right adjoint 
$\mathrmbf{Tbl}_{\mathcal{S}}(\mathcal{A}')
\xrightarrow[{g}^{\ast}]
{\;\grave{\mathrmbfit{tbl}}_{\mathcal{S}}(g)\;}
\mathrmbf{Tbl}_{\mathcal{S}}(\mathcal{A})$
defines restriction.
An $\mathcal{S}$-table
$\mathcal{T}'={\langle{K',t'}\rangle} \in \mathrmbf{Tbl}_{\mathcal{S}}(\mathcal{A}')$
is mapped to the $\mathcal{S}$-table
${g}^{\ast}(\mathcal{T}')
=\mathcal{T}
={\langle{K,t}\rangle} 
\in \mathrmbf{Tbl}_{\mathcal{S}}(\mathcal{A})$,
with its tuple function
$K\xrightarrow{t}\mathrmbfit{tup}_{\mathcal{S}}(\mathcal{A})$
defined by pullback,
$k{\;\cdot\;}t' = t{\;\cdot\;}\mathrmbfit{tup}_{\mathcal{S}}(g)$. 
Here we have ``vertically abridged'' 
tuples in $\mathrmbf{List}(Y')$ 
by tuple pullback with the data value function 
$Y'\xleftarrow{\,g\,}Y$.
There is an $\mathcal{S}$-table morphism 
(RHS Fig.\;\ref{fig:restr:expan:sign})
{\mbox{\footnotesize{
$\mathcal{T}'={\langle{\mathcal{A}',K',t'}\rangle}
\xleftarrow{{\langle{g,k}\rangle}}
{\langle{\mathcal{A},K,t}\rangle}=\mathcal{T}$.
}\normalsize}}
%
We say that
$\mathcal{S}$-table $\mathcal{T}={g}^{\ast}(\mathcal{T}')$ 
is the 
\underline{restriction} of $\mathcal{S}$-table $\mathcal{T}'$
along $X$-sorted type domain morphism
$\mathcal{A}'
\xrightarrow{\;g\;} 
\mathcal{A}$.
%
%
\item[\textbf{expand:}]  
The left adjoint 
$\mathrmbf{Tbl}_{\mathcal{S}}(\mathcal{A}')
{\;\xleftarrow{
{\;\acute{\mathrmbfit{tbl}}_{\mathcal{S}}(g)\;}
}\;}
\mathrmbf{Tbl}_{\mathcal{S}}(\mathcal{A})$
defines expansion.
An $\mathcal{S}$-table
$\mathcal{T}={\langle{K,t}\rangle} \in 
\mathrmbf{Tbl}_{\mathcal{S}}(\mathcal{A})$
is mapped to the $\mathcal{S}$-table
${\scriptstyle\sum}_{g}(\mathcal{T})
= \mathcal{T}' = {\langle{K,t'}\rangle} 
\in \mathrmbf{Tbl}_{\mathcal{S}}(\mathcal{A}')$,
with its tuple function 
$K\xrightarrow{t'}\mathrmbfit{tup}_{\mathcal{S}}(\mathcal{A}')$
defined by composition,
$t' = t{\,\cdot\,}\mathrmbfit{tup}_{\mathcal{S}}(g)$.
Here we have ``vertically merged'' 
tuples in $\mathrmbf{List}(Y)$ 
with
tuples in $\mathrmbf{List}(Y')$ 
by tuple composition with the data value function 
$Y'\xleftarrow{\;g\;}Y$.
There is an $\mathcal{S}$-table morphism 
(LHS Fig.\;\ref{fig:restr:expan:sign})
{\footnotesize{$
\mathcal{T}'={\langle{\mathcal{A}',K,t'}\rangle}
\xleftarrow{\langle{1_{K},g}\rangle}
{\langle{\mathcal{A},K,t}\rangle}=\mathcal{T}
$.}\normalsize}
%
We say that
$\mathcal{S}$-table 
$\mathcal{T}'={\scriptstyle\sum}_{g}(\mathcal{T})$ 
is the 
\underline{expansion} of $\mathcal{S}$-table $\mathcal{T}$
along $X'$-sorted type domain morphism
$\mathcal{A}'\xrightarrow{\;g\;}\mathcal{A}$.
\end{itemize}
\begin{figure}
\begin{center}
{{\begin{tabular}{c@{\hspace{75pt}}c}
{{\begin{tabular}{c}
\setlength{\unitlength}{0.65pt}
\begin{picture}(120,90)(0,-5)
\put(0,80){\makebox(0,0){\footnotesize{$K$}}}
\put(120,80){\makebox(0,0){\footnotesize{$K$}}}
\put(0,-10){\makebox(0,0){\footnotesize{$
\underset{\textstyle{\subseteq \mathrmbf{List}(Y')}}
{\mathrmbfit{tup}_{\mathcal{S}}(\mathcal{A}')}$}}}
\put(140,-10){\makebox(0,0){\footnotesize{$
\underset{\textstyle{\subseteq \mathrmbf{List}(Y)}}
{\mathrmbfit{tup}_{\mathcal{S}}(\mathcal{A})}$}}}
\put(63,95){\makebox(0,0){\scriptsize{$1_{K}$}}}
\put(70,-12){\makebox(0,0){\scriptsize{$\mathrmbfit{tup}_{\mathcal{S}}(g)$}}}
\put(-6,40){\makebox(0,0)[r]{\scriptsize{$t'$}}}
\put(128,40){\makebox(0,0)[l]{\scriptsize{$t$}}}
\put(0,65){\vector(0,-1){50}}
\put(120,65){\vector(0,-1){50}}
\put(100,80){\vector(-1,0){80}}
\put(90,0){\vector(-1,0){50}}
\put(0,-44){\makebox(0,0){\normalsize{$\underset{\textstyle{
\mathrmbf{Tbl}_{\mathcal{S}}(\mathcal{A}')
}}{\underbrace{\rule{50pt}{0pt}}}$}}}
\put(60,45){\makebox(0,0){\huge{
$\overset{\textit{\scriptsize{expand}}}{\Leftarrow}$}}}
\put(180,40){\makebox(0,0){\Large{$\rightleftarrows$}}}
\end{picture}
\end{tabular}}}
&
{{\begin{tabular}{c}
\setlength{\unitlength}{0.65pt}
\begin{picture}(120,90)(0,-5)
\put(0,80){\makebox(0,0){\footnotesize{$K'$}}}
\put(120,80){\makebox(0,0){\footnotesize{$K$}}}
\put(0,-10){\makebox(0,0){\footnotesize{$
\underset{\textstyle{\subseteq \mathrmbf{List}(Y')}}
{\mathrmbfit{tup}_{\mathcal{S}}(\mathcal{A}')}$}}}
\put(140,-10){\makebox(0,0){\footnotesize{$
\underset{\textstyle{\subseteq \mathrmbf{List}(Y)}}
{\mathrmbfit{tup}_{\mathcal{S}}(\mathcal{A})}$}}}
\put(63,90){\makebox(0,0){\scriptsize{$k$}}}
\put(70,-12){\makebox(0,0){\scriptsize{$
\mathrmbfit{tup}_{\mathcal{S}}(g)$}}}
\put(-6,40){\makebox(0,0)[r]{\scriptsize{$t'$}}}
\put(128,40){\makebox(0,0)[l]{\scriptsize{$t$}}}
\put(0,65){\vector(0,-1){50}}
\put(120,65){\vector(0,-1){50}}
\put(100,80){\vector(-1,0){80}}
\put(90,0){\vector(-1,0){50}}
\qbezier(40,30)(30,30)(20,30)
\qbezier(40,30)(40,20)(40,10)
\put(120,-44){\makebox(0,0){\normalsize{$\underset{\textstyle{
\mathrmbf{Tbl}_{\mathcal{S}}(\mathcal{A})}}{\underbrace{\rule{50pt}{0pt}}}$}}}
\put(60,45){\makebox(0,0){\huge{
$\overset{\textit{\scriptsize{restrict}}}{\Rightarrow}$}}}
\end{picture}
\end{tabular}}}
%
\\&\\
\\
\multicolumn{2}{c}{{\footnotesize{$
\mathcal{A}'\xrightarrow{\;g\,}\mathcal{A}$}}}
\\
${\scriptstyle\sum}_{g}(\mathcal{T})
\xleftarrow{\langle{g,1_{K}}\rangle}
\mathcal{T}$
&
{\footnotesize{
$\mathcal{T}'\xleftarrow{\langle{g,k}\rangle}
{g}^{\ast}(\mathcal{T}')$
}\normalsize}
\end{tabular}}}
\end{center}
\caption{\texttt{FOLE} 
Adjoint Flow in $\mathrmbf{Tbl}(\mathcal{S})$}
\label{fig:restr:expan:sign}
\end{figure}
\begin{figure}
\begin{center}
{{\begin{tabular}{c}
\setlength{\unitlength}{0.5pt}
\begin{picture}(280,140)(-8,-80)
\put(10,0){\makebox(0,0){\footnotesize{${\mathrmbf{Tbl}_{\mathcal{S}}(\mathcal{A}')}$}}}
\put(230,0){\makebox(0,0){\footnotesize{${\mathrmbf{Tbl}_{\mathcal{S}}(\mathcal{A})}$}}}
\put(120,50){\makebox(0,0){\scriptsize{$\textit{expand}$}}}
\put(120,30){\makebox(0,0){\scriptsize{$\acute{\mathrmbfit{tbl}}_{\mathcal{S}}(g)$}}}
\put(120,-30){\makebox(0,0){\scriptsize{$\grave{\mathrmbfit{tbl}}_{\mathcal{S}}(g)$}}}
\put(120,-50){\makebox(0,0){\scriptsize{$\textit{restrict}$}}}
\put(120,0){\makebox(0,0){\footnotesize{${\;\dashv\;}$}}}
\put(180,14){\vector(-1,0){120}}
\put(60,-14){\vector(1,0){120}}
%
\put(120,-85){\makebox(0,0){\normalsize{${
\underset{\textstyle{\mathrmbf{Tbl}(\mathcal{S})}}{\underbrace{\hspace{160pt}}}}$}}}
%
\end{picture}
\end{tabular}}}
\end{center}
\caption{Adjoint Flow Factor}
\label{fig:adj:flo:S}
\end{figure}
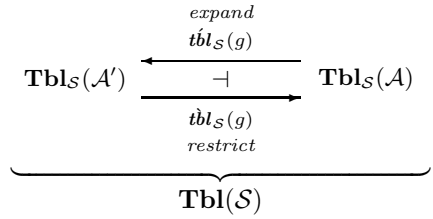
\begin{flushleft}
{\fbox{\fbox{\footnotesize{\begin{minipage}{345pt}
{\underline{\textsf{How does restriction work?}}}
We can think of restriction as a \emph{validation} process.
When the data value function is an inclusion
$Y'\xhookleftarrow{\,g\,}Y$,
we can regard the values in 
$Y'{\,\setminus\,}Y$ as being inauthentic and non-usable.
For each sort $x \in X$,
the data value function 
maps the source data type 
$\mathcal{A}'_{x}$ to the
restricted target data type
$\mathcal{A}_{x} = \mathcal{A}'_{x} \cap Y$,
and the restriction ${g}^{\ast}(\mathcal{T}')$
consists of only those tuples of $\mathcal{T}'$ 
with data values in $Y$;
all other tuples are omitted.
Hence,
restriction abridges the vertical aspect of tables, 
ending with a subset of rows.
%
%
\newline\newline
%
{\underline{\textsf{How does expansion work?}}}
When the data value function is an inclusion 
$Y'\xhookleftarrow{\,g\,}Y$,
the expansion ${\scriptstyle\sum}_{g}(\mathcal{T})$ 
retains the same tuples in the target table,
but places the table in the larger context $\mathrmbf{List}(Y')$
of data tuples.
In this sense,
expansion ``extends'' the vertical aspect of tables.
With an inclusion data function,
expansion followed by restriction gives the original table.
\end{minipage}}}}}
\end{flushleft}
\begin{aside}
Although we usually think of expansion and restriction
along an injective data value function,
here is an example along a surjective data value function.
The power $X$-type domain $\mathcal{A}{\times}\mathcal{A}$ 
in the span
{\footnotesize{$\mathcal{A}\xleftarrow{i_{1}\,}
{\mathcal{A}{\times}\mathcal{A}}
\xrightarrow{\;i_{2}}\mathcal{A}$}\normalsize}
has inclusion data value functions
{\footnotesize{$Y\xhookrightarrow{i_{1}\,} 
Y{+}Y
\xhookleftarrow{\;i_{2}}Y$.}\normalsize}
There is an $X$-type domain morphism
$\mathcal{A}
\xrightarrow
{\;\triangledown\;}
\mathcal{A}{\times}\mathcal{A}$
with (surjective) data value function
{\footnotesize{$Y\xleftarrow{\;\triangledown\;}Y{+}Y$}\normalsize}
that erases the origin:
$i_{1}{\;\cdot\;}\triangledown = 1_{Y} = i_{2}{\;\cdot\;}\triangledown$.
Hence,
the expansion
{\footnotesize{$\mathrmbf{Tbl}_{\mathcal{S}}(\mathcal{A})
{\;\xleftarrow
{\acute{\mathrmbfit{tbl}}_{\mathcal{S}}(\triangledown)}
\;}
\mathrmbf{Tbl}_{\mathcal{S}}(\mathcal{A}{\times}\mathcal{A})$}}
is actually an ``erasure'',
and
the restriction
{\footnotesize{$\mathrmbf{Tbl}_{\mathcal{S}}(\mathcal{A})
{\;\xrightarrow
{\grave{\mathrmbfit{tbl}}_{\mathcal{S}}(\triangledown)}
\;}
\mathrmbf{Tbl}_{\mathcal{S}}(\mathcal{A}{\times}\mathcal{A})$}}
is actually a ``creation'' or a ``duplication''.
\end{aside}
%

\paragraph{\textbf{Application.}}
%
The fiber adjunction of tables
%
{\footnotesize{$
\mathrmbf{Tbl}_{\mathcal{S}}(\mathcal{A}')
{\;\xleftarrow
{{\bigl\langle{\acute{\mathrmbfit{tbl}}_{\mathcal{S}}(g)
{\;\dashv\;}
\grave{\mathrmbfit{tbl}}_{\mathcal{S}}(g)}\bigr\rangle}}\;}
\mathrmbf{Tbl}_{\mathcal{S}}(\mathcal{A})$}}
for an $X$-sorted type domain morphism
$\mathcal{A}'\xrightarrow{\;g\;}\mathcal{A}$
is used as follows.
\begin{itemize}
\item 
To define \underline{expansions} 
\newline\mbox{}\hfill
{\footnotesize{$
\mathrmbf{Tbl}_{\mathcal{S}}(\mathcal{A}_{1})
{\;\xrightarrow{\acute{\mathrmbfit{tbl}}_{\mathcal{S}}(g_{1})}\;}
\mathrmbf{Tbl}_{\mathcal{S}}(\mathcal{A})
{\;\xleftarrow{\acute{\mathrmbfit{tbl}}_{\mathcal{S}}(g_{2})}\;}
\mathrmbf{Tbl}_{\mathcal{S}}(\mathcal{A}_{2})$}}
\hfill\mbox{}\newline
from two peripheral type domains $\mathcal{A}_{1}$ and $\mathcal{A}_{2}$
to a central type domain $\mathcal{A}$,
you need a span of
$\mathcal{A}_{1}
\leftarrow
\mathcal{A}
\rightarrow
\mathcal{A}_{2}$
of
$X$-sorted type domain morphisms.
One way to get this is to assume
an $X$-sorted type domain opspan 
$\mathcal{A}_{1}\xrightarrow{g_{1}}\mathcal{A}\xleftarrow{g_{2}}\mathcal{A}_{2}$
to define a
product $X$-type domain span
{\footnotesize{$\mathcal{A}_{1}\xleftarrow{\pi_{1}\,} 
{\mathcal{A}_{1}{\times_{\mathcal{A}}}\mathcal{A}_{2}}
\xrightarrow{\;\pi_{2}}\mathcal{A}_{2}.$}\normalsize}
This is used 
by data-type join 
in \S\,\ref{sub:sub:sec:boole:join}; 
hence, it is also used 
by disjoint sum and data-type meet
there. 
This is used 
by data-type semi-join 
in \S\,\ref{sub:sub:sec:boole:semi:join}; 
hence, it is also used by data-type semi-meet
there. 
Finally,
\comment{
it is used 
by data-type anti-meet 
in \S\,\ref{sub:sub:sec:boole:anti:meet},
and 
}
by subtraction
in \S\,\ref{sub:sub:sec:subtrac}.
\newline
\item 
To define \underline{restrictions} 
\newline\mbox{}\hfill
{\footnotesize{$
\mathrmbf{Tbl}_{\mathcal{S}}(\mathcal{A}_{1})
{\;\xrightarrow{\grave{\mathrmbfit{tbl}}_{\mathcal{S}}(g_{1})}\;}
\mathrmbf{Tbl}_{\mathcal{S}}(\mathcal{A})
{\;\xleftarrow{\grave{\mathrmbfit{tbl}}_{\mathcal{S}}(g_{2})}\;}
\mathrmbf{Tbl}_{\mathcal{S}}(\mathcal{A}_{2})$}}
\hfill\mbox{}\newline
from two peripheral type domains $\mathcal{A}_{1}$ and $\mathcal{A}_{2}$
to a central type domain $\mathcal{A}$,
you need an opspan of
$\mathcal{A}_{1}
\xrightarrow{\;g_{1}\;}
\mathcal{A}
\xleftarrow{\;g_{2}\;}
\mathcal{A}_{2}$
of
$X$-sorted type domain morphisms.
This is used 
by filter-join
in \S\,\ref{sub:sub:sec:filtered:join}.
%
\end{itemize}
%
%

%
\newpage
\subsubsection{All Tables.}
\label{sub:sub:sec:flow:sign:dom:mor}

%

%
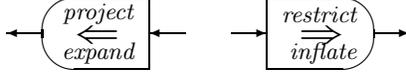
\begin{figure}
\begin{center}
{{\begin{tabular}{c}
\begin{picture}(180,14)(-3,0)
\put(20,-9){\begin{picture}(0,0)(0,0)
\setlength{\unitlength}{0.45pt}
\put(40,10){\line(1,0){60}}
\put(40,70){\line(1,0){60}}
\put(100,70){\line(0,-1){60}}
\put(40,40){\oval(60,60)[bl]}
\put(40,40){\oval(60,60)[tl]}
\put(58,56){\makebox(0,0){\footnotesize{{\textit{{project}}}}}}
\put(56,37){\makebox(0,0){\LARGE{${\Leftarrow}$}}}
\put(58,23){\makebox(0,0){\footnotesize{{\textit{{expand}}}}}}
\put(10,40){\vector(-1,0){30}}
\put(130,40){\vector(-1,0){30}}
\end{picture}}
\put(105,-9){\begin{picture}(0,0)(0,0)
\setlength{\unitlength}{0.45pt}
\put(10,10){\line(1,0){60}}
\put(10,70){\line(1,0){60}}
\put(10,70){\line(0,-1){60}}
\put(70,40){\oval(60,60)[br]}
\put(70,40){\oval(60,60)[tr]}
\put(55,56){\makebox(0,0){\footnotesize{{\textit{{restrict}}}}}}
\put(56,37){\makebox(0,0){\LARGE{${\Rightarrow}$}}}
\put(58,23){\makebox(0,0){\footnotesize{{\textit{{inflate}}}}}}
\put(-20,40){\vector(1,0){30}}
\put(100,40){\vector(1,0){30}}
\end{picture}}
\end{picture}
\end{tabular}}}
\end{center}
\caption{\texttt{FOLE} Adjoint Flow Operators}
\label{fig:fole:adj:flow:proc:sign:dom}
\end{figure}

In this section we define adjoint flow 
in the full table context
$\mathrmbf{Tbl}$.
%
Let
{\footnotesize{$
{\langle{\mathcal{S}',\mathcal{A}'}\rangle}
\xrightarrow{{\langle{h,f,g}\rangle}}
{\langle{\mathcal{S},\mathcal{A}}\rangle}
$}\normalsize}
be a signed domain morphism.
%
By Prop.\;\ref{prop:tup:func:ident} 
in \S\;\ref{sub:sec:tbl:rel},
its tuple function
\comment{$\mathrmbfit{tup}(I_{2},s_{2},\mathcal{A}_{2})
\xleftarrow[(h{\cdot}{(\mbox{-})})\cdot({(\mbox{-})}{\cdot}g)]{\mathrmbfit{tup}(h,f,g)}
\mathrmbfit{tup}(I_{1},s_{1},\mathcal{A}_{1})$}
factors into two parts
(Fig.\;\ref{tup:func:idents}).
\begin{center}
{{\begin{tabular}{c}
{{\setlength{\extrarowheight}{4pt}{\footnotesize{$
\begin{array}[c]{@{\hspace{5pt}}r@{\hspace{5pt}}c@{\hspace{5pt}}
c@{\hspace{5pt}}c@{\hspace{5pt}}l@{\hspace{5pt}}}
\overset{\textstyle{
\mathrmbfit{tup}_{\mathcal{A}'}(\mathcal{S}')}}
{\underset{\textstyle{\mathrmbfit{tup}_{\mathcal{S}'}(\mathcal{A}')}}
{=\rule[6pt]{0pt}{1pt}}}
&
\xleftarrow[\;\mathrmbfit{tup}_{\mathcal{S}'}(g)\;]
{\grave{\tau}_{{\langle{f,g}\rangle}}(\mathcal{S}')}
&
\overset{\textstyle{
\mathrmbfit{tup}_{\mathcal{A}}({\scriptstyle\sum}_{f}(\mathcal{S}')}}
{\underset{\textstyle{
\mathrmbfit{tup}_{\mathcal{S}'}({g}^{-1}(\mathcal{A}'))}}
{=\rule[6pt]{0pt}{1pt}}}
&
\xleftarrow[\tau_{{\langle{h,f}\rangle}}(\mathcal{A})]
{\mathrmbfit{tup}_{\mathcal{A}}(h)}
&
\overset{\textstyle{
\mathrmbfit{tup}_{\mathcal{A}}(\mathcal{S})}}
{\underset{\textstyle{
\mathrmbfit{tup}_{\mathcal{S}}(\mathcal{A})
}}
{=\rule[6pt]{0pt}{1pt}}}
\\
\end{array}$}}}}
\end{tabular}}}
\end{center}
Hence, the fiber adjunction of tables
\comment{
\[{\mbox{
{\footnotesize{$
\mathrmbf{Tbl}_{\mathcal{S}'}(\mathcal{A}')
{\;\xleftarrow
[{\bigl\langle{{\scriptscriptstyle\sum}_{{\langle{h,f,g}\rangle}}
{\;\dashv\;}
{{\langle{h,f,g}\rangle}}^{\ast}\bigr\rangle}}]
{{\bigl\langle{\acute{\mathrmbfit{tbl}}(h,f,g)
{\;\dashv\;}
\grave{\mathrmbfit{tbl}}(h,f,g)}\bigr\rangle}}\;}
\mathrmbf{Tbl}_{\mathcal{A}}(\mathcal{S})$
}\normalsize}}}\]
defined by 
the tuple function,}
also factors into two parts:
either forward (right-to-left) by composition
or backward (left-to-right) by pullback.
%
\footnote{There is an $h$-closure for $h$-projection
and an $h$-interior for inflation.
\newline
There is a $g$-closure for $g$-expansion
and a $g$-interior for restriction.}
%
%
\begin{equation}\label{fbr:adj:sign:dom:mor:factor}
{{\begin{picture}(300,20)(0,-5)
\put(0,0){\scriptsize{$
\overset{\textstyle{
\mathrmbf{Tbl}_{\mathcal{A}'}(\mathcal{S}')}}
{\underset{\textstyle{
\mathrmbf{Tbl}_{\mathcal{S}'}(\mathcal{A}')
}}
{=\rule[4pt]{0pt}{1pt}}}
{\;\xleftarrow
[{\bigl\langle{\acute{\mathrmbfit{tbl}}_{\mathcal{S}'}(g)
{\;\dashv\;}
\grave{\mathrmbfit{tbl}}_{\mathcal{S}'}(g)}\bigr\rangle}]
{{\makebox(0,0){\Large{$
\overset{\textit{\scriptsize{expand}}}
{\Leftarrow}
\!\!\text{/}\!\!
\overset{\textit{\scriptsize{restrict}}}
{\Rightarrow}
$}}}\rule[-8pt]{0pt}{10pt}}
\;}
\overset{\textstyle{
\mathrmbf{Tbl}_{\mathcal{A}}({\scriptstyle\sum}_{f}(\mathcal{S}')}}
{\underset{\textstyle{
\mathrmbf{Tbl}_{\mathcal{S}'}({g}^{-1}(\mathcal{A}'))}}
{=\rule[4pt]{0pt}{1pt}}}
{\;\xleftarrow
[{\bigl\langle{\acute{\mathrmbfit{tbl}}_{\mathcal{A}}(h)
{\;\dashv\;}
\grave{\mathrmbfit{tbl}}_{\mathcal{A}}(h)}\bigr\rangle}]
{{\makebox(0,0){\Large{$
\overset{\textit{\scriptsize{project}}}
{\Leftarrow}
\!\!\text{/}\!\!
\overset{\textit{\scriptsize{inflate}}}
{\Rightarrow}
$}}}\rule[-8pt]{0pt}{10pt}}
\;}
\overset{\textstyle{
\mathrmbf{Tbl}_{\mathcal{A}}(\mathcal{S})}}
{\underset{\textstyle{
\mathrmbf{Tbl}_{\mathcal{S}}(\mathcal{A})
}}
{=\rule[4pt]{0pt}{1pt}}}
$.}}
\end{picture}}}
\end{equation}
\begin{itemize}
\item[\textbf{project/expand:}]  
The left adjoint
$\mathrmbf{Tbl}_{\mathcal{S}'}(\mathcal{A}')
{\;\xleftarrow[{\scriptscriptstyle\sum}_{h,f,g}]
{\;\acute{\mathrmbfit{tbl}}(h,f,g)\;}\;}
\mathrmbf{Tbl}_{\mathcal{A}}(\mathcal{S})$
defines projection-expansion.
An $\mathcal{A}$-table
$\mathcal{T}={\langle{K,t}\rangle} \in 
\mathrmbf{Tbl}_{\mathcal{A}}(\mathcal{S})$
is mapped to the $\mathcal{S}'$-table
${\scriptstyle\sum}_{h,f,g}(\mathcal{T})
= \mathcal{T}' = {\langle{K,t'}\rangle} 
\in \mathrmbf{Tbl}_{\mathcal{S}'}(\mathcal{A}')$,
with its tuple function 
$K\xrightarrow{t'}\mathrmbfit{tup}_{\mathcal{S}'}(\mathcal{A}')$
defined by composition,
$t' = t{\,\cdot\,}\mathrmbfit{tup}(h,f,g)$.
Here we have ``horizontally abridged'' and then ``vertically extended'' tuples in 
$\mathrmbf{List}(Y)$ 
by tuple composition with the signed domain morphism
${\langle{\mathcal{S}',\mathcal{A}'}\rangle}
\xrightarrow{{\langle{h,f,g}\rangle}}
{\langle{\mathcal{S},\mathcal{A}}\rangle}$.
There is a table morphism 
(LHS Fig.\;\ref{fig:flow:sign:dom})
{\footnotesize{$
\mathcal{T}'={\langle{\mathcal{S}',\mathcal{A}',K,t'}\rangle}
\xleftarrow{\langle{{\langle{h,f,g}\rangle},1_{K}}\rangle}
{\langle{\mathcal{S},\mathcal{A},K,t}\rangle}=\mathcal{T}$
.}\normalsize}
We say that
table 
$\mathcal{T}'={\scriptstyle\sum}_{h,f,g}(\mathcal{T})$ 
is the 
\underline{projection-expansion} of 
table $\mathcal{T}$
along signed domain morphism
${\langle{\mathcal{S}',\mathcal{A}'}\rangle}
\xrightarrow{{\langle{h,f,g}\rangle}}
{\langle{\mathcal{S},\mathcal{A}}\rangle}$.
\item[\textbf{restrict/inflate:}]  
The right adjoint 
$\mathrmbf{Tbl}_{\mathcal{S}'}(\mathcal{A}')
\xrightarrow[{\langle{h,f,g}\rangle}^{\ast}]
{\;\grave{\mathrmbfit{tbl}}(h,f,g)\;}
\mathrmbf{Tbl}_{\mathcal{A}}(\mathcal{S})$
defines restriction-inflation.
A table
$\mathcal{T}'={\langle{K',t'}\rangle} \in \mathrmbf{Tbl}_{\mathcal{S}'}(\mathcal{A}')$
is mapped to the table
${\langle{h,f,g}\rangle}^{\ast}(\mathcal{T}')
=\mathcal{T}
={\langle{K,t}\rangle} 
\in \mathrmbf{Tbl}_{\mathcal{A}}(\mathcal{S})$,
with its tuple function
$K\xrightarrow{t}\mathrmbfit{tup}_{\mathcal{A}}(\mathcal{S})$
defined by pullback,
$k{\;\cdot\;}t' = t{\;\cdot\;}\mathrmbfit{tup}(h,f,g)$. 
Here we have ``vertically abridged'' and then ``horizontally inflated'' tuples in 
$\mathrmbf{List}(Y')$ 
by tuple pullback along the signed domain morphism
${\langle{\mathcal{S}',\mathcal{A}'}\rangle}
\xrightarrow{{\langle{h,f,g}\rangle}}
{\langle{\mathcal{S},\mathcal{A}}\rangle}$.
There is a table morphism 
(RHS Fig.\;\ref{fig:flow:sign:dom})
{\footnotesize{$
\mathcal{T}'={\langle{\mathcal{S}',\mathcal{A}',K,t'}\rangle}
\xleftarrow{\langle{{\langle{h,f,g}\rangle},k}\rangle}
{\langle{\mathcal{S},\mathcal{A},K,t}\rangle}=\mathcal{T}$
.}\normalsize}
We say that table 
$\mathcal{T}={\langle{h,f,g}\rangle}^{\ast}(\mathcal{T}')$ 
is the 
\underline{restriction-inflation} of 
table $\mathcal{T}'$
along signed domain morphism
${\langle{\mathcal{S}',\mathcal{A}'}\rangle}
\xrightarrow{{\langle{h,f,g}\rangle}}
{\langle{\mathcal{S},\mathcal{A}}\rangle}$.
\end{itemize}
\begin{figure}
\begin{center}
{{\begin{tabular}{c@{\hspace{65pt}}c}
{{\begin{tabular}{c}
\setlength{\unitlength}{0.7pt}
\begin{picture}(120,90)(0,-5)
\put(0,80){\makebox(0,0){\footnotesize{$K$}}}
\put(120,80){\makebox(0,0){\footnotesize{$K$}}}
\put(0,-10){\makebox(0,0){\footnotesize{$
\underset{\textstyle{\subseteq \mathrmbf{List}(Y')}}
{\mathrmbfit{tup}_{\mathcal{S}'}(\mathcal{A}')}$}}}
\put(125,-10){\makebox(0,0){\footnotesize{$
\underset{\textstyle{\subseteq \mathrmbf{List}(Y)}}
{\mathrmbfit{tup}_{\mathcal{A}}(\mathcal{S})}$}}}
\put(63,95){\makebox(0,0){\scriptsize{$1_{K}$}}}
\put(65,-12){\makebox(0,0){\scriptsize{$
\mathrmbfit{tup}(h,f,g)$}}}
\put(-6,40){\makebox(0,0)[r]{\scriptsize{$t'$}}}
\put(128,40){\makebox(0,0)[l]{\scriptsize{$t$}}}
\put(0,65){\vector(0,-1){50}}
\put(120,65){\vector(0,-1){50}}
\put(100,80){\vector(-1,0){80}}
\put(90,0){\vector(-1,0){50}}
\put(0,-44){\makebox(0,0){\normalsize{$\underset{\textstyle{
\mathrmbf{Tbl}_{\mathcal{S}'}(\mathcal{A}')
}}{\underbrace{\rule{50pt}{0pt}}}$}}}
\put(60,45){\makebox(0,0){\huge{
$\overset{\textit{\scriptsize{project-expand}}}{\Leftarrow}$}}}
\put(195,40){\makebox(0,0){\Large{$\rightleftarrows$}}}
\end{picture}
\end{tabular}}}
&
{{\begin{tabular}{c}
\setlength{\unitlength}{0.7pt}
\begin{picture}(120,90)(0,-5)
\put(0,80){\makebox(0,0){\footnotesize{$K'$}}}
\put(120,80){\makebox(0,0){\footnotesize{$K$}}}
\put(0,-10){\makebox(0,0){\footnotesize{$
\underset{\textstyle{\subseteq \mathrmbf{List}(Y')}}
{\mathrmbfit{tup}_{\mathcal{S}'}(\mathcal{A}')}$}}}
\put(130,-10){\makebox(0,0){\footnotesize{$
\underset{\textstyle{\subseteq \mathrmbf{List}(Y)}}
{\mathrmbfit{tup}_{\mathcal{A}}(\mathcal{S})}$}}}
\put(63,90){\makebox(0,0){\scriptsize{$k$}}}
\put(65,-12){\makebox(0,0){\scriptsize{$
\mathrmbfit{tup}(h,f,g)$}}}
\put(-6,40){\makebox(0,0)[r]{\scriptsize{$t'$}}}
\put(128,40){\makebox(0,0)[l]{\scriptsize{$t$}}}
\put(0,65){\vector(0,-1){50}}
\put(120,65){\vector(0,-1){50}}
\put(100,80){\vector(-1,0){80}}
\put(90,0){\vector(-1,0){50}}
\qbezier(40,30)(30,30)(20,30)
\qbezier(40,30)(40,20)(40,10)
\put(120,-44){\makebox(0,0){\normalsize{$\underset{\textstyle{
\mathrmbf{Tbl}_{\mathcal{A}}(\mathcal{S})}}{\underbrace{\rule{50pt}{0pt}}}$}}}
\put(60,45){\makebox(0,0){\huge{
$\overset{\textit{\scriptsize{restrict-inflate}}}{\Rightarrow}$}}}
\end{picture}
\end{tabular}}}
%
\\&\\&\\
\multicolumn{2}{c}{$\mathrmbf{Tbl}$}
\\
\multicolumn{2}{c}{{\footnotesize{$
{\langle{\mathcal{S}',\mathcal{A}'}\rangle}
\xrightarrow{{\langle{h,f,g}\rangle}}
{\langle{\mathcal{S},\mathcal{A}}\rangle}$}}}
\\
${\scriptstyle\sum}_{{\langle{h,f,g}\rangle}}(\mathcal{T})
\xleftarrow{\langle{{\langle{h,f,g}\rangle},1_{K}}\rangle}
\mathcal{T}$
&
{\footnotesize{
$\mathcal{T}'
\xleftarrow{\langle{{\langle{h,f,g}\rangle},k}\rangle}
{{\langle{h,f,g}\rangle}}^{\ast}(\mathcal{T}')$
}\normalsize}
\end{tabular}}}
\end{center}
\caption{\texttt{FOLE} 
Adjoint Flow in $\mathrmbf{Tbl}$}
\label{fig:flow:sign:dom}
\end{figure}
\begin{flushleft}
{\fbox{\fbox{\footnotesize{\begin{minipage}{345pt}
{\underline{\textsf{How does the left-adjoint of flow work?}}}
The left-adjoint of flow is projection followed by -expansion.
When the index function is 
an inclusion $I'\xhookrightarrow{\,h\,}I$,
the flow 
${\scriptstyle\sum}_{h}(\mathcal{T})$
consists of the sub-tuples of $\mathcal{T}$ indexed by $I'$.
Hence,
the first part of flow (projection) 
restricts the horizontal aspect of tables, 
ending with a subset of columns.
When the data value function is an inclusion 
$Y'\xhookleftarrow{\,g\,}Y$,
the flow 
${\scriptstyle\sum}_{g}({\scriptstyle\sum}_{h}(\mathcal{T}))$
does not alter tuples,
but places them in the larger context $\mathrmbf{List}(Y')$
of data tuples.
Hence,
the second part of flow (expansion) 
does not alter the table ${\scriptstyle\sum}_{h}(\mathcal{T})$,
but places it in a larger context of data tuples.
For pure projection,
use an identity data value function $Y\xleftarrow{\,1_{Y}\,}Y$.
For pure expansion,
use an identity index function $I\xrightarrow{\,1_{I}\,}I$.
\newline\newline
{\underline{\textsf{How does right-adjoint of flow work?}}}
The right-adjoint of flow is restriction followed by inflation.
When the data value function is an inclusion 
$Y'\xhookleftarrow{\,g\,}Y$,
the flow ${g}^{\ast}(\mathcal{T}')$
consists of only those tuples of $\mathcal{T}$ 
with data values in $Y$;
all other tuples are omitted.
Hence,
the first part of flow (restriction) restricts the vertical aspect of tables, 
ending with a subset of rows.
When the index function is an inclusion 
$I'\xhookrightarrow{\,h\,}I$,
the flow ${h}^{\ast}({g}^{\ast}(\mathcal{T}'))$
factors on the complement subset
$\mathrmbfit{tup}_{\mathcal{A}}(I'',s'')$
for
$I''=I{{-}}I'$.
Hence,
the second part of flow (inflation)
enlarges the horizontal aspect of tables.
For pure restriction,
use an identity index function $I\xrightarrow{\,1_{I}\,}I$.
For pure inflation,
use an identity data value function $Y\xleftarrow{\,1_{Y}\,}Y$.
\end{minipage}}}}}
\end{flushleft}
%

%
%
\newpage
\subsubsection{The Square.}\label{sub:sub:sec:adj:flo:square}

\comment{\begin{aside}
In the \texttt{FOLE} relational model
there is a certain asymmetry between signatures and type domains.
Signatures are lists,
but
type domains are classifications.
For example,
the natural join 
with a fixed type domain 
$\mathcal{A} = {\langle{X,Y,\models}\rangle}$
ends up using a subset of sorts in its tuples,
with the remaining sorts still there in the background;
whereas (its ``dual'') a data-type join
with a fixed signature 
$\mathcal{S} = {\langle{I,X,s}\rangle}$
continues to use all sorts,
just with a different set of data values for each sort. 
\end{aside}}
Adjoint flow in the square
works in the full context $\mathrmbf{Tbl}$. 
By fixing a set of sorts $X$,
we allow the header indexing to vary and
we allow the data-types to vary.
%
\footnote{In concert with 
\S\,\ref{sub:sub:sec:adj:flow:A}
and
\S\,\ref{sub:sub:sec:adj:flow:S},
this might be entitled
``\textbf{Fixed Sort Set}''.}
%
This allows both
processing in 
$\mathrmbf{Tbl}(\mathcal{A})$
and
processing in 
$\mathrmbf{Tbl}(\mathcal{S})$.
Let
{\footnotesize{$
\mathcal{D}'={\langle{\mathcal{S}',\mathcal{A}'}\rangle}
\xrightarrow{{\langle{h,1,g}\rangle}}
{\langle{\mathcal{S},\mathcal{A}}\rangle}=\mathcal{D}
$}\normalsize}
be a signed domain morphism
with an identity sort function $X\xrightarrow{1}X$.
This consists of 
an $X$-sorted signature morphism 
$\mathcal{S}'\xrightarrow{\;h\;}\mathcal{S}$ 
which satisfies the condition 
$h{\,\cdot\,}s = s'$,
and 
an $X$-sorted type domain morphism 
$\mathcal{A}'
\xrightarrow{\;g\;}\mathcal{A}$
which satisfies the condition 
$\mathrmbfit{ext}_{\mathcal{A}'}{\;\cdot\;}g^{-1}
= \mathrmbfit{ext}_{\mathcal{A}}$;
or that
{{$g^{-1}(A'_{x}) = A_{x}$ for all $x \in X$,}}
so that
${g}^{-1}(\mathcal{A}') = \mathcal{A}$.
%
\comment{
\begin{center}
{{\begin{tabular}{c}
\setlength{\unitlength}{0.33pt}
\begin{picture}(200,190)(0,0)
\put(0,180){\makebox(0,0){\footnotesize{$I'$}}}
\put(0,0){\makebox(0,0){\footnotesize{${\wp}(Y')$}}}
\put(180,180){\makebox(0,0){\footnotesize{$I$}}}
\put(90,90){\makebox(0,0){\footnotesize{$X$}}}
\put(180,0){\makebox(0,0){\footnotesize{${\wp}(Y)$}}}
\put(12,140){\makebox(0,0)[l]{\scriptsize{$s'$}}}
\put(154,140){\makebox(0,0)[l]{\scriptsize{$s$}}}
\put(20,50){\makebox(0,0)[l]{\scriptsize{$\mathcal{A}'$}}}
\put(145,50){\makebox(0,0)[l]{\scriptsize{$\mathcal{A}$}}}
\put(90,194){\makebox(0,0){\scriptsize{$h$}}}
\put(98,14){\makebox(0,0){\scriptsize{$g^{-1}$}}}
\put(20,180){\vector(1,0){140}}
\put(20,0){\vector(1,0){140}}
\put(10,170){\vector(1,-1){70}}
\put(170,170){\vector(-1,-1){70}}
\put(80,75){\vector(-1,-1){60}}
\put(100,75){\vector(1,-1){60}}
\end{picture}
\end{tabular}}}
\end{center}}
Here
$\mathcal{S}'\xrightarrow{\;h\;}\mathcal{S}$
could be a morphism in
$\mathrmbf{Tbl}(\mathcal{A}')$
\underline{or}
$\mathrmbf{Tbl}(\mathcal{A})$
(top/bottom of 
Fig.\;\ref{fig:square}),
and
$\mathcal{A}'
\xrightarrow{\;g\;} 
\mathcal{A}$
could be a morphism in
$\mathrmbf{Tbl}(\mathcal{S}')$
\underline{or}
$\mathrmbf{Tbl}(\mathcal{S})$
(left/right of 
Fig.\;\ref{fig:square}).
%
\begin{figure}
\begin{center}
{{\begin{tabular}{c}
\setlength{\unitlength}{0.4pt}
\begin{picture}(260,270)(-8,-44)
\put(120,240){\makebox(0,0){\normalsize{${
\overset{\textstyle{\mathrmbf{Tbl}(\mathcal{A}')}}{\overbrace{\hspace{120pt}}}}$}}}
\put(120,-60){\makebox(0,0){\normalsize{${
\underset{\textstyle{\mathrmbf{Tbl}(\mathcal{A})}}{\underbrace{\hspace{120pt}}}}$}}}
\put(-50,90){\makebox(0,0)[r]{\footnotesize{$
\mathrmbf{Tbl}(\mathcal{S}')\left\{\rule{0pt}{50pt}\right.$}}}
\put(290,90){\makebox(0,0)[l]{\footnotesize{$
\left.\rule{0pt}{50pt}\right\}\mathrmbf{Tbl}(\mathcal{S})$}}}
\put(0,180){\makebox(0,0){\footnotesize{${\langle{\mathcal{S}',\mathcal{A}'}\rangle}$}}}
\put(240,180){\makebox(0,0){\footnotesize{${\langle{\mathcal{S},\mathcal{A}'}\rangle}$}}}
\put(0,0){\makebox(0,0){\footnotesize{${\langle{\mathcal{S}',\mathcal{A}}\rangle}$}}}
\put(240,0){\makebox(0,0){\footnotesize{${\langle{\mathcal{S},\mathcal{A}}\rangle}$}}}
\put(120,200){\makebox(0,0){\scriptsize{$h$}}}
\put(120,20){\makebox(0,0){\scriptsize{$h$}}}
\put(20,95){\makebox(0,0)[l]{\scriptsize{$g$}}}
\put(260,95){\makebox(0,0)[l]{\scriptsize{$g$}}}
\put(0,150){\vector(0,-1){120}}
\put(240,150){\vector(0,-1){120}}
\put(60,180){\vector(1,0){120}}
\put(60,0){\vector(1,0){120}}
\put(120,90){\makebox(0,0){\scriptsize{$\blacksquare$}}}
\put(120,90){\makebox(0,0){\large{$\square$}}}
\end{picture}
\end{tabular}}}
\end{center}
\caption{The Square}
\label{fig:square}
\end{figure}
%
Hence,
the
signed domain morphism 
factors in two ways,
\comment{
\begin{description}
\item[down-right:] 
{\footnotesize{$
{\langle{\mathcal{S}',\mathcal{A}'}\rangle}
\xrightarrow{\;g\;}
{\langle{\mathcal{S}',\mathcal{A}}\rangle}
\xrightarrow{\;h\;}
{\langle{\mathcal{S},\mathcal{A}}\rangle}
$}\normalsize}
with the fiber adjunction of tables
\[\mbox
{\scriptsize{$
\overset{\textstyle{
\mathrmbf{Tbl}_{\mathcal{A}'}(\mathcal{S}')}}
{\underset{\textstyle{
\mathrmbf{Tbl}_{\mathcal{S}'}(\mathcal{A}')
}}
{=\rule[4pt]{0pt}{1pt}}}
{\;\xleftarrow
[{\bigl\langle{\acute{\mathrmbfit{tbl}}_{\mathcal{S}'}(g)
{\;\dashv\;}
\grave{\mathrmbfit{tbl}}_{\mathcal{S}'}(g)}\bigr\rangle}]
{{\makebox(0,0){\Large{$
\overset{\textit{\scriptsize{expand}}}
{\Leftarrow}
\!\!\text{/}\!\!
\overset{\textit{\scriptsize{restrict}}}
{\Rightarrow}
$}}}\rule[-8pt]{0pt}{10pt}}
\;}
\overset{\textstyle{
\mathrmbf{Tbl}_{\mathcal{A}}(\mathcal{S}')}}
{\underset{\textstyle{
\mathrmbf{Tbl}_{\mathcal{S}'}(\mathcal{A})}}
{=\rule[4pt]{0pt}{1pt}}}
{\;\xleftarrow
[{\bigl\langle{\acute{\mathrmbfit{tbl}}_{\mathcal{A}}(h)
{\;\dashv\;}
\grave{\mathrmbfit{tbl}}_{\mathcal{A}}(h)}\bigr\rangle}]
{{\makebox(0,0){\Large{$
\overset{\textit{\scriptsize{project}}}
{\Leftarrow}
\!\!\text{/}\!\!
\overset{\textit{\scriptsize{inflate}}}
{\Rightarrow}
$}}}\rule[-8pt]{0pt}{10pt}}
\;}
\overset{\textstyle{
\mathrmbf{Tbl}_{\mathcal{A}}(\mathcal{S})}}
{\underset{\textstyle{
\mathrmbf{Tbl}_{\mathcal{S}}(\mathcal{A})
}}
{=\rule[4pt]{0pt}{1pt}}}
$}}
\]
\item[right-down:] 
{\footnotesize{$
{\langle{\mathcal{S}',\mathcal{A}'}\rangle}
\xrightarrow{\;h\;}
{\langle{\mathcal{S},\mathcal{A}'}\rangle}
\xrightarrow{\;g\;}
{\langle{\mathcal{S},\mathcal{A}}\rangle}
$}\normalsize}
with the fiber adjunction of tables
\[\mbox
{\scriptsize{$
\overset{\textstyle{
\mathrmbf{Tbl}_{\mathcal{A}'}(\mathcal{S}')}}
{\underset{\textstyle{
\mathrmbf{Tbl}_{\mathcal{S}'}(\mathcal{A}')
}}
{=\rule[4pt]{0pt}{1pt}}}
{\;\xleftarrow
[{\bigl\langle{\acute{\mathrmbfit{tbl}}_{\mathcal{A}}(h)
{\;\dashv\;}
\grave{\mathrmbfit{tbl}}_{\mathcal{A}}(h)}\bigr\rangle}]
{{\makebox(0,0){\Large{$
\overset{\textit{\scriptsize{project}}}
{\Leftarrow}
\!\!\text{/}\!\!
\overset{\textit{\scriptsize{inflate}}}
{\Rightarrow}
$}}}\rule[-8pt]{0pt}{10pt}}
\;}
\overset{\textstyle{
\mathrmbf{Tbl}_{\mathcal{A}'}(\mathcal{S})}}
{\underset{\textstyle{
\mathrmbf{Tbl}_{\mathcal{S}}(\mathcal{A}')
}}
{=\rule[4pt]{0pt}{1pt}}}
{\;\xleftarrow
[{\bigl\langle{\acute{\mathrmbfit{tbl}}_{\mathcal{S}'}(g)
{\;\dashv\;}
\grave{\mathrmbfit{tbl}}_{\mathcal{S}'}(g)}\bigr\rangle}]
{{\makebox(0,0){\Large{$
\overset{\textit{\scriptsize{expand}}}
{\Leftarrow}
\!\!\text{/}\!\!
\overset{\textit{\scriptsize{restrict}}}
{\Rightarrow}
$}}}\rule[-8pt]{0pt}{10pt}}
\;}
\overset{\textstyle{
\mathrmbf{Tbl}_{\mathcal{A}}(\mathcal{S})}}
{\underset{\textstyle{
\mathrmbf{Tbl}_{\mathcal{S}'}(\mathcal{A})
}}
{=\rule[4pt]{0pt}{1pt}}}
$}}
\]
\end{description}
}
%
as visualized in the square 
(Fig.\;\ref{fig:adj:flo:square}). 
Because of this flexibility,
by interspersing Booleans at the signed domains (corners),
there are a variety of flowcharts definable here.
The outer-join in \S\,\ref{sub:sub:sec:out:join} demonstrates
one possible use for the square.
%
\comment{
There are eight possible routes of flow in the square along the composite.
%
{\footnote{You need two lines of input coming from table (op)spans.}}
%
Four of these are 
illustrated in Tbl.\ref{tbl:routes:flow}.
There are four other possibilities along the composite:
up-left-right-down from ${\langle{\mathcal{S},\mathcal{A}}\rangle}$   (along)
right-down-up-left from ${\langle{\mathcal{S}',\mathcal{A}'}\rangle}$ (along);
left-down-up-right from ${\langle{\mathcal{S},\mathcal{A}'}\rangle}$  (across); 
down-left-right-up from ${\langle{\mathcal{S},\mathcal{A}'}\rangle}$  (across).
%
%
{\footnote{Here we use three lines of input coming from table opspans.}}
%

%
\begin{table}
\begin{center}
{{\begin{tabular}{c@{\hspace{45pt}}c}
{{\begin{tabular}{c}
\setlength{\unitlength}{0.6pt}
\begin{picture}(120,80)(100,-10)
\put(157,63){\makebox(0,0){\normalsize{$\textbf{1.}$}}}
\put(100,55){\makebox(0,0){\huge{
$\overset{\textit{\scriptsize{inflate}}}{\Rightarrow}$}}}
\put(100,36){\makebox(0,0){\huge{
${\textit{\scriptsize{expand}}}$}}}
\put(200,55){\makebox(0,0){\huge{
$\overset{\textit{\scriptsize{inflate}}}{\Leftarrow}$}}}
\put(200,36){\makebox(0,0){\huge{
${\textit{\scriptsize{expand}}}$}}}
\put(150,30){\makebox(0,0){\huge{
$\overset{\textit{\scriptsize{join}}}{\Downarrow}$}}}
\put(100,4.5){\makebox(0,0){\huge{
$\overset{\textit{\scriptsize{restrict}}}{\Leftarrow}$}}}
\put(100,-14.5){\makebox(0,0){\huge{
${\textit{\scriptsize{project}}}$}}}
\put(200,4.5){\makebox(0,0){\huge{
$\overset{\textit{\scriptsize{restrict}}}{\Rightarrow}$}}}
\put(200,-14.5){\makebox(0,0){\huge{
${\textit{\scriptsize{project}}}$}}}
%
\put(90,48){\line(1,0){54}}
\put(144,38){\oval(20,20)[tr]}
\put(166,38){\oval(20,20)[tl]}
\put(220,48){\line(-1,0){54}}
\put(145,-3){\line(-1,0){54}}
\put(165,-3){\line(1,0){54}}
\put(154,9){\line(0,1){29}}
\put(156,9){\line(0,1){29}}
\put(145,7){\oval(20,20)[br]}
\put(165,7){\oval(20,20)[bl]}
\end{picture}
\end{tabular}}}
%
& 
{{\begin{tabular}{c}
\setlength{\unitlength}{0.6pt}
\begin{picture}(120,80)(100,-10)
\put(157,63){\makebox(0,0){\normalsize{$\textbf{2.}$}}}
\put(100,55){\makebox(0,0){\huge{
$\overset{\textit{\scriptsize{project}}}{\Rightarrow}$}}}
\put(100,36){\makebox(0,0){\huge{
${\textit{\scriptsize{expand}}}$}}}
\put(200,55){\makebox(0,0){\huge{
$\overset{\textit{\scriptsize{project}}}{\Leftarrow}$}}}
\put(200,36){\makebox(0,0){\huge{
${\textit{\scriptsize{expand}}}$}}}
\put(150,30){\makebox(0,0){\huge{
$\overset{\textit{\scriptsize{join}}}{\Downarrow}$}}}
\put(100,4.5){\makebox(0,0){\huge{
$\overset{\textit{\scriptsize{restrict}}}{\Leftarrow}$}}}
\put(100,-14.5){\makebox(0,0){\huge{
${\textit{\scriptsize{inflate}}}$}}}
\put(200,4.5){\makebox(0,0){\huge{
$\overset{\textit{\scriptsize{restrict}}}{\Rightarrow}$}}}
\put(200,-14.5){\makebox(0,0){\huge{
${\textit{\scriptsize{inflate}}}$}}}
%
\put(90,48){\line(1,0){54}}
\put(144,38){\oval(20,20)[tr]}
\put(166,38){\oval(20,20)[tl]}
\put(220,48){\line(-1,0){54}}
\put(145,-3){\line(-1,0){54}}
\put(165,-3){\line(1,0){54}}
\put(154,9){\line(0,1){29}}
\put(156,9){\line(0,1){29}}
\put(145,7){\oval(20,20)[br]}
\put(165,7){\oval(20,20)[bl]}
\end{picture}
\end{tabular}}}
\\&\\
{{\begin{tabular}{c}
\setlength{\unitlength}{0.6pt}
\begin{picture}(120,80)(100,-10)
\put(157,63){\makebox(0,0){\normalsize{$\textbf{3.}$}}}
\put(100,55){\makebox(0,0){\huge{
$\overset{\textit{\scriptsize{restrict}}}{\Rightarrow}$}}}
\put(100,36){\makebox(0,0){\huge{
${\textit{\scriptsize{inflate}}}$}}}
\put(200,55){\makebox(0,0){\huge{
$\overset{\textit{\scriptsize{restrict}}}{\Leftarrow}$}}}
\put(200,36){\makebox(0,0){\huge{
${\textit{\scriptsize{inflate}}}$}}}
\put(150,30){\makebox(0,0){\huge{
$\overset{\textit{\scriptsize{meet}}}{\Downarrow}$}}}
\put(100,4.5){\makebox(0,0){\huge{
$\overset{\textit{\scriptsize{project}}}{\Leftarrow}$}}}
\put(100,-14.5){\makebox(0,0){\huge{
${\textit{\scriptsize{expand}}}$}}}
\put(200,4.5){\makebox(0,0){\huge{
$\overset{\textit{\scriptsize{project}}}{\Rightarrow}$}}}
\put(200,-14.5){\makebox(0,0){\huge{
${\textit{\scriptsize{expand}}}$}}}
%
\put(90,48){\line(1,0){54}}
\put(144,38){\oval(20,20)[tr]}
\put(166,38){\oval(20,20)[tl]}
\put(220,48){\line(-1,0){54}}
\put(145,-3){\line(-1,0){54}}
\put(165,-3){\line(1,0){54}}
\put(154,9){\line(0,1){29}}
\put(156,9){\line(0,1){29}}
\put(145,7){\oval(20,20)[br]}
\put(165,7){\oval(20,20)[bl]}
\end{picture}
\end{tabular}}}
%
& 
{{\begin{tabular}{c}
\setlength{\unitlength}{0.6pt}
\begin{picture}(120,80)(100,-10)
\put(157,63){\makebox(0,0){\normalsize{$\textbf{4.}$}}}
\put(100,55){\makebox(0,0){\huge{
$\overset{\textit{\scriptsize{expand}}}{\Rightarrow}$}}}
\put(100,36){\makebox(0,0){\huge{
${\textit{\scriptsize{inflate}}}$}}}
\put(200,55){\makebox(0,0){\huge{
$\overset{\textit{\scriptsize{expand}}}{\Leftarrow}$}}}
\put(200,36){\makebox(0,0){\huge{
${\textit{\scriptsize{inflate}}}$}}}
\put(150,30){\makebox(0,0){\huge{
$\overset{\textit{\scriptsize{meet}}}{\Downarrow}$}}}
\put(100,4.5){\makebox(0,0){\huge{
$\overset{\textit{\scriptsize{project}}}{\Leftarrow}$}}}
\put(100,-14.5){\makebox(0,0){\huge{
${\textit{\scriptsize{restrict}}}$}}}
\put(200,4.5){\makebox(0,0){\huge{
$\overset{\textit{\scriptsize{project}}}{\Rightarrow}$}}}
\put(200,-14.5){\makebox(0,0){\huge{
${\textit{\scriptsize{restrict}}}$}}}
%
\put(90,48){\line(1,0){54}}
\put(144,38){\oval(20,20)[tr]}
\put(166,38){\oval(20,20)[tl]}
\put(220,48){\line(-1,0){54}}
\put(145,-3){\line(-1,0){54}}
\put(165,-3){\line(1,0){54}}
\put(154,9){\line(0,1){29}}
\put(156,9){\line(0,1){29}}
\put(145,7){\oval(20,20)[br]}
\put(165,7){\oval(20,20)[bl]}
\end{picture}
\end{tabular}}}
%
\\\\
\multicolumn{2}{l}{\footnotesize{\begin{tabular}{c}
{\normalsize{$\textbf{3.}$}}
\textit{restrict}$\times$2$\;\circ\;$($\overset{\text{natural-semi-join}:}
{\text{\textit{inflate}$\times$2$\;\circ\;$\textit{meet}$\;\circ\;$\textit{project}}}$)$\;\circ\;$\textit{expand}: 
\\
down-right-left-up: from  
${\langle{\mathcal{S}',\mathcal{A}'}\rangle}$
(forward along)
\\
{\normalsize{$\textbf{1.}$}}
\textit{project}$\times$2$\;\circ\;$($\overset{\text{data-type-semi-join}:}
{\text{\textit{expand}$\times$2$\;\circ
\;$\textit{join}$\;\circ\;$\textit{restrict}}}$)$\;\circ\;$\textit{inflate}: 
\\
left-up-down-right: from  
${\langle{\mathcal{S},\mathcal{A}}\rangle}$
(back along)
\\
{\normalsize{$\textbf{2.}$}}
\textit{inflate}$\times$2$\;\circ\;$($\overset{\text{filter-join-expanded}:}
{\text{\textit{restrict}$\times$2$\;\circ
\;$\textit{join}$\;\circ\;$\textit{expand}}}$)$\;\circ\;$\textit{project}:
\\
right-down-up-left: from 
${\langle{\mathcal{S}',\mathcal{A}'}\rangle}$
(alt forward along)
\\
{\normalsize{$\textbf{4.}$}}
\textit{expand}$\times$2$\;\circ\;$($\overset{\text{natural-semi-join}:}
{\text{\textit{inflate}$\times$2$\;\circ\;$\textit{meet}$\;\circ\;$\textit{project}}}$)$\;\circ\;$\textit{restrict}: 
\\\
up-right-left-down: from 
${\langle{\mathcal{S}',\mathcal{A}}\rangle}$
(across)
\end{tabular}}}
\\
%
\end{tabular}}}
\end{center}
\caption{Routes of Flow}
\label{tbl:routes:flow}
\end{table}
}
%


%
\begin{figure}
\begin{center}
{{\begin{tabular}{c}
\setlength{\unitlength}{0.56pt}
\begin{picture}(260,280)(-8,-44)
%
%
\put(0,180){\makebox(0,0){\footnotesize{${\mathrmbf{Tbl}_{\mathcal{A}'}(\mathcal{S}')}$}}}
\put(240,180){\makebox(0,0){\footnotesize{${\mathrmbf{Tbl}_{\mathcal{A}'}(\mathcal{S})}$}}}
\put(0,0){\makebox(0,0){\footnotesize{${\mathrmbf{Tbl}_{\mathcal{A}}(\mathcal{S}')}$}}}
\put(240,0){\makebox(0,0){\scriptsize{${\mathrmbf{Tbl}_{\mathcal{A}}(\mathcal{S})}$}}}
\put(120,230){\makebox(0,0){\scriptsize{$\textit{project}$}}}
\put(120,210){\makebox(0,0){\scriptsize{$\acute{\mathrmbfit{tbl}}_{\mathcal{A}'}(h)$}}}
\put(120,150){\makebox(0,0){\scriptsize{$\grave{\mathrmbfit{tbl}}_{\mathcal{A}'}(h)$}}}
\put(120,130){\makebox(0,0){\scriptsize{$\textit{inflate}$}}}
\put(120,50){\makebox(0,0){\scriptsize{$\textit{project}$}}}
\put(120,30){\makebox(0,0){\scriptsize{$\acute{\mathrmbfit{tbl}}_{\mathcal{A}}(h)$}}}
\put(120,-30){\makebox(0,0){\scriptsize{$\grave{\mathrmbfit{tbl}}_{\mathcal{A}}(h)$}}}
\put(120,-50){\makebox(0,0){\scriptsize{$\textit{inflate}$}}}
\put(30,95){\makebox(0,0)[l]{\scriptsize{$\acute{\mathrmbfit{tbl}}_{\mathcal{S}'}(g)$}}}
\put(30,75){\makebox(0,0)[l]{\scriptsize{$\textit{expand}$}}}
\put(-30,95){\makebox(0,0)[r]{\scriptsize{$\grave{\mathrmbfit{tbl}}_{\mathcal{S}'}(g)$}}}
\put(-30,75){\makebox(0,0)[r]{\scriptsize{$\textit{restrict}$}}}
\put(270,95){\makebox(0,0)[l]{\scriptsize{$\acute{\mathrmbfit{tbl}}_{\mathcal{S}}(g)$}}}
\put(270,75){\makebox(0,0)[l]{\scriptsize{$\textit{expand}$}}}
\put(210,95){\makebox(0,0)[r]{\scriptsize{$\grave{\mathrmbfit{tbl}}_{\mathcal{S}}(g)$}}}
\put(210,75){\makebox(0,0)[r]{\scriptsize{$\textit{restrict}$}}}
\put(0,90){\makebox(0,0){\footnotesize{${\;\dashv\;}$}}}
\put(240,90){\makebox(0,0){\footnotesize{${\;\dashv\;}$}}}
\put(120,180){\makebox(0,0){\footnotesize{${\;\dashv\;}$}}}
\put(120,0){\makebox(0,0){\footnotesize{${\;\dashv\;}$}}}
\put(-16,150){\vector(0,-1){120}}
\put(16,30){\vector(0,1){120}}
\put(224,150){\vector(0,-1){120}}
\put(256,30){\vector(0,1){120}}
\put(180,192){\vector(-1,0){120}}
\put(60,168){\vector(1,0){120}}
\put(180,12){\vector(-1,0){120}}
\put(60,-12){\vector(1,0){120}}
%
\put(0,210){\makebox(0,0){\footnotesize{$\vee\wedge-$}}}
\put(0,-20){\makebox(0,0){\footnotesize{$\vee\wedge-$}}}
\put(240,210){\makebox(0,0){\footnotesize{$\vee\wedge-$}}}
\put(240,-20){\makebox(0,0){\footnotesize{$\vee\wedge-$}}}
%
\put(120,90){\makebox(0,0){\scriptsize{$\blacksquare$}}}
%
\end{picture}
\end{tabular}}}
\end{center}
\caption{Adjoint Flow in the Square}
\label{fig:adj:flo:square}
\end{figure}
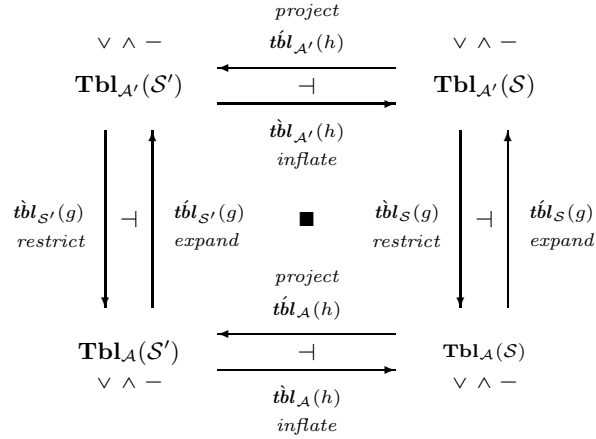
%


\newpage
\section{Composite Operations for Limits.}
\label{sub:sec:comp:ops:type:dom}

The basic components of
\S\,\ref{sub:sec:base:ops}
are the components to be used in flowcharts.
Composite operations are operations whose flowcharts 
are composed of one or more basic components.
In addition to its basic components,
a composite operation also has a constraint,
which is used to construct its output.
In this section
we define the the composite relational operations
(Tbl.\,\ref{tbl:fole:comp:rel:ops:lim})
related to limits.
Each composite operation defined here
has a dual relational operation 
(Tbl.\,\ref{tbl:fole:comp:rel:ops:colim})
related to colimits.
%
\footnote{
The quotient operation is dual to the co-quotient operation.
The core operation is dual to the co-core operation.
The natural join operation is dual to the data-type join operation.
The generic meet operation is dual to the generic join operation.}
%
For limit operations
we need only a sufficient collection of tables linked by the given collection of signatures
(see Def.\;\ref{def:suff:adequ:lim}).
Fig.\;\ref{fig:routes:flow:lim}
gives the possible routes of flow for limits.

\begin{table}
\begin{center}
{{{\begin{tabular}{c}
\setlength{\extrarowheight}{2pt}
{\scriptsize{$\begin{array}[c]
{|@{\hspace{5pt}}r@{\hspace{10pt}}l@{\hspace{5pt}\in\hspace{4pt}}l@{\hspace{5pt}}|}
\hline
\textbf{quotient:}
&
{\Yright\!}_{\mathcal{A}}(\mathcal{T})
=
\grave{\mathrmbfit{tbl}}_{\mathcal{A}}(\hat{h})(\mathcal{T})
&
\mathrmbf{Tbl}_{\mathcal{A}}(\widehat{\mathcal{S}})
\\
\textbf{core:}
&
\mathcal{T}_{1}{\,{\sqcap}_{\mathcal{S}}}\mathcal{T}_{2} 
= 
\grave{\mathrmbfit{tbl}}_{\mathcal{S}}(g_{1})(\mathcal{T}_{1})
{\;\wedge\;}
\grave{\mathrmbfit{tbl}}_{\mathcal{S}}(g_{2})(\mathcal{T}_{2})
&
\mathrmbf{Tbl}_{\mathcal{S}}(\mathcal{A})
\\
\textbf{natural join:}
&
{\mathcal{T}_{1}}{\,\boxtimes_{\mathcal{A}}}{\mathcal{T}_{2}}
= 
\grave{\mathrmbfit{tbl}}_{\mathcal{A}}(\iota_{1})(\mathcal{T}_{1})
{\;\wedge\;}
\grave{\mathrmbfit{tbl}}_{\mathcal{A}}(\iota_{2})(\mathcal{T}_{2})
&
\mathrmbf{Tbl}_{\mathcal{A}}(\mathcal{S}_{1}{{+}_{\mathcal{S}}}\mathcal{S}_{2})
\\\hline\hline
\textbf{semi-join:}
&
{\mathcal{T}_{1}}{\,\boxleft_{\mathcal{A}}}{\mathcal{T}_{2}} =
\acute{\mathrmbfit{tbl}}_{\mathcal{A}}(\iota_{1})
\bigl(\mathcal{T}_{1}{\,\boxtimes_{\mathcal{A}}}\mathcal{T}_{2}\bigr)
&
\mathrmbf{Tbl}_{\mathcal{A}}(\mathcal{S}_{1})
\\
\textbf{anti-join:}
&
{\mathcal{T}_{1}}{\,\boxslash_{\mathcal{A}}}{\mathcal{T}_{2}} =
\mathcal{T}_{1}{-}
({\mathcal{T}_{1}}{\,\boxleft_{\mathcal{A}}}{\mathcal{T}_{2}})
&
\mathrmbf{Tbl}_{\mathcal{A}}(\mathcal{S}_{1})
\\\hline\hline
\textbf{generic meet:}
&
\prod\mathrmbfit{T} =
\bigwedge \Bigl\{ 
\grave{\mathrmbfit{tbl}}({\hat{h}_{i},\hat{f}_{i},\hat{g}_{i}})(\mathcal{T}_{i}) 
\mid i \in I \Bigr\}
&
\mathrmbf{Tbl}(\widehat{\mathcal{D}})
\\\hline
\end{array}$}}
\end{tabular}}}}
\end{center}
\caption{\texttt{FOLE} Composite Relational Operations for Limits}
\label{tbl:fole:comp:rel:ops:lim}
\end{table}
\begin{figure}
\begin{center}
{{\begin{tabular}{c}
\setlength{\unitlength}{0.65pt}
\begin{picture}(120,90)(100,-20)
\put(104,55){\makebox(0,0){\huge{$\overset{\textit{\scriptsize{restrict}}}{\Rightarrow}$}}}
\put(105,34.8){\makebox(0,0){\huge{${\textit{\scriptsize{inflate}}}$}}}
\put(206,55){\makebox(0,0){\huge{$\overset{\textit{\scriptsize{restrict}}}{\Leftarrow}$}}}
\put(205,34.8){\makebox(0,0){\huge{${\textit{\scriptsize{inflate}}}$}}}
\put(155,30){\makebox(0,0){\huge{$\overset{\textit{\scriptsize{meet}}}{\Downarrow}$}}}
\put(105,4.1){\makebox(0,0){\huge{$\overset{\textit{\scriptsize{project}}}{\Leftarrow}$}}}
\put(105,-17){\makebox(0,0){\huge{${\textit{\scriptsize{expand}}}$}}}
\put(205,4.1){\makebox(0,0){\huge{$\overset{\textit{\scriptsize{project}}}{\Rightarrow}$}}}
\put(205,-17){\makebox(0,0){\huge{${\textit{\scriptsize{expand}}}$}}}
%
\put(90,48){\line(1,0){54}}
\put(144,38){\oval(20,20)[tr]}
\put(166,38){\oval(20,20)[tl]}
\put(220,48){\line(-1,0){54}}
\put(145,-3){\line(-1,0){54}}
\put(165,-3){\line(1,0){54}}
\put(154,9){\line(0,1){29}}
\put(156,9){\line(0,1){29}}
\put(145,7){\oval(20,20)[br]}
\put(165,7){\oval(20,20)[bl]}
\end{picture}
\end{tabular}}}
\end{center}
\caption{Routes of Flow: Limits}
\label{fig:routes:flow:lim}
\end{figure}
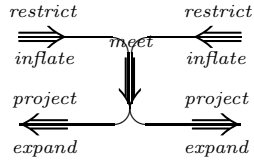
%

%
\newpage
\subsection{Quotient.}\label{sub:sub:sec:quotient}

%
\footnote{Basic components 
(\S\,\ref{sub:sec:base:ops})
are components to be used in flowcharts.
In particular, 
the quotient \underline{composite} operation 
of this section
has a flowchart with only one component --- inflation.
In addition to its one component,
it also has a constraint,
which is used to construct its output.}
%
\begin{figure}
\begin{center}
{{{\begin{tabular}{c}
\begin{picture}(120,40)(55,45)
\setlength{\unitlength}{0.97pt}
\put(96.5,35){\begin{picture}(0,0)(0,0)
\setlength{\unitlength}{0.35pt}
\put(40,10){\line(1,0){60}}
\put(40,70){\line(1,0){60}}
\put(100,70){\line(0,-1){60}}
\put(40,40){\oval(60,60)[bl]}
\put(40,40){\oval(60,60)[tl]}
\put(58,50){\makebox(0,0){\scriptsize{{\textit{{inflate}}}}}}
\put(56,30){\makebox(0,0){\Large{${\Leftarrow}$}}}
\end{picture}}
\put(120,85){\makebox(0,0){\footnotesize{{\textit{{quotient}}}}}}
\put(120,75){\makebox(0,0){\footnotesize{$\Yright$}}}
\put(100,50){\vector(-1,0){20}}
\put(153,50){\vector(-1,0){20}}
\end{picture}
\end{tabular}}}}
\end{center}
\caption{\texttt{FOLE} Quotient Flow Chart}
\label{fig:fole:quotient:flo:chrt}
\end{figure}
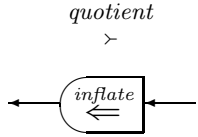
%
In this section, we focus on tables in the context $\mathrmbf{Tbl}(\mathcal{A})$ 
for fixed type domain $\mathcal{A}$.
In this context,
generic meets 
--- for the special case of the equalizer of a parallel pair --- 
are called quotients.
Here, 
we define an equivalence on attributes, specifically on indexes.
In \S\,\ref{sub:sub:sec:co-quotient}
we discuss a dual notion;
there we define an equivalence on data values.
The quotient operation defined here 
is dual to 
the co-quotient operation defined in \S\,\ref{sub:sub:sec:co-quotient}.

\begin{description}
\item[Constraint:] 
Consider a parallel pair of 
$X$-signature morphisms 
$h_{1},h_{2}:\mathcal{S}{\;\leftleftarrows}\;\mathcal{S}'$
in $\mathrmbf{List}(X)$
consisting of 
a parallel pair of index functions 
{\footnotesize{$h_{1},h_{2}:I\;\leftleftarrows\;I'$}}
satisfying
$h_{1} \cdot s = s'$ and
$h_{2} \cdot s = s'$.
This is the constraint for quotient (Tbl.\,\ref{tbl:fole:quotient:input:output}).
\newline
\item[Construction:] 
We can construct 
the coequalizer of this constraint 
in $\mathrmbf{List}(X)$
with quotient $X$-signature
$\widehat{\mathcal{S}} = {\langle{\hat{I},\hat{s}}\rangle}$
and projection $X$-signature morphism
$\widehat{\mathcal{S}}\xleftarrow{\,\hat{h}\,}\mathcal{S}$.
%
\footnote{The coequalizer of 
the parallel pair of index functions 
$h_{1},h_{2}:I\;\leftleftarrows\;I'$
is 
(Mac Lane \cite{maclane:71})
the projection function
$\hat{h} : \widehat{I}=I/E \leftarrow I$
on the quotient set of $I$ by 
the least equivalence relation $E \subset I \times I$
that contains all pairs $(h_{1}(i'),h_{2}(i'))$ for $i' \in I'$.}
%
Each index $\hat{i} \in \hat{I}$ is an equivalence class of pairs 
generated by the relation
$\{ (h_{1}(i'),h_{2}(i')) \mid i' \in I' \}$.
The index function
$\hat{I}\;\xleftarrow{\hat{h}}\;I$
maps an index $i \in I$
to the
equivalence class generated by these pairs.
%
The sort map $\hat{I} \xrightarrow{\hat{s}} X$
is well-defined 
$\hat{s}(\hat{i}) = s(h_{1}(i')) = s(h_{2}(i'))$ 
for any equivalent pair
$(h_{1}(i'),h_{2}(i'))$ for $i' \in I'$.
This is the construction for quotient (Tbl.\,\ref{tbl:fole:quotient:input:output}).
\newline
\item[Input/Output:] 
Consider a table
$\mathcal{T} = {\langle{K,t}\rangle} \in 
\mathrmbf{Tbl}_{\mathcal{A}}(\mathcal{S})$.
This table forms an adequate collection 
(Def.\;\ref{def:suff:adequ:lim})
to compute the equalizer.
This is the input for quotient (Tbl.\,\ref{tbl:fole:quotient:input:output}).
The output is computed with one inflation.
\begin{itemize}
\item 
Inflation 
{\footnotesize{$
\mathrmbf{Tbl}_{\mathcal{A}}(\widehat{\mathcal{S}})
{\;\xleftarrow
{\;\grave{\mathrmbfit{tbl}}_{\mathcal{A}}(\hat{h})\;}\;}
\mathrmbf{Tbl}_{\mathcal{A}}(\mathcal{S})
$}\normalsize}
(\S\,\ref{sub:sub:sec:adj:flow:A})
along the tuple function 
of the $X$-signature morphism
$\widehat{\mathcal{S}}\xleftarrow{\,\hat{h}\,}\mathcal{S}$
maps the table $\mathcal{T}$
to the $\mathcal{A}$-table
$\widehat{\mathcal{T}}
= \grave{\mathrmbfit{tbl}}_{\mathcal{A}}(\hat{h})(\mathcal{T})
= {\langle{\widehat{K},\hat{t}}\rangle} 
\in \mathrmbf{Tbl}_{\mathcal{A}}(\widehat{\mathcal{S}})$,
with its tuple function
$\widehat{K} \xrightarrow{\hat{t}} 
\mathrmbfit{tup}_{\mathcal{A}}(\widehat{\mathcal{S}})$
defined by pullback,
$\grave{k}{\,\cdot\,}t 
= \hat{t}{\,\cdot\,}\mathrmbfit{tup}_{\mathcal{A}}(\hat{h})$. 
%
\footnote{The quotient table $\widehat{\mathcal{T}}$
contains the set of all mergers of tuples in $\mathcal{T}$ 
of equivalence classes of sorts.}
%
This defines the $\mathcal{A}$-table morphism
$\widehat{\mathcal{T}}\xrightarrow{{\langle{\hat{h},\grave{k}}\rangle}}\mathcal{T}$,
which is the output for quotient (Tbl.\,\ref{tbl:fole:quotient:input:output}).
\end{itemize}
\end{description}
%
Quotient is the one-step process
%
\newline\mbox{}\hfill
${\Yright\!}_{\mathcal{A}}(\mathcal{T})
\doteq
\grave{\mathrmbfit{tbl}}_{\mathcal{A}}(\hat{h})(\mathcal{T})$.
\hfill\mbox{}\newline
\comment{
This results in the $\mathcal{A}$-table morphism
{\footnotesize{{$
{\Yright\!}_{\mathcal{A}}(\mathcal{T})
\xleftarrow{\;{\langle{\hat{h},\grave{k}}\rangle}\;} 
\mathcal{T}
$}}\normalsize}
illustrated in 
Fig.\;\ref{fig:quotient}.
}
\begin{aside}
Theoretically
this would represent the equalizer,
the limit 
(see the application discussion for completeness in 
\S\,\ref{sub:sec:lim:colim:tbl})
of a parallel pair 
{\footnotesize{${\langle{h_{1},k_{1}}\rangle},{\langle{h_{2},k_{2}}\rangle} :
\mathcal{T}\;\rightrightarrows\;\mathcal{T}'$}}
of $\mathcal{A}$-table morphisms.
But practically,
we are only given the constraint (parallel pair)
{\footnotesize{$h_{1},h_{2}:\mathcal{S}{\;\leftleftarrows}\;\mathcal{S}'$}}
of $X$-signature morphisms 
and the input 
$\mathcal{T}$
in Tbl.\,\ref{tbl:fole:quotient:input:output}.
Similar comments,
which distinguish the practical from the theoretical, 
hold for the natural join operation
in \S\,\ref{sub:sub:sec:nat:join}.
\end{aside}
\begin{table}
\begin{center}
{{\fbox{\begin{tabular}{c}
\setlength{\extrarowheight}{2pt}
{\scriptsize{$\begin{array}[c]{c@{\hspace{12pt}}l}
h_{1},h_{2}:\mathcal{S}{\;\leftleftarrows}\;\mathcal{S}'
&
\textit{constraint}
\\
\widehat{\mathcal{S}}\xleftarrow{\,\hat{h}\,}\mathcal{S}
&
\textit{construction}
\\
\hline
\mathcal{T}\in\mathrmbf{Tbl}_{\mathcal{A}}(\mathcal{S})
&
\textit{input}
\\
{\Yright\!}_{\mathcal{A}}(\mathcal{T})
\xrightarrow{{\langle{\hat{h},\grave{k}}\rangle}} 
\mathcal{T}
&
\textit{output}
\end{array}$}}
\end{tabular}}}}
\end{center}
\caption{\texttt{FOLE} Quotient I/O}
\label{tbl:fole:quotient:input:output}
\end{table}
\comment{
\begin{figure}
\begin{center}
{{\begin{tabular}{c}
{{\begin{tabular}{c}
\setlength{\unitlength}{0.65pt}
\begin{picture}(120,110)(0,-30)
\put(0,80){\makebox(0,0){\footnotesize{$\widehat{K}$}}}
\put(120,80){\makebox(0,0){\footnotesize{$K$}}}
\put(-5,0){\makebox(0,0){\footnotesize{$
{\mathrmbfit{tup}_{\mathcal{A}}(\widehat{\mathcal{S}})}$}}}
\put(125,0){\makebox(0,0){\footnotesize{$
{\mathrmbfit{tup}_{\mathcal{A}}(\mathcal{S})}$}}}
\put(63,90){\makebox(0,0){\scriptsize{$\grave{k}$}}}
\put(65,-12){\makebox(0,0){\scriptsize{$\mathrmbfit{tup}_{\mathcal{A}}(\hat{h})$}}}
\put(-6,40){\makebox(0,0)[r]{\scriptsize{$\hat{t}$}}}
\put(128,40){\makebox(0,0)[l]{\scriptsize{$t$}}}
\put(0,65){\vector(0,-1){50}}
\put(120,65){\vector(0,-1){50}}
\put(20,80){\vector(1,0){80}}
\put(34,0){\vector(1,0){56}}
%
\qbezier(90,30)(100,30)(110,30)
\qbezier(90,30)(90,20)(90,10)
\put(0,-34){\makebox(0,0){\normalsize{$\underset{\textstyle{
\mathrmbf{Tbl}_{\mathcal{A}}(\widehat{\mathcal{S}})}}{\underbrace{\rule{40pt}{0pt}}}$}}}
\put(120,-34){\makebox(0,0){\normalsize{$\underset{\textstyle{
\mathrmbf{Tbl}_{\mathcal{A}}(\mathcal{S})}}{\underbrace{\rule{40pt}{0pt}}}$}}}
\put(55,45){\makebox(0,0){\huge{
$\overset{\textit{\scriptsize{inflate}}}{\Leftarrow}$}}}
\end{picture}
\end{tabular}}}
\end{tabular}}}
\end{center}
\caption{\texttt{FOLE} Quotient}
\label{fig:quotient}
\end{figure}
}
%

%
\newpage
\subsection{Core.}
\label{sub:sub:sec:core}
%

%
\begin{figure}
\begin{center}
{{{\begin{tabular}{c}
\begin{picture}(160,75)(37,27)
\setlength{\unitlength}{0.97pt}
\put(54,65){\begin{picture}(0,0)(0,0)
\setlength{\unitlength}{0.35pt}
\put(10,10){\line(1,0){60}}
\put(10,70){\line(1,0){60}}
\put(10,70){\line(0,-1){60}}
\put(70,40){\oval(60,60)[br]}
\put(70,40){\oval(60,60)[tr]}
\put(55,50){\makebox(0,0){\scriptsize{{\textit{{restrict}}}}}}
\put(56,30){\makebox(0,0){\Large{${\Rightarrow}$}}}
\end{picture}}
\put(146.5,65){\begin{picture}(0,0)(0,0)
\setlength{\unitlength}{0.35pt}
\put(40,10){\line(1,0){60}}
\put(40,70){\line(1,0){60}}
\put(100,70){\line(0,-1){60}}
\put(40,40){\oval(60,60)[bl]}
\put(40,40){\oval(60,60)[tl]}
\put(58,50){\makebox(0,0){\scriptsize{{\textit{{restrict}}}}}}
\put(56,30){\makebox(0,0){\Large{${\Leftarrow}$}}}
\end{picture}}
\put(98,37){\begin{picture}(0,0)(0,3)
\setlength{\unitlength}{0.35pt}
\put(60,30){\makebox(0,0){\normalsize{$\wedge$}}}
\put(40,10){\line(1,0){40}}
\put(10,70){\line(1,0){100}}
\put(10,70){\line(0,-1){30}}
\put(110,70){\line(0,-1){30}}
\put(40,40){\oval(60,60)[bl]}
\put(80,40){\oval(60,60)[br]}
\put(60,55){\makebox(0,0){\scriptsize{{\textit{{meet}}}}}}
\end{picture}}
\put(120,100){\makebox(0,0){\footnotesize{{\textit{{core}}}}}}
\put(120,88){\makebox(0,0){\large{$\sqcap$}}}
\put(38,80){\line(0,1){20}}
\put(38,80){\vector(1,0){20}}
\put(110,80){\line(-1,0){20}}
\put(110,80){\vector(0,-1){21}}
\put(120,38){\vector(0,-1){15}}
\put(130,80){\vector(0,-1){21}}
\put(130,80){\line(1,0){20}}
\put(203,80){\line(0,1){20}}
\put(203,80){\vector(-1,0){20}}
\end{picture}
\end{tabular}}}}
\end{center}
\caption{\texttt{FOLE} Core Flow Chart}
\label{fig:fole:core:flo:chrt}
\end{figure}
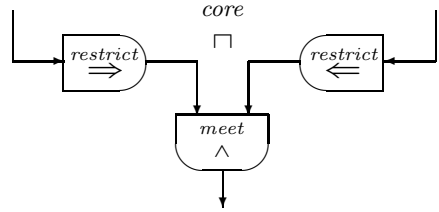
Let $\mathcal{S}$
be a fixed signature.
This section discusses the core operation.
The core operation is somewhat unorthodox, 
since it does not have a construction process. 
However, although it is defined for tables with a fixed signature, 
it computes a limit-like result: 
it uses right adjoint flow (restriction) followed by intersection. 
In fact, it corresponds to the first half of flow 
along a signed domain morphism followed by intersection.
%
\comment{\S\,\ref{sub:sec:comp:ops:type:dom} discusses operations 
that compute a limit. 
Although the core operation 
is defined in the scope 
$\mathrmbf{Tbl}(\mathcal{S})$,
it appears here since it 
computes a limit.}
%
This binary operation gives the core tuples
%
\footnote{A tuple is \emph{core} when 
it appears in both tables and its data values are taken from both data sets.}
%
in a pair of $\mathcal{S}$-tables.
We use the following routes of flow from 
Fig.\;\ref{fig:routes:flow:lim}.

%
\begin{center}
{{\begin{tabular}{c}
\setlength{\unitlength}{0.6pt}
\begin{picture}(320,55)(0,10)
%
%
\put(100,55){\makebox(0,0){\huge{
$\overset{\textit{\scriptsize{restrict}}}{\Rightarrow}$}}}
\put(200,55){\makebox(0,0){\huge{
$\overset{\textit{\scriptsize{restrict}}}{\Leftarrow}$}}}
\put(150,30){\makebox(0,0){\huge{
$\overset{\textit{\scriptsize{meet}}}{\Downarrow}$}}}
%
\put(90,48){\line(1,0){54}}
\put(144,38){\oval(20,20)[tr]}
\put(166,38){\oval(20,20)[tl]}
\put(220,48){\line(-1,0){54}}
\put(154,9){\line(0,1){29}}
\put(156,9){\line(0,1){29}}
\put(155,6){\vector(0,-1){10}}
\end{picture}
\end{tabular}}}
\end{center}
\comment{
Let $\mathcal{S}$ be a signature.
We are given 
two $\mathcal{S}$-tables 
$\mathcal{T}_{1} = {\langle{\mathcal{A}_{1},K_{1},t_{1}}\rangle}$ and
$\mathcal{T}_{2} = {\langle{\mathcal{A}_{2},K_{2},t_{2}}\rangle}$
connected 
through a third type domain $\mathcal{A}$
by an 
$X$-sorted type domain opspan
$\mathcal{A}_{1}
\xrightarrow{\;g_{1}\,} 
\mathcal{A}
\xleftarrow{\;g_{2}\,} 
\mathcal{A}_{2}$,
which consists of an span of data value functions
$Y_{1}\xleftarrow{\;g_{1}\;}Y\xrightarrow{\;g_{2}\;}Y_{2}$. 
%
}
We first restrict the data types of the tables to the common data values,
and then we intersect.
%
\footnote{The core operation is the  dual of the co-core operation.}
%

%
\begin{description}
\item[Constraint/Construction:] 
Consider an $X$-sorted type domain opspan
$\mathcal{A}_{1}
\xrightarrow{\;g_{1}\,} 
\mathcal{A}
\xleftarrow{\;g_{2}\,} 
\mathcal{A}_{2}$
consisting of an span of data value functions
$Y_{1}\xleftarrow{\;g_{1}\;}Y\xrightarrow{\;g_{2}\;}Y_{2}$. 
This is both the constraint and the construction for core 
(Tbl.\,\ref{tbl:fole:core:input:output}).
We deviate from orthodoxy (limit construction) at this step.
\newline
\item[Input:] 
Consider a pair of tables
$\mathcal{T}_{1} = {\langle{K_{1},t_{1}}\rangle} \in 
\mathrmbf{Tbl}_{\mathcal{S}}(\mathcal{A}_{1})$
and
$\mathcal{T}_{2} = {\langle{K_{2},t_{2}}\rangle} \in 
\mathrmbf{Tbl}_{\mathcal{S}}(\mathcal{A}_{2})$.
This is the input for core 
(Tbl.\,\ref{tbl:fole:core:input:output}).
\newline
\item[Output:] 
The output is restriction (twice) followed by meet.
\begin{itemize}
\item 
Restriction 
{\footnotesize{$\mathrmbf{Tbl}_{\mathcal{S}}(\mathcal{A}_{1})
\xrightarrow[\tilde{0}_{2}^{\ast}]
{\;\grave{\mathrmbfit{tbl}}_{\mathcal{S}}(g_{1})\;}
\mathrmbf{Tbl}_{\mathcal{S}}(\mathcal{A})$}\normalsize}
(\S\,\ref{sub:sub:sec:adj:flow:S})
along the tuple function 
of the 
$X$-sorted type domain morphism
$\mathcal{A}_{1}\xrightarrow{\;g_{1}\,}\mathcal{A}$
maps the $\mathcal{S}$-table
$\mathcal{T}_{1}=
{\langle{K_{1},t_{1}}\rangle} \in \mathrmbf{Tbl}_{\mathcal{S}}(\mathcal{A}_{1})$
to the $\mathcal{S}$-table
$\grave{\mathrmbfit{tbl}}_{\mathcal{S}}(g)(\mathcal{T}_{1})
= \widehat{\mathcal{T}}_{1}
={\langle{\widehat{K}_{1},\hat{t}_{1}}\rangle} 
\in \mathrmbf{Tbl}_{\mathcal{S}}(\mathcal{A})$,
with its tuple function
$\widehat{K}_{1}
\xrightarrow{\hat{t}_{1}}
\mathrmbfit{tup}_{\mathcal{S}}(\mathcal{A})$
defined by pullback,
$\grave{k}_{1}{\;\cdot\;}t_{1} 
= \hat{t}_{1}{\;\cdot\;}\mathrmbfit{tup}_{\mathcal{S}}(g_{1})$. 
This is linked to the table $\mathcal{T}_{1}$ 
by the $\mathcal{S}$-table morphism 
{\mbox{\footnotesize{$
\mathcal{T}'={\langle{\mathcal{A}',K',t'}\rangle}
\xleftarrow{{\langle{g_{1},\grave{k}_{1}}\rangle}}
{\langle{\widehat{K}_{1},\hat{t}_{1}}\rangle}=\widehat{\mathcal{T}}_{1}
$}\normalsize}}
%
Similarly for $\mathcal{S}$-table
$\mathcal{T}_{2} = {\langle{K_{2},t_{2}}\rangle} \in 
\mathrmbf{Tbl}_{\mathcal{S}}(\mathcal{A}_{2})$.
%
\item
Intersection (\S\,\ref{sub:sub:sec:boole})
of the two restriction tables $\widehat{\mathcal{T}}_{1}$ 
and $\widehat{\mathcal{T}}_{2}$ 
in the context $\mathrmbf{Tbl}_{\mathcal{S}}(\mathcal{A})$
defines the core table
$\mathcal{T}_{1}{\;{\sqcap}_{\mathcal{S}}\;}\mathcal{T}_{2}
= \widehat{\mathcal{T}}_{2}{\,\wedge\,}\widehat{\mathcal{T}}_{2}
= {\langle{\widehat{K}_{1}{\times}\widehat{K}_{2},{(\hat{t}_{1},\hat{t}_{2})}}\rangle}$,
whose key set $\widehat{K}_{1}{\times}\widehat{K}_{2}$ is the product and 
whose tuple map 
$\widehat{K}_{1}{\times}\widehat{K}_{2}
\xrightarrow{[\hat{t}_{1},\hat{t}_{2}]}
\mathrmbfit{tup}_{\mathcal{S}}(\mathcal{A})$
maps a pair of keys 
$(\hat{k}_{1},\hat{k}_{2}) \in \widehat{K}_{1}{\times}\widehat{K}_{2}$
to the common tuple
$\hat{t}_{1}(\hat{k}_{1})=\hat{t}_{2}(\hat{k}_{2}) 
\in \mathrmbfit{tup}_{\mathcal{A}}(\mathcal{S})$.
Intersection is the product in $\mathrmbf{Tbl}_{\mathcal{S}}(\mathcal{A})$
with span
$\widehat{\mathcal{T}}_{1}\xleftarrow{\hat{\pi}_{1}}
\widehat{\mathcal{T}}_{1}{\,\wedge\,}\widehat{\mathcal{T}}_{2}
\xrightarrow{\hat{\pi}_{2}}\widehat{\mathcal{T}}_{2}$.
\newline
\end{itemize}
Restriction composed with meet 
defines the span of $\mathcal{A}$-table morphisms
\[\mbox{\footnotesize{
{$\mathcal{T}_{1}
\xleftarrow[\;{\langle{g_{1},\grave{k}_{1}}\rangle}{\circ\,}\check{\pi}_{1}\;]
{\;{\langle{g_{1},\hat{k}_{1}}\rangle}\;} 
\mathcal{T}_{1}{\;{\sqcap}_{\mathcal{S}}\;}\mathcal{T}_{2}
\xrightarrow[\;{\langle{g_{2},\grave{k}_{2}}\rangle}{\circ\,}\hat{\pi}_{2}\;]
{\;{\langle{g_{2},\hat{k}_{1}}\rangle}\;} 
\mathcal{T}_{2}
$,}}\normalsize}\]
which is the output for core (Tbl.\,\ref{tbl:fole:core:input:output}).
\end{description}
%
Core 
is restriction followed by meet (conjunction or intersection).
This is the two-step process 
\newline\mbox{}\hfill
\rule[-10pt]{0pt}{26pt}
$\mathcal{T}_{1}{\;{\sqcap}_{\mathcal{S}}\;}\mathcal{T}_{2}
\doteq  
\grave{\mathrmbfit{tbl}}_{\mathcal{S}}(g_{1})(\mathcal{T}_{1})
{\;\wedge\;}
\grave{\mathrmbfit{tbl}}_{\mathcal{S}}(g_{2})(\mathcal{T}_{2})$.
%
\footnote{The core in \S\,\ref{sub:sub:sec:core},
$\mathcal{T}_{1}
\xleftarrow
{\;{\langle{g_{1},\hat{k}_{1}}\rangle}\;} 
\mathcal{T}_{1}{\;{\sqcap}_{\mathcal{S}}\;}\mathcal{T}_{2}
\xrightarrow
{\;{\langle{g_{2},\hat{k}_{1}}\rangle}\;} 
\mathcal{T}_{2}$,
is homogeneous with and has a direct connection
to both tables $\mathcal{T}_{1}$ and $\mathcal{T}_{2}$. 
This is comparable with
the co-core (\S\,\ref{sub:sub:sec:co-core})
$\mathcal{T}_{1}
\xrightarrow
{\;{\langle{h_{1},\check{\iota}_{1}}\rangle}\;} 
\mathcal{T}_{1}{\;{\cup}_{\mathcal{S}}\;}\mathcal{T}_{2}
\xleftarrow
{\;{\langle{h_{2},\check{\iota}_{1}}\rangle}\;} 
\mathcal{T}_{2}$,
which is homogeneous with and has a direct connection
to both tables $\mathcal{T}_{1}$ and $\mathcal{T}_{2}$.}
%
\hfill\mbox{}\newline

\begin{table}
\begin{center}
{{\fbox{\begin{tabular}{c}
\setlength{\extrarowheight}{2pt}
{\scriptsize{$\begin{array}[c]{c@{\hspace{12pt}}l}
\mathcal{A}_{1}\xrightarrow{\;g_{1}\,}
\mathcal{A}\xleftarrow{\;g_{2}\,}\mathcal{A}_{2}
&
\textit{constraint}
\\
&
\textit{/construction}
\\
\hline
\mathcal{T}_{1}\in\mathrmbf{Tbl}_{\mathcal{S}}(\mathcal{A}_{1})
\text{ and }
\mathcal{T}_{1}\in\mathrmbf{Tbl}_{\mathcal{S}}(\mathcal{A}_{1})
&
\textit{input}
\\
\mathcal{T}_{1}
\xleftarrow[\;{\langle{g_{1},\grave{k}_{1}}\rangle}{\circ\,}\check{\pi}_{1}\;]
{\;{\langle{g_{1},\hat{k}_{1}}\rangle}\;} 
\mathcal{T}_{1}{\;{\sqcap}_{\mathcal{S}}\;}\mathcal{T}_{2}
\xrightarrow[\;{\langle{g_{2},\grave{k}_{2}}\rangle}{\circ\,}\hat{\pi}_{2}\;]
{\;{\langle{g_{2},\hat{k}_{1}}\rangle}\;} 
\mathcal{T}_{2}
&
\textit{output}
\end{array}$}}
\end{tabular}}}}
\end{center}
\caption{\texttt{FOLE} Core I/O}
\label{tbl:fole:core:input:output}
\end{table}
\comment{
\begin{figure}
\begin{center}
{{\begin{tabular}{c}
{{\begin{tabular}{c}
\setlength{\unitlength}{0.63pt}
\begin{picture}(320,160)(0,-5)
\put(0,80){\makebox(0,0){\footnotesize{$K_{1}$}}}
\put(100,80){\makebox(0,0){\footnotesize{$K_{1}$}}}
\put(220,80){\makebox(0,0){\footnotesize{$K_{2}$}}}
\put(320,80){\makebox(0,0){\footnotesize{$K_{2}$}}}
\put(162,150){\makebox(0,0){\footnotesize{$K_{1}{+}K_{2}$}}}
\put(-10,0){\makebox(0,0){\footnotesize{$
{\mathrmbfit{tup}_{\mathcal{A}}(\mathcal{S}_{1})}$}}}
\put(330,0){\makebox(0,0){\footnotesize{$
{\mathrmbfit{tup}_{\mathcal{A}}(\mathcal{S}_{2})}$}}}
\put(160,0){\makebox(0,0){\footnotesize{$
{\mathrmbfit{tup}_{\mathcal{A}}(\mathcal{S})}$}}}
\put(80,-12){\makebox(0,0){\scriptsize{$\mathrmbfit{tup}_{\mathcal{A}}(h_{1})$}}}
\put(240,-12){\makebox(0,0){\scriptsize{$\mathrmbfit{tup}_{\mathcal{A}}(h_{2})$}}}
\put(-6,40){\makebox(0,0)[r]{\scriptsize{$t_{1}$}}}
\put(125,40){\makebox(0,0)[r]{\scriptsize{$t'_{1}$}}}
\put(200,40){\makebox(0,0)[l]{\scriptsize{$t'_{2}$}}}
\put(55,80){\makebox(0,0){\scriptsize{$=$}}}
\put(270,80){\makebox(0,0){\scriptsize{$=$}}}
\put(165,60){\makebox(0,0)[l]{\scriptsize{$[t'_{1},t'_{2}]$}}}
\put(123,120){\makebox(0,0)[r]{\scriptsize{$i_{1}$}}}
\put(200,120){\makebox(0,0)[l]{\scriptsize{$i_{2}$}}}
\put(327,40){\makebox(0,0)[l]{\scriptsize{$t_{2}$}}}
\put(0,65){\vector(0,-1){50}}
\put(320,65){\vector(0,-1){50}}
\put(105,65){\vector(1,-1){45}}
\put(215,65){\vector(-1,-1){45}}
\put(160,130){\line(0,-1){40}}
\put(160,65){\vector(0,-1){40}}
\put(105,95){\vector(1,1){43}}
\put(215,95){\vector(-1,1){43}}
\put(30,0){\vector(1,0){90}}
\put(290,0){\vector(-1,0){90}}
%
%
%
\put(60,45){\makebox(0,0){\huge{
$\overset{\textit{\scriptsize{project}}}{\Rightarrow}$}}}
\put(160,78){\makebox(0,0){${\scriptsize{join}}$}}
\put(250,45){\makebox(0,0){\huge{
$\overset{\textit{\scriptsize{project}}}{\Leftarrow}$}}}
\end{picture}
\end{tabular}}}
\\\\\\
{{\begin{tabular}{c}
\setlength{\unitlength}{0.6pt}
\begin{picture}(200,70)(-100,60)
\put(-130,60){\makebox(0,0){\footnotesize{$\mathcal{T}_{1}$}}}
\put(-60,60){\makebox(0,0){\footnotesize{$\mathcal{T}'_{1}$}}}
\put(0,123){\makebox(0,0){\footnotesize{$
\mathcal{T}_{1}{\;{\cup}_{\mathcal{S}}\;}\mathcal{T}_{2}$}}}
\put(63,60){\makebox(0,0){\footnotesize{$\mathcal{T}'_{2}$}}}
\put(133,60){\makebox(0,0){\footnotesize{$\mathcal{T}_{2}$}}}
\put(-90,70){\makebox(0,0){\scriptsize{${\langle{h_{1},1}\rangle}$}}}
\put(94,70){\makebox(0,0){\scriptsize{${\langle{h_{2},1}\rangle}$}}}
\put(-37,93){\makebox(0,0)[r]{\scriptsize{$i_{1}$}}}
\put(37,93){\makebox(0,0)[l]{\scriptsize{$i_{2}$}}}
\put(-120,60){\vector(1,0){50}}
\put(120,60){\vector(-1,0){50}}
\put(-50,70){\vector(1,1){40}}
\put(50,70){\vector(-1,1){40}}
\end{picture}
\\
\hspace{3pt}
in $\mathrmbf{Tbl}(\mathcal{S})$
\\\\
$\mathcal{T}_{1}{\;{\cup}_{\mathcal{S}}\;}\mathcal{T}_{2} = 
\acute{\mathrmbfit{tbl}}_{\mathcal{A}}(h_{1})(\mathcal{T}_{1})
{\;\vee\;}
\acute{\mathrmbfit{tbl}}_{\mathcal{A}}(h_{2})(\mathcal{T}_{2})$ 
\end{tabular}}}
\end{tabular}}}
\end{center}
\caption{\texttt{FOLE} Co-core}
\label{fig:fole:co-core}
\end{figure}
{\fbox{This is in the wrong (dual) spot!}}
It is for the Co-core, not the core.
}
%


%
\newpage
\subsection{Natural Join.}\label{sub:sub:sec:nat:join}
\paragraph{Natural Join.}
%
\footnote{For a brief discussion 
of natural join, see \S\;4.4 of 
``The {\ttfamily FOLE} Table'' 
\cite{kent:fole:era:tbl}.}

%
\begin{figure}
\begin{center}
{{{\begin{tabular}{c}
\begin{picture}(160,75)(37,27)
\setlength{\unitlength}{0.97pt}
\put(54,65){\begin{picture}(0,0)(0,0)
\setlength{\unitlength}{0.35pt}
\put(10,10){\line(1,0){60}}
\put(10,70){\line(1,0){60}}
\put(10,70){\line(0,-1){60}}
\put(70,40){\oval(60,60)[br]}
\put(70,40){\oval(60,60)[tr]}
\put(55,50){\makebox(0,0){\scriptsize{{\textit{{inflate}}}}}}
\put(56,30){\makebox(0,0){\Large{${\Rightarrow}$}}}
\end{picture}}
\put(146.5,65){\begin{picture}(0,0)(0,0)
\setlength{\unitlength}{0.35pt}
\put(40,10){\line(1,0){60}}
\put(40,70){\line(1,0){60}}
\put(100,70){\line(0,-1){60}}
\put(40,40){\oval(60,60)[bl]}
\put(40,40){\oval(60,60)[tl]}
\put(58,50){\makebox(0,0){\scriptsize{{\textit{{inflate}}}}}}
\put(56,30){\makebox(0,0){\Large{${\Leftarrow}$}}}
\end{picture}}
\put(98,37){\begin{picture}(0,0)(0,3)
\setlength{\unitlength}{0.35pt}
\put(60,30){\makebox(0,0){\normalsize{$\wedge$}}}
\put(40,10){\line(1,0){40}}
\put(10,70){\line(1,0){100}}
\put(10,70){\line(0,-1){30}}
\put(110,70){\line(0,-1){30}}
\put(40,40){\oval(60,60)[bl]}
\put(80,40){\oval(60,60)[br]}
\put(60,55){\makebox(0,0){\scriptsize{{\textit{{meet}}}}}}
\end{picture}}
\put(120,100){\makebox(0,0){\footnotesize{{\textit{{natural join}}}}}}
\put(120,90){\makebox(0,0){\large{$\boxtimes$}}}
\put(38,80){\line(0,1){20}}
\put(38,80){\vector(1,0){20}}
\put(110,80){\line(-1,0){20}}
\put(110,80){\vector(0,-1){21}}
\put(120,38){\vector(0,-1){15}}
\put(130,80){\vector(0,-1){21}}
\put(130,80){\line(1,0){20}}
\put(203,80){\line(0,1){20}}
\put(203,80){\vector(-1,0){20}}
\end{picture}
\end{tabular}}}}
\end{center}
\caption{\texttt{FOLE} Natural Join Flow Chart}
\label{fig:fole:nat:join:flo:chrt}
\end{figure}
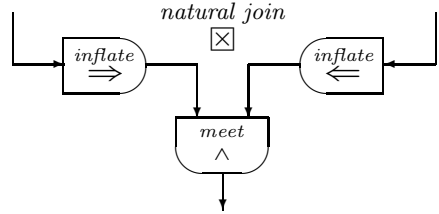
The natural join for tables 
is the relational counterpart of
the logical conjunction for predicates.
Where 
the \emph{meet} operation (\S\,\ref{sub:sub:sec:boole}) is the analogue for logical conjunction 
at the small scope $\mathrmbf{Tbl}(\mathcal{D})$ of a signed domain table fiber,
the \emph{natural join} is defined at the intermediate scope 
$\mathrmbf{Tbl}(\mathcal{A})$ of a type domain table fiber,
and the \emph{generic meet} (\S\,\ref{sub:sub:sec:generic:meet})
is defined at the large scope $\mathrmbf{Tbl}$ of all tables.
We identify these three concepts as limits at different scopes.
%

In this section, we focus on tables in the context $\mathrmbf{Tbl}(\mathcal{A})$ 
for fixed type domain $\mathcal{A}$.
In this context,
generic meets 
--- for the special case of pullback --- 
are called natural joins, 
the join of two $\mathcal{A}$-tables.
As we observed in \cite{kent:fole:era:tbl},
these limits are resolvable into inflations
(called substitutions there)
followed by meet.
We use the following routes of flow from 
Fig.\;\ref{fig:routes:flow:lim}.
%
\begin{center}
{{\begin{tabular}{c}
\setlength{\unitlength}{0.6pt}
\begin{picture}(320,60)(0,5)
%
%
\put(100,55){\makebox(0,0){\huge{
$\overset{\textit{\scriptsize{inflate}}}{\Rightarrow}$}}}
\put(200,55){\makebox(0,0){\huge{
$\overset{\textit{\scriptsize{inflate}}}{\Leftarrow}$}}}
\put(150,30){\makebox(0,0){\huge{
$\overset{\textit{\scriptsize{meet}}}{\Downarrow}$}}}
%
\put(90,48){\line(1,0){54}}
\put(144,38){\oval(20,20)[tr]}
\put(166,38){\oval(20,20)[tl]}
\put(220,48){\line(-1,0){54}}
\put(154,9){\line(0,1){29}}
\put(156,9){\line(0,1){29}}
\put(155,6){\vector(0,-1){10}}
\end{picture}
\end{tabular}}}
\end{center}
The natural join operation is dual to the data-type join operation of 
\S\,\ref{sub:sub:sec:boole:join}.
Similar to data-type join,
we can define natural join for any number of tables 
$\{ \mathcal{T}_{1}, \mathcal{T}_{2}, \mathcal{T}_{3}, \cdots , \mathcal{T}_{n} \}$
with a comparable constraint.
%
\footnote{Natural join is a limit (meet-like) operation.
To fit better with the limit-colimit duality in this paper
(see Tbl.\,\ref{fig:fole:comp:ops}),
we modify the traditional symbol `$\bowtie$' for natural join,
using the symbol `$\boxtimes$' instead.}
%
%
\begin{description}
\item[Constraint:] 
Consider an $X$-sorted signature span 
$\mathcal{S}_{1}\xleftarrow{h_{1}}\mathcal{S}\xrightarrow{h_{2}}\mathcal{S}_{2}$
in $\mathrmbf{List}(X)$
consisting of 
a span of index functions
$I_{1}\xleftarrow{h_{1}}I\xrightarrow{h_{2}}I_{2}$.
This is the constraint for natural join (Tbl.\,\ref{tbl:fole:natural:join:input:output}).
\newline
\item[Construction:] 
%
The pushout of this constraint
in $\mathrmbf{List}(X)$
is the opspan
{\footnotesize{$\mathcal{S}_{1}\xrightarrow{\iota_{1}\,} 
{\mathcal{S}_{1}{\!+_{\mathcal{S}}}\mathcal{S}_{2}}
\xleftarrow{\;\iota_{2}}\mathcal{S}_{2}$}\normalsize}
of injection $X$-signature morphisms
with pushout signature
$\mathcal{S}_{1}{+_{\mathcal{S}}}\mathcal{S}_{2}$
and
index function opspan
{\footnotesize{$I_{1} 
\xrightarrow{\iota_{1}\,} 
{\langle{I_{1}{+}_{I}I_{2},[s_{1},s_{2}]}\rangle}
\xleftarrow{\;\iota_{2}}I_{2}.$}\normalsize}
This is the construction for natural join (Tbl.\,\ref{tbl:fole:natural:join:input:output}).
\newline
\item[Input:] 
Consider a pair of tables
$\mathcal{T}_{1} = {\langle{K_{1},t_{1}}\rangle} \in 
\mathrmbf{Tbl}_{\mathcal{A}}(\mathcal{S}_{1})$
and
$\mathcal{T}_{2} = {\langle{K_{2},t_{2}}\rangle} \in 
\mathrmbf{Tbl}_{\mathcal{A}}(\mathcal{S}_{2})$.
These two tables form an adequate collection 
(Def.\;\ref{def:suff:adequ:lim})
to compute the pullback.
This is the input for natural join (Tbl.\,\ref{tbl:fole:natural:join:input:output}).
\newline
\item[Output:] 
The output is inflation (twice) followed by meet.
\begin{itemize}
\item 
Inflation 
{\footnotesize{$
\mathrmbf{Tbl}_{\mathcal{A}}(\mathcal{S}_{1})
{\;\xrightarrow
{\;\grave{\mathrmbfit{tbl}}_{\mathcal{A}}(\iota_{1})\;}\;}
\mathrmbf{Tbl}_{\mathcal{A}}({\mathcal{S}{+_{\mathcal{S}}}\mathcal{S}_{2}})
$}\normalsize}
(\S\,\ref{sub:sub:sec:adj:flow:A})
along the tuple function 
of the $X$-signature morphism
{\footnotesize{$\mathcal{S}_{1}\xrightarrow{\iota_{1}\,} 
{\mathcal{S}_{1}{\!+_{\mathcal{S}}}\mathcal{S}_{2}}$}\normalsize}
maps the table $\mathcal{T}_{1}$
to the table
$\widehat{\mathcal{T}}_{1}
= \grave{\mathrmbfit{tbl}}_{\mathcal{A}}(\iota_{1})(\mathcal{T}_{1})
= {\langle{\widehat{K}_{1},\hat{t}_{1}}\rangle} 
\in \mathrmbf{Tbl}_{\mathcal{A}}({\mathcal{S}_{1}{\!+_{\mathcal{S}}}\mathcal{S}_{2}})$,
with its tuple function
$\widehat{K} \xrightarrow{\hat{t}} 
\mathrmbfit{tup}_{\mathcal{A}}({\mathcal{S}_{1}{\!+_{\mathcal{S}}}\mathcal{S}_{2}})$
defined by pullback,
$\grave{k}_{1}{\,\cdot\,}t_{1} 
= \hat{t}_{1}{\,\cdot\,}\mathrmbfit{tup}_{\mathcal{A}}(\iota_{1})$. 
This is linked to the table $\mathcal{T}_{1}$ 
by the $\mathcal{A}$-table morphism 
{\footnotesize{{$
\mathcal{T}_{1} = {\langle{\mathcal{S}_{1},K_{1},t_{1}}\rangle}
\xleftarrow{{\langle{\iota_{1},\grave{k}_{1}}\rangle}} 
{\langle{{\mathcal{S}_{1}{\!+_{\mathcal{S}}}\mathcal{S}_{2}},\widehat{K}_{1},\hat{t}_{1}}\rangle} 
= \widehat{\mathcal{T}}_{1}
$.}}\normalsize}
Similarly for $\mathcal{A}$-table
$\mathcal{T}_{2} = {\langle{K_{2},t_{2}}\rangle} \in 
\mathrmbf{Tbl}_{\mathcal{A}}(\mathcal{S}_{2})$.
\newline
\item 
Intersection (\S\,\ref{sub:sub:sec:boole})
of the two inflation tables $\widehat{\mathcal{T}}_{1}$
and $\widehat{\mathcal{T}}_{2}$
in the context 
$\mathrmbf{Tbl}_{\mathcal{A}}({\mathcal{S}_{1}{\!+_{\mathcal{S}}}\mathcal{S}_{2}})$
defines the natural join table
${\mathcal{T}_{1}}{\,\boxtimes_{\mathcal{A}}}{\mathcal{T}_{2}}
=
\widehat{\mathcal{T}}_{1}\wedge\widehat{\mathcal{T}}_{2} 
= {\langle{\widehat{K}_{12},{(\hat{t}_{1},\hat{t}_{2})}}\rangle}$,
whose key set $\widehat{K}_{12}$ is the pullback and 
whose tuple map is the mediating function 
$\widehat{K}_{12} 
\xrightarrow{{(\hat{t}_{1},\hat{t}_{2})}}
\mathrmbfit{tup}_{\mathcal{A}}({\mathcal{S}_{1}{\!+_{\mathcal{S}}}\mathcal{S}_{2}})$
of the opspan
$\widehat{K}_{1}\xrightarrow{\hat{t}_{1}}
\mathrmbfit{tup}_{\mathcal{A}}({\mathcal{S}_{1}{\!+_{\mathcal{S}}}\mathcal{S}_{2}})
\xleftarrow{\hat{t}_{2}}\widehat{K}_{2}$,
resulting in the span 
{\footnotesize{
{$
\widehat{\mathcal{T}}_{1}
\xleftarrow{\;\hat{\pi}_{1}\;} 
\mathcal{T}_{1}{\,\boxtimes_{\mathcal{A}}}\mathcal{T}_{2}
\xrightarrow{\;\hat{\pi}_{2}\;} 
\widehat{\mathcal{T}}_{2}
$.}}\normalsize}
%
%
\end{itemize}
\mbox{}\newline
Inflation composed with meet 
defines the span of $\mathcal{A}$-table morphisms
\[\mbox{\footnotesize{
{$\mathcal{T}_{1}
\xleftarrow[\;\hat{\pi}_{1}{\circ\,}{\langle{\iota_{1},\grave{k}_{1}}\rangle}\;]
{\;{\langle{\iota_{1},\hat{k}_{1}}\rangle}\;} 
\mathcal{T}_{1}{\,\boxtimes_{\mathcal{A}}}\mathcal{T}_{2}
\xrightarrow[\;\hat{\pi}_{2}{\circ\,}{\langle{\iota_{2},\grave{k}_{2}}\rangle}\;]
{\;{\langle{\iota_{2},\hat{k}_{2}}\rangle}\;} 
\mathcal{T}_{2}
$,}}\normalsize}\]
%
which is the output for natural join (Tbl.\,\ref{tbl:fole:natural:join:input:output}).
\end{description}
%
Natural join 
is inflation followed by meet.
This is the two-step process 
\newline\mbox{}\hfill
\rule[-10pt]{0pt}{26pt}
$\mathcal{T}_{1}{\,\boxtimes}_{\mathcal{A}}\mathcal{T}_{2}
\doteq
\grave{\mathrmbfit{tbl}}_{\mathcal{A}}(\iota_{1})(\mathcal{T}_{1})
{\;\wedge\;}
\grave{\mathrmbfit{tbl}}_{\mathcal{A}}(\iota_{2})(\mathcal{T}_{2})$.
%
{\footnote{
The natural join is empty, if one of the arguments is empty.}} 
%
\hfill\mbox{}\newline

%
\begin{aside}
Theoretically
this would represent pullback,
the limit 
(see the application discussion for completeness in 
\S\,\ref{sub:sec:lim:colim:tbl})
of an opspan 
{\footnotesize{$\mathcal{T}_{1}\xrightarrow{\langle{h_{1},k_{1}}\rangle} 
\mathcal{T}
\xleftarrow{\langle{h_{2},k_{2}}\rangle}\mathcal{T}_{2}$}}
of $\mathcal{A}$-tables.
But practically,
we are only given 
%
\footnote{In practice, 
the natural join is commonly understood to be 
the set of all combinations of tuples in $\mathcal{T}_{1}$ and $\mathcal{T}_{2}$ 
that are equal on their common attribute names.}
%
the constraint (span)
{\footnotesize{
$\mathcal{S}_{1}\xleftarrow{h_{1}}\mathcal{S}\xrightarrow{h_{2}}\mathcal{S}_{2}$}}
of $X$-sorted signatures
and the input tables 
$\mathcal{T}_{1}\in\mathrmbf{Tbl}_{\mathcal{A}}(\mathcal{S}_{1})$
and 
$\mathcal{T}_{1}\in\mathrmbf{Tbl}_{\mathcal{A}}(\mathcal{S}_{1})$
in Tbl.\,\ref{tbl:fole:natural:join:input:output}.
Similar comments,
which distinguish the practical from the theoretical, 
hold for the quotient operation
in \S\,\ref{sub:sub:sec:quotient}.
\end{aside}
\begin{table}
\begin{center}
{{\fbox{\begin{tabular}{c}
\setlength{\extrarowheight}{2pt}
{\scriptsize{$\begin{array}[c]{c@{\hspace{12pt}}l}
\mathcal{S}_{1}\xleftarrow{h_{1}}\mathcal{S}\xrightarrow{h_{2}}\mathcal{S}_{2}
&
\textit{constraint}
\\
\mathcal{S}_{1} \xrightarrow{\iota_{1}\,} 
{\mathcal{S}_{1}{\!+_{\mathcal{S}}}\mathcal{S}_{2}}
\xleftarrow{\;\iota_{2}}\mathcal{S}_{2}
&
\textit{construction}
\\
\hline
\mathcal{T}_{1}\in\mathrmbf{Tbl}_{\mathcal{A}}(\mathcal{S}_{1})
\text{ and }
\mathcal{T}_{2}\in\mathrmbf{Tbl}_{\mathcal{A}}(\mathcal{S}_{2})
&
\textit{input}
\\
\mathcal{T}_{1}
\xleftarrow{{\langle{\iota_{1},\hat{k}_{1}}\rangle}} 
\mathcal{T}_{1}{\,\boxtimes_{\mathcal{A}}}\mathcal{T}_{2}
\xrightarrow{{\langle{\iota_{2},\hat{k}_{2}}\rangle}} 
\mathcal{T}_{2}
&
\textit{output nat-join}
\\
\cline{2-2}
\mathcal{T}_{1}
\xleftarrow{\hat{k}_{1}} 
\mathcal{T}_{1}{\,\boxleft_{\mathcal{A}}}\mathcal{T}_{2}
\text{ \underline{or} }
\mathcal{T}_{1}{\,\boxright_{\mathcal{A}}}\mathcal{T}_{2}
\xrightarrow{\hat{k}_{2}} 
\mathcal{T}_{2}
&
\textit{output semi-join}
\\
\cline{2-2}
\mathcal{T}_{1}
\xleftarrow{\bar{\omega}_{1}} 
\mathcal{T}_{1}{\,\boxslash_{\mathcal{A}}}\mathcal{T}_{2}
\text{ \underline{or} }
\mathcal{T}_{1}{\,\boxbackslash_{\mathcal{A}}}\mathcal{T}_{2}
\xrightarrow{\bar{\omega}_{2}} 
\mathcal{T}_{2}
&
\textit{output anti-join}
\end{array}$}}
\end{tabular}}}}
\end{center}
\caption{\texttt{FOLE} Natural Join I/O}
\label{tbl:fole:natural:join:input:output}
\end{table}
\comment{
\begin{center}
{{\begin{tabular}{c}
{{\begin{tabular}{c}
\setlength{\unitlength}{0.56pt}
\begin{picture}(320,160)(0,-5)
\put(0,80){\makebox(0,0){\footnotesize{$K_{1}$}}}
\put(100,80){\makebox(0,0){\footnotesize{$\widehat{K}_{1}$}}}
\put(220,80){\makebox(0,0){\footnotesize{$\widehat{K}_{2}$}}}
\put(324,80){\makebox(0,0){\footnotesize{$K_{2}$}}}
\put(164,148){\makebox(0,0){\footnotesize{$\widehat{K}_{12}$}}}
\put(-10,0){\makebox(0,0){\footnotesize{$
{\mathrmbfit{tup}_{\mathcal{A}}(\mathcal{S}_{1})}$}}}
\put(330,0){\makebox(0,0){\footnotesize{$
{\mathrmbfit{tup}_{\mathcal{A}}(\mathcal{S}_{2})}$}}}
\put(160,0){\makebox(0,0){\footnotesize{$
{\mathrmbfit{tup}_{\mathcal{A}}({\mathcal{S}_{1}{\!+_{\mathcal{S}}}\mathcal{S}_{2}})}$}}}
\put(55,90){\makebox(0,0){\scriptsize{$\grave{k}_{1}$}}}
\put(265,90){\makebox(0,0){\scriptsize{$\grave{k}_{2}$}}}
\put(67,-12){\makebox(0,0){\scriptsize{$
\mathrmbfit{tup}_{\mathcal{A}}(\iota_{1})$}}}
\put(260,-12){\makebox(0,0){\scriptsize{$
\mathrmbfit{tup}_{\mathcal{A}}(\iota_{2})$}}}
\put(-6,40){\makebox(0,0)[r]{\scriptsize{$t_{1}$}}}
\put(125,40){\makebox(0,0)[r]{\scriptsize{$\hat{t}_{1}$}}}
\put(125,116){\makebox(0,0)[r]{\scriptsize{$\hat{\pi}_{1}$}}}
\put(327,40){\makebox(0,0)[l]{\scriptsize{$t_{2}$}}}
\put(200,40){\makebox(0,0)[l]{\scriptsize{$\hat{t}_{2}$}}}
\put(195,115){\makebox(0,0)[l]{\scriptsize{$\hat{\pi}_{2}$}}}
\put(64,128){\makebox(0,0)[r]{\scriptsize{$\hat{k}_{1}$}}}
\put(260,128){\makebox(0,0)[l]{\scriptsize{$\hat{k}_{2}$}}}
\put(0,65){\vector(0,-1){50}}
\put(320,65){\vector(0,-1){50}}
\put(105,65){\vector(1,-1){50}}
\put(215,65){\vector(-1,-1){50}}
\put(150,135){\vector(-1,-1){45}}
\put(165,135){\vector(1,-1){45}}
\put(80,80){\vector(-1,0){60}}
\put(90,0){\vector(-1,0){55}}
\put(240,80){\vector(1,0){60}}
\put(233,0){\vector(1,0){52}}
\put(124,145){\vector(-2,-1){110}}
\put(200,145){\vector(2,-1){110}}
\qbezier(40,30)(30,30)(20,30)
\qbezier(40,30)(40,20)(40,10)
\qbezier(146,36)(153,43)(160,50)
\qbezier(160,50)(167,43)(174,36)
\qbezier(280,30)(290,30)(300,30)
\qbezier(280,30)(280,20)(280,10)
\put(60,45){\makebox(0,0){\huge{
$\overset{\textit{\scriptsize{inflate}}}{\Rightarrow}$}}}
\put(156,80){\makebox(0,0){${\scriptsize{meet}}$}}
\put(250,45){\makebox(0,0){\huge{
$\overset{\textit{\scriptsize{inflate}}}{\Leftarrow}$}}}
\end{picture}
\end{tabular}}}
%
\\\\\\
{{\begin{tabular}{c}
\setlength{\unitlength}{0.6pt}
\begin{picture}(200,70)(-100,60)
\put(-140,57){\makebox(0,0){\footnotesize{$
{\mathcal{T}_{1}}$}}}
\put(0,120){\makebox(0,0){\footnotesize{$
\mathcal{T}_{1}{\,\boxtimes_{\mathcal{A}}}\mathcal{T}_{2}$}}}
\put(140,57){\makebox(0,0){\footnotesize{$
{\mathcal{T}_{2}}$}}}

\put(-30,114){\vector(-2,-1){94}}
\put(30,114){\vector(2,-1){94}}

\put(-63,57){\makebox(0,0){\footnotesize{$
{\iota_{1}^{\ast}(\mathcal{T}_{1})}$}}}
\put(63,57){\makebox(0,0){\footnotesize{$
{\iota_{2}^{\ast}(\mathcal{T}_{2})}$}}}
\put(-92,100){\makebox(0,0){\scriptsize{$
{\langle{\iota_{1},\hat{k}_{1}}\rangle}$}}}
\put(-105,47){\makebox(0,0){\scriptsize{$
{\langle{\iota_{1},\grave{k}_{1}}\rangle}$}}}
\put(105,47){\makebox(0,0){\scriptsize{$
{\langle{\iota_{2},\grave{k}_{2}}\rangle}$}}}
\put(92,100){\makebox(0,0){\scriptsize{$
{\langle{\iota_{2},\hat{k}_{2}}\rangle}$}}}
\put(-33,93){\makebox(0,0)[r]{\scriptsize{$\hat{\pi}_{1}$}}}
\put(35,93){\makebox(0,0)[l]{\scriptsize{$\hat{\pi}_{2}$}}}
\put(-90,57){\vector(-1,0){40}}
\put(-10,110){\vector(-1,-1){40}}
\put(10,110){\vector(1,-1){40}}
\put(88,57){\vector(1,0){40}}
\end{picture}
\\
\\
$\mathcal{T}_{1}
\xleftarrow{\;{\langle{\iota_{1},\hat{k}_{1}}\rangle}\;} 
\mathcal{T}_{1}{\,\boxtimes_{\mathcal{A}}}\mathcal{T}_{2}
\xrightarrow{\;{\langle{\iota_{2},\hat{k}_{2}}\rangle}\;} 
\mathcal{T}_{2}$
\\
$\mathcal{S}_{1}\xrightarrow{\iota_{1}\,}
{\mathcal{S}_{1}{\!+_{\mathcal{S}}}\mathcal{S}_{2}}
\xleftarrow{\;\iota_{2}}\mathcal{S}_{2}$
\end{tabular}}}
\end{tabular}}}
\end{center}
}
%

%
\newpage
\paragraph{Cartesian Product.}

The Cartesian product 
is a special case of the natural join.
Let
$\mathcal{S}_{1}$
and
$\mathcal{S}_{2}$
be two $X$-signatures. 
These are linked by the span of $X$-signatures
{\footnotesize{$\mathcal{S}_{1}
\xhookleftarrow{0_{I_{1}}} 
\mathcal{S}_{\bot}
\xhookrightarrow{0_{I_{2}}} 
\mathcal{S}_{2}$}\normalsize}
with initial $X$-signature
$\mathcal{S}_{\bot} = {\langle{\emptyset,0_{X},X}\rangle}$
and
injection index functions
{\footnotesize{$
I_{1}
\xhookleftarrow{0_{I_{1}}}
\emptyset
\xhookrightarrow{0_{I_{2}}}
I_{2}$.}}
This is the \underline{constraint} for Cartesian product 
(Tbl.\;\ref{tbl:fole:cart:prod:input:output}).
It is a special case of the \underline{constraint} for Cartesian product. 
The pushout (colimiting cocone) of this $X$-signature span
is the coproduct $X$-signature
$\mathcal{S}_{1}{+}\mathcal{S}_{2}
={\langle{I{+}_{I}I_{2},[s,s_{2}]}\rangle}$
with 
disjoint union index set $I{+}I_{2}$
and
injection $X$-signature morphisms
(opspan)
{\footnotesize{$
\mathcal{S}_{1} 
\xhookrightarrow{\iota_{1}\,} 
{\mathcal{S}_{1}{+\;}\mathcal{S}_{2}}
\xhookleftarrow{\;\iota_{2}} 
\mathcal{S}_{2}
$}\normalsize}
with inclusion index functions
{\footnotesize{$I_{1}\xhookrightarrow{\iota_{1}\,}
{I_{1}{+\;}I_{2}}
\xhookleftarrow{\;\iota_{2}}I_{2}.$}\normalsize}
The tuple set factors as
$\mathrmbfit{tup}_{\mathcal{A}}(\mathcal{S}_{1}{+\;}\mathcal{S}_{2})
\cong
\mathrmbfit{tup}_{\mathcal{A}}(\mathcal{S}_{1})
{\,\times\,}\mathrmbfit{tup}_{\mathcal{A}}(\mathcal{S}_{2})$.
This is the \underline{construction} for Cartesian product 
(Tbl.\;\ref{tbl:fole:cart:prod:input:output}).
Let
$\mathcal{T}_{1} = {\langle{K_{1},t_{1}}\rangle}
\in
\mathrmbf{Tbl}_{\mathcal{A}}(\mathcal{S}_{1})$
and
$\mathcal{T}_{2} = {\langle{K_{2},t_{2}}\rangle}
\in
\mathrmbf{Tbl}_{\mathcal{S}}(\mathcal{S}_{2})$
be two $\mathcal{A}$-tables 
with
key sets 
and a tuple functions
$K_{1}\xrightarrow{\,t_{1}\;}\mathrmbfit{tup}_{\mathcal{A}}(\mathcal{S}_{1})$
and
$K_{2}\xrightarrow{\,t_{2}\;}\mathrmbfit{tup}_{\mathcal{A}}(\mathcal{S}_{2})$.
This is the \underline{input} for Cartesian product
(Tbl.\,\ref{tbl:fole:cart:prod:input:output}).
The Cartesian product $\mathcal{T}_{1}{\,\times\,}\mathcal{T}_{2}$ 
of the two $\mathcal{A}$-tables 
$\mathcal{T}_{1}$
and
$\mathcal{T}_{2}$
is a special case of natural join
--- 
just link the tables through
the span of tuple functions
$\mathrmbf{tup}_{\mathcal{A}}(\mathcal{S}_{1})
{\;\xleftarrow{\mathrmbfit{tbl}_{\mathcal{A}}(\iota_{1})}\;}
\mathrmbf{tup}_{\mathcal{A}}(\mathcal{S}_{1}{+}\mathcal{S}_{2})
{\;\xrightarrow{\mathrmbfit{tbl}_{\mathcal{A}}(\iota_{2})}\;}
\mathrmbf{tup}_{\mathcal{A}}(\mathcal{S}_{2})$,
and then use inflation (twice) and intersection.
The Cartesian product table $\mathcal{T}_{1}{\,\times\,}\mathcal{T}_{2}$, 
which has the binary product key set $K_{1}{\times}K_{2}$
with product tuple function
%
\footnote{The tuple subset
of the Cartesian product table $\mathcal{T}_{1}{\,\times\,}\mathcal{T}_{2}$ 
is the Cartesian product of the tuple sets
${\wp{t_{1}}}(K_{1}) 
\subseteq \mathrmbfit{tup}_{\mathcal{A}}(\mathcal{S}_{1})
\subseteq \mathrmbf{List}(Y_{1})$
and
${\wp{t_{2}}}(K_{2}) \subseteq \mathrmbfit{tup}_{\mathcal{A}}(\mathcal{S}_{2})
\subseteq \mathrmbf{List}(Y_{2})$.}
%
$K_{1}{\times}K_{2} 
\xrightarrow{t_{1}{\times}t_{1}} 
\mathrmbfit{tup}_{\mathcal{A}}(\mathcal{S}_{1})
{\,\times\,}\mathrmbfit{tup}_{\mathcal{A}}(\mathcal{S}_{2})$,
is linked to the component tables with the span of projection table morphisms
\[\mbox
{\footnotesize{
{$\mathcal{T}
\xleftarrow{{\langle{\iota_{1},\pi_{1}}\rangle}} 
\mathcal{T}_{1}{\,\times_{\mathcal{S}}}\mathcal{T}_{2}
\xrightarrow{{\langle{\iota_{2},\pi_{2}}\rangle}} 
\mathcal{T}_{2}
$.}}\normalsize}
\]
This is the \underline{output} for Cartesian product 
(Tbl.\;\ref{tbl:fole:cart:prod:input:output}).
%
%

%
\begin{table}
\begin{center}
{{\fbox{\begin{tabular}{c}
\setlength{\extrarowheight}{2pt}
{\scriptsize{$\begin{array}[c]{c@{\hspace{12pt}}l}
\mathcal{S}_{1}  
\text{ and }
\mathcal{S}_{2} 
&
\textit{constraint}
\\
\mathcal{S}_{1} \xhookrightarrow{\iota_{1}\,} 
{\mathcal{S}_{1}{+}\mathcal{S}_{2}}
\xhookleftarrow{\;\iota_{2}}\mathcal{S}_{2}
&
\textit{construction}
\\
\hline
\mathcal{T}_{1}\in\mathrmbf{Tbl}_{\mathcal{A}}(\mathcal{S}_{1})
\text{ and }
\mathcal{T}_{2}\in\mathrmbf{Tbl}_{\mathcal{A}}(\mathcal{S}_{2})
&
\textit{input}
\\
\mathcal{T}
\xleftarrow{{\langle{\iota_{1},\pi_{1}}\rangle}} 
\mathcal{T}_{1}{\,\times\,}\mathcal{T}_{2}
\xrightarrow{{\langle{\iota_{2},\pi_{2}}\rangle}} 
\mathcal{T}_{2}
&
\textit{output}
\end{array}$}}
\end{tabular}}}}
\end{center}
\caption{\texttt{FOLE} Cartesian Product I/O}
\label{tbl:fole:cart:prod:input:output}
\end{table}

\begin{proposition}\label{join:preserve:join:meet}
Natural join $\boxtimes$ distributes over union $\vee$ and intersection $\wedge$
\[\mbox{\footnotesize{{$
\mathcal{T}_{1}{\,\boxtimes_{\mathcal{A}}}
\bigl(\mathcal{T}_{2}{\;\wedge\;}\mathcal{T}'_{2}\bigr)
\cong
\bigl(\mathcal{T}_{1}{\,\boxtimes_{\mathcal{A}}}\mathcal{T}_{2}\bigr)
{\;\wedge\;}
\bigl(\mathcal{T}_{1}{\,\boxtimes_{\mathcal{A}}}
\mathcal{T}'_{2}\bigr)
$.}}\normalsize}\]
\end{proposition}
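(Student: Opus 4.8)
The plan is to expand both sides through the definition of natural join as inflation (twice) followed by meet, and then reduce each side to a common threefold meet using the preservation properties of inflation (Prop.~\ref{project:inflate:preserve}) together with the Boolean laws (Prop.~\ref{boolean:laws}). Since $\mathcal{T}_2$ and $\mathcal{T}'_2$ share the signature $\mathcal{S}_2$, the meet $\mathcal{T}_2\wedge\mathcal{T}'_2$ is formed in $\mathrmbf{Tbl}_{\mathcal{A}}(\mathcal{S}_2)$, and both natural joins are taken against the same span $\mathcal{S}_1\xleftarrow{h_1}\mathcal{S}\xrightarrow{h_2}\mathcal{S}_2$, with pushout opspan $\mathcal{S}_1\xrightarrow{\iota_1}\mathcal{S}_1{+_{\mathcal{S}}}\mathcal{S}_2\xleftarrow{\iota_2}\mathcal{S}_2$. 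Consequently every meet on either side of the claimed isomorphism lives in the single fiber $\mathrmbf{Tbl}_{\mathcal{A}}(\mathcal{S}_1{+_{\mathcal{S}}}\mathcal{S}_2)=\mathrmbf{Tbl}(\mathcal{D})$ for the signed domain $\mathcal{D}={\langle{\mathcal{S}_1{+_{\mathcal{S}}}\mathcal{S}_2,\mathcal{A}}\rangle}$, so the associativity, commutativity, and idempotency laws of Prop.~\ref{boolean:laws} are available throughout.

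For the left-hand side I would first write
\[
\mathcal{T}_1{\,\boxtimes_{\mathcal{A}}}(\mathcal{T}_2{\;\wedge\;}\mathcal{T}'_2)
=
\grave{\mathrmbfit{tbl}}_{\mathcal{A}}(\iota_1)(\mathcal{T}_1)
{\;\wedge\;}
\grave{\mathrmbfit{tbl}}_{\mathcal{A}}(\iota_2)(\mathcal{T}_2{\;\wedge\;}\mathcal{T}'_2),
\]
and then apply the clause of Prop.~\ref{project:inflate:preserve} that inflation preserves intersection to the second factor, obtaining
\[
\grave{\mathrmbfit{tbl}}_{\mathcal{A}}(\iota_2)(\mathcal{T}_2{\;\wedge\;}\mathcal{T}'_2)
\cong
\grave{\mathrmbfit{tbl}}_{\mathcal{A}}(\iota_2)(\mathcal{T}_2)
{\;\wedge\;}
\grave{\mathrmbfit{tbl}}_{\mathcal{A}}(\iota_2)(\mathcal{T}'_2).
\]
By associativity this exhibits the left-hand side as the threefold meet $\grave{\mathrmbfit{tbl}}_{\mathcal{A}}(\iota_1)(\mathcal{T}_1)\wedge\grave{\mathrmbfit{tbl}}_{\mathcal{A}}(\iota_2)(\mathcal{T}_2)\wedge\grave{\mathrmbfit{tbl}}_{\mathcal{A}}(\iota_2)(\mathcal{T}'_2)$. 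For the right-hand side I would expand each join,
\[
(\mathcal{T}_1{\,\boxtimes_{\mathcal{A}}}\mathcal{T}_2){\;\wedge\;}(\mathcal{T}_1{\,\boxtimes_{\mathcal{A}}}\mathcal{T}'_2)
=
\bigl(\grave{\mathrmbfit{tbl}}_{\mathcal{A}}(\iota_1)(\mathcal{T}_1)\wedge\grave{\mathrmbfit{tbl}}_{\mathcal{A}}(\iota_2)(\mathcal{T}_2)\bigr)
\wedge
\bigl(\grave{\mathrmbfit{tbl}}_{\mathcal{A}}(\iota_1)(\mathcal{T}_1)\wedge\grave{\mathrmbfit{tbl}}_{\mathcal{A}}(\iota_2)(\mathcal{T}'_2)\bigr),
\]
and then use commutativity and associativity to gather the two copies of $\grave{\mathrmbfit{tbl}}_{\mathcal{A}}(\iota_1)(\mathcal{T}_1)$ and idempotency to merge them into one. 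This again yields the same threefold meet, so the two sides are isomorphic.

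The step I expect to be the main obstacle is bookkeeping the fact that $\wedge$ is the categorical product in $\mathrmbf{Tbl}(\mathcal{D})$ and hence is associative, commutative, and idempotent only up to canonical isomorphism, so each rearrangement of meets above is really a coherent natural isomorphism of tables rather than an on-the-nose equality; this is exactly why the proposition asserts $\cong$ rather than $=$. I would make this rigorous either by invoking the universal property of the pullback key sets directly, or---more cheaply---by applying the image reflection $\mathrmbfit{im}_{\mathcal{D}}$ of Prop.~\ref{tbl:rel:refl} to pass to $\mathrmbf{Rel}(\mathcal{D})$, where meet becomes honest set intersection of tuple subsets and idempotency and associativity are literal set identities; informational equivalence then transports the resulting equality back to the claimed isomorphism of tables. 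The preservation of intersection by inflation, already granted by Prop.~\ref{project:inflate:preserve}, is the only genuinely nontrivial ingredient, and it is precisely the continuity of the right adjoint $\grave{\mathrmbfit{tbl}}_{\mathcal{A}}(\iota_2)$.
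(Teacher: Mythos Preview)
Your proposal is correct and follows essentially the same approach as the paper: the paper's proof is a two-line sketch invoking exactly the two ingredients you unpack, namely that inflation is continuous and co-continuous (your Prop.~\ref{project:inflate:preserve}) and that intersection $\wedge$ distributes over itself and over $\vee$ (your Prop.~\ref{boolean:laws}). Your version simply makes the threefold-meet reduction and the coherence bookkeeping explicit, whereas the paper leaves these routine manipulations to the reader.
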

\begin{proof}
Inflation is continuous and co-continuous.
Intersection $\wedge$ is distributive over itself and union $\vee$.
\hfill\rule{5pt}{5pt}
\end{proof}
\begin{proposition}\label{join:assoc}
Natural join is associative
\[\mbox{\footnotesize{{$
\bigl(\mathcal{T}_{1}{\,\boxtimes_{\mathcal{A}}}\mathcal{T}_{2}\bigr)
{\,\boxtimes_{\mathcal{A}}}\mathcal{T}_{3} 
\cong
\mathcal{T}_{1}{\,\boxtimes_{\mathcal{A}}}
\bigl(\mathcal{T}_{2}{\,\boxtimes_{\mathcal{A}}}\mathcal{T}_{3}\bigr)
$.}}\normalsize}\]
\end{proposition}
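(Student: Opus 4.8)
The plan is to reduce both sides to a single triple meet of appropriately inflated tables living in a common signature fiber, and then invoke the associativity of meet recorded in Proposition~\ref{boolean:laws}. Unfolding the definition of natural join, the left-hand side $(\mathcal{T}_{1}{\,\boxtimes_{\mathcal{A}}}\mathcal{T}_{2}){\,\boxtimes_{\mathcal{A}}}\mathcal{T}_{3}$ first inflates $\mathcal{T}_{1}$ and $\mathcal{T}_{2}$ along the injections $\iota_{1},\iota_{2}$ into the pushout signature $\mathcal{S}_{1}{+_{\mathcal{S}}}\mathcal{S}_{2}$ and meets them, then inflates this intermediate result together with $\mathcal{T}_{3}$ into a second pushout and meets again; the right-hand side performs the analogous construction with the opposite grouping. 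This mirrors the style of the distributivity result (Proposition~\ref{join:preserve:join:meet}), reducing everything to the preservation properties of inflation together with the algebraic laws for $\wedge$.

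First I would establish that the two parenthesized pushout signatures are canonically isomorphic. Since the pushout (coproduct over $\mathcal{S}$) is a colimit in $\mathrmbf{List}(X)$, and colimits are associative up to canonical isomorphism, both $(\mathcal{S}_{1}{+_{\mathcal{S}}}\mathcal{S}_{2}){+_{\mathcal{S}}}\mathcal{S}_{3}$ and $\mathcal{S}_{1}{+_{\mathcal{S}}}(\mathcal{S}_{2}{+_{\mathcal{S}}}\mathcal{S}_{3})$ are isomorphic to a joint colimit $\widehat{\mathcal{S}}$ of the whole diagram, carrying three coherent injections $\mathcal{S}_{i}\to\widehat{\mathcal{S}}$ that factor through either intermediate pushout. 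This is a purely signature-level statement in $\mathrmbf{List}(X)$, independent of the tables.

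Next I would lift this to tables. The inflation assignment $h\mapsto\grave{\mathrmbfit{tbl}}_{\mathcal{A}}(h)$ is functorial in the signature morphism (flow composes), so inflating into an intermediate pushout and then into $\widehat{\mathcal{S}}$ agrees with inflating each $\mathcal{T}_{i}$ directly along its composite injection into $\widehat{\mathcal{S}}$. Moreover, by Proposition~\ref{project:inflate:preserve} inflation is continuous and hence preserves meet, so the outer inflation commutes with the inner meet: $\grave{\mathrmbfit{tbl}}_{\mathcal{A}}(\iota)(\widehat{\mathcal{T}}_{1}\wedge\widehat{\mathcal{T}}_{2})\cong\grave{\mathrmbfit{tbl}}_{\mathcal{A}}(\iota)(\widehat{\mathcal{T}}_{1})\wedge\grave{\mathrmbfit{tbl}}_{\mathcal{A}}(\iota)(\widehat{\mathcal{T}}_{2})$. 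Combining these two facts, the left-hand side collapses to the meet in $\mathrmbf{Tbl}_{\mathcal{A}}(\widehat{\mathcal{S}})$ of the three tables inflated directly into $\widehat{\mathcal{S}}$, and symmetrically for the right-hand side. Both are therefore the same triple meet, which is well defined by the associativity of $\wedge$ (Proposition~\ref{boolean:laws}).

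The hard part will be the coherence bookkeeping in this middle step: verifying that the canonical pushout isomorphism at the signature level is compatible with the functoriality of inflation, so that the two towers of inflations genuinely land on the same inflated tables rather than on merely isomorphic ones connected by an uncontrolled isomorphism. Equivalently, this is the statement that natural join computes the limit (pullback) of the combined signature diagram independently of how the intermediate pullbacks are pasted together; once that compatibility is pinned down, the remaining associativity is exactly that of meet, already available in Proposition~\ref{boolean:laws}.
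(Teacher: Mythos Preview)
Your proposal is correct and in fact supplies far more detail than the paper does: the paper's proof is literally the two words ``Basic category theory'' together with a citation to Mac Lane~\cite{maclane:71}. The underlying idea is the same---natural join is a limit (pullback) and limits are associative up to canonical isomorphism---but you unpack this using the specific \texttt{FOLE} machinery (associativity of the signature pushouts in $\mathrmbf{List}(X)$, functoriality and continuity of inflation from Proposition~\ref{project:inflate:preserve}, and associativity of meet from Proposition~\ref{boolean:laws}), whereas the paper simply defers to the general categorical fact. Your concern about coherence bookkeeping is legitimate but is exactly what the standard ``limits paste'' argument in Mac Lane handles, so there is no gap.
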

\begin{proof}
Basic category theory;
see
Saunders Mac\;Lane
\cite{maclane:71}
\hfill\rule{5pt}{5pt}
\end{proof}
%


%
\begin{aside}
At the intermediate scope, 
in the context $\mathrmbf{Tbl}(\mathcal{A})$ of a type domain table fiber,
generic meets 
--- for the special case of equalizer --- 
are called quotients, and
generic meets 
--- for the special case of pullback --- 
are called natural joins. 
%
\begin{fact}
(Mac Lane \cite{maclane:71})
Limits can be constructed from equalizers and multi-pullbacks.
\end{fact}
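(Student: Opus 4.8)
The plan is to reduce this to the classical construction in Mac~Lane~\cite{maclane:71}, which builds an arbitrary small limit as an equalizer of a parallel pair of arrows between two products, and then to observe that in the present setting the products in question are themselves computed as multi-pullbacks.

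First I would fix a small diagram $F : J \to \mathcal{C}$ and form two objects: the object-indexed product $P = \prod_{j} F(j)$, with projections $p_{j}$, and the arrow-indexed product $Q = \prod_{\alpha : j \to k} F(k)$. I would then define two arrows $\lambda, \rho : P \to Q$ whose component at a morphism $\alpha : j \to k$ of $J$ is, respectively, $p_{k}$ and $F(\alpha) \circ p_{j}$. The key computation is that a cone over $F$ with apex $A$ is exactly a map $A \to P$ equalizing $\lambda$ and $\rho$: the equation $\lambda \circ a = \rho \circ a$ says precisely that the components $p_{j} \circ a$ satisfy the naturality condition $F(\alpha) \circ (p_{j} \circ a) = p_{k} \circ a$ for every $\alpha$. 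Hence the equalizer $E \xrightarrow{e} P$ of $\lambda$ and $\rho$ carries the universal cone, so $E \cong \lim F$.

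Second I would connect the products above to multi-pullbacks. In a category with a terminal object $1$, every product $\prod_{i} A_{i}$ is the multi-pullback (wide pullback) of the unique family of arrows $A_{i} \to 1$; more to the point for this paper, in each table fiber the meet/product is already realized as a pullback over the tuple object $\mathrmbfit{tup}_{\mathcal{A}}(\mathcal{S})$ (see \S\,\ref{sub:sub:sec:boole}), so the two products $P$ and $Q$ are multi-pullbacks, and the generic meet of \S\,\ref{sub:sub:sec:generic:meet} is exactly the equalizer stage layered over these. Thus equalizers together with multi-pullbacks suffice to construct $\lim F$.

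The main obstacle is the last identification: phrasing the substitution of multi-pullbacks for abstract products so that the universal properties match. Products need a common base to become multi-pullbacks --- the terminal object in the abstract case, the shared tuple object in the tabular case --- so the careful point is to exhibit that base and check that the limiting cone produced by the equalizer-of-multi-pullbacks diagram coincides with the universal cone over $F$. Once this bookkeeping is in place, the statement follows directly from the cone/equalizer correspondence above.
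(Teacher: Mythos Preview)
Your proposal is correct and follows essentially the same route as the paper: both reduce to the classical ``limits from products and equalizers'' theorem and then realize products as pullbacks over a terminal (or tuple) object. The only cosmetic difference is that the paper obtains arbitrary products by iterating binary pullbacks from the terminal object, whereas you take the product in one step as a wide (multi-)pullback; either variant works and the remaining bookkeeping you flag is routine.
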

\begin{proof}
Equalizers can be constructed from products and pullbacks.
%
Limits can be constructed from products and equalizers.
%
Binary products are pullbacks from the terminal object.
%
Arbitrary products
are iterated binary products.
\hfill\rule{5pt}{5pt}
\end{proof}
\end{aside}
%


%
\comment{
\mbox{}\newline\rule{340pt}{1pt}\newline
The $X$-signature morphisms have tuple functions
\newline\mbox{}\hfill
{\footnotesize{$
\mathrmbfit{tup}_{\mathcal{A}}(\mathcal{S}_{1})
\xleftarrow
{\mathrmbfit{tup}_{\mathcal{A}}(\iota_{1})}
\mathrmbfit{tup}_{\mathcal{A}}({\mathcal{S}_{1}{+_{\mathcal{S}}}\mathcal{S}_{2}})
\xrightarrow
{\mathrmbfit{tup}_{\mathcal{A}}(\iota_{2})}
{\mathrmbfit{tup}_{\mathcal{A}}(\mathcal{S}_{2})}
$.}\normalsize}
\hfill\mbox{}\newline
These define 
\S\,\ref{sub:sub:sec:adj:flow:A}
by composition 
the left adjoint table fiber passages
\newline\mbox{}\hfill
{\footnotesize{$
\mathrmbf{Tbl}_{\mathcal{A}}(\mathcal{S}_{1})
{\;\xleftarrow
{\;\acute{\mathrmbfit{tbl}}_{\mathcal{A}}(\iota_{1})\;}\;}
\mathrmbf{Tbl}_{\mathcal{A}}({\mathcal{S}{+_{\mathcal{S}}}\mathcal{S}_{2}})
{\;\xrightarrow
{\;\acute{\mathrmbfit{tbl}}_{\mathcal{A}}(\iota_{2})\;}\;}
\mathrmbf{Tbl}_{\mathcal{A}}(\mathcal{S}_{2})
$}\normalsize}
\hfill\mbox{}\newline
of projection,
and define by pullback
the right adjoint table fiber passages
\newline\mbox{}\hfill
{\footnotesize{$
\mathrmbf{Tbl}_{\mathcal{A}}(\mathcal{S}_{1})
{\;\xrightarrow
{\;\grave{\mathrmbfit{tbl}}_{\mathcal{A}}(\iota_{1})\;}\;}
\mathrmbf{Tbl}_{\mathcal{A}}({\mathcal{S}{+_{\mathcal{S}}}\mathcal{S}_{2}})
{\;\xleftarrow
{\;\grave{\mathrmbfit{tbl}}_{\mathcal{A}}(\iota_{2})\;}\;}
\mathrmbf{Tbl}_{\mathcal{A}}(\mathcal{S}_{2})
$}\normalsize}
\hfill\mbox{}\newline
of inflation.
\mbox{}\newline\rule{340pt}{1pt}\newline
}
%

%
\newpage
\subsubsection{Semi-join.}\label{sub:sub:sec:semi:join}

%
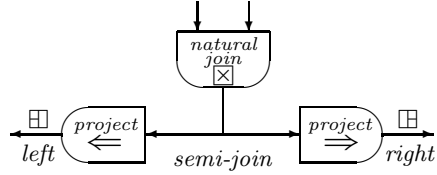
\begin{figure}
\begin{center}
{{{\begin{tabular}{c}
\begin{picture}(150,40)(40,35)
\setlength{\unitlength}{0.97pt}
\put(98.4,58){\begin{picture}(0,0)(0,3)
\setlength{\unitlength}{0.35pt}
\put(60,23){\makebox(0,0){\normalsize{$\boxtimes$}}}
\put(40,10){\line(1,0){40}}
\put(10,70){\line(1,0){100}}
\put(10,70){\line(0,-1){30}}
\put(110,70){\line(0,-1){30}}
\put(40,40){\oval(60,60)[bl]}
\put(80,40){\oval(60,60)[br]}
\put(61,57){\makebox(0,0){\scriptsize{{\textit{{natural}}}}}}
\put(61,40){\makebox(0,0){\scriptsize{{\textit{{join}}}}}}
\end{picture}}
\comment{\put(1,25.5){\begin{picture}(0,0)(0,0)
\setlength{\unitlength}{0.35pt}
\put(40,10){\line(1,0){61}}
\put(40,70){\line(1,0){61}}
\put(100,70){\line(0,-1){60}}
\put(40,40){\oval(60,60)[bl]}
\put(40,40){\oval(60,60)[tl]}
\put(57,30){\makebox(0,0){\large{${\Leftarrow}$}}}
\put(61,50){\makebox(0,0){\scriptsize{{\textit{{image}}}}}}
\end{picture}}}
\put(53.5,25.5){\begin{picture}(0,0)(0,0)
\setlength{\unitlength}{0.35pt}
\put(40,10){\line(1,0){60}}
\put(40,70){\line(1,0){60}}
\put(100,70){\line(0,-1){60}}
\put(40,40){\oval(60,60)[bl]}
\put(40,40){\oval(60,60)[tl]}
\put(58,50){\makebox(0,0){\scriptsize{{\textit{{project}}}}}}
\put(56,30){\makebox(0,0){\Large{${\Leftarrow}$}}}
\end{picture}}
\put(146,25.5){\begin{picture}(0,0)(0,0)
\setlength{\unitlength}{0.35pt}
\put(10,10){\line(1,0){61}}
\put(10,70){\line(1,0){61}}
\put(11,70){\line(0,-1){60}}
\put(70,40){\oval(60,60)[br]}
\put(70,40){\oval(60,60)[tr]}
\put(55,50){\makebox(0,0){\scriptsize{{\textit{{project}}}}}}
\put(56,30){\makebox(0,0){\Large{${\Rightarrow}$}}}
\end{picture}}
\comment{\put(198,25.5){\begin{picture}(0,0)(0,0)
\setlength{\unitlength}{0.35pt}
\put(10,10){\line(1,0){61}}
\put(10,70){\line(1,0){61}}
\put(11,70){\line(0,-1){60}}
\put(70,40){\oval(60,60)[br]}
\put(70,40){\oval(60,60)[tr]}
\put(57,30){\makebox(0,0){\Large{${\Rightarrow}$}}}
\put(56,50){\makebox(0,0){\scriptsize{{\textit{{image}}}}}}
\end{picture}}}
\put(120,30){\makebox(0,0){\footnotesize{{\textit{{semi-join}}}}}}
\put(48,32){\makebox(0,0){\footnotesize{{\textit{{left}}}}}}
\put(48,46){\makebox(0,0){{$\boxleft$}}}
\put(192,46){\makebox(0,0){{$\boxright$}}}
\put(193,32){\makebox(0,0){\footnotesize{{\textit{{right}}}}}}
\put(57,40){\vector(-1,0){20}}
\put(120,58.5){\line(0,-1){18.5}}
\put(120,40){\vector(-1,0){30}}
\put(120,40){\vector(1,0){30}}
\put(182,40){\vector(1,0){20}}
\put(110,92){\vector(0,-1){12}}
\put(130,92){\vector(0,-1){12}}
\end{picture}
\end{tabular}}}}
\end{center}
\caption{\texttt{FOLE} Semi-Join Flow Chart}
\label{fole:semi:join:flo:chrt}
\end{figure}
%
%
Let $\mathcal{A}$
be a fixed type domain.
For any two $\mathcal{A}$-tables 
$\mathcal{T}_{1}\in\mathrmbf{Tbl}_{\mathcal{A}}(\mathcal{S}_{1})$
 and 
$\mathcal{T}_{2}\in\mathrmbf{Tbl}_{\mathcal{A}}(\mathcal{S}_{2})$
that are linked through an $X$-sorted signature span 
$\mathcal{S}_{1}\xleftarrow{h_{1}}\mathcal{S}\xrightarrow{h_{2}}\mathcal{S}_{2}$
in $\mathrmbf{List}(X)$,
the left semi-join $\mathcal{T}_{1}{\;\boxleft_{\mathcal{A}}}\mathcal{T}_{2}$
is the set of all tuples in $\mathcal{T}_{1}$ 
for which there is a tuple in $\mathcal{T}_{2}$ 
that is equal on their common attribute names;
the other columns of $\mathcal{T}_{2}$ do not appear. 
Hence,
the left semi-join is defined to be the projection from the natural join.
The right semi-join is similar.
We use the following routes of flow from 
Fig.\;\ref{fig:routes:flow:lim}.
%
\begin{center}
{{\begin{tabular}{c}
\setlength{\unitlength}{0.6pt}
\begin{picture}(320,80)(0,-10)
%
\put(325,25){\makebox(0,0){\normalsize{
$\left.\rule{0pt}{24pt}\right\}
\underset{\textstyle{\textsf{semi-join}}}{\textsf{right}}$}}}
\put(-25,25){\makebox(0,0){\normalsize{
$\underset{\textstyle{\textsf{semi-join}}}{\textsf{left}}
\left\{\rule{0pt}{24pt}\right.$}}}
\put(100,55){\makebox(0,0){\huge{
$\overset{\textit{\scriptsize{inflate}}}{\Rightarrow}$}}}
\put(200,55){\makebox(0,0){\huge{
$\overset{\textit{\scriptsize{inflate}}}{\Leftarrow}$}}}
\put(150,30){\makebox(0,0){\huge{
$\overset{\textit{\scriptsize{meet}}}{\Downarrow}$}}}
\put(100,4.5){\makebox(0,0){\huge{
$\overset{\textit{\scriptsize{project}}}{\Leftarrow}$}}}
\put(200,4.5){\makebox(0,0){\huge{
$\overset{\textit{\scriptsize{project}}}{\Rightarrow}$}}}
%
\put(90,48){\line(1,0){54}}
\put(144,38){\oval(20,20)[tr]}
\put(166,38){\oval(20,20)[tl]}
\put(220,48){\line(-1,0){54}}
\put(145,-3){\line(-1,0){54}}
\put(165,-3){\line(1,0){54}}
\put(154,9){\line(0,1){29}}
\put(156,9){\line(0,1){29}}
\put(145,7){\oval(20,20)[br]}
\put(165,7){\oval(20,20)[bl]}
\end{picture}
\end{tabular}}}
\end{center}
%
The constraint, construction and input for semi-join 
are identical to that for natural join.
Only the output is different.
\begin{description}
\item[Constraint:] 
The constraint for semi-join is the same as the constraint for natural join
(Tbl.\,\ref{tbl:fole:natural:join:input:output}):
an $X$-sorted signature span 
$\mathcal{S}_{1}\xleftarrow{h_{1}}\mathcal{S}\xrightarrow{h_{2}}\mathcal{S}_{2}$
in $\mathrmbf{List}(X)$.
\newline
\item[Construction:] 
The construction for semi-join is the same as the construction for natural join 
(Tbl.\,\ref{tbl:fole:natural:join:input:output}):
the opspan
{\footnotesize{$\mathcal{S}_{1}\xrightarrow{\iota_{1}\,} 
{\mathcal{S}_{1}{\!+_{\mathcal{S}}}\mathcal{S}_{2}}
\xleftarrow{\;\iota_{2}}\mathcal{S}_{2}$}\normalsize}
of injection $X$-signature morphisms
with pushout signature
$\mathcal{S}_{1}{+_{\mathcal{S}}}\mathcal{S}_{2}$.
\newline
\item[Input:] 
The input for semi-join is the same as the input for natural join 
(Tbl.\,\ref{tbl:fole:natural:join:input:output}):
a pair of tables
$\mathcal{T}_{1} = {\langle{K_{1},t_{1}}\rangle} \in 
\mathrmbf{Tbl}_{\mathcal{A}}(\mathcal{S}_{1})$
and
$\mathcal{T}_{2} = {\langle{K_{2},t_{2}}\rangle} \in 
\mathrmbf{Tbl}_{\mathcal{A}}(\mathcal{S}_{2})$.
\newline
\item[Output:] 
The output 
is natural join followed by projection.
\newline
\begin{itemize}
\item 
%
Natural join results in the table
$\mathcal{T}_{1}{\,\boxtimes}_{\mathcal{A}}\mathcal{T}_{2}
\doteq
\grave{\mathrmbfit{tbl}}_{\mathcal{A}}(\iota_{1})(\mathcal{T}_{1})
{\;\wedge\;}
\grave{\mathrmbfit{tbl}}_{\mathcal{A}}(\iota_{2})(\mathcal{T}_{2})$
with key set
$\widehat{K}_{12}$ and
tuple function
$
\widehat{K}_{12}
\xrightarrow{(\hat{t}_{1},\hat{t}_{2})}
\mathrmbfit{tup}_{\mathcal{A}}({\mathcal{S}_{1}{+_{\mathcal{S}}}\mathcal{S}_{2}})$.
\newline
\item 
Projection 
{\footnotesize{$\mathrmbf{Tbl}_{\mathcal{A}}(\mathcal{S}_{1})
{\;\xleftarrow{\;\acute{\mathrmbfit{tbl}}_{\mathcal{A}}(\iota_{1})\;}\;}
\mathrmbf{Tbl}_{\mathcal{A}}({\mathcal{S}{+_{\mathcal{S}}}\mathcal{S}_{2}})$}\normalsize}
(\S\,\ref{sub:sub:sec:adj:flow:A})
along the tuple function 
of the $X$-signature morphism
{\footnotesize{$\mathcal{S}_{1}\xrightarrow{\iota_{1}\,} 
{\mathcal{S}_{1}{\!+_{\mathcal{S}}}\mathcal{S}_{2}}$}\normalsize}
maps
the natural join table
$\mathcal{T}_{1}{\,\boxtimes_{\mathcal{A}}}\mathcal{T}_{2}$
to the left semi-join table
${\mathcal{T}_{1}}{\,\boxleft_{\mathcal{A}}}{\mathcal{T}_{2}} =
\acute{\mathrmbfit{tbl}}_{\mathcal{A}}(\iota_{1})(
\mathcal{T}_{1}{\,\boxtimes_{\mathcal{A}}}\mathcal{T}_{2})
= {\langle{\hat{K}_{12},\acute{t}_{1}}\rangle}$ 
with key set $\widehat{K}_{12}$ and
tuple function
$\mathrmbfit{tup}_{\mathcal{A}}(\mathcal{S}_{1})
\xleftarrow{\acute{t}_{1}}\widehat{K}_{12}$
defined by composition 
$\mathrmbfit{tup}_{\mathcal{A}}(\mathcal{S}_{1})
\xleftarrow{\mathrmbfit{tup}_{\mathcal{A}}(\iota_{1})}
\mathrmbfit{tup}_{\mathcal{A}}({\mathcal{S}_{1}{+_{\mathcal{S}}}\mathcal{S}_{2}})
\xleftarrow{(\hat{t}_{1},\hat{t}_{2})}\widehat{K}_{12}$.
\newline
\end{itemize}
%
%
Semi-join is natural join followed by projection.
For left semi-join
this is the two-step process 
\newline\mbox{}\hfill
\rule[-10pt]{0pt}{26pt}
${\mathcal{T}_{1}}{\,\boxleft_{\mathcal{A}}}{\mathcal{T}_{2}} 
\doteq
\acute{\mathrmbfit{tbl}}_{\mathcal{A}}(\iota_{1})
\bigl(\mathcal{T}_{1}{\,\boxtimes_{\mathcal{A}}}\mathcal{T}_{2}\bigr)$. 
\hfill\mbox{}\newline
This defines the table morphism 
{\footnotesize{{$
\mathcal{T}_{1}{\,\boxleft_{\mathcal{A}}}\mathcal{T}_{2} 
\xleftarrow{\;{\langle{\iota_{1},1}\rangle}\,} 
{\mathcal{T}_{1}}{\,\boxtimes_{\mathcal{A}}}{\mathcal{T}_{2}}
$}}\normalsize}
in $\mathrmbf{Tbl}(\mathcal{A})$.
There is a sub-table relationship 
{\footnotesize{{$
\mathcal{T}_{1} 
\xleftarrow{\;\hat{k}_{1}\,} 
\mathcal{T}_{1}{\,\boxleft_{\mathcal{A}}}\mathcal{T}_{2}
$}}\normalsize}
in the small fiber table context $\mathrmbf{Tbl}_{\mathcal{A}}(\mathcal{S}_{1})$, 
which is the output of left semi-join.
%
The right semi-join has a similar definition
(Tbl.\,\ref{tbl:fole:natural:join:input:output}).
\end{description}
These factor 
{\footnotesize{
{$
\mathcal{T}_{1}
\xrightarrow{\hat{k}_{1}} 
\mathcal{T}_{1}{\,\boxleft_{\mathcal{A}}}\mathcal{T}_{2}
\xrightarrow{{\langle{\iota_{1},1}\rangle}} 
\mathcal{T}_{1}{\,\boxtimes_{\mathcal{A}}}\mathcal{T}_{2}
\xleftarrow{{\langle{\iota_{2},1}\rangle}} 
\mathcal{T}_{1}{\,\boxright_{\mathcal{A}}}\mathcal{T}_{2}
\xleftarrow{\hat{k}_{2}}
\mathcal{T}_{2}
$,}}\normalsize}
%
the span of $\mathcal{A}$-table morphisms
{\footnotesize{
{$\mathcal{T}_{1}
\xleftarrow[\;\hat{k}_{1}{\circ\,}{\langle{\iota_{1},1}\rangle}\;]
{\;{\langle{\iota_{1},\hat{k}_{1}}\rangle}\;} 
\mathcal{T}_{1}{\,\boxtimes_{\mathcal{A}}}\mathcal{T}_{2}
\xrightarrow[\;\hat{k}_{2}{\circ\,}{\langle{\iota_{2},1}\rangle}\;]
{\;{\langle{\iota_{2},\hat{k}_{2}}\rangle}\;} 
\mathcal{T}_{2}
$,}}\normalsize}
%
which is the output for natural join. 
%
{\footnote{The semi-join of a Cartesian product of non-empty tables 
gives either of the tables: the left semi-join gives the left table,
and the right gives the right.}}


%
\comment{
\begin{center}
{{\begin{tabular}{c}
\setlength{\unitlength}{0.65pt}
\begin{picture}(320,130)(0,-10)
\put(0,80){\makebox(0,0){\footnotesize{$K_{1}$}}}
\put(160,90){\makebox(0,0){\footnotesize{$\widehat{K}_{12}$}}}
\put(320,80){\makebox(0,0){\footnotesize{$K_{2}$}}}
\put(-10,0){\makebox(0,0){\footnotesize{$
{\mathrmbfit{tup}_{\mathcal{A}}(\mathcal{S}_{1})}$}}}
\put(330,0){\makebox(0,0){\footnotesize{$
{\mathrmbfit{tup}_{\mathcal{A}}(\mathcal{S}_{2})}$}}}
\put(160,0){\makebox(0,0){\footnotesize{$
{\mathrmbfit{tup}_{\mathcal{A}}(\mathcal{S}_{1}{+_{\mathcal{S}}}\mathcal{S}_{2})}$}}}
\put(75,90){\makebox(0,0){\scriptsize{$\hat{k}_{1}$}}}
\put(70,46){\makebox(0,0)[r]{\scriptsize{$\acute{t}_{1}$}}}
\put(250,46){\makebox(0,0)[L]{\scriptsize{$\acute{t}_{2}$}}}
\put(240,90){\makebox(0,0){\scriptsize{$\hat{k}_{2}$}}}
\put(63,-12){\makebox(0,0){\scriptsize{$\mathrmbfit{tup}_{\mathcal{A}}(\iota_{1})$}}}
\put(263,-12){\makebox(0,0){\scriptsize{$\mathrmbfit{tup}_{\mathcal{A}}(\iota_{2})$}}}
\put(-6,40){\makebox(0,0)[r]{\scriptsize{$t_{1}$}}}
\put(326,40){\makebox(0,0)[l]{\scriptsize{${t}_{2}$}}}
\put(163,40){\makebox(0,0){\scriptsize{$(\hat{t}_{1},\hat{t}_{2})$}}}
\put(124,73){\vector(-3,-2){95}}
\put(190,73){\vector(3,-2){95}}
\put(0,65){\vector(0,-1){50}}
\put(160,65){\vector(0,-1){50}}
\put(320,65){\vector(0,-1){50}}
\put(120,80){\vector(-1,0){100}}
\put(195,80){\vector(1,0){100}}
\put(100,0){\vector(-1,0){65}}
\put(220,0){\vector(1,0){65}}
%
\put(115,35){\makebox(0,0){\huge{
$\overset{\textit{\scriptsize{project}}}{\Leftarrow}$}}}
\put(200,35){\makebox(0,0){\huge{
$\overset{\textit{\scriptsize{project}}}{\Rightarrow}$}}}
\end{picture}
\end{tabular}}}
\end{center}
}
\comment{
\begin{figure}
\begin{center}
{{\begin{tabular}{c}

\\\\\\
{{\begin{tabular}{c}
\setlength{\unitlength}{0.85pt}
\begin{picture}(120,120)(0,-60)
\put(0,0){\makebox(0,0){\footnotesize{$\mathcal{T}_{1}$}}}
\put(120,0){\makebox(0,0){\footnotesize{$\mathcal{T}_{2}$}}}
\put(0,60){\makebox(0,0){\footnotesize{$\acute{\mathcal{T}}_{1}$}}}
\put(120,60){\makebox(0,0){\footnotesize{$\grave{\mathcal{T}}_{1}$}}}
\put(63,60){\makebox(0,0){\footnotesize{$
\mathcal{T}_{1}{\,\boxtimes_{\mathcal{A}}}\mathcal{T}_{2}$}}}
\put(60,-60){\makebox(0,0){\footnotesize{$\mathcal{T}$}}}
\put(25,-35){\makebox(0,0)[r]{\scriptsize{${\langle{h_{1},k_{1}}\rangle}$}}}
\put(95,-35){\makebox(0,0)[l]{\scriptsize{${\langle{h_{2},k_{2}}\rangle}$}}}
\put(38,42){\makebox(0,0)[r]{\scriptsize{${\langle{\iota_{1},\hat{k}_{1}}\rangle}$}}}
\put(23,68){\makebox(0,0){\scriptsize{${\langle{\iota_{1},1}\rangle}$}}}
\put(-4,32){\makebox(0,0)[r]{\scriptsize{${\langle{1,\hat{k}_{1}}\rangle}$}}}
\put(83,42){\makebox(0,0)[l]{\scriptsize{${\langle{\iota_{2},\hat{k}_{2}}\rangle}$}}}
\put(97,68){\makebox(0,0){\scriptsize{${\langle{\iota_{2},\hat{e}_{2}}\rangle}$}}}
\put(124,32){\makebox(0,0)[l]{\scriptsize{${\langle{1,\hat{m}_{2}}\rangle}$}}}
\put(50,50){\vector(-1,-1){40}}
\put(35,60){\vector(-1,0){25}}
\put(85,60){\vector(1,0){25}}
\put(0,48){\vector(0,-1){35}}
\put(120,48){\vector(0,-1){35}}
\put(70,50){\vector(1,-1){40}}
\put(10,-10){\vector(1,-1){40}}
\put(110,-10){\vector(-1,-1){40}}
\qbezier(50,-30)(55,-25)(60,-20)
\qbezier(60,-20)(65,-25)(70,-30)
\end{picture}
\\\\
\hspace{3pt}in $\mathrmbf{Tbl}(\mathcal{A})$
\\\\
\hspace{6pt}${\mathcal{T}_{1}}{\,\boxleft_{\mathcal{A}}}{\mathcal{T}_{2}} =
\acute{\mathrmbfit{tbl}}_{\mathcal{A}}(\iota_{1})
\bigl(\mathcal{T}_{1}{\,\boxtimes_{\mathcal{A}}}\mathcal{T}_{2}\bigr)$
\\
\hspace{6pt}${\mathcal{T}_{1}}{\,\boxright_{\mathcal{A}}}{\mathcal{T}_{2}} =
\acute{\mathrmbfit{tbl}}_{\mathcal{A}}(\iota_{2})
\bigl(\mathcal{T}_{1}{\,\boxtimes_{\mathcal{A}}}\mathcal{T}_{2}\bigr)$
\end{tabular}}}
\end{tabular}}}
\end{center}
\caption{\texttt{FOLE} Semi-Join}
\label{fole:semi:join}
\end{figure}
}
\comment{
\begin{table}
\begin{center}
{{\fbox{\begin{tabular}{c}
\setlength{\extrarowheight}{2pt}
{\scriptsize{$\begin{array}[c]{c@{\hspace{12pt}}l}
\mathcal{S}_{1}\xleftarrow{h_{1}}\mathcal{S}\xrightarrow{h_{2}}\mathcal{S}_{2}
&
\textit{constraint}
\\
\mathcal{S}_{1} \xrightarrow{\iota_{1}\,} 
{\mathcal{S}_{1}{\!+_{\mathcal{S}}}\mathcal{S}_{2}}
\xleftarrow{\;\iota_{2}}\mathcal{S}_{2}
&
\textit{construction}
\\
\hline
\mathcal{T}_{1}\in\mathrmbf{Tbl}_{\mathcal{A}}(\mathcal{S}_{1})
\text{ and }
\mathcal{T}_{2}\in\mathrmbf{Tbl}_{\mathcal{A}}(\mathcal{S}_{2})
&
\textit{input}
\\
\mathcal{T}_{1}
\xleftarrow{\hat{k}_{1}} 
\mathcal{T}_{1}{\,\boxleft_{\mathcal{A}}}\mathcal{T}_{2}
\text{ or }
\mathcal{T}_{1}{\,\boxright_{\mathcal{A}}}\mathcal{T}_{2}
\xrightarrow{\hat{k}_{2}} 
\mathcal{T}_{2}
&
\textit{output}
\end{array}$}}
\end{tabular}}}}
\end{center}
\caption{\texttt{FOLE} Semi-join I/O}
\label{tbl:fole:semi:join:input:output}
\end{table}
}
%

%
\newpage
\subsubsection{Anti-join.}\label{sub:sub:sec:anti:join}
%
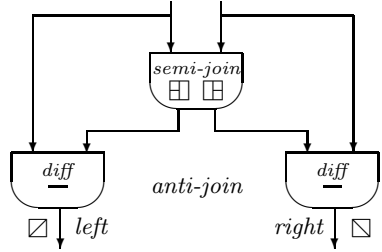
\begin{figure}
\begin{center}
{{{\begin{tabular}{c}
\begin{picture}(130,80)(50,15)
\setlength{\unitlength}{0.97pt}
\put(98,60){\begin{picture}(0,0)(0,3)
\setlength{\unitlength}{0.35pt}
\put(42,30){\makebox(0,0){\normalsize{$\boxleft$}}}
\put(78,30){\makebox(0,0){\normalsize{$\boxright$}}}
\put(40,10){\line(1,0){40}}
\put(10,70){\line(1,0){100}}
\put(10,70){\line(0,-1){30}}
\put(110,70){\line(0,-1){30}}
\put(40,40){\oval(60,60)[bl]}
\put(80,40){\oval(60,60)[br]}
\put(61,52){\makebox(0,0){\scriptsize{{\textit{{semi-join}}}}}}
\end{picture}}
\put(44,21.3){\begin{picture}(0,0)(0,3)
\setlength{\unitlength}{0.35pt}
\put(50,33){\line(1,0){20}}
\put(40,10){\line(1,0){40}}
\put(10,70){\line(1,0){100}}
\put(10,70){\line(0,-1){30}}
\put(110,70){\line(0,-1){30}}
\put(40,40){\oval(60,60)[bl]}
\put(80,40){\oval(60,60)[br]}
\put(61,50){\makebox(0,0){\scriptsize{{\textit{{diff}}}}}}
\end{picture}}
\put(151,21.3){\begin{picture}(0,0)(0,3)
\setlength{\unitlength}{0.35pt}
\put(50,33){\line(1,0){20}}
\put(40,10){\line(1,0){40}}
\put(10,70){\line(1,0){100}}
\put(10,70){\line(0,-1){30}}
\put(110,70){\line(0,-1){30}}
\put(40,40){\oval(60,60)[bl]}
\put(80,40){\oval(60,60)[br]}
\put(61,50){\makebox(0,0){\scriptsize{{\textit{{diff}}}}}}
\end{picture}}
\put(120,30){\makebox(0,0){\footnotesize{{\textit{{anti-join}}}}}}
\put(72,14){\makebox(0,0)[l]{\footnotesize{{\textit{{left}}}}}}
\put(58,14){\makebox(0,0){{$\boxslash$}}}
\put(184,14){\makebox(0,0){{$\boxbackslash$}}}
\put(170,14){\makebox(0,0)[r]{\footnotesize{{\textit{{right}}}}}}
\put(112.7,60.3){\line(0,-1){8.5}}
\put(112.7,52){\line(-1,0){36}}
\put(77,46){\line(0,1){6}}
\put(77,43){\vector(0,-1){0}}
\put(127,60.3){\line(0,-1){8.5}}
\put(127,52){\line(1,0){36}}
\put(163,46){\line(0,1){6}}
\put(163,43){\vector(0,-1){0}}
\put(109.7,102.2){\vector(0,-1){20}}
\put(129.3,102.2){\vector(0,-1){20}}
\put(110,97){\line(-1,0){54}}
\put(56,97){\vector(0,-1){53.5}}
\put(129,97){\line(1,0){52}}
\put(181,97){\vector(0,-1){53.5}}
\put(66.6,22){\vector(0,-1){16}}
\put(173.6,22){\vector(0,-1){16}}
\end{picture}
\end{tabular}}}}
\end{center}
\caption{\texttt{FOLE} Anti-Join Flow Chart}
\label{fole:anti:join:flo:chrt}
\end{figure}
The anti-join operations are related to the semi-join operations.
The left (right) anti-join of two tables is the complement of the left (right) semi-join. 
%
Let $\mathcal{A}$
be a fixed type domain.
For any two $\mathcal{A}$-tables $\mathcal{T}_{1}$ and $\mathcal{T}_{2}$
that are linked through 
an $X$-sorted signature span 
{\footnotesize{$\mathcal{S}_{1}\xleftarrow{h_{1}} 
\mathcal{S}
\xrightarrow{h_{2}} \mathcal{S}_{2}$,}\normalsize}
the left anti-join $\mathcal{T}_{1}{\;\boxslash_{\mathcal{A}}\,}\mathcal{T}_{2}$
is the set of all tuples in $\mathcal{T}_{1}$ 
for which there is no tuple in $\mathcal{T}_{2}$ 
that is equal on their common attribute names. 
We use the following routes of flow.
\begin{center}
{{\begin{tabular}{c}
\setlength{\unitlength}{0.6pt}
\begin{picture}(320,80)(0,-10)
\put(325,25){\makebox(0,0){\normalsize{
$\left.\rule{0pt}{24pt}\right\}
\underset{\textstyle{\textsf{anti-join}}}{\textsf{right}}$}}}
\put(-20,25){\makebox(0,0){\normalsize{
$\underset{\textstyle{\textsf{anti-join}}}{\textsf{left}}
\left\{\rule{0pt}{24pt}\right.$}}}
\put(90,30){\makebox(0,0){\huge{
$\overset{\textit{\scriptsize{}}}{\Downarrow}$}}}
\put(150,55){\makebox(0,0){\huge{
$\overset{\textit{\scriptsize{semi-join}}}{}$}}}
\put(150,30){\makebox(0,0){\huge{
$\overset{\textit{\scriptsize{}}}{\Downarrow}$}}}
\put(210,30){\makebox(0,0){\huge{
$\overset{\textit{\scriptsize{}}}{\Downarrow}$}}}
\put(50,5){\makebox(0,0){\huge{
$\overset{\textit{\scriptsize{diff}}}{\Leftarrow}$}}}
\put(250,5){\makebox(0,0){\huge{
$\overset{\textit{\scriptsize{diff}}}{\Rightarrow}$}}}
\put(95,8){\line(0,1){30}}
\put(85,8){\oval(20,20)[br]}
\put(215,8){\line(0,1){30}}
\put(225,8){\oval(20,20)[bl]}
\put(120,48){\line(1,0){24}}
\put(144,38){\oval(20,20)[tr]}
\put(190,48){\line(-1,0){24}}
\put(166,38){\oval(20,20)[tl]}
\put(154,14){\line(0,1){24}}
\qbezier(155,12)(155,10)(155,8)
\put(156,14){\line(0,1){24}}
\put(145,8){\oval(20,20)[br]}
\put(144,-2){\line(-1,0){102}}
\put(165,8){\oval(20,20)[bl]}
\put(166,-2){\line(1,0){102}}
\end{picture}
\end{tabular}}}
\end{center}
Left anti-join
within the context $\mathrmbf{Tbl}(\mathcal{A})$
is left semi-join, followed by difference.
This is the two-step process
illustrated in 
Fig.\;\ref{fole:anti:join:flo:chrt}.
The constraint, construction and input for anti-join 
are identical to that for natural join.
Only the output is different.
\begin{description}
\item[Constraint:] 
The constraint for anti-join is the same as the constraint for natural join
(Tbl.\,\ref{tbl:fole:natural:join:input:output}):
an $X$-sorted signature span 
$\mathcal{S}_{1}\xleftarrow{h_{1}}\mathcal{S}\xrightarrow{h_{2}}\mathcal{S}_{2}$
in $\mathrmbf{List}(X)$
\newline
\item[Construction:] 
The construction for anti-join is the same as the construction for natural join 
(Tbl.\,\ref{tbl:fole:natural:join:input:output}):
the opspan
{\footnotesize{$\mathcal{S}_{1}\xrightarrow{\iota_{1}\,} 
{\mathcal{S}_{1}{\!+_{\mathcal{S}}}\mathcal{S}_{2}}
\xleftarrow{\;\iota_{2}}\mathcal{S}_{2}$}\normalsize}
of injection $X$-signature morphisms
with pushout signature
$\mathcal{S}_{1}{+_{\mathcal{S}}}\mathcal{S}_{2}$.
\newline
\item[Input:] 
The input for anti-join is the same as the input for natural join 
(Tbl.\,\ref{tbl:fole:natural:join:input:output}):
a pair of tables
$\mathcal{T}_{1} = {\langle{K_{1},t_{1}}\rangle} \in 
\mathrmbf{Tbl}_{\mathcal{A}}(\mathcal{S}_{1})$
and
$\mathcal{T}_{2} = {\langle{K_{2},t_{2}}\rangle} \in 
\mathrmbf{Tbl}_{\mathcal{A}}(\mathcal{S}_{2})$.
\newline
\item[Output:] 
The output is semi-join followed by difference.
\newline
\begin{itemize}
\item 
Left semi-join 
results in the table
${\mathcal{T}_{1}}{\,\boxleft_{\mathcal{A}}}{\mathcal{T}_{2}} =
\acute{\mathrmbfit{tbl}}_{\mathcal{A}}(\iota_{1})(
\mathcal{T}_{1}{\,\boxtimes_{\mathcal{A}}}\mathcal{T}_{2})
= {\langle{\hat{K}_{12},\acute{t}_{1}}\rangle}$ 
with key set $\widehat{K}_{12}$ and
tuple function
$\mathrmbfit{tup}_{\mathcal{A}}(\mathcal{S}_{1})
\xleftarrow{\acute{t}_{1}}\widehat{K}_{12}$.
\newline
\item 
Difference in the small table fiber context
$\mathrmbf{Tbl}_{\mathcal{A}}(\mathcal{S}_{1})$
gives the left anti-join table
${\mathcal{T}_{1}}{\,\boxslash_{\mathcal{A}}}{\mathcal{T}_{2}} 
= \mathcal{T}_{1}{\,-\,}({\mathcal{T}_{1}}{\,\boxleft_{\mathcal{A}}}{\mathcal{T}_{2}})
$.
\end{itemize}
%
%
Anti-join is semi-join, followed by difference.
For left anti-join,
this is the two-step process
\newline\mbox{}\hfill\rule[-10pt]{0pt}{26pt}
${\mathcal{T}_{1}}{\,\boxslash_{\mathcal{A}}}{\mathcal{T}_{2}} 
\doteq
\mathcal{T}_{1}{\,-\,}({\mathcal{T}_{1}}{\,\boxleft_{\mathcal{A}}}{\mathcal{T}_{2}})
$.
\hfill\mbox{}\newline
There is an inclusion morphism
$\mathcal{T}_{1}
\xhookleftarrow{\bar{\omega}_{1}}
{\mathcal{T}_{1}}{\,\boxslash_{\mathcal{A}}}{\mathcal{T}_{2}} $
in the small fiber table context $\mathrmbf{Tbl}_{\mathcal{A}}(\mathcal{S}_{1})$, 
which is the output for left anti-join.
The right anti-join has a similar definition
(Tbl.\,\ref{tbl:fole:natural:join:input:output}). 
%
{\footnote{The anti-join of a Cartesian product of non-empty tables is empty.}}
%
\end{description}
%
%
%

%
\begin{aside}\label{aside:anti:join}
%
%
There is an alternate method for computing the anti-join.
For the left anti-join,
project $\mathcal{T}_{1}$ and $\mathcal{T}_{2}$
to the 
$X$-sorted signature $\mathcal{S}$
using 
the $X$-sorted 
signature span 
$\mathcal{S}_{1}\xhookleftarrow{h_{1}}\mathcal{S}\xhookrightarrow{h_{2}}\mathcal{S}_{2}$
%
\footnote{We assume the index functions are injective
$I_{1}\xhookleftarrow{h_{1}}I\xhookrightarrow{h_{2}}I_{2}$.}
%
%
getting $\mathcal{A}$-tables
$\acute{\mathrmbfit{tbl}}_{\mathcal{A}}(h_{1})(\mathcal{T}_{1}) 
= \widehat{\mathcal{T}}_{1}$ 
and 
$\acute{\mathrmbfit{tbl}}_{\mathcal{A}}(h_{2})(\mathcal{T}_{2}) 
= \widehat{\mathcal{T}}_{2}$.
Form the difference
$\bar{\mathcal{T}}_{12} =
\widehat{\mathcal{T}}_{1}{\;-\;}\widehat{\mathcal{T}}_{2}$.
Then,
form the natural join
$\mathcal{T}_{1}{\;\boxtimes}_{\mathcal{A}}\bar{\mathcal{T}}_{12}$.
\end{aside}
\begin{proposition}\label{anti:join}
The left anti-join is
$\mathcal{T}_{1}{\,\boxslash_{\mathcal{A}}}\mathcal{T}_{2}
\cong 
\mathcal{T}_{1}{\;\boxtimes}_{\mathcal{A}}\bar{\mathcal{T}}_{12}$.
\end{proposition}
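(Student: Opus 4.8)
The plan is to argue entirely at the level of image relations. Since both $\mathcal{T}_{1}{\,\boxslash_{\mathcal{A}}}\mathcal{T}_{2}$ and $\mathcal{T}_{1}{\;\boxtimes_{\mathcal{A}}}\bar{\mathcal{T}}_{12}$ live in the fiber $\mathrmbf{Tbl}_{\mathcal{A}}(\mathcal{S}_{1})$ and the asserted ``$\cong$'' is informational equivalence (Prop.\;\ref{tbl:rel:refl}), it suffices to show that the two sides carry the same tuple subset of $\mathrmbfit{tup}_{\mathcal{A}}(\mathcal{S}_{1})$. I would first fix notation: write $R_{1}={\wp}t_{1}(K_{1})$, $R_{2}={\wp}t_{2}(K_{2})$ for the relations of the inputs, and abbreviate the tuple projections $p_{1}=\mathrmbfit{tup}_{\mathcal{A}}(h_{1})$, $p_{2}=\mathrmbfit{tup}_{\mathcal{A}}(h_{2})$, so that $\widehat{R}_{1}={\wp}p_{1}(R_{1})$ and $\widehat{R}_{2}={\wp}p_{2}(R_{2})$ are the relations of $\widehat{\mathcal{T}}_{1},\widehat{\mathcal{T}}_{2}$. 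The computation rests on the relation-level behaviour already recorded: projection along $h_{i}$ acts by direct image ${\wp}p_{i}$ and inflation along $h_{i}$ acts by inverse image $p_{i}^{-1}$ (\S\,\ref{sub:sub:sec:adj:flow:A}), while meet and difference act by intersection and set difference (\S\,\ref{sub:sub:sec:boole}).

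Next I would compute the left-hand side. By \S\,\ref{sub:sub:sec:nat:join} the natural join relation is the set of compatible pairs inside the pullback tuple set $\mathrmbfit{tup}_{\mathcal{A}}(\mathcal{S}_{1}{\!+_{\mathcal{S}}}\mathcal{S}_{2})$, and projecting it to $\mathcal{S}_{1}$ (\S\,\ref{sub:sub:sec:semi:join}) yields the left semi-join relation
\[
\mathrmbfit{im}(\mathcal{T}_{1}{\,\boxleft_{\mathcal{A}}}\mathcal{T}_{2})=\{\,u\in R_{1}\mid p_{1}(u)\in\widehat{R}_{2}\,\}=R_{1}\cap p_{1}^{-1}(\widehat{R}_{2}).
\]
The difference step in the anti-join definition then gives $\mathrmbfit{im}(\mathcal{T}_{1}{\,\boxslash_{\mathcal{A}}}\mathcal{T}_{2})=R_{1}\setminus p_{1}^{-1}(\widehat{R}_{2})$.

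For the right-hand side I would observe that $\bar{\mathcal{T}}_{12}=\widehat{\mathcal{T}}_{1}{-}\widehat{\mathcal{T}}_{2}$ lives over $\mathcal{S}$ with relation $\bar{R}_{12}=\widehat{R}_{1}\setminus\widehat{R}_{2}$, and that it is joined to $\mathcal{T}_{1}$ through the span $\mathcal{S}_{1}\xleftarrow{h_{1}}\mathcal{S}\xrightarrow{1_{\mathcal{S}}}\mathcal{S}$. Because the right leg is an identity, the pushout collapses to $\mathcal{S}_{1}$, with injection morphisms $1_{\mathcal{S}_{1}}$ and $h_{1}$; so one leg inflates $\mathcal{T}_{1}$ along the identity (leaving $R_{1}$ fixed) and the other inflates $\bar{\mathcal{T}}_{12}$ along $h_{1}$ (producing $p_{1}^{-1}(\bar{R}_{12})$), whence their meet is $\mathrmbfit{im}(\mathcal{T}_{1}{\;\boxtimes_{\mathcal{A}}}\bar{\mathcal{T}}_{12})=R_{1}\cap p_{1}^{-1}(\widehat{R}_{1}\setminus\widehat{R}_{2})$.

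The two expressions are then reconciled by the containment $R_{1}\subseteq p_{1}^{-1}(\widehat{R}_{1})$, which holds because $\widehat{R}_{1}={\wp}p_{1}(R_{1})$ forces $p_{1}(u)\in\widehat{R}_{1}$ for every $u\in R_{1}$. Writing $p_{1}^{-1}(\widehat{R}_{1}\setminus\widehat{R}_{2})=p_{1}^{-1}(\widehat{R}_{1})\setminus p_{1}^{-1}(\widehat{R}_{2})$ and intersecting with $R_{1}$, the factor $p_{1}^{-1}(\widehat{R}_{1})$ becomes vacuous, leaving $R_{1}\setminus p_{1}^{-1}(\widehat{R}_{2})$, i.e. the left-hand side. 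I expect the main obstacle to be the right-hand side rather than this final algebra: one must correctly recognise that the degenerate span (identity right leg) makes the join live over $\mathcal{S}_{1}$ with the inflation happening along $h_{1}$, and it is precisely the containment $R_{1}\subseteq p_{1}^{-1}(\widehat{R}_{1})$ that lets subtraction against $\widehat{R}_{1}$ match subtraction against the ambient tuple set. If instead a literal table isomorphism (matching key sets) were demanded, the argument would need the reflection of Prop.\;\ref{tbl:rel:refl} to pass from equal image relations back to isomorphic tables, which is why reducing to relations at the outset is the right move.
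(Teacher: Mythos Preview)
Your proposal is correct and is essentially the same argument as the paper's proof: both reduce to the underlying image relations (the paper flags this explicitly in its footnote) and both amount to the equivalence, for $t\in R_{1}$, between ``$p_{1}(t)\notin\widehat{R}_{2}$'' and ``$p_{1}(t)\in\widehat{R}_{1}\setminus\widehat{R}_{2}$''. The paper presents this as a three-step iff chain on a single tuple, while you unfold the same content as explicit set algebra on $R_{1}$, $p_{1}^{-1}(\widehat{R}_{1})$, $p_{1}^{-1}(\widehat{R}_{2})$; your treatment of the degenerate pushout $\mathcal{S}_{1}{+_{\mathcal{S}}}\mathcal{S}\cong\mathcal{S}_{1}$ is a useful clarification that the paper leaves implicit.
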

\begin{proof}
%
\footnote{\label{footnote:refl}
This argument is in terms of the underlying relations in the reflection of
Prop.\;\ref{tbl:rel:refl}.}
%
Consider a tuple
$t = (\acute{t}_{1},\widehat{t}_{12}) \in \mathrmbfit{tup}_{\mathcal{A}}(\mathcal{S}_{1})$,
\newline
with projection sub-tuple 
$\widehat{t}_{12} = \mathrmbfit{tup}_{\mathcal{A}}(h_{1})(t) 
\in \mathrmbfit{tup}_{\mathcal{A}}(\mathcal{S})$.
Then,
\newline
$t \in \mathcal{T}_{1}{\,\boxslash_{\mathcal{A}}}\mathcal{T}_{2}
\text{\;iff\;}
t \in \mathcal{T}_{1},\;
t \not\in {\mathcal{T}_{1}}{\,\boxleft_{\mathcal{A}}}{\mathcal{T}_{2}}
\text{\;iff\;}
t \in \mathcal{T}_{1},\;
\widehat{t}_{12} \in \bar{\mathcal{T}}_{12}
\text{\;iff\;}
t \in \mathcal{T}_{1}{\,\boxtimes}_{\mathcal{A}}\bar{\mathcal{T}}_{12}
$.
\mbox{}\hfill\rule{5pt}{5pt}
\end{proof}
%


%
\newpage
\subsection{Generic Meet.}\label{sub:sub:sec:generic:meet}
%
%
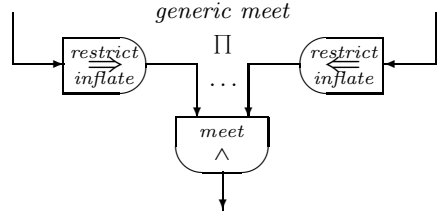
\begin{figure}
\begin{center}
{{{\begin{tabular}{c}
\begin{picture}(160,75)(37,27)
\setlength{\unitlength}{0.97pt}
\put(54,65){\begin{picture}(0,0)(0,0)
\setlength{\unitlength}{0.35pt}
\put(10,10){\line(1,0){60}}
\put(10,70){\line(1,0){60}}
\put(10,70){\line(0,-1){60}}
\put(70,40){\oval(60,60)[br]}
\put(70,40){\oval(60,60)[tr]}
\put(55,55){\makebox(0,0){\scriptsize{{\textit{{restrict}}}}}}
\put(56,40){\makebox(0,0){\Large{${\Rightarrow}$}}}
\put(55,25){\makebox(0,0){\scriptsize{{\textit{{inflate}}}}}}
\end{picture}}
\put(146.5,65){\begin{picture}(0,0)(0,0)
\setlength{\unitlength}{0.35pt}
\put(40,10){\line(1,0){60}}
\put(40,70){\line(1,0){60}}
\put(100,70){\line(0,-1){60}}
\put(40,40){\oval(60,60)[bl]}
\put(40,40){\oval(60,60)[tl]}
\put(58,55){\makebox(0,0){\scriptsize{{\textit{{restrict}}}}}}
\put(56,40){\makebox(0,0){\Large{${\Leftarrow}$}}}
\put(58,25){\makebox(0,0){\scriptsize{{\textit{{inflate}}}}}}
\end{picture}}
\put(98,37){\begin{picture}(0,0)(0,3)
\setlength{\unitlength}{0.35pt}
\put(60,30){\makebox(0,0){\normalsize{$\wedge$}}}
\put(40,10){\line(1,0){40}}
\put(10,70){\line(1,0){100}}
\put(10,70){\line(0,-1){30}}
\put(110,70){\line(0,-1){30}}
\put(40,40){\oval(60,60)[bl]}
\put(80,40){\oval(60,60)[br]}
\put(60,55){\makebox(0,0){\scriptsize{{\textit{{meet}}}}}}
\end{picture}}
\put(120,100){\makebox(0,0){\footnotesize{{\textit{{generic meet}}}}}}
\put(120,88){\makebox(0,0){\scriptsize{$\prod$}}}
\put(121,72){\makebox(0,0){\ldots}}
\put(38,80){\line(0,1){20}}
\put(38,80){\vector(1,0){20}}
\put(110,80){\line(-1,0){20}}
\put(110,80){\vector(0,-1){21}}
\put(120,38){\vector(0,-1){15}}
\put(130,80){\vector(0,-1){21}}
\put(130,80){\line(1,0){20}}
\put(203,80){\line(0,1){20}}
\put(203,80){\vector(-1,0){20}}
\end{picture}
\end{tabular}}}}
\end{center}
\caption{\texttt{FOLE} Generic Meet Flow Chart}
\label{fig:fole:boolean:meet:flo:chrt}
\end{figure}
The generic meet for tables 
is the relational counterpart of
the logical conjunction for predicates.
Where 
the \emph{meet} operation (\S\,\ref{sub:sub:sec:boole}) 
is the analogue for logical conjunction 
at the small scope $\mathrmbf{Tbl}(\mathcal{D})$ of a signed domain table fiber,
and the \emph{quotient} operation (\S\,\ref{sub:sub:sec:quotient})
and
the \emph{natural join} operation (\S\,\ref{sub:sub:sec:nat:join})
are special cases of the analogue 
at the intermediate scope 
$\mathrmbf{Tbl}(\mathcal{A})$ of a type domain table fiber,
the \emph{generic meet} operation
is defined at the large scope $\mathrmbf{Tbl}$ of all tables.
We identify all of these concepts as limits at different scopes.
%

In this section, we focus on tables in the full context $\mathrmbf{Tbl}$ 
of all tables.
These limits are resolvable into restriction-inflations
followed by meet.
The generic meet operation is dual to the generic join operation
(\S\,\ref{sub:sub:sec:generic:join}).
%
\footnote{Generic joins and colimits in the context of $\mathrmbf{Tbl}$ 
(\S\,\ref{sub:sub:sec:generic:join})
can be constructed out of 
limits in the context of signed domains $\mathrmbf{Dom}$,
the table projection-expansion operation along signed domain morphisms, and 
joins (colimits) in small table fibers.}
%
%
%
The \emph{generic meet} operation only needs a sufficient
collection of tables
(Def.\;\ref{def:suff:adequ:lim}).
To reiterate,
we identify \texttt{FOLE} generic meets with all limits
in the context $\mathrmbf{Tbl}$.
\begin{description}
\item[Constraint:] 
Consider a diagram 
$\mathrmbfit{D} 
: \mathrmbf{I}^{\text{op}} \rightarrow \mathrmbf{Dom}$
consisting of a linked collection of signed domains 
$\bigl\{ 
\mathcal{D}_{i} 
\xrightarrow{{\langle{{h},{f},{g}}\rangle}} 
\mathcal{D}_{j} 
\bigr\}$.
This is the constraint for generic meet 
(Tbl.\,\ref{tbl:fole:generic:meet:input:output}).
\newline
\item[Construction:] 
Let $\widehat{\mathcal{D}} = \coprod \mathrmbfit{D}$ be the colimit
in $\mathrmbf{Dom}$
with injection signed domain morphisms 
$\bigl\{ 
\mathcal{D}_{i} 
\xrightarrow{{\langle{\hat{h}_{i},\hat{f}_{i},\hat{g}_{i}}\rangle}} 
\widehat{\mathcal{D}} 
\mid i \in \mathrmbf{I} \bigr\}$
that commute with the links in the constraint:
${\langle{\hat{h}_{i},\hat{f}_{i},\hat{g}_{i}}\rangle}
= {\langle{{h},{f},{g}}\rangle}{\;\circ\;}
{\langle{\hat{h}_{j},\hat{f}_{j},\hat{g}_{j}}\rangle}$.  
This is the construction for generic meet 
(Tbl.\,\ref{tbl:fole:generic:meet:input:output}).
\newline
\item[Input:] 
Let $I \xrightarrow{\mathrmbfit{T}} \mathrmbf{Tbl}$ 
be a \underline{sufficient} indexed collection of tables 
(Def.\;\ref{def:suff:adequ:lim})
{\footnotesize{$
\bigl\{ 
\mathcal{T}_{i} = \mathrmbfit{T}(i) \in \mathrmbf{Tbl}(\mathcal{D}_{i}) 
\mid i \in I \bigr\}$}}
for some indexing set
$I \subseteq obj(\mathrmbf{I})$.
%
%
This is the input for generic meet 
(Tbl.\,\ref{tbl:fole:generic:meet:input:output}).
\newpage
\item[Output:]
Generic meet is restriction/inflation ($i \in I$ times) followed by meet.
\newline
\begin{itemize}
\item 
For each index $i \in I$,
restriction/inflation
{\footnotesize{$\mathrmbf{Tbl}(\mathcal{D}_{i}) 
\xrightarrow{\grave{\mathrmbfit{tbl}}({\hat{h}_{i},\hat{f}_{i},\hat{g}_{i}})} 
\mathrmbf{Tbl}(\widehat{\mathcal{D}})$}}
(\S\,\ref{sub:sub:sec:flow:sign:dom:mor})
along the tuple function
of the signed domain morphism 
$\mathcal{D}_{i} 
\xrightarrow{{\langle{\hat{h}_{i},\hat{f}_{i},\hat{g}_{i}}\rangle}} 
\widehat{\mathcal{D}} = 
\coprod \mathrmbfit{D}$
maps the table 
$\mathcal{T}_{i} 
\in \mathrmbf{Tbl}(\mathcal{D}_{i})$
to the table
$\widehat{\mathcal{T}}_{i}
= {\langle{\widehat{K}_{i},\hat{t}_{i}}\rangle} 
\in \mathrmbf{Tbl}({\widehat{\mathcal{D}}})$
with its tuple function
$\widehat{K}_{i} \xrightarrow{\hat{t}_{i}} 
\mathrmbfit{tup}(\widehat{\mathcal{D}})$
defined by pullback, 
$\grave{k}_{i}{\,\cdot\,}t_{i} 
= \hat{t}_{i}{\,\cdot\,}{\mathrmbfit{tup}(\hat{h}_{i},\hat{f}_{i},\hat{g}_{i})}$. 
Here we have ``vertically restricted'' and then ``horizontally inflated'' 
tuples in $\mathrmbfit{tup}(\mathcal{D}_{i}) \subseteq \mathrmbf{List}(Y_{i})$ 
by pullback 
along the tuple function
{\footnotesize{$
\mathrmbfit{tup}(\mathcal{D}_{i})
\xleftarrow
{\mathrmbfit{tup}(\hat{h}_{i},\hat{f}_{i},\hat{g}_{i})}
\mathrmbfit{tup}(\widehat{\mathcal{D}})
$}\normalsize}
(see RHS Fig.\;\ref{fig:flow:sign:dom}).
This is linked 
to the table $\mathcal{T}_{i}$ 
by the table morphism 
{\footnotesize{{$
\mathcal{T}_{i} = {\langle{K_{i},t_{i}}\rangle}
\xleftarrow{{\langle{{\langle{\hat{h}_{i},\hat{f}_{i},\hat{g}_{i}}\rangle},\grave{k}_{i}}\rangle}}
{\langle{\widehat{K}_{i},\hat{t}_{i}}\rangle} 
= \widehat{\mathcal{T}}_{i}$.}}\normalsize}
%
\newline
\item
Intersection (\S\,\ref{sub:sub:sec:boole})
of the tables $\{ \widehat{\mathcal{T}}_{i} \mid i \in I \}$
in the fiber context $\mathrmbf{Tbl}(\widehat{\mathcal{D}})$
defines the 
generic meet
$\prod\mathrmbfit{T}
= \widehat{\mathrmbfit{T}} 
= \bigwedge \bigl\{\widehat{\mathcal{T}}_{i} \mid i \in I \bigr\}
= {\langle{\widehat{K},\hat{t}}\rangle}$,
whose key set $\widehat{K}$ is the pullback and 
whose tuple map is the mediating function
$\widehat{K}\xrightarrow{{(\hat{t}_{i})}}
\mathrmbfit{tup}(\widehat{\mathcal{D}})$
of the multi-opspan
$\bigl\{ \widehat{K}_{i}\xrightarrow{\hat{t}_{i}}
\mathrmbfit{tup}(\widehat{\mathcal{D}}) \mid i \in I \bigr\}$,
resulting in the discrete multi-span (cone)
{\footnotesize{{$\Bigl\{ 
\widehat{\mathcal{T}}_{i} 
\xleftarrow{\;\hat{\pi}_{i}\;} 
\widehat{\mathrmbfit{T}} 
\mid i \in I \Bigr\}$.}}\normalsize}
\end{itemize}
\mbox{}\newline
Restriction-inflation composed with meet 
defines the multi-span of table morphisms
\[\mbox
{\footnotesize{{$\Bigl\{ 
\mathcal{T}_{i}
\xleftarrow
[\;\hat{\pi}_{i}{\circ\,}{\langle{{\langle{\hat{h}_{i},\hat{f}_{i},\hat{g}_{i}}\rangle},\grave{k}_{i}}\rangle}\;]
{\;{\langle{{\langle{\hat{h}_{i},\hat{f}_{i},\hat{g}_{i}}\rangle},\hat{k}_{i}}\rangle}\;} 
\prod\mathrmbfit{T} = \widehat{\mathrmbfit{T}}
\mid i \in I \Bigr\}$,}}\normalsize}\]
illustrated in 
Fig.\;\ref{fig:fole:generic:meet},
which is the output for generic meet 
(Tbl.\,\ref{tbl:fole:generic:meet:input:output}).
\end{description}
%
Generic meet is restriction/inflation ($i \in I$ times) followed by meet.
This is the two-step process
%
\[\mbox{\footnotesize{{$
\prod\mathrmbfit{T} 
\doteq
\bigwedge\;
\Bigl\{  
\grave{\mathrmbfit{tbl}}({\hat{h}_{i},\hat{f}_{i},\hat{g}_{i}})(\mathcal{T}_{i}) 
\mid i \in I \Bigr\}$.}}\normalsize}\]
%
\begin{aside}
Theoretically
this would represent the limit
of a diagram
$\mathrmbf{I}\xrightarrow{\mathrmbfit{T}}\mathrmbf{Tbl}$ 
consisting of a linked collection of tables.
But practically,
we are only given the constraint (a diagram)
$\mathrmbf{I}^{\text{op}}\xrightarrow{\mathrmbfit{D}}\mathrmbf{Dom}$ 
consisting of a linked collection of signed domains 
$\bigl\{ 
\mathcal{D}_{i} = \mathrmbfit{D}(i)
\mid i \in \mathrmbf{I} \bigr\}$
and the input 
$I\xrightarrow{\mathrmbfit{T}}\mathrmbf{Tbl}$ 
consisting of a \underline{sufficient} indexed collection of tables 
(Def.\;\ref{def:suff:adequ:lim})
{\footnotesize{$
\bigl\{ 
\mathcal{T}_{i} =
\mathrmbfit{T}(i) \in \mathrmbf{Tbl}(\mathcal{D}_{i}) 
\mid i \in I \subseteq obj(\mathrmbf{I}) \bigr\}$}}.
\end{aside}

%
\begin{table}
\begin{center}
{{\fbox{\begin{tabular}{c}
\setlength{\extrarowheight}{2pt}
{\scriptsize{$\begin{array}[c]{c@{\hspace{12pt}}l}
\mathcal{D}_{i}
\xrightarrow{{\langle{h,f,g}\rangle}}
\mathcal{D}_{j}
&
\textit{constraint}
\\
\mathcal{D}_{i}
\xrightarrow{{\langle{\hat{h}_{i},\hat{f}_{i},\hat{g}_{i}}\rangle}}
\coprod \mathrmbfit{D}
&
\textit{construction}
\\\hline
\bigl\{ 
\mathcal{T}_{i} 
\in \mathrmbf{Tbl}(\mathcal{D}_{i}) 
\mid i \in I \bigr\}
&
\textit{input}
\\
\mathcal{T}_{i}
\xleftarrow
{{\langle{{\langle{\hat{h}_{i},\hat{f}_{i},\hat{g}_{i}}\rangle},\hat{k}_{i}}\rangle}}
\prod \mathrmbfit{T}
&
\textit{output}
\end{array}$}}
\end{tabular}}}}
\end{center}
\caption{\texttt{FOLE} Generic Meet I/O}
\label{tbl:fole:generic:meet:input:output}
\end{table}
%
%
%
\begin{figure}
\begin{center}
{{\begin{tabular}{c}
{{\begin{tabular}{c}
\setlength{\unitlength}{0.56pt}
\begin{picture}(320,160)(0,-5)
\put(0,80){\makebox(0,0){\footnotesize{$K_{i}$}}}
\put(100,80){\makebox(0,0){\footnotesize{$\widehat{K}_{i}$}}}
\put(220,80){\makebox(0,0){\footnotesize{$\widehat{K}_{j}$}}}
\put(324,80){\makebox(0,0){\footnotesize{$K_{j}$}}}
\put(160,148){\makebox(0,0){\footnotesize{$\widehat{K}$}}}
\put(0,0){\makebox(0,0){\footnotesize{$
{\mathrmbfit{tup}(\mathcal{D}_{i})}$}}}
\put(320,0){\makebox(0,0){\footnotesize{$
{\mathrmbfit{tup}(\mathcal{D}_{j})}$}}}
\put(160,0){\makebox(0,0){\footnotesize{$
{\mathrmbfit{tup}(\widehat{\mathcal{D}})}$}}}
\put(55,90){\makebox(0,0){\scriptsize{$\grave{k}_{i}$}}}
\put(265,90){\makebox(0,0){\scriptsize{$\grave{k}_{j}$}}}
\put(80,-12){\makebox(0,0){\scriptsize{$
\mathrmbfit{tup}(\hat{h}_{i},\hat{f}_{i},\hat{g}_{i})$}}}
\put(240,-12){\makebox(0,0){\scriptsize{$
\mathrmbfit{tup}(\hat{h}_{j},\hat{f}_{j},\hat{g}_{j})$}}}
\put(-6,40){\makebox(0,0)[r]{\scriptsize{$t_{i}$}}}
\put(125,40){\makebox(0,0)[r]{\scriptsize{$\hat{t}_{i}$}}}
\put(125,116){\makebox(0,0)[r]{\scriptsize{$\hat{\pi}_{i}$}}}
\put(327,40){\makebox(0,0)[l]{\scriptsize{$t_{j}$}}}
\put(200,40){\makebox(0,0)[l]{\scriptsize{$\hat{t}_{j}$}}}
\put(195,115){\makebox(0,0)[l]{\scriptsize{$\hat{\pi}_{j}$}}}
\put(64,128){\makebox(0,0)[r]{\scriptsize{$\hat{k}_{i}$}}}
\put(260,128){\makebox(0,0)[l]{\scriptsize{$\hat{k}_{j}$}}}
\put(0,65){\vector(0,-1){50}}
\put(320,65){\vector(0,-1){50}}
\put(105,65){\vector(1,-1){50}}
\put(215,65){\vector(-1,-1){50}}
\put(150,135){\vector(-1,-1){45}}
\put(165,135){\vector(1,-1){45}}
\put(80,80){\vector(-1,0){60}}
\put(240,80){\vector(1,0){60}}
\put(120,0){\vector(-1,0){85}}
\put(200,0){\vector(1,0){85}}
\qbezier(20,96)(80,120)(140,144)\put(20,96){\vector(-2,-1){0}}
\qbezier(180,144)(240,120)(300,96)\put(300,96){\vector(2,-1){0}}
%
\qbezier(40,30)(30,30)(20,30)
\qbezier(40,30)(40,20)(40,10)
\qbezier(146,36)(153,43)(160,50)
\qbezier(160,50)(167,43)(174,36)
\qbezier(280,30)(290,30)(300,30)
\qbezier(280,30)(280,20)(280,10)
\put(60,45){\makebox(0,0){\huge{
$\overset{\textit{\scriptsize{inflate}}}{\Rightarrow}$}}}
\put(160,110){\makebox(0,0){\ldots}}
\put(160,80){\makebox(0,0){${\scriptsize{meet}}$}}
\put(250,45){\makebox(0,0){\huge{
$\overset{\textit{\scriptsize{inflate}}}{\Leftarrow}$}}}
\end{picture}
\end{tabular}}}
%
\\\\
{{\begin{tabular}{c}
\setlength{\unitlength}{0.7pt}
\begin{picture}(200,70)(-100,60)
\put(-140,57){\makebox(0,0){\footnotesize{${\mathcal{T}_{i}}$}}}
\put(0,120){\makebox(0,0){\footnotesize{$\widehat{\mathrmbfit{T}}$}}}
\put(140,57){\makebox(0,0){\footnotesize{${\mathcal{T}_{j}}$}}}
\put(-63,57){\makebox(0,0){\footnotesize{$\widehat{\mathcal{T}}_{i}$}}}
\put(63,57){\makebox(0,0){\footnotesize{$\widehat{\mathrmbfit{T}}(j))$}}}
\put(-80,100){\makebox(0,0)[r]{\tiny{$
{\langle{{\langle{\hat{h}_{i},\hat{f}_{i},\hat{g}_{i}}\rangle},\hat{k}_{i}}\rangle}$}}}
\put(-105,45){\makebox(0,0){\tiny{$
{\langle{{\langle{\hat{h}_{i},\hat{f}_{i},\hat{g}_{i}}\rangle},\grave{k}_{i}}\rangle}$}}}
\put(82,100){\makebox(0,0)[l]{\tiny{$
{\langle{{\langle{\hat{h}_{j},\hat{f}_{j},\hat{g}_{j}}\rangle},\hat{k}_{j}}\rangle}$}}}
\put(105,45){\makebox(0,0){\tiny{$
{\langle{{\langle{\hat{h}_{i},\hat{f}_{i},\hat{g}_{i}}\rangle},\hat{k}_{i}}\rangle}$}}}
\put(-33,93){\makebox(0,0)[r]{\scriptsize{$\hat{\pi}_{i}$}}}
\put(35,93){\makebox(0,0)[l]{\scriptsize{$\hat{\pi}_{j}$}}}
%
\qbezier(-130,65)(-85,96)(-16,120)\put(-130,65){\vector(-2,-1){0}}
\qbezier(130,65)(85,96)(16,120)\put(130,65){\vector(2,-1){0}}
\put(-10,110){\vector(-1,-1){40}}
\put(10,110){\vector(1,-1){40}}
\put(-80,57){\vector(-1,0){50}}
\put(80,57){\vector(1,0){50}}
\put(0,80){\makebox(0,0){\ldots}}
\end{picture}
\end{tabular}}}
\\
\end{tabular}}}
\end{center}
\caption{\texttt{FOLE} Generic Meet}
\label{fig:fole:generic:meet}
\end{figure}
%

\newpage
\section{Composite Operations for Colimits.}
\label{sub:sec:comp:ops:sign}

To repeat,
the basic components of
\S\,\ref{sub:sec:base:ops}
are the components to be used in flowcharts.
Composite operations are operations whose flowcharts 
are composed of one or more basic components.
In addition to its basic components,
a composite operation also has a constraint,
which is used to construct its output.
In this section
we define the the composite relational operations
(Tbl.\,\ref{tbl:fole:comp:rel:ops:colim})
related to colimits.
Each composite operation defined here
has a dual relational operation (Tbl.\,\ref{tbl:fole:comp:rel:ops:lim})
related to limits.
%
\footnote{
The co-quotient operation is dual to the quotient operation.
The co-core operation is dual to the core operation.
The data-type join operation is dual to the natural join operation.
The generic join operation is dual to the generic meet operation.}
%
For colimit operations
we need only a sufficient collection of tables linked by a complete collection of signatures
(see Def.\;\ref{def:suff:adequ:colim}).
Fig.\;\ref{fig:routes:flow:colim}
gives the possible routes of flow for colimits.

\begin{table}
\begin{center}
{{{\begin{tabular}{c}
\setlength{\extrarowheight}{2pt}
{\scriptsize{$\begin{array}[c]
{|@{\hspace{5pt}}r@{\hspace{10pt}}l@{\hspace{5pt}\in\hspace{4pt}}l@{\hspace{5pt}}|}
\hline
\textbf{co-quotient:}
&
{\Yleft}_{\mathcal{S}}(\mathcal{T})
= \acute{\mathrmbfit{tbl}}_{\mathcal{S}}(\tilde{g})(\mathcal{T})
&
\mathrmbf{Tbl}_{\mathcal{S}}(\widetilde{\mathcal{A}})
\\
\textbf{co-core:}
&
\mathcal{T}_{1}{\,{\cup}_{\mathcal{A}}}\mathcal{T}_{2} = 
\acute{\mathrmbfit{tbl}}_{\mathcal{A}}(h_{1})(\mathcal{T}_{1})
{\;\vee\;}
\acute{\mathrmbfit{tbl}}_{\mathcal{A}}(h_{2})(\mathcal{T}_{2})
&
\mathrmbf{Tbl}_{\mathcal{A}}(\mathcal{S})
\\
\textbf{data-type join:}
&
\mathcal{T}_{1}{\,\oplus_{\mathcal{S}}}\mathcal{T}_{2} 
= 
\acute{\mathrmbfit{tbl}}_{\mathcal{S}}(\tilde{g}_{1})(\mathcal{T}_{1})
{\;\vee\;}
\acute{\mathrmbfit{tbl}}_{\mathcal{S}}(\tilde{g}_{2})(\mathcal{T}_{2})
&
\mathrmbf{Tbl}_{\mathcal{S}}(\mathcal{A}_{1}{{\times}_{\mathcal{A}}}\mathcal{A}_{2})
\\\hline\hline
\textbf{data-type semi-join:}
&
{\mathcal{T}_{1}}{\,\oleft_{\mathcal{S}}}{\mathcal{T}_{2}} =
\grave{\mathrmbfit{tbl}}_{\mathcal{S}}(\tilde{g}_{1})
\bigl(\mathcal{T}_{1}{\,\oplus_{\mathcal{S}}}
\mathcal{T}_{2}\bigr)
&
\mathrmbf{Tbl}_{\mathcal{S}}(\mathcal{A}_{1})
\\
\textbf{data-type anti-join:}
&
{\mathcal{T}_{1}}{\,\oslash_{\mathcal{S}}}{\mathcal{T}_{2}} 
= \mathcal{T}_{1}{\,-\,}({\mathcal{T}_{1}}{\,\oleft_{\mathcal{S}}}{\mathcal{T}_{2}})
&
\mathrmbf{Tbl}_{\mathcal{S}}(\mathcal{A}_{1})
\\\hline\hline
\textbf{generic join:}
&
\coprod\mathrmbfit{T} = 
\bigvee \Bigl\{ 
\acute{\mathrmbfit{tbl}}({\tilde{h}_{i},\tilde{f}_{i},\tilde{g}_{i}})(\mathcal{T}_{i})
 \mid i \in I \Bigr\}
&
\mathrmbf{Tbl}(\widehat{\mathcal{D}})
\\\hline
\end{array}$}}
\end{tabular}}}}
\end{center}
\caption{\texttt{FOLE} Composite Relational Operations for Colimits}
\label{tbl:fole:comp:rel:ops:colim}
\end{table}
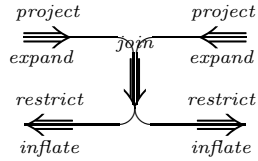
\begin{figure}
\begin{center}
{{\begin{tabular}{c}
\setlength{\unitlength}{0.65pt}
\begin{picture}(120,90)(100,-20)
\put(104,55){\makebox(0,0){\huge{$\overset{\textit{\scriptsize{project}}}{\Rightarrow}$}}}
\put(100,35.8){\makebox(0,0){\huge{${\textit{\scriptsize{expand}}}$}}}
\put(206,55){\makebox(0,0){\huge{$\overset{\textit{\scriptsize{project}}}{\Leftarrow}$}}}
\put(205,35.8){\makebox(0,0){\huge{${\textit{\scriptsize{expand}}}$}}}
\put(155,30){\makebox(0,0){\huge{$\overset{\textit{\scriptsize{join}}}{\Downarrow}$}}}
\put(105,4.1){\makebox(0,0){\huge{$\overset{\textit{\scriptsize{restrict}}}{\Leftarrow}$}}}
\put(105,-17){\makebox(0,0){\huge{${\textit{\scriptsize{inflate}}}$}}}
\put(205,4.1){\makebox(0,0){\huge{$\overset{\textit{\scriptsize{restrict}}}{\Rightarrow}$}}}
\put(205,-17){\makebox(0,0){\huge{${\textit{\scriptsize{inflate}}}$}}}
\put(90,48){\line(1,0){54}}
\put(144,38){\oval(20,20)[tr]}
\put(166,38){\oval(20,20)[tl]}
\put(220,48){\line(-1,0){54}}
\put(145,-3){\line(-1,0){54}}
\put(165,-3){\line(1,0){54}}
\put(154,9){\line(0,1){29}}
\put(156,9){\line(0,1){29}}
\put(145,7){\oval(20,20)[br]}
\put(165,7){\oval(20,20)[bl]}
\end{picture}
\end{tabular}}}
\end{center}
\caption{Routes of Flow: Colimits}
\label{fig:routes:flow:colim}
\end{figure}
%

%
\newpage
\subsection{Co-quotient.}
\label{sub:sub:sec:co-quotient}
%
\footnote{Basic components 
(\S\,\ref{sub:sec:base:ops})
are components to be used in flowcharts.
In particular, 
the co-quotient \underline{composite} operation 
of this section
has a flowchart with only one component --- expansion.
In addition to its one component,
it also has a constraint,
which is used to construct its output.}
%
\begin{figure}
\begin{center}
{{{\begin{tabular}{c}
\begin{picture}(120,40)(55,45)
\setlength{\unitlength}{0.97pt}
\put(96.5,35){\begin{picture}(0,0)(0,0)
\setlength{\unitlength}{0.35pt}
\put(40,10){\line(1,0){60}}
\put(40,70){\line(1,0){60}}
\put(100,70){\line(0,-1){60}}
\put(40,40){\oval(60,60)[bl]}
\put(40,40){\oval(60,60)[tl]}
\put(58,50){\makebox(0,0){\scriptsize{{\textit{{expand}}}}}}
\put(56,30){\makebox(0,0){\Large{${\Leftarrow}$}}}
\end{picture}}
\put(120,85){\makebox(0,0){\footnotesize{{\textit{{co-quotient}}}}}}
\put(120,75){\makebox(0,0){\footnotesize{$\Yleft$}}}
\put(80,50){\line(1,0){20}}
\put(153,50){\vector(-1,0){20}}
\end{picture}
\end{tabular}}}}
\end{center}
\caption{\texttt{FOLE} Co-quotient Flow Chart}
\label{fig:fole:co-quotient:flo:chrt}
\end{figure}
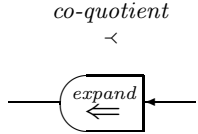
%
In this section, we focus on tables in the context $\mathrmbf{Tbl}(\mathcal{S})$ 
for fixed signature $\mathcal{S}$.
In this context,
generic joins 
--- for the special case of the co-equalizer of a parallel pair --- 
are called co-quotients.
Here,
we define an equivalence on data values.
In \S\,\ref{sub:sub:sec:quotient}
we discuss a dual notion;
there we define an equivalence on attributes, specifically on indexes.
The co-quotient operation defined here 
is dual to 
the quotient operation defined in \S\,\ref{sub:sub:sec:quotient}.
%
%
\begin{description}
\item[Constraint:] 
Consider a parallel pair 
$g_{1},g_{2}:\mathcal{A}{\;\rightrightarrows}\;\mathcal{A}'$
of $X$-type domain morphisms 
in $\mathrmbf{Cls}(X)$.
This is the constraint for co-quotient 
(Tbl.\,\ref{tbl:fole:co-quotient:input:output}).
\newline
\item[Construction:] 
We can construct 
the equalizer of this constraint 
in $\mathrmbf{Cls}(X)$
with co-quotient $X$-type domain
$\mathcal{A}/\mathcal{I} 
= {\langle{X,\equiv_{R},\models_{\mathcal{A}/\mathcal{I}}}\rangle}$
and 
canonical quotient type domain morphism
$\mathcal{A}/\mathcal{I}\xrightarrow{\;[]_{R}\;}\mathcal{A}$
with 
surjective data value function $\equiv_{R}\xleftarrow{\;[]_{R}\;}Y$.
%
\footnote{The data value functions 
in the constraint
satisfy the invariant 
$\mathcal{I}={\langle{X,R}\rangle}$
consisting of 
the full subset of sorts $X$ and
the equivalence relation of data values 
$R = \{ (g_{1}(y'),g_{2}(y')) \mid  y' \in Y'\} \subseteq Y{\,\times\,}Y$
defined by the expression 
{\footnotesize{{$
g_{1}(y'){\;\models_{\mathcal{A}}\;}x
\text{ \underline{iff} }
y'{\;\models_{\mathcal{A}'}\;}x
\;\text{ \underline{iff} }\;
g_{2}(y'){\;\models_{\mathcal{A}}\;}x
$}}\normalsize}
for the parallel pair in the constraint.
The co-quotient $X$-type domain
$\mathcal{A}/\mathcal{I} 
= {\langle{X,\equiv_{R},\models_{\mathcal{A}/\mathcal{I}}}\rangle}$
has as data values the $R$-equivalence classes of data values in $Y'$. 
(invariants/quotients are discussed in \cite{barwise:seligman:97})
}
%
%
This is the construction for co-quotient 
(Tbl.\,\ref{tbl:fole:co-quotient:input:output}).
\newline
\item[Input/Output:] 
Consider a table
$\mathcal{T} = {\langle{K,t}\rangle} \in 
\mathrmbf{Tbl}_{\mathcal{S}}(\mathcal{A})$.
This table forms an adequate collection 
(Def.\;\ref{def:suff:adequ:colim})
to compute the coequalizer.
This is the input for co-quotient (Tbl.\,\ref{tbl:fole:co-quotient:input:output}).
The output is computed with one expansion.
%
\begin{itemize}
\item 
Expansion 
$\mathrmbf{Tbl}_{\mathcal{S}}(\mathcal{A}/\mathcal{I})
{\;\xleftarrow{\acute{\mathrmbfit{tbl}}_{\mathcal{S}}([]_{R})}\;}
\mathrmbf{Tbl}_{\mathcal{S}}(\mathcal{A})$
along the tuple function of the 
$X$-type domain morphism
{\footnotesize{$\mathcal{A}/\mathcal{I}\xrightarrow{\,[]_{R}\,}\mathcal{A}$}}
maps the table
$\mathcal{T}$
to the table 
$\widetilde{\mathcal{T}} 
= \acute{\mathrmbfit{tbl}}_{\mathcal{S}}([]_{R})(\mathcal{T})
= {\langle{K,\tilde{t}}\rangle} 
\in \mathrmbf{Tbl}_{\mathcal{S}}(\mathcal{A}/\mathcal{I})$,
with its tuple function
$K \xrightarrow{\tilde{t}} 
\mathrmbfit{tup}_{\mathcal{S}}(\mathcal{A}/\mathcal{I})$
defined by composition,
$\tilde{t} 
= t{\,\cdot\,}\mathrmbfit{tup}_{\mathcal{S}}([]_{R})$. 
%
\comment{The quotient table $\widetilde{\mathcal{T}}$
contains the set of all mergers of tuples in $\mathcal{T}$ 
of equivalence classes of sorts.}
%
This defines
the $\mathcal{S}$-table morphism 
{\footnotesize{{$
\widetilde{\mathcal{T}}
\xleftarrow{{\langle{[]_{R},1}\rangle}} 
\mathcal{T}$,}}\normalsize}
%
which is the output for co-quotient (Tbl.\,\ref{tbl:fole:co-quotient:input:output}).
\end{itemize}
\end{description}
%
Co-quotient is the one-step process
%
\[\mbox{\footnotesize{{$
{\Yleft}_{\mathcal{S}}(\mathcal{T})
\doteq
\acute{\mathrmbfit{tbl}}_{\mathcal{S}}([]_{R})(\mathcal{T})
$.}}\normalsize}\]
%
\begin{aside}
Theoretically
this would represent the co-equalizer,
the colimit 
(see the application discussion for co-completeness in 
\S\,\ref{sub:sec:lim:colim:tbl})
of a parallel pair 
{\footnotesize{${\langle{g_{1},k_{1}}\rangle},{\langle{g_{2},k_{2}}\rangle} :
\mathcal{T}\;\leftleftarrows\;\mathcal{T}'$}}
of $\mathcal{S}$-table morphisms.
But practically,
we are only given the constraint (parallel pair)
{\footnotesize{$g_{1},g_{2}:\mathcal{A}{\;\rightrightarrows}\;\mathcal{A}'$}}
of $X$-type domain morphisms 
and the input 
$\mathcal{T}$
in Tbl.\,\ref{tbl:fole:co-quotient:input:output}.
Similar comments,
which distinguish the practical from the theoretical, 
hold for the data-type join operation
in \S\,\ref{sub:sub:sec:boole:join}.
\end{aside}
\begin{table}
\begin{center}
{{\fbox{\begin{tabular}{c}
\setlength{\extrarowheight}{2pt}
{\scriptsize{$\begin{array}[c]{c@{\hspace{12pt}}l}
g_{1},g_{2}:\mathcal{A}{\;\rightrightarrows}\;\mathcal{A}'
&
\textit{constraint}
\\
\mathcal{A}/\mathcal{I}\xrightarrow{\,[]_{R}\,}\mathcal{A}
&
\textit{construction}
\\
\hline
\mathcal{T}\in\mathrmbf{Tbl}_{\mathcal{S}}(\mathcal{A})
&
\textit{input}
\\
{\Yleft}_{\mathcal{S}}(\mathcal{T})
\xleftarrow{{\langle{[]_{R},1}\rangle}} 
\mathcal{T}
&
\textit{output}
\end{array}$}}
\end{tabular}}}}
\end{center}
\caption{\texttt{FOLE} Co-quotient I/O}
\label{tbl:fole:co-quotient:input:output}
\end{table}
%

\newpage
\subsection{Co-core.}\label{sub:sub:sec:co-core}

%
\begin{figure}
\begin{center}
{{{\begin{tabular}{c}
\begin{picture}(160,75)(37,27)
\setlength{\unitlength}{0.97pt}
\put(54,65){\begin{picture}(0,0)(0,0)
\setlength{\unitlength}{0.35pt}
\put(10,10){\line(1,0){60}}
\put(10,70){\line(1,0){60}}
\put(10,70){\line(0,-1){60}}
\put(70,40){\oval(60,60)[br]}
\put(70,40){\oval(60,60)[tr]}
\put(55,50){\makebox(0,0){\scriptsize{{\textit{{project}}}}}}
\put(56,30){\makebox(0,0){\Large{${\Rightarrow}$}}}
\end{picture}}
\put(146.5,65){\begin{picture}(0,0)(0,0)
\setlength{\unitlength}{0.35pt}
\put(40,10){\line(1,0){60}}
\put(40,70){\line(1,0){60}}
\put(100,70){\line(0,-1){60}}
\put(40,40){\oval(60,60)[bl]}
\put(40,40){\oval(60,60)[tl]}
\put(58,50){\makebox(0,0){\scriptsize{{\textit{{project}}}}}}
\put(56,30){\makebox(0,0){\Large{${\Leftarrow}$}}}
\end{picture}}
\put(98,37){\begin{picture}(0,0)(0,3)
\setlength{\unitlength}{0.35pt}
\put(60,30){\makebox(0,0){\normalsize{$\vee$}}}
\put(40,10){\line(1,0){40}}
\put(10,70){\line(1,0){100}}
\put(10,70){\line(0,-1){30}}
\put(110,70){\line(0,-1){30}}
\put(40,40){\oval(60,60)[bl]}
\put(80,40){\oval(60,60)[br]}
\put(60,55){\makebox(0,0){\scriptsize{{\textit{{join}}}}}}
\end{picture}}
\put(120,100){\makebox(0,0){\footnotesize{{\textit{{co-core}}}}}}
\put(120,88){\makebox(0,0){\large{${\cup}$}}}
\put(38,80){\line(0,1){20}}
\put(38,80){\vector(1,0){20}}
\put(110,80){\line(-1,0){20}}
\put(110,80){\vector(0,-1){21}}
\put(120,38){\vector(0,-1){15}}
\put(130,80){\vector(0,-1){21}}
\put(130,80){\line(1,0){20}}
\put(203,80){\line(0,1){20}}
\put(203,80){\vector(-1,0){20}}
\end{picture}
\end{tabular}}}}
\end{center}
\caption{\texttt{FOLE} Co-core Flow Chart}
\label{fig:fole:co-core:flo:chrt}
\end{figure}
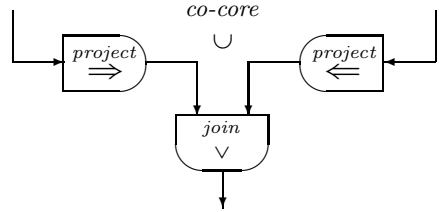
%
Let 
$\mathcal{A}$
be a fixed type domain.
This section discusses the co-core operation.
%
The co-core operation is somewhat unorthodox, 
since it does not have a construction process. 
However, although it is defined for tables with a fixed type domain, 
it computes a colimit-like result: 
it uses left adjoint flow (projection) followed by union. 
In fact, it corresponds to the first half of flow 
along a signed domain morphism followed by union.
%
\comment{\S\,\ref{sub:sec:comp:ops:sign} discusses operations 
that compute a colimit.
Although the co-core operation 
is defined in the scope $\mathrmbf{Tbl}(\mathcal{A})$,
it appears here since it computes a colimit.}
%
We use the following routes of flow from 
Fig.\;\ref{fig:routes:flow:colim}.
\begin{center}
{{\begin{tabular}{c}
\setlength{\unitlength}{0.6pt}
\begin{picture}(320,60)(0,5)
%
%
\put(100,55){\makebox(0,0){\huge{
$\overset{\textit{\scriptsize{project}}}{\Rightarrow}$}}}
\put(200,55){\makebox(0,0){\huge{
$\overset{\textit{\scriptsize{project}}}{\Leftarrow}$}}}
\put(150,30){\makebox(0,0){\huge{
$\overset{\textit{\scriptsize{join}}}{\Downarrow}$}}}
%
\put(90,48){\line(1,0){54}}
\put(144,38){\oval(20,20)[tr]}
\put(166,38){\oval(20,20)[tl]}
\put(220,48){\line(-1,0){54}}
\put(154,9){\line(0,1){29}}
\put(156,9){\line(0,1){29}}
\put(155,6){\vector(0,-1){10}}
\end{picture}
\end{tabular}}}
\end{center}
The idea here is to get rid of some of the unused indexes,
thereby getting rid of some of the unused data types.
%
\footnote{A header is \emph{co-core} when 
it appears in both tables and its index values are taken from either index set.}
%
This is comparable to the filtered join for signatures in 
\S\,\ref{sub:sub:sec:filtered:join},
where we want to get rid of some of the unreliable or inauthentic data values.
However,
co-core is an orthodox operation
since it does follow the construction of a limit, whereas
filtered join is an unorthodox operation 
since it does not follow the construction of a colimit.
%
\footnote{The co-core operation is the dual of the core operation.}
%

\comment{
Let $\mathcal{A}$ be a type domain.
We are given 
two $\mathcal{A}$-tables 
$\mathcal{T}_{1} = {\langle{\mathcal{S}_{1},K_{1},t_{1}}\rangle}$ and
$\mathcal{T}_{2} = {\langle{\mathcal{S}_{2},K_{2},t_{2}}\rangle}$
connected 
through a third signature $\mathcal{S}$
by an 
$X$-sorted signature span
$\mathcal{S}_{1}
\xleftarrow{\;h_{1}\,} 
\mathcal{S}
\xrightarrow{\;h_{2}\,} 
\mathcal{S}_{2}$
consisting of a span of index functions
$I_{1}\xleftarrow{\;h_{1}\;}I\xrightarrow{\;h_{2}\;}I_{2}$. 
%
}

\begin{description}
\item[Constraint/Construction:] 
Consider an $X$-sorted signature span 
$\mathcal{S}_{1}\xleftarrow{h_{1}}\mathcal{S}\xrightarrow{h_{2}}\mathcal{S}_{2}$
consisting of 
a span of index functions
$I_{1}\xleftarrow{h_{1}}I\xrightarrow{h_{2}}I_{2}$.
This is both the constraint and the construction for co-core 
(Tbl.\,\ref{tbl:fole:co-core:input:output}).
%
We deviate from orthodoxy (colimit construction) at this step.
\newline
\item[Input:] 
Consider a pair of tables
$\mathcal{T}_{1} = {\langle{K_{1},t_{1}}\rangle} \in 
\mathrmbf{Tbl}_{\mathcal{A}}(\mathcal{S}_{1})$
and
$\mathcal{T}_{2} = {\langle{K_{2},t_{2}}\rangle} \in 
\mathrmbf{Tbl}_{\mathcal{A}}(\mathcal{S}_{2})$.
This is the input for co-core 
(Tbl.\,\ref{tbl:fole:co-core:input:output}).
\newline
\item[Output:] 
The output is projection (twice) followed by join.
\newline
\begin{itemize}
\item 
Projection 
{\footnotesize{$
\mathrmbf{Tbl}_{\mathcal{A}}(\mathcal{S}_{1})
{\;\xrightarrow
{\;\acute{\mathrmbfit{tbl}}_{\mathcal{A}}(h_{1})\;}\;}
\mathrmbf{Tbl}_{\mathcal{A}}(\mathcal{S})
$}\normalsize}
(\S\,\ref{sub:sub:sec:adj:flow:A})
along the tuple function 
of the $X$-signature morphism
$\mathcal{S}_{1}\xleftarrow{h_{1}}\mathcal{S}$
maps the $\mathcal{A}$-table
$\mathcal{T}_{1}$
to the $\mathcal{A}$-table
$\widetilde{\mathcal{T}}_{1}
= \acute{\mathrmbfit{tbl}}_{\mathcal{A}}(h_{1})(\mathcal{T}_{1})
= {\langle{K_{1},\tilde{t}_{1}}\rangle} 
\in \mathrmbf{Tbl}_{\mathcal{A}}(\mathcal{S})$,
with its tuple function
$K_{1} \xrightarrow{\tilde{t}_{1}} 
\mathrmbfit{tup}_{\mathcal{A}}(\mathcal{S})$
defined by composition,
$\tilde{t}_{1} = t_{1}{\,\cdot\,}\mathrmbfit{tup}_{\mathcal{A}}(h_{1})$. 
This is linked to the table $\mathcal{T}_{1}$ 
by the $\mathcal{A}$-table morphism 
{\footnotesize{{$
\mathcal{T}_{1} = {\langle{\mathcal{S}_{1}.K_{1},t_{1}}\rangle}
\xrightarrow{{\langle{h_{1},1}\rangle}} 
{\langle{\mathcal{S},K_{1},\tilde{t}_{1}}\rangle} 
= \widetilde{\mathcal{T}}_{1}
$.}}\normalsize}
Similarly for $\mathcal{A}$-table
$\mathcal{T}_{2} = {\langle{K_{2},t_{2}}\rangle} \in 
\mathrmbf{Tbl}_{\mathcal{A}}(\mathcal{S}_{2})$.
\newline
%
\item 
Union (\S\,\ref{sub:sub:sec:boole})
of the two projection tables $\widetilde{\mathcal{T}}_{1}$ 
and $\widetilde{\mathcal{T}}_{2}$ 
in the context $\mathrmbf{Tbl}_{\mathcal{A}}(\mathcal{S})$
defines the co-core table
$\mathcal{T}_{1}{\;{\cup}_{\mathcal{A}}\;}\mathcal{T}_{2}
= \widetilde{\mathcal{T}}_{2}{\,\vee\,}\widetilde{\mathcal{T}}_{2}
= {\langle{K_{1}{+}K_{2},{[\tilde{t}_{1},\tilde{t}_{2}]}}\rangle}$,
whose key set $K_{1}{+}K_{2}$ is the disjoint union and 
whose tuple map 
$K_{1}{+}K_{2}
\xrightarrow{[\tilde{t}_{1},\tilde{t}_{2}]}
\mathrmbfit{tup}_{\mathcal{A}}(\mathcal{S})$
maps $k_{1} \in K_{1}$ 
to $\tilde{t}_{1}(k_{1}) \in \mathrmbfit{tup}_{\mathcal{A}}(\mathcal{S})$
and
maps $k_{2} \in K_{2}$ 
to $\tilde{t}_{2}(k_{2}) \in \mathrmbfit{tup}_{\mathcal{A}}(\mathcal{S})$.
Union is the coproduct in $\mathrmbf{Tbl}_{\mathcal{A}}(\mathcal{S})$
with opspan
$\widetilde{\mathcal{T}}_{1}\xrightarrow{\check{\iota}_{1}}
\widetilde{\mathcal{T}}_{1}{\,\vee\,}\widetilde{\mathcal{T}}_{2}
\xleftarrow{\check{\iota}_{2}}\widetilde{\mathcal{T}}_{2}$.
\end{itemize}
\mbox{}\newline
Projection composed with join 
defines the opspan of $\mathcal{A}$-table morphisms
\[\mbox{\footnotesize{
{$\mathcal{T}_{1}
\xrightarrow[\;{\langle{h_{1},1}\rangle}{\circ\,}\check{\iota}_{1}\;]
{\;{\langle{h_{1},\check{\iota}_{1}}\rangle}\;} 
\mathcal{T}_{1}{\;{\cup}_{\mathcal{A}}\;}\mathcal{T}_{2}
\xleftarrow[\;{\langle{h_{2},1}\rangle}{\circ\,}\check{\iota}_{2}\;]
{\;{\langle{h_{2},\check{\iota}_{1}}\rangle}\;} 
\mathcal{T}_{2}
$,}}\normalsize}\]
which is the output for co-core (Tbl.\,\ref{tbl:fole:co-core:input:output}).
%
\end{description}
%
Co-core 
is projection followed by join (disjunction or union).
This is the two-step process 
\newline\mbox{}\hfill
\rule[-10pt]{0pt}{26pt}
$\mathcal{T}_{1}{\;{\cup}_{\mathcal{A}}\;}\mathcal{T}_{2} 
\doteq  
\acute{\mathrmbfit{tbl}}_{\mathcal{A}}(h_{1})(\mathcal{T}_{1})
{\;\vee\;}
\acute{\mathrmbfit{tbl}}_{\mathcal{A}}(h_{2})(\mathcal{T}_{2})$.
%
\footnote{The co-core 
$\mathcal{T}_{1}
\xrightarrow
{\;{\langle{h_{1},\check{\iota}_{1}}\rangle}\;} 
\mathcal{T}_{1}{\;{\cup}_{\mathcal{A}}\;}\mathcal{T}_{2}
\xleftarrow
{\;{\langle{h_{2},\check{\iota}_{1}}\rangle}\;} 
\mathcal{T}_{2}$,
is homogeneous with and has a direct connection
to both tables $\mathcal{T}_{1}$ and $\mathcal{T}_{2}$. 
This is comparable with
the core (\S\,\ref{sub:sub:sec:core})
$\mathcal{T}_{1}
\xleftarrow
{\;{\langle{g_{1},\hat{k}_{1}}\rangle}\;} 
\mathcal{T}_{1}{\;{\cap}_{\mathcal{S}}\;}\mathcal{T}_{2}
\xrightarrow[\;{\langle{g_{2},\grave{k}_{2}}\rangle}{\circ\,}\hat{\pi}_{2}\;]
{\;{\langle{g_{2},\hat{k}_{1}}\rangle}\;} 
\mathcal{T}_{2}$,
which is homogeneous with and has a direct connection
to both tables $\mathcal{T}_{1}$ and $\mathcal{T}_{2}$.}
%
\hfill\mbox{}\newline

\begin{table}
\begin{center}
{{\fbox{\begin{tabular}{c}
\setlength{\extrarowheight}{2pt}
{\scriptsize{$\begin{array}[c]{c@{\hspace{12pt}}l}
\mathcal{S}_{1}\xleftarrow{h_{1}}\mathcal{S}\xrightarrow{h_{2}}\mathcal{S}_{2}
&
\textit{constraint}
\\
&
\textit{/construction}
\\
\hline
\mathcal{T}_{1}\in\mathrmbf{Tbl}_{\mathcal{A}}(\mathcal{S}_{1})
\text{ and }
\mathcal{T}_{1}\in\mathrmbf{Tbl}_{\mathcal{A}}(\mathcal{S}_{1})
&
\textit{input}
\\
\mathcal{T}_{1}
\xrightarrow[\;{\langle{h_{1},1}\rangle}{\circ\,}\check{\iota}_{1}\;]
{\;{\langle{h_{1},\check{\iota}_{1}}\rangle}\;} 
\mathcal{T}_{1}{\;{\cup}_{\mathcal{A}}\;}\mathcal{T}_{2}
\xleftarrow[\;{\langle{h_{2},1}\rangle}{\circ\,}\check{\iota}_{2}\;]
{\;{\langle{h_{2},\check{\iota}_{1}}\rangle}\;} 
\mathcal{T}_{2}
&
\textit{output}
\end{array}$}}
\end{tabular}}}}
\end{center}
\caption{\texttt{FOLE} Co-core I/O}
\label{tbl:fole:co-core:input:output}
\end{table}
\comment{
\begin{figure}
\begin{center}
{{\begin{tabular}{c}
{{\begin{tabular}{c}
\setlength{\unitlength}{0.63pt}
\begin{picture}(320,160)(0,-5)
\put(0,80){\makebox(0,0){\footnotesize{$K_{1}$}}}
\put(100,80){\makebox(0,0){\footnotesize{$K_{1}$}}}
\put(220,80){\makebox(0,0){\footnotesize{$K_{2}$}}}
\put(320,80){\makebox(0,0){\footnotesize{$K_{2}$}}}
\put(162,150){\makebox(0,0){\footnotesize{$K_{1}{+}K_{2}$}}}
\put(-10,0){\makebox(0,0){\footnotesize{$
{\mathrmbfit{tup}_{\mathcal{A}}(\mathcal{S}_{1})}$}}}
\put(330,0){\makebox(0,0){\footnotesize{$
{\mathrmbfit{tup}_{\mathcal{A}}(\mathcal{S}_{2})}$}}}
\put(160,0){\makebox(0,0){\footnotesize{$
{\mathrmbfit{tup}_{\mathcal{A}}(\mathcal{S})}$}}}
\put(80,-12){\makebox(0,0){\scriptsize{$\mathrmbfit{tup}_{\mathcal{A}}(h_{1})$}}}
\put(240,-12){\makebox(0,0){\scriptsize{$\mathrmbfit{tup}_{\mathcal{A}}(h_{2})$}}}
\put(-6,40){\makebox(0,0)[r]{\scriptsize{$t_{1}$}}}
\put(125,40){\makebox(0,0)[r]{\scriptsize{$t'_{1}$}}}
\put(200,40){\makebox(0,0)[l]{\scriptsize{$t'_{2}$}}}
\put(55,80){\makebox(0,0){\scriptsize{$=$}}}
\put(270,80){\makebox(0,0){\scriptsize{$=$}}}
\put(165,60){\makebox(0,0)[l]{\scriptsize{$[t'_{1},t'_{2}]$}}}
\put(123,120){\makebox(0,0)[r]{\scriptsize{$i_{1}$}}}
\put(200,120){\makebox(0,0)[l]{\scriptsize{$i_{2}$}}}
\put(327,40){\makebox(0,0)[l]{\scriptsize{$t_{2}$}}}
\put(0,65){\vector(0,-1){50}}
\put(320,65){\vector(0,-1){50}}
\put(105,65){\vector(1,-1){45}}
\put(215,65){\vector(-1,-1){45}}
\put(160,130){\line(0,-1){40}}
\put(160,65){\vector(0,-1){40}}
\put(105,95){\vector(1,1){43}}
\put(215,95){\vector(-1,1){43}}
\put(30,0){\vector(1,0){90}}
\put(290,0){\vector(-1,0){90}}
%
%
%
\put(60,45){\makebox(0,0){\huge{
$\overset{\textit{\scriptsize{project}}}{\Rightarrow}$}}}
\put(160,78){\makebox(0,0){${\scriptsize{join}}$}}
\put(250,45){\makebox(0,0){\huge{
$\overset{\textit{\scriptsize{project}}}{\Leftarrow}$}}}
\end{picture}
\end{tabular}}}
\\\\\\
{{\begin{tabular}{c}
\setlength{\unitlength}{0.6pt}
\begin{picture}(200,70)(-100,60)
\put(-130,60){\makebox(0,0){\footnotesize{$\mathcal{T}_{1}$}}}
\put(-60,60){\makebox(0,0){\footnotesize{$\mathcal{T}'_{1}$}}}
\put(0,123){\makebox(0,0){\footnotesize{$
\mathcal{T}_{1}{\;{\cup}_{\mathcal{A}}\;}\mathcal{T}_{2}$}}}
\put(63,60){\makebox(0,0){\footnotesize{$\mathcal{T}'_{2}$}}}
\put(133,60){\makebox(0,0){\footnotesize{$\mathcal{T}_{2}$}}}
\put(-90,70){\makebox(0,0){\scriptsize{${\langle{h_{1},1}\rangle}$}}}
\put(94,70){\makebox(0,0){\scriptsize{${\langle{h_{2},1}\rangle}$}}}
\put(-37,93){\makebox(0,0)[r]{\scriptsize{$i_{1}$}}}
\put(37,93){\makebox(0,0)[l]{\scriptsize{$i_{2}$}}}
\put(-120,60){\vector(1,0){50}}
\put(120,60){\vector(-1,0){50}}
\put(-50,70){\vector(1,1){40}}
\put(50,70){\vector(-1,1){40}}
\end{picture}
\\
\hspace{3pt}
in $\mathrmbf{Tbl}(\mathcal{S})$
\\\\
$\mathcal{T}_{1}{\;{\cup}_{\mathcal{A}}\;}\mathcal{T}_{2} = 
\acute{\mathrmbfit{tbl}}_{\mathcal{A}}(h_{1})(\mathcal{T}_{1})
{\;\vee\;}
\acute{\mathrmbfit{tbl}}_{\mathcal{A}}(h_{2})(\mathcal{T}_{2})$ 
\end{tabular}}}
\end{tabular}}}
\end{center}
\caption{\texttt{FOLE} Co-core}
\label{fig:fole:co-core}
\end{figure}
}
%

%
\newpage
\subsection{Data-type Join.}
\label{sub:sub:sec:boole:join}

%
%

%
\begin{figure}
\begin{center}
{{{\begin{tabular}{c}
\begin{picture}(160,75)(37,27)
\setlength{\unitlength}{0.97pt}
\put(54,65){\begin{picture}(0,0)(0,0)
\setlength{\unitlength}{0.35pt}
\put(10,10){\line(1,0){60}}
\put(10,70){\line(1,0){60}}
\put(10,70){\line(0,-1){60}}
\put(70,40){\oval(60,60)[br]}
\put(70,40){\oval(60,60)[tr]}
\put(55,50){\makebox(0,0){\scriptsize{{\textit{{expand}}}}}}
\put(56,30){\makebox(0,0){\Large{${\Rightarrow}$}}}
\end{picture}}
\put(146.5,65){\begin{picture}(0,0)(0,0)
\setlength{\unitlength}{0.35pt}
\put(40,10){\line(1,0){60}}
\put(40,70){\line(1,0){60}}
\put(100,70){\line(0,-1){60}}
\put(40,40){\oval(60,60)[bl]}
\put(40,40){\oval(60,60)[tl]}
\put(58,50){\makebox(0,0){\scriptsize{{\textit{{expand}}}}}}
\put(56,30){\makebox(0,0){\Large{${\Leftarrow}$}}}
\end{picture}}
\put(98,37){\begin{picture}(0,0)(0,3)
\setlength{\unitlength}{0.35pt}
\put(60,30){\makebox(0,0){\normalsize{$\vee$}}}
\put(40,10){\line(1,0){40}}
\put(10,70){\line(1,0){100}}
\put(10,70){\line(0,-1){30}}
\put(110,70){\line(0,-1){30}}
\put(40,40){\oval(60,60)[bl]}
\put(80,40){\oval(60,60)[br]}
\put(60,55){\makebox(0,0){\scriptsize{{\textit{{join}}}}}}
\end{picture}}
\put(120,100){\makebox(0,0){\footnotesize{{\textit{{data-type join}}}}}}
\put(120,88){\makebox(0,0){\large{$\oplus$}}}
\put(38,80){\line(0,1){20}}
\put(38,80){\vector(1,0){20}}
\put(110,80){\line(-1,0){20}}
\put(110,80){\vector(0,-1){21}}
\put(120,38){\vector(0,-1){15}}
\put(130,80){\vector(0,-1){21}}
\put(130,80){\line(1,0){20}}
\put(203,80){\line(0,1){20}}
\put(203,80){\vector(-1,0){20}}
\end{picture}
\end{tabular}}}}
\end{center}
\caption{\texttt{FOLE} Data-type Join Flow Chart}
\label{fig:fole:boole:join:flo:chrt}
\end{figure}
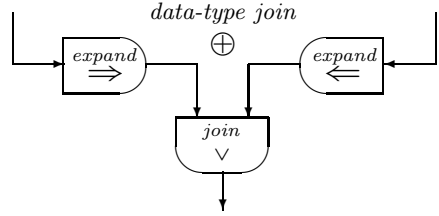
The data-type join for tables 
is the relational counterpart of
the logical disjunction for predicates.
Where 
the \emph{join} operation is the analogue for logical disjunction 
at the small scope $\mathrmbf{Tbl}(\mathcal{D})$ of a signed domain table fiber,
the \emph{data-type join} is defined at the intermediate scope 
$\mathrmbf{Tbl}(\mathcal{S})$ of a signature table fiber, and 
the \emph{generic join} (\S\,\ref{sub:sub:sec:generic:join})
is defined at the large scope $\mathrmbf{Tbl}$ of all tables.
We identify these three concepts as colimits at different scopes.

In this section,
we focus on tables in the context $\mathrmbf{Tbl}(\mathcal{S})$ 
for fixed signature (header) $\mathcal{S}$.
%
In this context,
generic joins 
--- for the special case of pushout --- 
are called data-type joins, 
the join of two $\mathcal{S}$-tables.
%
These colimits are resolvable into expansions
followed by join.
We use the following routes of flow from 
Fig.\;\ref{fig:routes:flow:colim}.
\begin{center}
{{\begin{tabular}{c}
\setlength{\unitlength}{0.6pt}
\begin{picture}(320,60)(0,5)
%
%
\put(100,55){\makebox(0,0){\huge{
$\overset{\textit{\scriptsize{expand}}}{\Rightarrow}$}}}
\put(200,55){\makebox(0,0){\huge{
$\overset{\textit{\scriptsize{expand}}}{\Leftarrow}$}}}
\put(150,30){\makebox(0,0){\huge{
$\overset{\textit{\scriptsize{join}}}{\Downarrow}$}}}
%
\put(90,48){\line(1,0){54}}
\put(144,38){\oval(20,20)[tr]}
\put(166,38){\oval(20,20)[tl]}
\put(220,48){\line(-1,0){54}}
\put(154,9){\line(0,1){29}}
\put(156,9){\line(0,1){29}}
\put(155,6){\vector(0,-1){10}}
\end{picture}
\end{tabular}}}
\end{center}
The data-type join operation is dual to the natural join operation of 
\S\,\ref{sub:sub:sec:nat:join}.
Similar to  natural join,
we can define data-type join for any number of tables 
$\{ \mathcal{T}_{1}, \mathcal{T}_{2}, \mathcal{T}_{3}, \cdots , \mathcal{T}_{n} \}$
with a comparable constraint.
%

%
\begin{description}
\item[Constraint:] 
Consider an $X$-sorted type domain opspan 
$\mathcal{A}_{1}
\xrightarrow{g_{1}} 
\mathcal{A}
\xleftarrow{g_{2}} 
\mathcal{A}_{2}$
in $\mathrmbf{Cls}(X)$
consisting of 
a span of data value functions
$Y_{1}\xleftarrow{g_{1}}Y\xrightarrow{g_{2}}Y_{2}$.
This is the constraint for data-type join 
(Tbl.\,\ref{tbl:fole:boolean:join:input:output}).
\newline
\item[Construction:] 
%
The pullback of this constraint
in $\mathrmbf{Cls}(X)$
is the span
{\footnotesize{$
\mathcal{A}_{1} 
\xleftarrow{\;\tilde{g}_{1}\,} 
{\mathcal{A}_{1}{\times_{\mathcal{A}}}\mathcal{A}_{2}}
\xrightarrow{\;\tilde{g}_{2}} 
\mathcal{A}_{2}
$}\normalsize}
of projection $X$-type domain morphisms
with pullback type domain
${\mathcal{A}_{1}{\times_{\mathcal{A}}}\mathcal{A}_{2}}$
and
data value function opspan
{\footnotesize{$Y_{1} 
\xrightarrow{\tilde{g}_{1}\,} 
{\langle{Y_{1}{+}_{Y}Y_{2},[g_{1},g_{2}]}\rangle}
\xleftarrow{\;\tilde{g}_{2}}
Y_{2}.$}\normalsize}
This is the construction for data-type join 
(Tbl.\,\ref{tbl:fole:boolean:join:input:output}).
\newline
\item[Input:] 
Consider a pair of tables
$\mathcal{T}_{1} = {\langle{K_{1},t_{1}}\rangle} \in 
\mathrmbf{Tbl}_{\mathcal{S}}(\mathcal{A}_{1})$
and
$\mathcal{T}_{2} = {\langle{K_{2},t_{2}}\rangle} \in 
\mathrmbf{Tbl}_{\mathcal{S}}(\mathcal{A}_{2})$.
These two tables form an adequate collection 
(Def.\;\ref{def:suff:adequ:colim})
to compute the pushout.
This is the input for data-type join 
(Tbl.\,\ref{tbl:fole:boolean:join:input:output}).
\newpage
\item[Output:] 
The output is expansion (twice) followed by join.
%
\begin{itemize}
\item 
Expansion 
$\mathrmbf{Tbl}_{\mathcal{S}}(\mathcal{A}_{1})
{\;\xrightarrow{\acute{\mathrmbfit{tbl}}_{\mathcal{S}}(\tilde{g}_{1})}\;}
\mathrmbf{Tbl}_{\mathcal{S}}(\mathcal{A}_{1}{\times_{\mathcal{A}}}\mathcal{A}_{2})$
along the tuple function of the $X$-type domain morphism 
{\footnotesize{$\mathcal{A}_{1} 
\xleftarrow{\;\tilde{g}_{1}\,} 
{\mathcal{A}_{1}{\times_{\mathcal{A}}}\mathcal{A}_{2}}$}\normalsize}
maps
the $\mathcal{S}$-table
$\mathcal{T}_{1}$
to the $\mathcal{S}$-table
$\widetilde{\mathcal{T}}_{1}
= \acute{\mathrmbfit{tbl}}_{\mathcal{S}}(\tilde{g}_{1})(\mathcal{T}_{1})
= {\langle{K_{1},\tilde{t}_{1}}\rangle} 
\in \mathrmbf{Tbl}_{\mathcal{S}}(\mathcal{A}_{1}{\times_{\mathcal{A}}}\mathcal{A}_{2})$,
with its tuple function
$K_{1} \xrightarrow{\tilde{t}_{1}} 
\mathrmbfit{tup}_{\mathcal{S}}
(\mathcal{A}_{1}{\times_{\mathcal{A}}}\mathcal{A}_{2})$
defined by composition,
$\tilde{t}_{1} 
= t_{1}{\,\cdot\,}\mathrmbfit{tup}_{\mathcal{S}}(\tilde{g}_{1})$. 
This is linked to the table $\mathcal{T}_{1}$ 
by the $\mathcal{S}$-table morphism 
{\footnotesize{{$
\mathcal{T}_{1} = {\langle{\mathcal{A}_{1}.K_{1},t_{1}}\rangle}
\xrightarrow{{\langle{1,\tilde{g}_{1}}\rangle}} 
{\langle{
\mathcal{A}_{1}{\times_{\mathcal{A}}}\mathcal{A}_{2},K_{1},\tilde{t}_{1}}\rangle} 
= \widetilde{\mathcal{T}}_{1}
$.}}\normalsize}
Similarly for $\mathcal{S}$-table
$\mathcal{T}_{2} = {\langle{K_{2},t_{2}}\rangle} \in 
\mathrmbf{Tbl}_{\mathcal{S}}(\mathcal{A}_{2})$.
\newline
%
\item 
Union (\S\,\ref{sub:sub:sec:boole})
of the two expansion tables 
$\widetilde{\mathcal{T}}_{1}$
and 
$\widetilde{\mathcal{T}}_{2}$
in the context 
$\mathrmbf{Tbl}_{\mathcal{S}}(\mathcal{A}_{1}{\times}_{\mathcal{A}}\mathcal{A}_{2})$
defines the data-type join 
${\mathcal{T}_{1}}{\,\oplus_{\mathcal{S}}}{\mathcal{T}_{2}}
= \widetilde{\mathcal{T}}_{1} \vee \widetilde{\mathcal{T}}_{2}
= {\langle{K_{1}{+}K_{2},[\tilde{t}_{1},\tilde{t}_{2}]}\rangle}$,
whose key set is the disjoint union $K_{1}{+}K_{2}$
and whose tuple map
$K_{1}{+}K_{2}\xrightarrow{[\tilde{t}_{1},\tilde{t}_{2}]}\mathrmbfit{tup}_{\mathcal{A}}(\mathcal{S})$
is the comediator of the opspan
$K_{1}\xrightarrow{\tilde{t}_{1}} 
\mathrmbfit{tup}_{\mathcal{S}}
(\mathcal{A}_{1}{\times_{\mathcal{A}}}\mathcal{A}_{2})
\xleftarrow{\tilde{t}_{2}}K_{2}$,
resulting in the opspan 
{\footnotesize{{$
\widetilde{\mathcal{T}}_{1}
\xrightarrow{\;
\check{\iota}_{1}\;} 
\mathcal{T}_{1}{\,\oplus_{\mathcal{S}}}\mathcal{T}_{2}
\xleftarrow{\;\check{\iota}_{2}\;} 
\widetilde{\mathcal{T}}_{2}
$.}}\normalsize}
%
\end{itemize}
\mbox{}\newline
Expansion composed with join defines the opspan of $\mathcal{S}$-table morphisms
\[\mbox{\footnotesize{
{$
\mathcal{T}_{1}
\xrightarrow{\;{\langle{\tilde{g}_{1},\check{\iota}_{1}}\rangle}\;} 
\mathcal{T}_{1}{\,\oplus_{\mathcal{S}}}\mathcal{T}_{2}
\xleftarrow{\;{\langle{\tilde{g}_{2},\check{\iota}_{2}}\rangle}\;} 
\mathcal{T}_{2}
$,}}\normalsize}\]
which is the output for data-type join 
(Tbl.\,\ref{tbl:fole:boolean:join:input:output}).
%
\end{description}
%
Data-type join is expansion followed by join. 
This is the two-step process 
\newline\mbox{}\hfill
\rule[-10pt]{0pt}{26pt}
$\mathcal{T}_{1}{\,\oplus_{\mathcal{S}}}\mathcal{T}_{2} 
\doteq  
\acute{\mathrmbfit{tbl}}_{\mathcal{S}}(\tilde{g}_{1})(\mathcal{T}_{1})
{\;\vee\;}
\acute{\mathrmbfit{tbl}}_{\mathcal{S}}(\tilde{g}_{2})(\mathcal{T}_{2})$.
%
\footnote{The data-type join contains the set of all tuples in 
$\mathcal{T}_{1}$ and $\mathcal{T}_{2}$,
considered as taken from the data value set $Y_{1}{+}_{Y}Y_{2}$.}
%
\hfill\mbox{}\newline

\begin{aside}
Theoretically
this would represent pushout,
the colimit 
(see the application discussion for co-completeness in 
\S\,\ref{sub:sec:lim:colim:tbl})
of an span 
{\footnotesize{$\mathcal{T}_{1}\xleftarrow{\langle{g_{1},k_{1}}\rangle} 
\mathcal{T}
\xrightarrow{\langle{g_{2},k_{2}}\rangle}\mathcal{T}_{2}$}}
of $\mathcal{S}$-tables.
But practically,
we are only given 
%
%
the constraint (opspan)
{\footnotesize{
$\mathcal{A}_{1}\xrightarrow{g_{1}}\mathcal{A}\xleftarrow{g_{2}}\mathcal{A}_{2}$}}
of $X$-indexed type domains
and the input tables 
$\mathcal{T}_{1}\in\mathrmbf{Tbl}_{\mathcal{S}}(\mathcal{A}_{1})$
and 
$\mathcal{T}_{1}\in\mathrmbf{Tbl}_{\mathcal{S}}(\mathcal{A}_{1})$
in Tbl.\,\ref{tbl:fole:boolean:join:input:output}.
Similar comments,
which distinguish the practical from the theoretical, 
hold for the co-quotient operation
in \S\,\ref{sub:sub:sec:co-quotient}.
\end{aside}
%


%
\begin{table}
\begin{center}
{{\fbox{\begin{tabular}{c}
\setlength{\extrarowheight}{2pt}
{\scriptsize{$\begin{array}[c]{c@{\hspace{12pt}}l}
\mathcal{A}_{1}\xrightarrow{g_{1}}\mathcal{A}\xleftarrow{g_{2}}\mathcal{A}_{2}
&
\textit{constraint}
\\
\mathcal{A}_{1} 
\xleftarrow{\;\tilde{g}_{1}\,} 
{\mathcal{A}_{1}{\times_{\mathcal{A}}}\mathcal{A}_{2}}
\xrightarrow{\;\tilde{g}_{2}} 
\mathcal{A}_{2}
&
\textit{construction}
\\
\hline
\mathcal{T}_{1}\in\mathrmbf{Tbl}_{\mathcal{S}}(\mathcal{A}_{1})
\text{ and }
\mathcal{T}_{2}\in\mathrmbf{Tbl}_{\mathcal{S}}(\mathcal{A}_{2})
&
\textit{input}
\\
\mathcal{T}_{1}
\xrightarrow{\;{\langle{\tilde{g}_{1},\check{\iota}_{1}}\rangle}\;} 
\mathcal{T}_{1}{\,\oplus_{\mathcal{S}}}\mathcal{T}_{2}
\xleftarrow{\;{\langle{\tilde{g}_{2},\check{\iota}_{2}}\rangle}\;} 
\mathcal{T}_{2}
&
\textit{output boole-join}
\\
\cline{2-2}
\mathcal{T}_{1}
\xrightarrow{\hat{m}_{1}} 
\mathcal{T}_{1}{\oleft_{\mathcal{S}}}\mathcal{T}_{2}
\text{ \underline{or} }
\mathcal{T}_{1}{\oright_{\mathcal{S}}}\mathcal{T}_{2}
\xleftarrow{\hat{m}_{2}} 
\mathcal{T}_{2}
&
\textit{output semi-join}
\\
\cline{2-2}
\mathcal{T}_{1}
\xleftarrow{\bar{\omega}_{1}} 
\mathcal{T}_{1}{\,\oslash_{\mathcal{S}}}\mathcal{T}_{2}
\text{ \underline{or} }
\mathcal{T}_{1}{\,\obackslash_{\mathcal{S}}}\mathcal{T}_{2}
\xrightarrow{\bar{\omega}_{2}} 
\mathcal{T}_{2}
&
\textit{output anti-join}
\end{array}$}}
\end{tabular}}}}
\end{center}
\caption{\texttt{FOLE} Data-type Join I/O}
\label{tbl:fole:boolean:join:input:output}
\end{table}
\comment{
\begin{figure}
\begin{center}
{{\begin{tabular}{c}
{{\begin{tabular}{c}
\setlength{\unitlength}{0.63pt}
\begin{picture}(320,160)(0,-5)
\put(0,80){\makebox(0,0){\footnotesize{$K_{1}$}}}
\put(100,80){\makebox(0,0){\footnotesize{$K_{1}$}}}
\put(220,80){\makebox(0,0){\footnotesize{$K_{2}$}}}
\put(324,80){\makebox(0,0){\footnotesize{$K_{2}$}}}
\put(164,150){\makebox(0,0){\footnotesize{$K_{1}{+\,}K_{2}$}}}
\put(-10,0){\makebox(0,0){\footnotesize{$
{\mathrmbfit{tup}_{\mathcal{S}}(\mathcal{A}_{1})}$}}}
\put(330,0){\makebox(0,0){\footnotesize{$
{\mathrmbfit{tup}_{\mathcal{S}}(\mathcal{A}_{2})}$}}}
\put(160,0){\makebox(0,0){\footnotesize{$
{\mathrmbfit{tup}_{\mathcal{S}}(\mathcal{A}_{1}{\times}_{\mathcal{A}}\mathcal{A}_{2})}$}}}
\put(67,-12){\makebox(0,0){\scriptsize{$\mathrmbfit{tup}_{\mathcal{S}}(\tilde{g}_{1})$}}}
\put(260,-12){\makebox(0,0){\scriptsize{$\mathrmbfit{tup}_{\mathcal{S}}(\tilde{g}_{2})$}}}
\put(-6,40){\makebox(0,0)[r]{\scriptsize{$t_{1}$}}}
\put(125,40){\makebox(0,0)[r]{\scriptsize{$\tilde{t}_{1}$}}}
\put(200,40){\makebox(0,0)[l]{\scriptsize{$\tilde{t}_{2}$}}}
\put(165,60){\makebox(0,0)[l]{\scriptsize{$[\tilde{t}_{1}.\tilde{t}_{2}]$}}}
\put(123,120){\makebox(0,0)[r]{\scriptsize{$\tilde{k}_{1}$}}}
\put(200,120){\makebox(0,0)[l]{\scriptsize{$\tilde{k}{2}$}}}
\put(327,40){\makebox(0,0)[l]{\scriptsize{$t_{2}$}}}
\put(0,65){\vector(0,-1){50}}
\put(320,65){\vector(0,-1){50}}
\put(105,65){\vector(1,-1){45}}
\put(215,65){\vector(-1,-1){45}}
\put(160,130){\line(0,-1){40}}
\put(160,65){\vector(0,-1){40}}
\put(105,95){\vector(1,1){43}}
\put(215,95){\vector(-1,1){43}}
\put(50,80){\makebox(0,0){\scriptsize{$=$}}}
\put(35,0){\vector(1,0){55}}
\put(270,80){\makebox(0,0){\scriptsize{$=$}}}
\put(285,0){\vector(-1,0){55}}
\put(60,45){\makebox(0,0){\huge{
$\overset{\textit{\scriptsize{expand}}}{\Rightarrow}$}}}
\put(160,78){\makebox(0,0){${\scriptsize{join}}$}}
\put(250,45){\makebox(0,0){\huge{
$\overset{\textit{\scriptsize{expand}}}{\Leftarrow}$}}}
\end{picture}
\end{tabular}}}
%
\\\\\\
{{\begin{tabular}{c}
\setlength{\unitlength}{0.66pt}
\begin{picture}(240,70)(-120,60)
\put(-140,57){\makebox(0,0){\footnotesize{${\mathcal{T}_{1}}$}}}
\put(0,120){\makebox(0,0){\footnotesize{$
{\mathcal{T}_{1}}{\,\oplus_{\mathcal{S}}}{\mathcal{T}_{2}}$}}}
\put(140,57){\makebox(0,0){\footnotesize{${\mathcal{T}_{2}}$}}}
\put(-63,57){\makebox(0,0){\footnotesize{$
{\scriptstyle\sum}_{\tilde{g}_{1}}(\mathcal{T}_{1})$}}}
\put(63,57){\makebox(0,0){\footnotesize{$
{\scriptstyle\sum}_{\tilde{g}_{2}}(\mathcal{T}_{2})$}}}
\put(-90,100){\makebox(0,0){\scriptsize{${\langle{\tilde{g}_{1},\tilde{k}_{1}}\rangle}$}}}
\put(-105,47){\makebox(0,0){\scriptsize{${\langle{1,\tilde{g}_{1}}\rangle}$}}}
\put(115,47){\makebox(0,0){\scriptsize{${\langle{1,\tilde{g}_{2}}\rangle}$}}}
\put(-33,93){\makebox(0,0)[r]{\scriptsize{$\tilde{k}_{1}$}}}
\put(35,93){\makebox(0,0)[l]{\scriptsize{$\tilde{k}_{2}$}}}
\put(-130,57){\vector(1,0){40}}
\put(-50,70){\vector(1,1){40}}
\put(50,70){\vector(-1,1){40}}
\put(128,57){\vector(-1,0){40}}
\put(-124,67){\vector(2,1){94}}
\put(124,67){\vector(-2,1){94}}
\end{picture}
\\
\\\\
$\mathcal{T}_{1}{\,\oplus_{\mathcal{S}}}\mathcal{T}_{2} = 
\mathrmbfit{tup}_{\mathcal{S}}(\tilde{g}_{1})(\mathcal{T}_{1})
{\;\vee\;}
\mathrmbfit{tup}_{\mathcal{S}}(\tilde{g}_{2})(\mathcal{T}_{2})$ 
\end{tabular}}}
\end{tabular}}}
\end{center}
\caption{\texttt{FOLE} data-type Join}
\label{fig:fole:boole:join}
\end{figure}}
\begin{flushleft}
{\fbox{\fbox{\footnotesize{\begin{minipage}{345pt}
{\underline{\textsf{How does this work?}}}
\begin{center}
{{\begin{tabular}{c}
{{\begin{tabular}{c}
\begin{picture}(100,50)(-35,-10)
\setlength{\unitlength}{0.45pt}
\put(-20,5){\begin{picture}(20,20)(0,0)
\put(15,45){\oval(64,64)[bl]}
\put(15,45){\oval(64,64)[tl]}
\put(35,45){\oval(64,64)[br]}
\put(35,45){\oval(64,64)[tr]}
\put(15,77){\line(1,0){20}}
\put(10,13){\line(1,0){20}}
\put(10,4){\makebox(0,0){\tiny{${\wp}g_{1}(\!A_{1_{x}}\!)$}}}
\end{picture}}
\put(20,5){\begin{picture}(20,20)(0,0)
\put(15,45){\oval(64,64)[bl]}
\put(15,45){\oval(64,64)[tl]}
\put(35,45){\oval(64,64)[br]}
\put(35,45){\oval(64,64)[tr]}
\put(15,77){\line(1,0){20}}
\put(15,13){\line(1,0){20}}
\put(50,4){\makebox(0,0){\tiny{${\wp}g_{2}(\!A_{2_{x}}\!)$}}}
\end{picture}}
\put(-20,60){\oval(80,80)[tl]}
\put(-20,30){\oval(80,80)[bl]}
\put(80,60){\oval(80,80)[tr]}
\put(80,30){\oval(80,80)[br]}
\put(-60,30){\line(0,1){30}}
\put(120,30){\line(0,1){30}}
\put(-20,100){\line(1,0){100}}
\put(-20,-10){\line(1,0){100}}
\put(30,-20){\makebox(0,0){\tiny{$({\mathcal{A}_{1}{\times_{\mathcal{A}}}\mathcal{A}_{2}})_{x}$}}}
\end{picture}
\end{tabular}}}
\\
{{\begin{tabular}{c}
{\scriptsize{$
{\wp}g_{1}({A_{1}}_{x})
\subseteq
({\mathcal{A}_{1}{\times_{\mathcal{A}}}\mathcal{A}_{2}})_{x}
\supseteq 
{\wp}g_{2}({A_{2}}_{x})
$}}
\\
{{\textsf{inclusion of data-types}}}
\end{tabular}}}
\end{tabular}}}
\end{center}
An opspan of $\mathcal{S}$-type domains
\begin{center}
{\footnotesize{$
\mathcal{A}_{1}={\langle{X,Y_{1},\models_{\mathcal{A}_{1}}}\rangle}
\xrightarrow{{\langle{\mathrmit{1}_{X},g_{1}}\rangle}}
{\langle{X,Y,\models_{\mathcal{A}}}\rangle}=\mathcal{A}
\xleftarrow{{\langle{\mathrmit{1}_{X},g_{2}}\rangle}}
{\langle{X,Y_{2},\models_{\mathcal{A}_{2}}}\rangle}=\mathcal{A}_{2}$\;.}}
\end{center}
has
as its pullback the
span of $\mathcal{S}$-type domain morphisms.
\newline\mbox{}\hfill
$\mathcal{A}_{1} = {\langle{X,Y_{1},\models_{\mathcal{A}_{1}}}\rangle}
\xleftarrow{{\langle{1_{X},\tilde{g}_{1}}\rangle}}
\overset{\textstyle
{\mathcal{A}_{1}{\times_{\mathcal{A}}}\mathcal{A}_{2}}
}{\overbrace{{\langle{
X,Y_{1}{+}_{Y}Y_{2},\models_{{\mathcal{A}_{1}{\times_{\mathcal{A}}}\mathcal{A}_{2}}}
}\rangle}}} 
\xrightarrow{{\langle{1,\tilde{g}_{2}}\rangle}}
{\langle{X,Y_{2},\models_{\mathcal{A}_{2}}}\rangle}=\mathcal{A}_{2}$
\hfill\mbox{}\newline
%
This consists of an opspan of data value functions
$Y_{1}\xrightarrow{\;\tilde{g}_{1}\;}Y_{1}{+}_{Y}Y_{2}\xleftarrow{\;\tilde{g}_{2}\;}Y_{2}$
satisfying the condition 
$\mathrmbfit{ext}_{{\mathcal{A}_{1}{\times_{\mathcal{A}}}\mathcal{A}_{2}}}
{\;\cdot\;}g_{1}^{-1}
= \mathrmbfit{ext}_{\mathcal{A}_{1}}$;
or that
$g_{1}^{-1}({\mathcal{A}_{1}{\times_{\mathcal{A}}}\mathcal{A}_{2}})_{x} = ({A_{1}}_{x})$ for all $x \in X$.
This implies that
$({\mathcal{A}_{1}{\times_{\mathcal{A}}}\mathcal{A}_{2}})_{x}
\supseteq {\wp}g_{1}({A_{1}}_{x})$ for all $x \in X$.
Hence,
for an injective data value function $Y_{1}{+}_{Y}Y_{2}\xhookleftarrow{\,g_{1}\,}Y_{1}$,
we have the inclusion
$({\mathcal{A}_{1}{\times_{\mathcal{A}}}\mathcal{A}_{2}})_{x} \supseteq {A_{1}}_{x}$ for all $x \in X$.
Same for an injective data value function 
$Y_{1}{+}_{Y}Y_{2}\xhookleftarrow{\,g_{2}\,}Y_{2}$.
Thus,
for every sort $x \in X$ the direct images
of data-types
${A_{1}}_{x}$
and
${A_{2}}_{x}$
are contained in data-type
$({\mathcal{A}_{1}{\times_{\mathcal{A}}}\mathcal{A}_{2}})_{x}$.
\end{minipage}}}}}
\end{flushleft}
%

%

%
\newpage
\paragraph{Disjoint Sum.}

The disjoint sum 
is a special case of the data-type join.
Let
$\mathcal{A}_{1}$
and
$\mathcal{A}_{2}$
be two $X$-type domains. 
These are linked by an opspan of $X$-type domains
{\footnotesize{$
\mathcal{A}_{1}
\xrightarrow{0_{1}}
\mathcal{A}_{\top}
\xleftarrow{0_{2}}
\mathcal{A}_{2}$}}
%
with 
the terminal 
$X$-type domain 
$\mathcal{A}_{\top} = {\langle{X,\emptyset,\models_{\mathcal{A}}}\rangle}$
and injection data value functions
{\footnotesize{$Y_{1}\xhookleftarrow{0_{1}}\emptyset\xhookrightarrow{0_{2}}Y_{2}$\,.}}
This is the \underline{constraint} for disjoint sum
(Tbl.\,\ref{tbl:fole:disjoint:sum:input:output}).
It is a special case of the constraint for data-type join.
The pullback (limiting cone) of this $X$-type domain opspan
is the product $X$-type domain 
$\mathcal{A}_{1}{\times}\mathcal{A}_{2}
={\langle{X,Y_{1}{+}Y_{2},\models_{\mathcal{A}_{1}{\times}\mathcal{A}_{1}}}\rangle}$ 
with 
the disjoint sum data value set $Y_{1}{+}Y_{2}$
and
projection
$X$-type domain morphisms (span)
{\footnotesize{$
\mathcal{A}_{1}
\xleftarrow{\,\tilde{0}_{1}}
\mathcal{A}_{1}{\times}\mathcal{A}_{2}
\xrightarrow{\tilde{0}_{2}\,}
\mathcal{A}_{2}$}}
with inclusion data value functions 
{\footnotesize{$Y_{1}
\xhookrightarrow{\tilde{0}_{1}}
Y_{1}{+}Y_{2}
\xhookleftarrow{\tilde{0}_{2}}Y_{2}$.}}
This is the \underline{construction} for disjoint sum
(Tbl.\,\ref{tbl:fole:disjoint:sum:input:output}).
Let
$\mathcal{T}_{1} = {\langle{K_{1},t_{1}}\rangle}
\in
\mathrmbf{Tbl}_{\mathcal{S}}(\mathcal{A}_{1})$
and
$\mathcal{T}_{2} = {\langle{K_{2},t_{2}}\rangle}
\in
\mathrmbf{Tbl}_{\mathcal{S}}(\mathcal{A}_{2})$
be two $\mathcal{S}$-tables 
with
key sets 
and a tuple functions
$K_{1}\xrightarrow{\,t_{1}\;}\mathrmbfit{tup}_{\mathcal{S}}(\mathcal{A}_{1})$
and
$K_{2}\xrightarrow{\,t_{2}\;}\mathrmbfit{tup}_{\mathcal{S}}(\mathcal{A}_{2})$.
This is the \underline{input} for disjoint sum
(Tbl.\,\ref{tbl:fole:disjoint:sum:input:output}).
The disjoint sum $\mathcal{T}_{1}{\,+\,}\mathcal{T}_{2}$ 
of the two $\mathcal{S}$-tables 
$\mathcal{T}_{1}$
and
$\mathcal{T}_{2}$
is a special case of data-type join
--- 
just link the tables through
the opspan of tuple functions
$\mathrmbf{tup}_{\mathcal{S}}(\mathcal{A}_{1})
{\;\xrightarrow{\mathrmbfit{tbl}_{\mathcal{S}}(\tilde{0}_{1})}\;}
\mathrmbf{tup}_{\mathcal{S}}(\mathcal{A}_{1}{\times}\mathcal{A}_{2})
{\;\xleftarrow{\mathrmbfit{tbl}_{\mathcal{S}}(\tilde{0}_{2})}\;}
\mathrmbf{tup}_{\mathcal{S}}(\mathcal{A}_{2})$,
and then use expansion (twice) and union.
%
The disjoint sum table $\mathcal{T}_{1}{\,+\,}\mathcal{T}_{2}$, 
which has the binary sum (disjoint union) key set $K_{1}{\,+\,}K_{2}$
with co-mediating tuple function
%
\footnote{The tuple subset
of the disjoint sum table $\mathcal{T}_{1}{\,+\,}\mathcal{T}_{2}$ 
is the disjoint union of the tuple sets
${\wp{t_{1}}}(K_{1}) \subseteq 
\mathrmbfit{tup}_{\mathcal{S}}(\mathcal{A}_{1})
\subseteq \mathrmbf{List}(Y_{1})$
and
${\wp{t_{2}}}(K_{2}) 
\subseteq \mathrmbfit{tup}_{\mathcal{S}}(\mathcal{A}_{2})
 \subseteq \mathrmbf{List}(Y_{2})$.}
%
{\footnotesize{
{$
K_{1}{\,+\,}K_{2} 
\xrightarrow{
[t_{1}{\cdot}\mathrmbfit{tbl}_{\mathcal{S}}(\tilde{0}_{1}),
t_{1}{\cdot}\mathrmbfit{tbl}_{\mathcal{S}}(\tilde{0}_{2})]} 
\mathrmbfit{tup}_{\mathcal{A}}(\mathcal{S}_{1})
{\,\times\,}\mathrmbfit{tup}_{\mathcal{A}}(\mathcal{S}_{2})$,}}\normalsize}
%
is linked to the component tables with the 
opspan of 
table morphisms
\[\mbox
{\footnotesize{
{$\mathcal{T}_{1}
\xrightarrow{\;{\langle{\tilde{0}_{1},\check{\iota}_{1}}\rangle}\;} 
\mathcal{T}_{1}{\,+\,}\mathcal{T}_{2}
\xleftarrow{\;{\langle{\tilde{0}_{2},\check{\iota}_{2}}\rangle}\;} 
\mathcal{T}_{2}$.}}\normalsize}
\]
This is the \underline{output} for disjoint sum
(Tbl.\,\ref{tbl:fole:disjoint:sum:input:output}).
%

%

%
\begin{table}
\begin{center}
{{\fbox{\begin{tabular}{c}
\setlength{\extrarowheight}{2pt}
{\scriptsize{$\begin{array}[c]{c@{\hspace{12pt}}l}
\mathcal{A}_{1}  
\text{ and }
\mathcal{A}_{2} 
&
\textit{constraint}
\\
\mathcal{A}_{1}\xleftarrow{\,\tilde{0}_{1}}
\mathcal{A}_{1}{\times}\mathcal{A}_{2}
\xrightarrow{\tilde{0}_{2}\,}\mathcal{A}_{2}
&
\textit{construction}
\\
\hline
\mathcal{T}_{1}\in\mathrmbf{Tbl}_{\mathcal{S}}(\mathcal{A}_{1})
\text{ and }
\mathcal{T}_{2}\in\mathrmbf{Tbl}_{\mathcal{S}}(\mathcal{A}_{2})
&
\textit{input}
\\
\mathcal{T}_{1}
\xrightarrow{\;{\langle{\tilde{0}_{1},\check{\iota}_{1}}\rangle}\;} 
\mathcal{T}_{1}{\,+\,}\mathcal{T}_{2}
\xleftarrow{\;{\langle{\tilde{0}_{2},\check{\iota}_{2}}\rangle}\;} 
\mathcal{T}_{2}
&
\textit{output}
\end{array}$}}
\end{tabular}}}}
\end{center}
\caption{\texttt{FOLE} Disjoint Sum I/O}
\label{tbl:fole:disjoint:sum:input:output}
\end{table}

\newpage
\subsubsection{Data-type Semi-join.}
\label{sub:sub:sec:boole:semi:join}

%
%

%
\begin{figure}
\begin{center}
{{{\begin{tabular}{c}
\begin{picture}(150,40)(40,35)
\setlength{\unitlength}{0.97pt}
\put(98.4,58){\begin{picture}(0,0)(0,3)
\setlength{\unitlength}{0.35pt}
\put(60,23){\makebox(0,0){\normalsize{$\oplus$}}}
\put(40,10){\line(1,0){40}}
\put(10,70){\line(1,0){100}}
\put(10,70){\line(0,-1){30}}
\put(110,70){\line(0,-1){30}}
\put(40,40){\oval(60,60)[bl]}
\put(80,40){\oval(60,60)[br]}
\put(61,58){\makebox(0,0){\scriptsize{{\textit{{data-type}}}}}}
\put(61,41){\makebox(0,0){\scriptsize{{\textit{{join}}}}}}
\end{picture}}
\put(53.5,25.5){\begin{picture}(0,0)(0,0)
\setlength{\unitlength}{0.35pt}
\put(40,10){\line(1,0){60}}
\put(40,70){\line(1,0){60}}
\put(100,70){\line(0,-1){60}}
\put(40,40){\oval(60,60)[bl]}
\put(40,40){\oval(60,60)[tl]}
\put(58,51){\makebox(0,0){\scriptsize{{\textit{{restrict}}}}}}
\put(56,30){\makebox(0,0){\Large{${\Leftarrow}$}}}
\end{picture}}
\put(146,25.5){\begin{picture}(0,0)(0,0)
\setlength{\unitlength}{0.35pt}
\put(10,10){\line(1,0){61}}
\put(10,70){\line(1,0){61}}
\put(11,70){\line(0,-1){60}}
\put(70,40){\oval(60,60)[br]}
\put(70,40){\oval(60,60)[tr]}
\put(55,51){\makebox(0,0){\scriptsize{{\textit{{restrict}}}}}}
\put(56,30){\makebox(0,0){\Large{${\Rightarrow}$}}}
\end{picture}}
\put(120,33){\makebox(0,0){\footnotesize{{\textit{{data-type}}}}}}
\put(120,24){\makebox(0,0){\footnotesize{{\textit{{semi-join}}}}}}
\put(48,32){\makebox(0,0){\footnotesize{{\textit{{left}}}}}}
\put(48,46){\makebox(0,0){\footnotesize{$\oleft$}}}
\put(192,46){\makebox(0,0){\footnotesize{$\oright$}}}
\put(193,32){\makebox(0,0){\footnotesize{{\textit{{right}}}}}}
\put(57,40){\vector(-1,0){20}}
\put(120,58.5){\line(0,-1){18.5}}
\put(120,40){\vector(-1,0){30}}
\put(120,40){\vector(1,0){30}}
\put(182,40){\vector(1,0){20}}
\put(110,92){\vector(0,-1){12}}
\put(130,92){\vector(0,-1){12}}
\end{picture}
\end{tabular}}}}
\end{center}
\caption{\texttt{FOLE} data-type Semi-Join Flow Chart}
\label{fole:boole:semi:join:flo:chrt}
\end{figure}
%
%
Let $\mathcal{S}$
be a fixed signature.
For any two $\mathcal{S}$-tables 
$\mathcal{T}_{1} \in \mathrmbf{Tbl}_{\mathcal{S}}(\mathcal{A}_{1})$ and 
$\mathcal{T}_{2} \in \mathrmbf{Tbl}_{\mathcal{S}}(\mathcal{A}_{2})$
that are linked through 
\comment{
%
}
an $X$-sorted type domain opspan 
$\mathcal{A}_{1}
\xrightarrow{g_{1}} 
\mathcal{A}
\xleftarrow{g_{2}} 
\mathcal{A}_{2}$
in $\mathrmbf{Cls}(X)$,
\comment{The limiting cone of this type domain opspan, 
has pullback type domain
$\mathcal{A}_{1}{\times_{\!\mathcal{A}\,}}\mathcal{A}_{2}$
and 
$X$-type domain morphisms
(span)
{\footnotesize{$
\mathcal{A}_{1} 
\xleftarrow{\;\tilde{g}_{1}\,} 
{\mathcal{A}_{1}{\times_{\mathcal{A}}}\mathcal{A}_{2}}
\xrightarrow{\;\tilde{g}_{2}} 
\mathcal{A}_{2}.
$}\normalsize}
Data-type join within the context $\mathrmbf{Tbl}(\mathcal{S})$
is expansion followed by join.
This is the two-step process 
\newline\mbox{}\hfill
\rule[-10pt]{0pt}{26pt}
$\mathcal{T}_{1}{\,\oplus_{\mathcal{S}}}\mathcal{T}_{2} 
\doteq  
\acute{\mathrmbfit{tbl}}_{\mathcal{S}}(\tilde{g}_{1})(\mathcal{T}_{1})
{\;\vee\;}
\acute{\mathrmbfit{tbl}}_{\mathcal{S}}(\tilde{g}_{2})(\mathcal{T}_{2})$.
\hfill\mbox{}\newline
}
the left data-type semi-join $\mathcal{T}_{1}{\;\oleft_{\mathcal{S}}\,}\mathcal{T}_{2}$
is 
the set of all 
tuples in $\mathcal{T}_{1}$ plus 
tuples in $\mathcal{T}_{2}$ with data values in $Y_{1}$.
%
\comment{This is the union of the tuples in $\mathcal{T}_{1}$
with the restriction of the expansion 
(through the pullback type domain) 
of the tuples in $\mathcal{T}_{2}$.}
%
Hence,
the left data-type semi-join is defined to be the restriction on the data-type join.
The right data-type semi-join is similar.
We use the following routes of flow from 
Fig.\;\ref{fig:routes:flow:colim}.
\begin{center}
{{\begin{tabular}{c}
\setlength{\unitlength}{0.6pt}
\begin{picture}(320,80)(0,-10)
%
\put(325,25){\makebox(0,0){\normalsize{
$\left.\rule{0pt}{24pt}\right\}
\underset{\textstyle{\textsf{semi-join}}}{\textsf{right data-type}}$}}}
\put(-25,25){\makebox(0,0){\normalsize{
$\underset{\textstyle{\textsf{semi-join}}}{\textsf{left data-type}}
\left\{\rule{0pt}{24pt}\right.$}}}
\put(100,55){\makebox(0,0){\huge{
$\overset{\textit{\scriptsize{expand}}}{\Rightarrow}$}}}
\put(200,55){\makebox(0,0){\huge{
$\overset{\textit{\scriptsize{expand}}}{\Leftarrow}$}}}
\put(150,30){\makebox(0,0){\huge{
$\overset{\textit{\scriptsize{join}}}{\Downarrow}$}}}
\put(100,4.5){\makebox(0,0){\huge{
$\overset{\textit{\scriptsize{restrict}}}{\Leftarrow}$}}}
\put(200,4.5){\makebox(0,0){\huge{
$\overset{\textit{\scriptsize{restrict}}}{\Rightarrow}$}}}
%
\put(90,48){\line(1,0){54}}
\put(144,38){\oval(20,20)[tr]}
\put(166,38){\oval(20,20)[tl]}
\put(220,48){\line(-1,0){54}}
\put(145,-2){\line(-1,0){54}}
\put(165,-2){\line(1,0){54}}
\put(154,11){\line(0,1){27}}
\put(156,11){\line(0,1){27}}
\put(145,8){\oval(20,20)[br]}
\put(165,8){\oval(20,20)[bl]}
\end{picture}
\end{tabular}}}
\end{center}
The constraint, construction and input for data-type semi-join 
are identical to that for natural join. 
Only the output is different.
\begin{description}
\item[Constraint:] 
The constraint for data-type semi-join is the same as the constraint for data-type join
(Tbl.\,\ref{tbl:fole:boolean:join:input:output}):
an $X$-sorted type domain opspan 
$\mathcal{A}_{1}\xrightarrow{g_{1}}\mathcal{A}\xleftarrow{g_{2}}\mathcal{A}_{2}$
in $\mathrmbf{Cls}(X)$.
\newline
\item[Construction:] 
The construction for data-type semi-join is the same as the construction for data-type join 
(Tbl.\,\ref{tbl:fole:boolean:join:input:output}):
the span
{\footnotesize{$
\mathcal{A}_{1} 
\xleftarrow{\;\tilde{g}_{1}\,} 
{\mathcal{A}_{1}{\times_{\mathcal{A}}}\mathcal{A}_{2}}
\xrightarrow{\;\tilde{g}_{2}} 
\mathcal{A}_{2}
$}\normalsize}
of projection $X$-type domain morphisms
with pullback type domain
${\mathcal{A}_{1}{\times_{\mathcal{A}}}\mathcal{A}_{2}}$.
%
\newline
\item[Input:] 
The input for data-type semi-join is the same as the input for data-type join 
(Tbl.\,\ref{tbl:fole:boolean:join:input:output}):
two tables
$\mathcal{T}_{1} = {\langle{K_{1},t_{1}}\rangle} \in 
\mathrmbf{Tbl}_{\mathcal{S}}(\mathcal{A}_{1})$
and
$\mathcal{T}_{2} = {\langle{K_{2},t_{2}}\rangle} \in 
\mathrmbf{Tbl}_{\mathcal{S}}(\mathcal{A}_{2})$.
\newpage
\item[Output:] 
The output is data-type join followed by restriction.
\newline
\begin{itemize}
\item 
Data-type join results in the table
$\mathcal{T}_{1}{\,\oplus_{\mathcal{S}}}\mathcal{T}_{2} = 
\acute{\mathrmbfit{tbl}}_{\mathcal{S}}(\tilde{g}_{1})(\mathcal{T}_{1})
{\;\vee\;}
\acute{\mathrmbfit{tbl}}_{\mathcal{S}}(\tilde{g}_{2})(\mathcal{T}_{2})$
with key set $K_{1}{+}K_{2}$
and tuple function
$K_{1}{+}K_{2}\xrightarrow{[\tilde{t}_{1},\tilde{t}_{2}]}
\mathrmbfit{tup}_{\mathcal{A}}(\mathcal{S})$.
\newline
\item 
Restriction 
{\footnotesize{$
\mathrmbf{Tbl}_{\mathcal{S}}(\mathcal{A}_{1})
{\;\xleftarrow
{\;\grave{\mathrmbfit{tbl}}_{\mathcal{S}}(\tilde{g}_{1})\;}\;}
\mathrmbf{Tbl}_{\mathcal{S}}(\mathcal{A}_{1}{\times_{\mathcal{A}}}\mathcal{A}_{2})
$}\normalsize}
(\S\,\ref{sub:sub:sec:adj:flow:S})
along the tuple function
of the $X$-type domain morphism
{\footnotesize{$\mathcal{A}_{1} 
\xleftarrow{\;\tilde{g}_{1}\,} 
{\mathcal{A}_{1}{\times_{\mathcal{A}}}\mathcal{A}_{2}}
$}\normalsize}
maps the data-type join table
$\mathcal{T}_{1}{\,\oplus_{\mathcal{S}}}\mathcal{T}_{2}$
to the left semi-join table
${\mathcal{T}_{1}}{\,\oleft_{\mathcal{S}}}{\mathcal{T}_{2}} =
\grave{\mathrmbfit{tbl}}_{\mathcal{S}}(\tilde{g}_{1})
(\mathcal{T}_{1}{\,\oplus_{\mathcal{S}}}\mathcal{T}_{2}) = 
{\langle{\widehat{K},\hat{t}_{1}}\rangle}$ 
with key set $\widehat{K}_{1}$  and
tuple function
$\widehat{K}_{1}\xrightarrow{\hat{t}_{1}}\mathrmbfit{tup}_{\mathcal{S}}(\mathcal{A}_{1})$
defined by pullback. 
\newline
\end{itemize}
%
Data-type semi-join is data-type join followed by restriction. 
For left semi-join, 
this is the two-step process
\[\mbox
{\footnotesize{{$
{\mathcal{T}_{1}}{\,\oleft_{\mathcal{S}}}{\mathcal{T}_{2}}
\doteq 
\grave{\mathrmbfit{tbl}}_{\mathcal{S}}(\tilde{g}_{1})
\bigl(\mathcal{T}_{1}{\,\oplus_{\mathcal{S}}}
\mathcal{T}_{2}\bigr)
$.}}\normalsize}
%
\footnote{The data-type semi-join of a disjoint sum gives one of the tables.}
%
\]
This defines the table morphism
{\footnotesize{{$
{\mathcal{T}_{1}}{\,\oleft_{\mathcal{S}}}{\mathcal{T}_{2}}
\xrightarrow{{\langle{\tilde{g}_{1},\hat{k}_{1}}\rangle}} 
\mathcal{T}_{1}{\,\oplus_{\mathcal{S}}}\mathcal{T}_{2}
$.}}\normalsize}
in $\mathrmbf{Tbl}_{\mathcal{S}}(\mathcal{A}_{1})$.
There is a sub-table relationship 
{\footnotesize{{$
\mathcal{T}_{1}
\xrightarrow{\hat{m}_{1}} 
\mathcal{T}_{1}{\;\oleft_{\mathcal{S}}\,}\mathcal{T}_{2}
$}}\normalsize}
in the small fiber table context
$\mathrmbf{Tbl}_{\mathcal{S}}(\mathcal{A}_{1})$,
which is the (left-side) output of data-type semi-join. 
%
The right data-type semi-join has a similar definition
(Tbl.\,\ref{tbl:fole:boolean:join:input:output}).
\end{description}
These factor 
{\footnotesize{
{$
\mathcal{T}_{1}
\xrightarrow{\hat{m}_{1}} 
\mathcal{T}_{1}{\;\oleft_{\mathcal{S}}\,}\mathcal{T}_{2}
\xrightarrow{{\langle{\tilde{g}_{1},\hat{k}_{1}}\rangle}} 
\mathcal{T}_{1}{\,\oplus_{\mathcal{S}}}\mathcal{T}_{2}
\xleftarrow{{\langle{\tilde{g}_{2},\hat{k}_{2}}\rangle}} 
\mathcal{T}_{1}{\;\oright_{\mathcal{S}}\,}\mathcal{T}_{2}
\xleftarrow{\hat{m}_{2}}
\mathcal{T}_{2}
$,}}\normalsize}
%
the opspan of $\mathcal{S}$-table morphisms
{\footnotesize{
{$
\mathcal{T}_{1}
\xrightarrow[\;\hat{m}_{1}{\circ\,}{\langle{\tilde{g}_{1},\hat{k}_{1}}\rangle}\;]
{\;{\langle{\tilde{g}_{1},\tilde{k}_{1}}\rangle}\;} 
\mathcal{T}_{1}{\,\oplus_{\mathcal{S}}}\mathcal{T}_{2}
\xleftarrow[\;\hat{m}_{2}{\circ\,}{\langle{\tilde{g}_{2},\hat{k}_{2}}\rangle}\;]
{\;{\langle{\tilde{g}_{2},\tilde{k}_{2}}\rangle}\;} 
\mathcal{T}_{2}
$,}}\normalsize}
%
the output for data-type join. 
%
{\footnote{The data-type semi-join of a disjoint sum of non-empty tables 
gives either of the tables: the left data-type semi-join gives the left table,
and the right gives the right.}}

%
\comment{
\begin{table}
\begin{center}
{{\fbox{\begin{tabular}{c}
\setlength{\extrarowheight}{2pt}
{\scriptsize{$\begin{array}[c]{c@{\hspace{12pt}}l}
\mathcal{A}_{1}\xrightarrow{g_{1}}\mathcal{A}\xleftarrow{g_{2}}\mathcal{A}_{2}
&
\textit{constraint}
\\
\mathcal{A}_{1} 
\xleftarrow{\;\tilde{g}_{1}\,} 
{\mathcal{A}_{1}{\times_{\mathcal{A}}}\mathcal{A}_{2}}
\xrightarrow{\;\tilde{g}_{2}} 
\mathcal{A}_{2}
&
\textit{construction}
\\
\hline
\mathcal{T}_{1}\in\mathrmbf{Tbl}_{\mathcal{S}}(\mathcal{A}_{1})
\text{ and }
\mathcal{T}_{2}\in\mathrmbf{Tbl}_{\mathcal{S}}(\mathcal{A}_{2})
&
\textit{input}
\\
\hline
\mathcal{T}_{1}\xrightarrow{\hat{m}_{1}}\mathcal{T}_{1}{\;\oleft_{\mathcal{S}}\,}\mathcal{T}_{2}
\text{ and }
\mathcal{T}_{1}{\;\oright_{\mathcal{S}}\,}\mathcal{T}_{2}
\xleftarrow{\hat{m}_{2}}
\mathcal{T}_{2}
&
\textit{output}
\end{array}$}}
\end{tabular}}}}
\end{center}
\caption{\texttt{FOLE} Data-type Semi-Join I/O}
\label{tbl:fole:boolean:semi:join:input:output}
\end{table}
}
\comment{
\begin{center}
{{\begin{tabular}{c}
\setlength{\unitlength}{0.65pt}
\begin{picture}(320,130)(0,-40)
\put(160,90){\makebox(0,0){\footnotesize{$
\overset{\textstyle{\tilde{K}}}{\overbrace{K_{1}{+}_{K}K_{2}}}$}}}

\put(33,-34){\makebox(0,0){\normalsize{$\underset{\textstyle{
\mathcal{T}_{1}{\;\oleft_{\mathcal{S}}\,}\mathcal{T}_{2}}}{\underbrace{\rule{37pt}{0pt}}}$}}}
\put(287,-34){\makebox(0,0){\normalsize{$\underset{\textstyle{
\mathcal{T}_{1}{\;\oright_{\mathcal{S}}\,}\mathcal{T}_{2}}}{\underbrace{\rule{37pt}{0pt}}}$}}}

\put(0,80){\makebox(0,0){\footnotesize{$K_{1}$}}}
\put(60,50){\makebox(0,0){\footnotesize{$\widehat{K}_{1}$}}}
\put(15,70){\vector(2,-1){33}}
\put(47,38){\vector(-1,-1){28}}
\put(72,50){\vector(3,1){54}}

\put(320,80){\makebox(0,0){\footnotesize{$K_{2}$}}}
\put(260,50){\makebox(0,0){\footnotesize{$\widehat{K}_{2}$}}}
\put(305,70){\vector(-2,-1){33}}
\put(273,38){\vector(1,-1){28}}
\put(248,50){\vector(-3,1){54}}

\put(-10,0){\makebox(0,0){\footnotesize{$
{\mathrmbfit{tup}_{\mathcal{S}}(\mathcal{A}_{1})}$}}}
\put(330,0){\makebox(0,0){\footnotesize{$
{\mathrmbfit{tup}_{\mathcal{S}}(\mathcal{A}_{2})}$}}}
\put(160,0){\makebox(0,0){\footnotesize{$
{\mathrmbfit{tup}_{\mathcal{S}}(\mathcal{A}_{1}{{\times}_{\mathcal{S}}}\mathcal{A}_{2})}$}}}
\put(75,90){\makebox(0,0){\scriptsize{$\tilde{k}_{1}$}}}
\put(35,68){\makebox(0,0)[l]{\scriptsize{$\hat{m}_{1}$}}}
\put(85,65){\makebox(0,0){\scriptsize{$\hat{k}_{1}$}}}

\put(40,22){\makebox(0,0)[l]{\scriptsize{$\hat{t}_{1}$}}}
\put(280,22){\makebox(0,0)[r]{\scriptsize{$\hat{t}_{2}$}}}

\put(245,90){\makebox(0,0){\scriptsize{$\tilde{k}_{2}$}}}
\put(287,68){\makebox(0,0)[r]{\scriptsize{$\hat{m}_{2}$}}}
\put(235,65){\makebox(0,0){\scriptsize{$\hat{k}_{2}$}}}

\put(70,-12){\makebox(0,0){\scriptsize{$\mathrmbfit{tup}_{\mathcal{S}}(\tilde{g}_{1})$}}}
\put(252,-12){\makebox(0,0){\scriptsize{$\mathrmbfit{tup}_{\mathcal{S}}(\tilde{g}_{2})$}}}
\put(-6,40){\makebox(0,0)[r]{\scriptsize{$t_{1}$}}}
\put(326,40){\makebox(0,0)[l]{\scriptsize{${t}_{2}$}}}
\put(163,40){\makebox(0,0){\scriptsize{$[\tilde{t}_{1},\tilde{t}_{2}]$}}}
\put(0,65){\vector(0,-1){50}}
\put(160,65){\vector(0,-1){50}}
\put(320,65){\vector(0,-1){50}}
\put(20,80){\vector(1,0){100}}
\put(295,80){\vector(-1,0){100}}
\put(35,0){\vector(1,0){65}}
\put(285,0){\vector(-1,0){65}}
\qbezier(136,22)(142,22)(148,22)
\qbezier(136,22)(136,16)(136,10)
\qbezier(182,22)(176,22)(170,22)
\qbezier(182,22)(182,16)(182,10)
\put(100,35){\makebox(0,0){\huge{
$\overset{\textit{\scriptsize{restrict}}}{\Leftarrow}$}}}
\put(215,35){\makebox(0,0){\huge{
$\overset{\textit{\scriptsize{restrict}}}{\Rightarrow}$}}}
\end{picture}
\end{tabular}}}
\end{center}
}
\comment{
\begin{figure}
\begin{center}
{{\begin{tabular}{c}

\\\\\\
{{\begin{tabular}{c}
\setlength{\unitlength}{0.88pt}
\begin{picture}(120,120)(0,-60)
\put(0,0){\makebox(0,0){\footnotesize{$\mathcal{T}_{1}$}}}
\put(120,0){\makebox(0,0){\footnotesize{$\mathcal{T}_{2}$}}}
\put(15,60){\makebox(0,0)[r]{\footnotesize{$
{\mathcal{T}_{1}}{\,\oleft_{\mathcal{S}}}{\mathcal{T}_{2}}$}}}
\put(105,60){\makebox(0,0)[l]{\footnotesize{$
{\mathcal{T}_{1}}{\,\oright_{\mathcal{S}}}{\mathcal{T}_{2}}$}}}
\put(63,60){\makebox(0,0){\footnotesize{$
\mathcal{T}_{1}{\,\bigcup_{\mathcal{A}}}\mathcal{T}_{2}$}}}
\put(60,-60){\makebox(0,0){\footnotesize{$\mathcal{T}$}}}
\put(25,-35){\makebox(0,0)[r]{\scriptsize{${\langle{h_{1},k_{1}}\rangle}$}}}
\put(95,-35){\makebox(0,0)[l]{\scriptsize{${\langle{h_{2},k_{2}}\rangle}$}}}
\put(38,42){\makebox(0,0)[r]{\scriptsize{${\langle{\tilde{g}_{1},\tilde{k}_{1}}\rangle}$}}}
\put(83,42){\makebox(0,0)[l]{\scriptsize{${\langle{\tilde{g}_{2},\tilde{k}_{2}}\rangle}$}}}
\put(30,68){\makebox(0,0){\scriptsize{${\langle{\tilde{g}_{1},\hat{k}_{1}}\rangle}$}}}
\put(93,68){\makebox(0,0){\scriptsize{${\langle{\tilde{g}_{2},\hat{k}_{2}}\rangle}$}}}
\put(-4,32){\makebox(0,0)[r]{\scriptsize{${\langle{1,\hat{m}_{1}}\rangle}$}}}
\put(124,32){\makebox(0,0)[l]{\scriptsize{${\langle{1,\hat{m}_{2}}\rangle}$}}}
\put(10,10){\vector(1,1){40}}
\put(20,60){\vector(1,0){20}}
\put(105,60){\vector(-1,0){20}}
\put(0,13){\vector(0,1){35}}
\put(120,13){\vector(0,1){35}}
\put(110,10){\vector(-1,1){40}}
\put(50,-50){\vector(-1,1){40}}
\put(70,-50){\vector(1,1){40}}
%
%
\end{picture}
\\\\
\hspace{5pt}in $\mathrmbf{Tbl}(\mathcal{A})$
\\\\
\hspace{8pt}
${\mathcal{T}_{1}}{\,\oleft_{\mathcal{S}}}{\mathcal{T}_{2}} =
\grave{\mathrmbfit{tbl}}_{\mathcal{S}}(\tilde{g}_{1})
\bigl(\mathcal{T}_{1}{\,\oplus_{\mathcal{S}}}
\mathcal{T}_{2}\bigr)
$
\\
\hspace{8pt}
${\mathcal{T}_{1}}{\,\oright_{\mathcal{S}}}{\mathcal{T}_{2}} =
\grave{\mathrmbfit{tbl}}_{\mathcal{S}}(\tilde{g}_{2})
\bigl(\mathcal{T}_{1}{\,\oplus_{\mathcal{S}}}\mathcal{T}_{2}\bigr)$
\end{tabular}}}
\end{tabular}}}
\end{center}
\caption{\texttt{FOLE} Data-type Semi-Join}
\label{fole:boole:semi:join}
\end{figure}
}
%

\newpage
\subsubsection{Data-type Anti-join.}
\label{sub:sub:sec:boole:anti:join}
%

\begin{figure}
\begin{center}
{{{\begin{tabular}{c}
\begin{picture}(130,80)(50,15)
\setlength{\unitlength}{0.97pt}
\put(98,60){\begin{picture}(0,0)(0,3)
\setlength{\unitlength}{0.35pt}
\put(42,24){\makebox(0,0){\normalsize{$\oleft$}}}
\put(78,24){\makebox(0,0){\normalsize{$\oright$}}}
\put(40,10){\line(1,0){40}}
\put(10,70){\line(1,0){100}}
\put(10,70){\line(0,-1){30}}
\put(110,70){\line(0,-1){30}}
\put(40,40){\oval(60,60)[bl]}
\put(80,40){\oval(60,60)[br]}
\put(61,59){\makebox(0,0){\scriptsize{{\textit{{data-type}}}}}}
\put(61,42){\makebox(0,0){\scriptsize{{\textit{{semi-join}}}}}}
\end{picture}}
\put(44,21.3){\begin{picture}(0,0)(0,3)
\setlength{\unitlength}{0.35pt}
\put(50,33){\line(1,0){20}}
\put(40,10){\line(1,0){40}}
\put(10,70){\line(1,0){100}}
\put(10,70){\line(0,-1){30}}
\put(110,70){\line(0,-1){30}}
\put(40,40){\oval(60,60)[bl]}
\put(80,40){\oval(60,60)[br]}
\put(61,50){\makebox(0,0){\scriptsize{{\textit{{diff}}}}}}
\end{picture}}
\put(151,21.3){\begin{picture}(0,0)(0,3)
\setlength{\unitlength}{0.35pt}
\put(50,33){\line(1,0){20}}
\put(40,10){\line(1,0){40}}
\put(10,70){\line(1,0){100}}
\put(10,70){\line(0,-1){30}}
\put(110,70){\line(0,-1){30}}
\put(40,40){\oval(60,60)[bl]}
\put(80,40){\oval(60,60)[br]}
\put(61,50){\makebox(0,0){\scriptsize{{\textit{{diff}}}}}}
\end{picture}}
\put(120,35){\makebox(0,0){\footnotesize{{\textit{{data-type}}}}}}
\put(120,25){\makebox(0,0){\footnotesize{{\textit{{anti-join}}}}}}
\put(72,14){\makebox(0,0)[l]{\footnotesize{{\textit{{left}}}}}}
\put(58,14){\makebox(0,0){{$\oslash$}}}
\put(184,14){\makebox(0,0){{$\obackslash$}}}
\put(170,14){\makebox(0,0)[r]{\footnotesize{{\textit{{right}}}}}}
\put(112.7,60.3){\line(0,-1){8.5}}
\put(112.7,52){\line(-1,0){36}}
\put(77,46){\line(0,1){6}}
\put(77,43){\vector(0,-1){0}}
\put(127,60.3){\line(0,-1){8.5}}
\put(127,52){\line(1,0){36}}
\put(163,46){\line(0,1){6}}
\put(163,43){\vector(0,-1){0}}
\put(109.7,102.2){\vector(0,-1){20}}
\put(129.3,102.2){\vector(0,-1){20}}
\put(110,97){\line(-1,0){54}}
\put(56,97){\vector(0,-1){53.5}}
\put(129,97){\line(1,0){52}}
\put(181,97){\vector(0,-1){53.5}}
\put(66.6,22){\vector(0,-1){16}}
\put(173.6,22){\vector(0,-1){16}}
\end{picture}
\end{tabular}}}}
\end{center}
\caption{\texttt{FOLE} Data-type Anti-Join Flow Chart}
\label{fole:boole:anti:join:flo:chrt}
\end{figure}
The data-type anti-join operations 
are related to the data-type semi-join operations.
The left (right) data-type anti-join of two tables 
is the complement of the left (right) data-type semi-join. 
%
Let $\mathcal{S}$ 
be a fixed signature.
%
%
For any two $\mathcal{S}$-tables $\mathcal{T}_{1}$ and $\mathcal{T}_{2}$
that are linked through 
an $X$-sorted type domain opspan 
$\mathcal{A}_{1}
\xrightarrow{g_{1}} 
\mathcal{A}
\xleftarrow{g_{2}} 
\mathcal{A}_{2}$,
%
the left data-type anti-join 
$\mathcal{T}_{1}{\;\oslash_{\mathcal{S}}\,}\mathcal{T}_{2}$
is the set of all tuples in $\mathcal{T}_{1}$
for which there is no tuple in $\mathcal{T}_{2}$ 
equal on their common data values. 
We use the following routes of flow.
%
\begin{center}
{{\begin{tabular}{c}
\setlength{\unitlength}{0.6pt}
\begin{picture}(320,80)(0,-10)
\put(325,25){\makebox(0,0){\normalsize{$\left.\rule{0pt}{24pt}\right\}
\overset{\textstyle{\textsf{data-type}}}
{\underset{\textstyle{\textsf{anti-join}}}{\textsf{right}}}$}}}
\put(-20,25){\makebox(0,0){\normalsize{$
\overset{\textstyle{\textsf{data-type}}}
{\underset{\textstyle{\textsf{anti-join}}}{\textsf{left}}}
\left\{\rule{0pt}{24pt}\right.$}}}
\put(90,30){\makebox(0,0){\huge{
$\overset{\textit{\scriptsize{}}}{\Downarrow}$}}}
\put(150,65){\makebox(0,0){\huge{
$\overset{\textit{\scriptsize{data-type}}}{}$}}}
\put(150,55){\makebox(0,0){\huge{
$\overset{\textit{\scriptsize{semi-join}}}{}$}}}
\put(150,30){\makebox(0,0){\huge{
$\overset{\textit{\scriptsize{}}}{\Downarrow}$}}}
\put(210,30){\makebox(0,0){\huge{
$\overset{\textit{\scriptsize{}}}{\Downarrow}$}}}
\put(50,5){\makebox(0,0){\huge{
$\overset{\textit{\scriptsize{diff}}}{\Leftarrow}$}}}
\put(250,5){\makebox(0,0){\huge{
$\overset{\textit{\scriptsize{diff}}}{\Rightarrow}$}}}
\put(95,8){\line(0,1){30}}
\put(85,8){\oval(20,20)[br]}
\put(215,8){\line(0,1){30}}
\put(225,8){\oval(20,20)[bl]}
\put(120,48){\line(1,0){24}}
\put(144,38){\oval(20,20)[tr]}
\put(190,48){\line(-1,0){24}}
\put(166,38){\oval(20,20)[tl]}
\put(154,14){\line(0,1){24}}
\qbezier(155,12)(155,10)(155,8)
\put(156,14){\line(0,1){24}}
\put(145,8){\oval(20,20)[br]}
\put(144,-2){\line(-1,0){102}}
\put(165,8){\oval(20,20)[bl]}
\put(166,-2){\line(1,0){102}}
\end{picture}
\end{tabular}}}
\end{center}
Left data-type anti-join
within the context $\mathrmbf{Tbl}(\mathcal{S})$
is left data-type semi-join, followed by difference.
This is the two-step process
illustrated in 
Fig.\;\ref{fole:boole:anti:join:flo:chrt}.
The constraint, construction and input for data-type anti-join 
are identical to that for data-type join.
Only the output is different.
\begin{description}
\item[Constraint:] 
The constraint for data-type anti-join is the same as the constraint for data-type join
(Tbl.\,\ref{tbl:fole:boolean:join:input:output}):
an $X$-sorted type domain opspan 
$\mathcal{A}_{1}\xrightarrow{g_{1}}\mathcal{A}\xleftarrow{g_{2}}\mathcal{A}_{2}$
in $\mathrmbf{Cls}(X)$.
\newline
\item[Construction:] 
The construction for data-type anti-join is the same as 
the construction for data-type join 
(Tbl.\,\ref{tbl:fole:boolean:join:input:output}):
the span
{\footnotesize{$\mathcal{A}_{1} 
\xleftarrow{\;\tilde{g}_{1}\,} 
{\mathcal{A}_{1}{\times_{\mathcal{A}}}\mathcal{A}_{2}}
\xrightarrow{\;\tilde{g}_{2}} 
\mathcal{A}_{2}$}\normalsize}
of projection $X$-type domain morphisms
with pullback type domain
$\mathcal{A}_{1}{\times_{\mathcal{A}}}\mathcal{A}_{2}$.
\newline
\item[Input:] 
The input for data-type anti-join is the same as the input for data-type join 
(Tbl.\,\ref{tbl:fole:boolean:join:input:output}):
a pair of tables
$\mathcal{T}_{1}\in\mathrmbf{Tbl}_{\mathcal{S}}(\mathcal{A}_{1})$
and
$\mathcal{T}_{2}\in\mathrmbf{Tbl}_{\mathcal{S}}(\mathcal{A}_{2})$.
\newpage
\item[Output:] 
The output 
for data-type anti-join 
is 
semi-join followed by difference.
\newline
\begin{itemize}
\item 
Left data-type semi-join 
results in the table
${\mathcal{T}_{1}}{\,\oleft_{\mathcal{S}}}{\mathcal{T}_{2}} =
\grave{\mathrmbfit{tbl}}_{\mathcal{S}}(\tilde{g}_{1})
(\mathcal{T}_{1}{\,\oplus_{\mathcal{S}}}\mathcal{T}_{2}) = 
{\langle{\widehat{K},\hat{t}_{1}}\rangle}$ 
with key set $\widehat{K}_{1}$  and
tuple function
$\widehat{K}_{1}\xrightarrow{\hat{t}_{1}}\mathrmbfit{tup}_{\mathcal{S}}(\mathcal{A}_{1})$.
\newline
\item 
Difference 
in the small table fiber context
$\mathrmbf{Tbl}_{\mathcal{S}}(\mathcal{A}_{1})$
gives the left data-type anti-join table
${\mathcal{T}_{1}}{\,\oslash_{\mathcal{S}}}{\mathcal{T}_{2}} 
= \mathcal{T}_{1}{\,-\,}({\mathcal{T}_{1}}{\,\oleft_{\mathcal{S}}}{\mathcal{T}_{2}})$.
\newline
\end{itemize}
%
%
Data-type anti-join is data-type semi-join followed by difference. 
For left data-type anti-join, 
this is the two-step process
\newline\mbox{}\hfill\rule[-10pt]{0pt}{26pt}
${\mathcal{T}_{1}}{\,\oslash_{\mathcal{S}}}{\mathcal{T}_{2}} 
\doteq
\mathcal{T}_{1}{\,-\,}({\mathcal{T}_{1}}{\,\oleft_{\mathcal{S}}}{\mathcal{T}_{2}})$.
%
\footnote{The data-type anti-join of a disjoint sum is the empty table.}
%
\hfill\mbox{}\newline
%
\end{description}
There is an inclusion morphism
$\mathcal{T}_{1}
\xhookleftarrow{\bar{\omega}_{1}}
{\mathcal{T}_{1}}{\,\oslash_{\mathcal{S}}}{\mathcal{T}_{2}}$
in the small table fiber context
$\mathrmbf{Tbl}_{\mathcal{S}}(\mathcal{A}_{1})$,
which is the output for left data-type anti-join.
The right data-type anti-join has a similar definition
(Tbl\;\ref{tbl:fole:boolean:join:input:output}).

%
\newpage
\subsection{Generic Join.}\label{sub:sub:sec:generic:join}
%
%
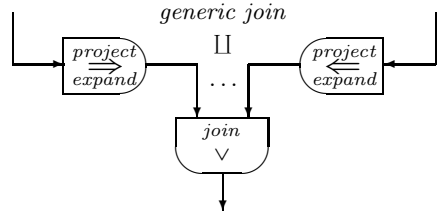
\begin{figure}
\begin{center}
{{{\begin{tabular}{c}
\begin{picture}(160,75)(37,27)
\setlength{\unitlength}{0.97pt}
\put(54,65){\begin{picture}(0,0)(0,0)
\setlength{\unitlength}{0.35pt}
\put(10,10){\line(1,0){60}}
\put(10,70){\line(1,0){60}}
\put(10,70){\line(0,-1){60}}
\put(70,40){\oval(60,60)[br]}
\put(70,40){\oval(60,60)[tr]}
\put(55,55){\makebox(0,0){\scriptsize{{\textit{{project}}}}}}
\put(56,38){\makebox(0,0){\Large{${\Rightarrow}$}}}
\put(55,25){\makebox(0,0){\scriptsize{{\textit{{expand}}}}}}
\end{picture}}
\put(146.5,65){\begin{picture}(0,0)(0,0)
\setlength{\unitlength}{0.35pt}
\put(40,10){\line(1,0){60}}
\put(40,70){\line(1,0){60}}
\put(100,70){\line(0,-1){60}}
\put(40,40){\oval(60,60)[bl]}
\put(40,40){\oval(60,60)[tl]}
\put(58,55){\makebox(0,0){\scriptsize{{\textit{{project}}}}}}
\put(56,38){\makebox(0,0){\Large{${\Leftarrow}$}}}
\put(58,25){\makebox(0,0){\scriptsize{{\textit{{expand}}}}}}
\end{picture}}
\put(98,37){\begin{picture}(0,0)(0,3)
\setlength{\unitlength}{0.35pt}
\put(60,30){\makebox(0,0){\normalsize{$\vee$}}}
\put(40,10){\line(1,0){40}}
\put(10,70){\line(1,0){100}}
\put(10,70){\line(0,-1){30}}
\put(110,70){\line(0,-1){30}}
\put(40,40){\oval(60,60)[bl]}
\put(80,40){\oval(60,60)[br]}
\put(60,55){\makebox(0,0){\scriptsize{{\textit{{join}}}}}}
\end{picture}}
\put(120,100){\makebox(0,0){\footnotesize{{\textit{{generic join}}}}}}
\put(120,88){\makebox(0,0){\scriptsize{$\coprod$}}}
\put(121,72){\makebox(0,0){\ldots}}
\put(38,80){\line(0,1){20}}
\put(38,80){\vector(1,0){20}}
\put(110,80){\line(-1,0){20}}
\put(110,80){\vector(0,-1){21}}
\put(120,38){\vector(0,-1){15}}
\put(130,80){\vector(0,-1){21}}
\put(130,80){\line(1,0){20}}
\put(203,80){\line(0,1){20}}
\put(203,80){\vector(-1,0){20}}
\end{picture}
\end{tabular}}}}
\end{center}
\caption{\texttt{FOLE} Generic Join Flow Chart}
\label{fig:fole:boolean:join:flo:chrt}
\end{figure}
The generic join for tables 
is the relational counterpart of
the logical disjunction for predicates.
Where 
the \emph{join} operation (\S\,\ref{sub:sub:sec:boole}) 
is the analogue for logical disjunction 
at the small scope $\mathrmbf{Tbl}(\mathcal{D})$ of a signed domain table fiber,
and the \emph{co-quotient} operation (\S\,\ref{sub:sub:sec:co-quotient})
and
the \emph{data-type join} operation (\S\,\ref{sub:sub:sec:boole:join})
are special cases of the analogue 
at the intermediate scope 
$\mathrmbf{Tbl}(\mathcal{S})$ of a signature table fiber,
the \emph{generic join} operation
is defined at the large scope $\mathrmbf{Tbl}$ of all tables.
We identify all of these concepts as colimits at different scopes.
%

In this section, we focus on tables in the full context $\mathrmbf{Tbl}$ 
of all tables.
These colimits are resolvable into expansion-projections
followed by join.
The generic join operation is dual to the generic meet operation
(\S\,\ref{sub:sub:sec:generic:meet}).
%
\footnote{Generic meets and limits in the context of $\mathrmbf{Tbl}$ 
(\S\,\ref{sub:sub:sec:generic:meet})
can be constructed out of 
colimits in the context of signed domains $\mathrmbf{Dom}$,
the table restriction-inflation operation along signed domain morphisms, and 
meets (limits) in small table fibers.}
%
%
%
The \emph{generic join} operation only needs a sufficient collection of tables
(Def.\;\ref{def:suff:adequ:colim}).
To reiterate,
we identify \texttt{FOLE} generic joins with all colimits
in the context $\mathrmbf{Tbl}$.
\begin{description}
\item[Constraint:] 
Consider a diagram 
$\mathrmbfit{D} 
: \mathrmbf{I}^{\text{op}} \rightarrow \mathrmbf{Dom}$
consisting of a linked collection of signed domains 
$\bigl\{ 
\mathcal{D}_{i} 
\xrightarrow{{\langle{{h},{f},{g}}\rangle}} 
\mathcal{D}_{j} 
\bigr\}$.
This is the constraint for generic join 
(Tbl.\,\ref{tbl:fole:generic:join:input:output}).
\newline
\item[Construction:] 
Let $\widetilde{\mathcal{D}} = \prod \mathrmbfit{D}$ be the limit
in $\mathrmbf{Dom}$
with projection signed domain morphisms 
$\bigl\{ 
\mathcal{D}_{i} 
\xleftarrow{{\langle{\tilde{h}_{i},\tilde{f}_{i},\tilde{g}_{i}}\rangle}} 
\widetilde{\mathcal{D}} 
\mid i \in \mathrmbf{I} \bigr\}$
that commute with the links in the constraint:
${\langle{\tilde{h}_{i},\tilde{f}_{i},\tilde{g}_{i}}\rangle}
{\;\circ\;}{\langle{{h},{f},{g}}\rangle}
= 
{\langle{\tilde{h}_{j},\tilde{f}_{j},\tilde{g}_{j}}\rangle}$.  
This is the construction for generic join 
(Tbl.\,\ref{tbl:fole:generic:join:input:output}).
\newline
\item[Input:] 
Let $I \xrightarrow{\mathrmbfit{T}} \mathrmbf{Tbl}$ 
be a \underline{sufficient} indexed collection of tables 
(Def.\;\ref{def:suff:adequ:colim})
{\footnotesize{$
\bigl\{ 
\mathcal{T}_{i} = \mathrmbfit{T}(i) \in \mathrmbf{Tbl}(\mathcal{D}_{i}) 
\mid i \in I \bigr\}$}}
for some indexing set
$I \subseteq obj(\mathrmbf{I})$.
%
%
This is the input for generic join 
(Tbl.\,\ref{tbl:fole:generic:join:input:output}).
\newline\newline
\item[Output:]\mbox{}
Generic join is projection/expansion ($i \in I$ times) followed by join.
\newline
\begin{itemize}
\item 
For each index $i \in I$,
projection/expansion
{\footnotesize{$\mathrmbf{Tbl}(\mathcal{D}_{i}) 
\xrightarrow{\acute{\mathrmbfit{tbl}}({\tilde{h}_{i},\tilde{f}_{i},\tilde{g}_{i}})} 
\mathrmbf{Tbl}(\widehat{\mathcal{D}})$}}
(\S\,\ref{sub:sub:sec:flow:sign:dom:mor})
along the tuple function
of the signed domain morphism 
$\mathcal{D}_{i} 
\xleftarrow{{\langle{\tilde{h}_{i},\tilde{f}_{i},\tilde{g}_{i}}\rangle}} 
\widetilde{\mathcal{D}} = 
\prod\mathrmbfit{D}$
maps the table 
$\mathcal{T}_{i} 
\in \mathrmbf{Tbl}(\mathcal{D}_{i})$
to the table
$\widetilde{\mathcal{T}}_{i}
= {\langle{K_{i},\tilde{t}_{i}}\rangle} 
\in \mathrmbf{Tbl}({\widetilde{\mathcal{D}}})$
with its tuple function
$K_{i} \xrightarrow{\tilde{t}_{i}} 
\mathrmbfit{tup}(\widetilde{\mathcal{D}})$
defined by composition, 
$t_{i}{\,\cdot\,}{\mathrmbfit{tup}(\tilde{h}_{i},\tilde{f}_{i},\tilde{g}_{i})}
= \tilde{t}_{i}$. 
Here we have ``horizontally abridged'' and then ``vertically extended'' 
tuples in $\mathrmbfit{tup}(\mathcal{D}_{i}) \subseteq \mathrmbf{List}(Y_{i})$ 
by composition 
along the tuple function
{\footnotesize{$
\mathrmbfit{tup}(\mathcal{D}_{i})
\xrightarrow
{\mathrmbfit{tup}(\hat{h}_{i},\hat{f}_{i},\hat{g}_{i})}
\mathrmbfit{tup}(\widehat{\mathcal{D}})
$}\normalsize}
(see LHS Fig.\;\ref{fig:flow:sign:dom}).
This is linked 
to the table $\mathcal{T}_{i}$ 
by the table morphism 
{\footnotesize{{$
\mathcal{T}_{i} = {\langle{K_{i},t_{i}}\rangle}
\xrightarrow{{\langle{{\langle{\tilde{h}_{i},\tilde{f}_{i},\tilde{g}_{i}}\rangle},1_{K_{i}}}\rangle}}
{\langle{K_{i},\tilde{t}_{i}}\rangle} 
= \widetilde{\mathcal{T}}_{i}$.}}\normalsize}
\newline
\item 
Union (\S\,\ref{sub:sub:sec:boole})
of the tables $\{ \widetilde{\mathcal{T}}_{i} \mid i \in I \}$
in the fiber context $\mathrmbf{Tbl}(\widetilde{\mathcal{D}})$
defines the generic join
$\coprod\mathrmbfit{T}
= \widetilde{\mathrmbfit{T}} 
= \bigvee \bigl\{\widetilde{\mathcal{T}}_{i} \mid i \in I \bigr\}
= \bigvee_{i \in I} 
\acute{\mathrmbfit{tbl}}({\tilde{h}_{i},\tilde{f}_{i},\tilde{g}_{i}})(\mathcal{T}_{i})
= {\langle{\widetilde{K},\tilde{t}}\rangle}$,
whose key set 
is the disjoint union 
$\widetilde{K} = {+}\{K_{i}\mid i \in I \}$ 
and 
whose tuple map is the mediating function
$\widetilde{K}\xrightarrow{{[\,\tilde{t}_{i}]}}
\mathrmbfit{tup}(\widetilde{\mathcal{D}})$
of the multi-opspan
$\bigl\{ K_{i}\xrightarrow{\tilde{t}_{i}}
\mathrmbfit{tup}(\widetilde{\mathcal{D}}) \mid i \in I \bigr\}$,
resulting in the discrete multi-opspan (cocone)
{\footnotesize{{$\Bigl\{ 
\widetilde{\mathcal{T}}_{i} 
\xrightarrow{\;\tilde{\iota}_{i}\;} 
\widetilde{\mathrmbfit{T}} 
\mid i \in I \Bigr\}$.}}\normalsize}
\newline
\end{itemize}
Projection-expansion composed with join 
defines the multi-opspan of table morphisms
\[\mbox{\footnotesize{{$\bigl\{ 
\mathcal{T}_{i}
\xrightarrow
[\;
{\langle{
{\langle{\tilde{h}_{i},\tilde{f}_{i},\tilde{g}_{i}}\rangle},1_{K_{i}}
}\rangle}
{\circ\,}\tilde{\iota}_{i}\;]
{\;{\langle{
{\langle{\tilde{h}_{i},\tilde{f}_{i},\tilde{g}_{i}}\rangle},\tilde{k}_{i}
}\rangle}\;} 
\coprod\mathrmbfit{T} = \widetilde{\mathrmbfit{T}}
\mid i \in I \bigr\}$,}
}\normalsize}\]
illustrated in 
Fig.\;\ref{fig:fole:generic:join},
which is the output for generic join 
(Tbl.\,\ref{tbl:fole:generic:join:input:output}).
%
\end{description}
Generic join is projection/expansion ($i \in I$ times) followed by join.
This is the two-step process
%
\[\mbox{\footnotesize{{$
\coprod\mathrmbfit{T} 
\doteq
\bigvee\;
\Bigl\{  
\acute{\mathrmbfit{tbl}}({\tilde{h}_{i},\tilde{f}_{i},\tilde{g}_{i}})(\mathcal{T}_{i}) 
\mid i \in I \Bigr\}$.}}\normalsize}\]
%
\begin{aside}
Theoretically
this would represent the colimit
of a diagram
$\mathrmbf{I}\xrightarrow{\mathrmbfit{T}}\mathrmbf{Tbl}$ 
consisting of a linked collection of tables.
But practically,
we are only given the constraint (a diagram)
$\mathrmbf{I}^{\text{op}}\xrightarrow{\mathrmbfit{D}}\mathrmbf{Dom}$ 
consisting of a linked collection of signed domains 
$\bigl\{ 
\mathcal{D}_{i} = \mathrmbfit{D}(i)
\mid i \in \mathrmbf{I} \bigr\}$
and the input 
$I\xrightarrow{\mathrmbfit{T}}\mathrmbf{Tbl}$ 
consisting of a \underline{sufficient} indexed collection of tables 
(Def.\;\ref{def:suff:adequ:colim})
{\footnotesize{$
\bigl\{ 
\mathcal{T}_{i} =
\mathrmbfit{T}(i) \in \mathrmbf{Tbl}(\mathcal{D}_{i}) 
\mid i \in I \subseteq obj(\mathrmbf{I}) \bigr\}$}}.
\end{aside}

%
\begin{table}
\begin{center}
{{\fbox{\begin{tabular}{c}
\setlength{\extrarowheight}{2pt}
{\scriptsize{$\begin{array}[c]{c@{\hspace{12pt}}l}
\mathcal{D}_{i}
\xrightarrow{{\langle{h,f,g}\rangle}}
\mathcal{D}_{j}
&
\textit{constraint}
\\
\mathcal{D}_{i}
\xleftarrow{{\langle{\tilde{h}_{i},\tilde{f}_{i},\tilde{g}_{i}}\rangle}}
\prod \mathrmbfit{D}
&
\textit{construction}
\\\hline
\bigl\{ 
\mathcal{T}_{i} 
\in \mathrmbf{Tbl}(\mathcal{D}_{i}) 
\mid i \in I \bigr\}
&
\textit{input}
\\
\mathcal{T}_{i}
\xrightarrow
{\;{\langle{
{\langle{\tilde{h}_{i},\tilde{f}_{i},\tilde{g}_{i}}\rangle},\tilde{k}_{i}
}\rangle}\;} 
\coprod \mathrmbfit{T}
&
\textit{output}
\end{array}$}}
\end{tabular}}}}
\end{center}
\caption{\texttt{FOLE} Generic Join I/O}
\label{tbl:fole:generic:join:input:output}
\end{table}
%


%
\begin{figure}
\begin{center}
{{\begin{tabular}{c}
{{\begin{tabular}{c}
\setlength{\unitlength}{0.56pt}
\begin{picture}(320,160)(0,-5)
\put(0,80){\makebox(0,0){\footnotesize{$K_{i}$}}}
\put(100,80){\makebox(0,0){\footnotesize{$K_{i}$}}}
\put(220,80){\makebox(0,0){\footnotesize{$K_{j}$}}}
\put(324,80){\makebox(0,0){\footnotesize{$K_{j}$}}}
\put(160,148){\makebox(0,0){\footnotesize{$\widetilde{K}$}}}
\put(0,0){\makebox(0,0){\footnotesize{$
{\mathrmbfit{tup}(\mathcal{D}_{i})}$}}}
\put(320,0){\makebox(0,0){\footnotesize{$
{\mathrmbfit{tup}(\mathcal{D}_{j})}$}}}
\put(160,0){\makebox(0,0){\footnotesize{$
{\mathrmbfit{tup}(\widehat{\mathcal{D}})}$}}}
\put(55,90){\makebox(0,0){\scriptsize{$1_{K_{i}}$}}}
\put(265,90){\makebox(0,0){\scriptsize{$1_{K_{j}}$}}}
\put(80,-12){\makebox(0,0){\scriptsize{$
\mathrmbfit{tup}(\tilde{h}_{i},\tilde{f}_{i},\tilde{g}_{i})$}}}
\put(240,-12){\makebox(0,0){\scriptsize{$
\mathrmbfit{tup}(\tilde{h}_{j},\tilde{f}_{j},\tilde{g}_{j})$}}}
\put(-6,40){\makebox(0,0)[r]{\scriptsize{$t_{i}$}}}
\put(125,40){\makebox(0,0)[r]{\scriptsize{$\tilde{t}_{i}$}}}
\put(125,116){\makebox(0,0)[r]{\scriptsize{$\tilde{\iota}_{i}$}}}
\put(327,40){\makebox(0,0)[l]{\scriptsize{$t_{j}$}}}
\put(200,40){\makebox(0,0)[l]{\scriptsize{$\tilde{t}_{j}$}}}
\put(195,115){\makebox(0,0)[l]{\scriptsize{$\tilde{\iota}_{j}$}}}
\put(64,128){\makebox(0,0)[r]{\scriptsize{$\tilde{\iota}_{i}$}}}
\put(260,128){\makebox(0,0)[l]{\scriptsize{$\tilde{\iota}_{j}$}}}
\put(0,65){\vector(0,-1){50}}
\put(320,65){\vector(0,-1){50}}
\put(105,65){\vector(1,-1){50}}
\put(215,65){\vector(-1,-1){50}}
\put(105,90){\vector(1,1){45}}
\put(210,90){\vector(-1,1){45}}
\put(20,80){\vector(1,0){60}}
\put(300,80){\vector(-1,0){60}}
\put(35,0){\vector(1,0){85}}
\put(285,0){\vector(-1,0){85}}
\qbezier(20,96)(80,120)(140,144)\put(140,144){\vector(2,1){0}}
\qbezier(180,144)(240,120)(300,96)\put(180,144){\vector(-2,1){0}}
\put(60,45){\makebox(0,0){\huge{
$\overset{\textit{\scriptsize{expand}}}{\Rightarrow}$}}}
\put(160,110){\makebox(0,0){\ldots}}
\put(160,80){\makebox(0,0){${\scriptsize{join}}$}}
\put(250,45){\makebox(0,0){\huge{
$\overset{\textit{\scriptsize{expand}}}{\Leftarrow}$}}}
\end{picture}
\end{tabular}}}
%
\\\\
{{\begin{tabular}{c}
\setlength{\unitlength}{0.7pt}
\begin{picture}(200,70)(-100,60)
\put(-140,57){\makebox(0,0){\footnotesize{${\mathcal{T}_{i}}$}}}
\put(0,120){\makebox(0,0){\footnotesize{$\widetilde{\mathrmbfit{T}}$}}}
\put(140,57){\makebox(0,0){\footnotesize{${\mathcal{T}_{j}}$}}}
\put(-63,57){\makebox(0,0){\footnotesize{$\widetilde{\mathcal{T}}_{i}$}}}
\put(63,57){\makebox(0,0){\footnotesize{$\widetilde{\mathcal{T}}_{j}$}}}
\put(-80,100){\makebox(0,0)[r]{\tiny{$
{\langle{{\langle{\tilde{h}_{i},\tilde{f}_{i},\tilde{g}_{i}}\rangle},
\tilde{\iota}_{i}}\rangle}$}}}
\put(-105,45){\makebox(0,0){\tiny{$
{\langle{{\langle{\tilde{h}_{i},\tilde{f}_{i},\tilde{g}_{i}}\rangle},
1_{K_{i}}}\rangle}$}}}
\put(82,100){\makebox(0,0)[l]{\tiny{$
{\langle{{\langle{\tilde{h}_{j},\tilde{f}_{j},\tilde{g}_{j}}\rangle},
\tilde{\iota}_{j}}\rangle}$}}}
\put(105,45){\makebox(0,0){\tiny{$
{\langle{{\langle{\tilde{h}_{j},\tilde{f}_{j},\tilde{g}_{j}}\rangle},
1_{K_{j}}}\rangle}$}}}
\put(-33,93){\makebox(0,0)[r]{\scriptsize{$\tilde{\iota}_{i}$}}}
\put(35,93){\makebox(0,0)[l]{\scriptsize{$\tilde{\iota}_{j}$}}}
\qbezier(-130,65)(-85,96)(-16,120)\put(-16,120){\vector(2,1){0}}
\qbezier(130,65)(85,96)(16,120)\put(16,120){\vector(-2,1){0}}
\put(-50,70){\vector(1,1){40}}
\put(50,70){\vector(-1,1){40}}
\put(-130,57){\vector(1,0){50}}
\put(130,57){\vector(-1,0){50}}
\put(0,80){\makebox(0,0){\ldots}}
\end{picture}
\end{tabular}}}
\\
\end{tabular}}}
\end{center}
\caption{\texttt{FOLE} Generic Join}
\label{fig:fole:generic:join}
\end{figure}
%

%
\newpage
\section{Unorthodox Composite Operations}
\label{sub:sec:non-trad:ops}
Operations are put into the unorthodox category for various reasons.
Selection and select-join are special cases of a more general concept:
selection is a special case of natural join, and
select-join is a special case of natural multi-join.
Filter join and data-type meet
do not follow 
the dual concepts of either \emph{limit} or \emph{colimit} of category theory.
Subtraction, division and outer-join are complex.
\comment{
The orthodox operations follow 
the dual concepts of eithor \emph{limit} or \emph{colimit} of category theory
(\S\,\ref{sub:sec:lim:colim:tbl}).
The limit concept initially requires inflation followed by meet, and dually
the colimit concept initially requires expansion followed by join.
An orthodox operation may follow these first two steps with reverse flow and basic operations.}
%
Tbl.\,\ref{tbl:fole:comp:rel:ops:unortho}
lists the composite relational operations defined in this section.
\begin{table}
\begin{center}
{{{\begin{tabular}{c}
\setlength{\extrarowheight}{2pt}
{\scriptsize{$\begin{array}[c]
{|@{\hspace{5pt}}r@{\hspace{10pt}}l@{\hspace{5pt}\in\hspace{4pt}}l@{\hspace{5pt}}|}
\hline
\textbf{selection:}
&
\sigma_{\mathcal{R}'}(\mathcal{T}) = 
\grave{\mathrmbfit{tbl}}_{\mathcal{A}}(h)(\mathcal{R}'){\;\wedge\;}\mathcal{T}
&
\mathrmbf{Tbl}(\mathcal{A})
\\
\textbf{select-join:}
&
\mathcal{T}_{1}{\,\hat{\boxtimes}^{\mathcal{R}}_{\mathcal{A}}}\mathcal{T}_{2} = 
\sigma_{\mathcal{R}}\bigl(\mathcal{T}_{1}{\,\boxtimes_{\mathcal{A}}}\mathcal{T}_{2}\bigr)
&
\mathrmbf{Tbl}(\mathcal{A})
\\\hline\hline
\textbf{filtered join:}
&
\mathcal{T}_{1}{\,\varominus_{\mathcal{S}}\,}\mathcal{T}_{2} = 
\grave{\mathrmbfit{tbl}}_{\mathcal{S}}(g_{1})(\mathcal{T}_{1})
{\;\vee\;}
\grave{\mathrmbfit{tbl}}_{\mathcal{S}}(g_{2})(\mathcal{T}_{2})
&
\mathrmbf{Tbl}(\mathcal{S})
\\
\textbf{data-type meet:}
&
\mathcal{T}_{1}{\,\boxbar_{\mathcal{S}}}\mathcal{T}_{2} = 
\acute{\mathrmbfit{tbl}}_{\mathcal{S}}(\tilde{g}_{1})(\mathcal{T}_{1})
{\;\wedge\;}
\acute{\mathrmbfit{tbl}}_{\mathcal{S}}(\tilde{g}_{2})(\mathcal{T}_{2})
&
\mathrmbf{Tbl}(\mathcal{S})
\\\hline\hline
\textbf{subtraction:}
&
\mathcal{T}{\,\thicksim\,}\mathcal{T}_{2} =
\mathcal{T}{\,-\,}\acute{\mathrmbfit{tbl}}_{\mathcal{S}}(g)(\mathcal{T}_{2})
&
\mathrmbf{Tbl}(\mathcal{S})
\\\hline\hline
\textbf{division:}
&
\mathcal{T}{\,\div_{\!\mathcal{A}}\,}\mathcal{T}_{2}
= \acute{\mathrmbfit{tbl}}_{\mathcal{A}}(\iota_{1})(\mathcal{T})
{\,-\,}\acute{\mathrmbfit{tbl}}_{\mathcal{A}}(\iota_{1})
\Bigl(
\bigl(
\acute{\mathrmbfit{tbl}}_{\mathcal{A}}(\iota_{1})(\mathcal{T}){\times}\mathcal{T}_{2}
\bigr)
{\,-\,}\mathcal{T}
\Bigr)
&
\mathrmbf{Tbl}(\mathcal{A})
\\
\textbf{outer-join:}
&
\multicolumn{2}{l|}{
\mathcal{T}_{12} = 
\mathcal{T}_{1}{\,\boxtimes_{\mathcal{A}}}\mathcal{T}_{2} = 
\grave{\mathrmbfit{tbl}}_{\mathcal{A}}(\iota_{1})(\mathcal{T}_{1})
\wedge
\grave{\mathrmbfit{tbl}}_{\mathcal{A}}(\iota_{2})(\mathcal{T}_{2})
\text{ implies }}
\\
&
{\mathcal{T}_{1}}{\,{\rgroup\!\boxtimes}_{\mathcal{A}}}{\mathcal{T}_{2}} =
\acute{\mathrmbfit{tbl}}_{\mathcal{S}}(\tilde{0})(\mathcal{T}_{12})
{\,\vee\,}
\bigl(
\acute{\mathrmbfit{tbl}}_{\mathcal{S}}(\tilde{0})(
\mathcal{T}_{1}{-}\acute{\mathrmbfit{tbl}}_{\mathcal{A}}(\iota_{1})(\mathcal{T}_{12})
)
{\,\times\,}\mathcal{T}_{\bullet}
\bigr)
&
\mathrmbf{Tbl}
\\\hline
\end{array}$}}
\end{tabular}}}}
\end{center}
\caption{\texttt{FOLE} Unorthodox Composite Relational Operations}
\label{tbl:fole:comp:rel:ops:unortho}
\end{table}

%
\comment{
If $\mathcal{T}_{12} =
\mathcal{T}_{1}{\,\boxtimes_{\mathcal{A}}}\mathcal{T}_{2}
=
(
\grave{\mathrmbfit{tbl}}_{\mathcal{A}}(\iota_{1})(\mathcal{T}_{1})
\wedge
\grave{\mathrmbfit{tbl}}_{\mathcal{A}}(\iota_{2})(\mathcal{T}_{2}))$,
\\
then
${\mathcal{T}_{1}}{\,{\rgroup\!\boxtimes}_{\mathcal{A}}}{\mathcal{T}_{2}} =
\acute{\mathrmbfit{tbl}}_{\mathcal{S}}(\iota)(\mathcal{T}_{12})
{\,\vee\,}
\bigl(
\acute{\mathrmbfit{tbl}}_{\mathcal{S}}(\iota)(
\mathcal{T}_{1}{-}(
\acute{\mathrmbfit{tbl}}_{\mathcal{A}}(\iota_{1})(\mathcal{T}_{12})
))
{\,\times\,}\mathcal{T}_{\bullet}
\bigr)$
}

\comment{
\subsection{$\bigstar$ Screened Meet}
\label{sub:sub:sec:screened:meet}

%
\begin{figure}
\begin{center}
{{{\begin{tabular}{c}
\begin{picture}(160,75)(37,27)
\setlength{\unitlength}{0.97pt}
\put(54,65){\begin{picture}(0,0)(0,0)
\setlength{\unitlength}{0.35pt}
\put(10,10){\line(1,0){60}}
\put(10,70){\line(1,0){60}}
\put(10,70){\line(0,-1){60}}
\put(70,40){\oval(60,60)[br]}
\put(70,40){\oval(60,60)[tr]}
\put(55,50){\makebox(0,0){\scriptsize{{\textit{{project}}}}}}
\put(56,30){\makebox(0,0){\Large{${\Rightarrow}$}}}
\end{picture}}
\put(146.5,65){\begin{picture}(0,0)(0,0)
\setlength{\unitlength}{0.35pt}
\put(40,10){\line(1,0){60}}
\put(40,70){\line(1,0){60}}
\put(100,70){\line(0,-1){60}}
\put(40,40){\oval(60,60)[bl]}
\put(40,40){\oval(60,60)[tl]}
\put(58,50){\makebox(0,0){\scriptsize{{\textit{{project}}}}}}
\put(56,30){\makebox(0,0){\Large{${\Leftarrow}$}}}
\end{picture}}
\put(98,37){\begin{picture}(0,0)(0,3)
\setlength{\unitlength}{0.35pt}
\put(60,30){\makebox(0,0){\normalsize{$\wedge$}}}
\put(40,10){\line(1,0){40}}
\put(10,70){\line(1,0){100}}
\put(10,70){\line(0,-1){30}}
\put(110,70){\line(0,-1){30}}
\put(40,40){\oval(60,60)[bl]}
\put(80,40){\oval(60,60)[br]}
\put(60,55){\makebox(0,0){\scriptsize{{\textit{{meet}}}}}}
\end{picture}}
\put(120,100){\makebox(0,0){\footnotesize{{\textit{{screened meet}}}}}}
\put(120,88){\makebox(0,0){\large{$\varominus$}}}
\put(38,80){\line(0,1){20}}
\put(38,80){\vector(1,0){20}}
\put(110,80){\line(-1,0){20}}
\put(110,80){\vector(0,-1){21}}
\put(120,38){\vector(0,-1){15}}
\put(130,80){\vector(0,-1){21}}
\put(130,80){\line(1,0){20}}
\put(203,80){\line(0,1){20}}
\put(203,80){\vector(-1,0){20}}
\end{picture}
\end{tabular}}}}
\end{center}
\caption{\texttt{FOLE} Screened Meet Chart}
\label{fig:fole:screened:meet:flo:chrt}
\end{figure}
Let $\mathcal{A}$ be a type domain.
}

\comment{
\newpage
\subsection{$\bigstar$ Disjoint Union.}
\label{sub:sub:sec:disjoint:union}

%
\begin{figure}
\begin{center}
{{{\begin{tabular}{c}
\begin{picture}(160,75)(37,27)
\setlength{\unitlength}{0.97pt}
\put(54,65){\begin{picture}(0,0)(0,0)
\setlength{\unitlength}{0.35pt}
\put(10,10){\line(1,0){60}}
\put(10,70){\line(1,0){60}}
\put(10,70){\line(0,-1){60}}
\put(70,40){\oval(60,60)[br]}
\put(70,40){\oval(60,60)[tr]}
\put(55,50){\makebox(0,0){\scriptsize{{\textit{{inflate}}}}}}
\put(56,30){\makebox(0,0){\Large{${\Rightarrow}$}}}
\end{picture}}
\put(146.5,65){\begin{picture}(0,0)(0,0)
\setlength{\unitlength}{0.35pt}
\put(40,10){\line(1,0){60}}
\put(40,70){\line(1,0){60}}
\put(100,70){\line(0,-1){60}}
\put(40,40){\oval(60,60)[bl]}
\put(40,40){\oval(60,60)[tl]}
\put(58,50){\makebox(0,0){\scriptsize{{\textit{{inflate}}}}}}
\put(56,30){\makebox(0,0){\Large{${\Leftarrow}$}}}
\end{picture}}
\put(98,37){\begin{picture}(0,0)(0,3)
\setlength{\unitlength}{0.35pt}
\put(60,30){\makebox(0,0){\normalsize{$\vee$}}}
\put(40,10){\line(1,0){40}}
\put(10,70){\line(1,0){100}}
\put(10,70){\line(0,-1){30}}
\put(110,70){\line(0,-1){30}}
\put(40,40){\oval(60,60)[bl]}
\put(80,40){\oval(60,60)[br]}
\put(60,55){\makebox(0,0){\scriptsize{{\textit{{join}}}}}}
\end{picture}}
\put(120,100){\makebox(0,0){\footnotesize{{\textit{{disjoint union}}}}}}
\put(120,88){\makebox(0,0){\large{$\sqcup$}}}
\put(38,80){\line(0,1){20}}
\put(38,80){\vector(1,0){20}}
\put(110,80){\line(-1,0){20}}
\put(110,80){\vector(0,-1){21}}
\put(120,38){\vector(0,-1){15}}
\put(130,80){\vector(0,-1){21}}
\put(130,80){\line(1,0){20}}
\put(203,80){\line(0,1){20}}
\put(203,80){\vector(-1,0){20}}
\end{picture}
\end{tabular}}}}
\end{center}
\caption{\texttt{FOLE} Disjoint Union Flow Chart}
\label{fig:fole:disjoint:union:flo:chrt}
\end{figure}
{{\textbf{This doea not work! What you instead get is all tuples!}}}
\newline
{{\textbf{There should be an injective morphism into the disjoint union.}}}
\newline
Let $\mathcal{A}$ be a type domain.
For a pair of tables
$\mathcal{T}_{1} 
\in \mathrmbf{Tbl}_{\mathcal{A}}(\mathcal{S}_{1})$
and
$\mathcal{T}_{2} 
\in \mathrmbf{Tbl}_{\mathcal{A}}(\mathcal{S}_{2})$,
the disjoint union of the two tables 
combines the tuples together in a disjoint union.
This is accomplished by forming the 
coproduct of the two signatures
$\mathcal{S}_{1}$
and
$\mathcal{S}_{2}$,
inflating the two tables,
and then forming the join.
\begin{description}
\item[Constraint:] 
Consider any two $X$-signatures $\mathcal{S}_{1}$ and $\mathcal{S}_{1}$.
These have the span of $X$-signatures
{\footnotesize{$\mathcal{S}_{1}\xleftarrow{0_{I_{1}}} 
\mathcal{S}_{\bot}
\xrightarrow{0_{I_{2}}} \mathcal{S}_{2}$}\normalsize}
with initial $X$-signature
$\mathcal{S}_{\bot} = {\langle{\emptyset,0_{X},X}\rangle}$
in $\mathrmbf{List}(X)$,
and
injection index functions
{\footnotesize{$I_{1}\xhookleftarrow{0_{I_{1}}}
\emptyset
\xhookrightarrow{0_{I_{2}}}I_{2}$.}}
This is the constraint for disjoint union (Tbl.\,\ref{tbl:fole:disjoint:union:input:output}).
\newline
\item[Construction:] 
The pushout of this constraint
in $\mathrmbf{List}(X)$
is the opspan
{\footnotesize{$\mathcal{S}_{1}\xhookrightarrow{\iota_{1}\,} 
{\mathcal{S}_{1}{+\,}\mathcal{S}_{2}}
\xhookleftarrow{\;\iota_{2}}\mathcal{S}_{2}$}\normalsize}
of injection $X$-signature morphisms
with pushout signature
$\mathcal{S}_{1}{+\,}\mathcal{S}_{2}$
and
index function opspan
{\footnotesize{$I_{1} 
\xhookrightarrow{\iota_{1}\,} 
{\langle{I_{1}{+\,}I_{2},[s_{1},s_{2}]}\rangle}
\xhookleftarrow{\;\iota_{2}}I_{2}.$}\normalsize}
This is the construction for disjoint union (Tbl.\,\ref{tbl:fole:disjoint:union:input:output}).
\newline
\item[Input:] 
Consider a pair of tables
$\mathcal{T}_{1} = {\langle{K_{1},t_{1}}\rangle} \in 
\mathrmbf{Tbl}_{\mathcal{A}}(\mathcal{S}_{1})$
and
$\mathcal{T}_{2} = {\langle{K_{2},t_{2}}\rangle} \in 
\mathrmbf{Tbl}_{\mathcal{A}}(\mathcal{S}_{2})$.
This is the input for disjoint union 
(Tbl.\,\ref{tbl:fole:disjoint:union:input:output}).
\newpage
\item[Output:] \mbox{}
\begin{itemize}
\item 
Inflation 
{\footnotesize{$
\mathrmbf{Tbl}_{\mathcal{A}}(\mathcal{S}_{1})
{\;\xrightarrow
{\;\grave{\mathrmbfit{tbl}}_{\mathcal{A}}(\iota_{1})\;}\;}
\mathrmbf{Tbl}_{\mathcal{A}}({\mathcal{S}{+\,}\mathcal{S}_{2}})
$}\normalsize}
(\S\,\ref{sub:sub:sec:adj:flow:A})
along the tuple function 
of the $X$-signature morphism
{\footnotesize{$\mathcal{S}_{1}\xhookrightarrow{\iota_{1}\,} 
{\mathcal{S}_{1}{\!+\,}\mathcal{S}_{2}}$}\normalsize}
maps the table $\mathcal{T}_{1}$
to the $\mathcal{A}$-table
$\widehat{\mathcal{T}}_{1}
= \grave{\mathrmbfit{tbl}}_{\mathcal{A}}(\iota_{1})(\mathcal{T}_{1})
= {\langle{\widehat{K}_{1},\hat{t}_{1}}\rangle} 
\in \mathrmbf{Tbl}_{\mathcal{A}}({\mathcal{S}_{1}{\!+\,}\mathcal{S}_{2}})$,
with its tuple function
$\widehat{K} \xrightarrow{\hat{t}} 
\mathrmbfit{tup}_{\mathcal{A}}({\mathcal{S}_{1}{\!+\,}\mathcal{S}_{2}})$
defined by pullback,
$\grave{k}_{1}{\,\cdot\,}t_{1} 
= \hat{t}_{1}{\,\cdot\,}\mathrmbfit{tup}_{\mathcal{A}}(\iota_{1})$. 
This is linked to the table $\mathcal{T}_{1}$ 
by the $\mathcal{A}$-table morphism 
{\footnotesize{{$
\mathcal{T}_{1} = {\langle{\mathcal{S}_{1},K_{1},t_{1}}\rangle}
\xleftarrow{{\langle{\iota_{1},\grave{k}_{1}}\rangle}} 
{\langle{{\mathcal{S}_{1}{\!+\,}\mathcal{S}_{2}},\widehat{K}_{1},\hat{t}_{1}}\rangle} 
= \widehat{\mathcal{T}}_{1}
$.}}\normalsize}
Similarly for $\mathcal{A}$-table
$\mathcal{T}_{2} = {\langle{K_{2},t_{2}}\rangle} \in 
\mathrmbf{Tbl}_{\mathcal{A}}(\mathcal{S}_{2})$.
\newline
\item 
Union (\S\,\ref{sub:sub:sec:boole})
of the two inflation tables $\widehat{\mathcal{T}}_{1}$
and $\widehat{\mathcal{T}}_{2}$
in the fiber context 
$\mathrmbf{Tbl}_{\mathcal{A}}({\mathcal{S}_{1}{\!+\,}\mathcal{S}_{2}})$
defines the disjoint union table
${\mathcal{T}_{1}}{\,\sqcup\,}{\mathcal{T}_{2}}
=
\widehat{\mathcal{T}}_{1}\vee\widehat{\mathcal{T}}_{2} 
= {\langle{\widehat{K}_{1}{+}\widehat{K}_{2},{[\hat{t}_{1},\hat{t}_{2}]}}\rangle}$,
whose key set $\widehat{K}_{1}{+}\widehat{K}_{2}$ 
is the disjoint union in $\mathrmbf{Set}$
and whose tuple map
$\widehat{K}_{1}{+}\widehat{K}_{2}\xrightarrow{[\hat{t}_{1},\hat{t}_{2}]}\mathrmbfit{tup}_{\mathcal{A}}(\mathcal{S}_{1}{\!+\,}\mathcal{S}_{2})$
is the comediator of the opspan
$\widehat{K}_{1}\xrightarrow{\hat{t}_{1}} 
\mathrmbfit{tup}_{\mathcal{A}}({\mathcal{S}_{1}{\!+\,}\mathcal{S}_{2}})
\xleftarrow{\hat{t}_{2}}\widehat{K}_{2}$,
resulting in the opspan 
{\footnotesize{{$
\widetilde{\mathcal{T}}_{1}
\xrightarrow{\;
\check{\iota}_{1}\;} 
\mathcal{T}_{1}{\,\sqcup_{\mathcal{A}}}\mathcal{T}_{2}
\xleftarrow{\;\check{\iota}_{2}\;} 
\widetilde{\mathcal{T}}_{2}
$.}}\normalsize}
Inflation composed with union 
defines the (M-shaped) multi-span of table morphisms
\[\mbox
{\footnotesize{$
\mathcal{T}_{1} 
\xleftarrow{{\langle{\iota_{1},\grave{k}_{1}}\rangle}} 
\widehat{\mathcal{T}}_{1} 
\xrightarrow{\;\check{\iota}_{1}\;} 
\mathcal{T}_{1}{\;\sqcup_{\mathcal{S}}\;}\mathcal{T}_{2}
\xleftarrow{\;\check{\iota}_{2}\;}
\widehat{\mathcal{T}}_{2}
\xrightarrow{{\langle{\iota_{2},\grave{k}_{2}}\rangle}} 
\mathcal{T}_{2}
$,}\normalsize}
\]
which is the output for disjoint union 
(Tbl.\,\ref{tbl:fole:disjoint:union:input:output}).
\end{itemize}
\end{description}
The disjoint union flowchart input/output is displayed in 
Tbl.\,\ref{tbl:fole:disjoint:union:input:output}.
Disjoint union 
is inflation followed by join.
This is the two-step process 
\newline\mbox{}\hfill
\rule[-10pt]{0pt}{26pt}
$\mathcal{T}_{1}{\,\sqcup}_{\mathcal{S}}\mathcal{T}_{2}
\doteq
\grave{\mathrmbfit{tbl}}_{\mathcal{A}}(\iota_{1})(\mathcal{T}_{1})
{\;\vee\;}
\grave{\mathrmbfit{tbl}}_{\mathcal{A}}(\iota_{2})(\mathcal{T}_{2})$.
%
{\footnote{
The disjoint union is the other table, if one of the tables is empty.}} 
%
\hfill\mbox{}\newline


%
\begin{table}
\begin{center}
{{\fbox{\begin{tabular}{c}
\setlength{\extrarowheight}{2pt}
{\scriptsize{$\begin{array}[c]{c@{\hspace{12pt}}l}
\mathcal{S}_{1}\text{ and }\mathcal{S}_{2}
&
\textit{constraint}
\\
\mathcal{S}_{1} \xhookrightarrow{\iota_{1}\,} 
{\mathcal{S}_{1}{\!+\,}\mathcal{S}_{2}}
\xhookleftarrow{\;\iota_{2}}\mathcal{S}_{2}
&
\textit{construction}
\\
\hline
\mathcal{T}_{1}\in\mathrmbf{Tbl}_{\mathcal{A}}(\mathcal{S}_{1})
\text{ and }
\mathcal{T}_{2}\in\mathrmbf{Tbl}_{\mathcal{A}}(\mathcal{S}_{2})
&
\textit{input}
\\
\mathcal{T}_{1}\xleftarrow{{\langle{\iota_{1},\grave{k}_{1}}\rangle}} 
\widehat{\mathcal{T}}_{1} \xrightarrow{\;\check{\iota}_{1}\;} 
\mathcal{T}_{1}{\;\sqcup_{\mathcal{S}}\;}\mathcal{T}_{2}
\xleftarrow{\;\check{\iota}_{2}\;}\widehat{\mathcal{T}}_{2}
\xrightarrow{{\langle{\iota_{2},\grave{k}_{2}}\rangle}}\mathcal{T}_{2}
&
\textit{output}
\end{array}$}}
\end{tabular}}}}
\end{center}
\caption{\texttt{FOLE} Disjoint Union I/O}
\label{tbl:fole:disjoint:union:input:output}
\end{table}
%
}

%
\newpage
\subsection{Selection.}\label{sub:sub:sec:sel}
%
\comment{Here we define selection 
in the fiber context $\mathrmbf{Tbl}(\mathcal{A})$.
We can define selection 
in the more general context $\mathrmbf{Tbl}(\mathcal{A})$
by prefacing with a restriction step.}
\comment{\label{binary:ops}
Examples\;\ref{bin:op:bnd} and \ref{bin:op:gph} are special cases.
The binary operations in the set 
$\{\,<,\leq ,=,\neq ,\geq ,\;>\}$
can be handled in a similar fashion to these examples.}
%
%
\begin{figure}
\begin{center}
{{{\begin{tabular}{c}
\begin{picture}(120,70)(40,28)
\setlength{\unitlength}{0.97pt}
\put(54,65){\begin{picture}(0,0)(0,0)
\setlength{\unitlength}{0.35pt}
\put(10,10){\line(1,0){60}}
\put(10,70){\line(1,0){60}}
\put(10,70){\line(0,-1){60}}
\put(70,40){\oval(60,60)[br]}
\put(70,40){\oval(60,60)[tr]}
\put(55,50){\makebox(0,0){\scriptsize{{\textit{{inflate}}}}}}
\put(56,30){\makebox(0,0){\Large{${\Rightarrow}$}}}
\end{picture}}
\put(98,37.5){\begin{picture}(0,0)(0,3)
\setlength{\unitlength}{0.35pt}
\put(60,33){\makebox(0,0){\normalsize{$\wedge$}}}
\put(40,10){\line(1,0){40}}
\put(10,70){\line(1,0){100}}
\put(10,70){\line(0,-1){30}}
\put(110,70){\line(0,-1){30}}
\put(40,40){\oval(60,60)[bl]}
\put(80,40){\oval(60,60)[br]}
\put(60,55){\makebox(0,0){\scriptsize{{\textit{{meet}}}}}}
\end{picture}}
\put(120,100){\makebox(0,0){\footnotesize{{\textit{{selection}}}}}}
\put(120,90){\makebox(0,0){\large{$\sigma$}}}
\put(37.5,80){\line(0,1){15}}
\put(37.5,80){\vector(1,0){20}}
\put(110,80){\line(-1,0){20}}\put(110,80){\vector(0,-1){20}}
\put(130,80){\line(1,0){40}}\put(130,80){\vector(0,-1){20}}
\put(170,80){\line(0,1){15}}
\put(120,38){\vector(0,-1){15}}
\end{picture}
\end{tabular}}}}
\end{center}
\caption{\texttt{FOLE} Selection Flow Chart}
\label{fig:fole:select:aux:rel:flo:chrt}
\end{figure}
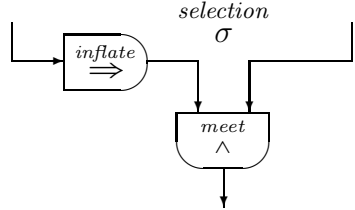
Let $\mathcal{A}$ be a fixed type domain.
General selection 
%
\footnote{\label{aux:rel}
Selection uses auxiliary relations.
See \S\,\ref{sub:sec:tbl:rel}
for some examples of auxiliary relations.}
%
is a binary operation 
$\sigma _{\mathcal{R}'}(\mathcal{T})$,
where $\mathcal{T}$ is a table
and $\mathcal{R}'$ is a relation
that may represent a propositional formula 
on sorts in the header of $\mathcal{T}$.
The relation $\mathcal{R}'$ might consists of atoms 
${n_{0} \theta n_{1}}$ or ${n \theta \hat{n}}$
as in the examples in \S\,\ref{sub:sec:tbl:rel},
plus 
the Boolean operators $\wedge$ (and), $\vee$ (or) and $\neg$ (negation). 
General selection selects all those tuples in $\mathcal{T}$ for which $\mathcal{R}'$ holds.
More precisely,
the selection $\sigma_{\mathcal{R}'}(\mathcal{T})$ 
denotes all tuples in $\mathcal{T}$ whose projection in
$\acute{\mathrmbfit{tbl}}_{\mathcal{A}}(h)(\mathcal{T})$
is also in $\mathcal{R}'$; 
equivalently,
a tuple is in the selection $\sigma_{\mathcal{R}'}(\mathcal{T})$ 
when it is 
in both $\mathcal{T}$ 
and the inflation $\grave{\mathrmbfit{tbl}}_{\mathcal{A}}(h)(\mathcal{R}')$.
%
\footnote{Selection is defined by using the fiber adjunction of tables
(Disp.\;\ref{def:fbr:adj:sign:mor} of \S\,\ref{sub:sub:sec:adj:flow:A})
of the $X$-sorted signature morphism 
$\mathcal{S}'\xrightarrow{\;h\;}\mathcal{S}$.
%
The left adjoint fiber passage
$\mathrmbf{Tbl}_{\mathcal{A}}(\mathcal{S}')
\xleftarrow
{\;\acute{\mathrmbfit{tbl}}_{\mathcal{A}}(h)\;}
\mathrmbf{Tbl}_{\mathcal{A}}(\mathcal{S})$
defines projection by composition.
The projection
$\mathcal{T}_{h} =
\acute{\mathrmbfit{tbl}}_{\mathcal{A}}(h)(\mathcal{T}) = 
{\langle{
K,t{\,\cdot\,}\mathrmbfit{tup}_{\mathcal{A}}(h)}\rangle}$
consists of 
the columns of table $\mathcal{T}$ 
under sub-header $\mathcal{S}'$;
elements of the set 
$\mathrmbfit{tup}_{\mathcal{A}}(\mathcal{S}')$.
Some of these may be in $R$, and 
some in the complement $\mathrmbfit{tup}_{\mathcal{A}}(\mathcal{S}'){{-}}{R}$.
%
The purpose of selection is to choose those tuples of $\mathcal{T}$
whose $\mathcal{S}'$-component is in $R$.}
%
Selection, within the context $\mathrmbf{Tbl}(\mathcal{A})$,
is inflation followed by meet.
%
\begin{description}
\item[Constraint:] 
Consider 
an $X$-sorted signature morphism 
$\mathcal{S}'\xrightarrow{\;h\;}\mathcal{S}$
in $\mathrmbf{List}(X)$
consisting of an index function
$I'\xrightarrow{\;h\;}I$.
%
This forms a (trivial) $X$-sorted signature span 
$\mathcal{S}'\xleftarrow{{1}}\mathcal{S}'\xrightarrow{h}\mathcal{S}$,
which is a special case of 
the constraint for natural join (Tbl.\,\ref{tbl:fole:natural:join:input:output}).
This is the constraint for selection 
(Tbl.\;\ref{tbl:fole:selection:input:output}).
%
\newline
\item[Construction:] 
The 
pushout of this constraint
in $\mathrmbf{List}(X)$
is the (trivial) opspan
{\footnotesize{$\mathcal{S}'\xrightarrow{\,h\,} 
\mathcal{S}
\xleftarrow{\;1}\mathcal{S}$}\normalsize}
of injection $X$-signature morphisms
with pushout signature
$\mathcal{S}'{+_{\mathcal{S}'}}\mathcal{S}=\mathcal{S}$,
%
which is a special case of the construction for natural join 
(Tbl.\,\ref{tbl:fole:natural:join:input:output}).
This is the construction for selection 
(Tbl.\;\ref{tbl:fole:selection:input:output}).
%
\newline
\item[Input:] 
Consider
a relation${}^{\ref{aux:rel}}$ 
$\mathcal{R}' = {\langle{R',i'}\rangle} \in \mathrmbf{Rel}_{\mathcal{A}}(\mathcal{S}')$
and a table
$\mathcal{T} = {\langle{K,t}\rangle} \in \mathrmbf{Tbl}_{\mathcal{A}}(\mathcal{S})$.
This is the input for selection
(Tbl.\;\ref{tbl:fole:selection:input:output}), 
a special case of the input for natural join 
(Tbl.\,\ref{tbl:fole:natural:join:input:output}),
since any relation is a table by reflection \S\,\ref{sub:sub:sec:reflect}.
\newline
\item[Output:] 
The output is in inflation followed by meet.
\newline
\begin{itemize}
\item 
Inflation 
{\footnotesize{$
\mathrmbf{Tbl}_{\mathcal{A}}(\mathcal{S}')
{\;\xrightarrow
{\;\grave{\mathrmbfit{tbl}}_{\mathcal{A}}(h)\;}\;}
\mathrmbf{Tbl}_{\mathcal{A}}(\mathcal{S})
$}\normalsize}
(\S\,\ref{sub:sub:sec:adj:flow:A})
along the tuple function 
of the $X$-signature morphism
{\footnotesize{$\mathcal{S}'\xrightarrow{h\,}\mathcal{S}$}\normalsize}
maps the relation $\mathcal{R}'$
to the $\mathcal{A}$-table
$\widehat{\mathcal{R}}'
= \grave{\mathrmbfit{tbl}}_{\mathcal{A}}(h)(\mathcal{R}')
= {\langle{\widehat{R}',\hat{i}'}\rangle} 
\in \mathrmbf{Tbl}_{\mathcal{A}}(\mathcal{S})$
with its tuple function
$\widehat{R}' \xrightarrow{\hat{i}'} 
\mathrmbfit{tup}_{\mathcal{A}}(\mathcal{S})$
defined by pullback,
$\grave{k}{\,\cdot\,}i' 
= \hat{i}'{\,\cdot\,}\mathrmbfit{tup}_{\mathcal{A}}(h)$. 
This is linked to the relation $\mathcal{R}'$ 
by the $\mathcal{A}$-table morphism 
{\footnotesize{{$
\mathcal{R}' = {\langle{\mathcal{S}',R',i'}\rangle}
\xleftarrow{{\langle{h,\grave{k}}\rangle}} 
{\langle{\mathcal{S},\widehat{R}',\hat{i}'}\rangle} 
= \widehat{\mathcal{R}}'
$.}}\normalsize}
Inflation of table $\mathcal{T}$ along the identity signature morphism is identity.
\newline
%
\item 
Intersection (\S\,\ref{sub:sub:sec:boole})
of the two inflation tables $\widehat{\mathcal{R}}'$
and $\mathcal{T}$
in the context 
$\mathrmbf{Tbl}_{\mathcal{A}}(\mathcal{S})$
defines the selection table
$\sigma_{\mathcal{R}'}(\mathcal{T})
\doteq \widehat{\mathcal{R}}'{\;\wedge\;}\mathcal{T}$, 
whose key set $\widehat{R}$ is the pullback and 
whose tuple map is the mediating function 
$\widehat{R} 
\xrightarrow{{(\hat{\iota}',t)}}
\mathrmbfit{tup}_{\mathcal{A}}(\mathcal{S})$
of the opspan
$\widehat{R}'\xrightarrow{\hat{i}'}
\mathrmbfit{tup}_{\mathcal{A}}(\mathcal{S})
\xleftarrow{t}\widehat{K}$,
resulting in the span 
{\footnotesize{
{$
\widehat{\mathcal{R}}'
\xleftarrow{\;\hat{\pi}_{1}\;} 
\sigma_{\mathcal{R}'}(\mathcal{T})
\xrightarrow{\;\hat{\pi}_{2}\;} 
\mathcal{T}
$.}}\normalsize}
\newline
\end{itemize}
Inflation composed with meet 
defines the span of $\mathcal{A}$-table morphisms
%
\[\mbox{\footnotesize{
{$\mathcal{R}'
\xleftarrow[\;\hat{\pi}_{1}{\circ\,}{\langle{h,\grave{k}}\rangle}\;]
{\;{\langle{h,\hat{k}}\rangle}\;} 
\sigma_{\mathcal{R}'}(\mathcal{T})
\xrightarrow{\;\hat{\pi}_{2}\;} 
\mathcal{T}
$,}}\normalsize}\]
%
which is the output for selection.
%
%
This is a special case 
$\sigma_{\mathcal{R}'}(\mathcal{T})  
= {\mathcal{R}'}{\,\boxtimes_{\mathcal{A}}}{\mathcal{T}}$
of the output for natural join 
(Tbl.\,\ref{tbl:fole:natural:join:input:output}).
%
\end{description}
The selection flowchart input/output is displayed in 
Tbl.\,\ref{tbl:fole:selection:input:output}.
Selection within the context $\mathrmbf{Tbl}(\mathcal{A})$
is inflation followed by meet.
This is a two-step process
\newline\mbox{}\hfill
\rule[-10pt]{0pt}{26pt}
$\sigma_{\mathcal{R}'}(\mathcal{T}) 
\doteq 
\grave{\mathrmbfit{tbl}}_{\mathcal{A}}(h)(\mathcal{R}'){\;\wedge\;}\mathcal{T}$. 
\hfill\mbox{}\newline
%

\comment{
\begin{center}
{{\begin{tabular}{c}
{{\begin{tabular}{c}
\setlength{\unitlength}{0.65pt}
\begin{picture}(240,100)(0,-10)
\put(5,80){\makebox(0,0){\footnotesize{$R'$}}}
\put(122,80){\makebox(0,0){\footnotesize{$\widehat{R}'$}}}
\put(242,80){\makebox(0,0){\footnotesize{$R$}}}
\put(120,0){\makebox(0,0){\footnotesize{$\mathrmbfit{tup}_{\mathcal{A}}(\mathcal{S})$}}}
\put(0,0){\makebox(0,0){\footnotesize{$\mathrmbfit{tup}_{\mathcal{A}}(\mathcal{s}')$}}}
\put(240,0){\makebox(0,0){\footnotesize{$K$}}}
\put(63,90){\makebox(0,0){\scriptsize{$\grave{k}$}}}
\put(183,90){\makebox(0,0){\scriptsize{$\hat{\imath}'$}}}
\put(60,-12){\makebox(0,0){\scriptsize{$\mathrmbfit{tup}_{\mathcal{A}}(h)$}}}
\put(195,-12){\makebox(0,0){\scriptsize{$t$}}}
\put(-6,40){\makebox(0,0)[r]{\scriptsize{$i'$}}}
\put(128,40){\makebox(0,0)[l]{\scriptsize{$\hat{i}'$}}}
\put(248,40){\makebox(0,0)[l]{\scriptsize{$\hat{\imath}$}}}
\put(0,62){\vector(0,-1){50}}\put(4,62){\oval(8,8)[t]}
\put(120,62){\vector(0,-1){50}}\put(124,62){\oval(8,8)[t]}
\put(240,62){\vector(0,-1){50}}\put(244,62){\oval(8,8)[t]}
\put(100,80){\vector(-1,0){80}}
\put(80,0){\vector(-1,0){40}}
\put(220,80){\vector(-1,0){80}}
\put(225,0){\vector(-1,0){60}}
\qbezier(32,22)(26,22)(20,22)
\qbezier(32,22)(32,16)(32,10)
\qbezier(152,22)(146,22)(140,22)
\qbezier(152,22)(152,16)(152,10)
%
\put(60,45){\makebox(0,0){\huge{
$\overset{\textit{\scriptsize{inflate}}}{\Rightarrow}$}}}
\put(180,40){\makebox(0,0){\footnotesize{\emph{meet}}}}
\end{picture}
\end{tabular}}}
\\\\\\
{{\begin{tabular}{c}
\setlength{\unitlength}{0.6pt}
\begin{picture}(180,70)(-100,60)
\put(-130,60){\makebox(0,0){\footnotesize{$\mathcal{R}'$}}}
\put(-60,60){\makebox(0,0){\footnotesize{$\widehat{\mathcal{R}}'$}}}
\put(0,120){\makebox(0,0){\footnotesize{$\sigma_{\mathcal{R}}(\mathcal{T})$}}}
\put(63,60){\makebox(0,0){\footnotesize{$\mathcal{T}$}}}
\put(-90,70){\makebox(0,0){\scriptsize{${\langle{h,k'}\rangle}$}}}
\put(-37,93){\makebox(0,0)[r]{\scriptsize{$\hat{\iota}'$}}}
\put(37,93){\makebox(0,0)[l]{\scriptsize{$\hat{\iota}$}}}
\put(-70,60){\vector(-1,0){50}}
\put(-10,110){\vector(-1,-1){40}}
\put(10,110){\vector(1,-1){40}}
\end{picture}
\\
\hspace{20pt}in $\mathrmbf{Tbl}(\mathcal{A})$
\\\\
$\sigma_{\mathcal{R}'}(\mathcal{T}) = 
\grave{\mathrmbfit{tbl}}_{\mathcal{A}}(h)(\mathcal{R}'){\;\wedge\;}\mathcal{T}$ 
\end{tabular}}}
\end{tabular}}}
\end{center}
%
}

%
\begin{table}
\begin{center}
{{\fbox{\begin{tabular}{c}
\setlength{\extrarowheight}{2pt}
{\scriptsize{$\begin{array}[c]{c@{\hspace{12pt}}l}
\mathcal{S}'\xleftarrow{{1}}\mathcal{S}'\xrightarrow{h}\mathcal{S}
&
\textit{constraint}
\\
\mathcal{S}'\xrightarrow{\,h\,} 
\mathcal{S}
\xleftarrow{\;1}\mathcal{S}
&
\textit{construction}
\\
\hline
\mathcal{R}' 
\in \mathrmbf{Rel}_{\mathcal{A}}(\mathcal{S}')
\text{ and }
\mathcal{T} 
\in \mathrmbf{Tbl}_{\mathcal{A}}(\mathcal{S})
&
\textit{input}
\\
\mathcal{R}'
\xleftarrow
{\;{\langle{h,\hat{k}}\rangle}\;} 
\sigma_{\mathcal{R}'}(\mathcal{T})
\xrightarrow{\;\hat{\pi}_{2}\;} 
\mathcal{T}
&
\textit{output}
\end{array}$}}
\end{tabular}}}}
\end{center}
\caption{\texttt{FOLE} Selection I/O}
\label{tbl:fole:selection:input:output}
\end{table}

\begin{proposition}\label{select:preserve:join:meet}
Selection $\sigma$ preserves union $\vee$ and intersection $\wedge$.
\end{proposition}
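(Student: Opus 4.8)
The plan is to exploit the identification, already established in \S\,\ref{sub:sub:sec:sel}, of selection as a special case of natural join: for a fixed auxiliary relation $\mathcal{R}'\in\mathrmbf{Rel}_{\mathcal{A}}(\mathcal{S}')$ and the $X$-sorted signature morphism $\mathcal{S}'\xrightarrow{\;h\;}\mathcal{S}$, we have $\sigma_{\mathcal{R}'}(\mathcal{T})=\grave{\mathrmbfit{tbl}}_{\mathcal{A}}(h)(\mathcal{R}')\wedge\mathcal{T}={\mathcal{R}'}{\,\boxtimes_{\mathcal{A}}}{\mathcal{T}}$. Thus $\sigma_{\mathcal{R}'}({-})$ is natural join against a fixed first argument, and the assertion is that it commutes with $\vee$ and $\wedge$ in its table argument $\mathcal{T}$, the relation $\mathcal{R}'$ being held fixed.

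First I would record the two identities to be proved: $\sigma_{\mathcal{R}'}(\mathcal{T}_1\vee\mathcal{T}_2)\cong\sigma_{\mathcal{R}'}(\mathcal{T}_1)\vee\sigma_{\mathcal{R}'}(\mathcal{T}_2)$ and $\sigma_{\mathcal{R}'}(\mathcal{T}_1\wedge\mathcal{T}_2)\cong\sigma_{\mathcal{R}'}(\mathcal{T}_1)\wedge\sigma_{\mathcal{R}'}(\mathcal{T}_2)$. The intersection case is then immediate from Prop.\;\ref{join:preserve:join:meet}, which supplies the distributivity of $\boxtimes_{\mathcal{A}}$ over $\wedge$; the union case follows from the corresponding $\vee$-distributivity asserted there, which in turn rests on Prop.\;\ref{project:inflate:preserve}, namely that inflation preserves both $\vee$ and $\wedge$.

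Because strict equality of tables would demand matching key sets, and meet is a product on keys which does not literally distribute over the coproduct underlying $\vee$, I would carry out the verification at the level of image relations, as signalled by the reflection of Prop.\;\ref{tbl:rel:refl}. Applying the image passage $\mathrmbfit{im}_{\mathcal{D}}$ and using the Boolean identities of \S\,\ref{sub:sub:sec:boole} (image sends $\wedge$ to set intersection $\cap$ and $\vee$ to set union $\cup$), each side collapses to a statement about the fixed tuple set $R=\mathrmbfit{im}(\grave{\mathrmbfit{tbl}}_{\mathcal{A}}(h)(\mathcal{R}'))$: namely $R\cap(\mathrmbfit{im}(\mathcal{T}_1)\cup\mathrmbfit{im}(\mathcal{T}_2))=(R\cap\mathrmbfit{im}(\mathcal{T}_1))\cup(R\cap\mathrmbfit{im}(\mathcal{T}_2))$ and its analogue with the roles of $\cup$ and $\cap$ exchanged. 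These are exactly the distributive laws of Prop.\;\ref{boolean:laws}.

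The main obstacle is therefore not computational but a matter of pinning down the sense in which ``preserves'' is meant. The union case fails on the nose at the key-set level, so the honest statement is preservation up to informational equivalence --- equality of image relations --- which is precisely what the reflection guarantees. Once the claim is read through $\mathrmbfit{im}_{\mathcal{D}}$, both identities reduce to the elementary set-theoretic fact that intersecting with a fixed set distributes over union and over intersection, and no further work is required.
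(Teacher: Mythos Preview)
Your proposal is correct and takes essentially the same approach as the paper, which simply cites Prop.\;\ref{join:preserve:join:meet} (``Same proof''), relying on the identification of selection as a special case of natural join together with the continuity/co-continuity of inflation and the distributivity of $\wedge$. Your additional care in noting that preservation holds at the level of image relations (via the reflection of Prop.\;\ref{tbl:rel:refl}) rather than strictly at the key-set level is a valid refinement that the paper leaves implicit.
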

\begin{proof}
Same proof as Prop.\;\ref{join:preserve:join:meet}.
\hfill\rule{5pt}{5pt}
\end{proof}
%

%
%

%

\newpage
\subsection{Select-join.}\label{sub:sub:sec:sel:join}
%

%
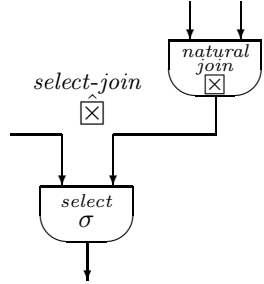
\begin{figure}
\begin{center}
{{{\begin{tabular}{c}
\begin{picture}(100,90)(80,28)
\setlength{\unitlength}{0.97pt}
\put(148,94.5){\begin{picture}(0,0)(0,3)
\setlength{\unitlength}{0.35pt}
\put(60,23){\makebox(0,0){\normalsize{$\boxtimes$}}}
\put(40,10){\line(1,0){40}}
\put(10,70){\line(1,0){100}}
\put(10,70){\line(0,-1){30}}
\put(110,70){\line(0,-1){30}}
\put(40,40){\oval(60,60)[bl]}
\put(80,40){\oval(60,60)[br]}
\put(61,59){\makebox(0,0){\scriptsize{{\textit{{natural}}}}}}
\put(61,42){\makebox(0,0){\scriptsize{{\textit{{join}}}}}}
\end{picture}}
\put(98,37.5){\begin{picture}(0,0)(0,3)
\setlength{\unitlength}{0.35pt}
\put(60,33){\makebox(0,0){\normalsize{$\sigma$}}}
\put(40,10){\line(1,0){40}}
\put(10,70){\line(1,0){100}}
\put(10,70){\line(0,-1){30}}
\put(110,70){\line(0,-1){30}}
\put(40,40){\oval(60,60)[bl]}
\put(80,40){\oval(60,60)[br]}
\put(60,55){\makebox(0,0){\scriptsize{{\textit{{select}}}}}}
\end{picture}}
\put(120,100){\makebox(0,0){\footnotesize{{\textit{{select}}}-{\textit{{join}}}}}}
\put(122,90){\makebox(0,0){\large{$\hat{\boxtimes}$}}}
\put(110,80){\line(-1,0){20}}\put(110,80){\vector(0,-1){20}}
\put(130,80){\vector(0,-1){20}}\put(130,80){\line(1,0){40}}
\put(170,80){\line(0,1){15}}
\put(120,38){\vector(0,-1){15}}
\put(160,132){\vector(0,-1){15}}
\put(179.8,132){\vector(0,-1){15}}
\end{picture}
\end{tabular}}}}
\end{center}
\caption{\texttt{FOLE} Select-Join Flow Chart}
\label{fig:fole:theta:join:flo:chrt}
\end{figure}
If we want to combine tuples from two tables, 
where the combination condition is not simply the equality of shared attributes, 
then it is convenient to have a more general form of the natural join operator, 
which is known as the select-join.
The select-join is a ternary operator that is written as 
$\mathcal{T}_{1}{\,\hat{\boxtimes}^{\mathcal{R}}_{\mathcal{A}}}\mathcal{T}_{2}$.
%
The result of the select-join operation consists of 
all combinations of tuples in $\mathcal{T}_{1}$ and $\mathcal{T}_{2}$ 
that satisfy $\mathcal{R}$. 
%
The definition of this operation in terms of more fundamental operations 
is
$\mathcal{T}_{1}{\,\hat{\boxtimes}^{\mathcal{R}}_{\mathcal{A}}}\mathcal{T}_{2} = 
\sigma_{\mathcal{R}}\bigl(\mathcal{T}_{1}{\,\boxtimes_{\mathcal{A}}}\mathcal{T}_{2}\bigr)$.
%
\footnote{The $\theta$-join is a special case,
where the headers of the table $\mathcal{T}_{1}$ and $\mathcal{T}_{2}$ are disjoint,
so that we use the Cartesian product.
When the relation $\mathcal{R}$ represents the equality operator $(=)$ on two attributes,
this join is called an equi-join.}
%
The input/output for the natural join component is independent,
with the input/output for the selection component depending on it.
\begin{description}
\item[Constraint/Construction:] 
The constraint and construction is an interleaved process 
for the natural join and the selection aspects.
\begin{enumerate}
\item 
Consider an $X$-sorted signature span 
$\mathcal{S}_{1}\xleftarrow{h_{1}}\mathcal{S}\xrightarrow{h_{2}}\mathcal{S}_{2}$
in $\mathrmbf{List}(X)$
consisting of 
a span of index functions
$I_{1}\xleftarrow{h_{1}}I\xrightarrow{h_{2}}I_{2}$.
This is the constraint for the natural join aspect
(Tbl.\,\ref{tbl:fole:natural:join:input:output}).
This has pushout signature
$\mathcal{S}_{1}{+_{\mathcal{S}}}\mathcal{S}_{2}$
and injection $X$-signature morphisms
{\footnotesize{$
\mathcal{S}_{1} \xrightarrow{\iota_{1}\,} 
{\mathcal{S}_{1}{+_{\mathcal{S}}}\mathcal{S}_{2}}
\xleftarrow{\;\iota_{2}} \mathcal{S}_{2}
$}\normalsize}
with
index function opspan
{\footnotesize{$I_{1} 
\xrightarrow{\iota_{1}\,} 
{\langle{I_{1}{+}_{I}I_{2},[s_{1},s_{2}]}\rangle}
\xleftarrow{\;\iota_{2}}I_{2}.$.}\normalsize}
\item 
Consider a connecting $X$-sorted signature morphism 
$\mathcal{S}_{3}
\xrightarrow{\;\iota_{3}\;}
\mathcal{S}_{1}{+_{\mathcal{S}}}\mathcal{S}_{2}$
in $\mathrmbf{List}(X)$,
which is the RHS of a trivial $X$-sorted signature span.
This is the constraint for the selection aspect.
The construction for the selection aspect
is the pushout of this constraint
in $\mathrmbf{List}(X)$,
which is the (trivial) opspan
whose LHS
is the $X$-sorted signature morphism 
$\mathcal{S}_{3}
\xrightarrow{\;\iota_{3}\;}
\mathcal{S}_{1}{+_{\mathcal{S}}}\mathcal{S}_{2}$.
\end{enumerate}
%
\mbox{}
\newpage
\item[Input/Output:] 
The input and output is an interleaved process 
for the natural join and the selection aspects.
\begin{enumerate}
\item 
Consider a pair of tables
$\mathcal{T}_{1} = {\langle{K_{1},t_{1}}\rangle} \in 
\mathrmbf{Tbl}_{\mathcal{A}}(\mathcal{S}_{1})$
and
$\mathcal{T}_{2} = {\langle{K_{2},t_{2}}\rangle} \in 
\mathrmbf{Tbl}_{\mathcal{A}}(\mathcal{S}_{2})$.
This is the input for natural join (Tbl.\,\ref{tbl:fole:natural:join:input:output}).
The natural join from \S\,\ref{sub:sub:sec:nat:join}
is the top of the $\mathcal{A}$-table span 
\newline\mbox{}\hfill
{\footnotesize{$
\mathcal{T}_{1}={\langle{\mathcal{S}_{1},K_{1},t_{1}}\rangle} 
\xleftarrow{\langle{\iota_{1},\hat{k}_{1}}\rangle} 
\mathcal{T}_{1}{\,\boxtimes_{\mathcal{A}}}\mathcal{T}_{2}
\xrightarrow{\langle{\iota_{2},\hat{k}_{2}}\rangle} 
{\langle{\mathcal{S}_{2},K_{2},t_{2}}\rangle}=\mathcal{T}_{2}
$}\normalsize}
\hfill\mbox{}\newline
with underlying $X$-signature opspan
\newline\mbox{}\hfill
{\footnotesize{$
\mathcal{S}_{1} 
\xrightarrow{\iota_{1}\,} 
{\mathcal{S}_{1}{+_{\mathcal{S}}}\mathcal{S}_{2}}
\xleftarrow{\;\iota_{2}} 
\mathcal{S}_{2}.
$}\normalsize}
\hfill\mbox{}\newline
This is the output for natural join (Tbl.\,\ref{tbl:fole:natural:join:input:output}).
\item 
Consider a relation
$\mathcal{R}={\langle{R,i}\rangle} \in 
\mathrmbf{Rel}_{\mathcal{A}}(\mathcal{S}_{3})$
and the table
${\mathcal{T}_{1}}{\,\boxtimes_{\mathcal{A}}}{\mathcal{T}_{2}}
= {\langle{\widehat{K}_{12},{(\hat{t}_{1},\hat{t}_{2})}}\rangle}
\in 
\mathrmbf{Tbl}_{\mathcal{A}}(\mathcal{S}_{1}{+_{\mathcal{S}}}\mathcal{S}_{2})$.
This is the input for selection (Tbl.\,\ref{tbl:fole:selection:input:output}).
Selection is inflation followed by meet.
The $\mathcal{A}$-relation
$\mathcal{R}={\langle{R,i}\rangle} \in 
\mathrmbf{Rel}_{\mathcal{A}}(\mathcal{S}_{3})$
is mapped by inflation to the $\mathcal{A}$-relation
$\widehat{\mathcal{R}} = 
\grave{\mathrmbfit{tbl}}_{\mathcal{A}}(\iota_{3})(\mathcal{R}) = 
{\langle{\widehat{R},\hat{i}}\rangle} 
\in \mathrmbf{Rel}_{\mathcal{A}}(\mathcal{S}_{1}{+_{\mathcal{S}}}\mathcal{S}_{2})$
with its tuple function
$\widehat{R} \xhookrightarrow{\hat{i}} \mathrmbfit{tup}_{\mathcal{A}}(\mathcal{S}_{1}{+_{\mathcal{S}}}\mathcal{S}_{2})$.
%
This is linked to relation $\mathcal{R}$ by the $\mathcal{A}$-relation morphism 
{\footnotesize{{$
\mathcal{R} = {\langle{\mathcal{S}_{3},R,i}\rangle}
\xleftarrow{{\langle{\iota_{3},\grave{r}}\rangle}} 
{\langle{\mathcal{S}_{1}{+_{\mathcal{S}}}\mathcal{S}_{2},\hat{R},\hat{i}}\rangle} = \widehat{\mathcal{R}}
$.}}\normalsize}
%
Intersection, within the context $\mathrmbf{Tbl}_{\mathcal{A}}(\mathcal{S}_{1}{+_{\mathcal{S}}}\mathcal{S}_{2})$,
of $\mathcal{A}$-relation $\widehat{\mathcal{R}}$
with $\mathcal{A}$-table $\mathcal{T}_{1}{\,\boxtimes_{\mathcal{A}}}\mathcal{T}_{2}$
results in the table
\[\mbox{\footnotesize{{$
\sigma_{\mathcal{R}}(\mathcal{T}_{1}{\,\boxtimes_{\mathcal{A}}}\mathcal{T}_{2})
= 
\iota_{3}^{\ast}(\mathcal{R})
{\;\wedge\;}\bigl(\mathcal{T}_{1}{\,\boxtimes_{\mathcal{A}}}\mathcal{T}_{2}\bigr)
=
\iota_{3}^{\ast}(\mathcal{R})
{\;\wedge\;}\iota_{1}^{\ast}(\mathcal{T}_{1})
{\;\wedge\;}\iota_{2}^{\ast}(\mathcal{T}_{2})
$.}}\normalsize}\]
%
This defines the span of $\mathcal{A}$-table morphisms
\[\mbox{\footnotesize{{$
\mathcal{R}
\xleftarrow[\;\hat{\pi}_{3}{\,\circ\,}{\langle{\iota_{3},\grave{r}}\rangle}\;]
{\;{\langle{\iota_{3},\hat{k}_{3}}\rangle}\;} 
\sigma_{\mathcal{R}}(\mathcal{T}_{1}{\,\boxtimes_{\mathcal{A}}}\mathcal{T}_{2})
= \mathcal{T}_{1}{\,\hat{\boxtimes}^{\mathcal{R}}_{\mathcal{A}}}\mathcal{T}_{2}
\xrightarrow{\;\hat{\pi}\;} 
\mathcal{T}_{1}{\,\boxtimes_{\mathcal{A}}}\mathcal{T}_{2}
$,}}\normalsize}\]
which is the output for select-join.
\end{enumerate}
\end{description}
The select join flowchart input/output is displayed in 
Tbl.\,\ref{tbl:fole:select:join:input:output}.
Select-join 
is natural join, followed by selection.
This is the two-step process
%
\newline\mbox{}\hfill
\rule[-10pt]{0pt}{26pt}
$\mathcal{T}_{1}{\,\hat{\boxtimes}^{\mathcal{R}}_{\mathcal{A}}}\mathcal{T}_{2} 
\doteq 
\sigma_{\mathcal{R}}\bigl(\mathcal{T}_{1}{\,\boxtimes_{\mathcal{A}}}\mathcal{T}_{2}\bigr)$. 
\hfill\mbox{}\newline
\begin{table}
\begin{center}
{{\fbox{\begin{tabular}{c}
\setlength{\extrarowheight}{2pt}
{\scriptsize{$\begin{array}[c]{c@{\hspace{12pt}}l}
\mathcal{S}_{1}\xleftarrow{h_{1}}\mathcal{S}\xrightarrow{h_{2}}\mathcal{S}_{2}
&
\textit{constraint nat-join}
\\
\mathcal{S}_{1} \xrightarrow{\iota_{1}\,} 
{\mathcal{S}_{1}{\!+_{\mathcal{S}}}\mathcal{S}_{2}}
\xleftarrow{\;\iota_{2}}\mathcal{S}_{2}
&
\textit{construction nat-join}
\\\cline{2-2}
\mathcal{S}_{3}\xleftarrow{\;1\;}
\mathcal{S}_{3}
\xrightarrow{\;\iota_{3}\;}\mathcal{S}_{1}{+_{\mathcal{S}}}\mathcal{S}_{2}
&
\textit{constraint select}
\\
\mathcal{S}_{3}\xrightarrow{\;\iota_{3}\;}
\mathcal{S}_{1}{+_{\mathcal{S}}}\mathcal{S}_{2}
\xleftarrow{\;1\;}\mathcal{S}_{1}{+_{\mathcal{S}}}\mathcal{S}_{2}
&
\textit{construction select}
\\
\hline
\mathcal{T}_{1}\in\mathrmbf{Tbl}_{\mathcal{A}}(\mathcal{S}_{1})
\text{ and }
\mathcal{T}_{1}\in\mathrmbf{Tbl}_{\mathcal{A}}(\mathcal{S}_{1})
&
\textit{input nat-join}
\\
\mathcal{T}
\xleftarrow{{\langle{\iota_{1},\hat{k}_{1}}\rangle}} 
\mathcal{T}_{1}{\,\boxtimes_{\mathcal{A}}}\mathcal{T}_{2}
\xrightarrow{{\langle{\iota_{2},\hat{k}_{2}}\rangle}} 
\mathcal{T}_{2}
&
\textit{output nat-join}
\\
\cline{2-2}
\mathcal{R}
\in \mathrmbf{Rel}_{\mathcal{A}}(\mathcal{S}_{3})
&
\textit{input select}
\\
\mathcal{R}
\xleftarrow
{\;{\langle{\iota_{3},\hat{k}_{3}}\rangle}\;} 
\mathcal{T}_{1}{\,\hat{\boxtimes}^{\mathcal{R}}_{\mathcal{A}}}\mathcal{T}_{2} 
\xrightarrow{\;\hat{\pi}\;} 
\mathcal{T}_{1}{\,\boxtimes_{\mathcal{A}}}\mathcal{T}_{2}
&
\textit{output select}
\end{array}$}}
\end{tabular}}}}
\end{center}
\caption{\texttt{FOLE} Select-Join I/O}
\label{tbl:fole:select:join:input:output}
\end{table}

%
\comment{
\begin{figure}
\begin{center}
{{\begin{tabular}{c}
{{\begin{tabular}{c}
\setlength{\unitlength}{0.6pt}
\begin{picture}(180,200)(-60,-60)
\put(-40,150){\makebox(0,0){\normalsize{$
\overset{\textstyle{\text{selection}}}
{\overbrace{\rule{110pt}{0pt}}}$}}}
\put(180,0){\makebox(0,0){\normalsize{
$\left.\rule{0pt}{40pt}\right\}\underset{\textstyle{\text{join}}}{\text{natural}}
$}}}
\put(-130,60){\makebox(0,0){\footnotesize{$\mathcal{R}$}}}
\put(-58,60){\makebox(0,0){\footnotesize{$\widehat{\mathcal{R}}$}}}
\put(0,120){\makebox(0,0){\footnotesize{$
\sigma_{\mathcal{R}}(\mathcal{T}_{1}{\,\boxtimes_{\mathcal{A}}}\mathcal{T}_{2})$}}}
\put(-90,70){\makebox(0,0){\scriptsize{${\langle{\iota_{3},k_{3}}\rangle}$}}}
\put(-37,93){\makebox(0,0)[r]{\scriptsize{$\hat{k}$}}}
\put(37,93){\makebox(0,0)[l]{\scriptsize{$i$}}}
\put(-70,60){\vector(-1,0){50}}
\put(-10,110){\vector(-1,-1){40}}
\put(10,110){\vector(1,-1){40}}
\put(0,0){\makebox(0,0){\footnotesize{$\mathcal{T}_{1}$}}}
\put(120,0){\makebox(0,0){\footnotesize{$\mathcal{T}_{2}$}}}
\put(63,60){\makebox(0,0){\footnotesize{$
\mathcal{T}_{1}{\,\boxtimes_{\mathcal{A}}}\mathcal{T}_{2}$}}}
\put(60,-60){\makebox(0,0){\footnotesize{$\mathcal{T}$}}}
\put(25,-35){\makebox(0,0)[r]{\scriptsize{${\langle{h_{1},k_{1}}\rangle}$}}}
\put(95,-35){\makebox(0,0)[l]{\scriptsize{${\langle{h_{2},k_{2}}\rangle}$}}}
\put(25,35){\makebox(0,0)[r]{\scriptsize{${\langle{\iota_{1},\hat{k}_{1}}\rangle}$}}}
\put(95,35){\makebox(0,0)[l]{\scriptsize{${\langle{\iota_{2},\hat{k}_{2}}\rangle}$}}}
\put(50,50){\vector(-1,-1){40}}
\put(70,50){\vector(1,-1){40}}
\put(10,-10){\vector(1,-1){40}}
\put(110,-10){\vector(-1,-1){40}}
\qbezier(50,-30)(55,-25)(60,-20)
\qbezier(60,-20)(65,-25)(70,-30)
\end{picture}
\\\\
\hspace{20pt}in $\mathrmbf{Tbl}(\mathcal{A})$
\\\\
$\mathcal{T}_{1}{\,\hat{\boxtimes}^{\mathcal{R}}_{\mathcal{A}}}\mathcal{T}_{2} = 
\sigma_{\mathcal{R}}\bigl(\mathcal{T}_{1}{\,\boxtimes_{\mathcal{A}}}\mathcal{T}_{2}\bigr)$ 
\end{tabular}}}
\end{tabular}}}
\end{center}
\caption{\texttt{FOLE} Select-Join}
\label{fole:theta:join}
\end{figure}
}
%

\begin{note}
From another standpoint,
select-join is a form of multi-join,
where we start with the star-shaped diagram of $X$-signature morphisms
below-left,
and end with the
star-shaped diagram of table morphisms below-right.
\begin{center}
{{\begin{tabular}{c@{\hspace{40pt}}c}
{{\begin{tabular}{c}
\setlength{\unitlength}{0.8pt}
\begin{picture}(120,50)(0,-7)
\put(0,0){\makebox(0,0){\scriptsize{$\mathcal{S}_{1}$}}}
\put(120,0){\makebox(0,0){\scriptsize{$\mathcal{S}_{2}$}}}
\put(60,40){\makebox(0,0){\scriptsize{$\mathcal{S}_{3}$}}}
\put(62,0){\makebox(0,0){\scriptsize{${\mathcal{S}_{1}{+_{\mathcal{S}}}\mathcal{S}_{2}}$}}}
\put(25,-8){\makebox(0,0){\tiny{$\iota_{1}$}}}
\put(100,-8){\makebox(0,0){\tiny{$\iota_{2}$}}}
\put(54,20){\makebox(0,0)[r]{\tiny{$\iota_{3}$}}}
\put(10,0){\vector(1,0){24}}
\put(108,0){\vector(-1,0){24}}
\put(58,30){\vector(0,-1){20}}
%
\end{picture}
\end{tabular}}}
&
{{\begin{tabular}{c}
\setlength{\unitlength}{0.8pt}
\begin{picture}(120,50)(0,-7)
\put(0,0){\makebox(0,0){\scriptsize{$\mathcal{T}_{1}$}}}
\put(120,0){\makebox(0,0){\scriptsize{$\mathcal{T}_{2}$}}}
\put(60,40){\makebox(0,0){\scriptsize{$\mathcal{R}$}}}
\put(62,0){\makebox(0,0){\scriptsize{$
\mathcal{T}_{1}{\,\hat{\boxtimes}^{\mathcal{R}}_{\mathcal{A}}}\mathcal{T}_{2}$}}}
\put(25,-8){\makebox(0,0){\tiny{${\langle{\iota_{1},\hat{k}_{1}}\rangle}$}}}
\put(100,-8){\makebox(0,0){\tiny{${\langle{\iota_{2},\hat{k}_{2}}\rangle}$}}}
\put(54,20){\makebox(0,0)[r]{\tiny{${\langle{\iota_{3},\hat{k}_{3}}\rangle}$}}}
\put(34,0){\vector(-1,0){24}}
\put(86,0){\vector(1,0){24}}
\put(58,10){\vector(0,1){20}}
%
\end{picture}
\end{tabular}}}
\end{tabular}}}
\end{center}
Here,
select-join 
can also be defined by the two-step process
\[\mbox{\footnotesize{{$
\mathcal{T}_{1}{\,\hat{\boxtimes}^{\mathcal{R}}_{\mathcal{A}}}\mathcal{T}_{2} 
\doteq
\grave{\mathrmbfit{tbl}}_{\mathcal{A}}(\imath_{3})(\mathcal{T}_{1})
{\;\wedge\;}
\grave{\mathrmbfit{tbl}}_{\mathcal{A}}(\imath_{3})(\mathcal{T}_{2})
{\;\wedge\;}
\grave{\mathrmbfit{tbl}}_{\mathcal{A}}(\imath_{3})(\mathcal{R})
$}}\normalsize}\]
of inflation (thrice) followed by multi-meet.
\end{note}
\begin{proposition}\label{select:join:preserve:join:meet}
Select-join $\hat{\boxtimes}$ preserves union $\vee$ and intersection $\wedge$.
\end{proposition}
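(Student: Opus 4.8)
The plan is to exploit the defining factorization of select-join as a natural join followed by a selection, namely $\mathcal{T}_{1}{\,\hat{\boxtimes}^{\mathcal{R}}_{\mathcal{A}}}\mathcal{T}_{2} \doteq \sigma_{\mathcal{R}}\bigl(\mathcal{T}_{1}{\,\boxtimes_{\mathcal{A}}}\mathcal{T}_{2}\bigr)$, and then to compose the two preservation results already established for its constituent operations. Concretely, I would invoke Prop.\;\ref{join:preserve:join:meet}, which says natural join $\boxtimes$ distributes over $\vee$ and $\wedge$ in its table argument, together with Prop.\;\ref{select:preserve:join:meet}, which says selection $\sigma$ preserves $\vee$ and $\wedge$. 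Since select-join is literally the composite of these two maps (with the auxiliary relation $\mathcal{R}$ held fixed as a parameter), the composite inherits preservation of both Boolean operations.

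For the union case, varying the second table argument, the chain of isomorphisms I would write out is
\begin{align*}
\mathcal{T}_{1}{\,\hat{\boxtimes}^{\mathcal{R}}_{\mathcal{A}}}\bigl(\mathcal{T}_{2}{\vee}\mathcal{T}'_{2}\bigr)
&= \sigma_{\mathcal{R}}\bigl(\mathcal{T}_{1}{\,\boxtimes_{\mathcal{A}}}(\mathcal{T}_{2}{\vee}\mathcal{T}'_{2})\bigr) \\
&\cong \sigma_{\mathcal{R}}\bigl((\mathcal{T}_{1}{\,\boxtimes_{\mathcal{A}}}\mathcal{T}_{2}){\vee}(\mathcal{T}_{1}{\,\boxtimes_{\mathcal{A}}}\mathcal{T}'_{2})\bigr) \\
&\cong \sigma_{\mathcal{R}}(\mathcal{T}_{1}{\,\boxtimes_{\mathcal{A}}}\mathcal{T}_{2}){\vee}\sigma_{\mathcal{R}}(\mathcal{T}_{1}{\,\boxtimes_{\mathcal{A}}}\mathcal{T}'_{2}) \\
&= (\mathcal{T}_{1}{\,\hat{\boxtimes}^{\mathcal{R}}_{\mathcal{A}}}\mathcal{T}_{2}){\vee}(\mathcal{T}_{1}{\,\hat{\boxtimes}^{\mathcal{R}}_{\mathcal{A}}}\mathcal{T}'_{2}),
\end{align*}
where the second step is Prop.\;\ref{join:preserve:join:meet} and the third is Prop.\;\ref{select:preserve:join:meet}. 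The intersection case is verbatim with $\wedge$ in place of $\vee$, and variation in the first table argument follows identically once one appeals to the symmetry of natural join (the coproduct signature $\mathcal{S}_{1}{+_{\mathcal{S}}}\mathcal{S}_{2}$ is symmetric in its summands, so $\boxtimes$ is commutative up to isomorphism, cf.\ Prop.\;\ref{join:assoc}).

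The only point requiring care — and hence the one I would flag as the main obstacle, though it is mild — is that the selection operand $\mathcal{R}$, and therefore its inflation $\grave{\mathrmbfit{tbl}}_{\mathcal{A}}(\iota_{3})(\mathcal{R})$, must be treated as constant across the argument being varied. This is what makes selection against $\mathcal{R}$ behave as intersection with a fixed inflated relation, so that its preservation of $\vee$ and $\wedge$ is exactly the continuity and co-continuity of inflation composed with the distributivity of $\wedge$ over itself and over $\vee$ — precisely the content underlying the cited propositions. Thus no new computation is needed beyond recording the composite, and the proof reduces to the one-line observation that a composite of two operations, each preserving $\vee$ and $\wedge$, again preserves $\vee$ and $\wedge$.
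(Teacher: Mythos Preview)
Your proposal is correct and takes essentially the same approach as the paper: the paper's proof consists of nothing more than citing Prop.\;\ref{join:preserve:join:meet} and Prop.\;\ref{select:preserve:join:meet}, and you have simply unpacked that citation into the explicit chain of isomorphisms. Your added remarks about holding $\mathcal{R}$ fixed and about symmetry in the first argument are reasonable elaborations but go beyond what the paper records.
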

\begin{proof}
Prop.\;\ref{join:preserve:join:meet}
and
Prop.\;\ref{select:preserve:join:meet}.
\hfill\rule{5pt}{5pt}
\end{proof}
%

\newpage
\subsection{Filtered Join.}
\label{sub:sub:sec:filtered:join}

%
\begin{figure}
\begin{center}
{{{\begin{tabular}{c}
\begin{picture}(160,75)(37,27)
\setlength{\unitlength}{0.97pt}
\put(54,65){\begin{picture}(0,0)(0,0)
\setlength{\unitlength}{0.35pt}
\put(10,10){\line(1,0){60}}
\put(10,70){\line(1,0){60}}
\put(10,70){\line(0,-1){60}}
\put(70,40){\oval(60,60)[br]}
\put(70,40){\oval(60,60)[tr]}
\put(55,50){\makebox(0,0){\scriptsize{{\textit{{restrict}}}}}}
\put(56,30){\makebox(0,0){\Large{${\Rightarrow}$}}}
\end{picture}}
\put(146.5,65){\begin{picture}(0,0)(0,0)
\setlength{\unitlength}{0.35pt}
\put(40,10){\line(1,0){60}}
\put(40,70){\line(1,0){60}}
\put(100,70){\line(0,-1){60}}
\put(40,40){\oval(60,60)[bl]}
\put(40,40){\oval(60,60)[tl]}
\put(58,50){\makebox(0,0){\scriptsize{{\textit{{restrict}}}}}}
\put(56,30){\makebox(0,0){\Large{${\Leftarrow}$}}}
\end{picture}}
\put(98,37){\begin{picture}(0,0)(0,3)
\setlength{\unitlength}{0.35pt}
\put(60,30){\makebox(0,0){\normalsize{$\vee$}}}
\put(40,10){\line(1,0){40}}
\put(10,70){\line(1,0){100}}
\put(10,70){\line(0,-1){30}}
\put(110,70){\line(0,-1){30}}
\put(40,40){\oval(60,60)[bl]}
\put(80,40){\oval(60,60)[br]}
\put(60,55){\makebox(0,0){\scriptsize{{\textit{{join}}}}}}
\end{picture}}
\put(120,100){\makebox(0,0){\footnotesize{{\textit{{filtered join}}}}}}
\put(120,88){\makebox(0,0){\large{$\varominus$}}}
\put(38,80){\line(0,1){20}}
\put(38,80){\vector(1,0){20}}
\put(110,80){\line(-1,0){20}}
\put(110,80){\vector(0,-1){21}}
\put(120,38){\vector(0,-1){15}}
\put(130,80){\vector(0,-1){21}}
\put(130,80){\line(1,0){20}}
\put(203,80){\line(0,1){20}}
\put(203,80){\vector(-1,0){20}}
\end{picture}
\end{tabular}}}}
\end{center}
\caption{\texttt{FOLE} Filtered Join Flow Chart}
\label{fig:fole:filter:join:flo:chrt}
\end{figure}
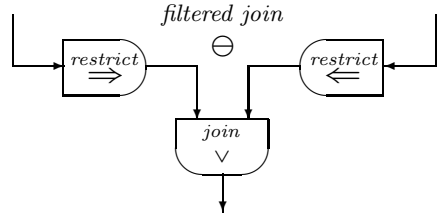
Let $\mathcal{S}$ be a signature.
We are given 
two $\mathcal{S}$-tables 
$\mathcal{T}_{1} = {\langle{\mathcal{A}_{1},K_{1},t_{1}}\rangle}$ and
$\mathcal{T}_{2} = {\langle{\mathcal{A}_{2},K_{2},t_{2}}\rangle}$
connected by an 
$X$-sorted type domain opspan
$\mathcal{A}_{1}
\xrightarrow{\;g_{1}\,} 
\mathcal{A}
\xleftarrow{\;g_{2}\,} 
\mathcal{A}_{2}$
via a third type domain $\mathcal{A}$,
and consisting of a span of data value functions
$Y_{1}\xleftarrow{\;g_{1}\;}Y\xrightarrow{\;g_{2}\;}Y_{2}$.
%
%
The set $Y$ represents authentic data values.
Filtered join is a binary operation 
$\mathcal{T}_{1}{\,\varominus_{\!\mathcal{S}}\,}\mathcal{T}$
that filters out the tuples with non-authentic data values,
and then joins the results.
When the data value functions are injections
$Y_{1}\xhookleftarrow{\;g_{1}\;}Y\xhookrightarrow{\;g_{2}\;}Y_{2}$,
$\mathcal{T}_{1}$ and $\mathcal{T}_{2}$ 
will be tables whose tuples
\newline\mbox{}\hfill
$
\mathrmbfit{tup}_{\mathcal{S}}(\mathcal{A}_{1})
\subseteq 
{\wp}\mathrmbf{List}(Y_{1})
\supseteq 
{\wp}\mathrmbf{List}(Y)
\subseteq 
{\wp}\mathrmbf{List}(Y_{2})
\supseteq 
\mathrmbfit{tup}_{\mathcal{S}}(\mathcal{A}_{2})
$
\hfill\mbox{}\newline
have un-authenticated
data values in 
$Y_{1} \setminus Y$ and  
$Y_{2} \setminus Y$. 
Filtered join 
$\mathcal{T}_{1}{\,\varominus_{\!\mathcal{S}}\,}\mathcal{T}_{2}$
restricts to only those tuples in $ {\wp}\mathrmbf{List}(Y)$
with the authentic data values in $Y$.
%
\footnote{See the discussion of authenticity w.r.t. restriction 
in \S\,\ref{sub:sub:sec:adj:flow:S}.}
%
A tuple is in the filtered join 
$\mathcal{T}_{1}{\;\varominus_{\mathcal{S}}\;}\mathcal{T}_{2}$ 
when it is 
either in 
the restriction $\grave{\mathrmbfit{tbl}}_{\mathcal{S}}(g_{1})(\mathcal{T}_{1})$
or in 
the restriction $\grave{\mathrmbfit{tbl}}_{\mathcal{S}}(g_{2})(\mathcal{T}_{2})$.
Filtered join, within the context $\mathrmbf{Tbl}(\mathcal{S})$,
is restriction followed by join.
We use the following routes of flow.
\begin{center}
{{\begin{tabular}{c}
\setlength{\unitlength}{0.6pt}
\begin{picture}(320,60)(0,5)
%
%
\put(100,55){\makebox(0,0){\huge{
$\overset{\textit{\scriptsize{restrict}}}{\Rightarrow}$}}}
\put(200,55){\makebox(0,0){\huge{
$\overset{\textit{\scriptsize{restrict}}}{\Leftarrow}$}}}
\put(150,30){\makebox(0,0){\huge{
$\overset{\textit{\scriptsize{join}}}{\Downarrow}$}}}
%
\put(90,48){\line(1,0){54}}
\put(144,38){\oval(20,20)[tr]}
\put(166,38){\oval(20,20)[tl]}
\put(220,48){\line(-1,0){54}}
\put(154,9){\line(0,1){29}}
\put(156,9){\line(0,1){29}}
\put(155,6){\vector(0,-1){10}}
\end{picture}
\end{tabular}}}
\end{center}
Similar to  natural join and data-type join,
we can define filtered join for any number of tables 
$\{ \mathcal{T}_{1}, \mathcal{T}_{2}, \mathcal{T}_{3}, \cdots , \mathcal{T}_{n} \}$
with a comparable constraint.
\begin{description}
\item[Constraint:] 
Consider an $X$-sorted type domain opspan 
$\mathcal{A}_{1}
\xrightarrow{g_{1}} 
\mathcal{A}
\xleftarrow{g_{2}} 
\mathcal{A}_{2}$
in $\mathrmbf{Cls}(X)$
consisting of 
a span of data value functions
$Y_{1}\xleftarrow{g_{1}}Y\xrightarrow{g_{2}}Y_{2}$.
This is the constraint for filtered join 
(Tbl.\,\ref{tbl:fole:filter:join:input:output}).
\newline
\item[Construction:] 
The construction for filtered join is the same as the constraint for filtered join 
(Tbl.\,\ref{tbl:fole:filter:join:input:output}).
\newline
\item[Input:] 
Consider a pair of tables
$\mathcal{T}_{1} = {\langle{K_{1},t_{1}}\rangle} \in 
\mathrmbf{Tbl}_{\mathcal{S}}(\mathcal{A}_{1})$
and
$\mathcal{T}_{2} = {\langle{K_{2},t_{2}}\rangle} \in 
\mathrmbf{Tbl}_{\mathcal{S}}(\mathcal{A}_{2})$.
This is the input for filtered join 
(Tbl.\,\ref{tbl:fole:filter:join:input:output}).
\newpage
\item[Output:] 
The output is restriction (twice) followed by join.
\begin{itemize}
\item 
Restriction 
(\S\,\ref{sub:sub:sec:adj:flow:S})
$\mathrmbf{Tbl}_{\mathcal{S}}(\mathcal{A}_{1})
\xrightarrow
{\;\grave{\mathrmbfit{tbl}}_{\mathcal{S}}(g_{1})\;}
\mathrmbf{Tbl}_{\mathcal{S}}(\mathcal{A})
$
along the tuple function
\comment{\footnotesize{$
\mathrmbfit{tup}_{\mathcal{S}}(\mathcal{A}_{1})
\xleftarrow
{\mathrmbfit{tup}_{\mathcal{S}}(g_{1})}
\mathrmbfit{tup}_{\mathcal{S}}(\mathcal{A})
$}\normalsize}
of the $X$-type domain morphism
$\mathcal{A}_{1}\xrightarrow{g_{1}}\mathcal{A}$,
maps the table
$\mathcal{T}_{1}$
to the table
$\widehat{\mathcal{T}}_{1} =
\grave{\mathrmbfit{tbl}}_{\mathcal{S}}(g_{1})(\mathcal{T}_{1})
= {\langle{\widehat{K}_{1},\hat{t}_{1}}\rangle}
\in \mathrmbf{Tbl}_{\mathcal{S}}(\mathcal{A})$,
with its tuple function
$\widehat{K}_{1} \xrightarrow{\hat{t}_{1}} \mathrmbfit{tup}_{\mathcal{S}}(\mathcal{A})$
defined by pullback,
$k_{1}{\,\cdot\,}t_{1} 
= \hat{t}_{1}{\,\cdot\,}\mathrmbfit{tup}_{\mathcal{S}}(g_{1})$. 
This is linked to table $\mathcal{T}_{1}$ by the $\mathcal{S}$-table morphism 
{\footnotesize{{$
\mathcal{T}_{1} = {\langle{\mathcal{A}_{1},K_{1},t_{1}}\rangle}
\xleftarrow{{\langle{g_{1},k_{1}}\rangle}} 
{\langle{\mathcal{A},\widehat{K}_{1},\hat{t}_{1}}\rangle} 
= \widehat{\mathcal{T}}_{1}
$.}}\normalsize}
The same process can be defined for $\mathcal{S}$-table $\mathcal{T}_{2}$.
\newline
\item 
%
%
Union (\S\,\ref{sub:sub:sec:boole})
of the two restriction tables 
$\widehat{\mathcal{T}}_{1}$
and 
$\widehat{\mathcal{T}}_{2}$
in the context 
$\mathrmbf{Tbl}_{\mathcal{S}}(\mathcal{A})$
defines the filtered join 
${\mathcal{T}_{1}}{\,\varominus_{\mathcal{S}}}{\mathcal{T}_{2}}
= \widehat{\mathcal{T}}_{1} \vee \widehat{\mathcal{T}}_{2}
= {\langle{\widehat{K}_{1}{+}\widehat{K}_{2},[\hat{t}_{1},\hat{t}_{2}]}\rangle}$,
whose key set is the disjoint union $\widehat{K}_{1}{+}\widehat{K}_{2}$
and whose tuple map
$\widehat{K}_{1}{+}\widehat{K}_{2}
\xrightarrow{[\hat{t}_{1},\hat{t}_{2}]}
\mathrmbfit{tup}_{\mathcal{A}}(\mathcal{S})$
is the comediator of the opspan
$\widehat{K}_{1}\xrightarrow{\hat{t}_{1}} 
\mathrmbfit{tup}_{\mathcal{S}}
(\mathcal{A}_{1}{\times_{\mathcal{A}}}\mathcal{A}_{2})
\xleftarrow{\hat{t}_{2}}\widehat{K}_{2}$,
resulting in the span 
{\footnotesize{{$
\widehat{\mathcal{T}}_{1}
\xrightarrow{\;\check{\iota}_{1}\;} 
\mathcal{T}_{1}
{\,\varominus_{\mathcal{S}}}
\mathcal{T}_{2}
\xleftarrow{\;\check{\iota}_{2}\;} 
\widehat{\mathcal{T}}_{2}
$.}}\normalsize}
\newline
\end{itemize}
%
%
Restriction composed with join 
defines the (\textsf{M}-shaped) multi-opspan of $\mathcal{S}$-table morphisms
\[\mbox
{\footnotesize{$
\mathcal{T}_{1} 
\xleftarrow{{\langle{g_{1},k_{1}}\rangle}} 
\widehat{\mathcal{T}}_{1} 
\xrightarrow{\;\check{\iota}_{1}\;} 
\mathcal{T}_{1}{\;\varominus_{\mathcal{S}}\;}\mathcal{T}_{2}
\xleftarrow{\;\check{\iota}_{2}\;}
\widehat{\mathcal{T}}_{2}
\xrightarrow{{\langle{g_{2},k_{2}}\rangle}} 
\mathcal{T}_{2}
$,}\normalsize}
\]
which is the output for filtered join 
(Tbl.\,\ref{tbl:fole:filter:join:input:output}).
\end{description}
Filtered join 
is restriction followed by join.
This is a two-step process
%
\newline\mbox{}\hfill
\rule[-10pt]{0pt}{26pt}
$\mathcal{T}_{1}{\;\varominus_{\mathcal{S}}\;}\mathcal{T}_{2} 
\doteq 
\grave{\mathrmbfit{tbl}}_{\mathcal{S}}(g_{1})(\mathcal{T}_{1})
{\;\vee\;}
\grave{\mathrmbfit{tbl}}_{\mathcal{S}}(g_{2})(\mathcal{T}_{2})$. 
%
\footnote{
The filtered join,
$\mathcal{T}_{1} 
\xleftarrow{{\langle{g_{1},k_{1}}\rangle}} 
\widehat{\mathcal{T}}_{1} 
\xrightarrow{\;\check{\iota}_{1}\;} 
\mathcal{T}_{1}{\;\varominus_{\mathcal{S}}\;}\mathcal{T}_{2}
\xleftarrow{\;\check{\iota}_{2}\;}
\widehat{\mathcal{T}}_{2}
\xrightarrow{{\langle{g_{2},k_{2}}\rangle}} 
\mathcal{T}_{2}$, 
is shielded from and 
has no direct connection to 
either table $\mathcal{T}_{1}$ or table $\mathcal{T}_{2}$.
This is comparable with
the data-type meet in \S\,\ref{sub:sub:sec:boolean:meet},
{{$\mathcal{T}_{1} 
\xrightarrow{{\langle{1,\tilde{g}_{1}}\rangle}} 
\widetilde{\mathcal{T}}_{1}
\xleftarrow{\hat{\pi}_{1}}
\mathcal{T}_{1}{\,\boxbar_{\mathcal{S}}}\mathcal{T}_{2}
\xrightarrow{\hat{\pi}_{2}}
\widetilde{\mathcal{T}}_{2}
\xleftarrow{{\langle{1,\tilde{g}_{2}}\rangle}} 
\mathcal{T}_{2}$,}}
which also is shielded from and 
has no direct connection to 
either table $\mathcal{T}_{1}$ or table $\mathcal{T}_{2}$.}
%
\hfill\mbox{}\newline

%
\begin{table}
\begin{center}
{{\fbox{\begin{tabular}{c}
\setlength{\extrarowheight}{2pt}
{\scriptsize{$\begin{array}[c]{c@{\hspace{12pt}}l}
\mathcal{A}_{1}\xrightarrow{g_{1}}\mathcal{A}\xleftarrow{g_{2}}\mathcal{A}_{2}
&
\textit{constraint}
\\
\mathcal{A}_{1}\xrightarrow{g_{1}}\mathcal{A}\xleftarrow{g_{2}}\mathcal{A}_{2}
&
\textit{construction}
\\
\hline
\mathcal{T}_{1}\in\mathrmbf{Tbl}_{\mathcal{S}}(\mathcal{A}_{1})
\text{ and }
\mathcal{T}_{2}\in\mathrmbf{Tbl}_{\mathcal{S}}(\mathcal{A}_{2})
&
\textit{input}
\\
\mathcal{T}_{1} 
\xleftarrow{{\langle{g_{1},k_{1}}\rangle}} 
\widehat{\mathcal{T}}_{1} 
\xrightarrow{\;\check{\iota}_{1}\;} 
\mathcal{T}_{1}{\;\varominus_{\mathcal{S}}\;}\mathcal{T}_{2}
\xleftarrow{\;\check{\iota}_{2}\;}
\widehat{\mathcal{T}}_{2}
\xrightarrow{{\langle{g_{2},k_{2}}\rangle}} 
\mathcal{T}_{2} 
&
\textit{output}
\end{array}$}}
\end{tabular}}}}
\end{center}
\caption{\texttt{FOLE} Filtered Join I/O}
\label{tbl:fole:filter:join:input:output}
\end{table}
%

\comment{ 
\begin{figure}
\begin{center}
{{\begin{tabular}{c}
{{\begin{tabular}{c}
\setlength{\unitlength}{0.63pt}
\begin{picture}(320,160)(0,-5)
\put(0,80){\makebox(0,0){\footnotesize{$K_{1}$}}}
\put(100,80){\makebox(0,0){\footnotesize{$\widehat{K}_{1}$}}}
\put(220,80){\makebox(0,0){\footnotesize{$\widehat{K}_{2}$}}}
\put(320,80){\makebox(0,0){\footnotesize{$K_{2}$}}}
\put(162,150){\makebox(0,0){\footnotesize{$\widehat{K}_{1}{+\;}\widehat{K}_{2}$}}}
\put(-10,0){\makebox(0,0){\footnotesize{$
{\mathrmbfit{tup}_{\mathcal{S}}(\mathcal{A}_{1})}$}}}
\put(330,0){\makebox(0,0){\footnotesize{$
{\mathrmbfit{tup}_{\mathcal{S}}(\mathcal{A}_{2})}$}}}
\put(160,0){\makebox(0,0){\footnotesize{$
{\mathrmbfit{tup}_{\mathcal{S}}(\mathcal{A})}$}}}
\put(80,-12){\makebox(0,0){\scriptsize{$\mathrmbfit{tup}_{\mathcal{S}}(g_{1})$}}}
\put(240,-12){\makebox(0,0){\scriptsize{$\mathrmbfit{tup}_{\mathcal{S}}(g_{2})$}}}
\put(-6,40){\makebox(0,0)[r]{\scriptsize{$t_{1}$}}}
\put(125,40){\makebox(0,0)[r]{\scriptsize{$\hat{t}_{1}$}}}
\put(200,40){\makebox(0,0)[l]{\scriptsize{$\hat{t}_{2}$}}}
\put(55,90){\makebox(0,0){\scriptsize{$k_{1}$}}}
\put(270,90){\makebox(0,0){\scriptsize{$k_{2}$}}}
\put(165,60){\makebox(0,0)[l]{\scriptsize{$[\hat{t}_{1},\hat{t}_{2}]$}}}
\put(123,120){\makebox(0,0)[r]{\scriptsize{$\check{\iota}_{1}$}}}
\put(200,120){\makebox(0,0)[l]{\scriptsize{$\check{\iota}_{2}$}}}
\put(327,40){\makebox(0,0)[l]{\scriptsize{$t_{2}$}}}
\put(0,65){\vector(0,-1){50}}
\put(320,65){\vector(0,-1){50}}
\put(105,65){\vector(1,-1){45}}
\put(215,65){\vector(-1,-1){45}}
\put(160,130){\line(0,-1){40}}
\put(160,65){\vector(0,-1){40}}
\put(105,95){\vector(1,1){43}}
\put(215,95){\vector(-1,1){43}}
\put(80,80){\vector(-1,0){60}}
\put(240,80){\vector(1,0){60}}
\put(120,0){\vector(-1,0){90}}
\put(200,0){\vector(1,0){90}}
%
\qbezier(32,22)(26,22)(20,22)
\qbezier(32,22)(32,16)(32,10)
\qbezier(292,22)(298,22)(304,22)
\qbezier(292,22)(292,16)(292,10)
%
\put(60,45){\makebox(0,0){\huge{
$\overset{\textit{\scriptsize{restrict}}}{\Rightarrow}$}}}
\put(160,78){\makebox(0,0){${\scriptsize{join}}$}}
\put(250,45){\makebox(0,0){\huge{
$\overset{\textit{\scriptsize{restrict}}}{\Leftarrow}$}}}
\end{picture}
\end{tabular}}}
\\\\\\
{{\begin{tabular}{c}
\setlength{\unitlength}{0.6pt}
\begin{picture}(200,70)(-100,60)
\put(-130,60){\makebox(0,0){\footnotesize{$\mathcal{T}_{1}$}}}
\put(-60,60){\makebox(0,0){\footnotesize{$\widehat{\mathcal{T}}_{1}$}}}
\put(0,123){\makebox(0,0){\footnotesize{$
\mathcal{T}_{1}{\;\varominus_{\mathcal{S}}\;}\mathcal{T}_{2}$}}}
\put(63,60){\makebox(0,0){\footnotesize{$\widehat{\mathcal{T}}_{2}$}}}
\put(133,60){\makebox(0,0){\footnotesize{$\mathcal{T}_{2}$}}}
\put(-90,70){\makebox(0,0){\scriptsize{${\langle{g_{1},k_{1}}\rangle}$}}}
\put(94,70){\makebox(0,0){\scriptsize{${\langle{g_{2},k_{2}}\rangle}$}}}
\put(-37,93){\makebox(0,0)[r]{\scriptsize{$\check{\iota}_{1}$}}}
\put(37,93){\makebox(0,0)[l]{\scriptsize{$\check{\iota}_{2}$}}}
\put(-70,60){\vector(-1,0){50}}
\put(70,60){\vector(1,0){50}}
\put(-50,70){\vector(1,1){40}}
\put(50,70){\vector(-1,1){40}}
\end{picture}
\\
\hspace{3pt}
in $\mathrmbf{Tbl}(\mathcal{S})$
\\\\
$\mathcal{T}_{1}{\;\varominus_{\mathcal{S}}\;}\mathcal{T}_{2} = 
\grave{\mathrmbfit{tbl}}_{\mathcal{S}}(g_{1})(\mathcal{T}_{1})
{\;\vee\;}
\grave{\mathrmbfit{tbl}}_{\mathcal{S}}(g_{2})(\mathcal{T}_{2})$ 
\end{tabular}}}
\end{tabular}}}
\end{center}
\caption{\texttt{FOLE} Filtered Join}
\label{fig:fole:filt:join}
\end{figure}
For the situation where there are multiple non-authenticated sources of data,
we can define a filter multi-join.
} 

\newpage
\subsection{Data-type Meet.}
\label{sub:sub:sec:boolean:meet}

%
\begin{figure}
\begin{center}
{{{\begin{tabular}{c}
\begin{picture}(160,75)(37,27)
\setlength{\unitlength}{0.97pt}
\put(54,65){\begin{picture}(0,0)(0,0)
\setlength{\unitlength}{0.35pt}
\put(10,10){\line(1,0){60}}
\put(10,70){\line(1,0){60}}
\put(10,70){\line(0,-1){60}}
\put(70,40){\oval(60,60)[br]}
\put(70,40){\oval(60,60)[tr]}
\put(55,50){\makebox(0,0){\scriptsize{{\textit{{expand}}}}}}
\put(56,30){\makebox(0,0){\Large{${\Rightarrow}$}}}
\end{picture}}
\put(146.5,65){\begin{picture}(0,0)(0,0)
\setlength{\unitlength}{0.35pt}
\put(40,10){\line(1,0){60}}
\put(40,70){\line(1,0){60}}
\put(100,70){\line(0,-1){60}}
\put(40,40){\oval(60,60)[bl]}
\put(40,40){\oval(60,60)[tl]}
\put(58,50){\makebox(0,0){\scriptsize{{\textit{{expand}}}}}}
\put(56,30){\makebox(0,0){\Large{${\Leftarrow}$}}}
\end{picture}}
\put(98,37){\begin{picture}(0,0)(0,3)
\setlength{\unitlength}{0.35pt}
\put(60,30){\makebox(0,0){\normalsize{$\wedge$}}}
\put(40,10){\line(1,0){40}}
\put(10,70){\line(1,0){100}}
\put(10,70){\line(0,-1){30}}
\put(110,70){\line(0,-1){30}}
\put(40,40){\oval(60,60)[bl]}
\put(80,40){\oval(60,60)[br]}
\put(60,55){\makebox(0,0){\scriptsize{{\textit{{meet}}}}}}
\end{picture}}
\put(120,100){\makebox(0,0){\footnotesize{{\textit{{data-type meet}}}}}}
\put(120,88){\makebox(0,0){\large{$\boxbar$}}}
\put(38,80){\line(0,1){20}}
\put(38,80){\vector(1,0){20}}
\put(110,80){\line(-1,0){20}}
\put(110,80){\vector(0,-1){21}}
\put(120,38){\vector(0,-1){15}}
\put(130,80){\vector(0,-1){21}}
\put(130,80){\line(1,0){20}}
\put(203,80){\line(0,1){20}}
\put(203,80){\vector(-1,0){20}}
\end{picture}
\end{tabular}}}}
\end{center}
\caption{\texttt{FOLE} Data-type Meet Flow Chart}
\label{fig:fole:boole:meet:flo:chrt}
\end{figure}
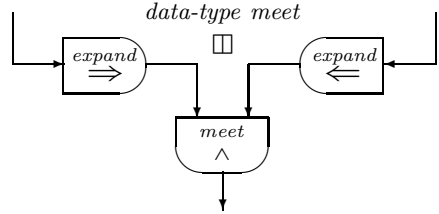
The data-type meet for tables 
is the relational counterpart 
(like the natural join)
of
the logical conjunction for predicates.
Where the meet operation is the analogue for logical conjunction 
at the small scope $\mathrmbf{Tbl}(\mathcal{D})$ of a signed domain table fiber,
the data-type meet is defined at the intermediate scope of a signature table fiber
(in contrast to the natural join, 
which is define at the intermediate scope of a type domain table fiber).
%
%
We focus on tables in the context $\mathrmbf{Tbl}(\mathcal{S})$ 
for fixed signature (header) $\mathcal{S} = {\langle{I,x,X}\rangle}$.
We use the following routes of flow.
\begin{center}
{{\begin{tabular}{c}
\setlength{\unitlength}{0.6pt}
\begin{picture}(320,60)(0,5)
%
%
\put(100,55){\makebox(0,0){\huge{
$\overset{\textit{\scriptsize{expand}}}{\Rightarrow}$}}}
\put(200,55){\makebox(0,0){\huge{
$\overset{\textit{\scriptsize{expand}}}{\Leftarrow}$}}}
\put(150,30){\makebox(0,0){\huge{
$\overset{\textit{\scriptsize{meet}}}{\Downarrow}$}}}
%
\put(90,48){\line(1,0){54}}
\put(144,38){\oval(20,20)[tr]}
\put(166,38){\oval(20,20)[tl]}
\put(220,48){\line(-1,0){54}}
\put(154,9){\line(0,1){29}}
\put(156,9){\line(0,1){29}}
\put(155,6){\vector(0,-1){10}}
\end{picture}
\end{tabular}}}
\end{center}
\begin{description}
\item[Constraint:] 
The constraint for data-type meet is the same as the constraint for data-type join: 
an $X$-sorted type domain opspan 
$\mathcal{A}_{1}\xrightarrow{g_{1}}\mathcal{A}\xleftarrow{g_{2}}\mathcal{A}_{2}$
consisting of 
a span of data value functions
$Y_{1}\xleftarrow{g_{1}}Y\xrightarrow{g_{2}}Y_{2}$
(Tbl.\,\ref{tbl:fole:boolean:meet:input:output}).
\newline
\item[Construction:] 
The construction for data-type meet is the same as the construction for data-type join:
the span
{\footnotesize{$\mathcal{A}_{1} 
\xleftarrow{\;\tilde{g}_{1}\,} 
{\mathcal{A}_{1}{\times_{\mathcal{A}}}\mathcal{A}_{2}}
\xrightarrow{\;\tilde{g}_{2}} 
\mathcal{A}_{2}$}\normalsize}
(Tbl.\,\ref{tbl:fole:boolean:meet:input:output}).
\newline
\item[Input:] 
The input for data-type meet 
is the same as the input for data-type join
(Tbl.\,\ref{tbl:fole:boolean:meet:input:output}):
a pair of tables
$\mathcal{T}_{1} 
\in \mathrmbf{Tbl}_{\mathcal{S}}(\mathcal{A}_{1})$
and
$\mathcal{T}_{2} 
\in \mathrmbf{Tbl}_{\mathcal{S}}(\mathcal{A}_{2})$.
\newpage
\item[Output:] 
The output is expansion (twice) followed by meet.
\newline
%
\begin{itemize}
\item 
Expansion 
$\mathrmbf{Tbl}_{\mathcal{S}}(\mathcal{A}_{1})
{\;\xrightarrow{\acute{\mathrmbfit{tbl}}_{\mathcal{S}}(\tilde{g}_{1})}\;}
\mathrmbf{Tbl}_{\mathcal{S}}(\mathcal{A}_{1}{\times_{\mathcal{A}}}\mathcal{A}_{2})$
along the tuple function of the $X$-type domain morphism 
{\footnotesize{$\mathcal{A}_{1} 
\xleftarrow{\;\tilde{g}_{1}\,} 
{\mathcal{A}_{1}{\times_{\mathcal{A}}}\mathcal{A}_{2}}$}\normalsize}
maps
the $\mathcal{S}$-table
$\mathcal{T}_{1}$
to the $\mathcal{S}$-table
$\widetilde{\mathcal{T}}_{1}
= \acute{\mathrmbfit{tbl}}_{\mathcal{S}}(\tilde{g}_{1})(\mathcal{T}_{1})
= {\langle{K_{1},\tilde{t}_{1}}\rangle} 
\in \mathrmbf{Tbl}_{\mathcal{S}}(\mathcal{A}_{1}{\times_{\mathcal{A}}}\mathcal{A}_{2})$,
with its tuple function
$K_{1} \xrightarrow{\tilde{t}_{1}} 
\mathrmbfit{tup}_{\mathcal{S}}
(\mathcal{A}_{1}{\times_{\mathcal{A}}}\mathcal{A}_{2})$
defined by composition,
$\tilde{t}_{1} 
= t_{1}{\,\cdot\,}\mathrmbfit{tup}_{\mathcal{S}}(\tilde{g}_{1})$. 
This is linked to the table $\mathcal{T}_{1}$ 
by the $\mathcal{S}$-table morphism 
{\footnotesize{{$
\mathcal{T}_{1} = {\langle{\mathcal{A}_{1}.K_{1},t_{1}}\rangle}
\xrightarrow{{\langle{1,\tilde{g}_{1}}\rangle}} 
{\langle{
\mathcal{A}_{1}{\times_{\mathcal{A}}}\mathcal{A}_{2},K_{1},\tilde{t}_{1}}\rangle} 
= \widetilde{\mathcal{T}}_{1}
$.}}\normalsize}
Similarly for $\mathcal{S}$-table
$\mathcal{T}_{2} = {\langle{K_{2},t_{2}}\rangle} \in 
\mathrmbf{Tbl}_{\mathcal{S}}(\mathcal{A}_{2})$.
\newline
%
\item 
Intersection (\S\,\ref{sub:sub:sec:boole})
of the two expansion tables 
$\widetilde{\mathcal{T}}_{1}$
and 
$\widetilde{\mathcal{T}}_{2}$
in the context 
$\mathrmbf{Tbl}_{\mathcal{S}}(\mathcal{A}_{1}{\times}_{\mathcal{A}}\mathcal{A}_{2})$
defines the data-type meet 
${\mathcal{T}_{1}}{\,\boxbar_{\mathcal{S}}}{\mathcal{T}_{2}}
= \widetilde{\mathcal{T}}_{1} \wedge \widetilde{\mathcal{T}}_{2}
= {\langle{K_{1}{\times}K_{2},(\tilde{t}_{1},\tilde{t}_{2})}\rangle}$,
whose key set is the product $K_{1}{\times}K_{2}$
and whose tuple map
$K_{1}{\times}K_{2}\xrightarrow{(\tilde{t}_{1},\tilde{t}_{2})}
\mathrmbfit{tup}_{\mathcal{A}}(\mathcal{S})$
maps a pair of keys 
$(k_{1},k_{2}) \in K_{1}{\times}K_{2}$
to the common tuple
$\tilde{t}_{1}(k_{1})=\tilde{t}_{2}(k_{2}) 
\in 
\mathrmbfit{tup}_{\mathcal{S}}
(\mathcal{A}_{1}{\times_{\mathcal{A}}}\mathcal{A}_{2})$.
Intersection is the product in $\mathrmbf{Tbl}_{\mathcal{S}}(\mathcal{A})$
with span
$\widetilde{\mathcal{T}}_{1}\xleftarrow{\hat{\pi}_{1}}
\widetilde{\mathcal{T}}_{1}{\,\wedge\,}\widetilde{\mathcal{T}}_{2}
\xrightarrow{\hat{\pi}_{2}}\widetilde{\mathcal{T}}_{2}$.
%
\newline
\end{itemize}
Expansion composed with meet 
defines the (\textsf{W}-shaped) multi-span of $\mathcal{S}$-table morphisms
\[\mbox
{\footnotesize{$
\mathcal{T}_{1} 
\xrightarrow{{\langle{1,\tilde{g}_{1}}\rangle}} 
\widetilde{\mathcal{T}}_{1}
\xleftarrow{\hat{\pi}_{1}}
{\mathcal{T}_{1}}{\,\boxbar_{\mathcal{S}}}{\mathcal{T}_{2}}
\xrightarrow{\hat{\pi}_{2}}
\widetilde{\mathcal{T}}_{2}
\xleftarrow{{\langle{1,\tilde{g}_{2}}\rangle}} 
\mathcal{T}_{2}
$,}\normalsize}
\]
which is the output for data-type meet 
(Tbl.\,\ref{tbl:fole:boolean:join:input:output}).
%
%
\footnote{The data-type meet of a disjoint sum is the empty table.}
%
\end{description}
Data-type meet 
is expansion followed by meet.
This is the two-step process 
\newline\mbox{}\hfill
\rule[-10pt]{0pt}{26pt}
$\mathcal{T}_{1}{\,\boxbar_{\mathcal{S}}}\mathcal{T}_{2} 
\doteq  
\acute{\mathrmbfit{tbl}}_{\mathcal{S}}(\tilde{g}_{1})(\mathcal{T}_{1})
{\;\wedge\;}
\acute{\mathrmbfit{tbl}}_{\mathcal{S}}(\tilde{g}_{2})(\mathcal{T}_{2})$.
%
\footnote{The data-type meet,
{{$\mathcal{T}_{1} 
\xrightarrow{{\langle{1,\tilde{g}_{1}}\rangle}} 
\widetilde{\mathcal{T}}_{1}
\xleftarrow{\hat{\pi}_{1}}
\mathcal{T}_{1}{\,\boxbar_{\mathcal{S}}}\mathcal{T}_{2}
\xrightarrow{\hat{\pi}_{2}}
\widetilde{\mathcal{T}}_{2}
\xleftarrow{{\langle{1,\tilde{g}_{2}}\rangle}} 
\mathcal{T}_{2}$,}}
is shielded from and 
has no direct connection to 
either table $\mathcal{T}_{1}$ or table $\mathcal{T}_{2}$.
This is comparable with
the filtered join in \S\,\ref{sub:sub:sec:filtered:join},
$\mathcal{T}_{1} 
\xleftarrow{{\langle{g_{1},k_{1}}\rangle}} 
\widehat{\mathcal{T}}_{1} 
\xrightarrow{\;\check{\iota}_{1}\;} 
\mathcal{T}_{1}{\,\varominus_{\mathcal{S}}\,}\mathcal{T}_{2}
\xleftarrow{\;\check{\iota}_{2}\;}
\widehat{\mathcal{T}}_{2}
\xrightarrow{{\langle{g_{2},k_{2}}\rangle}} 
\mathcal{T}_{2}$, 
which also is shielded from and 
has no direct connection to 
either table $\mathcal{T}_{1}$ or table $\mathcal{T}_{2}$.}
%
\hfill\mbox{}\newline
\begin{table}
\begin{center}
{{\fbox{\begin{tabular}{c}
\setlength{\extrarowheight}{2pt}
{\scriptsize{$\begin{array}[c]{c@{\hspace{12pt}}l}
\mathcal{A}_{1}\xrightarrow{g_{1}}\mathcal{A}\xleftarrow{g_{2}}\mathcal{A}_{2}
&
\textit{constraint}
\\
\mathcal{A}_{1} 
\xleftarrow{\;\tilde{g}_{1}\,} 
{\mathcal{A}_{1}{\times_{\mathcal{A}}}\mathcal{A}_{2}}
\xrightarrow{\;\tilde{g}_{2}} 
\mathcal{A}_{2}
&
\textit{construction}
\\
\hline
\mathcal{T}_{1}\in\mathrmbf{Tbl}_{\mathcal{S}}(\mathcal{A}_{1})
\text{ and }
\mathcal{T}_{2}\in\mathrmbf{Tbl}_{\mathcal{S}}(\mathcal{A}_{2})
&
\textit{input}
\\
\mathcal{T}_{1} 
\xrightarrow{{\langle{1,\tilde{g}_{1}}\rangle}} 
\widetilde{\mathcal{T}}_{1}
\xleftarrow{\hat{\pi}_{1}}
\mathcal{T}_{1}{\,\boxbar_{\mathcal{S}}}\mathcal{T}_{2}
\xrightarrow{\hat{\pi}_{2}}
\widetilde{\mathcal{T}}_{2}
\xleftarrow{{\langle{1,\tilde{g}_{2}}\rangle}} 
\mathcal{T}_{2}
&
\textit{output}
\end{array}$}}
\end{tabular}}}}
\end{center}
\caption{\texttt{FOLE} Data-type Meet I/O}
\label{tbl:fole:boolean:meet:input:output}
\end{table}
%



%
\newpage
\subsection{Subtraction.}
\label{sub:sub:sec:subtrac}

%
\begin{figure}
\begin{center}
{{{\begin{tabular}{c}
\begin{picture}(160,75)(37,27)
\setlength{\unitlength}{0.97pt}
\put(146.5,65){\begin{picture}(0,0)(0,0)
\setlength{\unitlength}{0.35pt}
\put(40,10){\line(1,0){60}}
\put(40,70){\line(1,0){60}}
\put(100,70){\line(0,-1){60}}
\put(40,40){\oval(60,60)[bl]}
\put(40,40){\oval(60,60)[tl]}
\put(58,50){\makebox(0,0){\scriptsize{{\textit{{expand}}}}}}
\put(56,30){\makebox(0,0){\Large{${\Leftarrow}$}}}
\end{picture}}
\put(98,37){\begin{picture}(0,0)(0,3)
\setlength{\unitlength}{0.35pt}
\put(60,30){\makebox(0,0){\normalsize{$-$}}}
\put(40,10){\line(1,0){40}}
\put(10,70){\line(1,0){100}}
\put(10,70){\line(0,-1){30}}
\put(110,70){\line(0,-1){30}}
\put(40,40){\oval(60,60)[bl]}
\put(80,40){\oval(60,60)[br]}
\put(60,55){\makebox(0,0){\scriptsize{{\textit{{diff}}}}}}
\end{picture}}
\put(120,100){\makebox(0,0){\footnotesize{{\textit{{subtraction}}}}}}
\put(120,88){\makebox(0,0){\large{$\thicksim$}}}
\put(70,80){\line(0,1){20}}
\put(110,80){\line(-1,0){40}}
\put(110,80){\vector(0,-1){21}}
\put(120,38){\vector(0,-1){15}}
\put(130,80){\vector(0,-1){21}}
\put(130,80){\line(1,0){20}}
\put(203,80){\line(0,1){20}}
\put(203,80){\vector(-1,0){20}}
\end{picture}
\end{tabular}}}}
\end{center}
\caption{\texttt{FOLE} Subtraction Flow Chart}
\label{fole:subtraction:flo:chrt}
\end{figure}
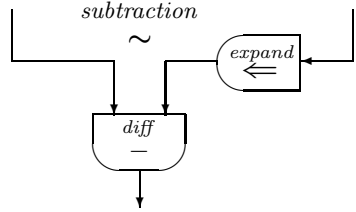
%
Here we define the operation of subtraction 
in terms of the basic operations. 
Let $\mathcal{S} = {\langle{I,x,X}\rangle}$
be a fixed signature.
Let
$\mathcal{T}$ and $\mathcal{T}_{2}$ be two $\mathcal{S}$-tables.
For subtraction,
we assume that the data values in the body of $\mathcal{T}_{2}$ 
are a subset of those of $\mathcal{T}$. 
Hence,
assume that 
$\mathcal{T}$ 
is a table in $\mathrmbf{Tbl}_{\mathcal{S}}(\mathcal{A})$,
$\mathcal{T}_{2}$ is a table in $\mathrmbf{Tbl}_{\mathcal{S}}(\mathcal{A}_{2})$,
and that these are connected with
an $X$-type domain morphism
{\footnotesize{$\mathcal{A}
\xrightarrow{\;g\;}\mathcal{A}_{2}$}\normalsize}
with data value function
{\footnotesize{$Y\xleftarrow{\;g}Y_{2}$.}\normalsize}
%
%
%
The subtraction $\mathcal{T}{\,\thicksim\,}\mathcal{T}_{2}$ 
consists of 
the expansion of $\mathcal{T}_{2}$
along the $X$-type domain morphism
{\footnotesize{$\mathcal{A}
\xrightarrow{\;g\;}\mathcal{A}_{2}$}\normalsize}
to the $X$-type domain $\mathcal{A}$, 
followed by subtraction from $\mathcal{T}$
in the signed domain
${\langle{\mathcal{S},\mathcal{A}}\rangle}$.
Subtraction is defined by the two-step process
illustrated in 
Fig.\;\ref{fole:subtraction:flo:chrt}.
\begin{description}
\item[Constraint/Construction:] 
The constraint and construction for subtraction are the same 
(Tbl.\,\ref{tbl:fole:subtraction:input:output}):
an $X$-type domain morphism
{\footnotesize{$\mathcal{A}\xrightarrow{\;g\;}\mathcal{A}_{2}$}\normalsize}
with data value function
{\footnotesize{$Y\xleftarrow{\;g}Y_{2}$.}\normalsize}
%
%
\newline
\comment{
\item[Construction:] 
The construction for subtraction 
(Tbl.\,\ref{tbl:fole:subtraction:input:output})
is the same as 
the constraint for subtraction. 
%
%
%
\newline
}
\item[Input:] 
Consider a pair of tables
$\mathcal{T}= {\langle{K,t}\rangle} \in 
\mathrmbf{Tbl}_{\mathcal{S}}(\mathcal{A})$
and
$\mathcal{T}_{2} 
= {\langle{K_{2},t_{2}}\rangle} 
\in \mathrmbf{Tbl}_{\mathcal{S}}(\mathcal{A}_{2})$.
This is the input for subtraction 
(Tbl.\,\ref{tbl:fole:subtraction:input:output}).
\newline
\item[Output:] 
Subtraction is expansion (once) followed by difference.
\begin{itemize}
\item 
Expansion 
$\mathrmbf{Tbl}_{\mathcal{S}}(\mathcal{A})
{\;\xleftarrow{\acute{\mathrmbfit{tbl}}_{\mathcal{S}}(g)}\;}
\mathrmbf{Tbl}_{\mathcal{S}}(\mathcal{A}_{2})$
along the tuple function of the $X$-type domain morphism 
{\footnotesize{$\mathcal{A}
\xrightarrow{\;g\;}\mathcal{A}_{2}$}\normalsize}
maps
the $\mathcal{S}$-table
$\mathcal{T}_{2} \in \mathrmbf{Tbl}_{\mathcal{S}}(\mathcal{A}_{2})$
to the $\mathcal{S}$-table
$\widetilde{\mathcal{T}}_{2}
= \acute{\mathrmbfit{tbl}}_{\mathcal{S}}(g)(\mathcal{T}_{2})
= {\langle{K_{2},\tilde{t}_{2}}\rangle} 
\in \mathrmbf{Tbl}_{\mathcal{S}}(\mathcal{A})$
with its tuple function
$K_{2} \xrightarrow{\tilde{t}_{2}} 
\mathrmbfit{tup}_{\mathcal{S}}(\mathcal{A})$
defined by composition,
$\tilde{t}_{2} 
= t_{2}{\,\cdot\,}\mathrmbfit{tup}_{\mathcal{S}}(g)$. 
This is linked to the table $\mathcal{T}_{2}$ 
by the $\mathcal{S}$-table morphism 
{\footnotesize{{$
\mathcal{T}_{2} = {\langle{\mathcal{A}_{2}.K_{2},t_{2}}\rangle}
\xrightarrow{{\langle{1,g}\rangle}} 
{\langle{\mathcal{A},K_{2},\tilde{t}_{2}}\rangle} 
= \widetilde{\mathcal{T}}_{2}
$.}}\normalsize}
%
\newline
%
\item 
The difference 
$
\mathcal{T}{\,\thicksim\,}\mathcal{T}_{2} 
=
\mathcal{T}{\,-\,}\widetilde{\mathcal{T}}_{2} 
=
\mathcal{T}{\,-\,}\acute{\mathrmbfit{tbl}}_{\mathcal{S}}(g)(\mathcal{T}_{2})
= {\langle{\bar{K}_{2},\bar{t}_{2}}\rangle}$
is the desired table in 
$\mathrmbf{Tbl}_{\mathcal{S}}(\mathcal{A})$
with key set
$\bar{K}_{2}$ 
and tuple map
$\bar{t}_{2} : \bar{K}_{2} \xhookrightarrow{\bar{k}}
K_{2} \xrightarrow{\tilde{t}_{2}} 
\mathrmbfit{tup}_{\mathcal{S}}(\mathcal{A})$.
\newline
\end{itemize}
%
%
Expansion composed with difference defines the inclusion $\mathcal{S}$-table morphism
$\mathcal{T}\xhookleftarrow{\bar{\omega}}\mathcal{T}{\,\thicksim\,}\mathcal{T}_{2}$,
which is the output for subtraction 
(Tbl.\,\ref{tbl:fole:subtraction:input:output}).
\end{description}
Subtraction
is expansion followed by difference.
This is the two-step process 
\newline\mbox{}\hfill
\rule[-10pt]{0pt}{26pt}
$\mathcal{T}{\,\thicksim\,}\mathcal{T}_{2} 
\doteq  
\mathcal{T}{\,-\,}\acute{\mathrmbfit{tbl}}_{\mathcal{S}}(g)(\mathcal{T}_{2})$.
\hfill\mbox{}\newline
\begin{table}
\begin{center}
{{\fbox{\begin{tabular}{c}
\setlength{\extrarowheight}{2pt}
{\scriptsize{$\begin{array}[c]{c@{\hspace{12pt}}l}
\mathcal{A}\xrightarrow{\;g\;}\mathcal{A}_{2}
&
\textit{constraint}
\\
\mathcal{A}\xrightarrow{\;g\;}\mathcal{A}_{2}
&
\textit{construction}
\\
\hline
\mathcal{T} \in \mathrmbf{Tbl}_{\mathcal{S}}(\mathcal{A})
\text{ and }
\mathcal{T}_{2} \in \mathrmbf{Tbl}_{\mathcal{S}}(\mathcal{A}_{2})
&
\textit{input}
\\
\mathcal{T}\xhookleftarrow{\bar{\omega}}\mathcal{T}{\,\thicksim\,}\mathcal{T}_{2}
&
\textit{output}
\end{array}$}}
\end{tabular}}}}
\end{center} 
\caption{\texttt{FOLE} Subtraction I/O}
\label{tbl:fole:subtraction:input:output}
\end{table}
\begin{proposition}\label{subtract:morph}
There is a table morphism
\footnote{Analogous to 
$B \subseteq (B \setminus A) \cup A = B \cup A$.}
\[\mbox
{\footnotesize{{$
\mathcal{T}\rightarrow
(\mathcal{T}{\,\thicksim\,}\mathcal{T}_{2})
{\;\oplus\;}\mathcal{T}_{2}
$.}}\normalsize}
\]
\end{proposition}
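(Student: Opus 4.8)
The plan is to reduce the data-type join on the right-hand side to an ordinary union inside a single table fiber, and then to run the set-theoretic argument $B\subseteq(B\setminus A)\cup A$ at the level of image relations via the reflection of Prop.~\ref{tbl:rel:refl}.

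First I would pin down the data-type join. The two arguments $\mathcal{T}{\,\thicksim\,}\mathcal{T}_{2}\in\mathrmbf{Tbl}_{\mathcal{S}}(\mathcal{A})$ and $\mathcal{T}_{2}\in\mathrmbf{Tbl}_{\mathcal{S}}(\mathcal{A}_{2})$ are linked by the type domain morphism $\mathcal{A}\xrightarrow{\,g\,}\mathcal{A}_{2}$ of the subtraction constraint, which I complete to the opspan $\mathcal{A}\xrightarrow{\,g\,}\mathcal{A}_{2}\xleftarrow{\,1\,}\mathcal{A}_{2}$. Its pullback type domain $\mathcal{A}{\times_{\mathcal{A}_{2}}}\mathcal{A}_{2}$ is isomorphic to $\mathcal{A}$ (the right leg being an identity, the corresponding pushout of data value functions $Y\xleftarrow{\,g\,}Y_{2}\xrightarrow{\,1\,}Y_{2}$ collapses to $Y$), under which $\tilde{g}_{1}$ becomes the identity and $\tilde{g}_{2}$ becomes $g$. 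Hence, by the definition of data-type join (\S\,\ref{sub:sub:sec:boole:join}), $(\mathcal{T}{\,\thicksim\,}\mathcal{T}_{2}){\,\oplus\,}\mathcal{T}_{2}\cong(\mathcal{T}{\,\thicksim\,}\mathcal{T}_{2}){\,\vee\,}\widetilde{\mathcal{T}}_{2}$ in $\mathrmbf{Tbl}_{\mathcal{S}}(\mathcal{A})$, where $\widetilde{\mathcal{T}}_{2}=\acute{\mathrmbfit{tbl}}_{\mathcal{S}}(g)(\mathcal{T}_{2})$ is exactly the expanded table used in the subtraction, so that $\mathcal{T}{\,\thicksim\,}\mathcal{T}_{2}=\mathcal{T}{\,-\,}\widetilde{\mathcal{T}}_{2}$.

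Second I would reduce the existence of a table morphism to an inclusion of image relations. In any fiber $\mathrmbf{Tbl}(\mathcal{D})$ a morphism $\mathcal{T}\to\mathcal{T}'$ exists iff $\mathrmbfit{im}_{\mathcal{D}}(\mathcal{T})\subseteq\mathrmbfit{im}_{\mathcal{D}}(\mathcal{T}')$: the forward direction is immediate from the morphism condition $\phi{\,\cdot\,}t'=t$, and the converse picks, for each key $k$ with tuple $t(k)$, some key of $\mathcal{T}'$ carrying the same tuple, which exists precisely by the inclusion. Using the image-relation identities of \S\,\ref{sub:sub:sec:boole} (that $\mathrmbfit{im}$ sends $-$ to set difference and $\vee$ to union), I compute
\[\mbox{\footnotesize{$
\mathrmbfit{im}\bigl((\mathcal{T}{\,-\,}\widetilde{\mathcal{T}}_{2}){\,\vee\,}\widetilde{\mathcal{T}}_{2}\bigr)
=\bigl(\mathrmbfit{im}(\mathcal{T}){\,-\,}\mathrmbfit{im}(\widetilde{\mathcal{T}}_{2})\bigr)\cup\mathrmbfit{im}(\widetilde{\mathcal{T}}_{2})
=\mathrmbfit{im}(\mathcal{T})\cup\mathrmbfit{im}(\widetilde{\mathcal{T}}_{2})
\supseteq\mathrmbfit{im}(\mathcal{T})
$.}}\]
This is the image-relation form of $B\subseteq(B\setminus A)\cup A$, and it yields the desired morphism $\mathcal{T}\to(\mathcal{T}{\,\thicksim\,}\mathcal{T}_{2}){\,\oplus\,}\mathcal{T}_{2}$.

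The main obstacle I anticipate is the bookkeeping in the first step: confirming that completing $g$ to the trivial opspan makes the pullback type domain isomorphic to $\mathcal{A}$, and that the resulting expansion $\acute{\mathrmbfit{tbl}}_{\mathcal{S}}(\tilde{g}_{2})(\mathcal{T}_{2})$ agrees (up to this isomorphism) with the expanded table $\widetilde{\mathcal{T}}_{2}$ of the subtraction. Once that identification is in place the data-type join genuinely becomes an in-fiber union, and the remainder is the routine set-theoretic inclusion carried along the reflection, requiring no further structure.
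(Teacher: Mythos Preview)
Your proposal is correct and follows essentially the same approach as the paper: reduce the data-type join to the in-fiber union $(\mathcal{T}{\,-\,}\widetilde{\mathcal{T}}_{2}){\,\vee\,}\widetilde{\mathcal{T}}_{2}$ and then invoke the set-theoretic inclusion $B\subseteq(B\setminus A)\cup A$ at the level of image relations via the reflection of Prop.~\ref{tbl:rel:refl}. Your treatment is in fact more careful than the paper's, which simply asserts the isomorphism $(\mathcal{T}{\,\thicksim\,}\mathcal{T}_{2}){\;\oplus\;}\mathcal{T}_{2}\cong(\mathcal{T}{\,-\,}\widetilde{\mathcal{T}}_{2}){\,\vee\,}\widetilde{\mathcal{T}}_{2}$ without spelling out the trivial-opspan reduction you give.
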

\begin{proof}
%
\footnote{Argument in terms of the underlying relations in the reflection
Prop.\;\ref{tbl:rel:refl}.}
%
$\mathcal{T}{\,\thicksim\,}\mathcal{T}_{2}$ has all the tuples in $\mathcal{T}$,
except for those in 
$\widetilde{\mathcal{T}}_{2} = \mathrmbfit{tup}_{\mathcal{S}}(g)(\mathcal{T}_{2})$.
The latter
can be ``added'' back
with the 
join 
$(\mathcal{T}{\,\thicksim\,}\mathcal{T}_{2}){\;\oplus\;}\mathcal{T}_{2}
\cong
(\mathcal{T}{\,-\,}\widetilde{\mathcal{T}}_{2}) 
{\;\vee\;}
\widetilde{\mathcal{T}}_{2}
$, 
thus defining a table morphism
$\mathcal{T}\rightarrow
(\mathcal{T}{\,\thicksim\,}\mathcal{T}_{2})
{\;\oplus\;}\mathcal{T}_{2}$.
\comment{Here we make use of the ``contraction'' expansion
{\footnotesize{$\mathrmbf{Tbl}_{\mathcal{S}}(\mathcal{A}_{2})
{\;\xleftarrow
{\acute{\mathrmbfit{tbl}}_{\mathcal{S}}(\triangledown)}
\;}
\mathrmbf{Tbl}_{\mathcal{S}}(\mathcal{A}_{2}{\times}\mathcal{A}_{2})$}}
for the $X$-type domain morphism
$\mathcal{A}_{2}
\xrightarrow{\;\triangledown\;}
\mathcal{A}_{2}{\times}\mathcal{A}_{2}$
with (surjective) data value function
{\footnotesize{$Y_{2}\xleftarrow{\;\triangledown\;}Y_{2}{+}Y_{2}$}\normalsize}
mentioned in 
\S\,\ref{sub:sub:sec:adj:flow:S}.
}
\comment{The sum
$(\mathcal{T}{\,\thicksim\,}\mathcal{T}_{2}){\;\oplus\;}\mathcal{T}_{2}$
is a table in 
$\mathrmbf{Tbl}_{\mathcal{S}}
(\mathcal{A}{\times}\mathcal{A}_{2})$.
\comment{with product $X$-type domain 
$\mathcal{A}_{1}{\times}\mathcal{A}_{2}{\times}\mathcal{A}_{2}$ 
with data value functions
{\footnotesize{$Y_{1}\xhookrightarrow{\tilde{0}_{1}\,} 
Y_{1}{+}Y_{2}{+}Y_{2}
\xhookleftarrow{\;\tilde{0}_{2}}Y_{2}$.}\normalsize}}}
%
\hfill\rule{5pt}{5pt}
\end{proof}
%

%
\newpage
\subsection{Division.}\label{sub:sub:sec:div}

%
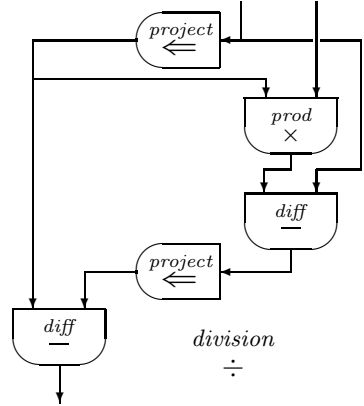
\begin{figure}
\begin{center}
{{{\begin{tabular}{c}
\begin{picture}(120,130)(32.5,30)
\setlength{\unitlength}{0.73pt}
\put(95,191){\begin{picture}(0,0)(0,0)
\setlength{\unitlength}{0.35pt}
\put(40,10){\line(1,0){60}}
\put(40,70){\line(1,0){60}}
\put(100,70){\line(0,-1){60}}
\put(40,40){\oval(60,60)[bl]}
\put(40,40){\oval(60,60)[tl]}
\put(58,50){\makebox(0,0){\scriptsize{{\textit{{project}}}}}}
\put(56,30){\makebox(0,0){\Large{${\Leftarrow}$}}}
\end{picture}}
\put(151,149.7){\begin{picture}(0,0)(0,3)
\setlength{\unitlength}{0.35pt}
\put(60,29){\makebox(0,0){\normalsize{$\times$}}}
\put(40,10){\line(1,0){40}}
\put(9,70){\line(1,0){102}}
\put(10,70){\line(0,-1){30}}
\put(110,70){\line(0,-1){30}}
\put(40,40){\oval(60,60)[bl]}
\put(80,40){\oval(60,60)[br]}
\put(61,50){\makebox(0,0){\scriptsize{{\textit{{prod}}}}}}
\end{picture}}
\put(151,100){\begin{picture}(0,0)(0,3)
\setlength{\unitlength}{0.35pt}
\put(50,33){\line(1,0){20}}
\put(40,10){\line(1,0){40}}
\put(10,70){\line(1,0){100}}
\put(10,70){\line(0,-1){30}}
\put(110,70){\line(0,-1){30}}
\put(40,40){\oval(60,60)[bl]}
\put(80,40){\oval(60,60)[br]}
\put(61,50){\makebox(0,0){\scriptsize{{\textit{{diff}}}}}}
\end{picture}}
\put(95,71){\begin{picture}(0,0)(0,0)
\setlength{\unitlength}{0.35pt}
\put(40,10){\line(1,0){60}}
\put(40,70){\line(1,0){60}}
\put(100,70){\line(0,-1){60}}
\put(40,40){\oval(60,60)[bl]}
\put(40,40){\oval(60,60)[tl]}
\put(58,50){\makebox(0,0){\scriptsize{{\textit{{project}}}}}}
\put(56,30){\makebox(0,0){\Large{${\Leftarrow}$}}}
\end{picture}}
\put(31,40){\begin{picture}(0,0)(0,3)
\setlength{\unitlength}{0.35pt}
\put(50,33){\line(1,0){20}}
\put(40,10){\line(1,0){40}}
\put(10,70){\line(1,0){100}}
\put(10,70){\line(0,-1){30}}
\put(110,70){\line(0,-1){30}}
\put(40,40){\oval(60,60)[bl]}
\put(80,40){\oval(60,60)[br]}
\put(61,50){\makebox(0,0){\scriptsize{{\textit{{diff}}}}}}
\end{picture}}
\put(150,56){\makebox(0,0){\footnotesize{{\textit{{division}}}}}}
\put(150,40){\makebox(0,0){\large{$\div$}}}
\put(193,230){\vector(0,-1){50}}
\put(154,210){\line(0,1){20}}\put(154,210){\vector(-1,0){11}}
\put(154,210){\line(1,0){37}}\put(195,210){\line(1,0){21}}
\put(216,210){\line(0,-1){67}}\put(216,143){\line(-1,0){23}}\put(193,143){\vector(0,-1){12}}
\put(46,210){\line(1,0){53}}\put(46,210){\vector(0,-1){139}}
\put(46,190){\line(1,0){121}}\put(167,190){\vector(0,-1){10}}
\put(166,143){\vector(0,-1){12}}
\put(165.8,143){\line(1,0){14.4}}
\put(180,143){\line(0,1){8.8}}
\put(180,90){\vector(-1,0){37}}\put(180,90){\line(0,1){12}}
\put(73,90){\line(1,0){26}}\put(73,90){\vector(0,-1){19}}
\put(60,42){\vector(0,-1){20}}
\end{picture}
\end{tabular}}}}
\end{center}
\caption{\texttt{FOLE} Division Flow Chart}
\label{fole:division:flo:chrt}
\end{figure}
Here we define the operation of division 
in terms of the basic operations. 
Let $\mathcal{A} = {\langle{X,Y,\models_{\mathcal{A}}}\rangle}$
be a fixed type domain.
Let
$\mathcal{T}$ and $\mathcal{T}_{2}$ be two $\mathcal{A}$-tables.
For division,
we assume that the attribute names in the header of $\mathcal{T}_{2}$ 
are a subset of those of $\mathcal{T}$. 
The division $\mathcal{T}{\,\div_{\!\mathcal{A}}\,}\mathcal{T}_{2}$ 
consists of 
the contraction of tuples in $\mathcal{T}$ 
to the attribute names unique to $\mathcal{T}$
(the projection to $\mathcal{S}_{1}$),
for which it holds that 
all their combinations with tuples in $\mathcal{T}_{2}$ 
(the Cartesian product)
are present in $\mathcal{T}$ 
(take the difference in $\mathcal{S}_{1}{+}\mathcal{S}_{2}$, 
then project to $\mathcal{S}_{1}$, 
finally take the difference in $\mathcal{S}_{1}$).
Division is defined by the multi-step process
in Fig.\;\ref{fole:division:flo:chrt}.
\begin{description}
\item[Constraint:] 
The constraint for division (Tbl.\,\ref{tbl:fole:division:input:output})
is the same as the constraint for Cartesian product:
two $X$-signatures $\mathcal{S}_{1}$ and $\mathcal{S}_{2}$.
These are linked by the span of $X$-signatures 
{\footnotesize{$\mathcal{S}_{1}\xhookleftarrow{0_{I_{1}}} 
\mathcal{S}_{\bot}
\xhookrightarrow{0_{I_{2}}}\mathcal{S}_{2}$}\normalsize}
in $\mathrmbf{List}(X)$
consisting of 
a span of injection index functions
{\footnotesize{$I_{1}
\xhookleftarrow{0_{I_{1}}}
\emptyset
\xhookrightarrow{0_{I_{2}}}
I_{2}$.}}
\newline
\item[Construction:] 
The construction for division 
(Tbl.\,\ref{tbl:fole:division:input:output})
is the same as the construction for Cartesian product:
%
the opspan
{\footnotesize{$\mathcal{S}_{1} 
\xhookrightarrow{\iota_{1}\,} 
{\mathcal{S}_{1}{+\;}\mathcal{S}_{2}}
\xhookleftarrow{\;\iota_{2}} 
\mathcal{S}_{2}$}\normalsize}
of injection $X$-signature morphisms
with
coproduct $X$-signature
$\mathcal{S}_{1}{+}\mathcal{S}_{2}$
and injection
index function opspan
{\footnotesize{$I_{1} 
\xhookrightarrow{\iota_{1}\,} 
{\langle{I_{1}{+}I_{2},[s_{1},s_{2}]}\rangle}
\xhookleftarrow{\;\iota_{2}}I_{2}.$}\normalsize}
\newline
\item[Input:] Consider a pair of tables
$\mathcal{T} \in \mathrmbf{Tbl}_{\mathcal{A}}(\mathcal{S}_{1}{+}\mathcal{S}_{2})$
and
$\mathcal{T}_{2} \in \mathrmbf{Tbl}_{\mathcal{A}}(\mathcal{S}_{2})$.
This is the input for division 
(Tbl.\,\ref{tbl:fole:division:input:output}).
\newpage
\item[Output:] 
Division is projection, Cartesian product, difference, projection and difference.
\newline
\begin{itemize}
\item 
Project table 
$\mathcal{T} \in \mathrmbf{Tbl}_{\mathcal{A}}(\mathcal{S}_{1}{+}\mathcal{S}_{2})$
back along the injection $X$-signature morphism
{\footnotesize{$\mathcal{S}_{1} 
\xhookrightarrow{\iota_{1}\,} 
{\mathcal{S}_{1}{+\;}\mathcal{S}_{2}}$}\normalsize}
to its unique attribute names $\mathcal{S}_{1}$,
getting the table
$\acute{\mathrmbfit{tbl}}_{\mathcal{A}}(\iota_{1})(\mathcal{T}) 
\in \mathrmbf{Tbl}_{\mathcal{A}}(\mathcal{S}_{1})$. 
\newline
%
\item 
Form the Cartesian product 
with the table 
$\mathcal{T}_{2} \in \mathrmbf{Tbl}_{\mathcal{A}}(\mathcal{S}_{2})$,
getting the table
$\widehat{\mathcal{T}}=
\acute{\mathrmbfit{tbl}}_{\mathcal{A}}(\iota_{1})(\mathcal{T}){\times}\mathcal{T}_{2} 
\in \mathrmbf{Tbl}_{\mathcal{A}}(\mathcal{S}_{1}{+}\mathcal{S}_{2})$. 
\newline
%
\item 
%
The difference 
$\bar{\mathcal{T}} =
\widehat{\mathcal{T}}{\,-\,}\mathcal{T}
\in \mathrmbf{Tbl}_{\mathcal{A}}(\mathcal{S}_{1}{+}\mathcal{S}_{2})$
gives the unwanted tuples.
\newline
%
%
\item 
The subtraction
$\mathcal{T}{\,\div_{\!\mathcal{A}}\,}\mathcal{T}_{2} = 
\acute{\mathrmbfit{tbl}}_{\mathcal{A}}(\iota_{1})(\mathcal{T})
{\,-\,}
\acute{\mathrmbfit{tbl}}_{\mathcal{A}}(\iota_{1})(\bar{\mathcal{T}})
\in \mathrmbf{Tbl}_{\mathcal{A}}(\mathcal{S}_{1})$
is the desired table. 
\newline
%
\end{itemize}
Hence,
the division of $\mathcal{T}$ by $\mathcal{T}_{2}$ 
is defined by the multi-step process:
\newline\mbox{}\hfill
\newline\mbox{}\hfill
$\mathcal{T}{\,\div_{\!\mathcal{A}}\,}\mathcal{T}_{2}
\doteq 
\acute{\mathrmbfit{tbl}}_{\mathcal{A}}(\iota_{1})(\mathcal{T})
{\,-\,}
\acute{\mathrmbfit{tbl}}_{\mathcal{A}}(\iota_{1})
\Bigl(
\bigl(\acute{\mathrmbfit{tbl}}_{\mathcal{A}}(\iota_{1})(\mathcal{T})
{\times}\mathcal{T}_{2}
\bigr)
{\,-\,}\mathcal{T}
\Bigr)$.
\hfill\mbox{}\newline
\end{description}
\begin{table}
\begin{center}
{{\fbox{\begin{tabular}{c}
\setlength{\extrarowheight}{2pt}
{\scriptsize{$\begin{array}[c]{c@{\hspace{12pt}}l}
\mathcal{S}_{1}  
\text{ and }
\mathcal{S}_{2} 
&
\textit{constraint}
\\
\mathcal{S}_{1} \xhookrightarrow{\iota_{1}\,} 
{\mathcal{S}_{1}{+}\mathcal{S}_{2}}
\xhookleftarrow{\;\iota_{2}}\mathcal{S}_{2}
&
\textit{construction}
\\
\hline
\mathcal{T} \in \mathrmbf{Tbl}_{\mathcal{A}}(\mathcal{S}_{1}{+}\mathcal{S}_{2})
\text{ and }
\mathcal{T}_{2} \in \mathrmbf{Tbl}_{\mathcal{A}}(\mathcal{S}_{2})
&
\textit{input}
\\
\mathcal{T}{\,\div_{\!\mathcal{A}}\,}\mathcal{T}_{2}
&
\textit{output}
\end{array}$}}
\end{tabular}}}}
\end{center} 
\caption{\texttt{FOLE} Division I/O}
\label{tbl:fole:division:input:output}
\end{table}
\begin{proposition}\label{div:morph}
There is a table morphism
\footnote{Analogous to 
$7{\;\geq\;}\underset{2}{\underbrace{(7{\,\div\,}3)}}{\;\times\;}3$
or modus ponens $q \leftarrow ((q \leftarrow p) \wedge p)$.}
\[\mbox
{\footnotesize{{$
\mathcal{T}\xhookleftarrow{\;\;}\rule[-10pt]{0pt}{26pt}
(\mathcal{T}{\,\div_{\!\mathcal{A}}\,}\mathcal{T}_{2}){\;\times\;}\mathcal{T}_{2}
$.}}\normalsize}
\]
\end{proposition}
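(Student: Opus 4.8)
The plan is to argue entirely at the level of the underlying relations, exactly as in the proofs of Prop.~\ref{anti:join} and Prop.~\ref{subtract:morph}, using the reflection ${\langle{\mathrmbfit{im}_{\mathcal{A}}{\;\dashv\;}\mathrmbfit{inc}_{\mathcal{A}}}\rangle}$ of Prop.~\ref{tbl:rel:refl}. Since the asserted arrow is a sub-table inclusion, it suffices to establish the tuple-subset containment $\mathrmbfit{im}_{\mathcal{A}}\bigl((\mathcal{T}{\,\div_{\!\mathcal{A}}\,}\mathcal{T}_{2}){\times}\mathcal{T}_{2}\bigr)\subseteq\mathrmbfit{im}_{\mathcal{A}}(\mathcal{T})$ inside $\mathrmbfit{tup}_{\mathcal{A}}(\mathcal{S}_{1}{+}\mathcal{S}_{2})$, and then to invoke reflection to lift this inclusion to the desired $\mathcal{A}$-table morphism. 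Throughout I would use the factorization $\mathrmbfit{tup}_{\mathcal{A}}(\mathcal{S}_{1}{+}\mathcal{S}_{2})\cong\mathrmbfit{tup}_{\mathcal{A}}(\mathcal{S}_{1}){\times}\mathrmbfit{tup}_{\mathcal{A}}(\mathcal{S}_{2})$ recorded for the Cartesian product in \S\,\ref{sub:sub:sec:nat:join}, writing a tuple as a pair $(p,q)$ with $p$ an $\mathcal{S}_{1}$-component and $q$ an $\mathcal{S}_{2}$-component.

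First I would unwind the definition of division into an explicit membership characterization. Writing $P=\acute{\mathrmbfit{tbl}}_{\mathcal{A}}(\iota_{1})(\mathcal{T})$ for the projection, the intermediate table $\bar{\mathcal{T}}=(P{\times}\mathcal{T}_{2}){\,-\,}\mathcal{T}$ collects the pairs $(p,q)$ with $p\in P$ and $q\in\mathcal{T}_{2}$ that fail to lie in $\mathcal{T}$; its projection $\acute{\mathrmbfit{tbl}}_{\mathcal{A}}(\iota_{1})(\bar{\mathcal{T}})$ therefore collects exactly those $p\in P$ for which \emph{some} $q\in\mathcal{T}_{2}$ has $(p,q)\notin\mathcal{T}$. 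Subtracting this from $P$ yields the key characterization: $p\in\mathcal{T}{\,\div_{\!\mathcal{A}}\,}\mathcal{T}_{2}$ if and only if $p\in P$ and $(p,q)\in\mathcal{T}$ for \emph{every} $q\in\mathcal{T}_{2}$. This is the relational form of ordinary division, matching the arithmetic intuition in the footnote.

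With this in hand the containment is immediate and forms the heart of the argument. Given any pair $(p,q)\in(\mathcal{T}{\,\div_{\!\mathcal{A}}\,}\mathcal{T}_{2}){\times}\mathcal{T}_{2}$, we have $p\in\mathcal{T}{\,\div_{\!\mathcal{A}}\,}\mathcal{T}_{2}$ and $q\in\mathcal{T}_{2}$; by the characterization $(p,q)\in\mathcal{T}$, so the product relation is contained in $\mathrmbfit{im}_{\mathcal{A}}(\mathcal{T})$. Reflection (Prop.~\ref{tbl:rel:refl}) then promotes this inclusion to the sub-table morphism $\mathcal{T}\xhookleftarrow{\;\;}(\mathcal{T}{\,\div_{\!\mathcal{A}}\,}\mathcal{T}_{2}){\times}\mathcal{T}_{2}$.

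The main obstacle is purely bookkeeping rather than conceptual: one must check that the two applications of $\acute{\mathrmbfit{tbl}}_{\mathcal{A}}(\iota_{1})$ together with the two set-theoretic differences interact correctly on the underlying relations, i.e. that existential quantification over the $\mathcal{S}_{2}$-component (projection, the existential quantifier of \S\,\ref{sub:sub:sec:adj:flow:A}) and complementation (difference) compose to produce the universal quantifier appearing in the characterization of the quotient. Using the image-relation identities for projection, product and difference from \S\,\ref{sub:sub:sec:boole} and \S\,\ref{sub:sub:sec:nat:join} keeps this step routine, so no genuine difficulty remains once the membership condition is pinned down.
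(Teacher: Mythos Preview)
Your proposal is correct and follows essentially the same route as the paper: both arguments work at the level of the underlying relations via the reflection of Prop.~\ref{tbl:rel:refl}, unwind the definition of $\mathcal{T}{\,\div_{\!\mathcal{A}}\,}\mathcal{T}_{2}$, and verify pointwise that every pair $(p,q)$ in the product lies in $\mathcal{T}$. The only cosmetic difference is that you first extract the universal characterization ``$p\in\mathcal{T}{\,\div_{\!\mathcal{A}}\,}\mathcal{T}_{2}$ iff $(p,q)\in\mathcal{T}$ for all $q\in\mathcal{T}_{2}$'' and then apply it, whereas the paper argues by contradiction directly from the raw definition (assuming $(t_{1},t_{2})\notin\mathcal{T}$ and deriving $t_{1}\in\acute{\mathrmbfit{tbl}}_{\mathcal{A}}(\iota_{1})(\bar{\mathcal{T}})$); these are the same argument packaged two ways.
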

\begin{proof}
%
\footnote{Argument in terms of the underlying relations in the reflection
Prop.\;\ref{tbl:rel:refl}.}
%
Suppose 
$t \in (\mathcal{T}{\,\div_{\!\mathcal{A}}\,}\mathcal{T}_{2}){\;\times\;}\mathcal{T}_{2}$,
so that 
$t = (t_{1},t_{2})$ for 
$t_{1} \in (\mathcal{T}{\,\div_{\!\mathcal{A}}\,}\mathcal{T}_{2})$ 
and
$t_{2} \in \mathcal{T}_{2}$.
By definition,
$t_{1} \in \acute{\mathrmbfit{tbl}}_{\mathcal{A}}(\iota_{1})(\mathcal{T})$
and
$t_{1} \not\in
\acute{\mathrmbfit{tbl}}_{\mathcal{A}}(\iota_{1})
\Bigl(
\bigl(\acute{\mathrmbfit{tbl}}_{\mathcal{A}}(\iota_{1})(\mathcal{T})
{\times}\mathcal{T}_{2}
\bigr)
{\,-\,}\mathcal{T}
\Bigr)$.
%
This implies that
$t = (t_{1},t_{2}) \in  
\bigl(\acute{\mathrmbfit{tbl}}_{\mathcal{A}}(\iota_{1})(\mathcal{T})
{\times}\mathcal{T}_{2}
\bigr)$.
Suppose $t \not\in \mathcal{T}$.
Then
$t \in  
\bigl(\acute{\mathrmbfit{tbl}}_{\mathcal{A}}(\iota_{1})(\mathcal{T})
{\times}\mathcal{T}_{2}
\bigr)
{\,-\,}\mathcal{T}$.
Hence,
$t_{1} =
\acute{\mathrmbfit{tbl}}_{\mathcal{A}}(\iota_{1})(t)
\in
\acute{\mathrmbfit{tbl}}_{\mathcal{A}}(\iota_{1})
\Bigl(
\bigl(\acute{\mathrmbfit{tbl}}_{\mathcal{A}}(\iota_{1})(\mathcal{T})
{\times}\mathcal{T}_{2}
\bigr)
{\,-\,}\mathcal{T}
\Bigr)$,
a contradiction.
Hence,
$t \in \mathcal{T}$.
\hfill\rule{5pt}{5pt}
\end{proof}
%

%
\newpage
\subsection{Outer-join.}\label{sub:sub:sec:out:join}
%

%
\begin{figure}
\begin{center}
{{\begin{tabular}{c}
\setlength{\unitlength}{0.6pt}
\begin{picture}(480,240)(-38,-44)
\thicklines
\thinlines
\put(0,0){\makebox(0,0){\scriptsize{${\bullet}$}}}
\put(240,36){\makebox(0,0){\scriptsize{${\bullet}$}}}
\put(120,120){\makebox(0,0){\scriptsize{$\textsl{left square}$}}}
\put(120,90){\makebox(0,0){\scriptsize{$\blacksquare$}}}
\put(65,130){\makebox(0,0){\footnotesize{$\mathcal{A}_{\bullet}$}}}
\put(175,60){\makebox(0,0){\footnotesize{$\mathcal{A}$}}}
\qbezier[100](-80,80)(20,105)(120,90)
\qbezier[100](120,90)(210,60)(380,80)
\put(420,180){\makebox(0,0)[l]{\scriptsize{$\mathcal{T}_{\bullet}:\bar{\mathcal{S}}_{2}$}}}
\put(420,0){\makebox(0,0)[l]{\scriptsize{$\mathcal{T}_{2}:\mathcal{S}_{2}$}}}
\put(-10,180){\makebox(0,0)[r]{\scriptsize{$
\acute{\acute{\mathcal{T}}}_{1} =
\acute{\mathrmbfit{tbl}}_{\mathcal{S}}(\tilde{0})(\acute{\mathcal{T}}_{1})
:\mathcal{S}_{1}$}}}
\put(250,153){\makebox(0,0)[l]{\scriptsize{$
\acute{\acute{\mathcal{T}}}_{1}{\,\times\,}\mathcal{T}_{\bullet}
:\mathcal{S}_{1}{+\,}\bar{\mathcal{S}}_{2}$}}}
\put(-10,64){\makebox(0,0)[r]{\scriptsize{
$\acute{\mathcal{T}}_{1}=\mathcal{T}_{1}{\,\boxslash_{\mathcal{A}}}\mathcal{T}_{2}
:\mathcal{S}_{1}$}}}
\put(-40,0){\makebox(0,0)[r]{\scriptsize{$\mathcal{T}_{1}:\mathcal{S}_{1}$}}}
\put(305,125){\makebox(0,0)[l]{\scriptsize{$
\mathcal{T}_{1}{\,\rgroup\!\boxtimes}_{\mathcal{A}}\mathcal{T}_{2} 
:\mathcal{S}_{1}{+\,}\bar{\mathcal{S}}_{2}\cong
\mathcal{S}_{1}{+_{\mathcal{S}}}\mathcal{S}_{2}$}}}
\put(250,97){\makebox(0,0)[l]{\scriptsize{$
\acute{\mathcal{T}}_{12}=
\acute{\mathrmbfit{tbl}}_{\mathcal{S}}(\tilde{0})(\mathcal{T}_{12})
:\mathcal{S}_{1}{+_{\mathcal{S}}}\mathcal{S}_{2}$}}}
\put(250,35){\makebox(0,0)[l]{\scriptsize{$
\mathcal{T}_{12} =
\mathcal{T}_{1}{\,\boxtimes_{\mathcal{A}}}\mathcal{T}_{2}
:\mathcal{S}_{1}{+_{\mathcal{S}}}\mathcal{S}_{2}$}}}
\put(-32,75){\begin{picture}(0,0)(0,3)
\setlength{\unitlength}{0.3pt}
\put(60,48){\makebox(0,0){\large{$\Uparrow$}}}
\put(10,10){\line(1,0){100}}
\put(40,70){\line(1,0){40}}
\put(10,40){\line(0,-1){30}}
\put(110,40){\line(0,-1){30}}
\put(40,40){\oval(60,60)[tl]}
\put(80,40){\oval(60,60)[tr]}
\put(60,23){\makebox(0,0){\scriptsize{{\textit{{expand}}}}}}
\put(60,70){\line(0,1){142}}
\end{picture}}
\put(210,55){\begin{picture}(0,0)(0,3)
\setlength{\unitlength}{0.3pt}
\put(60,48){\makebox(0,0){\large{$\Uparrow$}}}
\put(10,10){\line(1,0){100}}
\put(40,70){\line(1,0){40}}
\put(10,40){\line(0,-1){30}}
\put(110,40){\line(0,-1){30}}
\put(40,40){\oval(60,60)[tl]}
\put(80,40){\oval(60,60)[tr]}
\put(60,23){\makebox(0,0){\scriptsize{{\textit{{expand}}}}}}
\put(60,70){\vector(0,1){44}}
\end{picture}}
\put(92,158){\begin{picture}(0,0)(0,0)
\setlength{\unitlength}{0.3pt}
\put(10,10){\line(1,0){60}}
\put(10,70){\line(1,0){60}}
\put(10,70){\line(0,-1){60}}
\put(70,40){\oval(60,60)[br]}
\put(70,40){\oval(60,60)[tr]}
\put(55,50){\makebox(0,0){\scriptsize{{\textit{{inflate}}}}}}
\put(56,30){\makebox(0,0){\large{${\Rightarrow}$}}}
\put(-188,40){\vector(1,0){200}}
\put(100,40){\vector(1,0){142}}
\end{picture}}
\put(92,14){\begin{picture}(0,0)(0,0)
\setlength{\unitlength}{0.3pt}
\put(40,10){\line(1,0){60}}
\put(40,70){\line(1,0){60}}
\put(100,70){\line(0,-1){60}}
\put(40,40){\oval(60,60)[bl]}
\put(40,40){\oval(60,60)[tl]}
\put(58,50){\makebox(0,0){\scriptsize{{\textit{{project}}}}}}
\put(56,30){\makebox(0,0){\large{${\Leftarrow}$}}}
\put(10,42){\vector(-1,0){140}}
\put(295,42){\vector(-1,0){195}}
\end{picture}}
\put(92,-21){\begin{picture}(0,0)(0,0)
\setlength{\unitlength}{0.3pt}
\put(10,10){\line(1,0){60}}
\put(10,70){\line(1,0){60}}
\put(10,70){\line(0,-1){60}}
\put(70,40){\oval(60,60)[br]}
\put(70,40){\oval(60,60)[tr]}
\put(55,50){\makebox(0,0){\scriptsize{{\textit{{inflate}}}}}}
\put(56,30){\makebox(0,0){\large{${\Rightarrow}$}}}
\put(-250,40){\vector(1,0){260}}
\put(100,40){\vector(1,0){144}}
\end{picture}}
\put(330,158){\begin{picture}(0,0)(0,0)
\setlength{\unitlength}{0.3pt}
\put(40,10){\line(1,0){60}}
\put(40,70){\line(1,0){60}}
\put(100,70){\line(0,-1){60}}
\put(40,40){\oval(60,60)[bl]}
\put(40,40){\oval(60,60)[tl]}
\put(55,50){\makebox(0,0){\scriptsize{{\textit{{inflate}}}}}}
\put(56,30){\makebox(0,0){\large{${\Leftarrow}$}}}
\put(10,43){\vector(-1,0){136}}
\put(176,42){\vector(-1,0){76}}
\end{picture}}
\put(330,-21){\begin{picture}(0,0)(0,0)
\setlength{\unitlength}{0.3pt}
\put(40,10){\line(1,0){60}}
\put(40,70){\line(1,0){60}}
\put(100,70){\line(0,-1){60}}
\put(40,40){\oval(60,60)[bl]}
\put(40,40){\oval(60,60)[tl]}
\put(55,50){\makebox(0,0){\scriptsize{{\textit{{inflate}}}}}}
\put(56,30){\makebox(0,0){\large{${\Leftarrow}$}}}
\put(10,43){\vector(-1,0){136}}
\put(176,42){\vector(-1,0){76}}
\end{picture}}
\put(240,180){\begin{picture}(0,0)(0,0)
\setlength{\unitlength}{0.55pt}
\put(0,0){\oval(58,34)[r]}
\put(0,0){\oval(58,34)[l]}
\put(0,8){\makebox(0,0){\footnotesize{$\textit{meet}$}}}
\put(0,-8){\makebox(0,0){\large{$\wedge$}}}
\put(0,-16){\vector(0,-1){28}}
\end{picture}}
\put(240,125){\begin{picture}(0,0)(0,0)
\setlength{\unitlength}{0.55pt}
\put(0,0){\oval(58,34)[r]}
\put(0,0){\oval(58,34)[l]}
\put(0,8){\makebox(0,0){\footnotesize{$\textit{join}$}}}
\put(0,-8){\makebox(0,0){\large{$\vee$}}}
\put(29,0){\vector(1,0){36}}
\end{picture}}
\put(240,0){\begin{picture}(0,0)(0,0)
\setlength{\unitlength}{0.55pt}
\put(0,0){\oval(58,34)[r]}
\put(0,0){\oval(58,34)[l]}
\put(0,8){\makebox(0,0){\footnotesize{$\textit{meet}$}}}
\put(0,-8){\makebox(0,0){\large{$\wedge$}}}
\put(0,16){\vector(0,1){46}}
\end{picture}}
\put(0,35){\begin{picture}(0,0)(0,0)
\setlength{\unitlength}{0.55pt}
\put(0,0){\oval(58,34)[r]}
\put(0,0){\oval(58,34)[l]}
\put(0,8){\makebox(0,0){\footnotesize{$\textit{diff}$}}}
\put(0,-8){\makebox(0,0){\large{$-$}}}
\put(0,17){\vector(0,1){26}}
\put(0,-37){\vector(0,1){19}}
\end{picture}}
\end{picture}
\end{tabular}}}
\end{center}
\caption{\texttt{FOLE} Left Outer Join Flow Chart}
\label{fole:outer:join:in:square}
\end{figure}
An outer join takes two operands as in the natural join.
It contains 
tuples formed by combining matching tuples in the two operands
(the natural join), plus (union)
tuples formed by extending an unmatched tuple in one of the operands
(an anti-join) 
by 
``null values for each of the attributes of the other operand''.

Let $\mathcal{A} = {\langle{X,Y,\models_{\mathcal{A}}}\rangle}$
be a fixed type domain.
Consider a pair of tables
$\mathcal{T}_{1} = {\langle{K_{1},t_{1}}\rangle} \in 
\mathrmbf{Tbl}_{\mathcal{A}}(\mathcal{S}_{1})$
and
$\mathcal{T}_{2} = {\langle{K_{2},t_{2}}\rangle} \in 
\mathrmbf{Tbl}_{\mathcal{A}}(\mathcal{S}_{2})$
%
linked through 
an underlying $X$-sorted signature span 
$\mathcal{S}_{1}\xhookleftarrow{h_{1}}\mathcal{S}\xhookrightarrow{h_{2}}\mathcal{S}_{2}$
with pushout opspan
{{\footnotesize{$
\mathcal{S}_{1}\xhookrightarrow{\iota_{1}\,} 
{\mathcal{S}_{1}{+_{\mathcal{S}}}\mathcal{S}_{2}}
\xhookleftarrow{\;\iota_{2}}\mathcal{S}_{2}
$.}\normalsize}}
%
\footnote{In this section we assume the index functions are injective
$I_{1}\xhookleftarrow{h_{1}}I\xhookrightarrow{h_{2}}I_{2}$.}
%
Define the $X$-signature
$\bar{\mathcal{S}}_{2}
={\langle{\bar{I}_{2},\bar{s}_{2}}\rangle}
={\langle{I_{2}{\,\setminus\,}I_{1},\bar{s}_{2}}\rangle}$
to be the restriction of 
$\mathcal{S}_{2}={\langle{I_{2},s_{2}}\rangle}$
to the complement index set
$\bar{I}_{2} = I_{2}{\,\setminus\,}I_{1}$.
Form the coproduct signature
in the opspan
{{\footnotesize{$
\mathcal{S}_{1} 
\xhookrightarrow{\iota_{1}\,} 
{\mathcal{S}_{1}{+\,}\bar{\mathcal{S}}_{2}}
\xhookleftarrow{\;\bar{\iota}_{2}} 
\bar{\mathcal{S}}_{2}
$.}\normalsize}}
Then,
the pushout signature
above
is isomorphic 
to 
this coproduct signature:
$\mathcal{S}_{1}{+_{\mathcal{S}}}\mathcal{S}_{2}
\cong
\mathcal{S}_{1}{+\,}\bar{\mathcal{S}}_{2}$.



%
To define the outer join, we need to expand the type domain 
$\mathcal{A} = {\langle{X,Y,\models_{\mathcal{A}}}\rangle}$ 
to the type domain 
$\mathcal{A}_{\bullet} = {\langle{X,Y{+}\{\cdot\},\models_{\mathcal{A}_{\bullet}}}\rangle}$ 
by adding a null value `$\cdot$':
(i)
$y{\;\models_{\mathcal{A}_{\bullet}}\;}x$
when
$y{\;\models_{\mathcal{A}}\;}x$
for all $x \in X$ and $y \in Y$; and
(ii)
${\cdot}{\;\models_{\mathcal{A}_{\bullet}}\;}x$
for all $x \in X$.
Hence,
in $\mathcal{A}_{\bullet}$
the (data type) 
extent of any sort $x \in X$
is the subset
$\mathrmbfit{ext}_{\mathcal{A}_{\bullet}}(x) = A_{x}{+}{\{\cdot\}}$.
%
The inclusion data value function 
$Y{+}\{\cdot\}\xhookleftarrow{\;\tilde{0}\;}Y$
defines an $X$-sorted type domain morphism
{{$\mathcal{A}_{\bullet}
\xhookrightarrow{\;\tilde{0}\;}
\mathcal{A}$.}}
%
This defines a fiber adjunction of tables
%
{\footnotesize{$
\mathrmbf{Tbl}_{\mathcal{S}}(\mathcal{A}_{\bullet})
{\;\xleftarrow
{{\bigl\langle{\acute{\mathrmbfit{tbl}}_{\mathcal{S}}(\tilde{0})
{\;\dashv\;}
\grave{\mathrmbfit{tbl}}_{\mathcal{S}}(\tilde{0})}\bigr\rangle}}\;}
\mathrmbf{Tbl}_{\mathcal{S}}(\mathcal{A})$,}}
%
where the left adjoint 
defines expansion.
%
%
Consider a third table
$\mathcal{T}_{\bullet} = {\langle{1,{({\cdot},...,{\cdot})}}\rangle} \in 
\mathrmbf{Tbl}_{\mathcal{A}_{\bullet}}(\bar{\mathcal{S}}_{2})$
with underlying signature
$\bar{\mathcal{S}}_{2} = {\langle{\bar{I}_{2},\bar{s}_{2}}\rangle}$
that consists of a single tuple of null values.
%
%

\newpage

\begin{description}
\item[Constraint:] 
The constraint for outer-join 
has two parts:
(1)
an $X$-sorted signature span 
$\mathcal{S}_{1}\xleftarrow{h_{1}}\mathcal{S}\xrightarrow{h_{2}}\mathcal{S}_{2}$
in $\mathrmbf{List}(X)$,
the constraint for natural join
(Tbl.\,\ref{tbl:fole:natural:join:input:output});
and
(2)
an $X$-sorted type domain morphism
$\mathcal{A}_{\bullet}\xhookrightarrow{\;\tilde{0}\;}\mathcal{A}$
in $\mathrmbf{Cls}(X)$.
\newline
\item[Construction:] 
%
The flowchart
for left outer-join is illustrated in 
Fig.\;\ref{fole:outer:join:in:square}.
Since outer-join involves two type domains,
only the set of sorts $X$ is in common.
Taking together the discussions about signatures and type domains,
this defines an opspan of signed domain morphisms
\newline\mbox{}\hfill
{{$
{\langle{\mathcal{S}_{1},\mathcal{A}_{\bullet}}\rangle}
\xrightarrow{{\langle{\iota_{1},1_{X},\tilde{0}}\rangle}}
\underset{\textstyle{\langle{{\mathcal{S}_{1}{+\,}\bar{\mathcal{S}}_{2}},\mathcal{A}}\rangle}}
{\langle{{\mathcal{S}_{1}{+_{\mathcal{S}}}\mathcal{S}_{2}},\mathcal{A}}\rangle}
\xleftarrow
[{\langle{\bar{\iota}_{2},1_{X},\tilde{0}}\rangle}]
{{\langle{\iota_{2},1_{X},\tilde{0}}\rangle}}
{\langle{\mathcal{S}_{2},\mathcal{A}_{\bullet}}\rangle}
$.}}
\hfill\mbox{}\newline
which factors as visualized in the following two squares.
\begin{center}
{{\begin{tabular}{c}
%
\comment{{\begin{tabular}{c}
\setlength{\unitlength}{0.37pt}
{{$
{\langle{\mathcal{S}_{1},\mathcal{A}_{\bullet}}\rangle}
\xrightarrow{{\langle{\iota_{1},1_{X},\tilde{0}}\rangle}}
\underset{\textstyle{\langle{{\mathcal{S}_{1}{+\,}\bar{\mathcal{S}}_{2}},\mathcal{A}}\rangle}}
{\langle{{\mathcal{S}_{1}{+_{\mathcal{S}}}\mathcal{S}_{2}},\mathcal{A}}\rangle}
\xleftarrow
[{\langle{\bar{\iota}_{2},1_{X},\tilde{0}}\rangle}]
{{\langle{\iota_{2},1_{X},\tilde{0}}\rangle}}
{\langle{\mathcal{S}_{2},\mathcal{A}_{\bullet}}\rangle}
$.}}
\end{tabular}}}
{{\begin{tabular}{c}
\setlength{\unitlength}{0.37pt}
\begin{picture}(520,205)(-8,-5)
%
%
\put(0,180){\makebox(0,0){\footnotesize{$
{\langle{\mathcal{S}_{1},\mathcal{A}_{\bullet}}\rangle}$}}}
\put(0,0){\makebox(0,0){\footnotesize{$
{\langle{\mathcal{S}_{1},\mathcal{A}}\rangle}$}}}
\put(480,180){\makebox(0,0){\footnotesize{$
{\langle{\mathcal{S}_{2},\mathcal{A}_{\bullet}}\rangle}$}}}
\put(480,0){\makebox(0,0){\footnotesize{$
{\langle{\mathcal{S}_{2},\mathcal{A}}\rangle}$}}}
\put(240,180){\makebox(0,0){\footnotesize{$
{\langle{\mathcal{S}_{1}{+_{\mathcal{S}}}\mathcal{S}_{2},\mathcal{A}_{\bullet}}\rangle}$}}}
\put(240,0){\makebox(0,0){\footnotesize{$
{\langle{\mathcal{S}_{1}{+_{\mathcal{S}}}\mathcal{S}_{2},\mathcal{A}}\rangle}$}}}
\put(105,195){\makebox(0,0){\scriptsize{$
{\langle{\iota_{1},1_{\mathcal{A}_{\bullet}}}\rangle}$}}}
\put(105,165){\makebox(0,0){\scriptsize{$=\,\iota_{1}$}}}
\put(105,15){\makebox(0,0){\scriptsize{$
{\langle{\iota_{1},1_{\mathcal{A}}}\rangle}$}}}
\put(105,-15){\makebox(0,0){\scriptsize{$=\,\iota_{1}$}}}
\put(375,195){\makebox(0,0){\scriptsize{$
{\langle{\iota_{2},1_{\mathcal{A}_{\bullet}}}\rangle}$}}}
\put(375,165){\makebox(0,0){\scriptsize{$=\,\iota_{2}$}}}
\put(375,15){\makebox(0,0){\scriptsize{$
{\langle{\iota_{2},1_{\mathcal{A}}}\rangle}$}}}
\put(375,-15){\makebox(0,0){\scriptsize{$=\,\iota_{2}$}}}
\put(-10,95){\makebox(0,0)[r]{\scriptsize{$
{\langle{1_{\mathcal{S}_{1}},\tilde{0}}\rangle}$}}}
\put(-20,70){\makebox(0,0)[r]{\scriptsize{$=\,\tilde{0}$}}}
\put(195,110){\makebox(0,0)[l]{\scriptsize{$
{\langle{1_{\mathcal{S}_{1}{+_{\mathcal{S}}}\mathcal{S}_{2}},\tilde{0}}\rangle}$}}}
\put(255,85){\makebox(0,0)[l]{\scriptsize{$=\,\tilde{0}$}}}
\put(495,95){\makebox(0,0)[l]{\scriptsize{$
{\langle{1_{\mathcal{S}_{2}},\tilde{0}}\rangle}$}}}
\put(510,70){\makebox(0,0)[l]{\scriptsize{$=\,\tilde{0}$}}}
\put(140,60){\makebox(0,0)[r]
{\scriptsize{${\langle{\iota_{1},1_{X},\tilde{0}}\rangle}$}}}
\put(340,60){\makebox(0,0)[l]
{\scriptsize{${\langle{\iota_{2},1_{X},\tilde{0}}\rangle}$}}}
\put(60,180){\vector(1,0){100}}
\put(60,0){\vector(1,0){100}}
\put(420,180){\vector(-1,0){100}}
\put(420,0){\vector(-1,0){100}}
\put(0,150){\vector(0,-1){120}}
\put(240,150){\vector(0,-1){120}}
\put(480,150){\vector(0,-1){120}}
\put(20,160){\vector(4,-3){180}}
\put(460,160){\vector(-4,-3){180}}
%
%
\put(120,90){\makebox(0,0){\scriptsize{$\blacksquare$}}}
\put(360,90){\makebox(0,0){\scriptsize{$\blacksquare$}}}
\end{picture}
\end{tabular}}}
\end{tabular}}}
\end{center}
Comparing the above figure with 
Fig.\;\ref{fig:square} in \S\;\ref{sub:sub:sec:adj:flo:square},
we see that flow in the square is appropriate.
\newline
\item[Input:] 
The input for outer-join 
is
a pair of tables
$\mathcal{T}_{1} 
\in \mathrmbf{Tbl}_{\mathcal{A}}(\mathcal{S}_{1})$
and
$\mathcal{T}_{2} 
\in \mathrmbf{Tbl}_{\mathcal{A}}(\mathcal{S}_{2})$
as in the input for natural join 
(Tbl.\,\ref{tbl:fole:natural:join:input:output}).
%
\footnote{The table
$\mathcal{T}_{\bullet} 
\in \mathrmbf{Tbl}_{\mathcal{A}_{\bullet}}(\bar{\mathcal{S}}_{2})$
is an implicit table that helps build the output.}
%
\newline
\item[Output:] 
%
%
%
Left outer-join is defined by the following multi-step process.
\begin{itemize}
\item[\textbf{1}:] 
The \emph{natural join} 
(\S\,\ref{sub:sub:sec:nat:join}) 
{{$
\mathcal{T}_{12} =
\mathcal{T}_{1}{\,\boxtimes}_{\mathcal{A}}\mathcal{T}_{2} =
\grave{\mathrmbfit{tbl}}_{\mathcal{A}}(\iota_{1})(\mathcal{T}_{1})
{\;\wedge\;}
\grave{\mathrmbfit{tbl}}_{\mathcal{A}}(\iota_{2})(\mathcal{T}_{2})
$}\normalsize}
is the meet of inflations 
along 
$X$-signature opspan
{\footnotesize{$\mathcal{S}_{1}\xhookrightarrow{\iota_{1}\,} 
{\mathcal{S}_{1}{+_{\mathcal{S}}}\mathcal{S}_{2}}
\xhookleftarrow{\;\iota_{2}}\mathcal{S}_{2}$.}\normalsize}
%
\item[\textbf{2}:] 
The \emph{left semi-join} 
(\S\,\ref{sub:sub:sec:semi:join})
${\mathcal{T}_{1}}{\,\boxleft_{\mathcal{A}}}{\mathcal{T}_{2}} 
= \acute{\mathrmbfit{tbl}}_{\mathcal{A}}(\iota_{1})
\bigl(\mathcal{T}_{1}{\,\boxtimes_{\mathcal{A}}}\mathcal{T}_{2}\bigr)
\in \mathrmbf{Tbl}_{\mathcal{S}_{1}}(\mathcal{A})$
is the projection 
(\S\,\ref{sub:sub:sec:adj:flow:A})
of the natural join 
along the $X$-signature morphism
{\footnotesize{$\mathcal{S}_{1}\xhookrightarrow{\iota_{1}\,} 
{\mathcal{S}_{1}{+_{\mathcal{S}}}\mathcal{S}_{2}}$.}\normalsize}
\item[\textbf{3}:]
The \emph{left anti-join}
(\S\,\ref{sub:sub:sec:anti:join})
$\acute{\mathcal{T}}_{1} =
{\mathcal{T}_{1}}{\,\boxslash_{\mathcal{A}}}{\mathcal{T}_{2}} =
\mathcal{T}_{1}{\,-\,}({\mathcal{T}_{1}}{\,\boxleft_{\mathcal{A}}}{\mathcal{T}_{2}})
\in \mathrmbf{Tbl}_{\mathcal{S}_{1}}(\mathcal{A})$
is the difference between the table $\mathcal{T}_{1}$ and the semi-join
$\mathcal{T}_{1}{\,\boxleft_{\mathcal{A}}}\mathcal{T}_{2}$.
\item[\textbf{4}:]
The \emph{expansion} 
(\S\,\ref{sub:sub:sec:adj:flow:S})
$\acute{\acute{\mathcal{T}}}_{1} =
\acute{\mathrmbfit{tbl}}_{\mathcal{S}}(\tilde{0})({\mathcal{T}_{1}}{\,\boxslash_{\mathcal{A}}}{\mathcal{T}_{2}})
\in \mathrmbf{Tbl}_{\mathcal{S}_{1}}(\mathcal{A}_{\bullet})$
is the left adjoint flow
along the $X$-sorted type domain morphism
$\mathcal{A}_{\bullet}\xhookrightarrow{\;\tilde{0}\;}\mathcal{A}$.
%
\item[\textbf{5}:]
The \emph{Cartesian product} 
(\S\,\ref{sub:sub:sec:nat:join}) 
$\acute{\acute{\mathcal{T}}}_{1}{\,\times\,}\mathcal{T}_{\bullet} =
\grave{\mathrmbfit{tbl}}_{\mathcal{A}}(\iota_{1})(\acute{\acute{\mathcal{T}}}_{1})
{\;\wedge\;}
\grave{\mathrmbfit{tbl}}_{\mathcal{A}}(\iota_{2})(\mathcal{T}_{\bullet})$
is the meet of inflations 
along the 
$X$-signature opspan
{\footnotesize{$\mathcal{S}_{1}\xhookrightarrow{\iota_{1}\,} 
{\mathcal{S}_{1}{+}\bar{\mathcal{S}}_{2}}
\xhookleftarrow{\;\bar{\iota}_{2}}\bar{\mathcal{S}}_{2}$.}\normalsize}
%
Here
$\acute{\acute{\mathcal{T}}}_{1}{\,\times\,}\mathcal{T}_{\bullet}
\in \mathrmbf{Tbl}_{\mathcal{A}_{\bullet}}(\mathcal{S}_{1}{+\,}\bar{\mathcal{S}}_{2})
\cong 
\mathrmbf{Tbl}_{\mathcal{A}_{\bullet}}(\mathcal{S}_{1}{+_{\mathcal{S}}}\mathcal{S}_{2})$.
%
\item[\textbf{6}:]
The \emph{expansion} 
(\S\,\ref{sub:sub:sec:adj:flow:S})
$\acute{\mathcal{T}}_{12}=
\acute{\mathrmbfit{tbl}}_{\mathcal{S}}(\tilde{0})(
\mathcal{T}_{1}{\,\boxtimes_{\mathcal{A}}}\mathcal{T}_{2})
\in \mathrmbf{Tbl}_{\,\mathcal{S}_{1}{+_{\mathcal{S}}}\mathcal{S}_{2}}(\mathcal{A}_{\bullet})$
is the left adjoint flow
along the $X$-sorted type domain morphism
$\mathcal{A}_{\bullet}\xhookrightarrow{\;\tilde{0}\;}\mathcal{A}$.
%
\item[\textbf{7}:]
The left outer-join 
$\mathcal{T}_{1}{\,\rgroup\!\boxtimes}_{\mathcal{A}}\mathcal{T}_{2} =
\acute{\mathcal{T}}_{12}
\vee
(\acute{\acute{\mathcal{T}}}_{1}{\,\times\,}\mathcal{T}_{\bullet})$
is union of
the expansion
$\acute{\mathcal{T}}_{12}$
and
the Cartesian product 
$\acute{\acute{\mathcal{T}}}_{1}{\,\times\,}\mathcal{T}_{\bullet}$.
%
%
%
The right outer-join has a similar definition. 
\end{itemize}
\end{description}
%


\comment{ 
\\\\
If $\mathcal{T}_{12} =
\mathcal{T}_{1}{\,\boxtimes_{\mathcal{A}}}\mathcal{T}_{2} =
(\grave{\mathrmbfit{tbl}}_{\mathcal{A}}(\iota_{1})(\mathcal{T}_{1})
\wedge
\grave{\mathrmbfit{tbl}}_{\mathcal{A}}(\iota_{2})(\mathcal{T}_{2}))$,
\\
then
${\mathcal{T}_{1}}{\,{\rgroup\!\boxtimes}_{\mathcal{A}}}{\mathcal{T}_{2}} =
\acute{\mathrmbfit{tbl}}_{\mathcal{S}}(\iota)(\mathcal{T}_{12})
{\,\vee\,}
\bigl(
\acute{\mathrmbfit{tbl}}_{\mathcal{S}}(\iota)(
\mathcal{T}_{1}{-}(\acute{\mathrmbfit{tbl}}_{\mathcal{A}}(\iota_{1})(\mathcal{T}_{12})))
{\,\times\,}\mathcal{T}_{\bullet}
\bigr)$
\\\hline
Changes to generalize:
$\mathcal{A}_{\bullet}$ to $\mathcal{A}'$,
$\bar{\mathcal{S}}_{2}$ to $\mathcal{S}'$,
$\mathcal{A}_{\bullet}\xrightarrow{\;i\;}\mathcal{A}$
to
$\mathcal{A}'\xrightarrow{\;g\;}\mathcal{A}$
} 

%
\begin{table}
\begin{center}
{{\fbox{\begin{tabular}{c}
\setlength{\extrarowheight}{2pt}
{\scriptsize{$\begin{array}[c]{c@{\hspace{12pt}}l}
\mathcal{S}_{1}\xleftarrow{h_{1}}\mathcal{S}\xrightarrow{h_{2}}\mathcal{S}_{2}
\text{ and }
\mathcal{A}_{\bullet}\xhookrightarrow{\;\tilde{0}\;}\mathcal{A}
&
\textit{constraint}
\\
{\langle{\mathcal{S}_{1},\mathcal{A}_{\bullet}}\rangle}
\xrightarrow{{\langle{\iota_{1},1_{X},\tilde{0}}\rangle}}
\underset{\textstyle{\langle{{\mathcal{S}_{1}{+\,}\bar{\mathcal{S}}_{2}},\mathcal{A}}\rangle}}
{\langle{{\mathcal{S}_{1}{+_{\mathcal{S}}}\mathcal{S}_{2}},\mathcal{A}}\rangle}
\xleftarrow
[{\langle{\bar{\iota}_{2},1_{X},\tilde{0}}\rangle}]
{{\langle{\iota_{2},1_{X},\tilde{0}}\rangle}}
{\langle{\mathcal{S}_{2},\mathcal{A}_{\bullet}}\rangle}
&
\textit{construction}
\\
\hline
\mathcal{T} \in \mathrmbf{Tbl}_{\mathcal{A}}(\mathcal{S}_{1})
\text{ and }
\mathcal{T}_{2} \in \mathrmbf{Tbl}_{\mathcal{A}}(\mathcal{S}_{2})
&
\textit{input}
\\
\mathcal{T}_{1}{\,\rgroup\!\boxtimes}_{\mathcal{A}}\mathcal{T}_{2}
&
\textit{output}
\end{array}$}}
\end{tabular}}}}
\end{center} 
\caption{\texttt{FOLE} Outer-Join I/O}
\label{tbl:fole:outer:join:input:output}
\end{table}
%

%
\comment{
\begin{aside}
There is an alternate method for computing the outer-join.
Project $\mathcal{T}_{1}$ and $\mathcal{T}_{2}$
to the 
index set
$I$
using 
index functions
$I_{1}
\xhookleftarrow{h_{1}}
I
\xhookrightarrow{h_{2}}
I_{2}$,
getting tables 
$\hat{\mathcal{T}}_{1}$ and $\hat{\mathcal{T}}_{2}$.
Form the difference
$\bar{\mathcal{T}}_{12} =
\hat{\mathcal{T}}_{1}{\;-\;}\hat{\mathcal{T}}_{2}$.
%
Expand to $\mathcal{S}_{2}$
by forming the Cartesian product 
$\bar{\mathcal{T}}_{2} =
\bar{\mathcal{T}}_{12}
{\;\times\;}{\mathcal{T}_{\bullet}}$
with the null table 
$\mathcal{T}_{\bullet}$
on the index set
$\bar{I}_{2} =
I_{2}{\,\setminus\,}I$.
Then, 
the filled-out left anti-join is
$(\mathcal{T}_{1}{\,\boxslash_{\mathcal{A}}}\mathcal{T}_{2}){\;\times\;}\mathcal{T}_{\bullet}
\cong \mathcal{T}_{1}{\,\boxtimes}_{\mathcal{A}}\bar{\mathcal{T}}_{2}$.
Use this in the expression above for the outer join.
\footnote{Here we assume that we are operating in the extended type domain
$\mathcal{A}_{\bullet}$.}
\end{aside}
\begin{proposition}\label{out:join}
$(\mathcal{T}_{1}{\,\boxslash_{\mathcal{A}}}\mathcal{T}_{2}){\;\times\;}\mathcal{T}_{\bullet}
\cong \mathcal{T}_{1}{\,\boxtimes}_{\mathcal{A}}\bar{\mathcal{T}}_{2}$.
\end{proposition}
\begin{proof}
See  
Prop.\;\ref{join:assoc}
and
Prop.\;\ref{anti:join}.
\comment{
A tuple 
$t \in (\mathcal{T}_{1}{\,\unrhd_{\mathcal{S}}}\mathcal{T}_{2}){\;\times\;}\mathcal{T}_{\bullet}
\text{\;iff\;}
\newline
t = (\acute{t}_{1},\hat{t}_{12},(...)),
(\acute{t}_{1},\hat{t}_{12}) \in \mathcal{T}_{1},\;
(\acute{t}_{1},\hat{t}_{12}) \not\in {\mathcal{T}_{1}}{\,\boxleft_{\mathcal{A}}}{\mathcal{T}_{2}}
,\;
(...) \in \mathcal{T}_{\bullet}$.
\newline
A tuple 
$t \in \mathcal{T}_{1}{\,\boxtimes}_{\mathcal{A}}\bar{\mathcal{T}}_{2}
\text{\;iff\;}\newline
t = (\acute{t}_{1},\hat{t}_{12},(...)),\;
(\acute{t}_{1},\hat{t}_{12}) \in \mathcal{T}_{1},\;
\hat{t}_{12} \in \hat{\mathcal{T}}_{1}{\;-\;}\hat{\mathcal{T}}_{2}
,\;
(...) \in \mathcal{T}_{\bullet}$.
}
\mbox{}\hfill\rule{5pt}{5pt}
\end{proof}
The distributive law
(Prop.\;\ref{join:preserve:join:meet})
\begin{equation}\label{def:tbl:assoc:law}
{{\begin{picture}(120,10)(0,-2)
\put(60,0){\makebox(0,0){\footnotesize{$
\mathcal{T}_{1}{\,\boxtimes}_{\mathcal{A}}
\bigl(\mathcal{T}_{2}{\,\vee\,}\bar{\mathcal{T}}_{2}\bigr)
\cong
\bigl(\mathcal{T}_{1}{\,\boxtimes}_{\mathcal{A}}\mathcal{T}_{2}\bigr)
{\,\vee\,}
\bigl(\mathcal{T}_{1}{\,\boxtimes}_{\mathcal{A}}\bar{\mathcal{T}}_{2}\bigr)
$.}}}
\end{picture}}}
\end{equation}
gives two ways for defining left outer-join
{\footnotesize{$\mathcal{T}_{1}
{\;\rgroup\!\boxtimes}_{\mathcal{A}}
\mathcal{T}_{2}$.}}
The definition above corresponds to the RHS of Disp.\;\ref{def:tbl:assoc:law}. 
Alternately, 
we could use the LHS  of Disp.\;\ref{def:tbl:assoc:law}.
First, form the join 
$\mathcal{T}_{2}{\,\vee\,}\bar{\mathcal{T}}_{2}$.
Then, form the natural join 
{\footnotesize{$
\mathcal{T}_{1}
{\,\boxtimes}_{\mathcal{A}}
\bigl(
\mathcal{T}_{2}{\,\vee\,}\bar{\mathcal{T}}_{2}
\bigr)$.}}
}
%

\comment{
\newpage
\section{Transforming the Base.}\label{sub:sub:sec:transform}


In \S\,\ref{sub:sec:comp:ops:type:dom},
using a type domain as a base,
we defined flowcharts 
for many of the classic relational operations in databases.
In \S\,\ref{sub:sec:comp:ops:sign},
using a signature as a base,
we defined flowcharts 
dual to the classic relational operations in databases.
In this section we discuss
how a change of base affects these flowcharts.

%
\newpage
\subsection{Change of Base: Type Domains.}\label{sub:sub:sec:transform:A}


%
\begin{figure}
\begin{center}
{{\begin{tabular}{c}
\setlength{\unitlength}{0.55pt}
\begin{picture}(400,220)(-90,-25)
\put(0,160){\makebox(0,0){\footnotesize{$
\mathrmbf{Tbl}_{\mathcal{A}_{2}}(\mathcal{S}'_{2})$}}}
\put(240,160){\makebox(0,0){\footnotesize{$
\mathrmbf{Tbl}_{\mathcal{A}_{2}}(\mathcal{S}_{2})$}}}
\put(0,0){\makebox(0,0){\footnotesize{$
{\mathrmbf{Tbl}_{\mathcal{A}_{1}}({\scriptscriptstyle\sum}_{f}(\mathcal{S}'_{2}))}$}}}
\put(240,0){\makebox(0,0){\footnotesize{$
{\mathrmbf{Tbl}_{\mathcal{A}_{1}}({\scriptscriptstyle\sum}_{f}(\mathcal{S}_{2}))}$}}}
%
\put(-120,72){\makebox(0,0)[l]{\Huge{$\Downarrow$}}}
\put(-85,75){\makebox(0,0)[l]{\scriptsize{$
\overset{\textstyle{\textit{change}}}
{\textit{of basis}}$}}}
\put(10,83){\makebox(0,0)[l]{\scriptsize{$
\grave{\mathrmbfit{tbl}}_{{\langle{f,g}\rangle}}(\mathcal{S}'_{2})$}}}
\put(10,65){\makebox(0,0)[l]{\scriptsize{$\textit{change}$}}}
\put(250,83){\makebox(0,0)[l]{\scriptsize{$
\grave{\mathrmbfit{tbl}}_{{\langle{f,g}\rangle}}(\mathcal{S}_{2})$}}}
\put(250,65){\makebox(0,0)[l]{\scriptsize{$\textit{restrict}$}}}
\put(125,34){\makebox(0,0){\scriptsize{$\textit{project}$}}}
\put(125,21){\makebox(0,0){\scriptsize{$
\acute{\mathrmbfit{tbl}}_{\mathcal{A}_{1}}(h_{2})$}}}
\put(125,0){\makebox(0,0){\footnotesize{${\;\dashv\;}$}}}
\put(125,-21){\makebox(0,0){\scriptsize{$
\grave{\mathrmbfit{tbl}}_{\mathcal{A}_{1}}(h_{2})$}}}
\put(125,-34){\makebox(0,0){\scriptsize{$\textit{inflate}$}}}
\put(125,194){\makebox(0,0){\scriptsize{$\textit{project}$}}}
\put(125,181){\makebox(0,0){\scriptsize{$
\acute{\mathrmbfit{tbl}}_{\mathcal{A}_{2}}(h_{2})$}}}
\put(125,160){\makebox(0,0){\footnotesize{${\;\dashv\;}$}}}
\put(125,139){\makebox(0,0){\scriptsize{$
\grave{\mathrmbfit{tbl}}_{\mathcal{A}_{2}}(h_{2})$}}}
\put(125,126){\makebox(0,0){\scriptsize{$\textit{inflate}$}}}
\put(0,125){\vector(0,-1){90}}
\put(240,125){\vector(0,-1){90}}
\put(170,169){\vector(-1,0){100}}
\put(70,151){\vector(1,0){100}}
\put(170,9){\vector(-1,0){100}}
\put(70,-9){\vector(1,0){100}}
\put(120,90){\makebox(0,0){\footnotesize{$\geq$}}}
\put(120,70){\makebox(0,0){\footnotesize{$\cong$}}}
%
\put(-60,160){\makebox(0,0)[r]{\footnotesize{$
\mathrmbf{Tbl}(\mathcal{A}_{2})\left\{\rule{0pt}{20pt}\right.$}}}
\put(-60,0){\makebox(0,0)[r]{\footnotesize{$
\mathrmbf{Tbl}(\mathcal{A}_{1})\left\{\rule{0pt}{20pt}\right.$}}}
%
\end{picture}
\end{tabular}}}
\end{center}
\caption{\texttt{FOLE} Change of Basis: Adjoint Flow Type Domain}
\label{fole:change:base:adj:flo:typ:dom}
\end{figure}

\S\;3.4.2 of 
``The {\ttfamily FOLE} Table'' 
\cite{kent:fole:era:tbl}
(see Prop.\;2 there)
defined a table fiber adjunction
\newline\mbox{}\hfill
{\footnotesize{$
\mathrmbf{Tbl}(\mathcal{A}_{2})
\xleftarrow{{\langle{
\acute{\mathrmbfit{tbl}}_{{\langle{f,g}\rangle}}{\!\dashv\,}\grave{\mathrmbfit{tbl}}_{{\langle{f,g}\rangle}}
}\rangle}}
\mathrmbf{Tbl}(\mathcal{A}_{1})
$.}\normalsize}
\hfill\mbox{}\newline
for a type domain morphism
$\mathcal{A}_{2}
\xrightleftharpoons{{\langle{f,g}\rangle}}
\mathcal{A}_{1}$,
which we call a ``change of base''.
We are interested in whether ``change of base''
preserves flowcharts
for the classic relational operations 
defined in \S\,\ref{sub:sec:comp:ops:type:dom} above.
Here we consider 
the right adjoint passage
$\mathrmbf{Tbl}(\mathcal{A}_{2})
\xrightarrow{\grave{\mathrmbfit{tbl}}_{{\langle{f,g}\rangle}}}
\mathrmbf{Tbl}(\mathcal{A}_{1})$,
which is defined 
in terms of
the type domain tuple bridge 
{\footnotesize{$
\mathrmbfit{tup}_{\mathcal{A}_{2}}
\xLeftarrow{\grave{\tau}_{{\langle{f,g}\rangle}}\;}
{\scriptstyle\sum}_{f}^{\mathrm{op}}
{\;\circ\\;}
\mathrmbfit{tup}_{\mathcal{A}_{1}}
$}}
\footnote{
The tuple bridge $\grave{\tau}_{{\langle{f,g}\rangle}}$
is defined 
in \S\;2.4.2 of 
``The {\ttfamily FOLE} Table'' 
\cite{kent:fole:era:tbl}
(see Lemma\;1 there)
using the tuple passage
$\mathrmbfit{tup} : \mathrmbf{Dom}^{\mathrm{op}} \rightarrow \mathrmbf{Set}$.
}
%
as follows.
\begin{itemize}
\item 
An $\mathcal{A}_{2}$-table 
${\langle{K_{2},t_{2}}\rangle}$
is mapped to the $\mathcal{A}_{1}$-table
${\langle{\widehat{K}_{1},\hat{t}_{1}}\rangle}$
as follows: 
\[\mbox{\footnotesize{
${\langle{\mathcal{S}_{2},K_{2},t_{2}}\rangle}
\stackrel{\grave{\mathrmbfit{tbl}}_{{\langle{f,g}\rangle}}}{\underset{\text{restrict}}{\mapsto}}
{\langle{{\scriptstyle\sum}_{f}(\mathcal{S}_{2}),(\grave{\tau}_{{\langle{f,g}\rangle}}
(\mathcal{S}_{2}))^{\ast}(K_{2},t_{2})}\rangle}$,
}}\]
where
tuple function
$(\grave{\tau}_{{\langle{f,g}\rangle}}(\mathcal{S}_{2}))^{\ast}(K_{2},t_{2}) = 
{\langle{\widehat{K}_{1},\hat{t}_{1}}\rangle}$
is the pullback
of tuple function 
$K_{2}\xrightarrow{t_{2}}\mathrmbfit{tup}_{\mathcal{A}_{2}}(\mathcal{S}_{2})$
along the tuple function (Fig.\;\ref{tup:func:idents} of \S\,\ref{sub:sec:sign:dom})
\newline\mbox{}\hfill
{\scriptsize{
$
\Bigl(
\mathrmbfit{tup}_{\mathcal{A}_{2}}(\mathcal{S}_{2})
\xleftarrow [{(\mbox{-})} \cdot g]
{\grave{\tau}_{{\langle{f,g}\rangle}}(\mathcal{S}_{2})}
\mathrmbfit{tup}_{\mathcal{A}_{1}}({\scriptstyle\sum}_{f}(\mathcal{S}_{2})) 
\Bigr)
=
\Bigl(
\underset{\mathrmbfit{tup}_{\mathcal{A}_{2}}(\mathcal{S}_{2})}
{\mathrmbfit{tup}_{\mathcal{S}_{2}}(\mathcal{A}_{2})}
\xleftarrow[{(\mbox{-})}{\,\cdot\,}g]
{\;\mathrmbfit{tup}_{\mathcal{S}_{2}}(g)\;}
\mathrmbfit{tup}_{\mathcal{S}_{2}}({g}^{-1}(\mathcal{A}_{2}))
\Bigr)
$,}}
\hfill\mbox{}\newline
the latter defining restriction in
$\mathrmbf{Tbl}(\mathcal{S}_{2})$. 
Here we use the tuple function identities in Fig.\,\ref{tup:func:idents} 
for the signed domain morphism
${\langle{\mathcal{S}_{2},\mathcal{A}_{2}}\rangle}
\xrightarrow{{\langle{1,f,g}\rangle}}
{\langle{\mathcal{S}_{1},\mathcal{A}_{1}}\rangle}$.
\comment{the $\mathcal{S}_{2}^{\text{th}}$-component 
of the type domain tuple bridge 
{\footnotesize{$
{\!\!\!\!\mathrmbfit{tup}_{\mathcal{A}_{2}}}
\xLeftarrow{\grave{\tau}_{{\langle{f,g}\rangle}}\;}
{\scriptstyle\sum}_{f}^{\mathrm{op}}{\circ\;}
{\mathrmbfit{tup}_{\mathcal{A}_{1}}}
$.}}
}
\comment{
\item 
A morphism of $\mathcal{A}_{2}$-tables
\newline\mbox{}\hfill
${\langle{h_{2},k_{2}}\rangle} : {\langle{\mathcal{S}_{2},K_{2},t_{2}}\rangle} 
\rightarrow {\langle{\mathcal{S}'_{2},K'_{2},t'_{2}}\rangle}$
\hfill\mbox{}\newline
is mapped to the morphism of $\mathcal{A}_{1}$-tables
\newline\mbox{}\hfill
${\langle{{\scriptstyle\sum}_{f}(h_{2}),k_{1}}\rangle} : 
{\langle{{\scriptstyle\sum}_{f}(\mathcal{S}_{2}),\widehat{K}_{1},\hat{t}_{1}}\rangle} 
\rightarrow
{\langle{{\scriptstyle\sum}_{f}(\mathcal{S}'_{2}),\widehat{K}_{1}',\hat{t}_{1}'}\rangle}$, 
\hfill\mbox{}\newline
where
$k_{1} : \widehat{K}_{1}\rightarrow\widehat{K}_{1}'$ 
is the unique mediating function for the span
\newline\mbox{}\hfill
$K_{2}'\xleftarrow{\hat{k}{\,\cdot\,}k_{2}}K_{1} 
\xrightarrow{\hat{t}_{1}{\,\cdot\,}\mathrmbfit{tup}_{\mathcal{A}_{1}}({\scriptscriptstyle\sum}_{f}(h_{2}))} 
\mathrmbfit{tup}_{\mathcal{A}_{1}}({\scriptstyle\sum}_{f}(\mathcal{S}'_{2}))$,
\hfill\mbox{}\newline
since
$(\hat{k}{\,\cdot\,}k_{2}){\,\cdot\,}t_{2}'
= \hat{k}{\,\cdot\,}t_{2}{\,\cdot\,}\mathrmbfit{tup}_{\mathcal{A}_{2}}(h_{2})
= \hat{t}_{1}{\,\cdot\,}\grave{\tau}_{{\langle{f,g}\rangle}}(\mathcal{S}_{2}){\,\cdot\,}\mathrmbfit{tup}_{\mathcal{A}_{2}}(h_{2})
= (\hat{t}_{1}{\,\cdot\,}\mathrmbfit{tup}_{\mathcal{A}_{1}}({\scriptstyle\sum}_{f}(h_{2}))){\,\cdot\,}\grave{\tau}_{{\langle{f,g}\rangle}}(\mathcal{S}'_{2})$.
}
\end{itemize}
\newpage



%
\begin{proposition}\label{FOLE:transform}
Given a type domain morphism
$\mathcal{A}_{2}\xrightleftharpoons{{\langle{f,g}\rangle}}\mathcal{A}_{1}$
the right adjoint 
$\mathrmbf{Tbl}(\mathcal{A}_{2})
\xrightarrow{\grave{\mathrmbfit{tbl}}_{{\langle{f,g}\rangle}}}
\mathrmbf{Tbl}(\mathcal{A}_{1})$
preserves inflation up to isomorphism
and preserves projection up to injection:
for any $X$-signature morphism
$\mathcal{S}'_{2}
\xrightarrow{\,h_{2}\,} 
\mathcal{S}_{2}$,
%
\begin{center}
{{\begin{tabular}{c}
{{\setlength{\extrarowheight}{4pt}{\footnotesize{$
\begin{array}[c]{
@{\hspace{5pt}}r
@{\hspace{5pt}}c@{\hspace{5pt}}
l@{\hspace{5pt}}}
\overset{\text{inflate}}{\grave{\mathrmbfit{tbl}}_{\mathcal{A}_{2}}(h_{2})}
{\;\cdot\;}
\overset{\Delta\text{-basis}}{\grave{\mathrmbfit{tbl}}_{{\langle{f,g}\rangle}}(\mathcal{S}_{2})}
&\cong&
\overset{\Delta\text{-basis}}{\grave{\mathrmbfit{tbl}}_{{\langle{f,g}\rangle}}(\mathcal{S}'_{2})}
{\;\cdot\;}
\overset{\text{inflate}}{\grave{\mathrmbfit{tbl}}_{\mathcal{A}_{1}}(h_{2})}
\\
\underset{\text{project}}{\acute{\mathrmbfit{tbl}}_{\mathcal{A}_{2}}(h_{2})}
{\;\cdot\;}
\underset{\Delta\text{-basis}}{\grave{\mathrmbfit{tbl}}_{{\langle{f,g}\rangle}}(\mathcal{S}'_{2})}
&\hookleftarrow&
\underset{\Delta\text{-basis}}{\grave{\mathrmbfit{tbl}}_{{\langle{f,g}\rangle}}(\mathcal{S}_{2})}
{\;\cdot\;}
\underset{\text{project}}{\acute{\mathrmbfit{tbl}}_{\mathcal{A}_{1}}(h_{2})}.
\end{array}$}}}}
\end{tabular}}}
\end{center}
\end{proposition}
\begin{proof}
For any source signature 
$\mathcal{S}_{2} 
\in \mathrmbf{List}(X_{2})$,
the $\mathcal{S}_{2}^{\text{th}}$-component 
of the tuple bridge 
{\footnotesize{$
\mathrmbfit{tup}_{\mathcal{A}_{2}}
\xLeftarrow{\grave{\tau}_{{\langle{f,g}\rangle}}\;}
{\scriptstyle\sum}_{f}^{\mathrm{op}}{\;\circ\\;}\mathrmbfit{tup}_{\mathcal{A}_{1}}
: \mathrmbf{List}(X_{2})^{\mathrm{op}}\rightarrow\mathrmbf{Set}$}}
is the tuple function
\newline\mbox{}\hfill
{\scriptsize{$
\Bigl(
\mathrmbfit{tup}_{\mathcal{A}_{2}}(\mathcal{S}_{2})
\xleftarrow [{(\mbox{-})} \cdot g]
{\grave{\tau}_{{\langle{f,g}\rangle}}(\mathcal{S}_{2})}
\mathrmbfit{tup}_{\mathcal{A}_{1}}({\scriptstyle\sum}_{f}(\mathcal{S}_{2})) 
\Bigr)
=
\Bigl(
\underset{\mathrmbfit{tup}_{\mathcal{A}_{2}}(\mathcal{S}_{2})}
{\mathrmbfit{tup}_{\mathcal{S}_{2}}(\mathcal{A}_{2})}
\xleftarrow[{(\mbox{-})}{\,\cdot\,}g]
{\;\mathrmbfit{tup}_{\mathcal{S}_{2}}(g)\;}
\mathrmbfit{tup}_{\mathcal{S}_{2}}({g}^{-1}(\mathcal{A}_{2}))
\Bigr)
$,}}
\hfill\mbox{}\newline
with pullback along the latter defining restriction in
$\mathrmbf{Tbl}(\mathcal{S}_{2})$. 
This is natural in signature:
for any source signature morphism 
$\mathcal{S}'_{2}
\xrightarrow{\;h_{2}\;}
\mathcal{S}_{2}
$
in $\mathrmbf{Tbl}(\mathcal{A}_{2})$,
the tuple bridge
$\grave{\tau}_{{\langle{f,g}\rangle}}$
\comment{\footnotesize{$
\mathrmbfit{tup}_{\mathcal{A}_{2}}
\!\xLeftarrow{\,\grave{\tau}_{{\langle{f,g}\rangle}}\;}
{\scriptstyle\sum}_{f}^{\mathrm{op}}
{\;\circ\\;}
\mathrmbfit{tup}_{\mathcal{A}_{1}}
$}}
satisfies the naturality diagram  
\begin{center}
{{\begin{tabular}{c}
\setlength{\unitlength}{0.5pt}
\begin{picture}(280,140)(-30,-10)
\put(0,120){\makebox(0,0){\footnotesize{$
{\mathrmbfit{tup}_{\mathcal{A}_{2}}(\mathcal{S}'_{2})}$}}}
\put(220,120){\makebox(0,0){\footnotesize{$
{\mathrmbfit{tup}_{\mathcal{A}_{2}}(\mathcal{S}_{2})}$}}}
\put(0,0){\makebox(0,0){\footnotesize{$
\underset{\textstyle{=\mathrmbfit{tup}_{\mathcal{S}'_{2}}({g}^{-1}(\mathcal{A}_{2}))}}
{\mathrmbfit{tup}_{\mathcal{A}_{1}}({\scriptstyle\sum}_{f}(\mathcal{S}'_{2}))}$}}}
\put(220,0){\makebox(0,0){\footnotesize{$
\underset{\textstyle{=\mathrmbfit{tup}_{\mathcal{S}_{2}}({g}^{-1}(\mathcal{A}_{2}))}}
{\mathrmbfit{tup}_{\mathcal{A}_{1}}({\scriptstyle\sum}_{f}(\mathcal{S}_{2}))}$}}}
\put(110,130){\makebox(0,0){\scriptsize{$\mathrmbfit{tup}_{\mathcal{A}_{2}}(h_{2})$}}}
\put(110,10){\makebox(0,0){\scriptsize{$\mathrmbfit{tup}_{\mathcal{A}_{1}}(h_{2})$}}}
\put(-15,60){\makebox(0,0)[r]{\scriptsize{$
\underset{\mathrmbfit{tup}_{\mathcal{S}'_{2}}(g)}
{\grave{\tau}_{{\langle{f,g}\rangle}}(\mathcal{S}'_{2})}$}}}
\put(235,60){\makebox(0,0)[l]{\scriptsize{$
\underset{\mathrmbfit{tup}_{\mathcal{S}_{2}}(g)}
{\grave{\tau}_{{\langle{f,g}\rangle}}(\mathcal{S}_{2})}$}}}
\put(160,120){\vector(-1,0){100}}
\put(135,0){\vector(-1,0){50}}
\put(0,35){\vector(0,1){65}}
\put(220,35){\vector(0,1){65}}
\put(110,65){\makebox(0,0){\footnotesize{$\textit{naturality}$}}}
\end{picture}
\end{tabular}}}
\end{center}
The right adjoint table fiber passage (restriction)
and inflation are both defined by pullback (inverse image)
along the tuple functions in the naturality diagram.
Hence
(see Fig.\;\ref{fig:tup:fn:fact}),
[right side]
inflation in $\mathrmbf{Tbl}(\mathcal{A}_{2})$
\underline{then}
restriction in $\mathrmbf{Tbl}(\mathcal{S}_{2})$
$\cong$
restriction in $\mathrmbf{Tbl}(\mathcal{S}'_{2})$
\underline{then}
inflation in $\mathrmbf{Tbl}(\mathcal{A}_{1})$.
On the other hand,
[left side]
naturality shows that
projection in $\mathrmbf{Tbl}(\mathcal{A}_{2})$
\underline{then}
restriction in $\mathrmbf{Tbl}(\mathcal{S}'_{2})$
$\hookleftarrow$
restriction in $\mathrmbf{Tbl}(\mathcal{S}_{2})$
\underline{then}
projection in $\mathrmbf{Tbl}(\mathcal{A}_{1})$.
\mbox{}\hfill\rule{5pt}{5pt}
\end{proof}
%

%
\begin{figure}
\begin{center}
{{\begin{tabular}{c@{\hspace{50pt}}c}
{{\begin{tabular}{c}
\setlength{\unitlength}{0.55pt}
\begin{picture}(280,140)(0,-45)
\put(220,30){\makebox(0,0){\footnotesize{$\widehat{K}'_{1}$}}}
\put(262,30){\vector(-1,0){30}}
\put(212,42){\vector(-1,1){25}}
\qbezier(205,20)(165,-5)(125,-30)\put(125,-30){\vector(-2,-1){0}}
\put(180,80){\makebox(0,0){\footnotesize{$K_{2}$}}}
\put(280,30){\makebox(0,0){\footnotesize{$\widehat{K}_{1}$}}}
\put(0,0){\makebox(0,0){\footnotesize{$
{\mathrmbfit{tup}_{\mathcal{A}_{2}}(\mathcal{S}'_{2})}$}}}
\put(180,0){\makebox(0,0){\footnotesize{$
{\mathrmbfit{tup}_{\mathcal{A}_{2}}(\mathcal{S}_{2})}$}}}
\put(90,-50){\makebox(0,0){\footnotesize{$
\mathrmbfit{tup}_{\mathcal{A}_{1}}({\scriptstyle\sum}_{f}(\mathcal{S}'_{2}))$}}}
\put(290,-50){\makebox(0,0){\footnotesize{$
\mathrmbfit{tup}_{\mathcal{A}_{1}}({\scriptstyle\sum}_{f}(\mathcal{S}_{2}))$}}}
\put(175,45){\makebox(0,0)[r]{\scriptsize{$t_{2}$}}}
\put(290,0){\makebox(0,0)[l]{\scriptsize{$\hat{t}_{1}$}}}
\put(100,-12){\makebox(0,0){\scriptsize{$
\mathrmbfit{tup}_{\mathcal{A}_{2}}(h_{2})$}}}
\put(200,-62){\makebox(0,0){\scriptsize{$
\mathrmbfit{tup}_{\mathcal{A}_{1}}(h_{2})$}}}
\put(230,67){\makebox(0,0)[l]{\scriptsize{$\hat{k}$}}}
\put(56,-30){\makebox(0,0)[r]{\scriptsize{$
\underset{\mathrmbfit{tup}_{\mathcal{S}'_{2}}(g)}
{\grave{\tau}_{{\langle{f,g}\rangle}}(\mathcal{S}'_{2})}$}}}
\put(220,-20){\makebox(0,0)[l]{\scriptsize{$
\underset{\mathrmbfit{tup}_{\mathcal{S}_{2}}(g)}
{\grave{\tau}_{{\langle{f,g}\rangle}}(\mathcal{S}_{2})}
$}}}
%
\put(165,70){\vector(-2,-1){110}}
\put(260,20){\vector(-2,-1){110}}
\put(118,0){\vector(-1,0){60}}
\put(206,-50){\vector(-1,0){36}}
\put(180,65){\vector(0,-1){50}}
\put(264,38){\vector(-2,1){66}}
\put(263,-40){\vector(-2,1){60}}
\put(83,-40){\vector(-2,1){60}}
\put(95,-10){\line(-2,1){40}}
\put(95,-10){\line(-2,-1){25}}
\put(280,15){\vector(0,-1){50}}
%
\qbezier(195,20)(205,15)(215,10)
\qbezier(215,10)(215,0)(215,-10)
\put(150,-25){\makebox(0,0){\scriptsize{$\textit{naturality}$}}}
\end{picture}
\end{tabular}}}
&
{{\begin{tabular}{c}
\setlength{\unitlength}{0.55pt}
\begin{picture}(280,140)(0,-45)
\put(0,80){\makebox(0,0){\footnotesize{$K'_{2}$}}}
\put(180,80){\makebox(0,0){\footnotesize{$K_{2}$}}}
\put(280,30){\makebox(0,0){\footnotesize{$\widehat{K}_{1}$}}}
\put(100,30){\makebox(0,0){\footnotesize{$\widehat{K}'_{1}$}}}
\put(0,0){\makebox(0,0){\footnotesize{$
{\mathrmbfit{tup}_{\mathcal{A}_{2}}(\mathcal{S}'_{2})}$}}}
\put(180,0){\makebox(0,0){\footnotesize{$
{\mathrmbfit{tup}_{\mathcal{A}_{2}}(\mathcal{S}_{2})}$}}}
\put(100,-60){\makebox(0,0){\footnotesize{$
\underset{\textstyle{\mathrmbfit{tup}_{\mathcal{S}'_{2}}({g}^{-1}(\mathcal{A}_{2}))}}
{\mathrmbfit{tup}_{\mathcal{A}_{1}}({\scriptstyle\sum}_{f}(\mathcal{S}'_{2}))}$}}}
\put(280,-60){\makebox(0,0){\footnotesize{$
\underset{\textstyle{\mathrmbfit{tup}_{\mathcal{S}_{2}}({g}^{-1}(\mathcal{A}_{2}))}}
{\mathrmbfit{tup}_{\mathcal{A}_{1}}({\scriptstyle\sum}_{f}(\mathcal{S}_{2}))}$}}}
%
\put(-6,40){\makebox(0,0)[r]{\scriptsize{$t'_{2}$}}}
\put(175,45){\makebox(0,0)[r]{\scriptsize{$t_{2}$}}}
\put(290,0){\makebox(0,0)[l]{\scriptsize{$\hat{t}_{1}$}}}
\put(85,90){\makebox(0,0){\scriptsize{$k_{2}$}}}
\put(100,-12){\makebox(0,0){\scriptsize{$
\mathrmbfit{tup}_{\mathcal{A}_{2}}(h_{2})$}}}
\put(190,40){\makebox(0,0){\scriptsize{$k_{1}$}}}
\put(200,-62){\makebox(0,0){\scriptsize{$
\mathrmbfit{tup}_{\mathcal{A}_{1}}(h_{2})$}}}
\put(50,67){\makebox(0,0)[l]{\scriptsize{$\hat{k}'$}}}
\put(230,67){\makebox(0,0)[l]{\scriptsize{$\hat{k}$}}}
\put(50,-30){\makebox(0,0)[r]{\scriptsize{$
\underset{\mathrmbfit{tup}_{\mathcal{S}'_{2}}(g)}
{\grave{\tau}_{{\langle{f,g}\rangle}}(\mathcal{S}'_{2})}$}}}
\put(240,-20){\makebox(0,0)[l]{\scriptsize{$
\underset{\mathrmbfit{tup}_{\mathcal{S}_{2}}(g)}
{\grave{\tau}_{{\langle{f,g}\rangle}}(\mathcal{S}_{2})}$}}}
\put(150,80){\vector(-1,0){125}}
\put(118,0){\vector(-1,0){60}}
\put(206,-50){\vector(-1,0){36}}
\put(0,65){\vector(0,-1){50}}
\put(180,65){\vector(0,-1){50}}
\put(250,30){\vector(-1,0){125}}
\put(264,38){\vector(-2,1){66}}
\put(263,-40){\vector(-2,1){60}}
\put(84,38){\vector(-2,1){66}}
\put(83,-40){\vector(-2,1){60}}
\put(280,15){\vector(0,-1){50}}
\put(100,15){\vector(0,-1){50}}
\qbezier(15,30)(30,30)(45,30)
\qbezier(45,30)(45,22)(45,14)
\qbezier(115,-20)(130,-20)(145,-20)
\qbezier(145,-20)(145,-28)(145,-36)
\qbezier(15,20)(25,15)(35,10)
\qbezier(35,10)(35,0)(35,-10)
\qbezier(195,20)(205,15)(215,10)
\qbezier(215,10)(215,0)(215,-10)
\put(150,-25){\makebox(0,0){\scriptsize{$\textit{naturality}$}}}
\end{picture}
\end{tabular}}}
\\&
\end{tabular}}}
\end{center}
\caption{Tuple Function Factorization}
\label{fig:tup:fn:fact}
\end{figure}

%
\newpage
\subsection{Change of Base: Signatures.
}\label{sub:sub:sec:transform:S}

%
\comment{We can follow a similar argument 
for the table fiber passage along a signature morphism,
which defines \underline{projection}.}

%
\begin{figure}
\begin{center}
{{\begin{tabular}{c}
\setlength{\unitlength}{0.55pt}
\begin{picture}(400,260)(-80,-65)
\put(0,160){\makebox(0,0){\footnotesize{$
\mathrmbf{Tbl}_{\mathcal{S}_{2}}(f^{-1}(\mathcal{A}'_{1}))$}}}
\put(0,0){\makebox(0,0){\footnotesize{$
\mathrmbf{Tbl}_{\mathcal{S}_{2}}(f^{-1}(\mathcal{A}_{1}))$}}}
\put(240,160){\makebox(0,0){\footnotesize{$
\mathrmbf{Tbl}_{\mathcal{S}_{1}}(\mathcal{A}'_{1})$}}}
\put(240,0){\makebox(0,0){\footnotesize{$
\mathrmbf{Tbl}_{\mathcal{S}_{1}}(\mathcal{A}_{1})$}}}
\put(130,-50){\makebox(0,0){\Huge{$\Rightarrow$}}}
\put(130,-70){\makebox(0,0){\scriptsize{$
{\textit{change of basis}}$}}}

\put(-21,83){\makebox(0,0)[r]{\scriptsize{$
\acute{\mathrmbfit{tbl}}_{\mathcal{S}_{2}}(g_{1})$}}}
\put(0,83){\makebox(0,0){\footnotesize{${\;\dashv\;}$}}}
\put(21,83){\makebox(0,0)[l]{\scriptsize{$
\grave{\mathrmbfit{tbl}}_{\mathcal{S}_{2}}(g_{1})$}}}
\put(-21,65){\makebox(0,0)[r]{\scriptsize{$\textit{expand}$}}}
\put(21,65){\makebox(0,0)[l]{\scriptsize{$\textit{restrict}$}}}
\put(219,83){\makebox(0,0)[r]{\scriptsize{$
\acute{\mathrmbfit{tbl}}_{\mathcal{S}_{1}}(g_{1})$}}}
\put(240,83){\makebox(0,0){\footnotesize{${\;\dashv\;}$}}}
\put(261,83){\makebox(0,0)[l]{\scriptsize{$
\grave{\mathrmbfit{tbl}}_{\mathcal{S}_{1}}(g_{1})$}}}
\put(219,65){\makebox(0,0)[r]{\scriptsize{$\textit{expand}$}}}
\put(261,65){\makebox(0,0)[l]{\scriptsize{$\textit{restrict}$}}}
\put(130,175){\makebox(0,0){\scriptsize{$
\mathrmbfit{tbl}_{{\langle{h,f}\rangle}}(\mathcal{A}'_{1})$}}}
\put(130,145){\makebox(0,0){\scriptsize{$\textit{project}$}}}
\put(130,15){\makebox(0,0){\scriptsize{$
\mathrmbfit{tbl}_{{\langle{h,f}\rangle}}(\mathcal{A}_{1})$}}}
\put(130,-15){\makebox(0,0){\scriptsize{$\textit{project}$}}}
\put(-12,35){\vector(0,1){90}}
\put(12,125){\vector(0,-1){90}}
\put(228,35){\vector(0,1){90}}
\put(252,125){\vector(0,-1){90}}
\put(180,160){\vector(-1,0){110}}
\put(180,0){\vector(-1,0){110}}
%
\put(107,80){\makebox(0,0){\footnotesize{$=$}}}
\put(133,80){\makebox(0,0){\footnotesize{$\geq$}}}
%
\put(0,-40){\makebox(0,0){\normalsize{${
\underset{\textstyle{\mathrmbf{Tbl}(\mathcal{S}_{2})}}{\underbrace{\hspace{50pt}}}}$}}}
\put(240,-40){\makebox(0,0){\normalsize{${
\underset{\textstyle{\mathrmbf{Tbl}(\mathcal{S}_{1})}}{\underbrace{\hspace{50pt}}}}$}}}
%
\end{picture}
\end{tabular}}}
\end{center}
%
\caption{\texttt{FOLE} Change of Basis: Adjoint Flow Signature}
\label{fole:change:base:adj:flo:sign}
\end{figure}
%

\S\;3.3.2 of 
``The {\ttfamily FOLE} Table'' 
\cite{kent:fole:era:tbl}
(see Def.\;2 there)
defined a table fiber passage
\newline\mbox{}\hfill
{\footnotesize{$
\mathrmbf{Tbl}(\mathcal{S}_{2})
\xleftarrow{\mathrmbfit{tbl}_{{\langle{h,f}\rangle}}}
\mathrmbf{Tbl}(\mathcal{S}_{1})
$}\normalsize}
\hfill\mbox{}\newline
for any signature morphism
${\mathcal{S}_{2}\xrightarrow{{\langle{h,f}\rangle}}\mathcal{S}_{1}}$,
which we call a ``change of base''.
We are interested in whether ``change of base''
preserves flowcharts
for the dual relational operations 
defined in \S\,\ref{sub:sec:comp:ops:sign} above.
Here we consider 
the above table fiber passage,
which is defined 
in terms of the signature tuple bridge 
{\footnotesize{$
{f^{-1}}^{\mathrm{op}}{\!\circ\;}
{\mathrmbfit{tup}_{\mathcal{S}_{2}}}\xLeftarrow{\tau_{{\langle{h,f}\rangle}}}
\mathrmbfit{tup}_{\mathcal{S}_{1}}
: \mathrmbf{Cls}(X_{1})^{\mathrm{op}} \rightarrow \mathrmbf{Set}
$}}
%
\footnote{The tuple bridge $\tau_{{\langle{h,f}\rangle}}$
is defined 
in \S\;2.4.1 of 
``The {\ttfamily FOLE} Table'' 
\cite{kent:fole:era:tbl}
using the tuple passage
$\mathrmbfit{tup} : \mathrmbf{Dom}^{\mathrm{op}} \rightarrow \mathrmbf{Set}$.}
%
as follows.
%
\begin{enumerate}
\item 
An $\mathcal{S}_{1}$-table is mapped to an $\mathcal{S}_{2}$-table as follows:
\[\mbox{\footnotesize{$
{\langle{f^{-1}(\mathcal{A}_{1}),{\scriptstyle\sum}_{\tau_{{\langle{h,f}\rangle}}(\mathcal{A}_{1})}(K_{1},t_{1})}\rangle}
\stackrel{{\mathrmbfit{tbl}_{{\langle{h,f}\rangle}}}}{
\underset{\textit{project}}
{\longmapsfrom}
}
{\langle{\mathcal{A}_{1},(K_{1},t_{1})}\rangle}
$}}\]
where 
${\langle{I_{2},s_{2},f^{-1}(\mathcal{A}_{1})}\rangle}$-tuple
${\scriptstyle\sum}_{\tau_{{\langle{h,f}\rangle}}(\mathcal{A}_{1})}(K_{1},t_{1})
= {\langle{K_{1},t_{1}{\cdot\,}\tau_{{\langle{h,f}\rangle}}(\mathcal{A}_{1})}\rangle}
 \in \mathrmbfit{tup}_{\mathcal{S}_{2}}(f^{-1}(\mathcal{A}_{1}))$
is the existential (direct) image of 
${\langle{I_{1},s_{1},\mathcal{A}_{1}}\rangle}$-tuple 
${\langle{K_{1},t_{1}}\rangle} \in \mathrmbfit{tup}_{\mathcal{S}_{1}}(\mathcal{A}_{1})$
along the tuple function (Fig.\;\ref{tup:func:idents} of \S\,\ref{sub:sec:sign:dom})
\newline\mbox{}\hfill
{\scriptsize{$
\Bigl(
\mathrmbfit{tup}_{\mathcal{S}_{2}}(f^{-1}(\mathcal{A}_{1}))
\xleftarrow[h{\,\cdot\,}{(\mbox{-})}]
{\tau_{{\langle{h,f}\rangle}}(\mathcal{A}_{1})}
\mathrmbfit{tup}_{\mathcal{S}_{1}}(\mathcal{A}_{1})
\Bigr)
=
\Bigl(
\mathrmbfit{tup}_{\mathcal{A}_{1}}({\scriptstyle\sum}_{f}(\mathcal{S}_{2}))
\xleftarrow[h{\,\cdot\,}{(\mbox{-})}]
{\mathrmbfit{tup}_{\mathcal{A}_{1}}(h)}
\underset{\mathrmbfit{tup}_{\mathcal{S}_{1}}(\mathcal{A}_{1})}
{\mathrmbfit{tup}_{\mathcal{A}_{1}}(\mathcal{S}_{1})}
\Bigr)
$,}}
\hfill\mbox{}\newline
the latter defining projection in
$\mathrmbf{Tbl}(\mathcal{A}_{1})$. 
Here we use the tuple function identities in Fig.\,\ref{tup:func:idents} 
for the signed domain morphism
${\langle{\mathcal{S}_{2},\mathcal{A}_{2}}\rangle}
\xrightarrow{{\langle{h,f,1}\rangle}}
{\langle{\mathcal{S}_{1},\mathcal{A}_{1}}\rangle}$.
\comment{
\item 
A morphism of $\mathcal{S}_{1}$-tables
${\langle{\mathcal{A}_{1},K_{1},t_{1}}\rangle} 
\xleftarrow{\;{\langle{g,k}\rangle}\;}
{\langle{\widetilde{\mathcal{A}}_{1},\widetilde{K}_{1},\tilde{t}_{1}}\rangle}$
\newline
is mapped to the morphism of $\mathcal{S}_{2}$-tables
\newline\mbox{}\hfill
${\langle{f^{-1}(\mathcal{A}_{1}),{\scriptstyle\sum}_{\tau_{{\langle{h,f}\rangle}}(\mathcal{A}_{1})}(K_{1},t_{1})}\rangle}
\xleftarrow{\;{\langle{f^{-1}(g),k}\rangle}\;}
{\langle{f^{-1}(\widetilde{\mathcal{A}}_{1}),
{\scriptstyle\sum}_{\tau_{{\langle{h,f}\rangle}}(\widetilde{\mathcal{A}}_{1})}(\widetilde{K}_{1},\tilde{t}_{1})}\rangle}$.
\hfill\mbox{}
}
\end{enumerate}
\newpage

%
%
\begin{proposition}\label{FOLE:transform:S}
Given a signature morphism
${\mathcal{S}_{2}\xrightarrow{{\langle{h,f}\rangle}}\mathcal{S}_{1}}$
the table fiber passage 
{\footnotesize{$
\mathrmbf{Tbl}(\mathcal{S}_{2})
\xleftarrow{\mathrmbfit{tbl}_{{\langle{h,f}\rangle}}}
\mathrmbf{Tbl}(\mathcal{S}_{1})
$}\normalsize}
preserves expansion 
and preserves restriction up to injection:
for any $X$-type domain morphism
$\mathcal{A}'_{1}
\xrightarrow{\,g_{1}\,} 
\mathcal{A}_{1}$,
%
\begin{center}
{{\begin{tabular}{c}
{{\setlength{\extrarowheight}{4pt}{\footnotesize{$
\begin{array}[c]{
@{\hspace{5pt}}r
@{\hspace{5pt}}c@{\hspace{5pt}}
l@{\hspace{5pt}}}
\overset{\text{expand}}{\acute{\mathrmbfit{tbl}}_{\mathcal{S}_{1}}(g_{1})}
{\;\cdot\;}
\overset{\Delta\text{-basis}}{\mathrmbfit{tbl}_{{\langle{h,f}\rangle}}(\mathcal{A}'_{1})}
&=&
\overset{\Delta\text{-basis}}{\mathrmbfit{tbl}_{{\langle{h,f}\rangle}}(\mathcal{A}_{1})}
{\;\cdot\;}
\overset{\text{expand}}{\acute{\mathrmbfit{tbl}}_{\mathcal{S}_{2}}(g_{1})}
\\
\underset{\text{restrict}}{\grave{\mathrmbfit{tbl}}_{\mathcal{S}_{1}}(g_{1})}
{\;\cdot\;}
\underset{\Delta\text{-basis}}{\mathrmbfit{tbl}_{{\langle{h,f}\rangle}}(\mathcal{A}_{1})}
&\hookrightarrow&
\underset{\Delta\text{-basis}}{\mathrmbfit{tbl}_{{\langle{h,f}\rangle}}(\mathcal{A}'_{1})}
{\;\cdot\;}
\underset{\text{restrict}}{\grave{\mathrmbfit{tbl}}_{\mathcal{S}_{2}}(g_{1})}.
\end{array}$}}}}
\end{tabular}}}
\end{center}
\end{proposition}
\begin{proof}
For any source type domain $\mathcal{A}_{1}
\in \mathrmbf{Cls}(X_{1})$,
the $\mathcal{A}_{1}^{\text{th}}$-component of the signature tuple bridge 
is the tuple function
\newline\mbox{}\hfill
{\scriptsize{$
\Bigl(
\mathrmbfit{tup}_{\mathcal{S}_{2}}(f^{-1}(\mathcal{A}_{1}))
\xleftarrow[h{\,\cdot\,}{(\mbox{-})}]
{\tau_{{\langle{h,f}\rangle}}(\mathcal{A}_{1})} 
\mathrmbfit{tup}_{\mathcal{S}_{1}}(\mathcal{A}_{1})
\Bigr)
=
\Bigl(
\mathrmbfit{tup}_{\mathcal{A}_{1}}({\scriptstyle\sum}_{f}(\mathcal{S}_{2}))
\xleftarrow[h{\,\cdot\,}{(\mbox{-})}]
{\mathrmbfit{tup}_{\mathcal{A}_{1}}(h)}
\mathrmbfit{tup}_{\mathcal{A}_{1}}(\mathcal{S}_{1})
\Bigr)
$,}}
\hfill\mbox{}\newline
the latter defining projection in
$\mathrmbf{Tbl}(\mathcal{A}_{1})$. 
\comment{
This is natural in type domain.
\mbox{}\newline\newline
{\fbox{$\blacktriangledown$\hspace{90pt}
\textbf{Work zone: change of base.}
\hspace{90pt}$\blacktriangledown$}}
\newline\newline
}
This is natural in type domain:
for any source type domain morphism 
$\mathcal{A}'_{1}
\xrightarrow{\;g_{1}\;}
\mathcal{A}_{1}
$
in $\mathrmbf{Tbl}(\mathcal{S}_{1})$,
the tuple bridge
$\grave{\tau}_{{\langle{f,g}\rangle}}$
\comment{\footnotesize{$
\mathrmbfit{tup}_{\mathcal{A}_{2}}
\!\xLeftarrow{\,\grave{\tau}_{{\langle{f,g}\rangle}}\;}
{\scriptstyle\sum}_{f}^{\mathrm{op}}
{\;\circ\\;}
\mathrmbfit{tup}_{\mathcal{A}_{1}}
$}}
satisfies the naturality diagram  

%
\begin{center}
{{\begin{tabular}{c}
\setlength{\unitlength}{0.5pt}
\begin{picture}(300,180)(-40,-20)
\put(0,120){\makebox(0,0){\footnotesize{$
\underset{\textstyle{=\mathrmbfit{tup}_{\mathcal{A}'_{1}}({\scriptstyle\sum}_{f}(\mathcal{S}_{2}))}}
{\mathrmbfit{tup}_{\mathcal{S}_{2}}(f^{-1}(\mathcal{A}'_{1}))}$}}}
\put(0,0){\makebox(0,0){\footnotesize{$
\underset{\textstyle{=\mathrmbfit{tup}_{\mathcal{A}_{1}}({\scriptstyle\sum}_{f}(\mathcal{S}_{2}))}}
{\mathrmbfit{tup}_{\mathcal{S}_{2}}(f^{-1}(\mathcal{A}_{1}))}$}}}
\put(220,120){\makebox(0,0){\footnotesize{$
{\mathrmbfit{tup}_{\mathcal{S}_{1}}(\mathcal{A}'_{1})}$}}}
\put(220,0){\makebox(0,0){\footnotesize{$
{\mathrmbfit{tup}_{\mathcal{S}_{1}}(\mathcal{A}_{1})}$}}}
\put(125,140){\makebox(0,0){\scriptsize{$
\underset{\mathrmbfit{tup}_{\mathcal{A}'_{1}}(h)}
{{\tau}_{{\langle{h,f}\rangle}}(\mathcal{A}'_{1})}$}}}
\put(235,60){\makebox(0,0)[l]{\scriptsize{$
\mathrmbfit{tup}_{\mathcal{S}_{2}}(g_{1})$}}}
\put(125,-25){\makebox(0,0){\scriptsize{$
\underset{\mathrmbfit{tup}_{\mathcal{A}_{1}}(h)}
{{\tau}_{{\langle{h,f}\rangle}}(\mathcal{A}_{1})}$}}}
\put(-15,60){\makebox(0,0)[r]{\scriptsize{$
\mathrmbfit{tup}_{\mathcal{S}_{1}}(g_{1})$}}}
\put(165,120){\vector(-1,0){90}}
\put(165,0){\vector(-1,0){90}}
\put(0,30){\vector(0,1){65}}
\put(220,30){\vector(0,1){65}}
\put(110,65){\makebox(0,0){\footnotesize{$\textit{naturality}$}}}
\end{picture}
\end{tabular}}}
\end{center}
The table fiber passage (projection)
and expansion are both defined by composition (direct image)
along the tuple functions in the naturality diagram.
Hence
(see Fig.\;\ref{fig:tup:fn:fact:S}),
[right side]
expansion in $\mathrmbf{Tbl}(\mathcal{S}_{1})$
\underline{then}
projection in $\mathrmbf{Tbl}(\mathcal{A}'_{1})$
$=$
projection in $\mathrmbf{Tbl}(\mathcal{A}_{1})$
\underline{then}
inflation in $\mathrmbf{Tbl}(\mathcal{S}_{2})$.
On the other hand,
restriction is define by pullback (inverse image)
along tuple functions in the naturality diagram.
Hence
(see Fig.\;\ref{fig:tup:fn:fact:S}),
[left side]
restriction in $\mathrmbf{Tbl}(\mathcal{S}_{1})$
\underline{then}
projection in $\mathrmbf{Tbl}(\mathcal{A}_{1})$
$\hookrightarrow$
projection in $\mathrmbf{Tbl}(\mathcal{A}'_{1})$
\underline{then}
restriction in $\mathrmbf{Tbl}(\mathcal{S}_{2})$.
\mbox{}\hfill\rule{5pt}{5pt}
\end{proof}
\begin{figure}
\begin{center}
{{\begin{tabular}{c@{\hspace{50pt}}c}
{{\begin{tabular}{c}
\setlength{\unitlength}{0.55pt}
\begin{picture}(280,140)(0,-45)
\put(220,30){\makebox(0,0){\footnotesize{$\widehat{K}'_{1}$}}}
%
\put(230,67){\makebox(0,0)[l]{\scriptsize{$\hat{k}$}}}
\put(262,30){\vector(-1,0){30}}
\put(212,42){\vector(-1,1){25}}
\put(165,70){\vector(-2,-1){110}}
\put(260,20){\vector(-2,-1){110}}
\put(264,38){\vector(-2,1){66}}
\put(180,65){\vector(0,-1){50}}
\put(280,15){\vector(0,-1){50}}
\qbezier(205,20)(165,-5)(125,-30)\put(125,-30){\vector(-2,-1){0}}
\put(95,-10){\line(-2,-1){25}}
\put(95,-10){\line(-2,1){40}}
\qbezier(195,20)(205,15)(215,10)
\qbezier(215,10)(215,0)(215,-10)
%
\put(180,80){\makebox(0,0){\footnotesize{$K_{2}$}}}
\put(280,30){\makebox(0,0){\footnotesize{$K_{1}$}}}

\put(180,0){\makebox(0,0){\footnotesize{$
{\mathrmbfit{tup}_{\mathcal{S}_{1}}(\mathcal{A}'_{1})}
$}}}
\put(280,-50){\makebox(0,0){\footnotesize{$
{\mathrmbfit{tup}_{\mathcal{S}_{1}}(\mathcal{A}_{1})}
$}}}
\put(0,0){\makebox(0,0){\scriptsize{$
\underset{\textstyle{=\mathrmbfit{tup}_{\mathcal{A}'_{1}}({\scriptstyle\sum}_{f}(\mathcal{S}_{2}))}}
{\mathrmbfit{tup}_{\mathcal{S}_{2}}(f^{-1}(\mathcal{A}'_{1}))}$}}}
\put(100,-60){\makebox(0,0){\footnotesize{$
\underset{\textstyle{=\mathrmbfit{tup}_{\mathcal{A}_{1}}({\scriptstyle\sum}_{f}(\mathcal{S}_{2}))}}
{\mathrmbfit{tup}_{\mathcal{S}_{2}}(f^{-1}(\mathcal{A}_{1}))}$}}}
%
\put(175,45){\makebox(0,0)[r]{\scriptsize{$t_{2}$}}}
\put(290,0){\makebox(0,0)[l]{\scriptsize{$t_{1}$}}}

\put(100,20){\makebox(0,0){\scriptsize{$
\underset{\mathrmbfit{tup}_{\mathcal{A}'_{1}}(h)}
{\tau_{{\langle{h,f}\rangle}}(\mathcal{A}'_{1})}$}}}
\put(210,-72){\makebox(0,0){\scriptsize{$
\underset{\mathrmbfit{tup}_{\mathcal{A}_{1}}(h)}
{\tau_{{\langle{h,f}\rangle}}(\mathcal{A}_{1})}$}}}
\put(50,-30){\makebox(0,0)[r]{\scriptsize{$
\mathrmbfit{tup}_{\mathcal{S}_{2}}(g_{1})$}}}
\put(240,-20){\makebox(0,0)[l]{\scriptsize{$
\mathrmbfit{tup}_{\mathcal{S}_{1}}(g_{1})$}}}

%
\put(128,0){\vector(-1,0){60}}
\put(230,-50){\vector(-1,0){60}}
\put(180,65){\vector(0,-1){50}}
\put(263,-40){\vector(-2,1){60}}
\put(83,-40){\vector(-2,1){60}}
\put(280,15){\vector(0,-1){50}}
%
%
\put(150,-25){\makebox(0,0){\scriptsize{$\textit{naturality}$}}}
\end{picture}
\end{tabular}}}
&
{{\begin{tabular}{c}
\setlength{\unitlength}{0.55pt}
\begin{picture}(280,140)(0,-45)
%
\put(280,30){\makebox(0,0){\footnotesize{$K_{1}$}}}
\put(180,0){\makebox(0,0){\footnotesize{$
{\mathrmbfit{tup}_{\mathcal{S}_{1}}(\mathcal{A}'_{1})}$}}}
\put(280,-50){\makebox(0,0){\footnotesize{$
{\mathrmbfit{tup}_{\mathcal{S}_{1}}(\mathcal{A}_{1})}$}}}
\put(0,0){\makebox(0,0){\scriptsize{$
\underset{\textstyle{=\mathrmbfit{tup}_{\mathcal{A}'_{1}}({\scriptstyle\sum}_{f}(\mathcal{S}_{2}))}}
{\mathrmbfit{tup}_{\mathcal{S}_{2}}(f^{-1}(\mathcal{A}'_{1}))}$}}}
\put(100,-60){\makebox(0,0){\footnotesize{$
\underset{\textstyle{=\mathrmbfit{tup}_{\mathcal{A}_{1}}({\scriptstyle\sum}_{f}(\mathcal{S}_{2}))}}
{\mathrmbfit{tup}_{\mathcal{S}_{2}}(f^{-1}(\mathcal{A}_{1}))}$}}}
\put(287,-3){\makebox(0,0)[l]{\scriptsize{$t_{1}$}}}
\put(196,-21){\makebox(0,0)[l]{\scriptsize{$t_{2}$}}}
\put(146,28){\makebox(0,0)[r]{\scriptsize{$t'_{2}$}}}
\put(233,24){\makebox(0,0)[r]{\scriptsize{$t'_{1}$}}}

\put(100,-2){\makebox(0,0){\scriptsize{$
\underset{\mathrmbfit{tup}_{\mathcal{A}'_{1}}(h)}
{\tau_{{\langle{h,f}\rangle}}(\mathcal{A}'_{1})}$}}}
\put(210,-50){\makebox(0,0){\scriptsize{$
\underset{\mathrmbfit{tup}_{\mathcal{A}_{1}}(h)}
{\tau_{{\langle{h,f}\rangle}}(\mathcal{A}_{1})}$}}}
\put(50,-30){\makebox(0,0)[r]{\scriptsize{$
\mathrmbfit{tup}_{\mathcal{S}_{2}}(g_{1})$}}}
\put(240,-20){\makebox(0,0)[l]{\scriptsize{$
\mathrmbfit{tup}_{\mathcal{S}_{1}}(g_{1})$}}}
%
\put(128,0){\vector(-1,0){60}}
\put(230,-50){\vector(-1,0){60}}
\put(263,-40){\vector(-2,1){60}}
\put(83,-40){\vector(-2,1){60}}
\put(280,15){\vector(0,-1){50}}
\put(264,20){\vector(-2,-1){120}}
\qbezier(260,30)(140,15)(40,3)\put(40,3){\vector(-4,-1){0}}
\put(264,26){\vector(-3,-1){66}}
%
%
\put(150,-25){\makebox(0,0){\scriptsize{$\textit{naturality}$}}}
\end{picture}
\end{tabular}}}
\\&
\end{tabular}}}
\end{center}
\caption{Tuple Function Factorization}
\label{fig:tup:fn:fact:S}
\end{figure}

\newpage
\paragraph{Questions about right adjoint flow
$\mathrmbf{Tbl}(\mathcal{A}_{2})
\xrightarrow{\grave{\mathrmbfit{tbl}}_{{\langle{f,g}\rangle}}}
\mathrmbf{Tbl}(\mathcal{A}_{1})$
.}

%
%
\begin{itemize}
\item
The right adjoint is pullback along
$\mathrmbfit{tup}_{\mathcal{A}'}(\mathcal{S}')
\xleftarrow[{(\mbox{-})}{\,\cdot\,}g]
{\grave{\tau}_{{\langle{f,g}\rangle}}(\mathcal{S}')}
{\mathrmbfit{tup}_{\mathcal{A}}({\scriptstyle\sum}_{f}(\mathcal{S}')})$;
\newline
hence,
it is pullback (restriction)
along
$\underset{\mathrmbfit{tup}_{\mathcal{A}'}(\mathcal{S}')}
{\mathrmbfit{tup}_{\mathcal{S}'}(\mathcal{A}')}
\xleftarrow[{(\mbox{-})}{\,\cdot\,}g]
{\;\mathrmbfit{tup}_{\mathcal{S}'}(g)\;}
\mathrmbfit{tup}_{\mathcal{S}'}({g}^{-1}(\mathcal{A}'))$.  
\newline
\item 
The right adjoint
$\mathrmbf{Tbl}(\mathcal{A}_{2})
\xrightarrow{\grave{\mathrmbfit{tbl}}_{{\langle{f,g}\rangle}}}
\mathrmbf{Tbl}(\mathcal{A}_{1})$,
preserves all limits, such as pullbacks, products,
and meets.
Since it also preserves inflation,
it preserves:
natural join,
selection,
and
select-join.
\newline
$
\mathrmbfit{trans}_{{\langle{f,g}\rangle}}
{\;\circ\;}
\mathrmbf{infl}_{\mathcal{A}_{2}}
\cong
\mathrmbf{infl}_{\mathcal{A}_{1}}
{\;\circ\;}
\mathrmbfit{trans}_{{\langle{f,g}\rangle}}
$
\newline
\item 
Since the right adjoint
also preserves projection (up to inclusion),
it preserves:
semi-join (up to inclusion).
\newline
\item 
What about difference?
An $\mathcal{A}_{2}$-relation is a subset
$R \subseteq {\mathrmbf{Tbl}_{\mathcal{A}_{2}}(\mathcal{S}_{2})}$.
On relations the right adjoint is inverse-image.
But inverse image always preserves partitions.
Hence, 
the right adjoint also preserves difference.
This means that the right adjoint also preserves anti-join and division
(up to reverse inclusion).
\newline
\item 
What about union?
We need this for outer-join.
Is the right adjoint
the
left adjoint to a universal quantification?
Yes,
since inflation (and restriction) are both pullbacks
along the tuple functions
mentioned above.
\newline
\item 
The right adjoint obviously preserves inclusion;
the right adjoint obviously preserves image by diagonal fill-in.
\newline
\end{itemize}
%

}

%

\section{Conclusion and Future Work}\label{sec:conclu}


\paragraph{Conclusion.}

%
This paper describes 
a well-founded semantics
for relational algebra 
in the first-order logical environment \texttt{FOLE},
thus providing a theoretical foundation for relational databases.
%
Here we have defined
a typed semantics for the flowcharts 
used in analyzing, designing, documenting and managing database query processing.
%
%
%
In the \texttt{FOLE} approach to relational algebra,
each relational operator is a composite concept.
The structure of a relational operator is represented by a flowchart
made up of basic components.
The basic components are divided into three categories:
reflection,
Boolean operators,
and 
adjoint flow.  
Each flowchart is typed:
the inputs are type
and 
linked,
each step in the flowchart is typed,
and the output is typed.
These types are of three kinds:
either a type domain $\mathcal{A}$,
a signature $\mathcal{S}$,
or a signed domain $\mathcal{D}$.
Implicit in the background of each non-generic flowchart is 
a particular tabular component
that is fixed:
either 
a type domain,
a signature, or
a signed domain. 
%
\footnote{Combining the type at each step 
with the fixed background 
defines the signed domain 
of each table at that step.}
%
%
The paper demonstrates that the \texttt{FOLE} approach
for representing the relational model is very natural,
providing a clear approach to its implementation.
%
%

\paragraph{Future Work.}
%
Two forms for the first-order logical environment \texttt{FOLE}
have been developed:
the \emph{classification form} and the \emph{interpretation form}. 
The \emph{classification form} 
of \texttt{FOLE}
is developed in the papers
``The {\ttfamily ERA} of {\ttfamily FOLE}: Foundation''
\cite{kent:fole:era:found}
and
``The {\ttfamily ERA} of {\ttfamily FOLE}: Superstructure''
\cite{kent:fole:era:supstruc}.
The \emph{interpretation form} 
of \texttt{FOLE}
is developed in the papers
``The {\ttfamily FOLE} Table''
\cite{kent:fole:era:tbl}
and
the unpublished paper 
``The \texttt{FOLE} Database''
\cite{kent:fole:era:db}.
Both of the latter two papers expand on material found in the paper 
``Database Semantics''
\cite{kent:db:sem}.
%
In the two papers
\cite{kent:fole:era:found}
and
\cite{kent:fole:era:supstruc}
that develop the classification form of \texttt{FOLE},
the classification concept of information flow 
\cite{barwise:seligman:97} 
is used at the two {\ttfamily ERA} levels of entities and attributes.
By using a slight generalization for the classification concept for entities,
the unpublished paper 
``The \texttt{FOLE} Equivalence''
\cite{kent:fole:equiv}
establishes
the equivalence between 
between
the classification and interpretation forms of \texttt{FOLE}.

\newpage
\appendix
\section{Appendix}\label{sec:append}

Here,
we review the various concepts 
that arise in the \texttt{FOLE} approach 
to relational algebra.
We first review tables and their components:
type domains, signatures and signed domains.
%
We next review the completeness and co-completeness of the mathematical context of tables,
ending with the definitions of sufficiency and adequacy 
that are useful for simplifying the input for relational operations.

\subsection{Tabular Components}
\label{sub:sec:tbl:comp}

\paragraph{Signatures.}

A signature 
\cite{kent:fole:era:tbl}
is a list, 
which represents the header of a relational table;
it provides typing for the tuples permitted in the table. 
%
\footnote{As we remarked in \cite{kent:fole:era:tbl},
the use of lists for signatures (and tuples) 
follows Codd's recommendation to use attribute names to index the tuples of a relation instead of a numerical ordering.}
%
A signature
$\mathcal{S}={\langle{I,s,X}\rangle}$
consists of 
a set of sorts $X$,
an indexing set (arity) $I$, and 
a map $I\xrightarrow{\,s\,}X$ from indexes $I$ to sorts $X$.
A signature morphism (list morphism)
{\footnotesize{$
\mathcal{S}'\xrightarrow{{\langle{h,f}\rangle}}\mathcal{S}
$}\normalsize},
from signature (list)
$\mathcal{S}'={\langle{I',x',X'}\rangle}$
to signature (list)
$\mathcal{S}={\langle{I,x,X}\rangle}$,
consists of 
a sort function $X'\xrightarrow{f}X$ and an arity function $I'\xrightarrow{h}I$
satisfying the naturality condition $h{\,\cdot\,}s = s'{\,\cdot\,}f$.
Let $\mathrmbf{List}$ denote 
the mathematical context of signatures and signature morphisms. 
\begin{figure}
\begin{center}
{{\begin{tabular}{c}
\setlength{\unitlength}{0.48pt}
\begin{picture}(180,90)(0,0)
\put(2,80){\makebox(0,0){\footnotesize{$I'$}}}
\put(160,80){\makebox(0,0){\footnotesize{$I$}}}
\put(2,0){\makebox(0,0){\footnotesize{$X'$}}}
\put(160,0){\makebox(0,0){\footnotesize{$X$}}}
\put(8,40){\makebox(0,0)[l]{\scriptsize{$s'$}}}
\put(168,40){\makebox(0,0)[l]{\scriptsize{$s$}}}
\put(80,94){\makebox(0,0){\scriptsize{$h$}}}
\put(80,14){\makebox(0,0){\scriptsize{$f$}}}
\put(20,80){\vector(1,0){120}}
\put(20,0){\vector(1,0){120}}
\put(0,70){\vector(0,-1){60}}
\put(160,70){\vector(0,-1){60}}
\end{picture}
\end{tabular}}}
\end{center}
\caption{Signature Morphism}
\label{fig:sign:mor}
\end{figure}
%
\comment{Hence, when $I' = \{i'\} = 1$,
a signature ${\langle{1,s',X'}\rangle}$
is essentially a sort $s_{i'}=x' \in X'$,
an arity function $I'\xrightarrow{h}I$
is essentially an index $h(i') = i \in I$,
and a signature morphism
{\footnotesize{$
{\langle{1,s',X'}\rangle}
\xrightarrow{{\langle{h,f}\rangle}}
{\langle{I,x,X}\rangle}=\mathcal{S}
$}\normalsize}
%
states that $f(x') = s(h(i')) = s(i)$.}
%
%


Let $X$ be a fixed sort set.
An $X$-signature
$\mathcal{S} = \mathcal{S}$
is a signature with the sort set $X$.
%
An $X$-signature morphism 
$\mathcal{S}' = {\langle{\mathcal{S}'}\rangle} \xrightarrow{h} \mathcal{S} = \mathcal{S}$
is a signature morphism
with an identity sort function
$X\xrightarrow{\;1_{X}\;}X$
%
and satisfying the naturality condition $h{\,\cdot\,}s = s'$.
Let $\mathrmbf{List}(X)$ denote 
the fiber context of $X$-signatures and $X$-signature morphisms. 
%
%
%
%
For fixed arity function $I'\xrightarrow{h}I$,
the naturality condition 
$h{\,\cdot\,}s = s'{\,\cdot\,}f$
gives two alternate and adjoint fiber passages:
$\mathrmbf{List}(X')
{\;\xrightarrow[{\langle{{\scriptscriptstyle\sum}_{f}{\;\dashv\;}f^{\ast}}\rangle}]{\mathrmbfit{list}(f)}\;}
\mathrmbf{List}(X)$.
%
In terms of fibers,
a signature morphism
consists of
a sort function $X'\xrightarrow{f}X$ 
\underline{and} 
either a morphism 
$\mathcal{S}'\xrightarrow{\;\hat{h}\;}{f}^{\ast}(\mathcal{S})$
in the fiber context $\mathrmbf{List}(X')$ 
or adjointly a morphism 
${\scriptstyle\sum}_{f}(\mathcal{S}')\xrightarrow{\;h\;}\mathcal{S}$
in the fiber context $\mathrmbf{List}(X)$.
%
\footnote{${f}^{\ast}(\mathcal{S})$ 
is the pullback of signature $\mathcal{S}$
back along sort function $X'\xrightarrow{f}X$ 
and
${\scriptstyle\sum}_{f}(\mathcal{S}')$ 
is the composition of signature $\mathcal{S}'$
forward along sort function $X'\xrightarrow{f}X$.}
%
\footnote{For more on this see \S\;2.1 of
\cite{kent:fole:era:tbl}.}
%
\comment{
\[{\mbox{
{\footnotesize{$
\underset{\textstyle{\text{in}\;\mathrmbf{List}(X)}}
{{\scriptstyle\sum}_{f}(\mathcal{S}')\xrightarrow{\;h\;}\mathcal{S}}
{\;\;\;\;\;\;\;\;\rightleftarrows\;\;\;\;\;\;\;\;}
\underset{\textstyle{\text{in}\;\mathrmbf{List}(X')}}
{\mathcal{S}'\xrightarrow{\;\hat{h}\;}f^{\ast}(\mathcal{S})}
$}\normalsize}}}
\]
}
\comment{\begin{center}
{{\begin{tabular}{c}
\setlength{\unitlength}{0.5pt}
\begin{picture}(180,80)(0,10)
\put(0,80){\makebox(0,0){\footnotesize{$I'$}}}
\put(180,80){\makebox(0,0){\footnotesize{$I$}}}
\put(0,0){\makebox(0,0){\footnotesize{$X'$}}}
\put(180,0){\makebox(0,0){\footnotesize{$X$}}}
\put(45,45){\makebox(0,0){\footnotesize{$\hat{I}'$}}}
\put(-6,40){\makebox(0,0)[r]{\scriptsize{$s'$}}}
\put(22,16){\makebox(0,0)[l]{\scriptsize{$\hat{s}'$}}}
\put(188,40){\makebox(0,0)[l]{\scriptsize{$s$}}}
\put(90,94){\makebox(0,0){\scriptsize{$h$}}}
\put(82,66){\makebox(0,0){\scriptsize{$\hat{f}$}}}
\put(25,68){\makebox(0,0)[l]{\scriptsize{$\hat{h}$}}}
\put(90,14){\makebox(0,0){\scriptsize{$f$}}}
\put(20,80){\vector(1,0){140}}
\put(20,0){\vector(1,0){140}}
\put(0,70){\vector(0,-1){60}}
\put(180,70){\vector(0,-1){60}}
\put(57,47){\vector(4,1){100}}
\qbezier(10,10)(20,20)(30,30)\put(10,10){\vector(-1,-1){0}}
\qbezier(10,70)(20,60)(30,50)\put(30,50){\vector(1,-1){0}}
%
\qbezier(160,20)(165,20)(170,20)
\qbezier(160,20)(160,15)(160,10)
\end{picture}
\end{tabular}}}
\end{center}}

\paragraph{Type Domains.}


%
%
%
\comment{
A simple example is a subset relationship:
given a set of instances $Y$,
a subset $Y'{\,\subseteq\,}Y$
determines a classification 
$Y_{Y'} = {\langle{1,Y,\models}\rangle}$
with a singleton typeset $1=\{\cdot\}$,
the given set of instances $Y$,
and the binary (classification) relation
{\footnotesize{$y\models{\cdot}$}}
iff $y{\,\in\,}Y'$.
}

In the \texttt{FOLE} theory of data-types \cite{kent:fole:era:found},
a classification 
$\mathcal{A} = {\langle{X,Y,\models_{\mathcal{A}}}\rangle}$
\cite{barwise:seligman:97} 
is known as a type domain.
%
\footnote{In \cite{ganter:wille:99} a classification is known as a formal context.}
%
A type domain
\cite{kent:fole:era:tbl}
consists of 
a set of sorts (data types) $X$,
a set of data values (instances) $Y$,
and a binary (classification) relation
{\footnotesize{$\models_{\mathcal{A}}$}}
between data values and sorts.
The extent
of any sort (data type) $x \in X$
is the subset
$\mathrmbfit{ext}_{\mathcal{A}}(x) = A_{x} 
= \{ y \in Y \mid y{\;\models_{\mathcal{A}}\;}x \}$.
Hence,
a type domain is equivalent to be a sort-indexed collection of subsets of data values
$\mathcal{A} = \{ A_{x} \subseteq Y \mid x \in X \}$;
or more abstractly,
$X \xrightarrow{\;\mathcal{A}\;} {\wp}Y: 
x \mapsto \mathrmbfit{ext}_{\mathcal{A}}(x) = A_{x}$.
By being so explicit,
we have more exact control over the data.
\begin{flushleft}
{\fbox{\fbox{\footnotesize{\begin{minipage}{340pt}
Some examples of data-types 
useful in databases are as follows.
The real numbers might use 
sort symbol $\Re$ 
with extent $\{-\infty, \cdots, 0, \cdots, \infty\}$.
The alphabet might use
sort symbol $\aleph$ 
with extent $\{a,b,c, \cdots, x,y,z\}$.
Words, as a data-type, would be lists of alphabetic symbols
with
sort symbol $\aleph^{\ast}$ 
and extent $\{a,b,c, \cdots, x,y,z\}^{\ast}$
being all strings of alphabetic symbols.
The periodic table of elements might use
sort symbol $\textbf{E}$
with extent $\{\text{H, He, Li, He,} \cdots\text{, Hs,Mt}\}$.
Of course,
chemical elements can also be regarded as 
entities 
\cite{kent:fole:era:found}
in a database 
with various attributes
such as
name, 
symbol, 
atomic number, 
atomic mass, 
density, 
melting point, 
boiling point, etc.
\end{minipage}}}}}
\end{flushleft}
%
%
For a given type domain $\mathcal{A}$, 
the list classification 
$\mathrmbf{List}(\mathcal{A}) = {\langle{\mathrmbf{List}(X),\mathrmbf{List}(Y),\models_{\mathrmbf{List}(\mathcal{A})}}\rangle}$
has $X$-signatures as types and
$Y$-tuples as instances,
with classification by common arity and universal $\mathcal{A}$-classification:
a $Y$-tuple ${\langle{J,t}\rangle}$ 
is classified by 
an $X$-signature $\mathcal{S}$ 
when
$J = I$ and
$t_{k} \models_{\mathcal{A}} s_{k}$
for all $k \in J = I$.
Hence,
a list type domain is equivalent to be a header-indexed collection of subsets of tuples
$\mathrmbf{List}(\mathcal{A}) 
= \{ \mathrmbf{List}(\mathcal{A})_{\mathcal{S}} \subseteq \mathrmbf{List}(Y) 
\mid \mathcal{S} \in \mathrmbf{List}(X) \}$;
or more abstractly,
$\mathrmbf{List}(X) \xrightarrow{\;\mathrmbf{List}(A)\;} {\wp}\mathrmbf{List}(Y): 
\mathcal{S} \mapsto \mathrmbfit{ext}_{\mathrmbf{List}(A)}(\mathcal{S}) 
= \mathrmbf{List}(\mathcal{A})_{\mathcal{S}}$.
%
\footnote{In particular,
when $I = 1$ is a singleton,
an $X$-signature ${\langle{1,s}\rangle}$ 
is the same as a sort  
$s({\cdot)} = x \in X$,
a $Y$-tuple ${\langle{1,t}\rangle}$ 
is the same as a data value  
$t({\cdot}) = y \in Y$,
and
$
\mathrmbfit{ext}_{\mathrmbf{List}(\mathcal{A})}(1,s) 
= \mathrmbf{List}(\mathcal{A})_{1,s}
=\mathcal{A}_{x}$.}

\begin{figure}
\begin{center}
{{\begin{tabular}{c
@{\hspace{33pt}{\textit{or}}\hspace{20pt}}
c}
{{\begin{tabular}{c}
\setlength{\unitlength}{0.6pt}
\begin{picture}(120,90)(0,0)
\put(2,80){\makebox(0,0){\footnotesize{$X'$}}}
\put(120,80){\makebox(0,0){\footnotesize{$X$}}}
\put(2,0){\makebox(0,0){\footnotesize{$Y'$}}}
\put(120,0){\makebox(0,0){\footnotesize{$Y$}}}
\put(8,40){\makebox(0,0)[l]{\scriptsize{$\models_{\mathcal{A}'}$}}}
\put(130,40){\makebox(0,0)[l]{\scriptsize{$\models_{\mathcal{A}}$}}}
\put(60,94){\makebox(0,0){\scriptsize{$f$}}}
\put(62,-14){\makebox(0,0){\scriptsize{$g$}}}
\put(20,80){\vector(1,0){80}}
\put(100,0){\vector(-1,0){80}}
\put(0,65){\line(0,-1){50}}
\put(120,65){\line(0,-1){50}}
\end{picture}
\end{tabular}}}
&
{{\begin{tabular}{c}
\setlength{\unitlength}{0.6pt}
\begin{picture}(120,90)(0,0)
\put(0,80){\makebox(0,0){\footnotesize{$X'$}}}
\put(120,80){\makebox(0,0){\footnotesize{$X$}}}
\put(0,0){\makebox(0,0){\footnotesize{${\wp}Y'$}}}
\put(120,0){\makebox(0,0){\footnotesize{${\wp}Y$}}}
\put(8,40){\makebox(0,0)[l]{\scriptsize{$
\mathrmbfit{ext}_{\mathcal{A}'}$}}}
\put(128,40){\makebox(0,0)[l]{\scriptsize{$
\mathrmbfit{ext}_{\mathcal{A}}$}}}
\put(60,94){\makebox(0,0){\scriptsize{$f$}}}
\put(62,-14){\makebox(0,0){\scriptsize{$g^{-1}$}}}
\put(20,80){\vector(1,0){80}}
\put(20,0){\vector(1,0){80}}
\put(0,65){\vector(0,-1){50}}
\put(120,65){\vector(0,-1){50}}
\end{picture}
\end{tabular}}}
\end{tabular}}}
\end{center}
\caption{\texttt{FOLE} Type Domain Morphism}
\label{fig:fole:typ:dom:mor}
\end{figure}

%
\comment{
An infomorphism
$\mathcal{A}'={\langle{X',Y',\models_{\mathcal{A}'}}\rangle}
\xrightleftharpoons{{\langle{f,g}\rangle}}
{\langle{X,Y,\models_{\mathcal{A}}}\rangle}=\mathcal{A}$
consists of
a type function $X'\xrightarrow{\;f\;}X$
and
an instance function $Y'\xleftarrow{\;g\;}Y$
that satisfy the condition
{\footnotesize{
$g(y){\;\models_{\mathcal{A}'}\;}x'$
\underline{iff}
$y{\;\models_{\mathcal{A}}\;}f(x')$
}\normalsize}
%
for source type $x'{\,\in\,}X'$ and target instance $y{\,\in\,}Y$.
}
In the \texttt{FOLE} theory of data-types \cite{kent:fole:era:found},
an infomorphism  
$\mathcal{A}'\xrightleftharpoons{{\langle{f,g}\rangle}}\mathcal{A}$
\cite{barwise:seligman:97} 
is known as a type domain morphism,
and consists of
a sort function $X'\xrightarrow{\;f\;}X$ and
a data value function $Y'\xleftarrow{\;g\;}Y$
that satisfy any of the following equivalent conditions
%
for any source sort $x'{\,\in\,}X'$ 
and target data value $y{\,\in\,}Y$:
\begin{equation}\label{typ:dom:mor:equiv:conds}
\begin{array}[c]{
@{\hspace{5pt}}
r@{\hspace{10pt}}r@{\hspace{7pt}}c@{\hspace{7pt}}l@{\hspace{5pt}}}
&
g(y){\;\models_{\mathcal{A}'}\;}x'
&
\text{\underline{iff}}
&
y{\;\models_{\mathcal{A}}\;}f(x');
\\
&
g(y){\;\in\;}\mathcal{A}'_{x'}=\mathrmbfit{ext}_{\mathcal{A}'}(x')
&
\text{\underline{iff}}
&
y{\;\in\;}\mathcal{A}_{f(x')}=
\mathrmbfit{ext}_{\mathcal{A}}(f(x'));
\\
&
g^{-1}(\mathcal{A}'_{x'})
=g^{-1}(\mathrmbfit{ext}_{\mathcal{A}'}(x'))
&
=
&
\mathrmbfit{ext}_{\mathcal{A}}(f(x'))=\mathcal{A}_{f(x')};
\\
&
X'\xrightarrow{\;\mathcal{A}'\;}{\wp}Y'\xrightarrow{\;g^{-1}\;}{\wp}Y 
&
=
&
X'\xrightarrow{\;f\;}X\xrightarrow{\;\mathcal{A}\;}{\wp}Y .
\end{array}
\end{equation}
%

\comment{

\begin{sloppypar}
For a given type domain morphism 
$\mathcal{A}'\xrightleftharpoons{{\langle{f,g}\rangle}}\mathcal{A}$, 
the list infomorphism 
$\mathrmbf{List}(\mathcal{A}')
\xrightleftharpoons{{\langle{\mathrmbf{list}(f),\mathrmbf{list}(g)}\rangle}}
\mathrmbf{List}(\mathcal{A})$
consists of
a signature function 
$\mathrmbf{List}(X')\xrightarrow{\;\mathrmbf{list}(f)\;}\mathrmbf{List}(X)$ and
a tuple function 
$\mathrmbf{List}(Y')\xleftarrow{\;\mathrmbf{list}(g)\;}\mathrmbf{List}(Y)$
that satisfy the condition
{\footnotesize{
$\mathrmbf{list}(g)(t){\;\models_{\mathrmbf{List}(\mathcal{A}')}\;}s'$
\underline{iff}
$t{\;\models_{\mathrmbf{List}(\mathcal{A})}\;}\mathrmbf{list}(f)(s')$
}\normalsize}
%
for any source signature $s'{\,\in\,}\mathrmbf{List}(X')$ 
and target tuple $t{\,\in\,}\mathrmbf{List}(Y)$.
\end{sloppypar}
%
\begin{center}
{{\begin{tabular}{c}
\setlength{\unitlength}{0.6pt}
\begin{picture}(120,90)(0,0)
\put(0,80){\makebox(0,0){\footnotesize{$\mathrmbf{List}(X')$}}}
\put(120,80){\makebox(0,0){\footnotesize{$\mathrmbf{List}(X)$}}}
\put(0,0){\makebox(0,0){\footnotesize{${\wp}\mathrmbf{List}(Y')$}}}
\put(0,-15){\makebox(0,0){\footnotesize{$
\mathrmbfit{ext}_{\mathrmbf{List}(\mathcal{A}')}$}}}
\put(120,0){\makebox(0,0){\footnotesize{${\wp}\mathrmbf{List}(Y)$}}}
\put(120,-15){\makebox(0,0){\footnotesize{$
\mathrmbfit{ext}_{\mathrmbf{List}(\mathcal{A})}$}}}
\put(8,40){\makebox(0,0)[l]{\scriptsize{$
{\mathrmbfit{tup}_{\mathcal{A}'}}$}}}
\put(-8,40){\makebox(0,0)[r]{\scriptsize{$
{\mathrmbfit{ext}_{\mathrmbf{List}(\mathcal{A}')}}$}}}
\put(114,40){\makebox(0,0)[r]{\scriptsize{$
{\mathrmbfit{tup}_{\mathcal{A}}}$}}}
\put(128,40){\makebox(0,0)[l]{\scriptsize{$
{\mathrmbfit{ext}_{\mathrmbf{List}(\mathcal{A})}}$}}}
\put(60,92){\makebox(0,0){\scriptsize{$\mathrmbf{list}(f)$}}}
\put(60,68){\makebox(0,0){\scriptsize{${\scriptscriptstyle\sum}_{f}$}}}
\put(65,12){\makebox(0,0){\scriptsize{$\mathrmbf{list}(g)^{-1}$}}}
\put(60,-12){\makebox(0,0){\scriptsize{${\scriptscriptstyle\sum}_{g}^{-1}$}}}
\put(35,80){\vector(1,0){50}}
\put(40,0){\vector(1,0){40}}
\put(0,65){\vector(0,-1){50}}
\put(120,65){\vector(0,-1){50}}
\end{picture}
\end{tabular}}}
\end{center}
}

%
The condition on a type domain morphism 
gives two alternate definitions.
%
\footnote{Yin-Yang is the interplay of two opposite 
but complementary forces that interact to form a dynamic whole.}
%
\begin{itemize}
\item[\textit{Yin:}]
(RHS Fig.\;\ref{fig:typ:dom:yin:yang})
In terms of fibers,
a type domain morphism
consists of
a data value function $Y'\xleftarrow{\;g\;}Y$, and
an infomorphism 
${g}^{-1}(\mathcal{A}')\xrightleftharpoons{{\langle{f,\mathrmit{1}_{Y}}\rangle}}\mathcal{A}$
from
the (Yin) classification 
${g}^{-1}(\mathcal{A}') = {\langle{X',Y,\models_{{g}^{-1}(\mathcal{A}')}}\rangle}$
defined by
$y{\;\models_{{g}^{-1}(\mathcal{A}')}\;}x'$
\underline{when}
$g(y){\;\models_{\mathcal{A}'}\;}x'$.
\newline
\item[\textit{Yang:}]
(LHS Fig.\;\ref{fig:typ:dom:yin:yang})
In terms of fibers,
a type domain morphism
consists of
a sort function $X'\xrightarrow{f}X$, and
an infomorphism 
$\mathcal{A}'\xrightleftharpoons{{\langle{\mathrmit{1}_{X},g}\rangle}}{f}^{-1}(\mathcal{A})$
to 
the (Yang) classification 
${f}^{-1}(\mathcal{A}) = {\langle{X',Y,\models_{{f}^{-1}(\mathcal{A})}}\rangle}$
defined by
$y{\;\models_{{f}^{-1}(\mathcal{A})}\;}x'$
\underline{when}
$y{\;\models_{\mathcal{A}}\;}f(x')$.
\end{itemize}
An infomorphism 
$\mathcal{A}'\xrightleftharpoons{{\langle{f,g}\rangle}}\mathcal{A}$
is identical to the composite infomorphism
\begin{center}
{\footnotesize{$\mathcal{A}'\xrightleftharpoons{{\langle{\mathrmit{1}_{X},g}\rangle}}{f}^{-1}(\mathcal{A})
={g}^{-1}(\mathcal{A}')\xrightleftharpoons{{\langle{f,\mathrmit{1}_{Y}}\rangle}}\mathcal{A}$\;.}}
\end{center}
When the sort function is an injection $X'\xhookrightarrow{f}X$,
the target set $X$ contains the source set $X'$;
hence,
we can think of the  set of sorts as becoming larger as we move to the right.
Similarly,
when the data value function is an injection $Y'\xhookleftarrow{g}Y$,
the target set $Y$ is contained in the source set $Y'$;
hence,
we can think of the set of data values as becoming smaller as we move to the right.

\begin{figure}
\begin{center}
{{\begin{tabular}{c}
\setlength{\unitlength}{0.45pt}
\begin{picture}(200,120)(0,-5)
\put(0,100){\makebox(0,0){\footnotesize{$X'$}}}
\put(188,100){\makebox(0,0){\footnotesize{$X$}}}
\put(0,0){\makebox(0,0){\footnotesize{$Y'$}}}
\put(188,0){\makebox(0,0){\footnotesize{$Y$}}}
\put(-8,50){\makebox(0,0)[r]{\scriptsize{$\models_{\mathcal{A}'}$}}}
\put(196,50){\makebox(0,0)[l]{\scriptsize{$\models_{\mathcal{A}}$}}}
\put(95,55){\makebox(0,0)[l]{\scriptsize{$\models_{f^{-1}(\mathcal{A})}$}}}
\put(95,40){\makebox(0,0)[r]{\scriptsize{$\models_{g^{-1}(\mathcal{A}')}$}}}
\put(90,114){\makebox(0,0){\scriptsize{$f$}}}
\put(90,-14){\makebox(0,0){\scriptsize{$g$}}}
\put(23,100){\vector(1,0){140}}
\put(163,0){\vector(-1,0){140}}
\put(0,85){\line(0,-1){70}}
\put(186,85){\line(0,-1){70}}
\put(15,85){\line(2,-1){150}}
\end{picture}
\end{tabular}}}
\end{center}
\caption{Type Domain Yin-Yang}
\label{fig:typ:dom:yin:yang}
\end{figure}
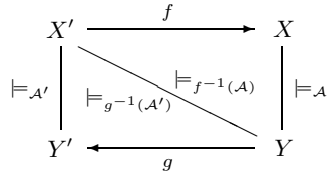
%


\comment{\begin{flushleft}
{\footnotesize{\begin{minipage}{345pt}
A partial order 
$\mathcal{N}={\langle{N,\leq}\rangle}$
can be represent as a type domain (classification)
$\mathrmbfit{ord}(\mathcal{N})=
{\langle{N,N,\models_{\mathcal{N}}}\rangle}$,
where
$n_{0}{\,\models_{\mathcal{N}}\,}n$
when
$n_{0}{\,\leq\,}n$.
A subset $M \subseteq N$ will define two classifications
--- one backward and one forward.
%
\footnote{
A special case is a set regarded as the identity partial order 
${\langle{N,=}\rangle}$.}
%
\begin{itemize}
\item \textsf{Yin:}
The inclusion data value function $M\xleftarrow{\;\iota\;}N$
defines
a classification ${\iota}^{-1}(\mathcal{N}) = {\langle{N,M,\models_{{\iota}^{-1}(\mathcal{N})}}\rangle}$
defined by
$m{\;\models_{{\iota}^{-1}(\mathcal{N})}\;}n$
\underline{when}
$\iota(m)=m{\;\models_{\mathcal{N}}\;}n$.
There is
an infomorphism ${\iota}^{-1}(\mathcal{N})\xrightleftharpoons{{\langle{1_{N},\iota}\rangle}}\mathcal{N}$.
\item ???
\end{itemize}
\end{minipage}}}
\end{flushleft}}

\paragraph{Signed Domains.}


A signed domain 
\cite{kent:fole:era:tbl}
is a fundamental component used 
in the definition of database tables and 
in the database interpretation of \texttt{FOLE}.
Signed domains are used to denote 
the valid tuples for a database header (signature).

A signed (headed/typed) domain
$\mathcal{D}={\langle{\mathcal{S},\mathcal{A}}\rangle}$
consists of
a type domain $\mathcal{A}={\langle{X,Y,\models_{\mathcal{A}}}\rangle}$ 
and a signature (database header) $\mathcal{S}={\langle{I,x,X}\rangle}$
with a common sort set $X$.
A signed domain morphism
{\footnotesize{$
\mathcal{D}'={\langle{\mathcal{S}',\mathcal{A}'}\rangle}
\xrightarrow{{\langle{h,f,g}\rangle}}
{\langle{\mathcal{S},\mathcal{A}}\rangle}=\mathcal{D}
$}\normalsize}
consists of 
a signature morphism 
$\mathcal{S}'={\langle{I',x',X'}\rangle}\xrightarrow{{\langle{h,f}\rangle}}{\langle{I,x,X}\rangle}=\mathcal{S}$ and 
a type domain morphism 
$\mathcal{A}'={\langle{X',Y',\models_{\mathcal{A}'}}\rangle}
\xrightleftharpoons{{\langle{f,g}\rangle}}
{\langle{X,Y,\models_{\mathcal{A}}}\rangle}=\mathcal{A}$
with a common sort function $X'\xrightarrow{f}X$.
Let $\mathrmbf{Dom}$ denote the mathematical context of signed domains. 

\begin{figure}
\begin{center}
{{\begin{tabular}{c
@{\hspace{25pt}{\textit{equivalently}}\hspace{25pt}}
c}
{{\begin{tabular}{c}
\setlength{\unitlength}{0.45pt}
\begin{picture}(180,190)(0,0)
\put(0,180){\makebox(0,0){\footnotesize{$I'$}}}
\put(0,90){\makebox(0,0){\footnotesize{$X'$}}}
\put(0,0){\makebox(0,0){\footnotesize{$Y'$}}}
\put(180,180){\makebox(0,0){\footnotesize{$I$}}}
\put(180,90){\makebox(0,0){\footnotesize{$X$}}}
\put(180,0){\makebox(0,0){\footnotesize{$Y$}}}
\put(8,140){\makebox(0,0)[l]{\scriptsize{$s'$}}}
\put(188,140){\makebox(0,0)[l]{\scriptsize{$s$}}}
\put(8,45){\makebox(0,0)[l]{\scriptsize{$
\models_{\mathcal{A}'}$}}}
\put(188,45){\makebox(0,0)[l]{\scriptsize{$
\models_{\mathcal{A}}$}}}
\put(90,194){\makebox(0,0){\scriptsize{$h$}}}
\put(90,104){\makebox(0,0){\scriptsize{$f$}}}
\put(90,14){\makebox(0,0){\scriptsize{$g$}}}
\put(20,180){\vector(1,0){140}}
\put(20,90){\vector(1,0){140}}
\put(160,0){\vector(-1,0){140}}
\put(0,170){\vector(0,-1){60}}
\put(180,170){\vector(0,-1){60}}
\put(0,75){\line(0,-1){60}}
\put(180,75){\line(0,-1){60}}
%
\end{picture}
\end{tabular}}}
&
{{\begin{tabular}{c}
\setlength{\unitlength}{0.45pt}
\begin{picture}(180,190)(0,0)
\put(2,180){\makebox(0,0){\footnotesize{$I'$}}}
\put(2,90){\makebox(0,0){\footnotesize{$X'$}}}
\put(2,0){\makebox(0,0){\footnotesize{$Y'$}}}
\put(180,180){\makebox(0,0){\footnotesize{$I$}}}
\put(180,90){\makebox(0,0){\footnotesize{$X$}}}
\put(180,0){\makebox(0,0){\footnotesize{$Y$}}}
\put(8,140){\makebox(0,0)[l]{\scriptsize{$s'$}}}
\put(188,140){\makebox(0,0)[l]{\scriptsize{$s$}}}
\put(8,45){\makebox(0,0)[l]{\scriptsize{$
\mathrmbfit{ext}_{\mathcal{A}'}
$}}}
\put(188,45){\makebox(0,0)[l]{\scriptsize{$
\mathrmbfit{ext}_{\mathcal{A}}
$}}}
\put(90,194){\makebox(0,0){\scriptsize{$h$}}}
\put(90,104){\makebox(0,0){\scriptsize{$f$}}}
\put(92,14){\makebox(0,0){\scriptsize{$g^{-1}$}}}
\put(20,180){\vector(1,0){140}}
\put(20,90){\vector(1,0){140}}
\put(20,0){\vector(1,0){140}}
\put(0,170){\vector(0,-1){60}}
\put(180,170){\vector(0,-1){60}}
\put(0,75){\vector(0,-1){60}}
\put(180,75){\vector(0,-1){60}}
%
\end{picture}
\end{tabular}}}
\end{tabular}}}
\end{center}
\caption{\texttt{FOLE} Signed Domain Morphism}
\label{fig:sign:dom:mor}
\end{figure}
%

\subsection{Tables, Tabular Flow and Relations.}
\label{sub:sec:tbl:rel}

The \texttt{FOLE} relational operations are defined on tables.
The mathematical contexts of relations and tables 
are used for satisfaction and interpretation
\cite{kent:fole:era:found},
relations for traditional interpretation and 
tables for database interpretation.
In {\texttt{FOLE}}, 
both relational algebra and the tuple relational calculus 
are based upon the table concept.

\begin{definition}\label{def:sign:dom:tup}
There is a \underline{tuple passage}
\cite{kent:fole:era:tbl}
$\mathrmbfit{tup} : \mathrmbf{Dom}^{\mathrm{op}} \rightarrow \mathrmbf{Set}$.
\end{definition}
%
The tuple passage
$\mathrmbf{Dom}^{\mathrm{op}}\xrightarrow{\mathrmbfit{tup}}\mathrmbf{Set}$
maps a signed domain
${\langle{\mathcal{S},\mathcal{A}}\rangle}$
to its set of tuples
$\mathrmbfit{tup}(\mathcal{S},\mathcal{A})$,
\footnote{\label{sign:dom:tup}
This important concept
can intuitively be regarded as the set of legal tuples under the database header
$\mathcal{S}={\langle{I,x,X}\rangle}$.
It is define to be the extent in the list type domain $\mathrmbf{List}(\mathcal{A})$:
$\mathrmbfit{tup}(\mathcal{S},\mathcal{A})
=\mathrmbfit{ext}_{\mathrmbf{List}(\mathcal{A})}(\mathcal{S}) 
= \{ {\langle{J,t}\rangle} \in \mathrmbf{List}(Y) 
\mid {\langle{J,t}\rangle} \models_{\mathrmbf{List}(\mathcal{A})} \mathcal{S} \}$.
Various notations are used for this concept depending upon circumstance:
$\mathrmbfit{tup}(\mathcal{S},\mathcal{A})
= \mathrmbfit{tup}(\mathcal{S},\mathcal{A});
= \mathrmbfit{tup}_{\mathcal{A}}(\mathcal{S})
= \mathrmbfit{tup}_{\mathcal{A}}(\mathcal{S});
= \mathrmbfit{tup}_{\mathcal{S}}(\mathcal{A})
= \mathrmbfit{tup}_{\mathcal{S}}(Y,\models_{\mathcal{A}})
$.}
and maps a signed domain morphism
${\langle{I_{2},s_{2},\mathcal{A}_{2}}\rangle}\xrightarrow{{\langle{h,f,g}\rangle}}{\langle{I_{1},s_{1},\mathcal{A}_{1}}\rangle}$
to its tuple function
$\mathrmbfit{tup}(I_{2},s_{2},\mathcal{A}_{2})
\xleftarrow[(h{\cdot}{(\mbox{-})})\cdot({(\mbox{-})}{\cdot}g)]{\mathrmbfit{tup}(h,f,g)}
\mathrmbfit{tup}(I_{1},s_{1},\mathcal{A}_{1})$;
\newline
or visually,
$({\cdots\,}g(t_{h(i_{2})}){\,\cdots}{\,\mid\,}i_{2}{\,\in\,}I_{2})
\mapsfrom
({\cdots\,}t_{i_{1}}{\,\cdots}{\,\mid\,}i_{1}{\,\in\,}I_{1})$.
%

\paragraph{Tables.}

A \texttt{FOLE} table
$\mathcal{T} = {\langle{\mathcal{D},K,t}\rangle}$
consists of 
a signed domain $\mathcal{D}={\langle{\mathcal{S},\mathcal{A}}\rangle}$,
a set $K$ of (primary) keys and
a tuple function $K\xrightarrow{t}\mathrmbfit{tup}(\mathcal{D})=\mathrmbfit{tup}_{\mathcal{A}}(\mathcal{S})$
mapping keys to $\mathcal{D}$-tuples.
%
\comment{
\footnote{Fig.\;\ref{fig:fole:tbl}
appears in
\S\;3.1 of 
``The {\ttfamily FOLE} Table'' 
\cite{kent:fole:era:tbl}.}
%
\begin{figure}
\begin{center}
{{\begin{tabular}{c@{\hspace{25pt}}c}
{\setlength{\extrarowheight}{2pt}{\scriptsize{$\begin{array}[c]{c@{\hspace{20pt}}c}
\mathcal{T}={\langle{K,t,\mathcal{D}}\rangle}
\\
\mathcal{D}={\langle{\mathcal{S},\mathcal{A}}\rangle}
\\
{K}\xrightarrow{t}\mathrmbfit{tup}_{\mathcal{A}}(\mathcal{S})
\\
t(k) = {\langle{I,t_{k}}\rangle},\;I\xrightarrow{t_{k}}Y
\\
t_{k,i}\in\mathcal{A}_{s_{i}}
\end{array}$}}}
&
{{\begin{tabular}[c]{c}
\setlength{\unitlength}{0.5pt}
\begin{picture}(180,135)(-60,-3)
\put(-10,90){\makebox(0,0){\scriptsize{$\mathcal{T}$}}}
\put(60,120){\makebox(0,0){\shortstack{\scriptsize{$\mathcal{S}={\langle{I,x,X}\rangle}$}\\$\overbrace{\rule{50pt}{0pt}}$}}}
\put(-55,40){\makebox(0,0){\scriptsize{${K}\left\{\rule{0pt}{20pt}\right.$}}}
\put(25,90){\makebox(0,0){\scriptsize{$\cdots$}}}
\put(60,90){\makebox(0,0){\scriptsize{$i\!:\!s_{i}$}}}
\put(97,90){\makebox(0,0){\scriptsize{$\cdots$}}}
\put(-10,40){\makebox(0,0){\scriptsize{$k$}}}
\put(25,40){\makebox(0,0){\scriptsize{$\cdots$}}}
\put(60,40){\makebox(0,0){\scriptsize{$t_{k,i}$}}}
\put(97,40){\makebox(0,0){\scriptsize{$\cdots$}}}
\put(-20,78){\line(1,0){140}}
\put(2,0){\line(0,1){100}}
\end{picture}
\end{tabular}}}
\end{tabular}}}
\end{center}
\caption{\texttt{FOLE} Table}
\label{fig:fole:tbl}
\end{figure}
}
%
A \texttt{FOLE} table morphism 
{\footnotesize{$
\mathcal{T}' = {\langle{\mathcal{D}',K',t'}\rangle}
\xleftarrow{{\langle{{\langle{h,f,g}\rangle},k}\rangle}} 
{\langle{\mathcal{D},K,t}\rangle} = \mathcal{T}
$}\normalsize}
%
consists of 
a signed domain morphism
{\footnotesize{$
\mathcal{D}'={\langle{\mathcal{S}',\mathcal{A}'}\rangle}
\xrightarrow{{\langle{h,f,g}\rangle}}
{\langle{\mathcal{S},\mathcal{A}}\rangle}=\mathcal{D}
$}\normalsize}
and
a key function $K'\xleftarrow{k}K$,
which satisfy the naturality condition
$k{\;\cdot\;}t' = t{\;\cdot\;}\mathrmbfit{tup}(h,f,g)$.
Let $\mathrmbf{Tbl}$ denote the mathematical context of tables. 
\comment{
\begin{figure}
\begin{center}
{{\begin{tabular}{c}
\setlength{\unitlength}{0.6pt}
\begin{picture}(120,90)(0,-5)
\put(0,80){\makebox(0,0){\footnotesize{$K'$}}}
\put(120,80){\makebox(0,0){\footnotesize{$K$}}}
\put(-27,0){\makebox(0,0){\footnotesize{$\mathrmbfit{tup}_{\mathcal{A}'}(\mathcal{S}')$}}}
\put(148,0){\makebox(0,0){\footnotesize{$\mathrmbfit{tup}_{\mathcal{A}}(\mathcal{S})$}}}
\put(63,95){\makebox(0,0){\scriptsize{$k$}}}
\put(63,-12){\makebox(0,0){\scriptsize{$\mathrmbfit{tup}(h,f,g)$}}}
\put(-6,40){\makebox(0,0)[r]{\scriptsize{$t'$}}}
\put(128,40){\makebox(0,0)[l]{\scriptsize{$t$}}}
\put(0,65){\vector(0,-1){50}}
\put(120,65){\vector(0,-1){50}}
\put(100,80){\vector(-1,0){80}}
\put(100,0){\vector(-1,0){80}}
\end{picture}
\end{tabular}}}
\end{center}
\caption{\texttt{FOLE} Table Morphism}
\label{fig:tbl:mor}
\end{figure}
}

We next show that the context of tables is a fibered context over signed domains; 
we first define the table fiber for fixed signed domain; 
we then
define 
adjoint flow along signed domain morphisms.

\paragraph{Small Fibers.}
%
\footnote{See the small-size fiber context of signed domain indexing in
\S\;3.2 of the paper 
``The {\ttfamily FOLE} Table'' 
\cite{kent:fole:era:tbl}.}
%
Let $\mathcal{D}
={\langle{\mathcal{S},\mathcal{A}}\rangle}
={\langle{\mathcal{S},\mathcal{A}}\rangle}$ 
be a fixed signed domain.
The fiber mathematical context of 
of $\mathcal{D}$-tables 
is denoted by
$\mathrmbf{Tbl}(\mathcal{D}) 
= \mathrmbf{Tbl}_{\mathcal{A}}(\mathcal{S})
= \mathrmbf{Tbl}_{\mathcal{A}}(\mathcal{S})$.
A $\mathcal{D}$-table $\mathcal{T} = {\langle{K,t}\rangle}$ 
consists of a set $K$ of (primary) keys
and a tuple function
$K\xrightarrow{\,t\;}\mathrmbfit{tup}_{\mathcal{A}}(\mathcal{S})$
mapping each key to its descriptor $\mathcal{A}$-tuple of type (signature) $\mathcal{S}$.
A $\mathcal{D}$-table morphism 
$\mathcal{T}' = {\langle{K',t'}\rangle}\xleftarrow{k}{\langle{K,t}\rangle} = \mathcal{T}$
is a key function $K'\xleftarrow{\,k\,}K$
that preserves descriptors by
satisfying the naturality condition $k{\;\cdot\;}t' = t$.
As we show in \S\,\ref{sub:sub:sec:reflect} on reflection,
this means that
we have the relation morphism condition
${\wp{t'}}(K'){\;\supseteq\;}{\wp{t}}(K)$;
that is,
all the tuples in $\mathcal{T}$ are tuples in $\mathcal{T}'$.
Use the notation
$\mathcal{T}' \geq \mathcal{T}$
for this assertion.



%
\paragraph{Tabular Flow Adjunction.}

As defined above,
a table morphism
$\mathcal{T}' = {\langle{\mathcal{D}',K',t'}\rangle} 
\xleftarrow{\,{\langle{{\langle{h,f,g}\rangle},k}\rangle}\;} 
{\langle{\mathcal{D},K,t}\rangle} = \mathcal{T}$
consists of
a signed domain morphism
$\mathcal{D}'\xrightarrow{{\langle{h,f,g}\rangle}}\mathcal{D}$
and 
a key function
$K'\xleftarrow{\,k\;}K$
satisfying the naturality condition 
$k{\;\cdot\;}t' = t{\;\cdot\;}\mathrmbfit{tup}(h,f,g)$.
This condition gives two alternate and adjoint definitions.
%
\begin{figure}
\begin{center}
{{\begin{tabular}{@{\hspace{10pt}}r@{\hspace{30pt}}c@{\hspace{10pt}}l}
{{\begin{tabular}{c}
{\setlength{\unitlength}{0.45pt}\begin{picture}(240,180)(0,-55)
\put(0,80){\makebox(0,0){\footnotesize{$K'$}}}
\put(120,80){\makebox(0,0){\footnotesize{$K$}}}
\put(240,80){\makebox(0,0){\footnotesize{$K$}}}
\put(60,-25){\makebox(0,0){\footnotesize{$\mathrmbfit{tup}(\mathcal{D}')$}}}
\put(240,-25){\makebox(0,0){\footnotesize{$\mathrmbfit{tup}(\mathcal{D})$}}}
\put(30,-56){\makebox(0,0){\footnotesize{$
\underset{\textstyle{\mathcal{T}'}}{\underbrace{\rule{30pt}{0pt}}}$}}}
\put(115,-58){\makebox(0,0){\footnotesize{$
\underset{\textstyle{\acute{\mathrmbfit{tbl}}(h,f,g)
(\mathcal{T})}}{\underbrace{\rule{30pt}{0pt}}}$}}}
\put(240,-56){\makebox(0,0){\footnotesize{$\underset{\textstyle{\mathcal{T}}}{\underbrace{\rule{40pt}{0pt}}}$}}}
\put(5,26){\makebox(0,0)[l]{\scriptsize{$t'$}}}
\put(102,26){\makebox(0,0)[l]{\scriptsize{$t{\,\cdot\,}\mathrmbfit{tup}(h,f,g)$}}}
\put(248,26){\makebox(0,0)[l]{\scriptsize{$t$}}}
\put(160,-10){\makebox(0,0){\scriptsize{$\mathrmbfit{tup}(h,f,g)$}}}
\put(120,125){\makebox(0,0){\scriptsize{$k$}}}
\put(60,92){\makebox(0,0){\scriptsize{$k$}}}
\put(100,80){\vector(-1,0){80}}
\put(4,59){\vector(2,-3){46}}
\put(108,59){\vector(-2,-3){46}}
\put(190,80){\makebox(0,0){\normalsize{$=$}}}
\put(240,60){\vector(0,-1){68}}
\put(190,-25){\vector(-1,0){80}}
%
\put(10,100){\oval(20,20)[tl]}
\put(10,110){\line(1,0){220}}
\put(230,100){\oval(20,20)[tr]}
\put(0,94){\vector(0,-1){0}}
\put(60,-103){\makebox(0,0){\footnotesize{$
\underset{\textstyle{\mathrmbf{Tbl}(\mathcal{D}')}}{\underbrace{\rule{60pt}{0pt}}}$}}}
\end{picture}}
\end{tabular}}}
${\;\;\;\;\;\;\;\;\;\;\cong\;\;\;\;\;\;\;\;\;\;}$
{{\begin{tabular}{c}
{\setlength{\unitlength}{0.45pt}\begin{picture}(240,180)(-120,-55)
\put(-120,80){\makebox(0,0){\footnotesize{$K'$}}}
\put(0,80){\makebox(0,0){\footnotesize{$\widehat{K}$}}}
\put(120,80){\makebox(0,0){\footnotesize{$K$}}}
\put(60,-25){\makebox(0,0){\footnotesize{$\mathrmbfit{tup}(\mathcal{D})$}}}
\put(-120,-25){\makebox(0,0){\footnotesize{$\mathrmbfit{tup}(\mathcal{D}')$}}}
\put(-120,-56){\makebox(0,0){\footnotesize{$
\underset{\textstyle{\mathcal{T}'}}{\underbrace{\rule{40pt}{0pt}}}$}}}
\put(20,-60){\makebox(0,0){\footnotesize{$
\underset{\textstyle{\grave{\mathrmbfit{tbl}}(h,f,g)
(\mathcal{T}')}}{\underbrace{\rule{30pt}{0pt}}}$}}}
\put(100,-56){\makebox(0,0){\footnotesize{$\underset{\textstyle{\mathcal{T}}}{\underbrace{\rule{30pt}{0pt}}}$}}}
\put(-60,92){\makebox(0,0){\scriptsize{$e$}}}
\put(15,26){\makebox(0,0)[r]{\scriptsize{$\hat{t}$}}}
\put(105,26){\makebox(0,0)[l]{\scriptsize{$t$}}}
\put(-125,26){\makebox(0,0)[r]{\scriptsize{$t'$}}}
\put(-25,-10){\makebox(0,0){\scriptsize{$\mathrmbfit{tup}(h,f,g)$}}}
\put(60,92){\makebox(0,0){\scriptsize{$k'$}}}
\put(0,125){\makebox(0,0){\scriptsize{$k$}}}
\put(100,80){\vector(-1,0){80}}
\put(-25,80){\vector(-1,0){70}}
\put(4,59){\vector(2,-3){46}}
\put(116,59){\vector(-2,-3){46}}
\put(-120,60){\vector(0,-1){68}}
\put(5,-25){\vector(-1,0){80}}
%
\put(-110,100){\oval(20,20)[tl]}
\put(-110,110){\line(1,0){220}}
\put(110,100){\oval(20,20)[tr]}
\put(-120,94){\vector(0,-1){0}}
\qbezier(-100,15)(-92,15)(-84,15)
\qbezier(-84,-1)(-84,7)(-84,15)
%
\put(60,-103){\makebox(0,0){\footnotesize{$
\underset{\textstyle{\mathrmbf{Tbl}(\mathcal{D})}}{\underbrace{\rule{60pt}{0pt}}}$}}}
\end{picture}}
\end{tabular}}}
\\&&
\\&&
\end{tabular}}}
\end{center}
\caption{Table Morphism: Signed Domain}
\label{fig:tbl:mor:large}
\end{figure}
%
In terms of fibers,
a table morphism
consists of
a signed domain morphism
$\mathcal{D}'\xrightarrow{{\langle{h,f,g}\rangle}}\mathcal{D}$
\underline{and} 
either a morphism 
$\mathcal{T}'\xleftarrow{\;k\;}\acute{\mathrmbfit{tbl}}(h,f,g)(\mathcal{T})$
in the fiber context $\mathrmbf{Tbl}(\mathcal{D}')$ 
or a morphism 
$\grave{\mathrmbfit{tbl}}(h,f,g)(\mathcal{T}')\xleftarrow{\,k'\,}\mathcal{T}$
in the fiber context $\mathrmbf{Tbl}(\mathcal{D})$.
\comment{\begin{equation}
{{\begin{picture}(120,20)(0,-7)
\put(60,0){\makebox(0,0){\footnotesize{$
\underset{\textstyle{\text{in}\;\mathrmbf{Tbl}(\mathcal{D}')}}
{\overset{{\textstyle{\mathcal{T}'{\;\geq\;}\acute{\mathrmbfit{tbl}}(h,f,g)(\mathcal{T})}}}
{\overbrace{\mathcal{T}'\xleftarrow{\;k\;}\acute{\mathrmbfit{tbl}}(h,f,g)(\mathcal{T})}}}
{\;\;\;\;\;\;\;\;\rightleftarrows\;\;\;\;\;\;\;\;}
\underset{\textstyle{\text{in}\;\mathrmbf{Tbl}(\mathcal{D})}}
{\overset{{\textstyle{\grave{\mathrmbfit{tbl}}(h,f,g)(\mathcal{T}'){\;\geq\;}\mathcal{T}}}}
{\overbrace{\grave{\mathrmbfit{tbl}}(h,f,g)(\mathcal{T}')\xleftarrow{\,k'\,}\mathcal{T}}}}
$}}}
\end{picture}}}
\end{equation}}
The $\mathcal{D}'$-table morphism 
$\mathcal{T}'
\xleftarrow{\;k\;}\acute{\mathrmbfit{tbl}}(h,f,g)(\mathcal{T})$
is the composition (RHS of Fig.~\ref{fig:tbl:mor:large}) 
of the fiber morphism
$\acute{\mathrmbfit{tbl}}(h,f,g)\Bigl(
\grave{\mathrmbfit{tbl}}(h,f,g)(\mathcal{T}')\xleftarrow{k'}\mathcal{T}
\Bigr)$
with the $\mathcal{T}'^{\,\text{th}}$ counit component 
$\mathcal{T}'\xleftarrow{e}
\acute{\mathrmbfit{tbl}}(h,f,g)\Bigl(\grave{\mathrmbfit{tbl}}(h,f,g)(\mathcal{T}')\Bigr)$
for the fiber adjunction
\begin{equation}\label{def:sign:dom:mor:adj}
{{\begin{picture}(120,20)(0,-7)
\put(60,0){\makebox(0,0){\footnotesize{$
\mathrmbf{Tbl}(\mathcal{D}')
{\;\xleftarrow
[{\bigl\langle{{\scriptstyle\sum}_{{\langle{h,f,g}\rangle}}
{\;\dashv\;}{{\langle{h,f,g}\rangle}}^{\ast}\bigr\rangle}}]
{{\bigl\langle{\acute{\mathrmbfit{tbl}}(h,f,g)
{\;\dashv\;}
\grave{\mathrmbfit{tbl}}(h,f,g)}\bigr\rangle}}\;}
\mathrmbf{Tbl}(\mathcal{D})$.
}}}
\end{picture}}}
\end{equation}
This fiber adjunction is a component of
the signed domain indexed adjunction of tables
$\mathrmbf{Dom}^{\mathrm{op}}\!\xrightarrow{\;\mathrmbfit{tbl}\;}\mathrmbf{Adj}$.
For more on this see
the paper
\cite{kent:fole:era:tbl}.


\paragraph{Tuple Passage Factorization.}

Since a signed domain morphism 
{\footnotesize{$
{\langle{\mathcal{S}',\mathcal{A}'}\rangle}
\xrightarrow{{\langle{h,f,g}\rangle}}
{\langle{\mathcal{S},\mathcal{A}}\rangle}
$}\normalsize}
factors into three parts
\[\mbox{\footnotesize{$
{\langle{\mathcal{S}',\mathcal{A}'}\rangle}
\xrightarrow{{\langle{1,1,g}\rangle}}
{\langle{\mathcal{S}',g^{-1}(\mathcal{A})}\rangle}
\xrightarrow{{\langle{1,f,1}\rangle}}
{\langle{{\scriptstyle\sum}_{f}(\mathcal{S}'),\mathcal{A}}\rangle}
\xrightarrow{{\langle{h,1,1}\rangle}}
{\langle{\mathcal{S},\mathcal{A}}\rangle}
$,}\normalsize}\]
its tuple function of Def.\;\ref{def:sign:dom:tup} factors into five parts:
\begin{center}
{{\begin{tabular}{c}
\setlength{\unitlength}{0.82pt}
\begin{picture}(340,100)(0,10)
\put(0,60){\makebox(0,0){\footnotesize{$
\underset{\textstyle{\mathrmbfit{tup}_{\mathcal{S}'}(\mathcal{A}')}}
{\mathrmbfit{tup}_{\mathcal{A}'}(\mathcal{S}')}
$}}}
\put(115,60){\makebox(0,0){\footnotesize{$
\mathrmbfit{tup}_{\mathcal{S}'}({g}^{-1}(\mathcal{A}'))$}}}
\put(225,60){\makebox(0,0){\footnotesize{$
\mathrmbfit{tup}_{\mathcal{A}}({\scriptstyle\sum}_{f}(\mathcal{S}'))$}}}
\put(340,60){\makebox(0,0){\footnotesize{$
\underset{\textstyle{\mathrmbfit{tup}_{\mathcal{S}}(\mathcal{A})}}
{\mathrmbfit{tup}_{\mathcal{A}}(\mathcal{S})}
$}}}
\put(60,67){\makebox(0,0){\scriptsize{$
\mathrmbfit{tup}_{\mathcal{S}'}(g)$}}}
\put(60,53){\makebox(0,0){\scriptsize{${(\mbox{-})}{\,\cdot\,}g$}}}
\put(110,107){\makebox(0,0){\scriptsize{$
\grave{\tau}_{{\langle{f,g}\rangle}}(\mathcal{S}')$}}}
\put(115,93){\makebox(0,0){\scriptsize{${(\mbox{-})}{\,\cdot\,}g$}}}
\put(170,67){\makebox(0,0){\scriptsize{$
\mathrmbfit{tup}(1,f,1)$}}}
\put(170,53){\makebox(0,0){\scriptsize{$1$}}}
\put(225,27){\makebox(0,0){\scriptsize{$
\tau_{{\langle{h,f}\rangle}}(\mathcal{A})$}}}
\put(225,13){\makebox(0,0){\scriptsize{$h{\,\cdot\,}{(\mbox{-})}$}}}
\put(295,67){\makebox(0,0){\scriptsize{$
\mathrmbfit{tup}_{\mathcal{A}}(h)$}}}
\put(295,53){\makebox(0,0){\scriptsize{$h{\,\cdot\,}{(\mbox{-})}$}}}
\put(70,60){\vector(-1,0){35}}
\put(180,60){\vector(-1,0){23}}
\put(308,60){\vector(-1,0){35}}
\put(0,70){\vector(0,-1){0}}
\put(30,70){\oval(60,60)[tl]}
\put(195,100){\line(-1,0){165}}
\put(195,70){\oval(60,60)[tr]}
\put(115,50){\vector(0,1){0}}
\put(145,50){\oval(60,60)[bl]}
\put(310,20){\line(-1,0){165}}
\put(310,50){\oval(60,60)[br]}
%
%
\put(110,78){\makebox(0,0){\scriptsize{${\textit{\scriptsize{Yin}}}$}}}
\put(220,42){\makebox(0,0){\scriptsize{${\textit{\scriptsize{Yang}}}$}}}
\end{picture}
\end{tabular}}}
\end{center}
%
both linearly and 2-dimensionally
%
\footnote{For 
(1) the definitions of the tuple bridges 
$\acute{\tau}_{{\langle{f,g}\rangle}}$,
$\grave{\tau}_{{\langle{f,g}\rangle}}$, and
$\tau_{{\langle{h,f}\rangle}}$
and 
(2) the signature/type domain tuple function factorization,
see the paper \cite{kent:fole:era:tbl}).}

\comment{ 
\begin{figure}
\begin{center}
{{\begin{tabular}{c}
{{\begin{tabular}{c}
\setlength{\unitlength}{0.82pt}
\begin{picture}(340,100)(0,10)
\put(0,60){\makebox(0,0){\footnotesize{$
\underset{\textstyle{\mathrmbfit{tup}_{\mathcal{S}'}(\mathcal{A}')}}
{\mathrmbfit{tup}_{\mathcal{A}'}(\mathcal{S}')}
$}}}
\put(115,60){\makebox(0,0){\footnotesize{$
\mathrmbfit{tup}_{\mathcal{S}'}({g}^{-1}(\mathcal{A}'))$}}}
\put(225,60){\makebox(0,0){\footnotesize{$
\mathrmbfit{tup}_{\mathcal{A}}({\scriptstyle\sum}_{f}(\mathcal{S}'))$}}}
\put(340,60){\makebox(0,0){\footnotesize{$
\underset{\textstyle{\mathrmbfit{tup}_{\mathcal{S}}(\mathcal{A})}}
{\mathrmbfit{tup}_{\mathcal{A}}(\mathcal{S})}
$}}}
\put(60,67){\makebox(0,0){\scriptsize{$
\mathrmbfit{tup}_{\mathcal{S}'}(g)$}}}
\put(60,53){\makebox(0,0){\scriptsize{${(\mbox{-})}{\,\cdot\,}g$}}}
\put(110,107){\makebox(0,0){\scriptsize{$
\grave{\tau}_{{\langle{f,g}\rangle}}(\mathcal{S}')$}}}
\put(115,93){\makebox(0,0){\scriptsize{${(\mbox{-})}{\,\cdot\,}g$}}}
\put(170,67){\makebox(0,0){\scriptsize{$
\mathrmbfit{tup}(1,f,1)$}}}
\put(170,53){\makebox(0,0){\scriptsize{$1$}}}
\put(225,27){\makebox(0,0){\scriptsize{$
\tau_{{\langle{h,f}\rangle}}(\mathcal{A})$}}}
\put(225,13){\makebox(0,0){\scriptsize{$h{\,\cdot\,}{(\mbox{-})}$}}}
\put(295,67){\makebox(0,0){\scriptsize{$
\mathrmbfit{tup}_{\mathcal{A}}(h)$}}}
\put(295,53){\makebox(0,0){\scriptsize{$h{\,\cdot\,}{(\mbox{-})}$}}}
\put(70,60){\vector(-1,0){35}}
\put(180,60){\vector(-1,0){23}}
\put(308,60){\vector(-1,0){35}}
\put(0,70){\vector(0,-1){0}}
\put(30,70){\oval(60,60)[tl]}
\put(195,100){\line(-1,0){165}}
\put(195,70){\oval(60,60)[tr]}
\put(115,50){\vector(0,1){0}}
\put(145,50){\oval(60,60)[bl]}
\put(310,20){\line(-1,0){165}}
\put(310,50){\oval(60,60)[br]}
\put(90,10){\makebox(0,0){\Large{$
\overset{\textit{\scriptsize{hidden}}}
{\textit{\scriptsize{factor}}}
$}}}
\dottedline{2}(110,15)(165,45)
\dottedline{2}(110,5)(200,-60)
\put(110,78){\makebox(0,0){\scriptsize{${\textit{\scriptsize{Yin}}}$}}}
\put(220,42){\makebox(0,0){\scriptsize{${\textit{\scriptsize{Yang}}}$}}}
\end{picture}
\end{tabular}}}
\\\\
{{\begin{tabular}{c}
\setlength{\unitlength}{0.52pt}
\begin{picture}(320,200)(-40,-20)
\put(-4,160){\makebox(0,0){\footnotesize{$
\underset{\textstyle{\mathrmbfit{tup}_{\mathcal{S}'}(\mathcal{A}')}}
{\mathrmbfit{tup}_{\mathcal{A}'}(\mathcal{S}')}
$}}}
\put(244,160){\makebox(0,0){\footnotesize{$
\mathrmbfit{tup}_{\mathcal{A}}({\scriptstyle\sum}_{f}(\mathcal{S}'))$}}}
\put(0,0){\makebox(0,0){\footnotesize{$
\mathrmbfit{tup}_{\mathcal{S}'}({g}^{-1}(\mathcal{A}'))$}}}
\put(240,0){\makebox(0,0){\footnotesize{$
\underset{\textstyle{\mathrmbfit{tup}_{\mathcal{S}}(\mathcal{A})}}
{\mathrmbfit{tup}_{\mathcal{A}}(\mathcal{S})}
$}}}
\put(110,170){\makebox(0,0){\scriptsize{$
\grave{\tau}_{{\langle{f,g}\rangle}}(\mathcal{S}')$}}}
\put(110,150){\makebox(0,0){\scriptsize{$
=\;{(\mbox{-})}{\,\cdot\,}g$}}}
\put(130,10){\makebox(0,0){\scriptsize{$
\tau_{{\langle{h,f}\rangle}}(\mathcal{A})$}}}
\put(130,-10){\makebox(0,0){\scriptsize{$
=\;h{\,\cdot\,}{(\mbox{-})}$}}}
\put(-10,90){\makebox(0,0)[r]{\scriptsize{$
\mathrmbfit{tup}_{\mathcal{S}'}(g)$}}}
\put(-10,70){\makebox(0,0)[r]{\scriptsize{$
=\;{(\mbox{-})}{\,\cdot\,}g$}}}
\put(250,90){\makebox(0,0)[l]{\scriptsize{$
\mathrmbfit{tup}_{\mathcal{A}}(h)$}}}
\put(250,70){\makebox(0,0)[l]{\scriptsize{$
=\;h{\,\cdot\,}{(\mbox{-})}$}}}
\put(190,86){\makebox(0,0){\scriptsize{$
\mathrmbfit{tup}(1,f,1)$}}}
\put(60,86){\makebox(0,0){\scriptsize{$\mathrmbfit{tup}(h,f,g)$}}}
%
\put(160,160){\vector(-1,0){100}}
\put(180,0){\vector(-1,0){100}}
\put(0,35){\vector(0,1){90}}
\put(240,35){\vector(0,1){90}}
\put(200,25){\vector(-3,2){160}}
\put(200,130){\line(-3,-2){60}}
\put(100,63){\vector(-3,-2){60}}
\end{picture}
\\
{\scriptsize{$\mathrmbfit{tup}(h,f,g)
\;=\;(h{\cdot}{(\mbox{-})})\cdot({(\mbox{-})}{\cdot}g)$}}
\\
{\scriptsize{$\mathrmbfit{tup}(1,f,1)
\;=\;1$}}
\end{tabular}}}
\end{tabular}}}
\end{center}
\caption{Tuple Function Factorization}
\label{tup:func:fact}
\end{figure}
} 

%
\comment{
\begin{itemize}
\item 
The type domain tuple bridge 
{\footnotesize{$
\mathrmbfit{tup}_{\mathcal{A}'}
\xLeftarrow{\;\grave{\tau}_{{\langle{f,g}\rangle}}\;}
{\scriptstyle\sum}_{f}^{\mathrm{op}}{\;\circ\;}
\mathrmbfit{tup}_{\mathcal{A}}
$}}
has tuple function
\newline\mbox{}\hfill
$
\mathrmbfit{tup}_{{\langle{\mathcal{S}',\mathcal{A}'}\rangle}} =
\mathrmbfit{tup}_{\mathcal{A}'}(\mathcal{S}')
\xleftarrow[{(\mbox{-})} \cdot g]
{\grave{\tau}_{{\langle{f,g}\rangle}}(\mathcal{S}')} 
\mathrmbfit{tup}_{\mathcal{A}}({\scriptstyle\sum}_{f}(\mathcal{S}'))
=\mathrmbfit{tup}_{{\langle{{\scriptscriptstyle\sum}_{f}(\mathcal{S}'),\mathcal{A}}\rangle}} 
$
\hfill\mbox{}\newline
as its $\mathcal{S}'^{\text{th}}$-component
for source signature 
$\mathcal{S}' \in \mathrmbf{List}(X')$.
\newline
\item 
The signature tuple bridge
{\footnotesize{${f^{-1}}^{\mathrm{op}}{\circ\;}
{\;\mathrmbfit{tup}_{\mathcal{S}'}}
\xLeftarrow{\tau_{{\langle{h,f}\rangle}}}
\mathrmbfit{tup}_{\mathcal{S}}$}}
has tuple function
\newline\mbox{}\hfill
$
\mathrmbfit{tup}_{{\langle{\mathcal{S}',f^{-1}(\mathcal{A})}\rangle}} =
\mathrmbfit{tup}_{\mathcal{S}'}(f^{-1}(\mathcal{A}))
\xleftarrow[h{\,\cdot\,}{(\mbox{-})}]
{\tau_{{\langle{h,f}\rangle}}(\mathcal{A})} 
\mathrmbfit{tup}_{\mathcal{S}}(\mathcal{A})
=\mathrmbfit{tup}_{{\langle{\mathcal{S},\mathcal{A}}\rangle}}
$\hfill\mbox{}\newline
as its $\mathcal{A}^{\text{th}}$-component
for target type domain 
$\mathcal{A} \in \mathrmbf{Cls}$.
\end{itemize}
}
%
\begin{proposition}\label{prop:tup:func:ident}
The tuple function
{\footnotesize{$
\mathrmbfit{tup}_{\mathcal{S}'}({g}^{-1}(\mathcal{A}'))
\xleftarrow[1]{\;\mathrmbfit{tup}(1,f,1)\;}
\mathrmbfit{tup}_{\mathcal{A}}({\scriptstyle\sum}_{f}(\mathcal{S}')$}}
is a {hidden factor}.
Composition with it
gives the identities in Fig.\;\ref{tup:func:idents}.
%
\footnote{For tuple set notation,
see footnote\;\ref{sign:dom:tup}.}
%
\end{proposition}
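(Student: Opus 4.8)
The plan is to prove that the two tuple sets linked by the middle arrow are literally one and the same subset of $\mathrmbf{List}(Y)$, so that $\mathrmbfit{tup}(1,f,1)$ is the identity map, and then to obtain the factorization identities of Fig.~\ref{tup:func:idents} by functoriality of the tuple passage $\mathrmbfit{tup}:\mathrmbf{Dom}^{\mathrm{op}}\rightarrow\mathrmbf{Set}$ (Def.~\ref{def:sign:dom:tup}).

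First I would unfold both tuple sets as extents in $\mathrmbf{List}(Y)$ using the description in footnote~\ref{sign:dom:tup}. Writing $\mathcal{S}'={\langle{I',s',X'}\rangle}$ with sort function $f:X'\rightarrow X$ and data value function $g:Y\rightarrow Y'$, the pushforward signature is ${\scriptstyle\sum}_{f}(\mathcal{S}')={\langle{I',s'{\cdot}f,X}\rangle}$; hence a member of $\mathrmbfit{tup}_{\mathcal{A}}({\scriptstyle\sum}_{f}(\mathcal{S}'))$ is an $I'$-indexed list $(t_{i}\mid i\in I')\in\mathrmbf{List}(Y)$ with $t_{i}\in\mathcal{A}_{f(s'(i))}$ for every $i$. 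On the other side, the Yin classification is $g^{-1}(\mathcal{A}')={\langle{X',Y,\models_{g^{-1}(\mathcal{A}')}}\rangle}$ with extent $\mathrmbfit{ext}_{g^{-1}(\mathcal{A}')}(x')=g^{-1}(\mathcal{A}'_{x'})$, so a member of $\mathrmbfit{tup}_{\mathcal{S}'}(g^{-1}(\mathcal{A}'))$ is an $I'$-indexed list $(t_{i}\mid i\in I')\in\mathrmbf{List}(Y)$ with $t_{i}\in g^{-1}(\mathcal{A}'_{s'(i)})$ for every $i$.

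The single essential step is then to invoke the defining condition of the type domain morphism $\mathcal{A}'\xrightleftharpoons{{\langle{f,g}\rangle}}\mathcal{A}$: among the equivalent forms in Eqn.~\ref{typ:dom:mor:equiv:conds} is $g^{-1}(\mathcal{A}'_{x'})=\mathcal{A}_{f(x')}$ for all $x'\in X'$. Specializing this to $x'=s'(i)$ shows the two index-wise membership conditions coincide verbatim, so the two tuple sets are equal as subsets of $\mathrmbf{List}(Y)$. Since the morphism ${\langle{1,f,1}\rangle}$ carries identity arity and identity data value functions, the induced tuple function of Def.~\ref{def:sign:dom:tup} leaves the underlying list $(t_{i})$ unchanged; therefore $\mathrmbfit{tup}(1,f,1)=1$, which is precisely the asserted hidden factor.

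Finally I would read off the factorization identities. Because $\mathrmbfit{tup}$ is a passage, it sends the three-part factorization of ${\langle{h,f,g}\rangle}$ to the composite of the three corresponding tuple functions, and the middle one being the identity collapses the chain to the Yin component $\grave{\tau}_{{\langle{f,g}\rangle}}(\mathcal{S}')=\mathrmbfit{tup}_{\mathcal{S}'}(g)$ (post-composition $(\mbox{-}){\cdot}g$) followed by the Yang component $\tau_{{\langle{h,f}\rangle}}(\mathcal{A})=\mathrmbfit{tup}_{\mathcal{A}}(h)$ (pre-composition $h{\cdot}(\mbox{-})$), giving $\mathrmbfit{tup}(h,f,g)=(h{\cdot}(\mbox{-})){\cdot}((\mbox{-}){\cdot}g)$ together with $\mathrmbfit{tup}(1,f,1)=1$, exactly as recorded in Fig.~\ref{tup:func:idents}. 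The only place demanding care --- and thus the main obstacle --- is the extent identification in the previous paragraph: one must track that $g^{-1}(\mathcal{A}')$ and the reindexing of $\mathcal{A}$ through ${\scriptstyle\sum}_{f}(\mathcal{S}')$ produce the same extent index-by-index, and that this equality is natural in the signature $\mathcal{S}'$ (respectively in the type domain $\mathcal{A}$), so that it is compatible with the bridges $\grave{\tau}$ and $\tau$ rather than being a mere pointwise coincidence of sets.
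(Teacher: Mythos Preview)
Your proposal is correct and takes essentially the same approach as the paper: both unfold the tuple sets as extents in $\mathrmbf{List}(Y)$ and identify them index-by-index via the infomorphism condition, then observe that $\mathrmbfit{tup}(1,f,1)$ acts by pre/post composition with identities. The only cosmetic difference is that the paper routes the extent identification through the Yang classification $f^{-1}(\mathcal{A})$ (first invoking the Yin--Yang equality $g^{-1}(\mathcal{A}')=f^{-1}(\mathcal{A})$, then checking $\mathrmbfit{tup}_{f^{-1}(\mathcal{A})}(\mathcal{S}')=\mathrmbfit{tup}_{\mathcal{A}}({\scriptstyle\sum}_{f}(\mathcal{S}'))$ elementwise), whereas you go directly from $g^{-1}(\mathcal{A}'_{s'(i)})$ to $\mathcal{A}_{f(s'(i))}$ using the extent form of Eqn.~\ref{typ:dom:mor:equiv:conds}; these are the same step.
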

\begin{proof}
$\mathrmbfit{tup}_{f^{-1}(\mathcal{A})}(\mathcal{S}') 
= \mathrmbfit{tup}_{g^{-1}(\mathcal{A}')}(\mathcal{S}')$
by the Yin-Yang in \S\,\ref{sub:sec:tbl:comp}.
To see that
$\mathrmbfit{tup}_{f^{-1}(\mathcal{A})}(\mathcal{S}') 
= \mathrmbfit{tup}_{\mathcal{A}}({\scriptstyle\sum}_{f}(\mathcal{S}')
= \mathrmbfit{tup}_{\mathcal{A}}(\mathcal{S}'{\,\cdot\,}f)$,
let
$(I',t) \in \mathrmbf{List}(Y)$
be any $Y$-tuple.
Then
$(I',t) \in \mathrmbfit{tup}_{f^{-1}(\mathcal{A})}(\mathcal{S}')$
iff
$\forall_{i' \in I'}$
$t(i')\models_{f^{-1}(\mathcal{A})}s'(i')$
iff
$\forall_{i' \in I'}$
$t(i')\models_{\mathcal{A}}f(s'(i'))$
iff
$(I',t) \in 
\mathrmbfit{tup}_{\mathcal{A}}(\mathcal{S}'{\,\cdot\,}f)$.
Hence,
$\mathrmbfit{tup}_{\mathcal{S}'}({g}^{-1}(\mathcal{A}'))
=\mathrmbfit{tup}_{\mathcal{A}}({\scriptstyle\sum}_{f}(\mathcal{S}')$.
The tuple function is the identity
{\footnotesize{$
\mathrmbfit{tup}_{\mathcal{S}'}({g}^{-1}(\mathcal{A}'))
\xleftarrow[(1{\cdot}{(\mbox{-})})\cdot({(\mbox{-})}{\cdot}1)]
{\;\mathrmbfit{tup}(1,f,1)\;}
\mathrmbfit{tup}_{\mathcal{A}}({\scriptstyle\sum}_{f}(\mathcal{S}')$,}}
since defined as pre-post composition with the identity function.
\mbox{}\hfill\rule{5pt}{5pt}
\end{proof}
In \S\,\ref{sub:sec:base:ops}
the basic operations of
\textit{projection/inflation} and \textit{restriction/expansion}
are explained in terms of 
the encircled tuple functions in Fig.\;\ref{tup:func:idents}.
%
\begin{figure}
\begin{center}
{{\setlength{\extrarowheight}{4pt}{\footnotesize{$
\begin{array}[c]{@{\hspace{5pt}}
r@{\hspace{10pt}=\hspace{10pt}}l@{\hspace{5pt}}}
\text{\fbox{$\;
\underset{\mathrmbfit{tup}_{\mathcal{A}'}(\mathcal{S}')}
{\mathrmbfit{tup}_{\mathcal{S}'}(\mathcal{A}')}
\xleftarrow[{(\mbox{-})}{\,\cdot\,}g]
{\;\mathrmbfit{tup}_{\mathcal{S}'}(g)\;}
\mathrmbfit{tup}_{\mathcal{S}'}({g}^{-1}(\mathcal{A}'))
$}}
& 
\mathrmbfit{tup}_{\mathcal{A}'}(\mathcal{S}')
\xleftarrow[{(\mbox{-})}{\,\cdot\,}g]
{\grave{\tau}_{{\langle{f,g}\rangle}}(\mathcal{S}')}
{\mathrmbfit{tup}_{\mathcal{A}}({\scriptstyle\sum}_{f}(\mathcal{S}')}) 
\\
\mathrmbfit{tup}_{\mathcal{S}'}(f^{-1}(\mathcal{A}))
\xleftarrow[h{\,\cdot\,}{(\mbox{-})}]
{\tau_{{\langle{h,f}\rangle}}(\mathcal{A})}
\mathrmbfit{tup}_{\mathcal{S}}(\mathcal{A})
&
\text{\fbox{$
\mathrmbfit{tup}_{\mathcal{A}}({\scriptstyle\sum}_{f}(\mathcal{S}'))
\xleftarrow[h{\,\cdot\,}{(\mbox{-})}]
{\mathrmbfit{tup}_{\mathcal{A}}(h)}
\mathrmbfit{tup}_{\mathcal{A}}(\mathcal{S})
\;$}}
\\
\multicolumn{1}{c}{Yin}
&
\multicolumn{1}{c}{Yang}
\end{array}$}}}}
\end{center}
\caption{Tuple Function Identities}
\label{tup:func:idents}
\end{figure}
The adjoint flow factorization (Disp.\;\ref{fbr:adj:sign:dom:mor:factor})
is visualize in Fig.\ref{fig:flow:factors:graph}.
%
\begin{figure}
\begin{center}
{{\begin{tabular}{c}
\setlength{\unitlength}{0.65pt}
\begin{picture}(480,140)(-30,-20)
\put(1,60){\makebox(0,0){\footnotesize{$
\overset{\textstyle{\mathrmbf{Tbl}_{\mathcal{A}'}(\mathcal{S}')}}
{\underset{\textstyle{\mathrmbf{Tbl}_{\mathcal{S}'}(\mathcal{A}')}}
{=\rule[4pt]{0pt}{1pt}}}$}}}
\put(211,60){\makebox(0,0){\footnotesize{$
\overset{\textstyle{
\mathrmbf{Tbl}_{\mathcal{A}}({\scriptstyle\sum}_{f}(\mathcal{S}')}}
{\underset{\textstyle{
\mathrmbf{Tbl}_{\mathcal{S}'}({g}^{-1}(\mathcal{A}'))}}
{=\rule[4pt]{0pt}{1pt}}}$}}}
\put(421,60){\makebox(0,0){\footnotesize{$
\overset{\textstyle{\mathrmbf{Tbl}_{\mathcal{A}}(\mathcal{S})}}
{\underset{\textstyle{\mathrmbf{Tbl}_{\mathcal{S}}(\mathcal{A})}}
{=\rule[4pt]{0pt}{1pt}}}$}}}
%
\put(150,70){\vector(-1,0){100}}
\put(50,50){\vector(1,0){100}}
\put(370,70){\vector(-1,0){100}}
\put(270,50){\vector(1,0){100}}
\put(0,110){\oval(40,40)[t]}
\qbezier(-20,110)(-20,97)(-9,86)
\qbezier(20,110)(20,97)(9,86)
\put(8,85){\vector(-1,-1){0}}
\put(210,110){\oval(40,40)[t]}
\qbezier(190,110)(190,97)(201,86)
\qbezier(230,110)(230,97)(219,86)
\put(218,85){\vector(-1,-1){0}}
\put(420,110){\oval(40,40)[t]}
\qbezier(400,110)(400,97)(411,86)
\qbezier(440,110)(440,97)(429,86)
\put(428,85){\vector(-1,-1){0}}
\put(0,120){\makebox(0,0){\scriptsize{$\textit{meet}$}}}
\put(0,110){\makebox(0,0){\scriptsize{$\textit{join}$}}}
\put(0,100){\makebox(0,0){\scriptsize{$\textit{diff}$}}}
\put(210,120){\makebox(0,0){\scriptsize{$\textit{meet}$}}}
\put(210,110){\makebox(0,0){\scriptsize{$\textit{join}$}}}
\put(210,100){\makebox(0,0){\scriptsize{$\textit{diff}$}}}
\put(420,120){\makebox(0,0){\scriptsize{$\textit{meet}$}}}
\put(420,110){\makebox(0,0){\scriptsize{$\textit{join}$}}}
\put(420,100){\makebox(0,0){\scriptsize{$\textit{diff}$}}}
\put(120,80){\makebox(0,0)[r]{\scriptsize{$\textit{expand}$}}}
\put(120,40){\makebox(0,0)[r]{\scriptsize{$\textit{restrict}$}}}
\put(340,80){\makebox(0,0)[r]{\scriptsize{$\textit{project}$}}}
\put(340,40){\makebox(0,0)[r]{\scriptsize{$\textit{inflate}$}}}
\put(100,60){\makebox(0,0){\footnotesize{${\;\dashv\;}$}}}
\put(320,60){\makebox(0,0){\footnotesize{${\;\dashv\;}$}}}
\put(105,10){\makebox(0,0){\normalsize{${
\underset{\textstyle{\mathrmbf{Tbl}(\mathcal{S})}}{\underbrace{\hspace{135pt}}}}$}}}
\put(315,10){\makebox(0,0){\normalsize{${
\underset{\textstyle{\mathrmbf{Tbl}(\mathcal{A})}}{\underbrace{\hspace{135pt}}}}$}}}
\put(210,-16){\makebox(0,0){\normalsize{${
\underset{\textstyle{\mathrmbf{Tbl}}}{\underbrace{\hspace{280pt}}}}$}}}
\end{picture}
\end{tabular}}}
\end{center}
\caption{Adjoint Flow Factors}
\label{fig:flow:factors:graph}
\end{figure}
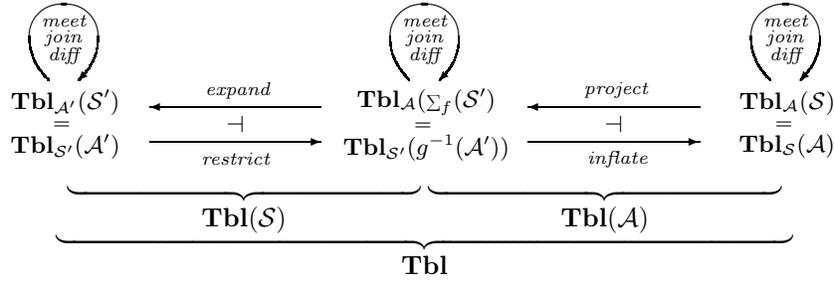
%

\newpage
\paragraph{Relations.}

A \texttt{FOLE} relation
$\mathcal{R} = {\langle{\mathcal{D},R,i}\rangle}$
is a \texttt{FOLE} table, 
whose tuple function 
$R\xhookrightarrow{i}\mathrmbfit{tup}(\mathcal{D})=\mathrmbfit{tup}_{\mathcal{A}}(\mathcal{S})$
is an inclusion.
Hence,
the set of keys 
is a subset
$R \subseteq \mathrmbfit{tup}_{\mathcal{A}}(\mathcal{S})$
of $\mathcal{D}$-tuples,
and the tuple function can be omitted 
in the designation of the relation
$\mathcal{R} = {\langle{\mathcal{D},R}\rangle}$.
%
A \texttt{FOLE} relation morphism 
{\footnotesize{$
\mathcal{R}' = {\langle{\mathcal{D}',R'}\rangle}
\xleftarrow{{\langle{{\langle{h,f,g}\rangle},r}\rangle}} 
{\langle{\mathcal{D},R}\rangle} = \mathcal{R}
$}\normalsize}
%
consists of 
a signed domain morphism
{\footnotesize{$
\mathcal{D}'={\langle{\mathcal{S}',\mathcal{A}'}\rangle}
\xrightarrow{{\langle{h,f,g}\rangle}}
{\langle{\mathcal{S},\mathcal{A}}\rangle}=\mathcal{D}
$}\normalsize}
and
an inclusion key function $R'\xhookleftarrow{r}R$,
which satisfy the naturality condition
$r{\;\cdot\;}i' = i{\;\cdot\;}\mathrmbfit{tup}(h,f,g)$.
Let $\mathrmbf{Rel}$ denote the mathematical context of relations. 
%

\paragraph{Small Fibers.}

Let $\mathcal{D}
={\langle{\mathcal{S},\mathcal{A}}\rangle}
={\langle{\mathcal{S},\mathcal{A}}\rangle}$ 
be a fixed signed domain.
The fiber mathematical context of 
of $\mathcal{D}$-relations 
is denoted by
$\mathrmbf{Rel}(\mathcal{D}) 
= \mathrmbf{Rel}_{\mathcal{D}}(\mathcal{S})
= \mathrmbf{Rel}_{\mathcal{D}}(\mathcal{S})$.
A $\mathcal{D}$-relation $\mathcal{R} = {\langle{R,i}\rangle}$ 
consists of a subset $R \subseteq \mathrmbfit{tup}_{\mathcal{A}}(\mathcal{S})$ 
of tuples
with inclusion tuple function
$R\xhookrightarrow{\,i\;}\mathrmbfit{tup}_{\mathcal{A}}(\mathcal{S})$.
A $\mathcal{D}$-relation morphism 
$\mathcal{R}' = {\langle{R',i'}\rangle}\xhookleftarrow{r}{\langle{R,i}\rangle} = \mathcal{R}$
is an inclusion tuple function $R'\xhookleftarrow{\,r\,}R$
satisfying the condition
$\mathrmbfit{tup}_{\mathcal{A}}(\mathcal{S}) \supseteq R' \supseteq R$.
In short,
the small fiber context of $\mathcal{D}$-relations 
is 
the partial order of 
subsets of $\mathcal{D}$-tuples
$\mathrmbf{Rel}(\mathcal{D}) =
{\langle{{\wp}\mathrmbfit{tup}_{\mathcal{A}}(\mathcal{S}),\subseteq}\rangle}$.

\paragraph{Auxiliary Relations.}

Here are some examples of auxiliary relations.

\begin{example}{{\footnotesize{$\sigma_{n \theta \hat{n}}(\mathcal{T}):$}}}\label{bin:op:bnd}
Let $\hat{x} \in X$ be a sort
whose data-type
$\mathcal{A}_{\hat{x}} 
= \mathrmbfit{tup}_{\mathcal{A}}(1,\hat{x}) = P$ 
contains the elements of some binary relation 
${\langle{P,\theta}\rangle}$
with
= $\theta \subseteq P{\times}P$.
%
\footnote{\label{egs}
Examples include a partial order ${\langle{P,\leq}\rangle}$,
where $\theta$ is one of the binary operations in the set 
$\{\;<,\leq ,=,\neq ,\geq ,\;>\}$.}
Then, 
$X$-signature (header) ${\langle{1,\hat{x}}\rangle}$
with sort function 
$1
\xrightarrow{\,\hat{x}\,}X$
has the extent
$\mathrmbfit{tup}_{\mathcal{A}}(1,\hat{x})
= \mathcal{A}_{\hat{x}}
= P$.
Let 
$N 
= n \theta \hat{n}
= \{ (n,\hat{n}) \in P{\times}P \mid n \theta \hat{n} \}$
be the elements of $n \in P$ related to $\hat{n}$
for some distinguished element $\hat{n} \in P$.
Since
$N 
\subseteq 
\mathcal{A}_{\hat{x}} 
=
\mathrmbfit{tup}_{\mathcal{A}}(1,\hat{x})$,
%
$\mathcal{N} = 
{\langle{1,\hat{x},N}\rangle}$
is an $\mathcal{A}$-relation.
Let $\mathcal{T} = {\langle{\mathcal{S},K,t}\rangle}$ 
be an $\mathcal{A}$-table.
Assume that $\mathcal{T}$ has an attribute 
${\langle{\hat{\imath},\hat{x}}\rangle}$ 
with index $\hat{\imath}{\,\in\,}I$ and sort $\hat{x}{\,\in\,}X$.
This defines an index function
$1\xrightarrow{\;\hat{\imath}\;}I$
satisfying the condition:
$\hat{\imath}{\;\cdot\;}s = \hat{x}$.
Hence, 
$\mathcal{N}$ is connected to $\mathcal{T}$
via the $X$-signature morphism
{{${\langle{1,\hat{x}}\rangle}
\xrightarrow{\;\hat{\imath}\;}\mathcal{S}$.}}
%
Selection 
is the two-step process
$\sigma_{n \theta \hat{n}}(\mathcal{T}) 
\doteq  
\grave{\mathrmbfit{tbl}}_{\mathcal{A}}(\hat{\imath})(\mathcal{N}){\;\wedge\;}\mathcal{T}$. 
%
\end{example}

\begin{example}{{\footnotesize{$\sigma_{n_{0} \theta n_{1}}(\mathcal{T}):$}}}\label{bin:op:gph}
\footnote{Generalized selection, 
written as $\sigma_{\varphi}(\mathcal{T})$,
where $\varphi$ is a propositional formula 
that consists of atoms as allowed in the normal selection and the logical operators
$\wedge$ (and), $\vee$ (or) and $\neg$ (negation). 
This selection chooses all those tuples in $\mathcal{T}$ 
for which $\varphi$ holds.}
Let $\hat{x}_{0},\hat{x}_{1} \in X$ be two sorts
whose data-types
$\mathcal{A}_{\hat{x}_{0}} 
= \mathrmbfit{tup}_{\mathcal{A}}(1,\hat{x}_{0}) 
= P 
= \mathrmbfit{tup}_{\mathcal{A}}(1,\hat{x}_{1}) 
= \mathcal{A}_{\hat{x}_{1}}$  
contain the elements of some binary relation 
${\langle{P,\theta}\rangle}$
with
= $\theta \subseteq P{\times}P$.
$^{\ref{egs}}$
%
Then, 
$X$-signature (header) ${\langle{2,\hat{x}}\rangle}$
with sort function $2=\{0,1\}\xrightarrow{\,\hat{x}\,}X$
has the extent
$\mathrmbfit{tup}_{\mathcal{A}}(2,\hat{x})
= \mathcal{A}_{\hat{x}_{0}}{\!\times}\mathcal{A}_{\hat{x}_{1}}
= P{\times}P$.
Let
$N 
= n_{0} \theta n_{1}
= \{ (n_{0},n_{1}) \in P{\times}P \mid n_{0} \theta n_{1} \}$
be the graph of the binary relation.
Since
$N \subseteq 
\mathcal{A}_{\hat{x}_{0}}{\!\times}\mathcal{A}_{\hat{x}_{1}}
= \mathrmbfit{tup}_{\mathcal{A}}(2,\hat{x})$,
$\mathcal{N} = 
{\langle{2,\hat{x},N}\rangle}$
is an $\mathcal{A}$-relation.
Let $\mathcal{T} = {\langle{\mathcal{S},K,t}\rangle}$ 
be an $\mathcal{A}$-table.
Assume that $\mathcal{T}$ has 
a sub-signature (sub-header) 
consisting of two indexes 
$\hat{\imath}_{0},\;\hat{\imath}_{1} \in I$
with corresponding sorts
$\hat{x}_{0},\;\hat{x}_{1} \in X$.
This defines an index function
$2\xrightarrow{\;\hat{\imath}\;}I$
satisfying the condition:
$\hat{\imath}{\;\cdot\;}s = \hat{x}$.
%
Hence, 
$\mathcal{N}$ is connected to $\mathcal{T}$
via the $X$-signature morphism
{{${\langle{2,\hat{x}}\rangle}
\xrightarrow{\;\hat{\imath}\;}\mathcal{S}$.}}
Selection 
is the two-step process
$\sigma_{n_{0} \theta n_{1}}(\mathcal{T}) 
\doteq  
\grave{\mathrmbfit{tbl}}_{\mathcal{A}}(\hat{\imath})(\mathcal{N}){\;\wedge\;}\mathcal{T}$. 
%
\end{example}

%
\begin{example}
%
Suppose one is working with a database table 
$\mathcal{T} = {\langle{\mathcal{S},\mathcal{A},K,t}\rangle}$
concerned with countries of the world,
where population is an attribute $(\hat{\imath},\hat{x})$ 
of index $\hat{\imath}{\,\in}I$ 
and sort $s(\hat{\imath})=\hat{x}{\,\in}X$ 
representing the data-type 
$\mathcal{A}_{\hat{x}} = \{0,1,\cdots\}$
of natural numbers.
One might want to select only those tuples from $\mathcal{T}$ 
representing countries
whose population is above a certain threshold 
$\hat{n} \in \{0,1,\cdots\}$.
Let 
$N 
= {\uparrow{\!\hat{n}}} 
= \{ n \geq \hat{n} \mid n \in \{0,1,\cdots\} \}$.
%
\end{example}
%

\begin{example}
%
Another example using the natural numbers data-type 
is the DMV (Department of Motor Vehicles),
which will need to work with a motorists database table 
$\mathcal{M} = 
{\langle{{\langle{\mathcal{S},\mathcal{A}}\rangle},K,t}\rangle}$. 
%
This table will have attributes such as 
\textsf{name}, \textsf{age}, \textsf{address}, \textsf{phone}, \textsf{medical}, \textsf{soc-sec-no},  etc.
Each \textsf{motorist} in the database 
will be represented by a key value $m \in K$.
The tuple for this key
$t(m) = (I,y) \in \mathrmbfit{tup}_{\mathcal{A}}(\mathcal{S}) \subseteq \mathrmbf{List}(Y)$
will have values for the various attributes such as \textsf{name}, \textsf{address} and \textsf{age}.
Here, \textsf{age} 
is an attribute $(\hat{\imath},\hat{x})$ 
of index $\hat{\imath}{\,\in}I$ 
and sort $s(\hat{\imath})=\hat{x}{\,\in}X$ 
representing the data-type 
$\mathcal{A}_{\hat{x}} = \{0,1,\cdots\}$
of natural numbers.
To define a sub-table for all minors 
--- motorist below the age of 21
---
we need to select only those tuples from $\mathcal{M}$ 
representing people
whose \textsf{age} attribute is below the threshold 
$\hat{n}=21 \in \{0,1,\cdots\}$.
Let 
$N 
= {\downarrow{\!21}} 
= \{ n \leq 21 \mid n \in \{0,1,\cdots\} \}$.
\end{example}
%


%
\newpage
\subsection{Completeness and Co-completeness.}
\label{sub:sec:lim:colim:tbl}

%
The following proposition 
using comma contexts
appears as 
Prop.\;7  in
``The {\ttfamily FOLE} Table''
\cite{kent:fole:era:tbl}.
\begin{proposition}\label{prop:com:cxt:lim:colim}
The contexts 
of signed domains $\mathrmbf{Dom}$ 
and 
tables 
$\mathrmbf{Tbl}(\mathcal{D})=\mathrmbf{Tbl}_{\mathcal{A}}(\mathcal{S})$
are 
associated with the following passage opspans:
\begin{center}
{{\footnotesize\setlength{\extrarowheight}{4pt}
$\begin{array}{r@{\;=\;}l@{\hspace{10pt}:\hspace{16pt}}l}
\multicolumn{2}{c}{\textit{comma context}} & \textit{passage opspan}
\\
\mathrmbf{Dom}
& \bigl(\mathrmbf{Set}{\,\downarrow\,}\mathrmbfit{sort}\bigr)
& \mathrmbf{Set}\xrightarrow{\;\mathrmbfit{1}\;}\mathrmbf{Set}\xleftarrow{\;\mathrmbfit{sort}\;}\mathrmbf{Cls}
\\
\mathrmbf{Tbl}(\mathcal{D})
& \bigl(\mathrmbf{Set}{\,\downarrow\,}\mathrmbfit{tup}(\mathcal{D})\bigr)
& \mathrmbf{Set}\xrightarrow{\;\mathrmbfit{1}\;}
\mathrmbf{Set}
\xleftarrow[\mathrmbfit{tup}_{\mathcal{A}}(\mathcal{S})]
{\;\mathrmbfit{tup}(\mathcal{D})\;}\mathrmbf{1}
\end{array}$.}}
\end{center}
respectively.
Hence, they are complete and co-complete and their projections
%
\footnote{The completeness and co-completeness of
the small fiber context of tables 
$\mathrmbf{Tbl}(\mathcal{D})=\mathrmbf{Tbl}_{\mathcal{A}}(\mathcal{S})$
is represented by the data-type operations of \S\,\ref{sub:sub:sec:boole}.}
%
\begin{center}
{\begin{tabular}{@{\hspace{20pt}}c}
{{\footnotesize\setlength{\extrarowheight}{4pt}$\begin{array}[t]
{@{\hspace{5pt}}r@{\hspace{5pt}}c@{\hspace{5pt}}l@{\hspace{5pt}}}
\mathrmbf{Set}\xleftarrow{\;\mathrmbfit{arity}\;}
\!\!\!\!\!
&
{\mathrmbf{Dom}}
&
\!\!\!\!
\xrightarrow{\;\mathrmbfit{data}\;}\mathrmbf{Cls}
\\
\mathrmbf{Set}\xleftarrow{\;\mathrmbfit{key}(\mathcal{D})\;}
\!\!\!
&
{\mathrmbf{Tbl}(\mathcal{D})}
&
\!
\xrightarrow{\;\;\;}\mathrmbf{1}
\end{array}$}}
\end{tabular}}
\end{center}
are continuous and co-continuous.
\end{proposition}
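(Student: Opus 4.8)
The plan is to recognize the statement as a standard consequence of the theory of comma categories, so the proof reduces to (1) identifying each context as a comma category over the indicated passage opspan, (2) invoking the general fact that comma categories over complete/cocomplete bases are themselves complete and cocomplete with limits and colimits computed componentwise, and (3) reading off continuity and cocontinuity of the projection passages. First I would verify the two comma-category descriptions. For $\mathrmbf{Dom} = \bigl(\mathrmbf{Set}{\downarrow}\mathrmbfit{sort}\bigr)$, an object is a set $I$ together with a classification $\mathcal{A}$ and a $\mathbf{Set}$-map $I \to \mathrmbfit{sort}(\mathcal{A}) = X$; this is exactly a signed domain $\mathcal{D} = {\langle{\mathcal{S},\mathcal{A}}\rangle}$ once one recognizes the arity map $I \xrightarrow{s} X$ as the signature data, and the morphism condition of a comma category reproduces the naturality square defining a signed domain morphism in \S\,\ref{sub:sec:tbl:comp}. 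For $\mathrmbf{Tbl}(\mathcal{D}) = \bigl(\mathrmbf{Set}{\downarrow}\mathrmbfit{tup}(\mathcal{D})\bigr)$, an object is a set $K$ with a map $K \to \mathrmbfit{tup}_{\mathcal{A}}(\mathcal{S})$, which is precisely a $\mathcal{D}$-table ${\langle{K,t}\rangle}$, and the comma morphism condition $k \cdot t' = t$ is exactly the table morphism naturality condition.

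Next I would invoke completeness and cocompleteness of the base contexts and of the comma construction. Here $\mathrmbf{Set}$ is complete and cocomplete, $\mathrmbf{Cls}$ is complete and cocomplete (this is standard for classifications, equivalently formal contexts, and is implicitly used throughout the paper when forming products, coproducts, pullbacks and pushouts of type domains), and $\mathbf{1}$ is trivially both. The general theorem states that if $\mathbf{A} \xrightarrow{F} \mathbf{C} \xleftarrow{G} \mathbf{B}$ is an opspan of passages with $\mathbf{A}, \mathbf{B}, \mathbf{C}$ complete and $F, G$ continuous, then the comma context $(\mathbf{A}{\downarrow}G)$ is complete, with limits created from those in $\mathbf{A}$ and $\mathbf{B}$; dually for cocompleteness when the relevant functors are cocontinuous. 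In our two cases the left leg is an identity passage $\mathbf{Set} \xrightarrow{\mathbf{1}} \mathbf{Set}$ (trivially bicontinuous), and the right legs are $\mathrmbfit{sort}$ and $\mathrmbfit{tup}(\mathcal{D})$ respectively. Applying this theorem yields completeness and cocompleteness of both $\mathrmbf{Dom}$ and $\mathrmbf{Tbl}(\mathcal{D})$, and the projection passages $\mathrmbfit{arity}$, $\mathrmbfit{data}$, $\mathrmbfit{key}(\mathcal{D})$ (together with the trivial passages to $\mathbf{1}$) are the comma-category projections, which by construction preserve all limits and colimits, hence are continuous and cocontinuous.

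The main obstacle, and the one place the general machinery needs genuine input, is the continuity of the right legs $\mathrmbf{Cls} \xrightarrow{\mathrmbfit{sort}} \mathrmbf{Set}$ and $\mathbf{1} \xrightarrow{\mathrmbfit{tup}(\mathcal{D})} \mathrmbf{Set}$. The passage out of $\mathbf{1}$ is continuous for the vacuous reason that $\mathbf{1}$ has only the empty diagram structure relevant here, so the comma context $\bigl(\mathrmbf{Set}{\downarrow}\mathrmbfit{tup}(\mathcal{D})\bigr)$ is literally the slice $\mathbf{Set}{\downarrow}\mathrmbfit{tup}_{\mathcal{A}}(\mathcal{S})$ over a fixed set, and slices of $\mathbf{Set}$ over a fixed object are both complete and cocomplete with the obvious forgetful projection preserving everything — this is the cleanest case. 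For $\mathrmbf{Dom}$ one must check that $\mathrmbfit{sort}$ preserves limits; since it is itself a projection (a fibration/opfibration functor picking out the sort set), I would argue either directly or by citing that $\mathrmbf{Cls} \to \mathbf{Set}$ is both a continuous and cocontinuous functor. Because this proposition is quoted verbatim from Prop.\,7 of \cite{kent:fole:era:tbl}, I expect the cleanest route is to state the comma-category identifications explicitly, cite the standard comma-category completeness theorem, and defer the detailed verification of the leg functors' (co)continuity to that reference, noting that the footnote already anchors the small-fiber case to the concrete Boolean operations of \S\,\ref{sub:sub:sec:boole}.
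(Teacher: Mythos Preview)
Your approach is essentially identical to the paper's: both invoke the standard comma-category completeness theorem after verifying that the base contexts $\mathrmbf{Set}$, $\mathrmbf{Cls}$, $\mathrmbf{1}$ are complete and cocomplete and that the leg passages have the required continuity properties. The paper is slightly more precise in noting the asymmetric hypotheses actually needed---only the right legs $\mathrmbfit{sort}$ and $\mathrmbfit{tup}(\mathcal{D})$ must be continuous (for completeness), while only the left leg $\mathrmbfit{1}$ must be cocontinuous (for cocompleteness)---whereas you state the theorem with the stronger symmetric hypothesis that both legs be continuous; this is harmless since the stronger hypothesis still yields the conclusion, but it is worth tightening.
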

\begin{proof}
The contexts $\mathrmbf{Set}$, $\mathrmbf{Cls}$ and $\mathrmbf{1}$ 
are complete and co-complete; 
the passages 
$\mathrmbf{Set}\xrightarrow{\mathrmbfit{1}}\mathrmbf{Set}$,
$\mathrmbf{Cls}\xrightarrow{\mathrmbfit{sort}}\mathrmbf{Set}$, and
$\mathrmbf{1}\xrightarrow{\;\mathrmbfit{tup}_{\mathcal{A}}(\mathcal{S})}\mathrmbf{Set}$
are continuous; and
the passage $\mathrmbf{Set}\xrightarrow{\mathrmbfit{1}}\mathrmbf{Set}$ 
is continuous and co-continuous.
\mbox{}\hfill\rule{5pt}{5pt}
\end{proof}
%
The following proposition using the Grothendieck construction
appears as Prop.\;9 in ``The {\ttfamily FOLE} Table''
\cite{kent:fole:era:tbl}.
This provides a framework for relational operations.
\begin{proposition}\label{prop:lim:tbl}
The fibered context of tables $\mathrmbf{Tbl}$ is complete and co-complete 
and the projection 
$\mathrmbf{Tbl}\;\xrightarrow{\;\mathrmbfit{dom}^{\mathrm{op}}}\mathrmbf{Dom}^{\mathrm{op}}$
is continuous and co-continuous.
%
\comment{The fibered context of tables
$\mathrmbf{Tbl}\xrightarrow{\;\mathrmbfit{dom}\;}\mathrmbf{Dom}^{\mathrm{op}}$
is the Grothendieck construction of 
the signed domain indexed adjunction
$\mathrmbf{Dom}^{\mathrm{op}}\!\xrightarrow{\;\mathrmbfit{tbl}\;}\mathrmbf{Adj}$.
(Thm.\;3 in \cite{kent:fole:era:tbl}).}
%
\end{proposition}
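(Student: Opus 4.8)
The statement to prove is Proposition~\ref{prop:lim:tbl}: the fibered context $\mathrmbf{Tbl}$ is complete and co-complete, and the projection $\mathrmbf{Tbl}\xrightarrow{\mathrmbfit{dom}^{\mathrm{op}}}\mathrmbf{Dom}^{\mathrm{op}}$ is continuous and co-continuous. The natural approach is to recognize $\mathrmbf{Tbl}$ as a Grothendieck construction: by the tabular flow adjunction (Disp.~\ref{def:sign:dom:mor:adj}), each signed domain morphism $\mathcal{D}'\xrightarrow{\langle h,f,g\rangle}\mathcal{D}$ induces a fiber adjunction of tables, so we have an indexed adjunction $\mathrmbf{Dom}^{\mathrm{op}}\xrightarrow{\mathrmbfit{tbl}}\mathrmbf{Adj}$, and $\mathrmbf{Tbl}$ is its total category via Grothendieck flattening. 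The plan is to invoke the standard theorem on (co)completeness of Grothendieck constructions, whose hypotheses I would verify are exactly (i) the base $\mathrmbf{Dom}^{\mathrm{op}}$ is (co)complete, (ii) every fiber $\mathrmbf{Tbl}(\mathcal{D})$ is (co)complete, and (iii) the reindexing passages preserve the relevant (co)limits.

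First I would establish the three ingredients. For (i), I appeal to Prop.~\ref{prop:com:cxt:lim:colim}, which already shows $\mathrmbf{Dom}$ is complete and co-complete (as a comma context $(\mathrmbf{Set}\downarrow\mathrmbfit{sort})$); passing to the opposite swaps limits and colimits but preserves both properties, so $\mathrmbf{Dom}^{\mathrm{op}}$ is also complete and co-complete. For (ii), the same proposition shows each small fiber $\mathrmbf{Tbl}(\mathcal{D})=\mathrmbf{Tbl}_{\mathcal{A}}(\mathcal{S})$ is complete and co-complete (it is the comma context $(\mathrmbf{Set}\downarrow\mathrmbfit{tup}(\mathcal{D}))$), with the concrete (co)limits being exactly the Boolean operations of \S\,\ref{sub:sub:sec:boole}. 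For (iii), I would use the fact that for a signed domain morphism the left adjoint flow $\acute{\mathrmbfit{tbl}}(h,f,g)$ preserves colimits (being a left adjoint) and the right adjoint flow $\grave{\mathrmbfit{tbl}}(h,f,g)$ preserves limits (being a right adjoint); since $\mathrmbfit{tbl}$ lands in $\mathrmbf{Adj}$, \emph{both} adjoints are available, which is precisely what makes the Grothendieck construction simultaneously complete and co-complete.

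The explicit construction I would give: a limit of a diagram in $\mathrmbf{Tbl}$ is built by first taking the limit $\widehat{\mathcal{D}}$ of the underlying diagram in $\mathrmbf{Dom}$ (the base), then transporting each table along the projection flow into the fiber $\mathrmbf{Tbl}(\widehat{\mathcal{D}})$ using restriction-inflation $\grave{\mathrmbfit{tbl}}$, and finally taking the fiberwise meet there. This is exactly the recipe realized concretely by the generic meet of \S\,\ref{sub:sub:sec:generic:meet}. Dually, a colimit is computed by taking the colimit in $\mathrmbf{Dom}$, transporting along projection-expansion $\acute{\mathrmbfit{tbl}}$, and taking the fiberwise join, as realized by the generic join of \S\,\ref{sub:sub:sec:generic:join}. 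The continuity and co-continuity of $\mathrmbfit{dom}^{\mathrm{op}}$ then follow because, by this construction, $\mathrmbfit{dom}^{\mathrm{op}}$ applied to the (co)limit cone returns precisely the (co)limit cone in the base.

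\textbf{Main obstacle.} The routine parts are citing Prop.~\ref{prop:com:cxt:lim:colim} and the adjoint-preservation facts. The genuine work lies in verifying the universal property of the candidate (co)limit cone --- namely that a compatible family of table morphisms into (or out of) the diagram factors uniquely through the candidate. The subtlety is the interaction between the base and fiber components of a table morphism: a morphism into the limit must induce both a unique mediating morphism in $\mathrmbf{Dom}^{\mathrm{op}}$ and, simultaneously, a unique fiber morphism compatible with it under reindexing. The hard part will be checking this factorization coheres across the fiber-change adjunctions, i.e.\ that the mate correspondence between $\acute{\mathrmbfit{tbl}}$ and $\grave{\mathrmbfit{tbl}}$ makes the two descriptions of a morphism (Fig.~\ref{fig:tbl:mor:large}) agree on the cone. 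Since this is exactly the content of the general Grothendieck (co)completeness theorem, the cleanest route is to cite that theorem and relegate the bookkeeping to it, which is why the proof in \cite{kent:fole:era:tbl} (Prop.~9) proceeds through the Grothendieck construction rather than re-deriving the universal property by hand.
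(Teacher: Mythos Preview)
Your proposal is correct and takes essentially the same approach as the paper: both recognize $\mathrmbf{Tbl}$ as the Grothendieck construction of the indexed adjunction $\mathrmbf{Dom}^{\mathrm{op}}\xrightarrow{\mathrmbfit{tbl}}\mathrmbf{Adj}$ and then verify the three standard hypotheses (base (co)complete via Prop.~\ref{prop:com:cxt:lim:colim}, fibers (co)complete, reindexing continuous/co-continuous as right/left adjoints), with the explicit construction realized by generic meet/join. One small slip: in your limit recipe the object $\widehat{\mathcal{D}}$ over which the meet is taken is the \emph{colimit} in $\mathrmbf{Dom}$ (equivalently the limit in the base $\mathrmbf{Dom}^{\mathrm{op}}$), not the limit in $\mathrmbf{Dom}$ --- the paper writes $\widehat{\mathcal{D}}=\coprod\mathrmbfit{D}$ for exactly this reason.
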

\begin{proof}\mbox{}
The fibered context of tables
$\mathrmbf{Tbl}\xrightarrow{\;\mathrmbfit{dom}\;}\mathrmbf{Dom}^{\mathrm{op}}$
is the Grothendieck construction of 
the signed domain indexed adjunction
$\mathrmbf{Dom}^{\mathrm{op}}\!\xrightarrow{\;\mathrmbfit{tbl}\;}\mathrmbf{Adj}$
(Thm.\;3 in \cite{kent:fole:era:tbl}).
%
\end{proof}
%

%
\newpage
\subsubsection{Completeness.}
\label{sub:sub:sec:lim:tbl}

For the completeness of $\mathrmbf{Tbl}$
and the continuity of
$\mathrmbf{Tbl}\;\xrightarrow{\;\mathrmbfit{dom}^{\mathrm{op}}}\mathrmbf{Dom}^{\mathrm{op}}$,
use the right adjoint 
pseudo-passage
$\mathrmbf{Dom}\;\xrightarrow{\;\grave{\mathrmbfit{tbl}}\;}\mathrmbf{Cxt}$ 
(See Fact.\;3 of \cite{kent:fole:era:tbl}):
the indexing context $\mathrmbf{Dom}^{\mathrm{op}}$ is complete;
the fiber context $\mathrmbf{Tbl}(\mathcal{D})$ 
of \S\,\ref{sub:sub:sec:boole}
is complete (meets exist)
for each signed domain $\mathcal{D}$; and,
the fiber passage 
$\mathrmbf{Tbl}(\mathcal{D}')
\xrightarrow{\grave{\mathrmbfit{tbl}}({h,f,g})}
\mathrmbf{Tbl}(\mathcal{D})$ 
of \S\,\ref{sub:sub:sec:flow:sign:dom:mor}
(\textit{restrict-inflate} operation)
is continuous 
for each signed domain morphism 
$\mathcal{D}'\xrightarrow{{\langle{h,f,g}\rangle}}\mathcal{D}$. 
\mbox{}
%
%

%
Given a diagram of tables
$\mathrmbf{I} \xrightarrow{\mathrmbfit{T}} \mathrmbf{Tbl}$ 
consisting of an indexed (and linked) collection of tables 
{\footnotesize{$
\underset{\textbf{theoretical input}}
{\underbrace{
\bigl\{ \mathcal{T}_{i} = \mathrmbfit{T}(i) \in \mathrmbf{Tbl} 
\mid i \in \mathrmbf{I} \bigr\}}}$}}
with links
$\mathcal{T}_{i'} 
\xleftarrow[\mathrmbfit{T}(e)]{{\langle{{\langle{h,f,g}\rangle},k}\rangle}} 
\mathcal{T}_{i}$
for each $i' \xrightarrow{e} i$ in $\mathrmbf{I}$,
Prop.\;\ref{prop:lim:tbl}
states that the diagram of tables
$\mathrmbfit{T}$
has a limit table $\prod\mathrmbfit{T}$  
with projection bridge
$\Delta(\prod\mathrmbfit{T}) \xRightarrow{\,\pi\;} \mathrmbfit{T}$ 
consisting of an indexed collection of table morphisms 
{\footnotesize{$
\underset{\textbf{theoretical output}}
{\underbrace{\textstyle{
\bigl\{ \prod\mathrmbfit{T} \xrightarrow
[{\langle{{\langle{\hat{h}_{i},\hat{f}_{i},\hat{g}_{i}}\rangle},\hat{k}_{i}}\rangle}]
{\pi_{i}} \mathcal{T}_{i} 
\mid i \in \mathrmbf{I} \bigr\}
}}}$}}
satisfying naturality.
%
%
%
%
\begin{description}
\item[\underline{Input:}] 
We use a sufficient subset of tables 
(Def.\;\ref{def:suff:adequ:lim})
\newline\mbox{}\hfill
{\footnotesize{$\underset{\textbf{practical input}}
{\underbrace{
\bigl\{ 
\mathcal{T}_{i} = \mathrmbfit{T}(i) \in \mathrmbf{Tbl} 
\mid i \in I \subseteq obj(\mathrmbf{I}) \bigr\}}}$}} 
\hfill\mbox{}\newline
%
reachable \underline{to} other tables
in the collection 
{\footnotesize{$\bigl\{ 
\mathcal{T}_{i} = \mathrmbfit{T}(i) \in \mathrmbf{Tbl} 
\mid i \in \mathrmbf{I} \bigr\}$.}}
\item[\underline{Constraint:}] 
The projection 
$\mathrmbf{Tbl}^{\mathrm{op}}\xrightarrow{\;\mathrmbfit{dom}\;}\mathrmbf{Dom}$
maps the diagram of tables $\mathrmbfit{T}$ to a diagram of signed domains
$
\mathrmbf{I}^{\mathrm{op}}\xrightarrow{\mathrmbfit{D}} \mathrmbf{Dom}
$  
consisting of an indexed collection 
$\mathcal{D}_{i} = \mathrmbfit{D}(i) \in \mathrmbf{Dom} 
\mid i \in \mathrmbf{I} \bigr\}$
with links
{\footnotesize{$
\underset{\textbf{constraint}}
{\underbrace{ \bigl\{ 
\mathcal{D}_{i'} 
\xrightarrow
{{\langle{h,f,g}\rangle}} 
\mathcal{D}_{i}
\mid i \in \mathrmbf{I} \bigr\} }}$.}}
%
%
%
\item[\underline{Construction:}] 
Since
$\mathrmbf{Tbl}
\xrightarrow{\;\mathrmbfit{dom}^{\mathrm{op}}}
\mathrmbf{Dom}^{\mathrm{op}}$
is continuous
(Prop.\;\ref{prop:lim:tbl}), 
it maps
the limit table 
with projection bridge
to
the colimit signed domain $\coprod\mathrmbfit{D}=\widehat{\mathcal{D}}$  
with injection bridge
$
\Delta(\widehat{\mathcal{D}}) \xLeftarrow{\;\iota\,} \mathrmbfit{D}
$ 
consisting of an indexed collection of signed domain morphisms 
{\footnotesize{$
\underset{\textbf{construction}}
{\underbrace{
\bigl\{ 
\widehat{\mathcal{D}} 
\xleftarrow[{\langle{\hat{h}_{i},\hat{f}_{i},\hat{g}_{i}}\rangle}]{\iota_{i}} 
\mathcal{D}_{i} 
\mid i \in \mathrmbf{I} \bigr\} }}
$}}
satisfying naturality.  
\item[\underline{Output:}] 
%
%
%
Using the completeness aspect of Prop.\;\ref{prop:lim:tbl},
the limit table is 
the meet (intersection)
in the fiber context $\mathrmbf{Tbl}(\widehat{\mathcal{D}})$
of the restriction-inflation of the component tables
$\mathcal{T}_{i}$
along the injections
{\footnotesize{$
\{ \mathcal{D}_{i} 
\xrightarrow
{{\langle{\hat{h}_{i},\hat{f}_{i},\hat{g}_{i}}\rangle}}
\widehat{\mathcal{D}} 
\mid i \in \mathrmbf{I} \}$.}}


$\bullet$ For each index $i \in I$,
restriction/inflation
{\footnotesize{$\mathrmbf{Tbl}(\mathcal{D}_{i}) 
\xrightarrow{\grave{\mathrmbfit{tbl}}({\hat{h}_{i},\hat{f}_{i},\hat{g}_{i}})} 
\mathrmbf{Tbl}(\widehat{\mathcal{D}})$}}
(\S\,\ref{sub:sub:sec:flow:sign:dom:mor})
maps the table 
$\mathcal{T}_{i} 
\in \mathrmbf{Tbl}(\mathcal{D}_{i})$
by pullback
to the table
$\widehat{\mathcal{T}}_{i}
= {\langle{\widehat{K}_{i},\hat{t}_{i}}\rangle} 
\in \mathrmbf{Tbl}({\widehat{\mathcal{D}}})$.
%
%
This 
defines
a table morphism 
{\footnotesize{{$
\mathcal{T}_{i} = {\langle{K_{i},t_{i}}\rangle}
\xleftarrow{{\langle{{\langle{\hat{h}_{i},\hat{f}_{i},\hat{g}_{i}}\rangle},\grave{k}_{i}}\rangle}}
{\langle{\widehat{K}_{i},\hat{t}_{i}}\rangle} 
= \widehat{\mathcal{T}}_{i}$.}}\normalsize}

$\bullet$
Intersection (\S\,\ref{sub:sub:sec:boole})
of the tables $\{ \widehat{\mathcal{T}}_{i} \mid i \in I \}$
in the fiber context $\mathrmbf{Tbl}(\widehat{\mathcal{D}})$
defines the 
generic meet
$\prod\mathrmbfit{T}
= \widehat{\mathrmbfit{T}} 
= \bigwedge \bigl\{\widehat{\mathcal{T}}_{i} \mid i \in I \bigr\}
= {\langle{\widehat{K},\hat{t}}\rangle}$,
%
resulting in the 
discrete multi-span (cone)
{\footnotesize{{$\Bigl\{ 
\widehat{\mathcal{T}}_{i} 
\xleftarrow{\;\hat{\pi}_{i}\;} 
\widehat{\mathrmbfit{T}} 
\mid i \in I \Bigr\}$.}}\normalsize}
Restriction-inflation composed with meet 
defines the span of table morphisms
$\underset{\textbf{practical output}}
{\underbrace{
\textstyle{
\bigl\{ 
\mathcal{T}_{i}
\xleftarrow
[\;\hat{\pi}_{i}{\circ\,}{\langle{{\langle{\hat{h}_{i},\hat{f}_{i},\hat{g}_{i}}\rangle},\grave{k}_{i}}\rangle}\;]
{\;{\langle{{\langle{\hat{h}_{i},\hat{f}_{i},\hat{g}_{i}}\rangle},\hat{k}_{i}}\rangle}\;} 
\prod\mathrmbfit{T} 
\bigr\}
}
}}$. 
%
\end{description}
%
\begin{aside}
To construct the table
$\prod\mathrmbfit{T}$
it is not necessary to use all of the tables
and table morphisms in the indexed collection above. 
Let 
$\mathcal{T}_{i'} 
\xleftarrow[\mathrmbfit{T}(e)]{{\langle{{\langle{h,f,g}\rangle},k}\rangle}} 
\mathcal{T}_{i}$
be part of the diagram $\mathrmbfit{T}$
satisfying the naturality condition 
$\pi_{i'} = \pi_{i} \circ \mathrmbfit{T}(e)$
for some morphism $i' \xleftarrow{e} i$ in $\mathrmbf{I}$.
%
Let 
{\footnotesize{
$\mathcal{D}_{i'} 
\xrightarrow[\mathrmbfit{D}(e)]
{{\langle{h,f,g}\rangle}}
\mathcal{D}_{i}$
}\normalsize}
be its signed domain morphism 
satisfying the naturality condition
$\iota_{i'} = \mathrmbfit{D}(e) \circ \iota_{i}$.
\begin{center}
{{\begin{tabular}{@{\hspace{50pt}}c@{\hspace{10pt}}c@{\hspace{10pt}}c}
{{\begin{tabular}{c}
\setlength{\unitlength}{0.56pt}
\begin{picture}(120,100)(0,-15)
\put(60,80){\makebox(0,0){\footnotesize{$\prod\mathrmbfit{T}$}}}
\put(0,0){\makebox(0,0){\footnotesize{$\mathcal{T}_{i'}$}}}
\put(120,0){\makebox(0,0){\footnotesize{$\mathcal{T}_{i}$}}}
\put(25,43){\makebox(0,0)[r]{\scriptsize{$\pi_{i'}$}}}
\put(97,43){\makebox(0,0)[l]{\scriptsize{$\pi_{i}$}}}
\put(66,10){\makebox(0,0){\scriptsize{$\mathrmbfit{T}(e)$}}}
\put(66,-10){\makebox(0,0){\tiny{$
{\langle{{\langle{h,f,g}\rangle},k}\rangle}$}}}
\put(45,64){\vector(-3,-4){40}}
\put(75,64){\vector(3,-4){40}}
\put(105,0){\vector(-1,0){90}}
\put(-15,25){\makebox(0,0)[r]{\footnotesize{$
\underset{\textstyle{\textit{needed}}}
{\overset{\textstyle{\textit{component}}}{\textit{not}}}
\left\{\rule{0pt}{20pt}\right.$}}}
%
\end{picture}
\end{tabular}}}
{{\begin{tabular}{c}
\setlength{\unitlength}{0.56pt}
\begin{picture}(48,100)(0,0)
\put(30,60){\makebox(0,0){\footnotesize{$
\overset{\textstyle{\textit{naturality}}}
{\textstyle{\textit{diagrams}}}
$}}}
\end{picture}
\end{tabular}}}
&
{{\begin{tabular}{c}
\setlength{\unitlength}{0.56pt}
\begin{picture}(120,100)(0,-15)
\put(60,80){\makebox(0,0){\footnotesize{$\coprod\mathrmbfit{D}$}}}
\put(0,0){\makebox(0,0){\footnotesize{$\mathcal{D}_{i'}$}}}
\put(120,0){\makebox(0,0){\footnotesize{$\mathcal{D}_{i}$}}}
\put(25,43){\makebox(0,0)[r]{\scriptsize{$\iota_{i'}$}}}
\put(97,43){\makebox(0,0)[l]{\scriptsize{$\iota_{i}$}}}
\put(66,10){\makebox(0,0){\scriptsize{$\mathrmbfit{D}(e)$}}}
\put(66,-10){\makebox(0,0){\tiny{${\langle{h,f,g}\rangle}$}}}
\put(6,12){\vector(3,4){39}}
\put(114,12){\vector(-3,4){39}}
\put(15,0){\vector(1,0){90}}
%
\end{picture}
\end{tabular}}}
\end{tabular}}}
\end{center}
%
Using adjoint flow
(Disp.\;\ref{def:sign:dom:mor:adj}),
the table morphism
$\mathcal{T}_{i'} 
\xleftarrow
{{\langle{{\langle{h,f,g}\rangle},k}\rangle}} 
\mathcal{T}_{i}$
implies existence of a morphism 
$\grave{\mathrmbfit{tbl}}(h,f,g)(\mathcal{T}_{i'})
\xleftarrow{\,k'\,}\mathcal{T}_{i}$
in the fiber context $\mathrmbf{Tbl}(\mathcal{D})$.
For all practical purposes,
by reflection (\S\,\ref{sub:sub:sec:reflect})
we 
essentially 
have the inclusion
$\grave{\mathrmbfit{tbl}}({h,f,g})(\mathcal{T}_{i'}) 
\geq \mathcal{T}_{i}$.
Applying restriction-inflation to the naturality condition
$\iota_{i'} = \mathrmbfit{D}(e) \circ \iota_{i}$,
get
$\grave{\mathrmbfit{tbl}}(\iota_{i'})
= \grave{\mathrmbfit{tbl}}({h,f,g}) \cdot \grave{\mathrmbfit{tbl}}(\iota_{i})$.
Hence,
$\widehat{\mathcal{T}}_{i'} =
\grave{\mathrmbfit{tbl}}(\iota_{i'})(\mathcal{T}_{i'}) =
\grave{\mathrmbfit{tbl}}(\iota_{i})
(\grave{\mathrmbfit{tbl}}({h,f,g})(\mathcal{T}_{i'})) \geq 
\grave{\mathrmbfit{tbl}}(\iota_{i})(\mathcal{T}_{i}) = 
\widehat{\mathcal{T}}_{i}$.
Thus,
it is clear that we do not need to use the table $\mathcal{T}_{i'}$
to define the meet
$\prod\mathrmbfit{T} = 
\bigwedge \bigl\{\widehat{\mathcal{T}}_{i} \mid i \in I \bigr\}$,
but we do need at least one table, such as $\mathcal{T}_{i}$,
from which $\mathcal{T}_{i'}$ can be connected.
%
We only need a discrete collection of tables 
\newline\mbox{}\hfill
$\underset{\textbf{practical input}}
{\underbrace{
\bigl\{ 
\mathcal{T}_{i} = \mathrmbfit{T}(i) \in \mathrmbf{Tbl} 
\mid i \in I \subseteq obj(\mathrmbf{I}) \bigr\}}}$ 
\hfill\mbox{}\newline
%
reachable to other tables
in the collection 
{\footnotesize{$\bigl\{ 
\mathcal{T}_{i} = \mathrmbfit{T}(i) \in \mathrmbf{Tbl} 
\mid i \in \mathrmbf{I} \bigr\}$.}}
\end{aside}
%


%
\begin{definition}\label{def:suff:adequ:lim}
We will call such an collection of tables a \underline{sufficient} collection.
The minimal such collection can be called an \underline{adequate} collection.
\end{definition}
\begin{flushleft}
{\fbox{\fbox{\footnotesize{\begin{minipage}{345pt}
{\underline{\textsf{Special Cases:}}}
Three limits are of special interest.
\begin{itemize}
\item 
\textbf{Equalizer:}
the constraint diagram 
is of shape $1 \leftleftarrows 0$.
The subsets $\{0,1\}$ and $\{1\}$ are sufficient,
with the subset $\{1\}$ being adequate. 
The subsets $\{0\}$ and $\emptyset$ are not sufficient.
\emph{Quotient} 
(\S\,\ref{sub:sub:sec:quotient})
models equalizer, 
using the minimal table index set $\{1\}$
with just one table and no table morphisms.
\item 
\textbf{Pullback:}
the constraint diagram is of shape $1 \leftarrow 0 \rightarrow 2$.
The subsets $\{0,1,2\}$ and $\{1,2\}$ are sufficient,
with the subset $\{1,2\}$ being adequate.  
The subsets $\{0\}$, $\{1\}$, $\{2\}$ and $\emptyset$ are not sufficient.
\emph{Natural join} 
(\S\,\ref{sub:sub:sec:nat:join})
models pullback,
using the minimal table index set $\{1,2\}$
with just two tables and no table morphisms.
\item 
\textbf{Limit:}
the constraint diagram is of shape $\mathrmbf{I}$
with indexes $i \in obj(\mathrmbf{I})$
linked by morphisms $i \xrightarrow{e} i'$. 
\emph{Generic meet} 
(\S\,\ref{sub:sub:sec:generic:meet})
models limit,
using a sufficient (Def.\;\ref{def:suff:adequ:lim})
table index set 
{\footnotesize{$\bigl\{i \in I \subseteq obj(\mathrmbf{I}) \bigr\}$}}
with no table morphisms.
The linked collection of tables
$\mathrmbf{I}\xrightarrow{\mathrmbfit{T}}\mathrmbf{Tbl}$ 
is replaced by
the collection of tables
$I\xrightarrow{\mathrmbfit{T}}\mathrmbf{Tbl}$ 
with
{\footnotesize{$
\bigl\{ 
\mathcal{T}_{i} =
\mathrmbfit{T}(i) \in \mathrmbf{Tbl}(\mathcal{D}_{i}) 
\mid i \in I \subseteq obj(\mathrmbf{I}) \bigr\}$}}.
\end{itemize}
\end{minipage}}}}}
\end{flushleft}
%

%
\comment{
\begin{aside}
Theoretically
this would represent the limit
of a diagram
$\mathrmbf{I}\xrightarrow{\mathrmbfit{T}}\mathrmbf{Tbl}$ 
consisting of a linked collection of tables.
But practically,
we are only given the constraint (a diagram)
$\mathrmbf{I}^{\text{op}}\xrightarrow{\mathrmbfit{D}}\mathrmbf{Dom}$ 
consisting of a linked collection of signed domains 
$\bigl\{ 
\mathcal{D}_{i} = \mathrmbfit{D}(i)
\mid i \in \mathrmbf{I} \bigr\}$
and the input 
$I\xrightarrow{\mathrmbfit{T}}\mathrmbf{Tbl}$ 
consisting of a \underline{sufficient} indexed (unlinked) collection of tables 
(Def.\;\ref{def:suff:adequ:colim})
{\footnotesize{$
\bigl\{ 
\mathcal{T}_{i} =
\mathrmbfit{T}(i) \in \mathrmbf{Tbl}(\mathcal{D}_{i}) 
\mid i \in I \subseteq obj(\mathrmbf{I}) \bigr\}$}}.
\end{aside}
}

%
\comment{
\begin{itemize}
%
\item
When the diagram shape $\mathrmbf{I}$
for the signed domain diagram 
$\mathrmbfit{D} : \mathrmbf{I}^{\text{op}} \rightarrow \mathrmbf{Dom}$
is the form 
$\{ i_{1}{\;\leftarrow\;}i{\;\rightarrow\;}i_{2} \}$,
the generic meet (limit) $\prod \mathrmbfit{T}$
is called the \emph{natural join}
(\S\,\ref{sub:sub:sec:nat:join})
and symbolized by
$\mathcal{T}(i_{1}){\,\boxtimes_{\mathcal{A}(i)}}\mathcal{T}(i_{2})$.
%
\item
When the diagram shape $\mathrmbf{I}$
is the form 
$\{ i_{1}{\;\rightarrow\;}i{\;\leftarrow\;}i_{2} \}$,
the generic join (colimit) $\coprod \mathrmbfit{T}$
is called the \emph{data-type join} (\S\,\ref{sub:sub:sec:boole:join})
and symbolized by
$\mathcal{T}(i_{1}){\,\oplus_{\mathcal{S}(i)}}\mathcal{T}(i_{2})$.
%
\end{itemize}
}

%
\newpage
\subsubsection{Co-completeness.}
\label{sub:sub:sec:colim:tbl}

For the co-completeness of $\mathrmbf{Tbl}$
and the co-continuity of
$\mathrmbf{Tbl}\;\xrightarrow{\;\mathrmbfit{dom}^{\mathrm{op}}}\mathrmbf{Dom}^{\mathrm{op}}$,
use the left adjoint 
pseudo-passage
$\mathrmbf{Dom}^{\mathrm{op}}\!\xrightarrow{\;\acute{\mathrmbfit{tbl}}\;}\mathrmbf{Cxt}$ 
(See Fact.\;4 of \cite{kent:fole:era:tbl}):
the indexing context 
$\mathrmbf{Dom}^{\mathrm{op}}$ is co-complete;
the fiber context $\mathrmbf{Tbl}(\mathcal{D})$ 
of \S\,\ref{sub:sub:sec:boole}
is co-complete (joins exist)
for each signed domain $\mathcal{D}$; and
the fiber passage 
$\mathrmbf{Tbl}(\mathcal{D}')
\xleftarrow{\acute{\mathrmbfit{tbl}}({h,f,g})}
\mathrmbf{Tbl}(\mathcal{D})$ 
of \S\,\ref{sub:sub:sec:flow:sign:dom:mor}
(\textit{project-expand} operation)
is co-continuous 
for each signed domain morphism 
$\mathcal{D}'\xrightarrow{{\langle{h,f,g}\rangle}}\mathcal{D}$.
\mbox{}\hfill\rule{5pt}{5pt}

Given a diagram of tables
$\mathrmbf{I} \xrightarrow{\mathrmbfit{T}} \mathrmbf{Tbl}$ 
consisting of an indexed (and linked) collection of tables 
{\footnotesize{$
\underset{\textbf{theoretical input}}{\underbrace{
\bigl\{ 
\mathcal{T}_{i} = \mathrmbfit{T}(i) \in \mathrmbf{Tbl} 
\mid i \in \mathrmbf{I} \bigr\}
}}
$}}
with links
$\mathcal{T}_{i'} 
\xleftarrow[\mathrmbfit{T}(e)]{{\langle{{\langle{h,f,g}\rangle},k}\rangle}} 
\mathcal{T}_{i}$
for each $i' \xrightarrow{e} i$ in $\mathrmbf{I}$,
Prop.\;\ref{prop:lim:tbl}
states that the diagram of tables
$\mathrmbfit{T}$
has a colimit table $\coprod\mathrmbfit{T}$  
with injection bridge
$\Delta(\coprod\mathrmbfit{T}) \xLeftarrow{\;\iota\,} \mathrmbfit{T}$ 
consisting of an indexed collection of table morphisms 
{\footnotesize{$
\underset{\textbf{theoretical output}}
{\underbrace{\textstyle{
\bigl\{ \coprod\mathrmbfit{T} \xleftarrow
[{\langle{{\langle{\tilde{h}_{i},\tilde{f}_{i},\tilde{g}_{i}}\rangle},\tilde{k}_{i}}\rangle}]
{\iota_{i}} \mathcal{T}_{i} 
\mid i \in \mathrmbf{I} \bigr\}
}}}$}}
satisfying naturality.
%
%
%
\begin{description}
\item[\underline{Input:}] 
We use a sufficient subset of tables 
(Def.\;\ref{def:suff:adequ:colim})
\newline\mbox{}\hfill
{\footnotesize{$\underset{\textbf{practical input}}
{\underbrace{
\bigl\{ 
\mathcal{T}_{i} = \mathrmbfit{T}(i) \in \mathrmbf{Tbl} 
\mid i \in I \subseteq obj(\mathrmbf{I}) \bigr\}}}$}} 
\hfill\mbox{}\newline
%
reachable \underline{from} other tables
in the collection 
{\footnotesize{$\bigl\{ 
\mathcal{T}_{i} = \mathrmbfit{T}(i) \in \mathrmbf{Tbl} 
\mid i \in \mathrmbf{I} \bigr\}$.}}
\item[\underline{Constraint:}] 
The projection 
$\mathrmbf{Tbl}^{\mathrm{op}}\xrightarrow{\;\mathrmbfit{dom}\;}\mathrmbf{Dom}$
maps the diagram of tables $\mathrmbfit{T}$ to a diagram of signed domains
$
\mathrmbf{I}^{\mathrm{op}}\xrightarrow{\mathrmbfit{D}} \mathrmbf{Dom}
$  
consisting of an indexed collection 
$\mathcal{D}_{i} = \mathrmbfit{D}(i) \in \mathrmbf{Dom} 
\mid i \in \mathrmbf{I} \bigr\}$
with links
{\footnotesize{$
\underset{\textbf{constraint}}
{\underbrace{ \bigl\{ 
\mathcal{D}_{i'} 
\xrightarrow
{{\langle{h,f,g}\rangle}} 
\mathcal{D}_{i}
\mid i \in \mathrmbf{I} \bigr\} }}$.}}
%
%
%
\item[\underline{Construction:}] 
Since
$\mathrmbf{Tbl}
\xrightarrow{\;\mathrmbfit{dom}^{\mathrm{op}}}
\mathrmbf{Dom}^{\mathrm{op}}$
is co-continuous
(Prop.\;\ref{prop:lim:tbl}), 
it maps
the colimit table 
with injection bridge
to
the limit signed domain $\prod\mathrmbfit{D}=\widetilde{\mathcal{D}}$  
with projection bridge
$\Delta(\widehat{\mathcal{D}}) \xRightarrow{\,\pi\;} \mathrmbfit{D}$ 
consisting of an indexed collection of signed domain morphisms 
{\footnotesize{$
\underset{\textbf{construction}}
{\underbrace{
\bigl\{ 
\widetilde{\mathcal{D}} 
\xrightarrow[{\langle{\tilde{h}_{i},\tilde{f}_{i},\tilde{g}_{i}}\rangle}]{\pi_{i}} 
\mathcal{D}_{i} 
\mid i \in \mathrmbf{I} \bigr\} }}
$}}
satisfying naturality.  
%
%
%
\item[\underline{Output:}] 
%
%
Using the co-completeness aspect of Prop.\;\ref{prop:lim:tbl},
the colimit table is 
the join (union)
in the fiber context $\mathrmbf{Tbl}(\widetilde{\mathcal{D}})$
of the projection-expansion of the component tables
$\mathcal{T}_{i}$
along the projections
{\footnotesize{$
\{ \mathcal{D}_{i} 
\xleftarrow
{{\langle{\tilde{h}_{i},\tilde{f}_{i},\tilde{g}_{i}}\rangle}}
\widetilde{\mathcal{D}} 
\mid i \in \mathrmbf{I} \}$.}}


$\bullet$ For each index $i \in I$,
projection-expansion
{\footnotesize{$\mathrmbf{Tbl}(\mathcal{D}_{i}) 
\xrightarrow{\acute{\mathrmbfit{tbl}}({\tilde{h}_{i},\tilde{f}_{i},\tilde{g}_{i}})} 
\mathrmbf{Tbl}(\widetilde{\mathcal{D}})$}}
(\S\,\ref{sub:sub:sec:flow:sign:dom:mor})
maps the table 
$\mathcal{T}_{i} \in \mathrmbf{Tbl}(\mathcal{D}_{i})$
by composition
to the table
$\widetilde{\mathcal{T}}_{i}
= {\langle{K_{i},\tilde{t}_{i}}\rangle} 
\in \mathrmbf{Tbl}({\widetilde{\mathcal{D}}})$.
This defines a table morphism 
{\footnotesize{{$
\mathcal{T}_{i} = {\langle{K_{i},t_{i}}\rangle}
\xrightarrow{{\langle{{\langle{\tilde{h}_{i},\tilde{f}_{i},\tilde{g}_{i}}\rangle},1_{K_{i}}}\rangle}}
{\langle{K_{i},\tilde{t}_{i}}\rangle} 
= \widetilde{\mathcal{T}}_{i}$.}}\normalsize}

$\bullet$
Union (\S\,\ref{sub:sub:sec:boole})
of the tables $\{ \widetilde{\mathcal{T}}_{i} \mid i \in I \}$
in the fiber context $\mathrmbf{Tbl}(\widetilde{\mathcal{D}})$
defines the 
generic join
$\coprod\mathrmbfit{T}
= \widetilde{\mathrmbfit{T}} 
= \bigvee \bigl\{\widetilde{\mathcal{T}}_{i} \mid i \in I \bigr\}
= {\langle{\widetilde{K},\tilde{t}}\rangle}$,
resulting in the 
discrete multi-opspan (cocone)
{\footnotesize{{$\Bigl\{ 
\widetilde{\mathcal{T}}_{i} 
\xrightarrow{\;\tilde{\iota}_{i}\;} 
\widetilde{\mathrmbfit{T}} 
\mid i \in I \Bigr\}$.}}\normalsize}
Projection-expansion composed with join 
defines the opspan of table morphisms
$\underset{\textbf{practical output}}
{\underbrace{
\textstyle{
\bigl\{ 
\mathcal{T}_{i}
\xrightarrow
[\;{\langle{{\langle{\tilde{h}_{i},\tilde{f}_{i},\tilde{g}_{i}}\rangle},\acute{k}_{i}}\rangle}{\circ\,}\tilde{\iota}_{i}\;]
{\;{\langle{{\langle{\tilde{h}_{i},\tilde{f}_{i},\tilde{g}_{i}}\rangle},\tilde{k}_{i}}\rangle}\;} 
\prod\mathrmbfit{T} 
\bigr\}
}
}}$. 
%
%
\end{description}
%

\begin{aside}
To construct the table
$\coprod\mathrmbfit{T}$
it is not necessary to use all of the tables
and table morphisms in the indexed collection above. 
Let 
$\mathcal{T}_{i'} 
\xleftarrow[\mathrmbfit{T}(e)]{{\langle{{\langle{h,f,g}\rangle},k}\rangle}} 
\mathcal{T}_{i}$
be part of the diagram $\mathrmbfit{T}$
satisfying the naturality condition 
$\mathrmbfit{T}(e) \circ \iota_{i'} = \iota_{i}$
for some morphism $i' \xleftarrow{e} i$ in $\mathrmbf{I}$.
%
Let 
{\footnotesize{
$\mathcal{D}_{i'} 
\xleftarrow[\mathrmbfit{D}(e)]
{{\langle{h,f,g}\rangle}}
\mathcal{D}_{i}$
}\normalsize}
be its signed domain morphism 
satisfying the naturality condition
$\pi_{i'} \circ \mathrmbfit{D}(e) = \pi_{i}$.
\begin{center}
{{\begin{tabular}{@{\hspace{-30pt}}c@{\hspace{10pt}}c@{\hspace{10pt}}c}
{{\begin{tabular}{c}
\setlength{\unitlength}{0.56pt}
\begin{picture}(120,100)(0,-15)
\put(60,80){\makebox(0,0){\footnotesize{$\prod\mathrmbfit{D}$}}}
\put(0,0){\makebox(0,0){\footnotesize{$\mathcal{D}_{i'}$}}}
\put(120,0){\makebox(0,0){\footnotesize{$\mathcal{D}_{i}$}}}
\put(25,43){\makebox(0,0)[r]{\scriptsize{$\pi_{i'}$}}}
\put(97,43){\makebox(0,0)[l]{\scriptsize{$\pi_{i}$}}}
\put(66,10){\makebox(0,0){\scriptsize{$\mathrmbfit{D}(e)$}}}
\put(66,-10){\makebox(0,0){\tiny{${\langle{h,f,g}\rangle}$}}}
\put(45,64){\vector(-3,-4){40}}
\put(75,64){\vector(3,-4){40}}
\put(15,0){\vector(1,0){90}}
%
\end{picture}
\end{tabular}}}
{{\begin{tabular}{c}
\setlength{\unitlength}{0.56pt}
\begin{picture}(48,100)(0,0)
\put(30,60){\makebox(0,0){\footnotesize{$
\overset{\textstyle{\textit{naturality}}}
{\textstyle{\textit{diagrams}}}
$}}}
\end{picture}
\end{tabular}}}
&
{{\begin{tabular}{c}
\setlength{\unitlength}{0.56pt}
\begin{picture}(120,100)(0,-15)
\put(60,80){\makebox(0,0){\footnotesize{$\coprod\mathrmbfit{T}$}}}
\put(0,0){\makebox(0,0){\footnotesize{$\mathcal{T}_{i'}$}}}
\put(120,0){\makebox(0,0){\footnotesize{$\mathcal{T}_{i}$}}}
\put(25,43){\makebox(0,0)[r]{\scriptsize{$\iota_{i'}$}}}
\put(97,43){\makebox(0,0)[l]{\scriptsize{$\iota_{i}$}}}
\put(66,10){\makebox(0,0){\scriptsize{$\mathrmbfit{T}(e)$}}}
\put(66,-10){\makebox(0,0){\tiny{$
{\langle{{\langle{h,f,g}\rangle},k}\rangle}$}}}
\put(6,12){\vector(3,4){39}}
\put(114,12){\vector(-3,4){39}}
\put(105,0){\vector(-1,0){90}}
%
\put(135,25){\makebox(0,0)[l]
{\footnotesize{$
\left.
\rule{0pt}{20pt}
\right\}
\underset{\textstyle{\textit{needed}}}
{\overset{\textstyle{\textit{component}}}{\textit{not}}}
$}}}
%
\end{picture}
\end{tabular}}}
\end{tabular}}}
\end{center}
%
Using adjoint flow
(Disp.\;\ref{def:sign:dom:mor:adj}),
the table morphism
$\mathcal{T}_{i'} 
\xleftarrow
{{\langle{{\langle{h,f,g}\rangle},k}\rangle}} 
\mathcal{T}_{i}$
implies existence of a morphism 
$\mathcal{T}_{i'}
\xleftarrow{\,k\,}
\acute{\mathrmbfit{tbl}}(h,f,g)(\mathcal{T}_{i})
$
in the fiber context $\mathrmbf{Tbl}(\mathcal{D}')$.
For all practical purposes,
by reflection (\S\,\ref{sub:sub:sec:reflect})
we 
essentially 
have the inclusion
$\mathcal{T}_{i'}
\supseteq 
\acute{\mathrmbfit{tbl}}({h,f,g})(\mathcal{T}_{i}) $.
Applying projection-expansion to the naturality condition
$\pi_{i'} \circ \mathrmbfit{D}(e) = \pi_{i}$,
get
$\acute{\mathrmbfit{tbl}}(\pi_{i}) = 
\acute{\mathrmbfit{tbl}}({h,f,g}) \cdot \acute{\mathrmbfit{tbl}}(\pi_{i'})$.
Hence,
$\widetilde{\mathcal{T}}_{i'} = 
\acute{\mathrmbfit{tbl}}(\pi_{i'})(\mathcal{T}_{i'}) \geq 
\acute{\mathrmbfit{tbl}}(\pi_{i'})
(\acute{\mathrmbfit{tbl}}({h,f,g})(\mathcal{T}_{i})) =
\acute{\mathrmbfit{tbl}}(\pi_{i})(\mathcal{T}_{i}) =
\widetilde{\mathcal{T}}_{i}$.
Thus,
it is clear that we do not need to use the table $\mathcal{T}_{i}$
to define the join
$\coprod\mathrmbfit{T} = 
\bigvee \bigl\{\widetilde{\mathcal{T}}_{i} \mid i \in I \bigr\}$,
but we do need at least one table, such as $\mathcal{T}_{i'}$,
to which $\mathcal{T}_{i}$ can be connected.
%
We only need a discrete collection of tables 
\newline\mbox{}\hfill
$\underset{\textbf{practical input}}
{\underbrace{
\bigl\{ 
\mathcal{T}_{i} = \mathrmbfit{T}(i) \in \mathrmbf{Tbl} 
\mid i \in I \subseteq obj(\mathrmbf{I}) \bigr\}}}$ 
\hfill\mbox{}\newline
%
reachable from other tables
in the collection 
{\footnotesize{$\bigl\{ 
\mathcal{T}_{i} = \mathrmbfit{T}(i) \in \mathrmbf{Tbl} 
\mid i \in \mathrmbf{I} \bigr\}$.}}
\end{aside}
\begin{definition}\label{def:suff:adequ:colim}
We will call such a collection of tables a \underline{sufficient} collection.
The minimal such collection can be called an \underline{adequate} collection.
\end{definition}
\begin{flushleft}
{\fbox{\fbox{\footnotesize{\begin{minipage}{345pt}
{\underline{\textsf{Special Cases:}}}
Three colimits are of special interest.
\begin{itemize}
\item 
\textbf{Coequalizer:}
the constraint diagram 
is of shape $1 \rightrightarrows 0$.
The subsets $\{0,1\}$ and $\{1\}$ are sufficient,
with the subset $\{1\}$ being adequate. 
The subsets $\{0\}$ and $\emptyset$ are not sufficient.
\emph{Co-quotient} 
(\S\,\ref{sub:sub:sec:co-quotient})
models coequalizer, 
using the minimal table index set $\{1\}$
with just one table and no table morphisms.
\item 
\textbf{Pushout:}
the constraint diagram is of shape $1 \rightarrow 0 \leftarrow 2$.
The subsets $\{0,1,2\}$ and $\{1,2\}$ are sufficient,
with the subset $\{1,2\}$ being adequate.  
The subsets $\{0\}$, $\{1\}$, $\{2\}$ and $\emptyset$ are not sufficient.
\emph{data-type join} 
(\S\,\ref{sub:sub:sec:boole:join})
models pushout,
using the minimal table index set $\{1,2\}$
with just two tables and no table morphisms.
\item 
\textbf{Colimit:}
the constraint diagram is of shape $\mathrmbf{I}$
with indexes $i \in obj(\mathrmbf{I})$
linked by morphisms $i' \xleftarrow{\,e} i$. 
\emph{Generic join} 
(\S\,\ref{sub:sub:sec:generic:join})
models colimit,
using a sufficient (Def.\;\ref{def:suff:adequ:colim})
table index set 
{\footnotesize{$\bigl\{i \in I \subseteq obj(\mathrmbf{I}) \bigr\}$}}
with no table morphisms.
The linked collection of tables
$\mathrmbf{I}\xrightarrow{\mathrmbfit{T}}\mathrmbf{Tbl}$ 
is replaced by
the collection of tables
$I\xrightarrow{\mathrmbfit{T}}\mathrmbf{Tbl}$ 
with
{\footnotesize{$
\bigl\{ 
\mathcal{T}_{i} =
\mathrmbfit{T}(i) \in \mathrmbf{Tbl}(\mathcal{D}_{i}) 
\mid i \in I \subseteq obj(\mathrmbf{I}) \bigr\}$}}.
\end{itemize}
\end{minipage}}}}}
\end{flushleft}
%



\end{document}